\theoremstyle{definition} 
\newtheorem{define}{Definition}[section]
\theoremstyle{definition} 
\newtheorem{prop}[define]{Proposition} 
\theoremstyle{definition} 
\newtheorem{defprop}[define]{Definition and Proposition} 
\theoremstyle{definition} 
\newtheorem{lemma}[define]{Lemma} 
\theoremstyle{definition} 
\newtheorem{sublemma}[define]{Sublemma} 
\theoremstyle{definition} 
\newtheorem{cor}[define]{Corollary} 
\theoremstyle{definition} 
\newtheorem{crit}[define]{Criterion} 
\theoremstyle{definition} 
\newtheorem{theorem}[define]{Theorem}
\newtheoremstyle{example}
{12pt}
{3pt}
{}
{}
{\bfseries}
{:}
{.5em}
{}%
\theoremstyle{example}
\newtheorem*{example}{Example}
\newtheoremstyle{convention}
{12pt}
{3pt}
{}
{}
{\bfseries}
{:}
{.5em}
{}%
\theoremstyle{convention}
\newtheorem*{conv}{Convention}
\begin{document}


\begin{titlepage}

\hspace{0mm}\\*[2cm]

\begin{center}
\Large Systematic analysis of finite family\\ symmetry groups and their application\\ to the lepton sector\end{center}

\vspace*{3mm}

\begin{center}
\large Patrick Otto Ludl$^{\ast}$
\end{center}

\begin{center}
\small University of Vienna, Faculty of Physics 
\medskip
\\
\small Boltzmanngasse 5, A--1090 Vienna, Austria
\end{center}

\begin{center}
27 July 2010
\end{center}

\vspace*{10mm}

\begin{center}
\textbf{Abstract}
\end{center}

In this work we will investigate Lagrangians of the standard model extended by three right-handed neutrinos, and the consequences of invariance under finite groups $G$ for lepton masses and mixing matrices are studied. The main part of this work is the systematic analysis of finite subgroups of $SU(3)$. The analysis of these groups may act as a toolkit for future model building.
\\
The original version of this work has been published as a diploma thesis at the University of Vienna in 2009. Compared to the original version some minor errors have been corrected.
\\
\vspace*{4cm}
\\
$^{\ast}$E-mail: patrick.ludl@univie.ac.at

\end{titlepage}

\newpage
\pagestyle{empty}
\hspace{0mm}

\newpage
\pagestyle{empty}
\hspace{0mm}

	\begin{flushright}
	\includegraphics[scale=0.25, keepaspectratio=true]{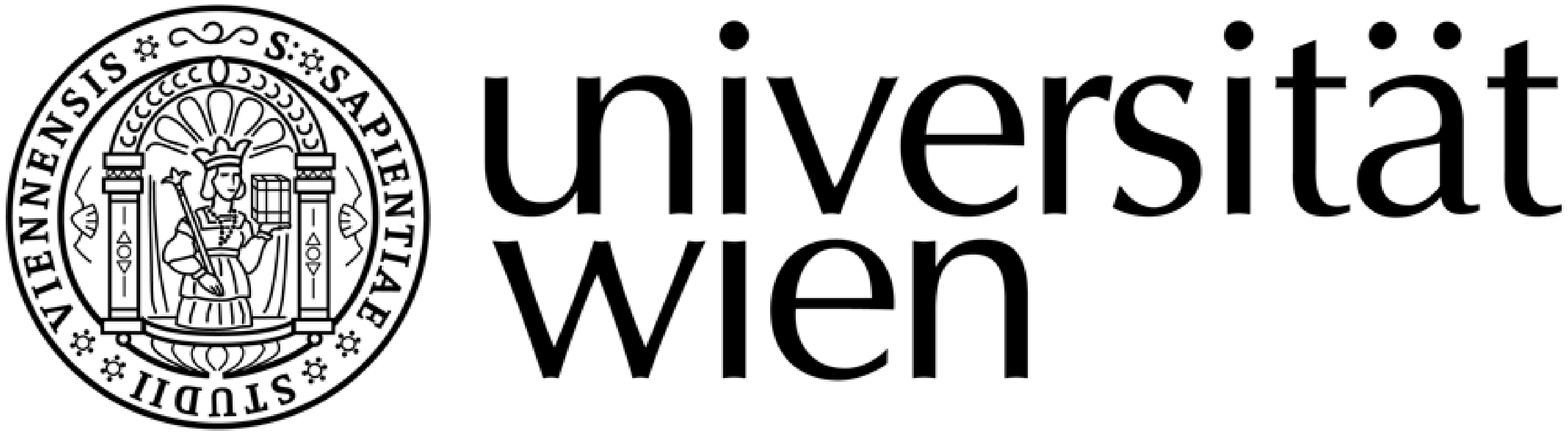}
	\end{flushright}
	\vspace*{2cm}
	\begin{center}
		\huge{Diplomarbeit}
		\bigskip
		\bigskip
		\bigskip
		\\
		\normalsize{Titel der Diplomarbeit}
		\bigskip
		\bigskip
		\begin{quote}
		\LARGE{\textit{Systematic analysis of finite family symmetry groups and their application to the lepton sector}}
		\end{quote}
		\bigskip
		\bigskip
		\normalsize{angestrebter akademischer Grad}
		\bigskip
		\bigskip
		\\
		\large{Magister der Naturwissenschaften (Mag. rer. nat.)}
	\end{center}
\vspace*{3cm}
Verfasser: Patrick Otto Ludl
\smallskip
\\
Matrikelnummer: 0301971
\smallskip
\\
Studienrichtung: A 411 Diplomstudium Physik
\smallskip
\\
Betreuer: Ao. Univ.-Prof. Dr. Walter Grimus
\bigskip
\bigskip
\bigskip
\bigskip
\\
Wien, am 6. Juni 2009

\newpage

\pagestyle{empty}
\hspace{0mm}

\newpage

\vspace*{7cm}
\begin{center}
\begin{large}\textit{Meinem Vater in Liebe gewidmet.}\end{large}
\end{center}

\newpage
\hspace{0mm}

\newpage
\vspace*{3cm}
\begin{center}
\begin{Large}Danksagung\end{Large}
\end{center}
\bigskip
\bigskip
Ich m\"ochte mich herzlich bei jenen Personen bedanken, ohne die diese Arbeit nicht zustande kommen h\"atte k\"onnen.
\medskip
\\
Besonders herzlichen Dank m\"ochte ich meinem Betreuer Walter Grimus f\"ur die ausgezeichnete Betreuung und vor allem f\"ur die vielen Stunden fruchtbarer Dis\-kus\-si\-on\-en aussprechen. W\"ahrend der Entstehung dieser Arbeit hat er sich stets Zeit genommen Probleme zu er\"ortern, Neuigkeiten zu diskutieren und Hilfestellungen zu bieten.
\\
Gro\ss er Dank geb\"uhrt auch meinem guten Freund und Studienkollegen Andreas Singraber, der mir bei vielen kleinen Problemen, insbesondere auch bei Computerproblemen, stets mit Rat und Tat zur Seite gestanden hat, und der sich immer Zeit genommen hat diese Probleme mit mir zu diskutieren.
\\
Dank und Anerkennung sei auch allen meinen Freunden, Bekannten und Studienkollegen ausgesprochen. Sie stellten die Atmosph\"are her, die erfolgreiches Arbeiten erst m\"oglich macht.
\\
Dankbare Anerkennung geb\"uhrt auch allen meinen Lehrern und Professoren. Durch die Begeisterung f\"ur ihre F\"acher inspirieren sie  Sch\"uler und Studenten und legen so den Grundstein f\"ur den Fortschritt von Wissenschaft und Kultur.
\\
Von ganzem Herzen m\"ochte ich auch meinen Eltern danken, ohne denen mir mein Studium nicht m\"oglich gewesen w\"are, und die mir in allen Lebenslagen stets hilfreich zur Seite standen.
\bigskip
\\
Unserem Sch\"opfer danke ich, dass er mir die F\"ahigkeit verliehen hat das Stu\-di\-um der Naturwissenschaften mit Neugier und Eifer zu betreiben. Ich hoffe, dass meine Arbeit einen kleinen Anteil zum Verst\"andnis seiner Sch\"opfung beitragen kann.

\newpage

\pagestyle{empty}
\hspace{0mm}

\newpage

\vspace*{7cm}
\begin{center}
\begin{large}\textit{I want to know God's thoughts, the rest are details.}\end{large}
\end{center}
\hspace{9.5cm}Albert Einstein

\newpage
\hspace{0mm}

\newpage

\pagestyle{fancy}

\fancyhf{}
\fancyhead[RO,LE]{ \thepage}
\fancyhead[RE]{\small \rmfamily \nouppercase{\leftmark}}
\fancyhead[LO]{\small \rmfamily \nouppercase{\rightmark}}
\fancyfoot{} 


\tableofcontents

\newpage

\fancyhf{}
\fancyhead[RO,LE]{ \thepage}
\fancyhead[RE]{\small \rmfamily \nouppercase{Foreword}}
\fancyhead[LO]{\small \rmfamily \nouppercase{Foreword}}
\fancyfoot{} 

\chapter*{Foreword}
When Joseph John Thomson discovered the electron in 1897 he could not have imagined that he had become a member of the \textquotedblleft great ancestors\textquotedblright\hspace{1mm} who led the foundations for the unimaginably wide and beautiful field of particle physics, neither could he have imagined that the electron is only one member of the lepton sector involving the electron's \textquotedblleft heavier relatives\textquotedblright, the muon and the tauon, as well as the corresponding neutrinos whose masses are so small that they withstood direct measurement up to now.
\medskip
\\
While the gauge bosons $\gamma, W^{\pm}$ and $Z$ are manifestations of the mathematical formalism of gauge field theories there is no successful explanation for the fermion spectrum of the standard model. The search for a mathematical theory describing the (currently known) 3 generations of fermions remained unsuccessful within the 20\textsuperscript{th} century, but there is hope that at least a little glance of light is shed onto this problem at the beginning of the 21\textsuperscript{st} century.
\\
The most striking experimental fact is that while the mass ratios within one of the fermion families (charged leptons, neutrinos, up-quarks, down-quarks) can take enormous values like
	\begin{displaymath}
	\frac{m_{\tau}}{m_{e}}\sim 3500
	\end{displaymath}
the couplings to the gauge fields are equal within a fermion family. This is why one usually interprets the heavier members of a fermion family as the \textquotedblleft heavy relatives\textquotedblright\hspace{1mm} of the lightest member. By now there is no satisfying explanation for the existence of the three generations, which is outlined by Abraham Pais in a very pointed way, when he writes about the muon:
	\begin{quote}
	\textit{The new punchline, 'There is a muon', may have caused laughter in the heavens, but man was, and still is, ignorant of the joke. What else was the muon good for other than being the pion's favorite decay product? To be sure, the discovery of the electron had also been unexpected, but its universal use as an ingredient of the atomic periphery was recognized rapidly. The neutron, less of a surprise, made it possible almost at once to develop theories of nuclear structure and $\beta$-decay. But the muon? Now, forty years later, the divine laughter continues unabated.}
	\smallskip
	\\
	\cite{pais}, p.454f.
	\end{quote}
The only point of the standard model where a difference between the members of a family is implemented in the Lagrangian is the Yukawa-coupling that generates the fermion mass terms via spontaneous symmetry breaking. At this point there occur connections to other measurable quantities - the fermion mixing matrices $U_{CKM}$ and $U_{PMNS}$.
\\
The lepton mixing matrix $U_{PMNS}$ seems to be quite close to the so-called Harrison-Perkins-Scott mixing matrix
	\begin{displaymath}
	U_{\mathrm{HPS}}=\left(\begin{matrix}
		 \sqrt{\frac{2}{3}} & \frac{1}{\sqrt{3}} & 0 \\
		 -\frac{1}{\sqrt{6}} & \frac{1}{\sqrt{3}} & \frac{1}{\sqrt{2}} \\
		 \frac{1}{\sqrt{6}} & -\frac{1}{\sqrt{3}} & \frac{1}{\sqrt{2}}
	\end{matrix}\right).
	\end{displaymath}
The nice appearance of this matrix induced the idea of an underlying discrete symmetry in the Lagrangian. In this work we will investigate Lagrangians of the standard model extended by three right-handed neutrinos, and the consequences of invariance under finite groups $G$ for lepton masses and mixing matrices are studied. The main part of this thesis will be the systematic analysis of finite subgroups of $SU(3)$. I hope that my analysis of these groups can act as a toolkit for future model building.
\medskip
\\
A future discovery of an appropriate finite group that can describe lepton masses and mixing will not solve the mystery of the three generations of fermions, but maybe an appropriate symmetry group can give a hint on an underlying more general theory that includes the present day standard model as a suitable limit.

\newpage

\fancyhf{}
\fancyhead[RO,LE]{ \thepage}
\fancyhead[RE]{\small \rmfamily \nouppercase{\leftmark}}
\fancyhead[LO]{\small \rmfamily \nouppercase{\rightmark}}
\fancyfoot{} 

\chapter{Introduction}

\section{The GWS-theory of electroweak interactions}
The GWS-theory\footnote{GWS = S.L. Glashow, S. Weinberg, A. Salam. \cite{Glashow,Weinberg,Salam}} of electroweak interactions \cite{Glashow,Weinberg,Salam} is currently the most established theory in particle physics. Experimentally it stands on firm ground, and together with quantum chromodynamics it forms the standard model of particle physics - the currently best description of the microscopic world. However the standard model has some missing features, for example it cannot describe neutrino masses without being extended.
\medskip
\\
At first we will summarize the basic features of the GWS-model of electroweak interactions.

\subsection{The $SU(2)_{I}\times U(1)_{Y}$-theory of electroweak interactions}

The GWS-theory is a gauge theory based on the gauge group $SU(2)_{I}\times U(1)_{Y}$. $I$ is called \textit{weak isospin}, and $Y$ is called \textit{weak hypercharge}. The Lagrangian of the $SU(2)_{I}\times U(1)_{Y}$-theory is
	\begin{equation}
		\mathcal{L}_{SU(2)_{I}\times U(1)_{Y}}=-\frac{1}{2}\mathrm{Tr}(W_{\lambda\rho}W^{\lambda\rho})-\frac{1}{4}B_{\lambda\rho}B^{\lambda\rho}+\sum_{\alpha=e,\mu,\tau}\bar{\psi}_{\alpha}i\gamma^{\mu}D_{\mu}\psi_{\alpha}
	\end{equation}
with
	\begin{displaymath}
		W_{\lambda\rho}=\partial_{\lambda}W_{\rho}-\partial_{\rho}W_{\lambda}+ig[W_{\lambda},W_{\rho}],\quad\quad B_{\lambda\rho}=\partial_{\lambda}B_{\rho}-\partial_{\rho}B_{\lambda}.
	\end{displaymath}
$W_{\lambda}(x)=W_{\lambda}^{a}(x)\frac{\tau^{a}}{2}$ is the $SU(2)_{I}$-gauge field, where $\lbrace -i\frac{\tau^{a}}{2}\rbrace_{a=1,2,3}$ forms a basis of the Lie algebra $su(2)$. $B_{\lambda}$ is the $U(1)_{Y}$-gauge field.

\begin{table}
\begin{center}
\renewcommand{\arraystretch}{1.4}
\begin{tabular}{|c@{\vrule width1pt}cc|}
\firsthline
	1\textsuperscript{st} Generation \rule{3mm}{0mm} & \rule{3mm}{0mm} $e^{-}$ & electron \rule{3mm}{0mm}\\ 
	× & \rule{3mm}{0mm} $\nu_{e}$ & electron neutrino \rule{3mm}{0mm}\\
\hline
	2\textsuperscript{nd} Generation \rule{3mm}{0mm} & \rule{3mm}{0mm} $\mu^{-}$ & muon \rule{3mm}{0mm}\\ 
	× & \rule{3mm}{0mm} $\nu_{\mu}$ & muon neutrino \rule{3mm}{0mm}\\ 
\hline
	3\textsuperscript{rd} Generation \rule{3mm}{0mm} & \rule{3mm}{0mm} $\tau^{-}$ & tauon \rule{3mm}{0mm}\\ 
	× & \rule{3mm}{0mm} $\nu_{\tau}$ & tauon neutrino \rule{3mm}{0mm}\\
\lasthline
\end{tabular}
\caption{The spin-$\frac{1}{2}$ fermions of the GWS-theory}
\label{GWSfermions}
\end{center}
\end{table}
\hspace{0mm}\\
Table \ref{GWSfermions} lists the fermions contained in the GWS-theory. The interaction between gauge bosons and fermions is described by the covariant derivative
	\begin{equation}
		D_{\mu}\psi_{\alpha}=(\partial_{\mu}+igW_{\mu}^{a}T^{a}+ig'B_{\mu}\frac{Y}{2})\psi_{\alpha},
	\end{equation}
where
	\begin{displaymath}
		T^{1}=\frac{1}{2}\left(
			\begin{matrix}
			 0 & 1 & 0 \\
			 1 & 0 & 0 \\
			 0 & 0 & 0
			\end{matrix}
			\right),
			T^{2}=\frac{1}{2}\left(
			\begin{matrix}
			 0 & -i & 0 \\
			 i & 0 & 0 \\
			 0 & 0 & 0
			\end{matrix}
			\right),
			T^{3}=\frac{1}{2}\left(
			\begin{matrix}
			 1 & 0 & 0 \\
			 0 & -1 & 0 \\
			 0 & 0 & 0
			\end{matrix}
			\right),
	\end{displaymath}
and
	\begin{displaymath}
		Y=\left(
			\begin{matrix}
			 Y_{L} & 0 & 0 \\
			 0 & Y_{L} & 0 \\
			 0 & 0 & Y_{R}
			\end{matrix}
			\right), \psi_{\alpha}=\left(\begin{matrix}
					\nu_{\alpha L} \\
					\alpha_{L} \\
					\alpha_{R}
			              \end{matrix}
					\right), D_{\alpha L}=\left(\begin{matrix} \nu_{\alpha L} \\ \alpha_{L} \end{matrix}\right).
	\end{displaymath}
$T^{a}$ correspond to $\frac{\tau^{a}}{2}$ acting on the $SU(2)_{I}$-doublet $D_{\alpha L}$. $Y_{L}$ is the hypercharge of $\nu_{\alpha L}$ and $\alpha_{L}$, $Y_{R}$ is the hypercharge of $\alpha_{R}$. The indices $R$ and $L$ label the left- and right-handed chiral fermion fields. (See section \ref{Chiralfields}.)
\bigskip
\\
\textbf{$SU(2)_{I}$-multiplets of the GWS-theory}
\medskip
\\
Table \ref{GWSmultiplets} lists the $SU(2)_{I}$-multiplets of the GWS-theory. $Q=I_{3}+\frac{1}{2}Y$ is the electrical charge.\footnote{We use the convention of \cite{grimus1} for the isospins and hypercharges of the multiplets.}
\begin{table}
\begin{center}
\renewcommand{\arraystretch}{1.4}
\begin{tabular}{|cccccc|}
\firsthline
	Fermion multiplets \rule{3mm}{0mm} & \rule{3mm}{0mm} $I$ & $I_{3}$ & $Y$ & $Q$ & \\
\hline
\rule{0mm}{11mm}
	$\left(\begin{matrix} \nu_{eL}\\e_{L} \end{matrix}\right)$, $\left(\begin{matrix} \nu_{\mu L}\\ \mu_{L} \end{matrix}\right)$, $\left(\begin{matrix} \nu_{\tau L}\\\tau_{L} \end{matrix}\right)$ & $\frac{1}{2}$ & $\frac{\tau^3}{2}$ & $-\mathbbm{1}_2$ & $\left(\begin{matrix} 0 & 0\\ 0 & -1 \end{matrix}\right)$ & left-handed doublets \\
	$e_{R}$, $\mu_{R}$, $\tau_{R}$ & $0$ & $0$ & $-2$ & $-1$ & right-handed singlets\\
 
\lasthline
\end{tabular}
\caption{$SU(2)_{I}$-multiplets of the GWS-theory}
\label{GWSmultiplets}
\end{center}
\end{table}
\bigskip
\\
\textbf{Gauge bosons in the $SU(2)_{I}\times U(1)_{Y}$-theory}
\medskip
\\
Table \ref{gaugebosons} lists the gauge bosons of the $SU(2)_{I}\times U(1)_{Y}$-theory. The definitions of the gauge bosons contain the \textit{weak mixing angle} $\vartheta_{W}$.
	\begin{displaymath}
		\mathrm{cos}\vartheta_{W}=\frac{g}{\sqrt{g^{2}+g'^{2}}}
	\end{displaymath}
Comparing the parts of the Lagrangian containing $A_{\mu}$ with the Lagrangian for QED one finds
	\begin{displaymath}
		e=g\mathrm{sin}\vartheta_{W}=g'\mathrm{cos}\vartheta_{W}.
	\end{displaymath}

\begin{table}
\begin{center}
\renewcommand{\arraystretch}{1.4}
\begin{tabular}{|ccc|}
\firsthline
	$W_{\mu}^{+}=\frac{1}{\sqrt{2}}(W_{\mu}^{1}-iW_{\mu}^{2})$ \rule{3mm}{0mm} & \rule{3mm}{0mm} $W^{+}$ & charged W-boson \rule{3mm}{0mm}\\ 

	$W_{\mu}^{-}=\frac{1}{\sqrt{2}}(W_{\mu}^{1}+iW_{\mu}^{2})$ \rule{3mm}{0mm} & \rule{3mm}{0mm} $W^{-}$ & charged W-boson \rule{3mm}{0mm}\\ 

	$Z_{\mu}=W_{\mu}^{3}\mathrm{cos}\vartheta_{W}-B_{\mu}\mathrm{sin}\vartheta_{W}$ \rule{3mm}{0mm} & \rule{3mm}{0mm} $Z$ & Z-boson \rule{3mm}{0mm}\\

	$A_{\mu}=W_{\mu}^{3}\mathrm{sin}\vartheta_{W}+B_{\mu}\mathrm{cos}\vartheta_{W}$ \rule{3mm}{0mm} & \rule{3mm}{0mm} $\gamma$ & photon \rule{3mm}{0mm}\\  
\lasthline
\end{tabular}
\caption{The gauge bosons of the $SU(2)_{I}\times U(1)_{Y}$-theory}
\label{gaugebosons}
\end{center}
\end{table}

\subsection{The Higgs-mechanism for the $SU(2)_{I}\times U(1)_{Y}$-theory}
The $SU(2)_{I}\times U(1)_{Y}$-theory does not allow mass terms for the described particles (because these terms would break gauge invariance), which is of course a severe problem. The well known solution to this problem is the so called \textit{Higgs-mechanism} \cite{Higgs1,Higgs2,Higgs3,Englert,Guralnik}.
\bigskip
\\
In the standard model masses are generated through spontaneous symmetry breaking of the $SU(2)_{I}\times U(1)_{Y}$-gauge group by a \textit{Higgs doublet}
	\begin{displaymath}
		\phi=\left(\begin{matrix}
			  \phi_{1}\\\phi_{2} 
		     \end{matrix}\right),
	\end{displaymath}
which is an $SU(2)_{I}$-doublet consisting of two complex valued scalar fields $\phi_{1}$ and $\phi_{2}$. The Lagrangian reads
	\begin{equation}\label{HiggsLagrangianEqu}
		\mathcal{L}_{\mathrm{Higgs}}=(D_{\mu}\phi)^{\dagger}(D^{\mu}\phi)-V(\phi),
	\end{equation}
where $V(\phi)$ is the well known \textit{Higgs potential}
	\begin{equation}
		V(\phi)=\mu^{2}\rho^{2}+\lambda\rho^{4}
	\end{equation}
with
	\begin{displaymath}
		\rho^{2}=\phi^{\dagger}\phi \quad\mbox{and}\quad \lambda>0, \enspace \mu^{2}<0.
	\end{displaymath}
$V(\phi)$ has a continuous set of minima $\phi_{0}$ which is characterized by
	\begin{displaymath}
		\rho_{0}^{2}=-\frac{\mu^{2}}{2\lambda}.
	\end{displaymath}
For all possible $\phi_{0}$ are minimizers of $V(\phi)$ they are called the \textquotedblleft vacua\textquotedblright\hspace{1mm} of the system. In fact when the system evolves to the ground state it will go to one of the possible states $\phi_{0}$, which all have the same probability. This phenomenon is what is meant by the term \textquotedblleft spontaneous symmetry breaking\textquotedblright, because when the system goes to a special ground state $\phi_{0}$ the symmetry is hidden.
\medskip
\\
The covariant derivative in (\ref{HiggsLagrangianEqu}) is given by
	\begin{equation}
		D_{\mu}\phi=(\partial_{\mu}+igW_{\mu}^{a}\frac{\tau^{a}}{2}+ig'B_{\mu}\frac{Y_{H}}{2}\mathbbm{1}_{2})\phi,
	\end{equation}
where $Y_{H}$ is the hypercharge of the Higgs doublet $\phi$.
\bigskip
\\
\textbf{Spontaneous symmetry breaking}
\medskip
\\
The procedure of spontaneous symmetry breaking of gauge theories is the following:
\medskip
\\
The Higgs doublet $\phi$, which has a non vanishing vacuum expectation value, is replaced by a new field $H$ via
	\begin{displaymath}
		\phi=\langle0\vert\phi\vert0\rangle+H.
	\end{displaymath}
Therefore the vacuum expectation value of $H$ is zero. Because of $SU(2)_{I}$ gauge freedom $\phi$ can always be chosen to have the form
	\begin{equation}\label{unitarygauge}
		\phi=\left(\begin{matrix}
			  0\\ \frac{1}{\sqrt{2}}(v+h)
		     \end{matrix}\right),
	\end{equation}
with
	\begin{displaymath}
		     \langle0\vert\phi\vert0\rangle=\left(\begin{matrix}
			  0\\ \frac{1}{\sqrt{2}}v
		     \end{matrix}\right), \quad H=\left(\begin{matrix}
			  0\\ \frac{1}{\sqrt{2}}h
		     \end{matrix}\right),
	\end{displaymath}
where $h$ is a real valued scalar field with vacuum expectation value zero. This gauge is called \textit{unitary gauge}.
\bigskip
\\
\textbf{Gauge boson masses}
\medskip
\\
Fixing $Y_{H}=1$ and using the unitary gauge (which we indicate by an index $u$) one gets
	\begin{equation}
		\begin{split}
		\mathcal{L}_{\mathrm{Higgs}}^{(u)} &= \frac{1}{2}(\partial_{\mu}h)(\partial^{\mu}h)+\\
		& +\frac{g^{2}}{4}W_{\mu}^{+}W^{\mu-}(v+h)^{2}+\frac{g^{2}+g'^{2}}{8}Z_{\mu}Z^{\mu}(v+h)^{2}+\\
		& -\frac{1}{2}(2\lambda v^{2})h^{2}\left[1+\frac{h}{v}+\frac{1}{4}\left(\frac{h}{v}\right)^{2}\right]+\frac{\lambda v^{4}}{4},
		\end{split}
	\end{equation}
which leads to
	\begin{displaymath}
		m_{W}^{2}=\frac{g^{2}v^{2}}{4}, \quad\quad m_{Z}^{2}=\frac{(g^{2}+g'^{2})v^{2}}{4}, \quad\quad m_{h}^{2}=2\lambda v^{2}
	\end{displaymath}
and the famous relation
	\begin{displaymath}
		\frac{m_{W}}{m_{Z}}=\mathrm{cos}\vartheta_{W}.
	\end{displaymath}
\bigskip
\\
\textbf{Fermion masses}
\medskip
\\
The Higgs mechanism offers the possibility to describe massive fermions too. For this purpose one introduces \textit{Yukawa-couplings} between the fermions and the Higgs doublets.
	\begin{equation}
		\mathcal{L}_{\mathrm{Yukawa}}=-\sum_{\alpha=e,\mu,\tau}c_{\alpha}\bar{\alpha}_{R}\phi^{\dagger}D_{\alpha L} + \mathrm{H.c.},
	\end{equation}
where $c_{\alpha}$ are coupling constants.
\medskip
\\
Breaking the symmetry by $\phi\mapsto \langle0\vert\phi\vert0\rangle+H$ and using the unitary gauge (\ref{unitarygauge}) one gets mass terms for the fermions.
	\begin{displaymath}
		\mathcal{L}_{\mathrm{Yukawa}}^{(u)}=-\sum_{\alpha=e,\mu,\tau}(\frac{c_{\alpha}\rho_{0}}{\sqrt{2}}\bar{\alpha}\alpha+\frac{c_{\alpha}h}{\sqrt{2}}\bar{\alpha}\alpha)
	\end{displaymath}
Thus the fermion masses become
	\begin{displaymath}
		m_{\alpha}=\frac{c_{\alpha}\rho_{0}}{\sqrt{2}} \quad\quad m_{\nu_{\alpha L}}=0.
	\end{displaymath}
The absence of right-handed neutrino fields $\nu_{\alpha R}$ and the simple structure of the scalar sector, consisting of only one Higgs doublet, are the reasons for massless neutrinos in the GWS-model.

\subsection{The Lagrangian of the GWS-model}
The Lagrangians considered so far build the GWS-theory of electroweak interactions.
	\begin{equation}
		\mathcal{L}_{\mathrm{GWS}}= \mathcal{L}_{SU(2)_{I}\times U(1)_{Y}}+\mathcal{L}_{\mathrm{Higgs}}+\mathcal{L}_{\mathrm{Yukawa}}
	\end{equation}
\bigskip
	\begin{displaymath}
		\begin{split}
		\mathcal{L}_{\mathrm{GWS}} & = -\frac{1}{2}\mathrm{Tr}(W_{\lambda\rho}W^{\lambda\rho})-\frac{1}{4}B_{\lambda\rho}B^{\lambda\rho}+\\
		&
		+(D_{\mu}\phi)^{\dagger}(D^{\mu}\phi)-V(\phi)+\\		
 		&+\sum_{\alpha=e,\mu,\tau}(\bar{\psi}_{\alpha}i\gamma^{\mu}D_{\mu}\psi_{\alpha}-c_{\alpha}\bar{\alpha}_{R}\phi^{\dagger}D_{\alpha L}-c_{\alpha}\bar{D}_{\alpha L}\phi\alpha_{R})
		\end{split}
	\end{displaymath}

\section{Neutrino masses and mixing}
From the phenomenon of \textit{neutrino oscillations} we know that at least two neutrinos are massive, and that the mass eigenstates $\vert\nu_{i}^m\rangle$ do not coincide with the flavour eigenstates $\vert\nu_{\alpha}\rangle$. (For further details we refer the reader to chapter \ref{neutrinoosc} on neutrino oscillations.) We are therefore looking for extensions of the standard model.

\subsection{The GIM-model}\label{GIMsubsection}
Both neutrino masses and lepton mixing can be described by a method similar to the GIM-construction\footnote{GIM = S.L. Glashow, J. Iliopoulos, L. Maiani. An excellent description of the GIM-construction can be found in \cite{horejsi}, the original work of GIM can be found in \cite{gim}.} in the quark sector. For this purpose one adds three right-handed neutrinos
	\begin{displaymath}
		\nu_{\alpha R},
	\end{displaymath}
which would lead to the following model for lepton mixing, which we will refer to as the \textit{GIM-model}. Note that in this model neutrinos will be \textit{Dirac-particles}.
	\begin{displaymath}
		\nu_{\alpha}=\nu_{\alpha L}+\nu_{\alpha R}
	\end{displaymath}
Adding right-handed neutrinos the fermion content of the model is given in table \ref{GIMmultiplets}.
\begin{table}
\begin{center}
\renewcommand{\arraystretch}{1.4}
\begin{tabular}{|cccccc|}
\firsthline
	Fermion multiplets \rule{3mm}{0mm} & \rule{3mm}{0mm} $I$ & $I_{3}$ & $Y$ & $Q$ & \\
\hline
\rule{0mm}{11mm}
	$\left(\begin{matrix} \nu_{eL}\\e_{L}' \end{matrix}\right)$, $\left(\begin{matrix} \nu_{\mu L}\\ \mu_{L}' \end{matrix}\right)$, $\left(\begin{matrix} \nu_{\tau L}\\\tau_{L}' \end{matrix}\right)$ & $\frac{1}{2}$ & $\frac{1}{2}$ & $-1$ & $0$ & left-handed doublets \\
	$e_{R}'$, $\mu_{R}'$, $\tau_{R}'$ & $0$ & $0$ & $-2$ & $-1$ & right-handed singlets\\
	$\nu_{eR}$, $\nu_{\mu R}$, $\nu_{\tau R}$ & $0$ & $0$ & $0$ & $0$ & right-handed singlets\\
\lasthline
\end{tabular}
\caption{$SU(2)_{I}$-multiplets of the GIM-model}
\label{GIMmultiplets}
\end{center}
\end{table}
The Yukawa-couplings are
	\begin{equation}
		\mathcal{L}_{\mathrm{Yukawa}}^{\mathrm{GIM} (cl)}=-\sum_{\alpha,\beta=e,\mu,\tau}M_{\alpha\beta}^{cl}\bar{\alpha}'_{R}\phi^{\dagger}D_{\beta L}' + \mathrm{H.c.},
	\end{equation}
	\begin{equation}
		\mathcal{L}_{\mathrm{Yukawa}}^{\mathrm{GIM} (\nu)}=-\sum_{\alpha,\beta=e,\mu,\tau}M^{\nu}_{\alpha\beta}\bar{\nu}_{\alpha R}\tilde{\phi}^{\dagger}D_{\beta L}' + \mathrm{H.c.},
	\end{equation}
with $\tilde{\phi}=i\tau^{2}\phi^{\ast}$ and non-singular complex $3\times 3$-matrices $M^{cl}$ and $M^{\nu}$. The indices $\nu$ and $cl$ stand for neutrinos and charged leptons. $D_{\alpha L}'=\left(\begin{array}{cc} \nu_{\alpha L} & \alpha_{L}'\end{array}\right)^{T}$.
\\
In the following we will frequently use the vectors
	\begin{displaymath}
	\nu_{L,R}=\left(
\begin{matrix}
 \nu_{e L,R} \\ \nu_{\mu L,R} \\ \nu_{\tau L,R}
\end{matrix}
\right)
,\quad\quad \mathnormal{l}_{L,R}'=\left(
\begin{matrix}
 e_{L,R}' \\ \mu_{L,R}' \\ \tau_{L,R}'
\end{matrix}
\right).
	\end{displaymath}
The primed charged lepton fields $\alpha_{L,R}'$ do not denote the standard model charged fermion fields. It will later turn out that the fields $\alpha_{L,R}'$ are linear combinations of the standard model charged fermions $\alpha_{L,R}$. As a consequence the fields $\alpha_{L,R}'$ are not mass eigenfields.
\medskip
\\
Using the unitary gauge (\ref{unitarygauge}) one gets
	\begin{equation}
	\begin{split}
		& \mathcal{L}_{\mathrm{Yukawa}}^{\mathrm{GIM} (cl)(u)}=-\frac{1}{\sqrt{2}}(v+h)
		\bar{l}_{R}'M^{cl}l_{L}'
		 + \mathrm{H.c.},\\
		& \mathcal{L}_{\mathrm{Yukawa}}^{\mathrm{GIM} (\nu)(u)}=
		-\frac{1}{\sqrt{2}}(v+h)
		\bar{\nu}_{R}M^{\nu}
		\nu_{L}
		 + \mathrm{H.c.},
	\end{split}
	\end{equation}
where $\mathcal{M}^{\nu}=\frac{v}{\sqrt{2}}M^{\nu}$ and $\mathcal{M}^{cl}=\frac{v}{\sqrt{2}}M^{cl}$ are the \textit{Dirac mass matrices} for the neutrinos and the charged leptons, respectively.
To find the fermion masses one has to diagonalize the mass matrices $\mathcal{M}^{\nu}$ and $\mathcal{M}^{cl}$ such that all diagonal entries are positive (because the masses must of course be positive). This can be managed by the following theorem:

\begin{theorem}\label{TY1}
	Let $\mathcal{M}$ be a non-singular complex square matrix. Then there exist unitary matrices $U$ and $V$ such that
		\begin{displaymath}
			\mathcal{M}=V\hat{\mathcal{M}}U^{\dagger},
		\end{displaymath}
where $\hat{\mathcal{M}}$ is diagonal, real and positive.
\end{theorem}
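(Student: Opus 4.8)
The plan is to deduce the claim from the spectral theorem applied to the positive-definite Hermitian matrix $\mathcal{M}^{\dagger}\mathcal{M}$. First I would observe that since $\mathcal{M}$ is non-singular, the matrix $\mathcal{M}^{\dagger}\mathcal{M}$ is Hermitian and positive definite; hence by the spectral theorem there is a unitary $U$ and a diagonal matrix $\hat{\mathcal{M}}^{2}$ with strictly positive diagonal entries (the eigenvalues, listed with multiplicity) such that $\mathcal{M}^{\dagger}\mathcal{M} = U\,\hat{\mathcal{M}}^{2}\,U^{\dagger}$. Define $\hat{\mathcal{M}}$ to be the diagonal matrix whose entries are the positive square roots of those eigenvalues; then $\hat{\mathcal{M}}$ is diagonal, real and positive, and it is invertible.

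Next I would produce the second unitary matrix. Set $V := \mathcal{M}\,U\,\hat{\mathcal{M}}^{-1}$. The point is to check that $V$ is unitary: compute
\begin{displaymath}
  V^{\dagger}V = \hat{\mathcal{M}}^{-1}U^{\dagger}\mathcal{M}^{\dagger}\mathcal{M}\,U\,\hat{\mathcal{M}}^{-1}
             = \hat{\mathcal{M}}^{-1}U^{\dagger}\bigl(U\,\hat{\mathcal{M}}^{2}\,U^{\dagger}\bigr)U\,\hat{\mathcal{M}}^{-1}
             = \hat{\mathcal{M}}^{-1}\hat{\mathcal{M}}^{2}\hat{\mathcal{M}}^{-1} = \mathbbm{1},
\end{displaymath}
using $U^{\dagger}U = \mathbbm{1}$ and the fact that diagonal real matrices commute. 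Since $V$ is a square matrix with $V^{\dagger}V = \mathbbm{1}$, it is unitary. Finally, rearranging the definition of $V$ gives $\mathcal{M} = V\,\hat{\mathcal{M}}\,U^{\dagger}$, which is the desired decomposition.

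I do not expect a serious obstacle here; this is the standard construction of the singular value decomposition specialized to the non-singular case, where invertibility of $\hat{\mathcal{M}}$ makes the argument clean. The one point requiring a little care is the justification that $V$ is unitary from $V^{\dagger}V = \mathbbm{1}$ alone — this is fine because $V$ is square and finite-dimensional, so a left inverse is automatically a two-sided inverse. If one wanted a non-singularity-free version one would instead extend an orthonormal basis of the range, but that generality is not needed for the application to the Dirac mass matrices $\mathcal{M}^{\nu}$ and $\mathcal{M}^{cl}$, which are assumed non-singular.
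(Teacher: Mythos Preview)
Your proof is correct and follows essentially the same route as the paper's: both diagonalize a positive-definite Hermitian product of $\mathcal{M}$ with its adjoint and then define the remaining unitary factor explicitly, checking unitarity by direct computation. The only cosmetic difference is that the paper diagonalizes $\mathcal{M}\mathcal{M}^{\dagger}$ to obtain $V$ first and then sets $U^{\dagger}:=\hat{\mathcal{M}}^{-1}V^{\dagger}\mathcal{M}$, whereas you diagonalize $\mathcal{M}^{\dagger}\mathcal{M}$ to obtain $U$ first and then set $V:=\mathcal{M}U\hat{\mathcal{M}}^{-1}$; these are mirror images of the same argument.
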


\begin{proof}\footnote{This proof has been adapted from \cite{horejsi} (p.230f.).}
	Let $\mathcal{M}$ be a non-singular complex $n\times n$-matrix. $\Rightarrow$ $\mathcal{M}\mathcal{M}^{\dagger}$ is Hermitian and positive. $\Rightarrow$ $\mathcal{M}\mathcal{M}^{\dagger}$ is diagonalizeable and has only positive eigenvalues.
	\begin{displaymath}
		\mathcal{M}\mathcal{M}^{\dagger}=V\hat{\mathcal{M}}^{2}V^{\dagger},
	\end{displaymath}
with $\hat{\mathcal{M}}=\mathrm{diag}(m_{1},...,m_{n}), m_{i}\in \mathbb{R}^{+}$.
\medskip
\\
Now we define
	\begin{displaymath}
		U^{\dagger}=\hat{\mathcal{M}}^{-1}V^{\dagger}\mathcal{M}.
	\end{displaymath}
$U$ is unitary, because 
	\begin{displaymath}
		U^{\dagger}U=\hat{\mathcal{M}}^{-1}\underbrace{V^{\dagger}\mathcal{M}\mathcal{M}^{\dagger}V}_{\hat{\mathcal{M}}^{2}}(\hat{\mathcal{M}}^{-1})^{\dagger}=\hat{\mathcal{M}}^{-1}\hat{\mathcal{M}}\hat{\mathcal{M}}\hat{\mathcal{M}}^{-1}=\mathbbm{1}_{n}.
	\end{displaymath}
	\begin{displaymath}
		\Rightarrow V\hat{\mathcal{M}}U^{\dagger}=V\hat{\mathcal{M}}\hat{\mathcal{M}}^{-1}V^{\dagger}\mathcal{M}=\mathcal{M}.
	\end{displaymath}
\end{proof}
\hspace{0mm}\\
The procedure described in theorem \ref{TY1} is called \textit{bidiagonalization} of a non-singular matrix.
\bigskip
\\
Bidiagonalizing $\mathcal{M}^{\nu}$ and $\mathcal{M}^{cl}$ one gets
	\begin{displaymath}
		\mathcal{M}^{\nu}=V_{\nu}\left(\begin{matrix}
		m_{\nu_{1}}  & 0 & 0 \\
		0  & m_{\nu_{2}} & 0 \\
		0  & 0 & m_{\nu_{3}}
		  \end{matrix}\right)U_{\nu}^{\dagger},\quad
		\mathcal{M}^{cl}=V_{cl}\left(\begin{matrix}
		m_{e}  & 0 & 0 \\
		0  & m_{\mu} & 0 \\
		0  & 0 & m_{\tau}
		  \end{matrix}\right)U_{cl}^{\dagger}
	\end{displaymath}
with unitary matrices $V_{\nu}$, $U_{\nu}$, $V_{cl}$ and $U_{cl}$. After bidiagonalization one finds the following mass terms:
	\begin{displaymath}
	\begin{split}
	& \mathcal{L}^{\nu}_{\mathrm{mass}}=-\bar{\nu}_{R}V_{\nu}\hat{\mathcal{M}}^{\nu}U_{\nu}^{\dagger}\nu_{L}+\mathrm{H.c.},\\
	&
	\mathcal{L}^{cl}_{\mathrm{mass}}=-\bar{l}_{R}'V_{cl}\hat{\mathcal{M}}^{cl}U_{cl}^{\dagger}l_{L}'+\mathrm{H.c.}
	\end{split}
	\end{displaymath}
Defining the \textit{mass eigenfields}
	\begin{displaymath}
	\nu_{R}^{m}=V_{\nu}^{\dagger}\nu_{R},\quad \nu_{L}^{m}=U_{\nu}^{\dagger}\nu_{L},\quad
	l_{R}=V_{cl}^{\dagger}l_{R}',\quad
	l_{L}=U_{cl}^{\dagger}l_{L}'
	\end{displaymath}
one finds
	\begin{displaymath}
	\begin{split}
		& \mathcal{L}_{\mathrm{mass}}^{(\nu)}=-\bar{\nu}^{m}_{R}\hat{\mathcal{M}}^{\nu}\nu_{L}^{m}+\mathrm{H.c.}=-m_{\nu_{1}}\bar{\nu}_{1}^{m}\nu_{1}^{m}-m_{\nu_{2}}\bar{\nu}_{2}^{m}\nu_{2}^{m}-m_{\nu_{3}}\bar{\nu}_{3}^{m}\nu_{3}^{m},\\
		& 
		\mathcal{L}_{\mathrm{mass}}^{(cl)}=-\bar{l}_{R}\hat{\mathcal{M}}^{cl}l_{L}+\mathrm{H.c.}=-m_{e}\bar{e}e-m_{\mu}\bar{\mu}\mu-m_{\tau}\bar{\tau}\tau,
	\end{split}
	\end{displaymath}
with $\nu^{m}=\nu^{m}_{L}+\nu^{m}_{R}$ and $l=l_{L}+l_{R}$.
\bigskip
\bigskip
\\
\textbf{The lepton mixing matrix}
\medskip
\\
The neutrino flavour $\alpha$ is defined by the charged lepton $\alpha$ involved in production/detection processes. The responsible part of the GWS-Lagrangian for detection processes is the \textit{charged current} Lagrangian
	\begin{displaymath}
		\mathcal{L}_{\mathrm{CC}}=-\frac{g}{\sqrt{2}}W_{\lambda}^{-}\enspace\bar{l}_{L}'\gamma^{\lambda}\nu_{L} + \mathrm{H.c.}
	\end{displaymath}
Rewriting this in terms of mass eigenfields one gets
	\begin{equation}
	\mathcal{L}_{\mathrm{CC}}=-\frac{g}{\sqrt{2}}W_{\lambda}^{-}\enspace\bar{l}_{L}\gamma^{\lambda}\underbrace{U_{cl}^{\dagger}U_{\nu}}_{U}\nu_{L}^{m} + \mathrm{H.c.}
	\end{equation}
Thus all $\nu_{iL}^{m}$ contribute to the Lagrangian of the detection of a particular $\nu_{\alpha L}$ (which is done by detecting the corresponding charged lepton $\alpha$).
\medskip
\\
This is why $U=U_{cl}^{\dagger}U_{\nu}$ is called \textit{lepton mixing matrix}. Often $U$ is also called PMNS-matrix\footnote{PMNS = Pontecorvo, Maki, Nakagawa, Sakata.}. It is the analogue to the CKM-matrix\footnote{Cabibbo-Kobayashi-Maskawa-matrix.} in the quark sector.

\subsection{Mass hierarchies in fermion families}
The origin of mass ratios of the different members of fermion families is one of the most unexplored regions of particle physics. Currently there is no explanation for the mass ratios of the standard-model fermions. Every successful model will extend the standard-model of particle physics in a far-reaching way.
\medskip
\\
Let us now take a look on the masses and mass ratios (\textquotedblleft hierarchies\textquotedblright) in the different fermion families. They are listed in table \ref{smhierarchies}.\footnote{Here $u, d$ and $s$ masses are \textquotedblleft current quark masses\textquotedblright, $c$ and $b$ masses are the \textquotedblleft running\textquotedblright\hspace{0.7mm} masses. The $t$ mass was determined by direct observation of top events.  There is much controversy in defining quark masses. For further details see \cite{pdg}.} Here we have defined the mass ratio as the mass of the particle over the mass of the lightest family member. One can see that the mass ratios of the families are quite different, but the most striking fact is that
	\begin{displaymath}
		\frac{m_{\nu}}{m_{e}}<4\cdot10^{-6},
	\end{displaymath}
while
	\begin{displaymath}
		\frac{m_{u}}{m_{e}}\sim 3-6, \quad\quad \frac{m_{d}}{m_{e}}\sim 7-12,
	\end{displaymath}
which points towards a completely different mechanism of mass generation for neutrinos than the mechanism for mass generation in the quark sector. This is one reason why \textit{alternatives to the GIM-construction} are intensively studied today. Two important proposals are
	\begin{itemize}
	 \item radiative neutrino masses \cite{grimus1,grimus2}
	 \item the \textit{seesaw mechanism}. \cite{grimus1,grimus2,phenneutinoosc}
	\end{itemize}

\begin{table}
\begin{center}
\begin{tabular}{|lcll|}
\firsthline
Fermion family & members & $mc^{2}$[MeV]\textsuperscript{\cite{pdg}} & mass ratio \\
\hline 
up quarks & $u$ & 1.5-3.3 & 1 \\
× & $c$ & $(1.27_{-0.11}^{+0.07})\cdot 10^{3}$ & 350-890 \\
× & $t$ & $(171.2\pm2.1)\cdot10^{3}$ & 51000-116000 \\
\hline
down quarks & $d$ & 3.5-6.0 & 1\\
× & $s$ & $104_{-34}^{+26}$ & 12-37 \\
× & $b$ & $(4.20_{-0.07}^{+0.17})\cdot10^{3}$ & 690-1200 \\
\hline
charged leptons & $e$ & 0.510998910(13) & 1\\
× & $\mu$ & 105.658367(04) & 206.7683 \\
× & $\tau$ & 1776.84(17) & 3477 \\
\hline
neutrinos & $\nu_{1}$ & $<2\cdot 10^{-6}$ & unknown\\
× & $\nu_{2}$ & $<2\cdot 10^{-6}$ & unknown \\
× & $\nu_{3}$ & $<2\cdot 10^{-6}$ & unknown \\
\lasthline
\end{tabular}
\end{center}
\caption{Masses and hierarchies of the standard-model fermions.}
\label{smhierarchies}
\end{table}
\hspace{0mm}\\
Another very important question is, whether neutrinos are Dirac or Majorana particles. (For a discussion of Dirac and Majorana fields see appendix \ref{appendixdiracmajorana}.)  For a detailed discussion of Dirac and Majorana neutrinos see \cite{grimus2,phenneutinoosc,neutrinoantip}.
\medskip
\\
Let us mention that there exists a promising road to experimentally distinguish between Dirac and Majorana-neutrinos, namely the so called \textit{neutrinoless double beta decay} \cite{0nubetabeta1,0nubetabeta2}:
	\begin{equation}\label{doublebeta}
	(A,Z)\rightarrow (A,Z+2)+2e^{-},
	\end{equation}
which is only possible if neutrinos have Majorana nature. Process (\ref{doublebeta}) clearly violates lepton number conservation. A possible candidate for a nucleus that could decay via neutrinoless double beta decay is \textsuperscript{76}Ge:
	\begin{displaymath}
	^{76}\mathrm{Ge}\rightarrow\mathrm{\textsuperscript{76}Se}+2e^{-}.
	\end{displaymath}
The GERDA experiment \cite{Gerda1,Gerda2} searches for this process. Summaries describing other experiments on neutrinoless double beta decay can be found in \cite{0nubetabeta2,0nubetabetaexp1,0nubetabetaexp2}.
\bigskip
\\
Though the GIM-model itself may not be as promising as models involving Majorana neutrinos it provides the starting point to many other important models. Allowing total lepton number violation one can add a \textit{Majorana mass term} (see appendix \ref{appendixdiracmajorana}) to the Lagrangian of the GIM-model, which automatically implies Majorana nature for the neutrinos. In such models one has so-called Dirac-Majorana mass terms (see \cite{phenneutinoosc}) which are ideal starting points for the seesaw mechanism.
\bigskip
\\
So far we have only taken a look onto the experimental results for the masses of the standard model fermions. In the next chapter we will take a glance at neutrino oscillations and the results of recent neutrino oscillation experiments.

\chapter{Neutrino oscillations and the lepton mixing matrix}\label{neutrinoosc}
In this chapter we will consider neutrino oscillations and their connection to neutrino masses and lepton mixing. An excellent review of this field can be found in \cite{neutrinoreview}. A summary of the current status of this field of research can be found in \cite{altarelli3}.
\bigskip
\\
\underline{Remark:} In all the following considerations the rows of the mixing matrix $U$ are numbered by indices $e,\mu,\tau$ instead of the numbers $1,2,3$. This leads to illustrative formulae like
	\begin{displaymath}
	\nu_{\mu L}(x)=U_{\mu j}\nu_{jL}^{m}(x), \quad\mathrm{etc.}
	\end{displaymath}

\section{Neutrino oscillations}
Let us consider the Lagrangian for the charged current interactions again:
	\begin{equation}\label{CCinteractions}
	\mathcal{L}_{\mathrm{CC}}=-\frac{g}{\sqrt{2}}W_{\lambda}^{-}\enspace\bar{l}_{L}'\gamma^{\lambda} \nu_{L}+\mathrm{H.c.}=-\frac{g}{\sqrt{2}}W_{\lambda}^{-}\enspace\bar{l}_{L}\gamma^{\lambda} U\nu_{L}^{m}+\mathrm{H.c.}
	\end{equation}
The only way the neutrino flavour can be defined is via the corresponding charged lepton in production and detection processes. This means:
	\begin{quote}
	\textit{A neutrino $\nu$ has flavour $\alpha$ if it is produced/detected in a CC-process involving the charged lepton of flavour $\alpha$.}
	\end{quote}
Therefore from the Lagrangian for the CC-interactions we find
	\begin{displaymath}
	\nu_{L}=U\nu^{m}_{L}
	\end{displaymath}
for the left-handed neutrino flavour eigenfields. Let us now consider the field operator for a massive Dirac neutrino:
	\begin{displaymath}
	\nu_{j}^{m}(x)=\sum_{s=1,2}\int \frac{d^{3}p}{(2\pi)^{3}\sqrt{2E}}[u_{j}(p,s)b_{j}(p,s)e^{-ipx}+v_{j}(p,s)d_{j}^{\dagger}(p,s)e^{ipx}].
	\end{displaymath}
(No summation over j.) The 1-neutrino mass eigenstate $\vert\nu_{i}^{m}\rangle$ with 4-momen\-tum $p$ and spin orientation $s$ is given by
	\begin{displaymath}
	\vert\nu_{i}^{m}\rangle=b_{i}^{\dagger}(p,s)\vert0\rangle,
	\end{displaymath}
where $\vert0\rangle$ denotes the vacuum of the quantum field theory. Therefore we have to consider the field operator $\bar{\nu}_{j}^{m}(x)$ for the creation of the 1-neutrino mass eigenstate. Since $\bar{\nu}_{L}=U^{\ast}\bar{\nu}_{L}^{m}$ we finally find
	\begin{equation}
	\vert\nu_{\alpha}\rangle=U^{\ast}_{\alpha j}\vert\nu_{j}^{m}\rangle.
	\end{equation}
This is the key formula that describes \textit{neutrino oscillations}. The central idea of neutrino oscillations is now the following:
	\begin{quote}
	\textit{Neutrinos are produced and detected as flavour eigenstates, but they propagate as a coherent superposition of mass eigenstates.}
	\end{quote}
The easiest description of neutrino oscillations involves the following assumptions:
	\begin{itemize}
	 \item The states propagate in the vacuum\footnote{The important effect of surrounding matter on neutrino oscillations was first studied by Mikheyev, Smirnov and Wolfenstein \cite{Wolfenstein1,Wolfenstein2,Mikheyev1,Mikheyev2,Mikheyev3}. For a detailed description of neutrino oscillations in matter see \cite{grimus2}. Important remark: Matter effects are \underline{essential} for the explanation of the solar neutrino deficit \cite{solarpuzzle} through neutrino oscillations.} in $x$-direction.
	 \item All mass eigenstates contained in the coherent superposition have the same energy $E$, but different momenta $p_{j}=\sqrt{E^{2}-m_{j}^{2}}$.
	 \item Since all neutrinos that usually occur in nature are ultrarelativistic ($E\gg m$), we use the ultrarelativistic limit
		\begin{displaymath}
		p_{j}=\sqrt{E^{2}-m_{j}^{2}}\simeq E-\frac{m_{j}^{2}}{2E}.
		\end{displaymath}
	\end{itemize}
Let $\vert\nu_{\alpha}\rangle$ be produced at $(t,x)=(0,0)$.
	\begin{displaymath}
	\begin{split}
	\Rightarrow \vert \nu_{\alpha};t,x\rangle=e^{-i(Ht-Px)}\vert\nu_{\alpha}\rangle &=\sum_{j=1}^{3}e^{-i(Et-p_{j}x)}U_{\alpha j}^{\ast}\vert\nu_{j}^{m}\rangle\simeq\\
	&\simeq e^{-iE(t-x)}\sum_{j=1}^{3}e^{-i\frac{m_{j}^{2}x}{2E}}U_{\alpha j}^{\ast}\vert\nu_{j}^{m}\rangle.
	\end{split}
	\end{displaymath}
The probability to find $\vert\nu_{\beta}\rangle$ at $(t,x)=(t,L)$ is given by
	\begin{equation}\label{neutrinooscequ}
	P_{\nu_{\alpha}\rightarrow\nu_{\beta}}(L)=\vert \langle\nu_{\beta}\vert\nu_{\alpha};t,L\rangle\vert^{2}\simeq \left| \sum_{j=1}^{3}e^{-i\frac{m_{j}^{2}L}{2E}}U_{\beta j}U_{\alpha j}^{\ast}\right|^{2},
	\end{equation}
which does not depend on $t$.
\bigskip
\\
$P_{\nu_{\alpha}\rightarrow\nu_{\beta}}$ has the following interesting properties:
	\begin{itemize}
	 \item It depends only on the mass square differences $\Delta m_{ij}^{2}:=m_{i}^{2}-m_{j}^{2}$.
	 \item It shows oscillatory behaviour in $\frac{L}{E}$ $\Rightarrow$ \textbf{neutrino oscillations}.
	 \item It is invariant under $U\mapsto DUD'$ with diagonal phase matrices $D$ and $D'$.
	 \item It indicates a violation of the family lepton numbers.
	\end{itemize}

\section{The lepton mixing matrix}
In the last section we found that the elements of the lepton mixing matrix (PMNS-matrix) $U$ are related to the observable effect of neutrino oscillations via equation (\ref{neutrinooscequ}). Therefore one can find the elements of $U$ by performing neutrino oscillation experiments. Before we list the results of previous oscillation experiments we choose some standard conventions for the mixing matrix $U$.
\bigskip
\\
According to \cite{grimus1} the unitary matrix $U$ can be decomposed in the following way:
	\begin{displaymath}
	U=DU' \mathrm{diag}(e^{i\rho},e^{i\sigma},1),
	\end{displaymath}
where $\rho, \sigma$ are real numbers, $D$ is a diagonal phase matrix and
	\begin{displaymath}
	U'=U_{23}U_{13}U_{12}
	\end{displaymath}
with
	\begin{displaymath}
	\begin{split}
	& U_{23}=\left(\begin{matrix}
		 1 & 0 & 0 \\
		 0 & c_{23} & s_{23} \\
		 0 & -s_{23} & c_{23}
	\end{matrix}\right),\quad
	U_{13}=\left(\begin{matrix}
		 c_{13} & 0 & s_{13}e^{-i\delta} \\
		 0 & 1 & 0 \\
		 -s_{13}e^{i\delta} & 0 & c_{13}
	\end{matrix}\right),\\
	& U_{12}=\left(\begin{matrix}
		 c_{12} & s_{12} & 0 \\
		 -s_{12} & c_{12} & 0 \\
		 0 & 0 & 1
	\end{matrix}\right), \quad c_{ij}=\mathrm{cos}\hspace{0.5mm}\theta_{ij},\enspace s_{ij}=\mathrm{sin}\hspace{0.5mm}\theta_{ij}.
	\end{split}
	\end{displaymath}
	\begin{equation}\label{Uprime}
	\Rightarrow U'=\left(
	\begin{matrix}
	c_{12}c_{13} & s_{12}c_{13} & s_{13}e^{-i\delta}\\
	-s_{12}c_{23}-c_{12}s_{13}s_{23}e^{i\delta} & c_{12}c_{23}-s_{12}s_{13}s_{23}e^{i\delta} & c_{13}s_{23}\\
	s_{12}s_{23}-c_{12}s_{13}c_{23}e^{i\delta} & -c_{12}s_{23}-s_{12}s_{13}c_{23}e^{i\delta} & c_{13}c_{23}
	\end{matrix}
	\right).
	\end{equation}
For the \textit{mixing angles} $\theta_{ij}$ we use the convention $\theta_{ij}\in [0,\frac{\pi}{2}]$. $\delta\in [0,2\pi)$ is a phase. Note that $\delta$ is a physical phase, because in general it occurs in the oscillation probability (\ref{neutrinooscequ}). $\delta$ is the analogue to the CKM-phase in quark mixing.
\medskip
\\
Let us now look at the matrices $D$ and $\mathrm{diag}(e^{i\rho},e^{i\sigma},1)$. In the last section we already found that the oscillation probability $P_{\nu_{\alpha}\rightarrow \nu_{\beta}}$ is invariant under left and right multiplication of $U$ with diagonal phase matrices, therefore $D$ and $\mathrm{diag}(e^{i\rho},e^{i\sigma},1)$ can be discarded when one considers neutrino oscillations only.
\\
In fact it turns out that $D$ is not physical in the charged current interactions, because it can be absorbed into the charged lepton fields (consider the Lagrangian for the CC-interactions (\ref{CCinteractions})). The case is more complicated for the phase matrix $\mathrm{diag}(e^{i\rho},e^{i\sigma},1)$. Here we have to consider two cases.
	\begin{enumerate}
	 \item If neutrinos are \textit{Dirac particles}, the phase matrix $\mathrm{diag}(e^{i\rho},e^{i\sigma},1)$ can be absorbed into the neutrino mass eigenfields, because all other parts of the Lagrangian are invariant under $\nu_{L}^{m}\mapsto \mathrm{diag}(e^{-i\rho},e^{-i\sigma},1)\nu_{L}^{m}$.
	 \item If neutrinos are \textit{Majorana particles}, the phase matrix $\mathrm{diag}(e^{i\rho},e^{i\sigma},1)$ cannot be absorbed into the neutrino mass eigenfields, because the Majorana mass term (see subsection \ref{majoranapsubsection})
		\begin{displaymath}
		\mathcal{L}^{\mathrm{(mass)}}_{\mathrm{Maj.}}=\frac{1}{2}(\nu_{L}^{m})^{T}C^{-1}\hat{\mathcal{M}}^{\nu}_{\mathrm{Maj.}}\nu_{L}^{m}
		\end{displaymath}
	is \underline{not} invariant under $\nu^{m}_{L}\mapsto \mathrm{diag}(e^{-i\rho},e^{-i\sigma},1)\nu_{L}^{m}$, because this would correspond to a transformation of the neutrino masses
		\begin{displaymath}
		m_{1}\mapsto e^{-2i\rho}m_{1},\quad m_{2}\mapsto e^{-2i\sigma}m_{2},\quad m_{3}\mapsto m_{3},
		\end{displaymath}
	and the new neutrino fields would not be mass eigenfields any longer. Therefore $\rho$ and $\sigma$ are physical for Majorana neutrinos, and they are called \textit{Majorana phases}.
	\end{enumerate}

\section{Results of the recent neutrino oscillation experiments}
Before we list the results we have to explain some conventions again. Since the numbering of the neutrino mass eigenfields is arbitrary, we have to choose a convention. We use the convention of \cite{grimus2}:
	\begin{displaymath}
	m_{1}<m_{2} \enspace(\Delta m_{21}^{2}>0),\quad \Delta m_{21}^{2}<\vert \Delta m_{31}^{2}\vert.
	\end{displaymath}
Using this convention there are two possibilities for the neutrino mass spectrum, depending on the sign of $\Delta m_{31}^{2}$.
	\begin{itemize}
	 \item[(a)] \textit{Normal spectrum}: $m_{1}<m_{2}<m_{3}$.
	 \item[(b)] \textit{Inverted spectrum}: $m_{3}<m_{1}<m_{2}$.
	\end{itemize}

	\begin{center}
	\includegraphics[scale=1.2, keepaspectratio=true]{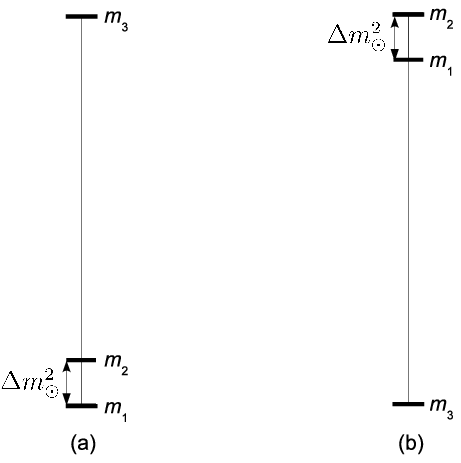}
	\\
	Normal (a) and inverted (b) neutrino mass spectrum
	\end{center}
The smallest mass square difference $\Delta m_{21}^{2}$ is important for \textit{solar} neutrino oscillations \cite{gonzalez}, and is therefore also denoted as $\Delta m_{\odot}^{2}$. The neutrino mass eigenstates $\nu_{1}^{m}$ and $\nu_{2}^{m}$ are therefore sometimes referred to as the \textquotedblleft solar pair\textquotedblright.
\\
The largest mass square difference is important for \textit{atmospheric} neutrino oscillations \cite{gonzalez}, and is therefore denoted as $\Delta m_{\mathrm{atm}}^{2}$.
\\
$\Delta m_{\mathrm{atm}}^{2}=\Delta m_{31}^{2}$ for the normal spectrum, and $\Delta m_{\mathrm{atm}}^{2}=\Delta m_{23}^{2}$ for the inverted spectrum.
\bigskip
\\
Schwetz et al. give a summary of best fit results and errors from global neutrino oscillation experiments in \cite{schwetz}. The results provided by \cite{schwetz} are shown in table \ref{Oscillationdata}.
	\begin{table}[h]
	\begin{center}
	\renewcommand{\arraystretch}{1.4}
	\begin{tabular}{|l@{\vrule width1pt}ccc|}
	\firsthline
	parameter \hspace{15mm} & \hspace{0mm} best fit & $2\sigma$ & $3\sigma$\\
	\hline
	$\Delta m_{21}^{2}\enspace [10^{-5}\mathrm{eV}^{2}]$ & \hspace{0mm} $7.65_{-0.20}^{+0.23}$ & $7.25-8.11$ & $7.05-8.34$\\
	$\vert\Delta m_{31}^{2}\vert\enspace [10^{-3}\mathrm{eV}^{2}]$ & \hspace{0mm} $2.40_{-0.11}^{+0.12}$ & $2.18-2.64$ & $2.07-2.75$\\
	$\mathrm{sin}^{2}\theta_{12}$ & \hspace{0mm} $0.304_{-0.016}^{+0.022}$ & $0.27-0.35$ & $0.25-0.37$\\
	$\mathrm{sin}^{2}\theta_{23}$ & \hspace{0mm} $0.50_{-0.06}^{+0.07}$ & $0.39-0.63$ & $0.36-0.67$\\
	$\mathrm{sin}^{2}\theta_{13}$ & \hspace{0mm} $0.01_{-0.011}^{+0.016}$ & $\leq 0.04$ & $\leq 0.056$\\
	\lasthline
	\end{tabular}
	\caption[Three-flavour neutrino oscillation parameters from global data]{Best fit values with 1$\sigma$-errors, $2\sigma$- and $3\sigma$-intervals for three-flavour neutrino oscillation parameters from global data as given in \cite{schwetz}. The analysis of Schwetz et al. is based on \cite{schwetz2,schwetz3,adamson,aharmim,abe,borexino}.}
	\label{Oscillationdata}
	\end{center}
	\end{table}
\\
From the quite high values for $\mathrm{sin}^{2}\theta_{ij}$ for two of the mixing angles we see that in contrast to the CKM-matrix $U_{\mathrm{CKM}}$ the lepton mixing matrix $U$ is very far from the unit matrix \cite{ckmreview}.
\medskip
\\
Using the best fit values for $\mathrm{sin}^{2}\theta_{ij}$ given in table \ref{Oscillationdata} one can try to find analytic expressions for the mixing matrix which are in agreement with the presented experimental data. We will give two popular examples here.

\subsection{The tribimaximal mixing matrix}
So-called \textit{tribimaximal mixing}, which was first suggested by Harrison, Perkins and Scott in \cite{HPS}, is a very popular suggestion for lepton mixing, which is described by the following assumptions:
	\begin{itemize}
	 \item $\mathrm{sin}^{2}\theta_{12}=\frac{1}{3}$, which is within the currently $2\sigma$-bound (see table \ref{Oscillationdata}).
	 \item $\mathrm{sin}^{2}\theta_{23}=\frac{1}{2}$, which is within the currently $1\sigma$-bound.
	 \item $\mathrm{sin}^{2}\theta_{13}=0$, which is within the currently $1\sigma$-bound.
	\end{itemize}
For the mixing angles we get:
	\begin{displaymath}
	\theta_{12}=\mathrm{arcsin}\frac{1}{\sqrt{3}},\quad \theta_{23}=\frac{\pi}{4},\quad \theta_{13}=0.
	\end{displaymath}
Inserting these values into expression (\ref{Uprime}) for the mixing matrix $U'$ we find:
	\begin{displaymath}
	U'=U_{\mathrm{HPS}}=\left(\begin{matrix}
		 \sqrt{\frac{2}{3}} & \frac{1}{\sqrt{3}} & 0 \\
		 -\frac{1}{\sqrt{6}} & \frac{1}{\sqrt{3}} & \frac{1}{\sqrt{2}} \\
		 \frac{1}{\sqrt{6}} & -\frac{1}{\sqrt{3}} & \frac{1}{\sqrt{2}}
	\end{matrix}\right),
	\end{displaymath}
which is also called Harrison-Perkins-Scott mixing matrix. (Remark: In the literature there occur different phase conventions for the rows and columns of $U_{\mathrm{HPS}}$.)

\subsection{Trimaximal mixing}
Another interesting suggestion for lepton mixing is the so-called \textit{trimaximal mixing} pattern, which was suggested by Grimus and Lavoura in \cite{grimus3}. It is described by the assumptions
	\begin{displaymath}
	\vert U_{e2}\vert^{2}=\vert U_{\mu2}\vert^{2}=\vert U_{\tau2}\vert^{2}=\frac{1}{3}.
	\end{displaymath}
Because the trimaximal mixing pattern is less restrictive than the tribimaximal mixing assumption it offers interesting possibilities like $U_{e3}\neq 0$ and $\mathrm{sin}^{2}\theta_{23}\neq \frac{1}{2}$. For more information on trimaximal mixing we refer the reader to \cite{grimus3}.

\subsection{A selection of important neutrino experiments}
At last we want to list a selection of important neutrino experiments.
	\begin{itemize}
	 \item SAGE \cite{sage} (radiochemical detector; measurement of solar neutrino flux)
	 \item Superkamiokande \cite{superkamiokande2} (\v{C}erenkov detector; detection of solar and atmospheric neutrinos)
	 \item SNO \cite{sno1,sno2} (\v{C}erenkov detector; detection of neutrinos from the sun and other astrophysical neutrino sources)
	 \item Amanda/IceCube \cite{icecube} (\v{C}erenkov detector; neutrino telescope, detection of cosmic ray air showers)
	 \item KamLAND \cite{kamland} (scintillation detector; observation of reactor neutrinos)
	 \item K2K \cite{k2k,k2k2} (accelerator neutrino experiment)
	 \item T2K \cite{k2k,t2k} (accelerator neutrino experiment)
	 \item MINOS \cite{minos,minos2} (accelerator neutrino experiment)
	 \item OPERA \cite{opera} (accelerator neutrino experiment)
	 \item MiniBooNE \cite{miniboone} (accelerator neutrino experiment)
	 \item Borexino \cite{borexino,borexino2} (scintillation detector; solar neutrinos)
	 \item KATRIN \cite{katrin} (determination of the absolute mass of electron neutrinos through study of beta decay of \textsuperscript{3}H)
	\end{itemize}

\chapter{Finite family symmetry groups}
In the last chapter we found that the current experimental data on neutrino oscillations allow interesting mixing patterns like tribimaximal mixing. If one allows all matrices as lepton mass matrices it is no problem to construct mass matrices that lead to a specific mixing matrix $U$. However, just finding a mass matrix that allows a specific form of $U$ does not tell us much about the physical (and mathematical) principles that govern the phenomenon of lepton mixing.
\medskip
\\
The important question is therefore: Can we find a mathematical framework that restricts the mass matrices and therefore the mixing matrix in a way such that the experimental results can be reproduced within the current error bounds?
\medskip
\\
A promising road to answer this question is the idea of so-called \textit{finite family symmetries}, which we will explain now.
\medskip
\\
Consider a Yukawa-coupling of the form
	\begin{equation}\label{FFSG-GinvYuk}
	\mathcal{L}_{\mathrm{Yukawa}}=-\bar{a}(\Gamma_{j}\phi_{j})b+\mathrm{H.c.},
	\end{equation}
where $\Gamma_{j}$ are complex $3\times 3$-matrices, $\phi_{j}$ are standard model Higgs doublets and $a,b$ are vectors of the form
	\begin{displaymath}
	a=\left(\begin{matrix}
		D_{e}' \\ D_{\mu}' \\ D_{\tau}'
	\end{matrix}\right),\quad
	b=\left(\begin{matrix}
		f_{e} \\ f_{\mu} \\ f_{\tau}
	\end{matrix}\right).
	\end{displaymath}
$f$ denotes a specific type of chiral fermion field, and $D$ are $SU(2)_{I}$-doublets built of two chiral fermion fields. In our analysis we concentrate on the lepton sector (extended by 3 right-handed neutrinos), thus
	\begin{displaymath}
	\begin{split}
	&f=l_{R}',\nu_{R},\\
	&D'=\left(\begin{matrix} \nu_{L}\\l_{L}'  \end{matrix}\right).
	\end{split}
	\end{displaymath}
(In general one would have to include quarks too.) The basic idea is now to consider a Yukawa-coupling that is invariant under a finite group $G$:

\begin{quote}
The Yukawa-coupling
	\begin{displaymath}
	\mathcal{L}_{\mathrm{Yukawa}}=-\bar{a}\phi_{j}\Gamma_{j}b+\mathrm{H.c.}
	\end{displaymath}
is called invariant under a finite group $G$ (or $G$-invariant), if there are matrix representations $D_{a}, D_{b}, D_{\phi}$ of the group $G$ such that $\mathcal{L}_{\mathrm{Yukawa}}$ is invariant under
	\begin{displaymath}
	a\mapsto D_{a}a,\quad b\mapsto D_{b}b,\quad \phi\mapsto D_{\phi}\phi.
	\end{displaymath}
\end{quote}
Since the representations $D_{a}, D_{b}$ of the finite group $G$ act on vectors containing all members of the family $\{f_{e}, f_{\mu}, f_{\tau}\}$ of a lepton type $f$, we call $G$ a \textit{finite family symmetry group}.
\medskip
\\
Note that by introducing the Yukawa-coupling (\ref{FFSG-GinvYuk}) we have automatically extended the scalar sector by an arbitrary number of standard model Higgs doublets.
\medskip
\\
As we will see in chapter \ref{yukawa} in a $G$-invariant Yukawa-coupling the number of Higgs doublets $\phi_{i}$ is not arbitrary, but restricted by the finite group $G$. Furthermore a finite group $G$ restricts the possible matrices $\Gamma_{j}$, so $G$ restricts the mass matrix
	\begin{displaymath}
	\mathcal{M}=\left(\frac{v_{j}}{\sqrt{2}} \Gamma_{j}\right)^{\dagger},
	\end{displaymath}
where $v_{j}$ is defined via
	\begin{displaymath}
	\langle 0\vert\phi_{j}(x)\vert0\rangle=\frac{1}{\sqrt{2}}\left(
	\begin{matrix}
	0
	\\
	v_{j}
	\end{matrix}
	\right).
	\end{displaymath}
Here we have made the following (important) assumption:
\begin{quote}
Let $\langle 0\vert\phi_{j}\vert0\rangle$ be the vacuum expectation value of the Higgs doublet $\phi_{j}$, then all vacuum expectation values $\langle 0\vert\phi_{j}\vert0\rangle$ have the form
	\begin{displaymath}
	\langle0\vert\phi_{j}(x)\vert0\rangle=\frac{1}{\sqrt{2}}\left(
	\begin{matrix}
	0\\ v_{j}
	\end{matrix}
	\right),
	\end{displaymath}
where $v_{j}$ is a constant complex number. This assumption leads to the following geometrical interpretation: Let $\langle0\vert\phi_{j}(x)\vert0\rangle$ be the vacuum expectation values of $\phi_{j}$, then all $\langle0\vert\phi_{j}(x)\vert0\rangle$ are proportional to each other, which means that, interpreted as vectors in $\mathbb{C}^2$, all $\langle0\vert\phi_{j}(x)\vert0\rangle$ are parallel.
\end{quote}
Note that this is an assumption. We have not analysed under which conditions (on the generalized Higgs potential $V(\phi_{1},...,\phi_{n})$) this assumption is justified. If the assumption was not fulfilled, a nonvanishing first component $u_{j}$ of $\langle0\vert\phi_{j}(x)\vert0\rangle$ would lead to charge breaking terms of the form
	\begin{displaymath}
	u_{j}\bar{\nu}_{\alpha L}\beta_{R}+\mathrm{H.c.},
	\end{displaymath}
which are clearly not invariant under $U(1)_{\mathrm{EM}}$ (EM=electromagnetic). This problem has been studied for the 2-Higgs doublet model by Barroso, Ferreira and Santos in \cite{ferreira} and by Maniatis et al. in \cite{maniatis}.
\medskip
\\
Introducing $G$-invariant Yukawa-couplings in the lepton sector we find that a chosen finite group $G$ restricts
	\begin{itemize}
	 \item the number of Higgs doublets,
	 \item the structure of the mass matrices,
	 \item the structure of the mixing matrix.
	\end{itemize}
In the following chapter we will investigate the mathematical properties of $G$-invariant Yukawa-couplings in detail.
 
\chapter{$G$-invariant Yukawa-couplings}\label{yukawa}
Our aim is to develop a method for obtaining Yukawa-couplings which are invariant under a finite group $G$. To reach this goal we will first define the basic objects that are needed. Then we will use the theory of finite groups to construct and classify the possible $G$-invariant Yukawa-couplings for a given group $G$.
\bigskip
\\
All needed group theoretical definitions and theorems can be found in appendix \ref{groupappendix}.

\section{$G$-invariance of functions and Yukawa-cou\-plings}

\begin{define}\label{DY1}
	Let $G$ be a finite group and $f: V_{1}\times...\times V_{p}\rightarrow \mathbb{C}(\mathbb{R})$ a map from a Cartesian product of $p$ vectorspaces over $\mathbb{C}$ to $\mathbb{C}$($\mathbb{R}$). $f$ is called \textit{invariant under} $G$ or $G$-\textit{invariant}, if there exist representations $D_{1},...,D_{p}$ of $G$ on $V_{1},...,V_{p}$ such that
	\begin{equation}
		f(D_{1}(a)v_{1},...,D_{p}(a)v_{p})=f(v_{1},...,v_{p})
	\end{equation}
	$\forall v_{k}\in V_{k}$ $(k=1,...,p)$ and $\forall a\in G$.
\end{define}
\hspace{0mm}
\\
We will now define the prototype of a Yukawa-coupling in an abstract mathematical form, for it is easier to investigate the mathematical properties of an object, if one discards its physical origin for a while. We will breach the gap to the Yukawa-couplings of particle physics in chapter \ref{LeptonSector}.

\begin{define}\label{DY2}
	Let $V_{a}\simeq \mathbb{C}^{n_{a}} ,V_{b}\simeq \mathbb{C}^{n_{b}}$ and $V_{\phi}\simeq \mathbb{C}^{n_{\phi}}$ be finite-dimensional vectorspaces. Let $\Gamma_{j}$ ($j=1,...,n_{\phi}$) be complex $n_{a}\times n_{b}$-matrices. (We will use the isomorphism of the vectorspaces $V_{i}\simeq\mathbb{C}^{n_{i}}$ to define matrix-multiplication, transposition and complex conjugation. $[A]$ denotes the matrix representation of a linear operator $A$. In the case of vectors we will write $a$ instead of $[a]$ for the matrix representation of a vector $a$.)
	\\
	We define
		\begin{gather}
		Y_{\Gamma}: V_{a}\times V_{b}\times V_{\phi}\rightarrow \mathbb{R}
		\\
		Y_{\Gamma}(a,b,\phi)=a^{\dagger}(\Gamma_{j}\phi_{j})b+b^{\dagger}(\Gamma_{j}\phi_{j})^{\dagger}a.
		\end{gather}
$Y_{\Gamma}$ is called a \textit{Yukawa-coupling}.
\end{define}

\begin{prop}\label{PY3}
	Let $Y_{\Gamma}:V_{a}\times V_{b}\times V_{\phi}\rightarrow \mathbb{R}$ be a $G$-invariant Yukawa-coupling, and let $D_{a}, D_{b}, D_{\phi}$ be the corresponding representations of $G$ on $V_{a},V_{b},V_{\phi}$.
	\\
	If any representation of $D_{a},D_{b},D_{\phi}$ is reducible\footnote{For the definitions of reducible and irreducible representations see definitions \ref{DA44} and \ref{DA46}.}, then $Y_{\Gamma}$ can be decomposed into a sum of $G$-invariant Yukawa-couplings acting on subsets of $V_{a}\times V_{b}\times V_{\phi}$ such that the corresponding representations of the new Yukawa-couplings are irreducible.
\end{prop}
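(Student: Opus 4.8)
The plan is to reduce one representation at a time and then iterate. Suppose first that $D_\phi$ is reducible, so that $V_\phi = V_\phi^{(1)} \oplus V_\phi^{(2)}$ with both summands $G$-invariant. Writing $\phi = \phi^{(1)} + \phi^{(2)}$ accordingly, the index set $\{1,\dots,n_\phi\}$ splits into $J_1 \sqcup J_2$, and since $Y_\Gamma(a,b,\phi) = a^\dagger(\Gamma_j\phi_j)b + \mathrm{H.c.}$ is linear in $\phi$, we get $Y_\Gamma = Y_{\Gamma^{(1)}} + Y_{\Gamma^{(2)}}$ where $\Gamma^{(k)}$ collects the matrices $\Gamma_j$ with $j \in J_k$ and $Y_{\Gamma^{(k)}}$ acts on $V_a \times V_b \times V_\phi^{(k)}$. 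Each $Y_{\Gamma^{(k)}}$ is $G$-invariant with respect to $D_a, D_b$ and the restricted representation $D_\phi|_{V_\phi^{(k)}}$: this follows because invariance of $Y_\Gamma$ under $(D_a(g), D_b(g), D_\phi(g))$ for all $g$, specialized to $\phi \in V_\phi^{(k)}$ (which is mapped into itself), yields invariance of the $k$-th piece alone. So reducibility of $D_\phi$ produces the desired decomposition into two summands.

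The cases of $D_a$ and $D_b$ reducible are handled similarly but one must be slightly more careful because $Y_\Gamma$ is sesquilinear, not linear, in $a$ and $b$. Suppose $D_a$ is reducible, $V_a = V_a^{(1)} \oplus V_a^{(2)}$ with both pieces $G$-invariant. A priori $Y_\Gamma$ would decompose into four terms indexed by pairs $(k,l)$, of which the "diagonal" terms ($k=l$, involving $a^{(k)\dagger}(\Gamma_j\phi_j)b$ plus its conjugate) are themselves Yukawa-couplings on $V_a^{(k)} \times V_b \times V_\phi$, while the "off-diagonal" terms ($k \neq l$) are not of the required form. The point I would make is that $G$-invariance forces the off-diagonal terms to vanish: choosing $g$ in the (possibly trivial, but the argument still works by a projection/averaging argument) we can separate the pieces. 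More precisely, if $G$ acts reducibly there is a $G$-equivariant projection $P_1$ onto $V_a^{(1)}$; replacing $a$ by $P_1 a$ and using invariance shows that the block of the operator $\Gamma_j\phi_j$ mapping into the orthogonal complement of $V_a^{(1)}$-directions must be consistent, and iterating over both summands kills the cross terms. Concretely: $G$-invariance of $Y_\Gamma$ means $D_a(g)^\dagger (\Gamma_j \phi_j) D_b(g)^{-1}\cdots$ — actually the cleanest route is to note that invariance of $Y_\Gamma$ under the substitution $(a,b,\phi)\mapsto(D_a(g)a, D_b(g)b, D_\phi(g)\phi)$ for all $g$ is equivalent to the operator identity $D_a(g)^\dagger \tilde\Gamma(D_\phi(g)\phi) D_b(g) = \tilde\Gamma(\phi)$ where $\tilde\Gamma(\phi) := \Gamma_j\phi_j$; if $D_a = D_a^{(1)}\oplus D_a^{(2)}$ in block form, the $(1,2)$ and $(2,1)$ blocks of $\tilde\Gamma(\phi)$ then transform in a representation containing no cross terms and, since we only need the existence of a decomposition, we may simply define the new Yukawa-couplings using the block-diagonal part of $\tilde\Gamma$ and observe that the block-diagonal part already reproduces $Y_\Gamma$ when evaluated on $a \in V_a$ because the sesquilinear form $a^\dagger \tilde\Gamma(\phi) b$ with $a$ ranging over all of $V_a = V_a^{(1)}\oplus V_a^{(2)}$ sees only the block-diagonal part after the decomposition — here one should just verify directly that $a^\dagger\tilde\Gamma(\phi)b = \sum_k (a^{(k)})^\dagger \tilde\Gamma(\phi) b$ with $a^{(k)}$ the component of $a$, and the cross pieces assemble into separate $G$-invariant Yukawa-couplings on the respective subspaces. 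The symmetric statement handles $D_b$ reducible.

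Finally, once we know that reducibility of any one of the three representations lets us split $Y_\Gamma$ into a finite sum of $G$-invariant Yukawa-couplings on proper subproducts of $V_a\times V_b\times V_\phi$, we iterate: the dimensions $n_a, n_b, n_\phi$ strictly decrease (or at least one of them does) at each step, so the process terminates after finitely many steps with all three representations irreducible on every summand. I expect the main obstacle to be the bookkeeping for the $D_a$ and $D_b$ cases — making precise that the off-diagonal sesquilinear blocks really do organize into genuine Yukawa-couplings on the subspaces rather than leaving inadmissible cross terms — whereas the $D_\phi$ case and the termination argument are routine. One clean way to dispatch the obstacle is to work with an inner product for which all the $D$'s are unitary (possible by the standard averaging argument for finite groups, cf. the appendix), so that the invariant subspaces admit $G$-equivariant orthogonal projections and $Y_\Gamma$ visibly decomposes along the orthogonal direct sum.
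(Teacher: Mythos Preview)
Your overall strategy matches the paper's: split along one reducible representation at a time and iterate. The $D_\phi$ case and the termination argument are fine. The confusion is entirely in your treatment of the $D_a$ (and by symmetry $D_b$) case, where you manufacture phantom cross terms.

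There are no ``four terms indexed by pairs $(k,l)$''. With $a = a^{(1)} + a^{(2)}$ (and $b,\phi$ untouched), observe that
\[
Y_\Gamma(a,b,\phi) = a^\dagger(\Gamma_j\phi_j)b + b^\dagger(\Gamma_j\phi_j)^\dagger a
\]
is $\mathbb{R}$-linear in $a$: the first summand is conjugate-linear in $a$ and its Hermitian conjugate is linear in $a$, and the H.c.\ of $(a^{(1)})^\dagger(\Gamma_j\phi_j)b$ is $b^\dagger(\Gamma_j\phi_j)^\dagger a^{(1)}$, not anything involving $a^{(2)}$. Hence
\[
Y_\Gamma(a^{(1)}+a^{(2)},b,\phi) = Y_\Gamma(a^{(1)},b,\phi) + Y_\Gamma(a^{(2)},b,\phi)
\]
immediately, with each summand a genuine Yukawa-coupling on $V_a^{(k)}\times V_b\times V_\phi$ whose coupling matrices are the $k$-th row-blocks of the $\Gamma_j$. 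Each piece is $G$-invariant because $D_a(g)$ preserves $V_a^{(k)}$. No off-diagonal blocks of $\Gamma_j$ need to vanish, and no averaging, projections, or unitarity are required. The paper's proof is exactly this one-line linearity observation, carried out in block-matrix notation. Your worry about ``sesquilinear, not linear'' would be relevant for a form of type $a^\dagger M a$ with the same vector on both sides, but here $a$ and $b$ are independent.
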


\begin{proof}
	W.l.o.g. let $D_{a}$ be reducible s.t. $D_{a}=D_{a}^{1}\oplus D_{a}^{2}$ on $V_{a}=V_{a}^{1}\oplus V_{a}^{2}$. (Let $D_{a}^{1}, D_{a}^{2}$ be irreducible.)
	\smallskip
	\\
	$\Rightarrow$ There exists a basis of $V_{a}$ s.t.
		\begin{displaymath}
			[D_{a}]=
				\left(
				\begin{matrix}
				[D_{a}^{1}]  & \textbf{0} \\
				 \textbf{0} & [D_{a}^{2}]
			        \end{matrix}
				\right),
		\end{displaymath}
where \textbf{0} are appropriate null matrices.
\\
In this basis we have
	\begin{displaymath}
		\begin{split}
			Y_{\Gamma}(D_{a}a,D_{b}b,D_{\phi}\phi) & = a^{\dagger}[D_{a}]^{\dagger}(\Gamma_{j}[D_{\phi}]_{jk}\phi_{k})[D_{b}]b + \mathrm{H.c.} = \\
			& = (a_{1}^{\dagger}[D_{a}^{1}]^{\dagger}, a_{2}^{\dagger}[D_{a}^{2}]^{\dagger})(\Gamma_{j}[D_{\phi}]_{jk}\phi_{k})[D_{b}]b + \mathrm{H.c.} = \\
			& = \lbrace(a_{1}^{\dagger}[D_{a}^{1}]^{\dagger}, 0)+(0,a_{2}^{\dagger}[D_{a}^{2}]^{\dagger})\rbrace(\Gamma_{j}[D_{\phi}]_{jk}\phi_{k})[D_{b}]b + \mathrm{H.c.} = \\
			& = Y_{\Gamma}((D_{a}^{1}a_{1},0),D_{b}b,D_{\phi}\phi)+Y_{\Gamma}((0, D_{a}^{2}a_{2}),D_{b}b,D_{\phi}\phi)=\\
			& = Y_{\Gamma}(a,b,\phi) = Y^{1}_{\Gamma}(a_{1},b,\phi)+Y^{2}_{\Gamma}(a_{2},b,\phi),
		\end{split}
	\end{displaymath}
where $Y^{1}_{\Gamma}(a_{1},b,\phi)$ and $Y^{2}_{\Gamma}(a_{2},b,\phi)$ are independent $G$-invariant Yukawa-cou\-plings acting on the subsets $V_{a}^{1}\times V_{b}\times V_{\phi}$ and $V_{a}^{2}\times V_{b}\times V_{\phi}$ of $V_{a}\times V_{b}\times V_{\phi}$.
\bigskip
\\
The proof is easily extendable to all other cases of reducible representations occurring in the transformation law of a general $Y_{\Gamma}$.
\end{proof}
\hspace{0mm}\\
Having obtained proposition \ref{PY3} we will only consider irreducible representations $D_{a}, D_{b}, D_{\phi}$ from now on.

\section{Properties of $G$-invariant Yukawa-cou\-plings}\label{Yproperties}
The aim of the following considerations is to solve the following purely mathematical problem:

\begin{quote}
	Given $G, V_{a}, V_{b}, D_{a}, D_{b}$, where $D_{a}, D_{b}$ are irreducible, we want to find $\Gamma$, $V_{\phi}$ and $D_{\phi}$ (irreducible) such that $Y_{\Gamma}$ is $G$-invariant.
\end{quote}

\begin{prop}\label{PY4}
	Suppose $G, V_{a}, V_{b}, V_{\phi}, D_{a}, D_{b}, D_{\phi}$ are given ($D_{a}, D_{b}, D_{\phi}$ irreducible, $V_{a}, V_{b}, V_{\phi}$ vectorspaces over $\mathbb{C}$), then $Y_{\Gamma}$ is $G$-invariant if and only if
		\begin{equation}\label{geninv}
			[D_{a}(f)]^{\dagger}\Gamma_{j}[D_{\phi}(f)]_{jk}[D_{b}(f)]=\Gamma_{k} \quad \forall f \in G.
		\end{equation}
\end{prop}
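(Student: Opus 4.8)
The plan is to unwind the definition of $G$-invariance and reduce it to the vanishing of a $\mathbb{C}$-bilinear form. Fix $f\in G$ and abbreviate $\tilde\Gamma_k(f):=[D_a(f)]^{\dagger}\Gamma_j[D_\phi(f)]_{jk}[D_b(f)]$ (sum over $j$). Since every component $\phi_k$ of the vector $\phi$ and every matrix entry $[D_\phi(f)]_{jk}$ is a complex \emph{number}, it commutes with all matrix products, so substituting $a\mapsto D_a(f)a$, $b\mapsto D_b(f)b$, $\phi\mapsto D_\phi(f)\phi$ into Definition \ref{DY2} gives
\begin{displaymath}
Y_{\Gamma}(D_a(f)a,D_b(f)b,D_\phi(f)\phi)=a^{\dagger}\tilde\Gamma_k(f)\,b\,\phi_k+b^{\dagger}\tilde\Gamma_k(f)^{\dagger}a\,\phi_k^{*},
\end{displaymath}
whereas $Y_{\Gamma}(a,b,\phi)=a^{\dagger}\Gamma_k b\,\phi_k+b^{\dagger}\Gamma_k^{\dagger}a\,\phi_k^{*}$. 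Writing $M_k:=\tilde\Gamma_k(f)-\Gamma_k$ and subtracting, $G$-invariance for this $f$ is equivalent to
\begin{displaymath}
\phi_k\,a^{\dagger}M_k b+\overline{\phi_k\,a^{\dagger}M_k b}=2\,\mathrm{Re}\big(\phi_k\,a^{\dagger}M_k b\big)=0\qquad\forall\,a\in V_a,\ b\in V_b,\ \phi\in V_\phi.
\end{displaymath}
The ``if'' direction is then immediate: if (\ref{geninv}) holds, every $M_k$ is zero and the displayed identity is trivial.

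For the ``only if'' direction the one genuinely nontrivial move is to upgrade ``the real part vanishes'' to ``the complex number vanishes'': because the condition holds for \emph{all} $\phi\in V_\phi$, it also holds with $\phi$ replaced by $i\phi$, which turns $\phi_k a^{\dagger}M_k b$ into $i\,\phi_k a^{\dagger}M_k b$, whose real part is $-\mathrm{Im}(\phi_k a^{\dagger}M_k b)$. Hence $\phi_k a^{\dagger}M_k b=0$ for all $a,b,\phi$. Letting $\phi$ range over the standard basis $e_1,\dots,e_{n_\phi}$ of $V_\phi$ isolates each summand, so $a^{\dagger}M_k b=0$ for every $a\in V_a$, $b\in V_b$ and every $k$; taking $a=e_p$, $b=e_q$ reads off the $(p,q)$ entry of $M_k$, so $M_k=0$. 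That is exactly $\tilde\Gamma_k(f)=\Gamma_k$, i.e.\ (\ref{geninv}) for this $f$, and since $f\in G$ was arbitrary we are done.

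I do not expect any serious obstacle here: the argument is purely formal bookkeeping, and in particular it uses neither irreducibility of $D_a,D_b,D_\phi$ nor unitarity of $D_\phi$. The only real content is the $\phi\mapsto i\phi$ trick, which is what makes the \emph{real}-valued coupling $Y_\Gamma$ impose as strong a constraint on $\Gamma$ as a complex bilinear form would; I would flag this explicitly in the write-up so the reader sees why the ``H.c.'' term does not weaken the conclusion.
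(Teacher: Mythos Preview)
Your proof is correct and follows essentially the same route as the paper: both compute the difference $\tilde\Gamma_k-\Gamma_k$, reduce $G$-invariance to the vanishing of $\alpha+\alpha^\ast$ for all $a,b,\phi$, and then use the complex structure of the vector spaces to conclude $\tilde\Gamma_k=\Gamma_k$. Your write-up is in fact more explicit than the paper's, which simply asserts that ``on complex vectorspaces $V_a,V_b,V_\phi$ the only solution \ldots\ is $\tilde\Gamma_j=0$'' without spelling out the $\phi\mapsto i\phi$ trick you highlight.
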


\begin{proof}
	\begin{displaymath}
		\begin{split}
			Y_{\Gamma}(D_{a}a, D_{b}b, D_{\phi}\phi) & = a^{\dagger}[D_{a}]^{\dagger}(\Gamma_{j}[D_{\phi}]_{jk}\phi_{k})[D_{b}]b + \mathrm{H.c.} =\\
			& = y_{\Gamma}(D_{a}a, D_{b}b, D_{\phi}\phi) + y^{\ast}_{\Gamma}(D_{a}a, D_{b}b, D_{\phi}\phi),
		\end{split}
	\end{displaymath}
where we have defined $y_{\Gamma}(a,b,\phi):=a^{\dagger}(\Gamma_{j}\phi_{j})b$.
\medskip
\\
$y_{\Gamma}$ $G$-invariant $\Rightarrow$ $y_{\Gamma}^{\ast}$ $G$-invariant $\Rightarrow$ $Y_{\Gamma}$ $G$-invariant.
\medskip
\\
$y_{\Gamma}$ is $G$-invariant if and only if
	\begin{displaymath}
		[D_{a}(f)]^{\dagger}\Gamma_{j}[D_{\phi}(f)]_{jk}[D_{b}(f)]=\Gamma_{k} \quad \forall f \in G.
	\end{displaymath}
Is it possible for $Y_{\Gamma}$ to be $G$-invariant without fulfilling (\ref{geninv})? Well, on complex vectorspaces the answer is no, for the following reason:
\medskip
\\
Define 
		\begin{displaymath}
			y_{\Gamma}':=a^{\dagger}([D_{a}]^{\dagger}\Gamma_{k}[D_{b}][D_{\phi}]_{kj})\phi_{j}b.
		\end{displaymath}
and
		\begin{displaymath}
		\alpha:=y_{\Gamma}-y_{\Gamma}'.
		\end{displaymath}
$\Rightarrow \Gamma_{j}$ is a solution of equation (\ref{geninv})  $\Leftrightarrow \alpha(a,b,\phi)=0 \enspace\forall a,b,\phi$. $G$-invariance imposes
	\begin{displaymath}
		\begin{split}
			0 &= Y_{\Gamma}-Y_{\Gamma}'=\\ & = (y_{\Gamma}+y_{\Gamma}^{\ast})-(y_{\Gamma}'+y_{\Gamma}'\hspace{0mm}^{\ast})=\\
			& = (y_{\Gamma}-y_{\Gamma}')+(y_{\Gamma}-y_{\Gamma}')^{\ast}=\alpha+\alpha^{\ast}.
		\end{split}
	\end{displaymath}
	\begin{displaymath}
		\alpha=y_{\Gamma}-y_{\Gamma}'=a^{\dagger}(\underbrace{\Gamma_{j}-[D_{a}]^{\dagger}\Gamma_{k}[D_{b}][D_{\phi}]_{kj}}_{=:\tilde{\Gamma}_{j}})\phi_{j}b.
	\end{displaymath}
It follows:
	\begin{equation}\label{PY4cequ}
		\alpha+\alpha^{\ast}=a^{\dagger}(\tilde{\Gamma}_{j}\phi_{j})b+b^{\dagger}(\tilde{\Gamma}_{j}\phi_{j})^{\dagger}a=0 \quad\forall a,b,\phi.
	\end{equation}
On complex vectorspaces $V_{a}, V_{b}, V_{\phi}$ the only solution of (\ref{PY4cequ}) is $\tilde{\Gamma}_{j}=0\Rightarrow \Gamma_{j}$ fulfil (\ref{geninv}).
\end{proof}
\hspace{0mm}\\
Because Yukawa-couplings on real vectorspaces $V_{a}, V_{b}, V_{\phi}$ are not appropriate for particle physics, we will not consider $G$-invariant Yukawa-couplings with $\Gamma_{j}$ not fulfilling (\ref{geninv}).

\begin{define}\label{DY5}
	Let $G$ be a finite group. A set of \textit{generators of} $G$, i.s. $\mathrm{gen}(G)$, is a subset of $G$ s.t. every $a\in G$ can be written as a product of elements in $\mathrm{gen}(G)$. $\mathrm{gen}(G)$ always exists, but it is of course not unique. Clearly it is desirable to find a set $\mathrm{gen}(G)$ that is as small as possible.
\end{define}

\begin{example}
	$G=Z_{n}=\lbrace e^{\frac{2\pi ik}{n}}\rbrace_{k=1,...,n}$, $ \mathrm{gen}(G)=\lbrace e^{\frac{2\pi i}{n}}\rbrace$ $\Rightarrow$ $Z_{n}$ has only one generator.
\end{example}

\begin{prop}\label{PY6}
	\begin{displaymath}
		[D_{a}(f)]^{\dagger}\Gamma_{j}[D_{\phi}(f)]_{jk}[D_{b}(f)]=\Gamma_{k} \quad \forall f \in \mathrm{gen}(G)
	\end{displaymath}
	\begin{displaymath} \Rightarrow [D_{a}(f)]^{\dagger}\Gamma_{j}[D_{\phi}(f)]_{jk}[D_{b}(f)]=\Gamma_{k} \quad \forall f \in G. 
	\end{displaymath}
\end{prop}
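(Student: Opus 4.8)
The plan is to show that the invariance condition \eqref{geninv} is multiplicative in $f$: if it holds for two group elements $f$ and $g$, then it holds for their product $fg$. Since every element of $G$ is by definition a product of generators, induction on the length of such a product then yields \eqref{geninv} for all $f \in G$. One also needs the base case, i.e.\ that the condition holds for the identity; this is immediate since $[D_a(e)]$, $[D_b(e)]$ are identity matrices and $[D_\phi(e)]_{jk} = \delta_{jk}$, so the equation reads $\Gamma_k = \Gamma_k$. (If one prefers, closure under inverses can also be checked so that one may use arbitrary words in the generators and their inverses, but since $G$ is finite, $f^{-1}$ is already a positive power of $f$, so this is not strictly necessary.)

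The key computational step is the following. Suppose
	\begin{displaymath}
	[D_a(f)]^{\dagger}\Gamma_j [D_\phi(f)]_{jk}[D_b(f)] = \Gamma_k
	\end{displaymath}
and the analogous identity with $f$ replaced by $g$. Starting from $[D_a(fg)]^{\dagger}\Gamma_j [D_\phi(fg)]_{jk}[D_b(fg)]$ and using that $D_a, D_b, D_\phi$ are representations (so $D_a(fg) = D_a(f)D_a(g)$, etc., and correspondingly $[D_\phi(fg)]_{jk} = [D_\phi(f)]_{jl}[D_\phi(g)]_{lk}$ with summation over $l$), I would regroup the product as
	\begin{displaymath}
	[D_a(g)]^{\dagger}\Big([D_a(f)]^{\dagger}\Gamma_j [D_\phi(f)]_{jl}[D_b(f)]\Big)[D_\phi(g)]_{lk}[D_b(g)],
	\end{displaymath}
apply the hypothesis for $f$ to the bracketed factor to replace it by $\Gamma_l$, and then apply the hypothesis for $g$ to the resulting expression $[D_a(g)]^{\dagger}\Gamma_l [D_\phi(g)]_{lk}[D_b(g)]$ to obtain $\Gamma_k$. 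Hence the condition holds for $fg$.

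Combining the base case with this closure-under-products property, a straightforward induction on word length shows that \eqref{geninv} holds for every element of $G$, which is the desired conclusion. I do not anticipate a genuine obstacle here; the only point requiring a little care is bookkeeping of the index $l$ contracted between the two copies of $D_\phi$, i.e.\ making sure the "matrix in the $\phi$-space" structure $[D_\phi(fg)] = [D_\phi(f)][D_\phi(g)]$ is used consistently alongside the ordinary matrix multiplication in the $a$- and $b$-spaces. Everything else is a direct unwinding of the homomorphism property of the three representations.
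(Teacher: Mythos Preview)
Your proposal is correct and is essentially identical to the paper's proof: both reduce to showing closure under products via the homomorphism property of the representations, regrouping to apply the hypothesis first for one factor and then for the other. The paper omits the remarks about the identity and inverses, but otherwise the arguments coincide line for line.
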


\begin{proof}
	To show: If the equation holds for $r,s\in G$, then it holds for $rs\in G$.
	\begin{gather*}
		[D_{a}(r)]^{\dagger}\Gamma_{j}[D_{\phi}(r)]_{jk}[D_{b}(r)]=\Gamma_{k},\\
		[D_{a}(s)]^{\dagger}\Gamma_{j}[D_{\phi}(s)]_{jk}[D_{b}(s)]=\Gamma_{k}
	\end{gather*}
$\Rightarrow$
	\begin{displaymath}
		\begin{split}
			& \hspace{5.5mm} [D_{a}(rs)]^{\dagger}\Gamma_{j}[D_{\phi}(rs)]_{jk}[D_{b}(rs)]=\\
			& =[D_{a}(r)D_{a}(s)]^{\dagger}\Gamma_{j}[D_{\phi}(r)D_{\phi}(s)]_{jk}[D_{b}(r)D_{b}(s)]=\\
			& = [D_{a}(s)]^{\dagger}[D_{a}(r)]^{\dagger}\Gamma_{j}[D_{\phi}(r)]_{jl}[D_{\phi}(s)]_{lk}[D_{b}(r)][D_{b}(s)]=\\
			& = [D_{a}(s)]^{\dagger}\underbrace{[D_{a}(r)]^{\dagger}\Gamma_{j}[D_{\phi}(r)]_{jl}[D_{b}(r)]}_{\Gamma_{l}}[D_{\phi}(s)]_{lk}[D_{b}(s)]=\\
			& = [D_{a}(s)]^{\dagger}\Gamma_{l}[D_{\phi}(s)]_{lk}[D_{b}(s)]=\Gamma_{k}.
		\end{split}
	\end{displaymath}
\end{proof}
\hspace{0mm}\\
For we will use them very often, we will name
	\begin{equation}\label{inveq}
		[D_{a}(f)]^{\dagger}\Gamma_{j}[D_{\phi}(f)]_{jk}[D_{b}(f)]=\Gamma_{k} \quad f\in \mathrm{gen}(G), k=1,...,n_{\phi}
	\end{equation} 
the \textit{invariance equations} for the Yukawa-coupling $Y_{\Gamma}$.

\section{Solving the invariance equations}\label{solve}
We will now bring the invariance equations (\ref{inveq}) into a form such that they can be solved easily.
\medskip
\\
At first we notice that the invariance equations are linear in the unknown matrices $\Gamma_{j}$. This is the key property that makes solving the equations surprisingly simple.

\begin{prop}\label{PY7}
	The invariance equations (\ref{inveq}) can be interpreted as an eigenvalue problem of an $n_{a}n_{b}n_{\phi}\times n_{a}n_{b}n_{\phi}$-matrix $N$. The numbers $\Gamma_{jkl}$ can be interpreted as an $n_{a}n_{b}n_{\phi}$-dimensional columnvector $\Gamma$, which is an eigenvector of $N$ to the eigenvalue $1$.
		\begin{displaymath}
			N\Gamma=\Gamma.
		\end{displaymath}
\end{prop}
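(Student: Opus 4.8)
The plan is to exploit the linearity of the invariance equations (\ref{inveq}) in the matrices $\Gamma_j$ and to read them off in components. Fix a generator $f\in\mathrm{gen}(G)$ and let $\mu=1,\dots,n_a$ and $\nu=1,\dots,n_b$. Expanding the $(\mu,\nu)$-entry of the left-hand side of (\ref{inveq}) (with the summation convention on $j$, and dummy indices $\rho$ over $1,\dots,n_a$ and $\sigma$ over $1,\dots,n_b$) gives
\begin{displaymath}
\bigl([D_a(f)]^{\dagger}\Gamma_j[D_\phi(f)]_{jk}[D_b(f)]\bigr)_{\mu\nu}
=\sum_{j,\rho,\sigma}\overline{[D_a(f)]_{\rho\mu}}\,[D_\phi(f)]_{jk}\,[D_b(f)]_{\sigma\nu}\,(\Gamma_j)_{\rho\sigma}.
\end{displaymath}
Hence, if the numbers $\Gamma_{jkl}$ — that is, the entries $(\Gamma_j)_{\rho\sigma}$ — are collected into one column vector $\Gamma$ labelled by the multi-index $(j,\rho,\sigma)$, which runs over exactly $n_\phi n_an_b=n_an_bn_\phi$ values, then (\ref{inveq}) for this single $f$ becomes $N(f)\,\Gamma=\Gamma$, where $N(f)$ is the $n_an_bn_\phi\times n_an_bn_\phi$ matrix with entries
\begin{displaymath}
N(f)_{(k,\mu,\nu),\,(j,\rho,\sigma)}:=\overline{[D_a(f)]_{\rho\mu}}\,[D_\phi(f)]_{jk}\,[D_b(f)]_{\sigma\nu}.
\end{displaymath}
(One recognises $N(f)=[D_\phi(f)]^{T}\otimes[D_a(f)]^{\dagger}\otimes[D_b(f)]^{T}$, which is reassuring but not needed below.)

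It then remains to fold the finitely many equations $N(f)\Gamma=\Gamma$, $f\in\mathrm{gen}(G)$, into one. I would take
\begin{displaymath}
N:=\frac{1}{|\mathrm{gen}(G)|}\sum_{f\in\mathrm{gen}(G)}N(f).
\end{displaymath}
If $N(f)\Gamma=\Gamma$ for every generator $f$, then obviously $N\Gamma=\Gamma$. For the converse I would use that, $G$ being finite, the representations $D_a,D_b,D_\phi$ may be assumed unitary after a change of basis — one under which the solutions of (\ref{inveq}) are carried bijectively to those of the transformed equations and the eigenvalue-$1$ property is untouched. Each $N(f)$ is then unitary, so $\|N(f)\Gamma\|=\|\Gamma\|$, and the chain $\|\Gamma\|=\|N\Gamma\|\le\frac{1}{|\mathrm{gen}(G)|}\sum_f\|N(f)\Gamma\|=\|\Gamma\|$ must be an equality throughout; taking the inner product of $N\Gamma=\Gamma$ with $\Gamma$ and invoking Cauchy--Schwarz forces $N(f)\Gamma=\Gamma$ for each $f$. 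Thus $N\Gamma=\Gamma$ is equivalent to the invariance equations for all generators, and, by Proposition \ref{PY6}, to (\ref{inveq}) holding throughout $G$.

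A variant that avoids the unitarisation is to average over the whole group, $\widetilde N:=\frac{1}{|G|}\sum_{g\in G}N(g)$. The composition law $N(rs)=N(s)N(r)$, read off from the calculation in the proof of Proposition \ref{PY6}, gives at once $\widetilde N^{2}=\widetilde N$ and $N(h)\widetilde N=\widetilde N$ for all $h\in G$, so $\widetilde N$ is a projector onto the common fixed space of the $N(g)$; hence $\widetilde N\Gamma=\Gamma$ is again equivalent to (\ref{inveq}). Its only disadvantage is that it is assembled from $|G|$ summands rather than from a small generating set.

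The single genuinely delicate point is this last step: ensuring the averaged matrix picks up no eigenvector to the eigenvalue $1$ beyond the common fixed vectors of the individual $N(f)$. Everything else — expanding (\ref{inveq}) in components, tracking the complex conjugates produced by the $\dagger$, and confirming that the multi-index runs over the correct number of entries — is routine bookkeeping that only has to be carried out with care.
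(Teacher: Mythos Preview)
Your componentwise expansion of (\ref{inveq}) for a single generator $f$, together with the identification of the matrix $N(f)$ acting on the column vector $\Gamma$ indexed by $(j,\rho,\sigma)$, is exactly what the paper does: it writes the equations as $\Phi_{ir}A_{js}B_{kt}\Gamma_{ijk}=\Gamma_{rst}$ with $A=[D_a]^\ast$, $B=[D_b]$, $\Phi=[D_\phi]$, observes linearity, and reads off $N$ explicitly. That is the entire content of the paper's proof.

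Where you diverge is in the second half. The paper does \emph{not} collapse the equations for the different generators into a single matrix equation; it is content with one matrix $N(f)$ per generator, and the very next corollary (\ref{CY8}) makes this explicit by instructing one to compute the eigenspace of each $N(f)$ separately and then intersect. Your two averaging constructions --- over $\mathrm{gen}(G)$ using unitarity and Cauchy--Schwarz, or over all of $G$ using the projector identity $\widetilde N^{2}=\widetilde N$ --- are both correct and give a genuinely single eigenvalue problem equivalent to the full system. This is a nice strengthening, and the group-average variant is particularly clean since it sidesteps the basis change to unitary representations. The trade-off is exactly what you note: the paper's per-generator approach keeps the matrices simple and few in number, which is what matters for the actual computations later in the work, whereas your averaged $N$ is conceptually tidier but built from more pieces. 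For the purposes of the proposition as stated and used, the paper's lighter argument suffices.
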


\begin{proof}
	This becomes clear, if we write the invariance equations in components:
		\begin{displaymath}
			[D_{\phi}]_{ir}[D_{a}]_{js}^{\ast}[D_{b}]_{kt}\Gamma_{ijk}=\Gamma_{rst}.
		\end{displaymath}
The lefthand side of the above equation is a linear function of all $\Gamma_{ijk}$, thus there exists a matrix $N$ s.t. the equation can be written as
		\begin{displaymath}
			N\Gamma=\Gamma
		\end{displaymath}
with
		\begin{displaymath}
			\Gamma=\left(
				\begin{matrix}
					\Gamma_{111}  \\
					\Gamma_{112}  \\
					\vdots  \\
					\Gamma_{11n_{b}}\\
					\Gamma_{121}\\
					\Gamma_{122}\\
					\vdots\\
 					\Gamma_{n_{\phi}n_{a}n_{b}}
			       \end{matrix}
				\right).
		\end{displaymath}
We will also give the explicit form of $N$. For the sake of clarity we will use the abbreviations
	\begin{displaymath}
		\begin{split}
			& [D_{a}]^{\ast}=:A,\\
			& [D_{b}]=:B,\\
			& [D_{\phi}]=:\Phi.
		\end{split}
	\end{displaymath}
	\begin{equation}\label{eqY7}
		\Rightarrow \Phi_{ir}A_{js}B_{kt}\Gamma_{ijk}=\Gamma_{rst}.
	\end{equation}
We read off $N$:
	\begin{displaymath}
		N=
		\left(
		\begin{matrix}
		 \Phi_{11}A_{11}B_{11} & \Phi_{11}A_{11}B_{21} & ... & \Phi_{n_{\phi}1}A_{n_{a}1}B_{n_{b}1} \\
		 \Phi_{11}A_{11}B_{12} & \Phi_{11}A_{11}B_{22} & ... & \Phi_{n_{\phi}1}A_{n_{a}1}B_{n_{b}2} \\
		 \vdots &  &  & \vdots \\
		 \Phi_{1n_{\phi}}A_{1n_{a}}B_{1n_{b}} & \Phi_{1n_{\phi}}A_{1n_{a}}B_{2n_{b}} & ... & \Phi_{n_{\phi}n_{\phi}}A_{n_{a}n_{a}}B_{n_{b}n_{b}}
		\end{matrix}
		\right).
	\end{displaymath}
\end{proof}

\begin{cor}\label{CY8}
	\textbf{Construction of solutions of the invariance equations}
	\begin{enumerate}
		\item Construct the matrix $N(f)$ for all generators $f\in \mathrm{gen}(G)$.
		\item Determine the eigenspace of $N(f)$ to the eigenvalue $1$.
		\item Determine the intersection of the eigenspaces to find the common eigenspace to the eigenvalue $1$ of all $N(f)$.
		\item Determine a basis of the common eigenspace. Each basisvector is a solution of the invariance equations.
	\end{enumerate}
\end{cor}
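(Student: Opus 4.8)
The plan is to assemble the corollary directly from Propositions \ref{PY6} and \ref{PY7}; nothing beyond finite-dimensional linear algebra is needed once those are in place. First I would observe that the full system of invariance equations --- one equation $[D_a(f)]^\dagger\Gamma_j[D_\phi(f)]_{jk}[D_b(f)]=\Gamma_k$ for every $f\in G$ and every $k$ --- has exactly the same solution set as the reduced system (\ref{inveq}) in which $f$ ranges only over $\mathrm{gen}(G)$. One inclusion is trivial, since $\mathrm{gen}(G)\subseteq G$; the reverse inclusion is precisely the content of Proposition \ref{PY6}. This justifies step 1 of the algorithm: it suffices to build $N(f)$ for the generators alone.

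Next I would invoke Proposition \ref{PY7} for each fixed generator $f$: after arranging the numbers $\Gamma_{jkl}$ into the column vector $\Gamma\in\mathbb{C}^{n_an_bn_\phi}$, the invariance equation for $f$ is equivalent to the single linear relation $N(f)\Gamma=\Gamma$, that is, $\Gamma\in\ker\big(N(f)-\mathbbm{1}_{n_an_bn_\phi}\big)$, the eigenspace of $N(f)$ belonging to the eigenvalue $1$; denote it $E_1(f)$. Combining this with the previous paragraph, the set $S$ of all solutions of the invariance equations is $S=\bigcap_{f\in\mathrm{gen}(G)}E_1(f)$, which is exactly step 3 --- the common eigenspace to the eigenvalue $1$ of all the $N(f)$ --- while the determination of the individual $E_1(f)$ is step 2.

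Finally, $S$ is an intersection of linear subspaces of a finite-dimensional vector space, hence itself a linear subspace; in particular it always contains the trivial solution $\Gamma=0$ and, when nontrivial, admits a finite basis. Each basis vector lies in $S$ and is therefore a solution of the invariance equations, and conversely every solution, being an element of $S$, is a linear combination of the basis vectors --- so the basis exhausts the solution space up to linear combinations. This is step 4, and the corollary follows.

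The argument is essentially bookkeeping, so I do not expect a genuine obstacle. The only points requiring a moment of care are: (i) that the equivalence in Proposition \ref{PY7} is a true bi-implication, so that $E_1(f)$ is \emph{exactly} the $f$-solution set and not merely a superset of it; and (ii) that the equations are linear in $\Gamma$ (already noted before Proposition \ref{PY7}), which is what makes $S$ a subspace and makes "a basis of the common eigenspace" the correct description of \emph{all} solutions. I would also add the remark that whether $S\neq\{0\}$ depends on the choice of $G$, $D_a$, $D_b$, $D_\phi$, and that the procedure is a terminating finite computation precisely because $V_a$, $V_b$, $V_\phi$ are finite-dimensional.
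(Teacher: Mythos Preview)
Your proposal is correct and follows essentially the same approach as the paper, which simply states that the corollary follows directly from Proposition~\ref{PY7}. You have additionally made explicit the role of Proposition~\ref{PY6} in justifying the restriction to generators, which the paper leaves implicit (since the invariance equations~(\ref{inveq}) were already defined over $\mathrm{gen}(G)$ following that proposition).
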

\begin{proof}
	This follows directly from proposition \ref{PY7}.
\end{proof}

\section{Mathematical interpretation of $\Gamma_{j}$}
In section \ref{solve} we gave a procedure that enables calculating the unknown matrices $\Gamma_{j}$ if one knows $D_{\phi}$ and $V_{\phi}$. As stated earlier we want to find $\Gamma_{j}$, $V_{\phi}$ and $D_{\phi}$ (irreducible) for given $G, V_{a}, V_{b}, D_{a}, D_{b}$.
\\
Investigating the solution to this more general problem one can find the mathematical meaning of $\Gamma_{j}$ - it are matrices of Clebsch-Gordan coefficients.

\begin{prop}\label{PY9}
	Let $D_{a}, D_{b}$ and $D^{\lambda}$ be irreducible representations of a finite group $G$, then $C^{\lambda}_{ijk}$ (not all equal zero) are a solution of the invariance equations
	\begin{equation}\label{inveqPY9a}
	[D^{\lambda}]_{ir}[(D_{a}^{-1})^{T}]_{js}[(D_{b}^{-1})^{T}]_{kt}C^{\lambda}_{ijk}=C^{\lambda}_{rst}
	\end{equation}
	if and only if they are Clebsch-Gordan coefficients for the Clebsch-Gordan decomposition\footnote{See definitions \ref{DA89} and \ref{DA90}.}
	\begin{displaymath}
		D_{a}\otimes D_{b}=\bigoplus_{\lambda}D^{\lambda},\quad V_{a}\otimes V_{b}=\bigoplus_{\lambda}V^{\lambda}
	\end{displaymath}
	(see section \ref{ACGC}). 
\end{prop}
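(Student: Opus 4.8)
The plan is to compare the invariance equations \eqref{inveqPY9a} for $D^\lambda$ with the defining transformation property of Clebsch--Gordan coefficients, and to show that the two conditions are word-for-word the same once one rewrites things in a convenient basis. First I would recall that for the decomposition $D_a\otimes D_b=\bigoplus_\mu D^\mu$ the Clebsch--Gordan coefficients $C^\mu_{i'jk}$ are, by definition (definitions \ref{DA89} and \ref{DA90}), the entries of the unitary matrix implementing the isomorphism $V_a\otimes V_b\cong\bigoplus_\mu V^\mu$; concretely, the vectors $w^\mu_{i'}:=\sum_{jk}C^\mu_{i'jk}\,(e_j\otimes e_k)$ span the subspace carrying $D^\mu$ and satisfy $(D_a\otimes D_b)(f)\,w^\mu_{i'}=\sum_{i}[D^\mu(f)]_{ii'}\,w^\mu_i$ for all $f\in G$. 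Writing this out in components and using $[(D_a\otimes D_b)(f)]_{(jk),(j'k')}=[D_a(f)]_{jj'}[D_b(f)]_{kk'}$ gives
\begin{equation}
[D_a(f)]_{jj'}[D_b(f)]_{kk'}\,C^\mu_{i'j'k'}=[D^\mu(f)]_{ii'}\,C^\mu_{ijk}.
\end{equation}

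The second step is purely algebraic bookkeeping: I would move the representation matrices of $D_a$ and $D_b$ to the other side. Since all representations can be taken unitary (every representation of a finite group is equivalent to a unitary one), $[D_a(f)]^{-1}=[D_a(f)]^\dagger$, and transposing turns $[D_a(f)]_{jj'}$ on the left into $[(D_a(f)^{-1})^T]_{j's}$ acting from the right; doing the same for $D_b$ and renaming summation indices converts the displayed relation into exactly
\begin{equation}
[D^\lambda(f)]_{ir}[(D_a(f)^{-1})^T]_{js}[(D_b(f)^{-1})^T]_{kt}\,C^\lambda_{ijk}=C^\lambda_{rst},
\end{equation}
which is \eqref{inveqPY9a}. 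This establishes the ``only if'' direction: Clebsch--Gordan coefficients solve the invariance equations.

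For the ``if'' direction I would run the computation in reverse. Given any nonzero solution $C^\lambda_{ijk}$ of \eqref{inveqPY9a}, define $w^\lambda_{i'}:=\sum_{jk}C^\lambda_{i'jk}(e_j\otimes e_k)\in V_a\otimes V_b$; reversing the index manipulation above shows that the span $W:=\mathrm{span}\{w^\lambda_{i'}\}$ is $G$-invariant and that $D_a\otimes D_b$ restricted to $W$ transforms the $w^\lambda_{i'}$ precisely according to $D^\lambda$. Hence $W$ is a subrepresentation of $D_a\otimes D_b$ isomorphic to a subrepresentation of $D^\lambda$; since $D^\lambda$ is irreducible this forces $W$ to carry a copy of $D^\lambda$ (in particular the $w^\lambda_{i'}$ are linearly independent and $W$ is a full copy of $V^\lambda$), so the $C^\lambda_{ijk}$ are by definition Clebsch--Gordan coefficients for that copy of $D^\lambda$ inside the decomposition.

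The main obstacle is not conceptual but notational: one has to be scrupulous about where complex conjugates, transposes and inverses land when passing between the ``$\Gamma$-form'' $[D^\lambda]_{ir}[(D_a^{-1})^T]_{js}[(D_b^{-1})^T]_{kt}$ of the invariance equations and the ``tensor-product'' form $[D_a]_{jj'}[D_b]_{kk'}C^\lambda_{i'j'k'}=[D^\lambda]_{ii'}C^\lambda_{ijk}$, and to justify the reduction to unitary representations so that inverse and adjoint coincide. A secondary point requiring a line of argument is the appeal to Schur's lemma (irreducibility of $D^\lambda$) in the ``if'' direction to conclude that a nonzero intertwiner-like object $C^\lambda$ actually produces a genuine full copy of $V^\lambda$ rather than something degenerate; this is where the hypothesis that $D^\lambda$ is irreducible is essential.
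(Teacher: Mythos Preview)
Your proposal is correct and follows essentially the same approach as the paper: both directions amount to showing that the invariance equation \eqref{inveqPY9a} and the defining transformation law $(D_a\otimes D_b)u^\lambda_i=D^\lambda u^\lambda_i$ for the vectors $u^\lambda_i=C^\lambda_{ijk}\,e^a_j\otimes e^b_k$ are algebraic rearrangements of one another, obtained by moving the $D_a$, $D_b$ matrices across via their inverses. One minor remark: your appeal to unitarity (so that $[D_a]^{-1}=[D_a]^\dagger$) is unnecessary---the paper simply multiplies both sides by $[D_a^{-1}]_{am}[D_b^{-1}]_{bn}$ directly, which works for any invertible representation and keeps the argument slightly cleaner; conversely, your explicit mention of irreducibility in the ``if'' direction to ensure the $w^\lambda_{i'}$ span a full copy of $V^\lambda$ is a point the paper glosses over.
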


\begin{proof}
We first prove that the Clebsch-Gordan coefficients are solutions of the invariance equations (\ref{inveqPY9a}).
\medskip
\\
Let $\{e^{a}_{i}\}_{i}$ and $\{e^{b}_{j}\}_{j}$ be bases of $V_{a}$ and $V_{b}$, then the Clebsch-Gordan coefficients define a basis transformation of a basis of $V_{a}\otimes V_{b}$ to a basis $\{u^{\lambda}_{k}\}_{k}$ of $V^{\lambda}$:
	\begin{displaymath}
	u^{\lambda}_{j}=C^{\lambda}_{jkl}e^{a}_{k}\otimes e^{b}_{l},
	\end{displaymath}
such that
	\begin{displaymath}
	(D_{a}\otimes D_{b})u^{\lambda}_{j}=D^{\lambda}u^{\lambda}_{j}.
	\end{displaymath}
(no summation over $\lambda$!) It follows
	\begin{displaymath}
	\begin{split}
	(D_{a}\otimes D_{b})u^{\lambda}_{j} & = (D_{a}\otimes D_{b})(C^{\lambda}_{jkl}e^{a}_{k}\otimes e^{b}_{l})=C^{\lambda}_{jkl}[D_{a}]_{mk}[D_{b}]_{nl}e^{a}_{m}\otimes e^{b}_{n}=\\
	& = D^{\lambda}u^{\lambda}_{j}=[D^{\lambda}]_{sj}u^{\lambda}_{s}=[D^{\lambda}]_{sj}C^{\lambda}_{smn}e^{a}_{m}\otimes e^{b}_{n}.
	\end{split}
	\end{displaymath}
	\begin{displaymath}
	\Rightarrow C^{\lambda}_{jkl}[D_{a}]_{mk}[D_{b}]_{nl}=[D^{\lambda}]_{sj}C^{\lambda}_{smn} /\cdot [D_{a}^{-1}]_{am}[D_{b}^{-1}]_{bn}
	\end{displaymath}
	\begin{displaymath}
	\Rightarrow [D^{\lambda}]_{sj}[(D_{a}^{-1})^{T}]_{ma}[(D_{b}^{-1})^{T}]_{nb}C^{\lambda}_{smn}=C^{\lambda}_{jab}.
	\end{displaymath}
The proof of the converse works similar. We start from equation (\ref{inveqPY9a}):
	\begin{displaymath}
	[D^{\lambda}]_{ir}[(D_{a}^{-1})^{T}]_{js}[(D_{b}^{-1})^{T}]_{kt}C^{\lambda}_{ijk}=C^{\lambda}_{rst}
	\end{displaymath}
	\begin{displaymath}
	\Rightarrow [D^{\lambda}]_{ir}C^{\lambda}_{ijk}=[D_{a}]_{js}[D_{b}]_{kt}C^{\lambda}_{rst}/\cdot e^{a}_{j}\otimes e^{b}_{k}
	\end{displaymath}
	\begin{displaymath}
	[D^{\lambda}]_{ir}\underbrace{C^{\lambda}_{ijk}e^{a}_{j}\otimes e^{b}_{k}}_{=:u^{\lambda}_{i}}=\underbrace{[D_{a}]_{js}[D_{b}]_{kt}C^{\lambda}_{rst}e^{a}_{j}\otimes e^{b}_{k}}_{C^{\lambda}_{rst}(D_{a}\otimes D_{b})(e^{a}_{s}\otimes e^{b}_{t})}
	\end{displaymath}
	\begin{displaymath}
	\Rightarrow D^{\lambda}u^{\lambda}_{r}=(D_{a}\otimes D_{b})u^{\lambda}_{r}
	\end{displaymath}
$\Rightarrow$ $u^{\lambda}_{r}\in V^{\lambda}$ $\Rightarrow$ $C^{\lambda}_{rst}$ are Clebsch-Gordan coefficients for
	\begin{displaymath}
	D_{a}\otimes D_{b}=\bigoplus_{\lambda}D^{\lambda},\quad V_{a}\otimes V_{b}=\bigoplus_{\lambda}V^{\lambda}.
	\end{displaymath}
\end{proof}

\begin{cor}\label{CY10}
Suppose $G, V_{a}, V_{b}, D_{a}, D_{b}$ are given ($D_{a}, D_{b}$ irreducible, $V_{a}, V_{b}$ vectorspaces over $\mathbb{C}$). Let
	\begin{displaymath}
		(D_{a}^{-1})^{\dagger}\otimes (D_{b}^{-1})^{T}=\bigoplus_{\lambda}D^{\lambda},\quad V_{a}\otimes V_{b}=\bigoplus_{\lambda}V^{\lambda}
	\end{displaymath}
	be the Clebsch-Gordan decomposition of $(D_{a}^{-1})^{\dagger}\otimes (D_{b}^{-1})^{T}$, and let $\Gamma^{\lambda}_{jkl}$ be the Clebsch-Gordan coefficients.
	\medskip
	\\
	Then $Y_{\Gamma^{\lambda}}(a,b,\phi)$ (and their linear combinations) with $D_{\phi}=D^{\lambda}$ are the only Yukawa-couplings that are invariant under
		\begin{displaymath}
		\begin{split}
		& a\mapsto D_{a}a,\\
		& b\mapsto D_{b}b,\\
		& \phi\mapsto D_{\phi}\phi.
		\end{split}
		\end{displaymath}
\end{cor}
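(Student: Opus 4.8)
The plan is to obtain Corollary~\ref{CY10} by splicing together Proposition~\ref{PY4} and Proposition~\ref{PY9}; the only real work is recognising that the ``invariance equations'' occurring in the two propositions are literally the same linear system, once one twists $D_a$ and $D_b$ appropriately. By Proposition~\ref{PY4}, a Yukawa-coupling $Y_\Gamma$ with Higgs representation $D_\phi$ on $V_\phi$ is $G$-invariant if and only if the invariance equations (\ref{geninv}) hold, and by Proposition~\ref{PY7} these are equivalent to the component form
\begin{displaymath}
[D_\phi]_{ir}\,[D_a]^{\ast}_{js}\,[D_b]_{kt}\,\Gamma_{ijk}=\Gamma_{rst}\qquad(\forall f\in G).
\end{displaymath}
Thus, for a fixed irreducible $D_\phi$, the $G$-invariant Yukawa-couplings correspond exactly to the solutions $\Gamma$ of this system.

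Next I would apply Proposition~\ref{PY9} not to $D_a,D_b$ but to the representations $\widetilde D_a:=(D_a^{-1})^{\dagger}$ and $\widetilde D_b:=(D_b^{-1})^{T}$ (still irreducible), whose tensor product is precisely the representation decomposed in the statement, $\widetilde D_a\otimes\widetilde D_b=\bigoplus_\lambda D^\lambda$. For an irreducible $D^\lambda$ appearing in this decomposition, equation (\ref{inveqPY9a}) of Proposition~\ref{PY9}, with $D_a,D_b$ replaced by $\widetilde D_a,\widetilde D_b$, reads
\begin{displaymath}
[D^\lambda]_{ir}\,[(\widetilde D_a^{-1})^{T}]_{js}\,[(\widetilde D_b^{-1})^{T}]_{kt}\,C^\lambda_{ijk}=C^\lambda_{rst}.
\end{displaymath}
Using $\widetilde D_a^{-1}=D_a^{\dagger}$ and $\widetilde D_b^{-1}=D_b^{T}$ one gets $(\widetilde D_a^{-1})^{T}=[D_a]^{\ast}$ and $(\widetilde D_b^{-1})^{T}=[D_b]$, so that, upon setting $D_\phi=D^\lambda$, this equation coincides with the invariance-equation system of the first paragraph. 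Hence a nonzero $\Gamma$ solves the invariance equations with Higgs representation $D^\lambda$ if and only if it is a set of Clebsch-Gordan coefficients for $\widetilde D_a\otimes\widetilde D_b=\bigoplus_\lambda D^\lambda$.

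Combining these two equivalences gives the corollary: the $G$-invariant Yukawa-couplings with irreducible Higgs representation are exactly the $Y_{\Gamma^\lambda}$ with $D_\phi=D^\lambda$, together with their linear combinations within the multiplicity space of each $D^\lambda$ (the invariance equations being linear by Proposition~\ref{PY7}). I would close with the observation that, by the reasoning of Proposition~\ref{PY9}, an irreducible $D_\phi$ not occurring among the $D^\lambda$ admits no Clebsch-Gordan coefficients at all, hence only the trivial solution $\Gamma=0$; this is what makes ``the only'' in the statement literally correct. The single point requiring care --- hence the main obstacle --- is the transpose/inverse/dagger bookkeeping establishing that (\ref{inveqPY9a}) for $\widetilde D_a,\widetilde D_b$ is the same system as (\ref{geninv}) written in components; everything else is a direct read-off from Propositions~\ref{PY4} and~\ref{PY9}.
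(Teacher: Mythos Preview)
Your argument is correct and is exactly the approach the paper takes: the paper's proof reads in full ``This follows from propositions~\ref{PY4} and~\ref{PY9},'' and your write-up is simply the honest unpacking of that sentence, including the transpose/dagger bookkeeping showing that (\ref{inveqPY9a}) for $\widetilde D_a=(D_a^{-1})^{\dagger}$, $\widetilde D_b=(D_b^{-1})^{T}$ coincides with the component form of (\ref{geninv}).
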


\begin{proof}
This follows from propositions \ref{PY4} and \ref{PY9}.
\end{proof}
\hspace{0mm}\\
We will later systematically analyse tensor products of irreducible representations. For the sake of efficiency it is helpful to think about the relation between $D_{a}\otimes D_{b}$ and $D_{b}\otimes D_{a}$.

\begin{prop}\label{PY11}
Let $D_{a}$ and $D_{b}$ be irreducible representations of a finite group $G$, let $D_{a}\otimes D_{b}=\bigoplus_{\lambda}D^{\lambda}$ be the Clebsch-Gordan decomposition, and let $C^{\lambda}_{ijk}$ be the Clebsch-Gordan coefficients. Then the Clebsch-Gordan decomposition of $D_{b}\otimes D_{a}$ is the same as for $D_{a}\otimes D_{b}$, and the Clebsch-Gordan coefficients $\widehat{C}^{\lambda}_{ijk}$ for $D_{b}\otimes D_{a}$ are given by
	\begin{displaymath}
	\widehat{C}^{\lambda}_{ijk}=C^{\lambda}_{ikj}.
	\end{displaymath}
\end{prop}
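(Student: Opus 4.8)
The plan is to use the canonical flip isomorphism of tensor products. Define $\tau\colon V_a\otimes V_b\to V_b\otimes V_a$ on basis vectors by $\tau(e^a_j\otimes e^b_k)=e^b_k\otimes e^a_j$ and extend linearly. First I would verify that $\tau$ is a $G$-module isomorphism intertwining $D_a\otimes D_b$ with $D_b\otimes D_a$: since $(D_a\otimes D_b)(g)$ and $(D_b\otimes D_a)(g)$ both act factorwise, the identity $\tau\circ(D_a\otimes D_b)(g)=(D_b\otimes D_a)(g)\circ\tau$ holds on each basis vector, hence everywhere. Because equivalent representations of a finite group have the same decomposition into irreducibles (complete reducibility together with the uniqueness statement recalled in section~\ref{ACGC}), it follows at once that $D_b\otimes D_a=\bigoplus_\lambda D^\lambda$, i.e. the Clebsch--Gordan decomposition is the same as for $D_a\otimes D_b$.

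For the coefficients, I would recall from the proof of Proposition~\ref{PY9} that the $C^\lambda_{ijk}$ are characterized by the property that the vectors $u^\lambda_i:=C^\lambda_{ijk}\,e^a_j\otimes e^b_k$ form a basis of $V^\lambda$ on which $D_a\otimes D_b$ acts as $D^\lambda$. Applying $\tau$, the vectors $\tau(u^\lambda_i)=C^\lambda_{ijk}\,e^b_k\otimes e^a_j$ span $\tau(V^\lambda)$, and since $\tau$ is an intertwiner, $D_b\otimes D_a$ acts on $\tau(V^\lambda)$ as $D^\lambda$. Re-expressing $\tau(u^\lambda_i)$ in the basis $\{e^b_m\otimes e^a_n\}_{m,n}$ of $V_b\otimes V_a$ gives $\tau(u^\lambda_i)=\widehat C^\lambda_{imn}\,e^b_m\otimes e^a_n$ with $\widehat C^\lambda_{imn}=C^\lambda_{inm}$. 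Hence the numbers $\widehat C^\lambda_{ijk}=C^\lambda_{ikj}$ are Clebsch--Gordan coefficients for $D_b\otimes D_a$, which is the claim.

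Alternatively, one can bypass the geometric picture entirely and simply substitute $\widehat C^\lambda_{ijk}:=C^\lambda_{ikj}$ into the invariance equations of Proposition~\ref{PY9} written for $D_b\otimes D_a$, namely $[D^\lambda]_{ir}[(D_b^{-1})^T]_{js}[(D_a^{-1})^T]_{kt}\widehat C^\lambda_{ijk}=\widehat C^\lambda_{rst}$; after relabelling the summation indices $j\leftrightarrow k$ and reordering factors this is exactly the invariance equation satisfied by $C^\lambda$ with its two free "leg" indices $s,t$ transposed, so it holds by hypothesis. I expect the only real pitfall to be bookkeeping — keeping the index of $D^\lambda$ in the "first" slot and making sure the transposition lands on the correct pair of legs rather than on the $D^\lambda$-leg; the mathematical content, that the tensor product of representations is symmetric up to the flip map, is essentially immediate.
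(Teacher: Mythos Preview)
Your proof is correct and follows essentially the same approach as the paper. The only minor difference is that the paper establishes the equality of the two Clebsch--Gordan decompositions via characters (invoking $\chi_{D_a\otimes D_b}=\chi_{D_a}\chi_{D_b}=\chi_{D_b\otimes D_a}$ and Proposition~\ref{PA59a}) rather than via the flip intertwiner, but for the coefficients both arguments amount to applying the flip $e^a_j\otimes e^b_k\mapsto e^b_k\otimes e^a_j$ to the basis vectors of the invariant subspaces.
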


\begin{proof}
The Clebsch-Gordan decomposition can be calculated using a Hermitian scalar product on the set of characters of the representations ($\rightarrow$ proposition \ref{PA59a}). Since the characters of $D_{a}\otimes D_{b}$ and $D_{b}\otimes D_{a}$ are equal, also the Clebsch-Gordan decomposition must be the same.
\\
Let $u$ be an element of an invariant subspace of $V_{a}\otimes V_{b}$:
	\begin{displaymath}
	u=C^{\lambda}_{ijk}e^{a}_{j}\otimes e^{b}_{k}.
	\end{displaymath}
It follows that
	\begin{displaymath}
	\underbrace{C^{\lambda}_{ijk}}_{\widehat{C}^{\lambda}_{ikj}}e^{b}_{k}\otimes e^{a}_{j}
	\end{displaymath}
is an element of the corresponding invariant subspace of $V_{b}\otimes V_{a}$. $\Rightarrow \widehat{C}^{\lambda}_{ikj}=C^{\lambda}_{ijk}$.
\end{proof}

\chapter{The Clebsch-Gordan coefficients and their calculation}\label{chapterclebsch}
In chapter \ref{yukawa} we saw that we will need Clebsch-Gordan coefficients to construct $G$-invariant Yukawa-couplings. The aim of this chapter is to investigate some of their properties and an efficient method for their calculation.

\section{Important properties of Clebsch-Gordan coefficients}

At first we notice that the Clebsch-Gordan coefficients are basis dependent, thus it will be useful to know their transformation properties under basis change.

\begin{prop}\label{Pclebsch1}
Let $C_{ijk}$ be the Clebsch-Gordan coefficients for
	\begin{displaymath}
	D_{a}\otimes D_{b}=D^{\lambda}\oplus...
	\end{displaymath}
in a given basis. (The basis is determined by the matrix representations of the operators $D_{a},D_{b}$ and $D^{\lambda}$.) Let the matrix $C_{i}$ be defined by $(C_{i})_{jk}=C_{ijk}$. Let furthermore $A, B, \Phi$ be matrix representations of $D_{a},D_{b}$ and $D^{\lambda}$. Then under a basis transformation
	\begin{displaymath}
	\begin{split}
	& A\mapsto A'=S_{A}^{-1}AS_{A},\\
	& B\mapsto B'=S_{B}^{-1}BS_{B},\\
	& \Phi\mapsto \Phi'=S_{\lambda}^{-1}\Phi S_{\lambda}
	\end{split}
	\end{displaymath}
the Clebsch-Gordan coefficients transform as
	\begin{displaymath}
	C_{i}\mapsto S_{A}^{-1}C_{m}(S_{B}^{-1})^{T}(S_{\lambda})_{mi}.
	\end{displaymath}
\end{prop}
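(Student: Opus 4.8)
The plan is to track how a single Clebsch-Gordan coefficient $C_{ijk}$ is defined in terms of the basis vectors and then substitute the new (primed) bases. Recall from the proof of Proposition \ref{PY9} that the Clebsch-Gordan coefficients encode the embedding of $V^{\lambda}$ into $V_{a}\otimes V_{b}$ via $u^{\lambda}_{i}=C_{ijk}\,e^{a}_{j}\otimes e^{b}_{k}$, with $(D_{a}\otimes D_{b})u^{\lambda}_{i}=D^{\lambda}u^{\lambda}_{i}$. So the first step is to write down explicitly what a change of basis does to the three bases: if $A'=S_{A}^{-1}AS_{A}$ etc., then the new basis vectors are $e'^{a}_{j}=(S_{A})_{pj}e^{a}_{p}$ (and similarly for $b$ and $\lambda$), because that is precisely the relation between the basis change and the similarity transformation of the matrix representation (the matrix of an operator in the new basis is $S^{-1}(\text{old matrix})S$ when the new basis vectors are the columns of $S$ expressed in the old basis).

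Next I would express $u'^{\lambda}_{i}$, the image of the new $\lambda$-basis vector under the embedding, in terms of the old bases. On one hand $u'^{\lambda}_{i}=C'_{ijk}\,e'^{a}_{j}\otimes e'^{b}_{k}$. On the other hand, the embedding $V^{\lambda}\hookrightarrow V_{a}\otimes V_{b}$ is basis-independent as a linear map, so $u'^{\lambda}_{i}=(S_{\lambda})_{mi}u^{\lambda}_{m}=(S_{\lambda})_{mi}C_{mpq}\,e^{a}_{p}\otimes e^{b}_{q}$. Substituting $e'^{a}_{j}=(S_{A})_{pj}e^{a}_{p}$, $e'^{b}_{k}=(S_{B})_{qk}e^{b}_{q}$ into the first expression and comparing coefficients of $e^{a}_{p}\otimes e^{b}_{q}$ in the two expansions gives
\[
C'_{ijk}\,(S_{A})_{pj}(S_{B})_{qk}=(S_{\lambda})_{mi}\,C_{mpq}.
\]
Contracting with $(S_{A}^{-1})_{jr}(S_{B}^{-1})_{ks}$ (i.e. multiplying on the appropriate sides by the inverse matrices) isolates $C'_{irs}=(S_{A}^{-1})_{pr}\,C_{mpq}\,(S_{B}^{-1})_{qs}\,(S_{\lambda})_{mi}$, which in the matrix notation $(C_{i})_{jk}=C_{ijk}$ reads $C'_{i}=S_{A}^{-1}C_{m}(S_{B}^{-1})^{T}(S_{\lambda})_{mi}$, as claimed.

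The one point requiring care — and the place where index bookkeeping is easiest to botch — is keeping the transpose on the $S_{B}$ factor straight: because the second slot $k$ of $C_{ijk}$ is contracted with the \emph{column} index of $(C_{i})_{jk}$, the substitution $e'^{b}_{k}=(S_{B})_{qk}e^{b}_{q}$ produces $(S_{B})_{qk}$ with $k$ the free index, so after inversion one gets $(S_{B}^{-1})_{qs}$ contracted on the first index $q$, which is exactly $(S_{B}^{-1})^{T}$ acting from the right in matrix form. I would also note explicitly at the outset that a genuine basis change must act consistently: the similarity transformation of the representation matrices and the stated transformation of the basis vectors are two sides of the same coin, so there is no independent content to verify there — the lemma is really just the statement that the tensor $C$ transforms covariantly in its first slot (with $S_{\lambda}$) and contravariantly in its last two slots (with $S_{A}^{-1}$ and $S_{B}^{-1}$). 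No deep obstacle; it is a direct computation once the basis-vs-matrix convention is pinned down.
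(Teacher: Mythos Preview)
Your argument is correct and takes a genuinely different route from the paper. The paper works entirely within the invariance-equation characterisation of the Clebsch--Gordan coefficients (Proposition~\ref{PY9}): it writes the defining relation $(A^{-1}C_{i}(B^{-1})^{T})\Phi_{ir}=C_{r}$, substitutes $A'=S_{A}^{-1}AS_{A}$ etc.\ into the primed version, and then reads off the relation between $C_{i}$ and $C_{i}'$ by matching the two equations. You instead go back to the geometric definition $u^{\lambda}_{i}=C_{ijk}\,e^{a}_{j}\otimes e^{b}_{k}$ and compute directly how the components of this fixed linear embedding change when the three bases change. Your approach is more elementary and makes transparent that the result is nothing but the standard tensor transformation law for an object with one covariant $\lambda$-index and two contravariant $a,b$-indices; the paper's approach has the advantage of staying inside the algebraic framework (invariance equations) on which the computational algorithm of Chapter~\ref{chapterclebsch} is built, so no translation is needed when the result is later applied there.

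One small slip: in your displayed intermediate step you write $(S_{A}^{-1})_{pr}$ and $(S_{B}^{-1})_{qs}$, but the contraction you describe actually produces $(S_{A}^{-1})_{rp}$ and $(S_{B}^{-1})_{sq}$ (the new free index must sit in the first slot so that $(S_{A}^{-1})_{rp}(S_{A})_{pj}=\delta_{rj}$). With your index placement the matrix form would come out as $(S_{A}^{-1})^{T}C_{m}S_{B}^{-1}$ rather than $S_{A}^{-1}C_{m}(S_{B}^{-1})^{T}$. You land on the correct final formula regardless, so this is a bookkeeping typo rather than a gap in the argument --- exactly the ``easy to botch'' point you yourself flagged.
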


\begin{proof}
The Clebsch-Gordan coefficients are the solutions of the invariance equations
	\begin{displaymath}
	\Phi_{ir}(A^{-1})^{T}_{js}(B^{-1})^{T}_{kt}C_{ijk}=C_{rst},
	\end{displaymath}
which are in matrix form
	\begin{displaymath}
	(A^{-1}C_{i}(B^{-1})^{T})\Phi_{ir}=C_{r}.
	\end{displaymath}
After a basis transformation the new Clebsch-Gordan coefficients $C_{r}'$ must fulfil the invariance equations
	\begin{displaymath}
	(A'\hspace{0mm}^{-1}C'_{i}(B'\hspace{0mm}^{-1})^{T})\Phi'_{ir}=C'_{r}.
	\end{displaymath}
Inserting the expressions for the new matrix representations we find
	\begin{displaymath}
	(S_{A}^{-1}A^{-1}S_{A}C_{i}'S_{B}^{T}(B^{-1})^{T}(S_{B}^{-1})^{T})(S_{\lambda}^{-1})_{is}\Phi_{sm}(S_{\lambda})_{mr}=C_{r}',
	\end{displaymath}
from which follows
	\begin{displaymath}
	A^{-1}(\underbrace{S_{A}C_{i}'S_{B}^{T}(S_{\lambda}^{-1})_{is}}_{C_{s}})(B^{-1})^{T}\Phi_{sz}=\underbrace{S_{A}C_{r}'S_{B}^{T}(S_{\lambda}^{-1})_{rz}}_{C_{z}}.
	\end{displaymath}
Therefore
	\begin{displaymath}
	C_{s}'=S_{A}^{-1}C_{i}(S_{B}^{-1})^{T}(S_{\lambda})_{is}.
	\end{displaymath}
\end{proof}

\begin{prop}\label{Pclebsch2}
Let $[D_{a}\otimes D_{b}]$ and $\bigoplus_{\lambda}[D^{\lambda}]$ be unitary matrix representations of a finite group $G$ with respect to a basis $\{e_{i}\}_{i}$ of $V_{a}\otimes V_{b}=\bigoplus_{\lambda}V^{\lambda}$, and let
	\begin{displaymath}
	D_{a}\otimes D_{b}=\bigoplus_{\lambda}D^{\lambda}.
	\end{displaymath}
Let $M$ be the matrix of basis change to the bases $\{u^{\lambda}_{j}\}_{j}$ of the invariant subspaces $V^{\lambda}$, i.s.
	\begin{displaymath}
	Me_{i}=u_{i}.
	\end{displaymath}
(The components of $M$ are the Clebsch-Gordan coefficients. To be exact we should rather have written $M[e_{i}]=[u_{i}]$, because we want $M$ to be a matrix, not an operator. $[.]$ denotes matrix representation with respect to the basis $\{e_{i}\}_{i}$ of $V_{a}\otimes V_{b}$.) Let $R=\bigoplus_{\lambda}[D^{\lambda}]$ be the reduced matrix representation of $D_{a}\otimes D_{b}$ in the basis $\{e_{i}\}_{i}$, then
	\begin{equation}\label{clebsch1}
	M^{-1}[D_{a}\otimes D_{b}]M=R,
	\end{equation}
and $M$ is unitary.
\end{prop}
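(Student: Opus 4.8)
The plan is to read this as a pure change-of-basis statement: $M$ is nothing but the transition matrix from the basis $\{e_i\}$ to the Clebsch--Gordan basis $\{u_i\}$, and in the latter basis $D_a\otimes D_b$ is by construction block-diagonal with blocks $[D^\lambda]$. Identity (\ref{clebsch1}) will then be the usual conjugation rule for the matrix of an operator under a change of basis, and unitarity of $M$ will follow from both bases being orthonormal for a $G$-invariant inner product.

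First I would spell out the defining property of the $u^\lambda_j$: each $V^\lambda=\mathrm{span}\{u^\lambda_j\}_j$ is $(D_a\otimes D_b)$-invariant, and the matrix of the restriction $(D_a\otimes D_b)|_{V^\lambda}$ with respect to $\{u^\lambda_j\}_j$ is precisely $[D^\lambda]$ -- this is exactly what it means for the components of the $u^\lambda_j$ to be Clebsch--Gordan coefficients for $D_a\otimes D_b=\bigoplus_\lambda D^\lambda$ (cf.\ the computation in the proof of proposition \ref{PY9}). Listing the vectors $u_i$ block by block in the order dictated by $\bigoplus_\lambda V^\lambda$, the matrix of $D_a\otimes D_b$ in the basis $\{u_i\}$ is therefore block-diagonal and equals $\bigoplus_\lambda[D^\lambda]=R$. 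On the other hand, since $M$ is the matrix with $M[e_i]=[u_i]$ (so its $i$-th column is the $\{e_i\}$-coordinate vector of $u_i$), for any operator $T$ on $V_a\otimes V_b$ the matrix of $T$ in the basis $\{u_i\}$ is $M^{-1}[T]M$, where $[T]$ is its matrix in $\{e_i\}$. Taking $T=D_a\otimes D_b$ yields $M^{-1}[D_a\otimes D_b]M=R$, which is (\ref{clebsch1}).

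For the unitarity of $M$: since $[D_a\otimes D_b]$ is a \emph{unitary} matrix representation in the basis $\{e_i\}$, that basis is orthonormal with respect to a $G$-invariant Hermitian inner product on $V_a\otimes V_b$; likewise the requirement that $\bigoplus_\lambda[D^\lambda]$ be unitary forces the Clebsch--Gordan basis $\{u^\lambda_j\}$ (constructed in section \ref{ACGC}) to be orthonormal for that same inner product. The transition matrix between two orthonormal bases is unitary, hence $M^\dagger M=\mathbbm{1}$. (One cannot drop the normalization of the $u^\lambda_j$ here: from (\ref{clebsch1}) alone, i.e.\ from $[D_a\otimes D_b(g)]M=MR(g)$ for all $g\in G$, one gets only that $M^\dagger M$ commutes with every $R(g)$, which leaves $M$ unitary up to a positive block-diagonal rescaling; unitarity on the nose is precisely the statement that the $u^\lambda_j$ have the correct norm.)

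The routine parts -- the conjugation formula for a change of basis and the block structure of $D_a\otimes D_b$ in the $u$-basis -- are not where care is needed. The one genuine subtlety is bookkeeping the hypothesis ``unitary matrix representation with respect to the basis $\{e_i\}$'': one must keep straight that this is exactly what makes $\{e_i\}$ orthonormal, and that, together with the analogous hypothesis for $\bigoplus_\lambda[D^\lambda]$, it is what makes the Clebsch--Gordan basis orthonormal -- and hence $M$ unitary. Everything else is direct.
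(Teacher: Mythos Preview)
Your argument for (\ref{clebsch1}) is exactly the paper's: both compute the action on basis vectors and recognize it as the standard change-of-basis conjugation formula.

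For unitarity the approaches diverge. The paper takes the adjoint of (\ref{clebsch1}), uses that both $[D_a\otimes D_b]$ and $R$ are unitary to get $M^{\dagger}[D_a\otimes D_b](g^{-1})(M^{-1})^{\dagger}=R(g^{-1})$, and then compares with (\ref{clebsch1}) evaluated at $g^{-1}$ to conclude $M^{-1}=M^{\dagger}$. You instead argue that the hypotheses force both $\{e_i\}$ and $\{u^{\lambda}_j\}$ to be orthonormal for the same $G$-invariant inner product, so the transition matrix between them is unitary. Your parenthetical remark is well taken: from the intertwining relation alone one only gets that $M^{\dagger}M$ commutes with $R(g)$ for all $g$, hence (by Schur) is a positive block-scalar matrix, not necessarily the identity. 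The paper's quick passage from ``both $M^{-1}$ and $M^{\dagger}$ conjugate $[D_a\otimes D_b]$ to $R$'' to ``$M^{-1}=M^{\dagger}$'' implicitly uses exactly the normalization of the $u^{\lambda}_j$ that you make explicit. Your route is thus slightly longer to state but cleaner in what it actually uses.
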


\begin{proof}
	\begin{displaymath}
	\begin{split}
	M^{-1}[D_{a}\otimes D_{b}]Me_{i} & = M^{-1}[D_{a}\otimes D_{b}]u_{i}=\\
	& = M^{-1}R_{ji}u_{j}=R_{ji}e_{j}=Re_{i}.
	\end{split}
	\end{displaymath}
We will now show that $M$ is unitary.
	\begin{displaymath}
	(M^{-1}[D_{a}\otimes D_{b}](g)M)^{\dagger}=R^{\dagger}(g)=R(g)^{-1}=R(g^{-1})
	\end{displaymath}
	\begin{displaymath}
	\Rightarrow M^{\dagger}[D_{a}\otimes D_{b}](g^{-1})(M^{-1})^{\dagger}=R(g^{-1})
	\end{displaymath}
Since equation (\ref{clebsch1}) must be valid for every $g\in G$ (so especially for $g^{-1}$) it follows $M^{-1}=M^{\dagger}$.
\end{proof}
\hspace{0mm}\\
When we will later calculate Clebsch-Gordan coefficients we would like to choose some kind of \textquotedblleft standard form\textquotedblright, which is not possible, because of their basis dependence. From proposition \ref{Pclebsch2} we know that the matrix $M$ of Clebsch-Gordan coefficients can always be transformed to a unitary form. This does of course not fix $M$ uniquely, but it is better than choosing an arbitrary basis.
\bigskip
\bigskip
\\
We will now proof some lemmata that will save us a large amount of work later.

\bigskip
\begin{lemma}\label{LSU313}
Let $D_{1}, D_{2}, D_{1}', D_{2}', d_{1}, d_{2}$ be irreducible representations of a finite group $G$. Let $\mathrm{dim}d_{1}=\mathrm{dim}d_{2}=1$, and let furthermore
	\begin{displaymath}
	D_{1}'=d_{1}\otimes D_{1},\quad\mbox{and}\quad D_{2}'=d_{2}\otimes D_{2}.
	\end{displaymath}
Then the Clebsch-Gordan coefficients for $D_{1}\otimes D_{2}$ and $D_{1}'\otimes D_{2}'$ are equal.
\end{lemma}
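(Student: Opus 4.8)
The plan is to reduce the statement to the characterization of Clebsch--Gordan coefficients given in Proposition \ref{PY9}: a family $C^{\lambda}_{ijk}$ (not all zero) consists of Clebsch--Gordan coefficients for a decomposition $D_{1}\otimes D_{2}=\bigoplus_{\lambda}D^{\lambda}$ exactly when the vectors $u^{\lambda}_{i}:=C^{\lambda}_{ijk}\,e^{a}_{j}\otimes e^{b}_{k}$ span invariant subspaces $V^{\lambda}\subseteq V_{a}\otimes V_{b}$ on which $D_{1}\otimes D_{2}$ acts as $D^{\lambda}$. It therefore suffices to show that these same vectors $u^{\lambda}_{i}$ span invariant subspaces for $D_{1}'\otimes D_{2}'$ as well, with irreducible induced action.

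First I would exploit that $d_{1}$ and $d_{2}$ are one--dimensional, so that as matrix representations they are merely scalars $d_{1}(g),d_{2}(g)\in\mathbb{C}^{\times}$; consequently $D_{1}'(g)=d_{1}(g)\,D_{1}(g)$ and $D_{2}'(g)=d_{2}(g)\,D_{2}(g)$ act on the same spaces $V_{a}$ and $V_{b}$. Taking tensor products,
\[
(D_{1}'\otimes D_{2}')(g)=\bigl(d_{1}(g)D_{1}(g)\bigr)\otimes\bigl(d_{2}(g)D_{2}(g)\bigr)=d_{1}(g)\,d_{2}(g)\,(D_{1}\otimes D_{2})(g),
\]
so $D_{1}'\otimes D_{2}'$ differs from $D_{1}\otimes D_{2}$ only by the overall scalar factor $d_{1}(g)d_{2}(g)$; equivalently $D_{1}'\otimes D_{2}'=(d_{1}\otimes d_{2})\otimes(D_{1}\otimes D_{2})$.

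Next I would apply $(D_{1}'\otimes D_{2}')(g)$ to the vectors $u^{\lambda}_{i}$ belonging to a fixed Clebsch--Gordan decomposition $D_{1}\otimes D_{2}=\bigoplus_{\lambda}D^{\lambda}$. Using the displayed identity together with $(D_{1}\otimes D_{2})(g)u^{\lambda}_{i}=[D^{\lambda}(g)]_{ji}\,u^{\lambda}_{j}$ one gets
\[
(D_{1}'\otimes D_{2}')(g)\,u^{\lambda}_{i}=d_{1}(g)\,d_{2}(g)\,[D^{\lambda}(g)]_{ji}\,u^{\lambda}_{j}=[D^{\lambda\prime}(g)]_{ji}\,u^{\lambda}_{j},\qquad D^{\lambda\prime}:=(d_{1}\otimes d_{2})\otimes D^{\lambda}.
\]
Hence each $V^{\lambda}$ is invariant also under $D_{1}'\otimes D_{2}'$, on which the action is $D^{\lambda\prime}$, and $D^{\lambda\prime}$ is irreducible because the tensor product of an irreducible representation with a one--dimensional one is again irreducible (the character keeps the same norm). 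Thus $D_{1}'\otimes D_{2}'=\bigoplus_{\lambda}D^{\lambda\prime}$ is a Clebsch--Gordan decomposition, and since the passage to the basis $\{u^{\lambda}_{i}\}$ is encoded by exactly the same numbers $C^{\lambda}_{ijk}$, the converse part of Proposition \ref{PY9} shows that the $C^{\lambda}_{ijk}$ are Clebsch--Gordan coefficients for $D_{1}'\otimes D_{2}'$ too. This gives the equality of the two sets of Clebsch--Gordan coefficients.

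There is no genuinely hard step here; the only things to be careful about are conventions. One must read $D_{1}'$ as the matrix representation $d_{1}\cdot D_{1}$ written in the \emph{same} basis used for $D_{1}$ (otherwise, by the basis dependence recorded in Proposition \ref{Pclebsch1}, talking about ``equal'' Clebsch--Gordan coefficients would be meaningless), and one must identify the irreducible constituents of $D_{1}'\otimes D_{2}'$ as the twisted representations $D^{\lambda\prime}=d_{1}\otimes d_{2}\otimes D^{\lambda}$ rather than as the original $D^{\lambda}$. If the elementary fact that tensoring an irreducible representation with a one--dimensional one preserves irreducibility has not already been stated in the group--theory appendix, I would record it (via the character norm) at this point.
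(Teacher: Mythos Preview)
Your proof is correct and takes essentially the same approach as the paper: both observe that $[D_{1}'\otimes D_{2}'](g)=d_{1}(g)d_{2}(g)\,[D_{1}\otimes D_{2}](g)$ differs from $[D_{1}\otimes D_{2}](g)$ only by a scalar, so the same change-of-basis reduces both tensor products. The paper phrases this via the reducing matrix $M$ of Proposition~\ref{Pclebsch2} (computing $M^{-1}[D_{1}'\otimes D_{2}']M=[d_{1}][d_{2}]\,R$ directly), while you phrase it via the invariant-subspace characterization of Proposition~\ref{PY9}; these are equivalent, since the columns of $M$ are precisely your $u^{\lambda}_{i}$.
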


\begin{proof}
Let $M$ be the matrix of Clebsch-Gordan coefficients that reduces $D_{1}\otimes D_{2}$.
	\begin{displaymath}
	M^{-1}[D_{1}\otimes D_{2}]M=R,
	\end{displaymath}
where $R$ is a matrix representation of the reduced tensor product.
	\begin{displaymath}
	\begin{split}
	\Rightarrow M^{-1}[D_{1}'\otimes D_{2}']M & =M^{-1}[(d_{1}\otimes D_{1})\otimes (d_{2}\otimes D_{2})]M\\
	& = [d_{1}]\cdot[d_{2}]M^{-1}[D_{1}\otimes D_{2}]M=[d_{1}]\cdot[d_{2}]R.
	\end{split}
	\end{displaymath}
$[d_{1}]\cdot[d_{2}]R$ is the reduced form of the tensor product $[D_{1}'\otimes D_{2}']$.
\end{proof}

\bigskip
\begin{lemma}\label{LSU314}
Let $M$ be the matrix of Clebsch-Gordan coefficients that reduces the tensor product $D_{1}\otimes D_{2}$ ($D_{1}, D_{2}$ irreducible), then $M^{\ast}$ reduces $D_{1}^{\ast}\otimes D_{2}^{\ast}$.
\end{lemma}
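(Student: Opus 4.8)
The plan is to take complex conjugates throughout the defining relation for $M$ supplied by Proposition \ref{Pclebsch2}. By that proposition, $M$ is unitary and satisfies
	\begin{displaymath}
	M^{-1}[D_{1}\otimes D_{2}]M=R,
	\end{displaymath}
where $R=\bigoplus_{\lambda}[D^{\lambda}]$ is a block-diagonal matrix representation of the reduced tensor product, the blocks $[D^{\lambda}]$ being irreducible.

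First I would observe that entrywise complex conjugation is an (antilinear) ring homomorphism on matrices: $(XY)^{\ast}=X^{\ast}Y^{\ast}$ and $(X^{-1})^{\ast}=(X^{\ast})^{-1}$, and moreover it commutes with forming tensor products, so $[D_{1}\otimes D_{2}]^{\ast}=[D_{1}^{\ast}]\otimes[D_{2}^{\ast}]=[D_{1}^{\ast}\otimes D_{2}^{\ast}]$. Conjugating the displayed equation then gives
	\begin{displaymath}
	(M^{\ast})^{-1}[D_{1}^{\ast}\otimes D_{2}^{\ast}]M^{\ast}=R^{\ast}.
	\end{displaymath}
Next I would note that $R^{\ast}=\bigoplus_{\lambda}[D^{\lambda}]^{\ast}=\bigoplus_{\lambda}[(D^{\lambda})^{\ast}]$ is again block-diagonal, and that the complex conjugate of an irreducible representation of a finite group is again irreducible (its character is the complex conjugate of an irreducible character, hence still of norm one). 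Therefore $R^{\ast}$ is a genuine reduced (block-diagonal, irreducible-block) matrix representation of $D_{1}^{\ast}\otimes D_{2}^{\ast}$. Finally, since $M$ is unitary, $M^{\ast}$ is unitary as well, so it is an admissible basis-change matrix, and the last displayed equation says precisely that $M^{\ast}$ reduces $D_{1}^{\ast}\otimes D_{2}^{\ast}$, with Clebsch-Gordan coefficients $(M^{\ast})_{ijk}=(M_{ijk})^{\ast}$.

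There is essentially no obstacle here: the only point requiring a word of care is the assertion that the conjugated blocks $(D^{\lambda})^{\ast}$ remain irreducible, so that $R^{\ast}$ is not merely block-diagonal but actually a fully-reduced representation; this follows from the character criterion recalled in the earlier sections. Everything else is a routine transport of the identity of Proposition \ref{Pclebsch2} under complex conjugation.
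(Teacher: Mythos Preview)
Your proof is correct and follows exactly the same approach as the paper: conjugate the relation $M^{-1}[D_{1}\otimes D_{2}]M=R$ and observe that $R^{\ast}$ is again block-diagonal. The paper's proof is in fact much terser---it records only the conjugated identity and the remark that $R^{\ast}$ is block-diagonal---whereas you have added the justification that the conjugated blocks remain irreducible and that $M^{\ast}$ is unitary; these are fair points to make explicit but the argument is the same.
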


\begin{proof}
	\begin{displaymath}
	M^{-1}[D_{1}\otimes D_{2}]M=R \Rightarrow (M^{\ast})^{-1}[D_{1}^{\ast}\otimes D_{2}^{\ast}]M^{\ast}=R^{\ast}.
	\end{displaymath}
Since $R$ is block diagonal (reduced form), $R^{\ast}$ is block diagonal too.
\end{proof}

\section{An algorithm for the calculation of Clebsch-Gordan coefficients}

\begin{theorem}\label{Tclebsch3}
\textbf{Diagonalizeability of normal matrices.} Let $A$ be a complex $n\times n$-matrix with $AA^{\dagger}=A^{\dagger}A$ (such a matrix is called \textit{normal}), then there exists a unitary matrix $S$ such that
	\begin{displaymath}
	S^{-1}AS=\mathrm{diag}(\lambda_{i}),\quad \lambda_{i}\in \mathbb{C}.
	\end{displaymath}
A is called \textit{unitarily diagonalizeable}. As special cases we find that unitary and Hermitian matrices are unitarily diagonalizeable.
\medskip
\\ 
The proof of this theorem can be found in \cite{zieschang} (p.271f).
\end{theorem}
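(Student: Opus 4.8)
The plan is to reduce the statement to Schur's triangularization theorem and then exploit normality. First I would establish that every complex $n\times n$ matrix $A$ is unitarily similar to an upper triangular matrix, by induction on $n$. Since $\mathbb{C}$ is algebraically closed, $A$ has an eigenvalue $\lambda_{1}$ with a unit eigenvector $v_{1}$; extending $v_{1}$ to an orthonormal basis of $\mathbb{C}^{n}$ (Gram--Schmidt) and taking $U_{1}$ to be the unitary matrix with these columns, $U_{1}^{\dagger}AU_{1}$ has block form $\bigl(\begin{smallmatrix} \lambda_{1} & \ast \\ 0 & A' \end{smallmatrix}\bigr)$ with $A'$ of size $(n-1)\times(n-1)$. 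Applying the inductive hypothesis to $A'$, embedding the resulting $(n-1)$-dimensional unitary as $\bigl(\begin{smallmatrix} 1 & 0 \\ 0 & \cdot \end{smallmatrix}\bigr)$, and composing with $U_{1}$ yields a unitary $S$ with $T:=S^{\dagger}AS$ upper triangular.

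Second, I would note that unitary conjugation preserves normality: if $AA^{\dagger}=A^{\dagger}A$ then $TT^{\dagger}=S^{\dagger}AA^{\dagger}S=S^{\dagger}A^{\dagger}AS=T^{\dagger}T$, so $T$ is a normal upper triangular matrix. The crux is then that a normal upper triangular matrix must be diagonal. One compares diagonal entries of $TT^{\dagger}$ and $T^{\dagger}T$ from the top down: the $(1,1)$ entry of $TT^{\dagger}$ is $\sum_{j}|t_{1j}|^{2}$ while that of $T^{\dagger}T$ is $|t_{11}|^{2}$, forcing $t_{1j}=0$ for all $j>1$; and inductively, once rows $1,\dots,k-1$ are known to have no strictly-off-diagonal entries, equating the $(k,k)$ entries of $TT^{\dagger}$ and $T^{\dagger}T$ gives $t_{kj}=0$ for $j>k$. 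Hence $T=\mathrm{diag}(\lambda_{i})$ and $S^{-1}AS=S^{\dagger}AS=T$ is the asserted diagonalization. The special cases follow because unitary and Hermitian matrices are trivially normal.

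The main obstacle is not conceptual depth but bookkeeping: correctly embedding the lower-dimensional unitary into dimension $n$ and verifying that the composite remains unitary, and carrying out the entry-by-entry comparison in the last step in the right order (topmost row first, then proceeding downward), since the argument that a normal triangular matrix is diagonal genuinely relies on already having cleared the earlier rows before treating row $k$.
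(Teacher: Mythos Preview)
Your proof is correct. The paper itself does not supply a proof for this theorem but simply refers the reader to \cite{zieschang} (p.271f), so there is nothing to compare against directly; your argument via Schur triangularization followed by the observation that a normal upper-triangular matrix is diagonal is the standard textbook route and is very likely what the cited reference contains as well.
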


\begin{cor}\label{Cclebsch4}
Let $A$ be equivalent to a unitary matrix $U$, that means $\exists T$: $A=TUT^{-1}$, then $A$ is diagonalizeable and has the same eigenvalues as $U$.
\end{cor}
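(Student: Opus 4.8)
The plan is to chain together Theorem \ref{Tclebsch3} with the elementary fact that conjugation does not change eigenvalues. First I would apply Theorem \ref{Tclebsch3} to the unitary matrix $U$: a unitary matrix is normal, so there exists a unitary matrix $S$ with
	\begin{displaymath}
	S^{-1}US=\mathrm{diag}(\lambda_{i}),\quad \lambda_{i}\in\mathbb{C},
	\end{displaymath}
and the $\lambda_{i}$ are exactly the eigenvalues of $U$.

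Next I would substitute this into the hypothesis $A=TUT^{-1}$. Writing $U=S\,\mathrm{diag}(\lambda_{i})\,S^{-1}$ gives
	\begin{displaymath}
	A=TS\,\mathrm{diag}(\lambda_{i})\,S^{-1}T^{-1}=(TS)\,\mathrm{diag}(\lambda_{i})\,(TS)^{-1}.
	\end{displaymath}
Since $TS$ is a product of invertible matrices it is invertible, so this display exhibits $A$ as similar to a diagonal matrix, i.e.\ $A$ is diagonalizeable.

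For the claim on eigenvalues I would observe that $A$ and $\mathrm{diag}(\lambda_{i})$ are similar, hence have the same characteristic polynomial, because $\det(A-x\mathbbm{1})=\det\!\big((TS)(\mathrm{diag}(\lambda_{i})-x\mathbbm{1})(TS)^{-1}\big)=\det(\mathrm{diag}(\lambda_{i})-x\mathbbm{1})$; therefore the eigenvalues of $A$ are the $\lambda_{i}$, which by the first step are the eigenvalues of $U$.

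I do not expect any real obstacle here: the statement is a routine corollary of Theorem \ref{Tclebsch3}. The only subtlety worth flagging is that the word ``equivalent'' in the hypothesis means ordinary similarity ($T$ need not be unitary), so one should only claim that $A$ is diagonalizeable — not that it is \emph{unitarily} diagonalizeable — which is precisely what the statement asserts.
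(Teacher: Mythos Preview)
Your proof is correct and follows essentially the same route as the paper: apply Theorem~\ref{Tclebsch3} to $U$, then conjugate by $TS$ to diagonalize $A$, noting that $TS$ need not be unitary. The paper's version is terser and leaves the equality of eigenvalues implicit (since $A$ and $U$ are brought to the same diagonal form), whereas you spell it out via the characteristic polynomial, but the argument is the same.
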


\begin{proof}
$U=T^{-1}AT$. From theorem \ref{Tclebsch3} we know that $U$ is diagonalizeable, thus
	\begin{displaymath}
	\mathrm{diag}(\lambda_{i})=S^{-1}US=S^{-1}T^{-1}ATS=(TS)^{-1}A(TS).
	\end{displaymath}
Remark: In general $TS$ is not unitary, thus $A$ is not unitarily diagonalizeable.
\end{proof}
\hspace{0mm}\\
We now have the tools for the development of a method for the construction of Clebsch-Gordan coefficients.

\begin{prop}\label{Pclebsch5}
Let $D_{a}\otimes D_{b}=D^{\lambda}\oplus...$ . The Clebsch-Gordan coefficients for this decomposition can be obtained by the following procedure:
	\begin{enumerate}
	 \item $A:=[D_{a}^{-1}]^{T}, B:=[D_{b}^{-1}]^{T}, \Phi:=[D^{\lambda}]$. Construct $N(f)$ $\forall f\in \mathrm{gen}(G)$.
	\begin{displaymath}
	N:=
		\left(
		\begin{matrix}
		 \Phi_{11}A_{11}B_{11} & \Phi_{11}A_{11}B_{21} & ... & \Phi_{n_{\phi}1}A_{n_{a}1}B_{n_{b}1} \\
		 \Phi_{11}A_{11}B_{12} & \Phi_{11}A_{11}B_{22} & ... & \Phi_{n_{\phi}1}A_{n_{a}1}B_{n_{b}2} \\
		 \vdots &  &  & \vdots \\
		 \Phi_{1n_{\phi}}A_{1n_{a}}B_{1n_{b}} & \Phi_{1n_{\phi}}A_{1n_{a}}B_{2n_{b}} & ... & \Phi_{n_{\phi}n_{\phi}}A_{n_{a}n_{a}}B_{n_{b}n_{b}}
		\end{matrix}
		\right)
	\end{displaymath}
	\item Determine the eigenspace of $N(f)$ to the eigenvalue $1$.
		\item Determine the intersection of the eigenspaces to find the common eigenspace to the eigenvalue $1$ of all $N(f)$.
		\item Determine a basis of the common eigenspace. Each basisvector corresponds to a set of Clebsch-Gordan coefficients via
	\begin{displaymath}
	\Gamma=\left(
	\begin{matrix}
	\Gamma_{111}  \\
	\Gamma_{112}  \\
	\vdots  \\
	\Gamma_{11n_{b}}\\
	\Gamma_{121}\\
	\Gamma_{122}\\
	\vdots\\
 	\Gamma_{n_{\phi}n_{a}n_{b}}
	\end{matrix}
	\right).
	\end{displaymath}
	\end{enumerate}
\end{prop}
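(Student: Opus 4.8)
The plan is to recognise that Proposition \ref{Pclebsch5} is nothing but the assembly of results already proven in Chapter \ref{yukawa}, specialised to the case $D_\phi = D^\lambda$. The logical chain runs backwards: \emph{Clebsch--Gordan coefficients} $\Leftrightarrow$ \emph{nonzero solutions of the invariance equations} (Proposition \ref{PY9}) $\Leftrightarrow$ \emph{vectors in the eigenspace of $N(f)$ to the eigenvalue $1$ for all $f\in G$} (Proposition \ref{PY7}) $\Leftrightarrow$ \emph{vectors in that eigenspace for all $f\in\mathrm{gen}(G)$} (Proposition \ref{PY6}). Reading this chain from left to right is precisely the four-step procedure, so the proof amounts to citing these three results in the right order; indeed the statement is just Corollary \ref{CY8} with $D_\phi=D^\lambda$ together with the identification furnished by Proposition \ref{PY9}.

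First I would invoke Proposition \ref{PY9}: for the irreducible representations $D_a, D_b, D^\lambda$, a tensor $C^\lambda_{ijk}$ (not identically zero) is a set of Clebsch--Gordan coefficients for $D_a\otimes D_b = D^\lambda\oplus\dots$ if and only if it solves the invariance equations (\ref{inveqPY9a}), i.e.
\begin{displaymath}
[D^\lambda(f)]_{ir}\,[(D_a^{-1}(f))^T]_{js}\,[(D_b^{-1}(f))^T]_{kt}\,C^\lambda_{ijk}=C^\lambda_{rst}\qquad\forall f\in G.
\end{displaymath}
Hence it suffices to show that steps 1--4 output exactly a basis of the solution space of these equations. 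Since the chosen matrix representations of $D_a, D_b$ and $D^\lambda$ fix the basis, only \emph{a} valid set of coefficients — not a canonical one — is to be expected, in agreement with the basis-dependence already noted (Propositions \ref{Pclebsch1} and \ref{Pclebsch2}).

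Next I would rewrite (\ref{inveqPY9a}) in the matrix language of Proposition \ref{PY7}. With $A:=[D_a^{-1}(f)]^T$, $B:=[D_b^{-1}(f)]^T$, $\Phi:=[D^\lambda(f)]$ the component equation reads $\Phi_{ir}A_{js}B_{kt}\Gamma_{ijk}=\Gamma_{rst}$, which — collecting the unknowns $\Gamma_{ijk}$ into the column vector $\Gamma$ ordered as in the statement — is the linear equation $N(f)\Gamma=\Gamma$, with $N(f)$ the $n_an_bn_\phi\times n_an_bn_\phi$ matrix written out in step 1. This is literally the computation in the proof of Proposition \ref{PY7}; the only bookkeeping point is that the matrix $A=[D_a^{-1}]^T$ used here coincides with the $[D_a]^*$ appearing there once one uses that $D_a$ may be taken unitary (cf.\ Proposition \ref{Pclebsch2}), or, equivalently, one simply re-derives $N(f)$ directly from the displayed component form (\ref{eqY7}). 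Therefore the solution set of the invariance equation for a single $f$ is exactly the eigenspace $\ker(N(f)-\mathbbm{1}_{n_an_bn_\phi})$, which is step 2.

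Finally, by Proposition \ref{PY6} it is enough that the invariance equations hold on a generating set, since validity for $r$ and $s$ entails validity for $rs$; hence a $\Gamma$ solves the equations for all $f\in G$ if and only if it lies in $\bigcap_{f\in\mathrm{gen}(G)}\ker(N(f)-\mathbbm{1})$, which is step 3. Any basis of this intersection (step 4) is then a complete, linearly independent family of solutions, hence — by Proposition \ref{PY9} once more — a valid set of Clebsch--Gordan coefficients for $D_a\otimes D_b=D^\lambda\oplus\dots$. I do not foresee a genuine obstacle: the proposition is a corollary of \ref{PY6}, \ref{PY7} and \ref{PY9}. The one thing that must be dispatched explicitly rather than silently is the convention clash just mentioned between the $(D_a^{-1})^T$ of (\ref{inveqPY9a}) and the $[D_a]^*$ of (\ref{eqY7}), which holds because of unitarity.
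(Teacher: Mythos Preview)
Your proposal is correct and matches the paper's approach: the paper's entire proof is the one-line ``This follows from propositions \ref{PY7} and \ref{PY9}'', and you have spelled out precisely that chain, additionally (and rightly) invoking Proposition \ref{PY6} for the reduction to generators, which the paper leaves implicit. Your remark on the $[D_a]^*$ versus $[D_a^{-1}]^T$ convention is a genuine bookkeeping point that the paper glosses over; resolving it via unitarity is fine.
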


\begin{proof}
This follows from propositions \ref{PY7} and \ref{PY9}.
\end{proof}
\hspace{0mm}\\
Applying this method in a calculation by hand is nearly impossible, because of the large dimensions of the involved matrices and vectors. In fact it turns out that for \textquotedblleft complicated\textquotedblright\hspace{1mm} generators $f$ (containing elements as $e^{\frac{2\pi i}{m}}$ and square roots like $\sqrt{m}$) this method cannot directly be used with \textit{Mathematica 6} \cite{mathematica}, because the calculation of the eigenvectors to the eigenvalue 1, and also the intersection of the eigenspaces would take extremely long. (In some tests \textit{Mathematica 6} did not give any result, even after 8 hours of computation.) So we have to improve the algorithm. After many little improvements an efficient implementation of the procedure described in proposition \ref{Pclebsch5} was found.

\newpage
\begin{center}
\textbf{An efficient algorithm for the calculation of Clebsch-Gordan coefficients.}\end{center}
\begin{enumerate}
	\item Let $f_{1}$ be the first generator of $G$. We search for the eigenvectors of $N(f_{1})$ to the eigenvalue 1. This is done by calculating the kernel of $N(f_{1})-\mathbbm{1}_{n_{a}n_{b}n_{\lambda}}$. In an arbitrary basis this will be an enormous amount of work, but performing a special basis transformation one can make this problem easily solveable.
		\begin{itemize}
		 \item[a)] Diagonalize $A(f_{1}), B(f_{1})$ and $\Phi(f_{1})$. This is possible, because representations of finite groups are equivalent to unitary representations ($\rightarrow$ theorem \ref{TA42} and corollary \ref{Cclebsch4}).
		 \item[b)] Calculate the kernel of $N(f_{1})-\mathbbm{1}_{n_{a}n_{b}n_{\lambda}}$. This is easy now, because in the new basis most elements of $N(f_{1})-\mathbbm{1}_{n_{a}n_{b}n_{\lambda}}$ are zero.
		 \item[c)] Using the transformation properties of Clebsch-Gordan coefficients (proposition \ref{Pclebsch1}) transform the kernel back to the previous basis. So one obtains a set of linearly independent eigenvectors $v_{i}$ of $N(f_{1})$ to the eigenvalue 1.
		\end{itemize}
	\item Solve the equation
		\begin{displaymath}
		\alpha_{i}(N(f_{2})-\mathbbm{1}_{n_{a}n_{b}n_{\lambda}})v_{i}=0,
		\end{displaymath}
		and define $v_{j}'=\alpha_{i}v_{i}$ for the different solutions $\alpha_{i}$. Then $v_{j}'$ are common eigenvectors of $N(f_{1})$ and $N(f_{2})$ to the eigenvalue 1.  Repeat this step for all other generators of $G$. At the end one obtains the common eigenvectors of $N(f_{i})$ to the eigenvalue 1.
\end{enumerate}
This algorithm works well in \textit{Mathematica 6} and gives results within a computation time of about 10 seconds (order of magnitude) for most groups we will study in this work.
\medskip
\\
\textbf{Remark:} If one wants to find the Clebsch-Gordan coefficients in the \textquotedblleft unitary form\textquotedblright, one has to start with unitary representations $A(f_{i}), B(f_{i}), \Phi(f_{i})$. At the end, when one constructs the matrix $M$ (defined in proposition \ref{Pclebsch2}) one has to normalize each column of $M$ to make it unitary.
\bigskip
\\
We will later list Clebsch-Gordan coefficients in the following form:
	\begin{enumerate}
	 \item $M$ has \textquotedblleft unitary form\textquotedblright.
	 \item Let $e_{ij}:=e_{i}^{a}\otimes e_{i}^{b}$ be basis vectors of $V_{a}\otimes V_{b}$, and let $D_{a}\otimes D_{b}=D^{\lambda}\oplus...$ We will give the Clebsch-Gordan coefficients for this decomposition by giving the basis vectors of $V^{\lambda}$:
	\begin{displaymath}
	u_{D^{\lambda}}^{D_{a}\otimes D_{b}}(i)=C^{\lambda}_{ijk}e_{jk},\quad i=1,...,n_{\lambda}.
	\end{displaymath}
	\end{enumerate}

\chapter{The finite subgroups of $SU(3)$}\label{SU3chapter}

The aim of this work is to systematically analyse finite family symmetry groups. The idea of a systematic analysis of finite groups for application to particle physics is not new. Consider for example the work of Fairbairn, Fulton and Klink \cite{fairbairn} on the application of finite subgroups of $SU(3)$ in particle physics and the analysis of Frampton and Kephart \cite{frampton} who studied all finite groups of order smaller than 32.
\medskip
\\
From proposition \ref{PY3} we know that Yukawa-couplings
	\begin{displaymath}
	Y_{\Gamma}=a^{\dagger}(\Gamma_{j}\phi_{j})b+\mathrm{H.c.}
	\end{displaymath}
that are invariant under
	\begin{displaymath}
	a\mapsto D_{a}a, \quad b\mapsto D_{b}b,\quad \phi\mapsto D_{\phi}\phi
	\end{displaymath}
split up into independent $G$-invariant Yukawa-couplings if one of the representations $D_{a},D_{b},D_{\phi}$ is reducible. Therefore Yukawa-couplings that are invariant under the action of irreducible representations of a finite group $G$ form the basic building blocks of general $G$-invariant Yukawa-couplings.
\medskip
\\
Since every fermion family consists (to current knowledge) of 3 members we will investigate $G$-invariant Yukawa-couplings where $D_{a}$ and $D_{b}$ are 3-dimen\-sion\-al. Furthermore we will restrict ourselves to the special case of 3-dimensional irreducible representations $D_{a}$ and $D_{b}$. The central question is therefore:
	\begin{quote}
	\textit{Can we find all possible Yukawa-couplings that are invariant under the action of 3-dimensional irreducible representations $D_{a},D_{b}$ of a finite group $G$?}
	\end{quote}
This problem is equivalent to the question for all possible Clebsch-Gordan coefficients for $\textbf{\underline{3}}\otimes \textbf{\underline{3}}$-tensor products of $3$-dimensional irreducible representations of all finite groups.
\medskip
\\
This problem can be simplified using theorem \ref{TA42}:
\medskip
\\
\textit{Every representation of a finite group is equivalent to a unitary representation.}
\medskip
\\
Let now $D$ be a 3-dimensional irreducible representation of a finite group $G$. $\Rightarrow$ $D(G)$ is isomorphic to a finite subgroup of $U(3)$. It is therefore enough for our purpose to find all 3-dimensional (faithful) irreducible representations of finite subgroups of $U(3)$. Knowing all these irreducible representations one could calculate the corresponding Clebsch-Gordan coefficients. This would lead to
	\begin{itemize}
	 \item knowledge of all $G$-invariant Yukawa-couplings (invariant under irre\-duc\-i\-ble $D_{a},D_{b}$)
	 \item the possibility to create a list of \underline{all} possible Dirac mass matrices of models involving a $G$-invariant Yukawa-coupling (invariant under irreducible $D_{a},D_{b}$).
	\end{itemize}
Unfortunately the finite subgroups of $U(3)$ have (to our knowledge) not been classified yet. The finite subgroups of $SU(3)$ on the other hand were already classified at the beginning of the 20\textsuperscript{th} century by Miller, Dickson and Blichfeldt in \cite{miller}. This work is therefore restricted to a systematic analysis of the finite subgroups of $SU(3)$.
\bigskip
\bigskip
\\
In this chapter we will systematically list all known finite subgroups of $SU(3)$. Especially we will list all known finite subgroups of $SU(3)$ that have three-dimensional irreducible representations. For these subgroups we will either derive the irreducible representations and calculate the Clebsch-Gordan coefficients, or we will list these properties giving the references where the desired data can be found.

\section{The non-Abelian finite subgroups of $SU(3)$}\label{nonabeliansu3section}

The finite subgroups of $SU(3)$ have already been classified at the beginning of the 20\textsuperscript{th} century by Miller, Dickson and Blichfeldt in their textbook \cite{miller}. However, Miller et al. only classified the groups and listed their generators, but they did not calculate character tables, tensor products or Clebsch-Gordan coefficients.
\medskip
\\
Fairbairn, Fulton and Klink were the first who studied the applications of finite subgroups of $SU(3)$ to particle physics \cite{fairbairn}. In their paper \cite{fairbairn} they listed character tables and generators for most of the finite subgroups of $SU(3)$. From this paper there emerged a collaboration - let us call it FFK/BLW collaboration\footnote{FFK/BLW = Fairbairn, Fulton, Klink/ Bovier, L\"uling, Wyler.} - between two groups of physicists who studied the finite subgroups of $SU(3)$ in more detail.
\medskip
\\
This collaboration consisted of the participants Fairbairn, Fulton, Klink (FFK) as well as Bovier, L\"uling, Wyler (BLW). The main papers of this collaboration are \cite{fairbairn,BLW1,BLW2,fairbairn2}.
\bigskip
\\
Following \cite{fairbairn} the finite subgroups of $SU(3)$ can be divided into three different types:
	
\begin{center}
	\begin{itemize}
	 \item Abelian finite subgroups of $SU(3)$ (including all subgroups of $U(1)$)
	 \item Non-Abelian finite subgroups of $SU(2)$ and $SO(3)$
	 \item Non-Abelian finite subgroups of $SU(3)$ that are not subgroups of $SU(2)$ or $SO(3)$.
	\end{itemize}
\end{center}

\begin{lemma}\label{LSU31}
	All irreducible representations of Abelian groups are one-di\-men\-sion\-al.
\end{lemma}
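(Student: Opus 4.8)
The plan is to show that in any finite-dimensional complex irreducible representation $D$ of an Abelian group $G$ every operator $D(g)$ acts as a scalar multiple of the identity; once this is established, irreducibility immediately forces the representation space to be one-dimensional, because then every line in it is an invariant subspace.

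First I would fix an arbitrary $g\in G$. Since the representation space $V$ is a vector space over $\mathbb{C}$, the operator $D(g)$ has at least one eigenvalue $\lambda$; let $V_{\lambda}=\ker(D(g)-\lambda\,\mathbbm{1})\neq\{0\}$ be the corresponding eigenspace. The key step is that, because $G$ is Abelian, $D(h)D(g)=D(g)D(h)$ for every $h\in G$, so for $v\in V_{\lambda}$ one gets $D(g)\bigl(D(h)v\bigr)=D(h)D(g)v=\lambda\,D(h)v$, i.e. $D(h)v\in V_{\lambda}$. Hence $V_{\lambda}$ is a nonzero $G$-invariant subspace of $V$, and irreducibility yields $V_{\lambda}=V$, that is $D(g)=\lambda\,\mathbbm{1}$. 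As $g$ was arbitrary, every $D(g)$ is a scalar, so every one-dimensional subspace of $V$ is $G$-invariant, and irreducibility forces $\dim V=1$.

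An essentially equivalent route, which I could present instead, uses Theorem \ref{TA42} to assume $D$ unitary, so that $\{D(g)\}_{g\in G}$ is a commuting family of normal matrices; Theorem \ref{Tclebsch3} then lets one diagonalize them simultaneously, and each common eigenspace is $G$-invariant, again giving $\dim V=1$. One may also simply invoke Schur's lemma, of which the scalarity argument above is exactly the special case needed here.

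The proof is short and I do not foresee a real obstacle; the only point that must not be glossed over is that we work over $\mathbb{C}$ — it is precisely algebraic closedness, guaranteeing the existence of an eigenvalue, that makes $V_{\lambda}$ nonzero. This hypothesis is genuinely needed: the statement fails over $\mathbb{R}$, as the cyclic rotation group $Z_{n}$ with $n\geq 3$ possesses a two-dimensional real irreducible representation. In the present paper this is not an issue, since by convention all representation spaces are complex.
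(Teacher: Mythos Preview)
Your proof is correct and follows essentially the same approach as the paper: show that every $D(g)$ acts as a scalar, then conclude $\dim V=1$ from irreducibility. The only cosmetic difference is that the paper invokes Schur's lemma (Lemma~\ref{LA50}) directly to obtain $D(b)=\lambda(b)\,id$, whereas you unpack the underlying eigenspace argument by hand before remarking that this is precisely Schur's lemma; the content is identical.
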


\begin{proof}
	Let $D$ be an irreducible representation of an Abelian finite group $G$ on a vectorspace $V_{D}$. It follows
	\begin{displaymath}
		D(a)D(b)=D(b)D(a) \quad\forall a,b\in G.
	\end{displaymath}
	Using Schur's lemma (\ref{LA50})
	we see that $D(b)=\lambda(b)id$, $\lambda:G\rightarrow \mathbb{C}$. Since $D$ is irreducible $[D(b)]=\lambda(b)$ (if $[D(b)]=\lambda(b) \mathbbm{1}_{n}, n>1,$ it would be reducible), thus $V_{D}$ is one-dimensional.
\end{proof}
\hspace{0mm}\\
We are only interested in groups that have three-dimensional irreducible representations, thus we can rule out all Abelian groups.
\medskip
\\
Now for the finite non-Abelian subgroups of $SU(2)$. As it is commonly known $SU(2)$ is homomorphic to $SO(3)$, the group of proper rotations in $\mathbb{R}^{3}$. Using this fact one can construct all finite subgroups of $SU(2)$ from the finite subgroups of $SO(3)$, which can be found in nearly every textbook on group theory. They are a subset of the set of so-called \textit{point groups}. The finite non-Abelian subgroups of $SO(3)$ are listed in table \ref{SO3subgroups}.

\begin{table}[h]
\begin{center}
\renewcommand{\arraystretch}{1.4}
\begin{tabular}{|llcl|}
\firsthline
	Type of subgroups &  Subgroup & Symbol & Order \\
\hline
	Polyhedral groups  & tetrahedral group & $T$ & 12\\
	 & octahedral group & $O$ & 24\\
	 & icosahedral group & $I$ & 60\\
\hline
	Dihedral groups $D_{n}, n\in\mathbb{N}\backslash\{0\}$  & dihedral group & $D_{n}$ & $2n$\\
\lasthline
\end{tabular}
\caption[Non-Abelian finite subgroups of $SO(3)$]{Non-Abelian finite subgroups of $SO(3)$ as given in \cite{hamermesh}.}
\label{SO3subgroups}
\end{center}
\end{table}
\hspace{0mm}\\
It turns out that for every finite subgroup $G$ of $SO(3)$, there exists a \textquotedblleft double cover\textquotedblright\hspace{1mm} $\tilde{G}\subset SU(2)$ with $\mathrm{ord}(\tilde{G})=2\mathrm{ord}(G)$.
\bigskip
\\
The third kind of finite subgroups of $SU(3)$ are the non-Abelian finite subgroups of $SU(3)$ that cannot be interpreted as finite subgroups of $SU(2)$ or $SO(3)$. These are the groups investigated by the FFK/BLW collaboration. Fairbairn et al. list the following groups in \cite{fairbairn}:
	\begin{itemize}
	 \item $\Sigma(n\phi)$, $n=36,72,216,360$
	 \item $\Sigma(m)$, $m=60,168$
	 \item $\Delta(3n^{2})$, $n\in\mathbb{N}\backslash \{0,1\}$
	 \item $\Delta(6n^{2})$, $n\in\mathbb{N}\backslash \{0,1\}$
	\end{itemize}
It will turn out later that the group $\Sigma(60)$ is identical with the icosahedral group $I$ which is a finite subgroup of $SO(3)$, thus it is already included in the list of finite subgroups of $SO(3)$ and $SU(2)$. Furthermore $\Delta(3\cdot 2^2)\simeq A_4\simeq T$, $\Delta(6\cdot 2^2)\simeq S_4\simeq O$ \cite{escobar}.
\medskip
\\
At this point the FFK/BLW collaboration started when Bovier, L\"uling and Wyler published their paper \cite{BLW1}. In this paper they claimed that some of the groups they had considered in their earlier paper \cite{BLW2} are finite subgroups of $SU(3)$ that are not included in the list of FFK \cite{fairbairn}.
\\
In the last step of the investigations of the collaboration Fairbairn and Fulton found out that not all (but some) groups found by BLW in \cite{BLW1} are indeed new subgroups of $SU(3)$, which was published in \cite{fairbairn2}. BLW accepted the corrections of Fairbairn and Fulton.
\\
The analysis of Miller et al. \cite{miller} mentions two series (C) and (D) of groups that were not further investigated by them. We will later see that the series $\Delta(3n^{2})$, $\Delta(6n^{2})$ and the new groups found by FFK/BLW form subsets of (C) and (D), thus it is likely that there exist further new groups of the types (C) and (D) that have not been investigated by FFK/BLW. We will analyse the new groups found by FFK/BLW and the series (C) and (D) at the end of the chapter.
\bigskip
\\
Before we will start analysing the finite subgroups of $SU(3)$ we will investigate two types of Clebsch-Gordan decompositions that will play an important role in the following.

\section{Two important types of Clebsch-Gordan decompositions}\label{CGCdecosubsect}
Consider a 3-dimensional irreducible representation $\textbf{\underline{3}}$ of a finite group $G$, then there are two basic forms of tensor products that can be formed using $\textbf{\underline{3}}$, namely
	\begin{displaymath}
	\textbf{\underline{3}}\otimes \textbf{\underline{3}}\quad\quad\mbox{and}\quad\quad \textbf{\underline{3}}\otimes \textbf{\underline{3}}^{\ast}.
	\end{displaymath}
Though the explicit type of irreducible representations contained in these tensor products depends on the group one can make general statements about the Clebsch-Gordan decompositions of $\textbf{\underline{3}}\otimes \textbf{\underline{3}}$ and $\textbf{\underline{3}}\otimes \textbf{\underline{3}}^{\ast}$.

\begin{lemma}\label{CGdecompL1}
Let $\textbf{\underline{3}}$ be a 3-dimensional irreducible representation of a finite group $G$, then the vector space $V_{\textbf{\underline{3}}}\otimes V_{\textbf{\underline{3}}}$ has a $3$-dimensional invariant subspace $V_{a}$ and a $6$-dimensional invariant subspace $V_{s}$. Bases of the invariant subspaces are given by
	\begin{displaymath}
	\begin{split}
	V_{a}:\enspace & u_{i}=\varepsilon_{ijk}e_{j}\otimes e_{k},\\
	V_{s}:\enspace & u_{i}=e_{(i)}\otimes e_{(i)},\\
	& 
	u_{i+3}=\vert\varepsilon_{ijk}\vert e_{j}\otimes e_{k},
	\end{split}
	\end{displaymath}
where $\{e_{j}\}_{j=1,2,3}$ is a basis of $V_{\textbf{\underline{3}}}$ and $i=1,2,3$. If the resulting 3- and 6-di\-men\-sion\-al representations corresponding to $V_{a}$ and $V_{s}$ are irreducible we find a Clebsch-Gordan decomposition of the form
	\begin{equation}\label{33CGdecomp}
	\textbf{\underline{3}}\otimes \textbf{\underline{3}}=\textbf{\underline{3}}_{a}\oplus \textbf{\underline{6}}_{s}.
	\end{equation}
\end{lemma}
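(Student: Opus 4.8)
The plan is to realize the two invariant subspaces as the symmetric and antisymmetric parts of $V_{\textbf{\underline{3}}}\otimes V_{\textbf{\underline{3}}}$ with respect to the flip exchanging the two tensor factors, and then to check that the vectors listed in the statement are bases of these parts. The whole argument is representation-theoretically soft; the only thing that carries content is the observation that the flip commutes with the diagonal $G$-action.

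First I would introduce the linear operator $\tau$ on $V_{\textbf{\underline{3}}}\otimes V_{\textbf{\underline{3}}}$ determined by $\tau(e_{j}\otimes e_{k})=e_{k}\otimes e_{j}$. Since $\tau^{2}=\mathrm{id}$, the space splits into the $(+1)$- and $(-1)$-eigenspaces $V_{s}$ and $V_{a}$ of $\tau$, and a dimension count (confirmed by the explicit bases below) gives $\dim V_{s}=6$, $\dim V_{a}=3$. The key point is that $\tau$ commutes with the diagonal action of $G$,
\[
(D(g)\otimes D(g))\,\tau=\tau\,(D(g)\otimes D(g))\qquad\forall\,g\in G,
\]
simply because applying $D(g)$ to each of the two factors does not depend on the order of the factors. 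Hence both eigenspaces of $\tau$ are $G$-invariant, which already establishes the existence of a $3$-dimensional invariant subspace $V_{a}$ and a $6$-dimensional invariant subspace $V_{s}$ of $V_{\textbf{\underline{3}}}\otimes V_{\textbf{\underline{3}}}$.

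Next I would verify the explicit bases. For $V_{a}$ the vectors $u_{i}=\varepsilon_{ijk}e_{j}\otimes e_{k}$ ($i=1,2,3$, summed over $j,k$) are antisymmetric under $\tau$ because $\varepsilon_{ijk}$ is, and they are manifestly linearly independent (for instance $u_{1}=e_{2}\otimes e_{3}-e_{3}\otimes e_{2}$, etc.), so they span the $3$-dimensional space $V_{a}$. For $V_{s}$ the three vectors $e_{(i)}\otimes e_{(i)}$ together with the three vectors $u_{i+3}=|\varepsilon_{ijk}|e_{j}\otimes e_{k}=e_{j}\otimes e_{k}+e_{k}\otimes e_{j}$ (with $\{j,k\}$ the complement of $\{i\}$ in $\{1,2,3\}$) are all $\tau$-symmetric and visibly linearly independent, hence a basis of the $6$-dimensional space $V_{s}$. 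As an optional cross-check one can compute directly, using the standard cofactor identity $\varepsilon_{ijk}D_{mj}D_{nk}=\det D\,(D^{-1})_{il}\,\varepsilon_{lmn}$, that $(D(g)\otimes D(g))u_{i}=\det D(g)\,(D(g)^{-1})_{il}\,u_{l}$; this re-confirms invariance of $V_{a}$ and incidentally identifies the subrepresentation on $V_{a}$ as $\det\otimes\textbf{\underline{3}}^{\ast}$ (which is $\textbf{\underline{3}}^{\ast}$ for the $SU(3)$-subgroups we are interested in).

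Finally, \emph{if} the $3$-dimensional representation carried by $V_{a}$ and the $6$-dimensional representation carried by $V_{s}$ are irreducible, then by the very definition of the Clebsch-Gordan decomposition the splitting constructed above is exactly $\textbf{\underline{3}}\otimes\textbf{\underline{3}}=\textbf{\underline{3}}_{a}\oplus\textbf{\underline{6}}_{s}$. I do not anticipate a genuine obstacle: the steps that need care are merely that $\tau$ really commutes with $D(g)\otimes D(g)$ and that the six-plus-three listed vectors are linearly independent; the irreducibility of the two pieces is an explicit hypothesis of the lemma and therefore is not something to be proved here. If anything is to be called the crux, it is the commutation relation, since that is what promotes the purely linear-algebraic symmetric/antisymmetric splitting into a decomposition into $G$-subrepresentations.
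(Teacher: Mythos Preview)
Your proof is correct and follows essentially the same route as the paper: the paper invokes its Proposition \ref{PA86} (invariance of the symmetric and antisymmetric subspaces of $V\otimes V$), which is precisely your flip-operator argument spelled out. Your optional cross-check identifying the action on $V_a$ as $\det\otimes\textbf{\underline{3}}^{\ast}$ is a nice bonus; the paper proves this separately as Proposition \ref{PSU32b}.
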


\begin{proof}
For the proof of this lemma we use proposition \ref{PA86}, which states that the symmetric and antisymmetric subspaces
	\begin{displaymath}
	\begin{split}
	&V_{a}:=\mathrm{Span}(e_{[i}\otimes e_{j]})\quad \mathrm{dim}V_{a}=\frac{n(n-1)}{2},\\
	&V_{s}:=\mathrm{Span}(e_{(i}\otimes e_{j)})\quad \mathrm{dim}V_{s}=\frac{n(n+1)}{2}
	\end{split}
	\end{displaymath}
are invariant subspaces of $V\otimes V$ ($\mathrm{dim}V=n$). Here $\mathrm{dim}V_{\textbf{\underline{3}}}=3$, thus the symmetric subspace is invariant and 6-dimensional, so we have found $V_{\textbf{\underline{6}}}$. The remaining part of $V_{\textbf{\underline{3}}}\otimes V_{\textbf{\underline{3}}}$ must be $V_{\textbf{\underline{3}}}$, which turns out to be the antisymmetric subspace. We will name the corresponding representations $\textbf{\underline{3}}_{a}$ and $\textbf{\underline{6}}_{s}$.
\end{proof}
\hspace{0mm}\\
We can also read off the Clebsch-Gordan coefficients\footnote{Remark: The coefficients listed here are not normalized. To obtain normalized Clebsch-Gordan coefficients one has to construct the $9\times 9$-matrix $M$ that reduces the matrix representation of the tensor product and normalize each column of $M$. For $V_{a}$ one would get $\Gamma_{ijk}=\frac{1}{\sqrt{2}}\varepsilon_{ijk}$ for example.} for the Clebsch-Gordan decomposition (\ref{33CGdecomp}):
	\begin{displaymath}
	\begin{split}
	V_{a}:\enspace & u_{i}=\varepsilon_{ijk}e_{j}\otimes e_{k}=\Gamma_{ijk}e_{j}\otimes e_{k}\Rightarrow \Gamma_{ijk}=\varepsilon_{ijk},\\
	V_{s}:\enspace & u_{i}=e_{(i)}\otimes e_{(i)}=\Gamma_{ijk}e_{j}\otimes e_{k}\Rightarrow \Gamma_{ijk}=\delta_{(i)j}\delta_{(i)k},\\
	& 
	u_{i+3}=\vert\varepsilon_{ijk}\vert e_{j}\otimes e_{k}=\Gamma_{i+3\hspace{0.3mm} jk}e_{j}\otimes e_{k}\Rightarrow \Gamma_{i+3\hspace{0.3mm} jk}=\vert \varepsilon_{ijk}\vert.
	\end{split}
	\end{displaymath}
Let us now explicitly construct $\textbf{\underline{3}}_{a}$:

\begin{prop}\label{PSU32b}
Let $\textbf{\underline{3}}\otimes \textbf{\underline{3}}=\textbf{\underline{3}}_{a}\oplus \textbf{\underline{6}}_{s}$, then
	\begin{displaymath}
	\textbf{\underline{3}}_{a}=(\mathrm{det}\textbf{\underline{3}})\otimes \textbf{\underline{3}}^{\ast}.
	\end{displaymath}
Note that $\mathrm{det}\textbf{\underline{3}}$ is a 1-dimensional representation of the finite group.
\end{prop}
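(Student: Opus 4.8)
The plan is to compute directly the matrices by which $G$ acts on the three-dimensional antisymmetric subspace $V_{a}$ and recognise them as those of $(\det\textbf{\underline{3}})\otimes\textbf{\underline{3}}^{\ast}$. By Theorem \ref{TA42} I may take $D=[\textbf{\underline{3}}]$ to be a unitary matrix representation. Then $g\mapsto\det D(g)$ is a homomorphism $G\to\mathbb{C}^{\times}$ (with image in the roots of unity since $G$ is finite), hence a one-dimensional representation, which is exactly what $\det\textbf{\underline{3}}$ means; and $\textbf{\underline{3}}^{\ast}$ is the representation with matrices $D(g)^{\ast}$.

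Next I would use the explicit basis of $V_{a}$ from Lemma \ref{CGdecompL1}, namely $u_{i}=\varepsilon_{ijk}\,e_{j}\otimes e_{k}$ for $i=1,2,3$. Since $(D(g)\otimes D(g))(e_{j}\otimes e_{k})=D_{mj}(g)D_{nk}(g)\,e_{m}\otimes e_{n}$, I get $(D(g)\otimes D(g))u_{i}=\varepsilon_{ijk}D_{mj}(g)D_{nk}(g)\,e_{m}\otimes e_{n}$. The one computational input is the Levi-Civita identity
$$\varepsilon_{ijk}\,D_{pi}\,D_{mj}\,D_{nk}=(\det D)\,\varepsilon_{pmn},$$
which follows from total antisymmetry in $p,m,n$ together with the Leibniz formula for the determinant. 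Contracting it with $(D^{-1})_{qp}$ and using $(D^{-1})_{qp}D_{pi}=\delta_{qi}$ rewrites the sum $\varepsilon_{ijk}D_{mj}D_{nk}$ as $\det D\,(D^{-1})_{ip}\,\varepsilon_{pmn}$. Substituting this back gives $(D(g)\otimes D(g))u_{i}=\det D(g)\,(D(g)^{-1})_{ip}\,u_{p}$, so with respect to the basis $\{u_{p}\}$ the representation $\textbf{\underline{3}}_{a}$ has matrices $g\mapsto\det D(g)\,(D(g)^{-1})^{T}$. Finally, unitarity of $D$ gives $(D^{-1})^{T}=\overline{D}=D^{\ast}$, so $\textbf{\underline{3}}_{a}$ is carried by $g\mapsto\det D(g)\cdot D(g)^{\ast}$, which is precisely $(\det\textbf{\underline{3}})\otimes\textbf{\underline{3}}^{\ast}$.

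I expect no serious obstacle here: conceptually this is just the linear-algebra isomorphism $\Lambda^{2}V\cong V^{\ast}\otimes\Lambda^{3}V$ for a three-dimensional $V$, made $G$-equivariant, with $\Lambda^{3}V$ furnishing the one-dimensional representation $\det\textbf{\underline{3}}$ — and one could instead phrase the whole argument that way. The only place that needs a moment's care is the index bookkeeping in the contraction step and, relatedly, the identification of the contragredient $(D^{-1})^{T}$ with the complex conjugate $D^{\ast}$; that last identification is where unitarity of $D$ (and hence Theorem \ref{TA42}) is actually used.
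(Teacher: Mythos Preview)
Your proof is correct and follows essentially the same route as the paper: both reduce to the Levi--Civita determinant identity $\varepsilon_{ijk}D_{pi}D_{mj}D_{nk}=(\det D)\,\varepsilon_{pmn}$, contract off one factor of $D$, and conclude that $[\textbf{\underline{3}}_{a}]=\det D\cdot(D^{-1})^{T}$ before passing to $D^{\ast}$. The only cosmetic difference is that the paper phrases the computation through the invariance equations of Proposition~\ref{PY9}, whereas you act directly on the basis $u_{i}=\varepsilon_{ijk}e_{j}\otimes e_{k}$; the underlying algebra is identical.
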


\begin{proof}
The (non normalized) Clebsch-Gordan coefficients for $\textbf{\underline{3}}\otimes \textbf{\underline{3}}\rightarrow \textbf{\underline{3}}_{a}$ are given by
	\begin{displaymath}
	\Gamma_{ijk}=\varepsilon_{ijk},
	\end{displaymath}
thus the corresponding invariance equations are ($\rightarrow$ proposition \ref{PY9})
	\begin{equation}\label{PSU32bequ1}
	U_{ir}[(D^{-1})^T]_{js}[(D^{-1})^T]_{kt}\varepsilon_{ijk}=\varepsilon_{rst},
	\end{equation}
where we have defined $[\textbf{\underline{3}}_{a}]=U$, $[\textbf{\underline{3}}]=D$. Multiplying equation (\ref{PSU32bequ1}) by $[U^{-1}]_{rp}[(D^{-1})^T]_{pz}$ we find
	\begin{displaymath}
	\underbrace{[(D^{-1})^T]_{iz}[(D^{-1})^T]_{js}[(D^{-1})^T]_{kt}\varepsilon_{ijk}}_{\mathrm{det}((D^{-1})^T)\varepsilon_{zst}}=\varepsilon_{rst}\underbrace{[U^{-1}]_{rp}[(D^{-1})^T]_{pz}}_{[(D^{T}U)^{-1}]_{rz}}
	\end{displaymath}
Multiplication with $\varepsilon_{ast}$ (using $\varepsilon_{zst}\varepsilon_{ast}=2\delta_{za}$) leads to
	\begin{displaymath}
	\frac{\delta_{za}}{\mathrm{det}D}=[(D^{T}U)^{-1}]_{az},
	\end{displaymath}
which is in matrix form
	\begin{displaymath}
	U=\mathrm{det}D\cdot (D^{-1})^{T},
	\end{displaymath}
thus
\begin{displaymath}
	\textbf{\underline{3}}_{a}=(\mathrm{det}\textbf{\underline{3}})\otimes (\textbf{\underline{3}}^{-1})^T.
	\end{displaymath}
Since $(\textbf{\underline{3}}^{-1})^T$ is equivalent to $\textbf{\underline{3}}^{\ast}$ for representations of finite groups we can also write
	\begin{displaymath}
	\textbf{\underline{3}}_{a}=(\mathrm{det}\textbf{\underline{3}})\otimes \textbf{\underline{3}}^{\ast}.
	\end{displaymath}
\end{proof}
\hspace{0mm}\\
For the case of $\textbf{\underline{3}}\otimes \textbf{\underline{3}}^{\ast}$ we use the following lemma:

\begin{sublemma}\label{CGdecompSL}
Let $D$ be an $n$-dimensional unitary irreducible representation of a finite group $G$. Then
	\begin{displaymath}
	\frac{1}{\sqrt{n}}e_{j}\otimes e_{j}
	\end{displaymath}
spans a subspace of $\mathbb{C}^{n}\otimes \mathbb{C}^{n}$ which is invariant under $D\otimes D^{\ast}$. ($D$ shall be unitary with respect to the orthonormal basis $\{e_{j}\}_{j}$ of $\mathbb{C}^{n}$.)
\end{sublemma}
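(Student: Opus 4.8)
The plan is to compute directly the action of $D\otimes D^{\ast}$ on the vector $v:=\sum_{j}e_{j}\otimes e_{j}$ (the sum being the one left implicit in the statement) and to observe that unitarity of $D$ forces $v$ to be a fixed vector of $D\otimes D^{\ast}$; hence $\mathrm{Span}(v)$ is a one-dimensional invariant subspace, and the factor $\tfrac{1}{\sqrt{n}}$ merely normalizes it. No deeper structure is needed.

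First I would expand the action on a single summand. Writing $D_{kj}:=[D]_{kj}$ for the matrix elements of $D$ in the orthonormal basis $\{e_{j}\}_{j}$, we have $D\,e_{j}=\sum_{k}D_{kj}e_{k}$ and $D^{\ast}e_{j}=\sum_{l}\overline{D_{lj}}\,e_{l}$, so $(D\otimes D^{\ast})(e_{j}\otimes e_{j})=\sum_{k,l}D_{kj}\overline{D_{lj}}\,e_{k}\otimes e_{l}$. Summing over $j$ and interchanging the (finite) sums gives $(D\otimes D^{\ast})v=\sum_{k,l}\bigl(\sum_{j}D_{kj}\overline{D_{lj}}\bigr)e_{k}\otimes e_{l}=\sum_{k,l}(DD^{\dagger})_{kl}\,e_{k}\otimes e_{l}$. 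Now invoking unitarity, $DD^{\dagger}=\mathbbm{1}_{n}$, so $(DD^{\dagger})_{kl}=\delta_{kl}$ and the double sum collapses to $\sum_{k}e_{k}\otimes e_{k}=v$. Thus $(D\otimes D^{\ast})(g)\,v=v$ for every $g\in G$, which is exactly the assertion that the line spanned by $\tfrac{1}{\sqrt n}\,e_{j}\otimes e_{j}$ is invariant under $D\otimes D^{\ast}$.

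There is essentially no obstacle: the sublemma is a one-line consequence of $DD^{\dagger}=\mathbbm{1}_{n}$. The only point worth flagging in the write-up is the interpretation, since it is the reason the lemma is useful later: the invariant line just found carries the trivial representation, which is the abstract explanation of why $\textbf{\underline{3}}\otimes\textbf{\underline{3}}^{\ast}$ (and more generally $D\otimes D^{\ast}$) always contains a copy of $\textbf{\underline{1}}$, namely the $G$-invariant contraction pairing $V_{D}$ with its conjugate. One could alternatively derive the statement from Schur orthogonality of the matrix coefficients, but the direct computation above is shorter and additionally fixes the normalization $\tfrac{1}{\sqrt n}$.
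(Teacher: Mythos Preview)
Your proof is correct and essentially identical to the paper's own argument: both compute the action of $D\otimes D^{\ast}$ on $\sum_j e_j\otimes e_j$ directly, recognise the coefficient as $(DD^{\dagger})_{kl}$, and invoke unitarity to collapse the sum. The only difference is cosmetic---the paper carries the normalization $\tfrac{1}{\sqrt{n}}$ through the computation while you defer it to the end---and your closing remark about the trivial representation is a helpful addition not present in the paper's terse proof.
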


\begin{proof}
Let $De_{k}=[D]_{mk}e_{m}$.
	\begin{displaymath}
	\begin{split}
	D\otimes D^{\ast}(\frac{1}{\sqrt{n}}e_{j}\otimes e_{j})&=\frac{1}{\sqrt{n}}[D]_{kj}[D^{\ast}]_{mj}e_{k}\otimes e_{m}=\\
	&=\frac{1}{\sqrt{n}}\underbrace{([D][D]^{\dagger})_{km}}_{\delta_{km}}e_{k}\otimes e_{m}=\frac{1}{\sqrt{n}}e_{k}\otimes e_{k}.
	\end{split}
	\end{displaymath}
$\Rightarrow$ $\frac{1}{\sqrt{n}}(e_{j}\otimes e_{j})$ is invariant under $D\otimes D^{\ast}$.
\end{proof}
\hspace{0mm}\\
Using sublemma \ref{CGdecompSL} on our problem here, we find that
	\begin{displaymath}
	u=\frac{1}{\sqrt{3}}(e_{11}+ e_{22}+ e_{33})
	\end{displaymath}
spans a 1-dimensional invariant subspace of $V_{\textbf{\underline{3}}}\otimes V_{\textbf{\underline{3}}^{\ast}}$. The basis vectors of the remaining subspace $V_{\textbf{\underline{8}}}$ can be obtained by extending $u$ to an orthonormal basis of $\mathbb{C}^{9}$. So we have found:

\begin{lemma}\label{CGdecompL2}
Let $\textbf{\underline{3}}$ be a $3$-dimensional irreducible representation of a finite group $G$, then $V_{\textbf{\underline{3}}}\otimes V_{\textbf{\underline{3}}^{\ast}}$ has a 1-dimensional invariant subspace $V_{\textbf{\underline{1}}}$ and an 8-dimensional invariant subspace $V_{\textbf{\underline{8}}}$. If the corresponding representations are irreducible we have the Clebsch-Gordan decomposition
	\begin{displaymath}
	 \textbf{\underline{3}}\otimes \textbf{\underline{3}}^{\ast}=\textbf{\underline{1}}\oplus \textbf{\underline{8}}.
	\end{displaymath}
\end{lemma}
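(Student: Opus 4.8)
The plan is to derive the decomposition directly from Sublemma \ref{CGdecompSL} together with the elementary fact that, for a unitary representation, the orthogonal complement of an invariant subspace is again invariant (the device already used in the proof of complete reducibility around Theorem \ref{TA42}).

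First I would use Theorem \ref{TA42} to assume, without loss of generality, that $\textbf{\underline{3}}$ is unitary with respect to an orthonormal basis $\{e_j\}_{j=1,2,3}$ of $V_{\textbf{\underline{3}}}\simeq\mathbb{C}^3$; since the existence of an invariant subspace is not affected by passing to an equivalent representation, this reduction is harmless. Then $\textbf{\underline{3}}\otimes\textbf{\underline{3}}^{\ast}$ is unitary as well (a tensor product of unitary operators is unitary), and Sublemma \ref{CGdecompSL} applied with $n=3$ shows that
	\begin{displaymath}
	u=\tfrac{1}{\sqrt{3}}(e_{11}+e_{22}+e_{33})
	\end{displaymath}
spans a $1$-dimensional subspace $V_{\textbf{\underline{1}}}:=\mathrm{Span}(u)$ of $V_{\textbf{\underline{3}}}\otimes V_{\textbf{\underline{3}}^{\ast}}$ that is invariant under $\textbf{\underline{3}}\otimes\textbf{\underline{3}}^{\ast}$.

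Next I would set $V_{\textbf{\underline{8}}}:=V_{\textbf{\underline{1}}}^{\perp}$, the orthogonal complement with respect to the standard Hermitian inner product on $\mathbb{C}^{9}\simeq V_{\textbf{\underline{3}}}\otimes V_{\textbf{\underline{3}}^{\ast}}$, and argue that it is invariant: for $w\in V_{\textbf{\underline{8}}}$, $v\in V_{\textbf{\underline{1}}}$ and $g\in G$ one has $\langle(\textbf{\underline{3}}\otimes\textbf{\underline{3}}^{\ast})(g)w,\,v\rangle=\langle w,\,(\textbf{\underline{3}}\otimes\textbf{\underline{3}}^{\ast})(g^{-1})v\rangle=0$, since $(\textbf{\underline{3}}\otimes\textbf{\underline{3}}^{\ast})(g^{-1})v\in V_{\textbf{\underline{1}}}$. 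Counting dimensions gives $\dim V_{\textbf{\underline{8}}}=9-1=8$, so $V_{\textbf{\underline{3}}}\otimes V_{\textbf{\underline{3}}^{\ast}}=V_{\textbf{\underline{1}}}\oplus V_{\textbf{\underline{8}}}$ is a direct sum of invariant subspaces, with an explicit basis of $V_{\textbf{\underline{8}}}$ obtained by completing $u$ to an orthonormal basis of $\mathbb{C}^{9}$ exactly as indicated before the statement. If the representations on $V_{\textbf{\underline{1}}}$ (automatically irreducible, being one-dimensional) and on $V_{\textbf{\underline{8}}}$ are both irreducible, this is the asserted Clebsch-Gordan decomposition $\textbf{\underline{3}}\otimes\textbf{\underline{3}}^{\ast}=\textbf{\underline{1}}\oplus\textbf{\underline{8}}$.

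There is essentially no hard step: all of the real content sits in Sublemma \ref{CGdecompSL} plus the orthogonal-complement argument. The only points that deserve a sentence of care are that reducing to a unitary $\textbf{\underline{3}}$ does not change which subspaces are invariant, and that the $8$-dimensional piece need \emph{not} be irreducible for a general $G$ — which is precisely why the Clebsch-Gordan conclusion is stated conditionally and why no irreducibility argument is attempted in the proof.
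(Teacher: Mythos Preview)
Your proof is correct and follows essentially the same approach as the paper: apply Sublemma~\ref{CGdecompSL} with $n=3$ to obtain the one-dimensional invariant line spanned by $u=\tfrac{1}{\sqrt{3}}(e_{11}+e_{22}+e_{33})$, and then take its orthogonal complement to get the $8$-dimensional invariant subspace. The paper simply states that ``the basis vectors of the remaining subspace $V_{\textbf{\underline{8}}}$ can be obtained by extending $u$ to an orthonormal basis of $\mathbb{C}^9$'', whereas you spell out the unitarity argument for why $V_{\textbf{\underline{1}}}^{\perp}$ is invariant---but this is the same idea, just made more explicit.
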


\section{The finite subgroups of $SU(2)$ and $SO(3)$}
At the beginning of our considerations about the finite subgroups of $SU(2)$ we should list some properties of $SO(3)$ and its universal covering group $SU(2)$. These important relations and formulae may be well known to any reader familiar with quantum theory, therefore they are considered in appendix \ref{lieappendix}. So we will start with the definition of the so called \textquotedblleft double covers\textquotedblright\hspace{1mm} $\tilde{G}$ of $G\subset SO(3)$.

\begin{prop}\label{PSU33}
	Let
	\begin{gather*}
	\phi: SU(2)\rightarrow SO(3)\\
	e^{-i\alpha\vec{n}\cdot\frac{\vec{\sigma}}{2}}\mapsto e^{-i\alpha\vec{n}\cdot\vec{T}}
	\end{gather*}
	be a group homomorphism of $SU(2)$ on $SO(3)$ ($\alpha\in \mathbb{R}, \vec{n}\in\mathbb{R}^{3}, \|\vec{n}\|=1, (T^{i})^{jk}=\frac{1}{i}\varepsilon_{ijk}$). If $G$ is a finite subgroup of $SO(3)$, $\mathrm{ord}(G)=g$, then there exists a subgroup $\tilde{G}$ of $SU(2)$, $\mathrm{ord}(\tilde{G})=2g$ s.t.
	\begin{displaymath}
		\phi(\tilde{G})=G.
	\end{displaymath}
	We call $\tilde{G}$ the \textit{double cover} of $G$.
\end{prop}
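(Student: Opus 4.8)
The plan is to take $\tilde{G} := \phi^{-1}(G) = \{U\in SU(2) : \phi(U)\in G\}$, the full preimage of $G$ under the covering homomorphism $\phi$, and to verify the three required properties: that $\tilde{G}$ is a subgroup of $SU(2)$, that $\phi(\tilde{G}) = G$, and that $\mathrm{ord}(\tilde{G}) = 2g$.

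First I would recall from the standard theory of $SU(2)$ and $SO(3)$ (the relevant relations are collected in appendix \ref{lieappendix}) that $\phi$ is a \emph{surjective} group homomorphism with kernel $\ker\phi = \{+\mathbbm{1}_2,\,-\mathbbm{1}_2\}$, so $\mathrm{ord}(\ker\phi) = 2$. Surjectivity is immediate from the definition, since every rotation $e^{-i\alpha\vec{n}\cdot\vec{T}}$ is the image of $e^{-i\alpha\vec{n}\cdot\vec{\sigma}/2}$; the kernel statement amounts to checking that $e^{-i\alpha\vec{n}\cdot\vec{\sigma}/2}$ represents the identity rotation exactly when it equals $\pm\mathbbm{1}_2$. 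I expect this identification of $\ker\phi$ to be the only genuinely non-formal ingredient, but it is classical and may be quoted from the appendix.

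Next, $\tilde{G}$ is a subgroup of $SU(2)$: since $G$ is a subgroup of $SO(3)$ and $\phi$ is a homomorphism, $\phi^{-1}(G)$ contains $\mathbbm{1}_2$ (because $\phi(\mathbbm{1}_2)=\mathbbm{1}_3\in G$) and is closed under products and inverses. Then $\phi(\tilde{G}) = \phi(\phi^{-1}(G)) = G$: the inclusion $\phi(\tilde{G})\subseteq G$ is clear, and for the reverse inclusion, given any $g\in G\subseteq SO(3)$, surjectivity of $\phi$ furnishes $U\in SU(2)$ with $\phi(U)=g$, whence $U\in\tilde{G}$ and $g\in\phi(\tilde{G})$.

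Finally, for the order count, restrict $\phi$ to $\tilde{G}$ to obtain a surjective homomorphism $\tilde{G}\to G$ whose kernel is $\ker\phi\cap\tilde{G} = \ker\phi$ (both $\pm\mathbbm{1}_2$ lie in $\tilde{G}$ since they map to $\mathbbm{1}_3\in G$). By the homomorphism (first isomorphism) theorem, $\tilde{G}/\ker\phi \cong G$, so $\mathrm{ord}(\tilde{G}) = \mathrm{ord}(G)\cdot\mathrm{ord}(\ker\phi) = 2g$. Equivalently, each fibre $\phi^{-1}(\{g\})$ for $g\in G$ is a coset of $\ker\phi$ in $SU(2)$ and hence has exactly two elements, and summing over the $g$ elements of $G$ yields $\mathrm{ord}(\tilde{G}) = 2g$. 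Everything beyond the structure of $\ker\phi$ is routine bookkeeping with the homomorphism.
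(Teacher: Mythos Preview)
Your proof is correct and follows exactly the approach the paper intends: define $\tilde{G}=\phi^{-1}(G)$ and use that $\phi$ is a surjective $2$-to-$1$ homomorphism (with $\ker\phi=\{\pm\mathbbm{1}_2\}$). The paper's own proof is a one-line citation of these facts from the appendix, so you have simply written out the routine verification in full.
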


\begin{proof}
$\phi$ is a surjective group homomorphism that is \textquotedblleft2 to 1\textquotedblright\hspace{1mm} (see appendix \ref{lieappendix}), therefore the proposition is true.
\end{proof}

\begin{prop}\label{PSU34}
Let $\tilde{C}$ be a conjugate class of $\tilde{G}$, then $\phi(\tilde{C})$ is a conjugate class of $G$.
\end{prop}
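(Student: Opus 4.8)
The plan is to exploit that $\phi$ restricted to $\tilde{G}$ is a surjective group homomorphism onto $G$ --- which is precisely what Proposition \ref{PSU33} provides --- together with the elementary fact that a surjective group homomorphism carries conjugacy classes onto conjugacy classes.

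Concretely, I would first pick a representative and write $\tilde{C}=\{\tilde{g}\tilde{a}\tilde{g}^{-1}\,:\,\tilde{g}\in\tilde{G}\}$ for some fixed $\tilde{a}\in\tilde{G}$. Applying $\phi$ and using the homomorphism property $\phi(\tilde{g}\tilde{a}\tilde{g}^{-1})=\phi(\tilde{g})\,\phi(\tilde{a})\,\phi(\tilde{g})^{-1}$, one obtains
\begin{displaymath}
\phi(\tilde{C})=\{\phi(\tilde{g})\,\phi(\tilde{a})\,\phi(\tilde{g})^{-1}\,:\,\tilde{g}\in\tilde{G}\}.
\end{displaymath}
The next step is to invoke surjectivity: since $\phi(\tilde{G})=G$, the element $\phi(\tilde{g})$ ranges over all of $G$ as $\tilde{g}$ ranges over $\tilde{G}$, so the set on the right equals $\{h\,\phi(\tilde{a})\,h^{-1}\,:\,h\in G\}$. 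This is by definition the conjugacy class of $\phi(\tilde{a})$ in $G$, which proves the claim.

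There is no real obstacle here; the one point worth stressing is that surjectivity of $\phi|_{\tilde{G}}$ is essential --- for a general homomorphism one would only get that $\phi(\tilde{C})$ is \emph{contained} in a single conjugacy class --- and this surjectivity is exactly what the double-cover construction of Proposition \ref{PSU33} guarantees. One could additionally note that, since $\ker\phi=\{\pm\mathbbm{1}_2\}$, the map $\tilde{C}\to\phi(\tilde{C})$ is either a bijection or two-to-one according to whether $-\tilde{a}$ lies in the same conjugacy class of $\tilde{G}$ as $\tilde{a}$, but this finer information is not needed for the statement.
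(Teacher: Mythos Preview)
Your proof is correct and follows essentially the same approach as the paper: pick a representative $\tilde{a}$, write $\tilde{C}=\{\tilde{g}\tilde{a}\tilde{g}^{-1}:\tilde{g}\in\tilde{G}\}$, apply the homomorphism $\phi$, and use surjectivity to conclude that $\phi(\tilde{C})$ is the full conjugacy class of $\phi(\tilde{a})$ in $G$. If anything, you are more explicit than the paper about why surjectivity is needed, and your closing remark about the kernel $\{\pm\mathbbm{1}_2\}$ anticipates precisely the content of the next corollary.
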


\begin{proof}
Every conjugate class can be constructed from a particular element of the group. Let $\tilde{C}$ be the conjugate class containing the element $\tilde{a}$.
	\begin{displaymath}
	\tilde{C}=\{\tilde{u}\tilde{a}\tilde{u}^{-1}\}_{\tilde{u}\in \tilde{G}}\Rightarrow \phi(\tilde{C})=\{uau^{-1}\}_{u\in G},
	\end{displaymath}
therefore $\phi(\tilde{C})$ is the conjugate class of $G$ containing $a=\phi(\tilde{a})$.
\end{proof}

\begin{cor}\label{CSU35}
Let $\tilde{a}\in \tilde{G}$ s.t. $\phi(\tilde{a})=a\in G$, and let $C_{a}$ be the conjugate class of $G$ that contains $a$. Then to every conjugate class $C_{a}$ of $G$, there exist the conjugate classes $C_{\tilde{a}}$ and $C_{-\tilde{a}}$ of $\tilde{G}$.
\end{cor}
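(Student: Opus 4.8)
The plan is to lean on two structural facts about the homomorphism $\phi$ from Proposition \ref{PSU33}: its kernel is $\ker\phi=\{\pm\mathbbm{1}_2\}$, so that $\phi$ is \textquotedblleft2 to 1\textquotedblright\ (recalled in appendix \ref{lieappendix}), and $-\mathbbm{1}_2$ is central in $SU(2)$, hence in $\tilde{G}$. First I would check that $-\mathbbm{1}_2$ actually lies in $\tilde{G}$: since $\mathrm{ord}(\tilde{G})=2\,\mathrm{ord}(G)$, the image is $G$, and every fibre of $\phi|_{\tilde{G}}$ has at most two elements, a counting argument forces every fibre to have exactly two; in particular the fibre over $e\in G$ is $\{\mathbbm{1}_2,-\mathbbm{1}_2\}$. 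Consequently, whenever $\tilde{a}\in\tilde{G}$ we also have $-\tilde{a}=(-\mathbbm{1}_2)\tilde{a}\in\tilde{G}$, so the conjugate classes $C_{\tilde{a}}$ and $C_{-\tilde{a}}$ of $\tilde{G}$ are well-defined.

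Next I would show that both of these classes lie above $C_a$, i.e.\ $\phi(C_{\tilde{a}})=\phi(C_{-\tilde{a}})=C_a$. This is immediate from Proposition \ref{PSU34} applied to the classes $C_{\tilde{a}}$ and $C_{-\tilde{a}}$, together with the observation $\phi(-\tilde{a})=\phi((-\mathbbm{1}_2)\tilde{a})=\phi(\tilde{a})=a$, so that the conjugate class of $G$ containing $\phi(-\tilde{a})$ is again $C_a$.

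The one substantive point is to verify that $C_{\tilde{a}}$ and $C_{-\tilde{a}}$ exhaust the full preimage, $\phi^{-1}(C_a)=C_{\tilde{a}}\cup C_{-\tilde{a}}$, which is what makes the pair of classes depend only on $C_a$ and not on the choice of lift $\tilde{a}$. The inclusion $\supseteq$ follows from the previous paragraph. For $\subseteq$, take $\tilde{x}\in\phi^{-1}(C_a)$, so $\phi(\tilde{x})=uau^{-1}$ for some $u\in G$; choosing any $\tilde{u}\in\phi^{-1}(u)\subset\tilde{G}$ gives $\phi(\tilde{u}\tilde{a}\tilde{u}^{-1})=uau^{-1}=\phi(\tilde{x})$, so $\tilde{x}$ and $\tilde{u}\tilde{a}\tilde{u}^{-1}$ differ by an element of $\ker\phi=\{\pm\mathbbm{1}_2\}$. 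In the first case $\tilde{x}=\tilde{u}\tilde{a}\tilde{u}^{-1}\in C_{\tilde{a}}$; in the second case, using centrality of $-\mathbbm{1}_2$, $\tilde{x}=-\tilde{u}\tilde{a}\tilde{u}^{-1}=\tilde{u}(-\tilde{a})\tilde{u}^{-1}\in C_{-\tilde{a}}$.

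The main obstacle is really just the bookkeeping around $\ker\phi$: one has to make sure that the relevant element $-\mathbbm{1}_2$ genuinely belongs to $\tilde{G}$ and is central, so that $-\tilde{a}$ is meaningful and $\tilde{u}(-\tilde{a})\tilde{u}^{-1}=-(\tilde{u}\tilde{a}\tilde{u}^{-1})$; everything else is a short chase through $\phi$. I would also add the remark that $C_{\tilde{a}}$ and $C_{-\tilde{a}}$ may or may not coincide — the corollary asserts only their existence — and that in any case $C_{-\tilde{a}}=(-\mathbbm{1}_2)\,C_{\tilde{a}}$, again by centrality of $-\mathbbm{1}_2$.
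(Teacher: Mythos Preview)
Your argument is correct, and its core (your second paragraph) is exactly the paper's proof: the paper simply invokes Proposition~\ref{PSU34} together with $\phi(\pm\tilde{a})=a$ and stops there. You go further in two respects. First, you supply the counting argument that $-\mathbbm{1}_2\in\tilde{G}$, which the paper leaves implicit in the phrase \textquotedblleft2 to 1\textquotedblright. Second, and more substantially, you prove the exhaustion $\phi^{-1}(C_a)=C_{\tilde{a}}\cup C_{-\tilde{a}}$, which the statement as written does not actually claim and the paper does not prove; the corollary only asserts the \emph{existence} of the two classes over $C_a$, not that they are the only ones. Your extra step is correct and is the natural sharpening of the statement, so there is no harm in including it, but be aware that the paper is content with the weaker assertion.
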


\begin{proof}
This follows from proposition \ref{PSU34} and the fact that $\phi(\pm \tilde{a})=a.$
\end{proof}

\begin{theorem}\label{TSU36}
Let $D$ be an irreducible representation of a finite subgroup $G$ of $SO(3)$. Then $\tilde{D}=D\circ \phi$ is an irreducible representation of $\tilde{G}$.
\end{theorem}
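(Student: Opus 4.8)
The plan is to verify directly that $\tilde{D} = D \circ \phi$ is a representation and then that its irreducibility follows from that of $D$ together with the surjectivity of $\phi$ on the relevant groups. First I would check that $\tilde{D}$ is a representation of $\tilde{G}$: since $\phi\colon SU(2) \to SO(3)$ restricts to a group homomorphism $\tilde{G} \to G$ (this is exactly Proposition \ref{PSU33}, where $\phi(\tilde{G}) = G$), and $D\colon G \to GL(V_D)$ is a homomorphism, the composition $\tilde{D} = D \circ \phi$ is a homomorphism $\tilde{G} \to GL(V_D)$. So $\tilde{D}$ is a representation of $\tilde{G}$ on the same vector space $V_D$.

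Next I would establish irreducibility. The key observation is that $\tilde{D}$ and $D$ have the same image: $\tilde{D}(\tilde{G}) = D(\phi(\tilde{G})) = D(G)$, because $\phi$ maps $\tilde{G}$ onto $G$. A subspace $W \subseteq V_D$ is invariant under $\tilde{D}$ if and only if it is invariant under every operator in $\tilde{D}(\tilde{G}) = D(G)$, which holds if and only if $W$ is invariant under $D$. Since $D$ is irreducible, the only such subspaces are $\{0\}$ and $V_D$; hence $\tilde{D}$ is irreducible as well.

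I do not anticipate a serious obstacle here — the statement is essentially the trivial functoriality of "pull back a representation along a surjective homomorphism." The only point that requires a moment's care is ensuring we use surjectivity of $\phi$ onto $G$ (not merely that $\phi$ is a homomorphism into $SO(3)$): without surjectivity the image $\tilde{D}(\tilde{G})$ could be a proper subgroup of $D(G)$, and a representation irreducible for $D(G)$ need not remain irreducible when restricted to a subgroup. Proposition \ref{PSU33} supplies precisely the surjectivity needed, so the argument goes through cleanly.
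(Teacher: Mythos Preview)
Your proof is correct and is in fact more direct than the paper's. Both arguments handle the representation part the same way (composition of homomorphisms), but for irreducibility the paper invokes the character criterion from Proposition~\ref{PA58}: it computes $(\chi_{\tilde D},\chi_{\tilde D})_{\tilde G}$ explicitly, using that $\chi_{\tilde D}(\tilde a)=\chi_{\tilde D}(-\tilde a)=\chi_D(\phi(\pm\tilde a))$ so that the sum over $\tilde G$ collapses to twice the sum over $G$, and concludes $(\chi_{\tilde D},\chi_{\tilde D})_{\tilde G}=(\chi_D,\chi_D)_G=1$. Your argument bypasses characters entirely by observing that surjectivity of $\phi\vert_{\tilde G}$ forces $\tilde D(\tilde G)=D(G)$ as sets of operators, so the invariant subspaces literally coincide. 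This is the cleaner route and makes transparent exactly where surjectivity enters; the paper's character computation, while perfectly valid, is really re-deriving this same fact through an average. The character approach does have the minor advantage of fitting the paper's later pattern (cf.\ Theorem~\ref{labour-saving-theorem}, which proves an analogous statement for factor groups by the same character-norm trick), but as a standalone proof yours is preferable.
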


\begin{proof}
$\tilde{G}$ is homomorphic to $G$, and $G$ is homomorphic to $D(G)$, therefore $\tilde{G}$ is homomorphic to $D(G)=D\circ\phi(\tilde{G})$, thus $D\circ\phi$ is a representation of $\tilde{G}$.
\\
To show the irreducibility we use proposition \ref{PA58}
: $D$ is irreducible if and only if $(\chi_{D},\chi_{D})=\frac{1}{\mathrm{ord}(G)}\sum_{a\in G}\chi_{D}(a)^{\ast}\chi_{D}(a)=1.$
\medskip
\\
We will now show $(\chi_{D},\chi_{D})_{G}=(\chi_{\tilde{D}},\chi_{\tilde{D}})_{\tilde{G}}$, which proves the theorem.
\medskip
\\
Using $\chi_{\tilde{D}}(\tilde{a})=\chi_{\tilde{D}}(-\tilde{a})=\chi_{D}(\phi(\pm\tilde{a}))$ we find:
	\begin{displaymath}
	\begin{split}
	(\chi_{\tilde{D}},\chi_{\tilde{D}})_{\tilde{G}}
	&=
	\frac{1}{2\mathrm{ord}(G)}\sum_{\tilde{a}\in\tilde{G}}\chi_{\tilde{D}}^{\ast}(\tilde{a})\chi_{\tilde{D}}(\tilde{a})=
	\\&=
	\frac{1}{2\mathrm{ord}(G)}2\sum_{a\in G}\chi_{D}^{\ast}(a)\chi_{D}(a)=(\chi_{D},\chi_{D})_{G}.
	\end{split}
	\end{displaymath}
\end{proof}

\subsection{The tetrahedral group $T$}\label{subsectionA4}
The tetrahedral group $T$ \cite{hamermesh,rajasekaran,he,altarelli1,ma,frampton2,altarelli2} is the smallest non-Abelian finite subgroup of $SO(3)$. It belongs to the polyhedral groups, which are groups of rotations that map the Platonic solids onto themselves. Table \ref{Platonic} lists the polyhedral groups and the subgroups of $S_{n}$ isomorphic to them. (The table is taken from \cite{ma}.)

\begin{table}[h]
\begin{center}
\renewcommand{\arraystretch}{1.4}
\begin{tabular}{|lcc|}
\firsthline
	Platonic solids & Group & Isomorphic subgroup of $S_{n}$ \\
\hline
	Tetrahedron  & $T$ & $A_{4}$\\
	Octahedron, cube & $O$ & $S_{4}$\\
	Dodecahedron, icosahedron  & $I$ & $A_{5}$\\
\lasthline
\end{tabular}
\caption[The Platonic solids and their corresponding rotation symmetry groups.]{The Platonic solids and their corresponding rotation symmetry groups. \cite{ma}}
\label{Platonic}
\end{center}
\end{table}
\hspace{0mm}\\
$T$ is isomorphic to $A_{4}$, which is the set of even permutations (=product of an even number of transpositions) of four objects. $A_{4}$ has two generators, for example one can take \cite{altarelli1}
	\begin{displaymath}
	S=(1\enspace 4)(2\enspace 3)\quad T=(1\enspace 2\enspace 3).
	\end{displaymath}
(For explanations about definitions and notations of permutations see section \ref{A1}.)
Forming products of $S$ and $T$ one finds
	\begin{displaymath}
	A_{4}=\{E,T,S,T^{2},ST,TS,ST^{2},T^{2}S,T^{2}ST,TST^{2},TST,STS\}
	\end{displaymath}
and the conjugate classes
	\begin{displaymath}
	\begin{split}
	& C_{E}=\{E\},\\
	& C_{S}=\{S,TST^{2},T^{2}ST\},\\
	& C_{T}=\{T,ST,TS,STS\},\\
	& C_{T^{2}}=\{T^{2},TST,T^{2}S,ST^{2}\}.
	\end{split}
	\end{displaymath}
$E$ always denotes the identity element. Since there are four conjugate classes there are four non-equivalent irreducible representations (theorem \ref{TA74}).
The dimensions of these representations must fulfil (theorem \ref{TA65})
	\begin{displaymath}
	n_{1}^{2}+n_{2}^{2}+n_{3}^{2}+n_{4}^{2}=\mathrm{ord}(A_{4})=12.
	\end{displaymath}
The only solution (up to label exchange) is
	\begin{displaymath}
	n_{1}=n_{2}=n_{3}=1, n_{4}=3.
	\end{displaymath}
We will now construct a four-dimensional representation of $A_{4}$, which will be reducible to a one-dimensional and a three-dimensional irreducible representation.
	\begin{displaymath}
	S=(1\enspace 4)(2\enspace 3)=\left(\begin{matrix}
	 1 & 2 & 3 & 4 \\
	 4 & 3 & 2 & 1                    \end{matrix}\right)\Rightarrow
	\hat{S}:=\left(\begin{matrix}
		 0 & 0 & 0 & 1 \\
		 0 & 0 & 1 & 0 \\
		 0 & 1 & 0 & 0 \\
		 1 & 0 & 0 & 0
	        \end{matrix}\right)
	\end{displaymath}
Since   \begin{displaymath}
	\hat{S}\left(\begin{matrix}
		 x_{1} \\
		 x_{2} \\
		 x_{3} \\
		 x_{4}
	       \end{matrix}\right)
	=\left(\begin{matrix}
		 x_{4} \\
		 x_{3} \\
		 x_{2} \\
		 x_{1}
	       \end{matrix}\right)
	=\left(\begin{matrix}
		 x_{S(1)} \\
		 x_{S(2)} \\
		 x_{S(3)} \\
		 x_{S(4)}
	       \end{matrix}\right),
	\end{displaymath}
$\hat{S}$ is a four-dimensional matrix representation of $S\in A_{4}$. Similarly we get
	\begin{displaymath}
	\hat{T}:=\left(\begin{matrix}
		 0 & 1 & 0 & 0 \\
		 0 & 0 & 1 & 0 \\
		 1 & 0 & 0 & 0 \\
		 0 & 0 & 0 & 1
	        \end{matrix}\right).
	\end{displaymath}
$\hat{S}$ and $\hat{T}$ have a common eigenvector $u_{1}$ to the eigenvalue 1. We construct an orthonormal basis consisting of $u_{1}$ and other eigenvectors of $\hat{S}$:
	\begin{displaymath}
	u_{1}=\frac{1}{2}\left(\begin{matrix}
		 1 \\
		 1 \\
		 1 \\
		 1
	       \end{matrix}\right),\enspace
	u_{2}=\frac{1}{2}\left(\begin{matrix}
		 1 \\
		 -1 \\
		 -1 \\
		 1
	       \end{matrix}\right),
	u_{3}=\frac{1}{2}\left(\begin{matrix}
		 1 \\
		 1 \\
		 -1 \\
		 -1
	       \end{matrix}\right),\enspace
	u_{4}=\frac{1}{2}\left(\begin{matrix}
		 1 \\
		 -1 \\
		 1 \\
		 -1
	       \end{matrix}\right).
	\end{displaymath}
$U:=\left(\begin{matrix}
		 u_{1} & u_{2} & u_{3} & u_{4}
	        \end{matrix}\right),$
	\begin{displaymath}
	\hat{S}':=U^{T}\hat{S}U=
	\left(\begin{matrix}
		 1 & 0 & 0 & 0 \\
		 0 & 1 & 0 & 0 \\
		 0 & 0 & -1 & 0 \\
		 0 & 0 & 0 & -1
	        \end{matrix}\right)
	,\enspace \hat{T}':=U^{T}\hat{T}U=\left(\begin{matrix}
		 1 & 0 & 0 & 0 \\
		 0 & 0 & 0 & -1 \\
		 0 & -1 & 0 & 0 \\
		 0 & 0 & 1 & 0
	        \end{matrix}\right).
	\end{displaymath}
We have found two irreducible representations (irreducibility can be checked by calculating characters and testing whether $(\chi_{D},\chi_{D})=1.$):
	\begin{equation}\label{A4rep}
	\textbf{\underline{1}}: S\mapsto 1,\enspace T\mapsto 1; \quad\quad \textbf{\underline{3}}: S\mapsto \left(\begin{matrix}
	 1 & 0 & 0 \\
	 0 & -1 & 0 \\
	 0 & 0 & -1
	\end{matrix}\right),\enspace
	T\mapsto
	\left(\begin{matrix}
	 0 & 0 & -1 \\
	 -1 & 0 & 0 \\
	 0 & 1 & 0
	\end{matrix}\right)
	\end{equation}
Using the fact that $S^{2}=T^{3}=E$ we can easily guess the other one-dimensional non-equivalent representations:
	\begin{displaymath}
	\textbf{\underline{1}}': S\mapsto 1,\enspace T\mapsto \omega; \quad\quad \textbf{\underline{1}}'': S\mapsto 1,\enspace T\mapsto \omega^{2},
	\end{displaymath}
where $\omega=e^{\frac{2\pi i}{3}}$. One can now also obtain the character table by taking traces. (The numbers in brackets denote the numbers of elements in the conjugate classes. For example $C_{S}(3)$ means that the conjugate class containing $S$ has three elements.)

\begin{table}
\begin{center}
\renewcommand{\arraystretch}{1.4}
\begin{tabular}{|l|cccc|}
\firsthline
	$A_{4}$ & $C_{E}(1)$ & $C_{S}(3)$ & $C_{T}(4)$ & $C_{T^{2}}(4)$\\
\hline
	\textbf{\underline{1}} & $1$ & $1$ & $1$ & $1$\\
	\textbf{\underline{1}}' & $1$ & $1$ & $\omega$ & $\omega^{2}$\\
	\textbf{\underline{1}}'' & $1$ & $1$ & $\omega^{2}$ & $\omega$\\
	\textbf{\underline{3}} & $3$ & $-1$ & $0$ & $0$\\
\lasthline
\end{tabular}
\caption{The character table of $A_{4}$.}
\label{A4charactertable}
\end{center}
\end{table}
\hspace{0mm}\\
We will now use the character table to calculate the tensor product $\textbf{\underline{3}}\otimes\textbf{\underline{3}}$ for $A_{4}$.

\begin{lemma}\label{LSU37}
Let $D^{(\alpha)}$ be an irreducible representation of a finite group $G$. The number of times $b_{\alpha}$ the irreducible representation $D^{(\alpha)}$ is containend in a tensor product $D_{a}\otimes D_{b}$ is given by
	\begin{displaymath}
	b_{\alpha}=(\chi_{D^{(\alpha)}},\chi_{a}\cdot \chi_{b}).
	\end{displaymath}
\end{lemma}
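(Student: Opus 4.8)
The plan is to reduce the statement to two standard facts about characters of finite groups: that characters are multiplicative under tensor products, and that the irreducible characters are orthonormal with respect to the Hermitian scalar product $(\chi,\psi)=\frac{1}{\mathrm{ord}(G)}\sum_{a\in G}\chi(a)^{\ast}\psi(a)$ used throughout this chapter (cf.\ proposition \ref{PA58}).

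First I would note that $\chi_{D_{a}\otimes D_{b}}(g)=\chi_{a}(g)\,\chi_{b}(g)$ for every $g\in G$. This is immediate from the definition of the tensor product representation: in a product basis of $V_{a}\otimes V_{b}$ the matrix $[(D_{a}\otimes D_{b})(g)]$ is the Kronecker product $[D_{a}(g)]\otimes[D_{b}(g)]$, and the trace of a Kronecker product factorizes, $\mathrm{tr}\big([D_{a}(g)]\otimes[D_{b}(g)]\big)=\mathrm{tr}[D_{a}(g)]\cdot\mathrm{tr}[D_{b}(g)]$.

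Next, since every representation of a finite group is completely reducible (theorem \ref{TA42} together with the usual invariant-subspace argument), we may write $D_{a}\otimes D_{b}\simeq\bigoplus_{\alpha}b_{\alpha}D^{(\alpha)}$, where the sum runs over the inequivalent irreducible representations of $G$ and the $b_{\alpha}\in\mathbb{N}_{0}$ are precisely the multiplicities in the statement. Taking characters of both sides and using the previous step yields $\chi_{a}\cdot\chi_{b}=\sum_{\alpha}b_{\alpha}\chi_{D^{(\alpha)}}$. Pairing with $\chi_{D^{(\alpha)}}$ and invoking the orthonormality relation $(\chi_{D^{(\alpha)}},\chi_{D^{(\beta)}})=\delta_{\alpha\beta}$ then gives $(\chi_{D^{(\alpha)}},\chi_{a}\cdot\chi_{b})=\sum_{\beta}b_{\beta}\,\delta_{\alpha\beta}=b_{\alpha}$, which is the claim.

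There is no substantial obstacle here. The only points requiring a little care are the placement of the complex conjugate (it sits in the $D^{(\alpha)}$ slot of the scalar product, so the multiplicities come out without conjugation) and the availability of the full orthonormality relations: if only the special case $(\chi_{D},\chi_{D})=1$ for irreducible $D$ has been recorded, one first states the general relation $(\chi_{D^{(\alpha)}},\chi_{D^{(\beta)}})=\delta_{\alpha\beta}$, which follows from the same Schur-lemma computation that underlies proposition \ref{PA58}.
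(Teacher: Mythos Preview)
Your proof is correct and follows the same route as the paper. The paper's own proof simply invokes two appendix results---proposition~\ref{PA59a} for $b_{\alpha}=(\chi_{D^{(\alpha)}},\chi_{D_{a}\otimes D_{b}})$ and proposition~\ref{PA85} for $\chi_{D_{a}\otimes D_{b}}=\chi_{D_{a}}\cdot\chi_{D_{b}}$---whose content you have unpacked explicitly. (A minor citation quibble: complete reducibility is corollary~\ref{CA48} rather than theorem~\ref{TA42}, which only gives equivalence to a unitary representation.)
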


\begin{proof}
From proposition \ref{PA59a}
we know $b_{\alpha}=(\chi_{D^{(\alpha)}},\chi_{D_{a}\otimes D_{b}})$, and from proposition \ref{PA85}
we know $\chi_{D_{a}\otimes D_{b}}=\chi_{D_{a}}\cdot\chi_{D_{b}}$.
\end{proof}
\hspace{0mm}\\
Using sublemma \ref{LSU37} and the character table \ref{A4charactertable} we find
	\begin{displaymath}
	(\chi_{\textbf{\underline{1}}},\chi_{\textbf{\underline{3}}}^{2})=(\chi_{\textbf{\underline{1}}'},\chi_{\textbf{\underline{3}}}^{2})=(\chi_{\textbf{\underline{1}}''},\chi_{\textbf{\underline{3}}}^{2})=1,\quad (\chi_{\textbf{\underline{3}}},\chi_{\textbf{\underline{3}}}^{2})=2,
	\end{displaymath}
thus
	\begin{displaymath}
	\textbf{\underline{3}}\otimes\textbf{\underline{3}}=\textbf{\underline{1}}\oplus \textbf{\underline{1}}'\oplus\textbf{\underline{1}}''\oplus\textbf{\underline{3}}\oplus\textbf{\underline{3}}.
	\end{displaymath}
From equation (\ref{A4rep}) we see that all representations of $A_{4}$ are real, especially $\textbf{\underline{3}}=\textbf{\underline{3}}^{\ast}$. We can now use lemmata \ref{CGdecompL1} and \ref{CGdecompL2} to find that
	\begin{displaymath}
	\textbf{\underline{3}}\otimes\textbf{\underline{3}}=\textbf{\underline{3}}\otimes \textbf{\underline{3}}^{\ast}=\textbf{\underline{1}}\oplus \textbf{\underline{8}}=\textbf{\underline{3}}_{a}\oplus\textbf{\underline{6}}_{s}
	\end{displaymath}
$\textbf{\underline{1}}$ and $\textbf{\underline{3}}_{a}=\textbf{\underline{3}}^{\ast}=\textbf{\underline{3}}$ are irreducible, and from section \ref{CGCdecosubsect} we already know basis vectors of the invariant subspaces of $V_{\textbf{\underline{3}}}\otimes V_{\textbf{\underline{3}}}$ corresponding to them. Normalizing them we find (using the abbreviation $e_{ij}=e_{i}\otimes e_{j}$):
	\begin{displaymath}
	\begin{split}
	\textbf{\underline{1}}:\enspace & u=\frac{1}{\sqrt{3}}(e_{11}+e_{22}+e_{33}),\\
	\textbf{\underline{3}}_{a}:\enspace & u_{1}=\frac{1}{\sqrt{2}}(e_{23}-e_{32}),\\
	& u_{2}=\frac{1}{\sqrt{2}}(-e_{13}+e_{31}),\\
	& u_{3}=\frac{1}{\sqrt{2}}(e_{12}-e_{21}).
	\end{split}
	\end{displaymath}
The basis vectors of the invariant subspaces $V_{\textbf{\underline{1}}'}$ and $V_{\textbf{\underline{1}}''}$ can be obtained by calculating the Clebsch-Gordan coefficients using the algorithm developed in chapter \ref{chapterclebsch}. An example for a \textit{Mathematica 6}-implementation of this algorithm (used on the group $\Sigma(36\phi)$) can be found in appendix \ref{appendixD}. One finds
	\begin{displaymath}
	\begin{split}
	\textbf{\underline{1}}':\enspace & u=\frac{1}{\sqrt{3}}(\omega^{2}e_{11}+\omega e_{22}+e_{33}),\\
	\textbf{\underline{1}}'':\enspace & u=\frac{1}{\sqrt{3}}(\omega e_{11}+\omega^{2} e_{22}+e_{33}).
	\end{split}
	\end{displaymath}
Extending the found basis of $V_{\textbf{\underline{1}}}\oplus V_{\textbf{\underline{1}}'}\oplus V_{\textbf{\underline{1}}''}$ to a basis of $V_{\textbf{\underline{6}}_{s}}=V_{s}$ ($\rightarrow$ lemma \ref{CGdecompL1}) one finds the remaining basis vectors of $V_{\textbf{\underline{3}}}\otimes V_{\textbf{\underline{3}}}$:
	\begin{displaymath}
	\begin{split}
	\textbf{\underline{3}}:\enspace & u_{1}=\frac{1}{\sqrt{2}}(e_{23}+e_{32}),\\
	& u_{2}=\frac{1}{\sqrt{2}}(e_{13}+e_{31}),\\
	& u_{3}=\frac{1}{\sqrt{2}}(e_{12}+e_{21}).
	\end{split}
	\end{displaymath}
We have now obtained bases of all invariant subspaces of $V_{\textbf{\underline{3}}}\otimes V_{\textbf{\underline{3}}}$. The results are shown in table \ref{A4CGC}. As already mentioned at the end of chapter \ref{chapterclebsch} we will always list Clebsch-Gordan coefficients by showing the corresponding basis vectors of the invariant subspaces:
\\
Let $e_{ij}:=e_{i}^{a}\otimes e_{i}^{b}$ be basis vectors of $V_{a}\otimes V_{b}$, and let $D_{a}\otimes D_{b}=D^{\lambda}\oplus...$ . We will give the Clebsch-Gordan coefficients for this decomposition by giving the basis vectors of $V^{\lambda}$:
	\begin{displaymath}
	u_{D^{\lambda}}^{D_{a}\otimes D_{b}}(i)=C^{\lambda}_{ijk}e_{jk},\quad i=1,...,n_{\lambda}.
	\end{displaymath}
(Though it is incorrect we will use the term \textquotedblleft Clebsch-Gordan coefficients\textquotedblright\hspace{1mm} when we label the lists of these basis vectors.) At the end of this chapter we will develop a list of different \textquotedblleft types of Clebsch-Gordan coefficients\textquotedblright, which are labeled by Roman numerals. We will also indicate these types in the tables of Clebsch-Gordan coefficients (in the column named \textquotedblleft CGC\textquotedblright). For the exact meaning of these types see section \ref{summaryCGCsection} (especially table \ref{SU3CGClist}).

\begin{table}[t]
\begin{center}
\renewcommand{\arraystretch}{1.4}
\begin{tabular}{|l|l|l|}
\firsthline
	$A_{4}$ & $\textbf{\underline{3}}\otimes\textbf{\underline{3}}$ & CGC\\
\hline
	$\textbf{\underline{1}}$
	& $u^{\textbf{\underline{3}}\otimes\textbf{\underline{3}}}_{\textbf{\underline{1}}}=\frac{1}{\sqrt{3}}e_{11}+\frac{1}{\sqrt{3}}e_{22}+\frac{1}{\sqrt{3}}e_{33}$ & \textbf{Ia}\\

	$\textbf{\underline{1}'}$
	&	
	$u^{\textbf{\underline{3}}\otimes\textbf{\underline{3}}}_{\textbf{\underline{1}'}}=\frac{\omega^{2}}{\sqrt{3}}e_{11}+\frac{\omega}{\sqrt{3}}e_{22}+\frac{1}{\sqrt{3}}e_{33}$ & \textbf{Ib}\\

	$\textbf{\underline{1}''}$
	&	
	$u^{\textbf{\underline{3}}\otimes\textbf{\underline{3}}}_{\textbf{\underline{1}''}}=\frac{\omega}{\sqrt{3}}e_{11}+\frac{\omega^{2}}{\sqrt{3}}e_{22}+\frac{1}{\sqrt{3}}e_{33}$ & \textbf{Ic}\\

	$\textbf{\underline{3}}_{a}$
	&	
	$u^{\textbf{\underline{3}}\otimes\textbf{\underline{3}}}_{\textbf{\underline{3}}}(1)=\frac{1}{\sqrt{2}}e_{23}-\frac{1}{\sqrt{2}}e_{32}$ & \textbf{IIIa$_{(1,-1)}$}\\

	&	
	$u^{\textbf{\underline{3}}\otimes\textbf{\underline{3}}}_{\textbf{\underline{3}}}(2)=-\frac{1}{\sqrt{2}}e_{13}+\frac{1}{\sqrt{2}}e_{31}$& \\

	&	
	$u^{\textbf{\underline{3}}\otimes\textbf{\underline{3}}}_{\textbf{\underline{3}}}(3)=\frac{1}{\sqrt{2}}e_{12}-\frac{1}{\sqrt{2}}e_{21}$& \\

	$\textbf{\underline{3}}$
	&	
	$u^{\textbf{\underline{3}}\otimes\textbf{\underline{3}}}_{\textbf{\underline{3}}}(1')=\frac{1}{\sqrt{2}}e_{23}+\frac{1}{\sqrt{2}}e_{32}$ & \textbf{IIIa$_{(1,1)}$}\\

	&	
	$u^{\textbf{\underline{3}}\otimes\textbf{\underline{3}}}_{\textbf{\underline{3}}}(2')=\frac{1}{\sqrt{2}}e_{13}+\frac{1}{\sqrt{2}}e_{31}$& \\

	&	
	$u^{\textbf{\underline{3}}\otimes\textbf{\underline{3}}}_{\textbf{\underline{3}}}(3')=\frac{1}{\sqrt{2}}e_{12}+\frac{1}{\sqrt{2}}e_{21}$& \\
\lasthline
\end{tabular}
\caption{Clebsch-Gordan coefficients for $\textbf{\underline{3}}\otimes\textbf{\underline{3}}$ of $A_{4}$.}
\label{A4CGC}
\end{center}
\end{table}

\subsection{The double cover of the tetrahedral group $\tilde{T}$}

The double cover of $T$ is isomorphic to $\tilde{A}_{4}$. To construct $\tilde{A}_{4}$ we have to find its generators. For this purpose we use theorem \ref{TSU2SU34}
and proposition \ref{PSU2SU32}
. In the three-dimensional representation of $A_{4}$ we had
	\begin{displaymath}
	S\mapsto \left(\begin{matrix}
	 1 & 0 & 0 \\
	 0 & -1 & 0 \\
	 0 & 0 & -1
	\end{matrix}\right),\enspace
	T\mapsto
	\left(\begin{matrix}
	 0 & 0 & -1 \\
	 -1 & 0 & 0 \\
	 0 & 1 & 0
	\end{matrix}\right).
	\end{displaymath}
We will now construct a corresponding representation of $\tilde{A}_{4}$. Let $R(\alpha,\vec{n}):=e^{-i\alpha\vec{n}\cdot\vec{T}}$ and $U(\alpha,\vec{n}):=e^{-i\alpha\vec{n}\cdot\frac{\vec{\sigma}}{2}}$.
	\begin{displaymath}
	\left(\begin{matrix}
	 1 & 0 & 0 \\
	 0 & -1 & 0 \\
	 0 & 0 & -1
	\end{matrix}\right)=R(\alpha_{S},\vec{n}_{S})\Rightarrow \alpha_{S}=\pi, \vec{n}_{S}=
	\left(\begin{matrix}
	 1 \\
	 0 \\
	 0
	\end{matrix}\right).
	\end{displaymath}
Thus a generator of $\tilde{A}_{4}$ is
	\begin{displaymath}
	S:=U(\alpha_{S},\vec{n}_{S})=
	\left(\begin{matrix}
	 0 & -i \\
	 -i & 0
	\end{matrix}\right).
	\end{displaymath}
For $T$ we find $\alpha_{T}=\frac{2\pi}{3}, \vec{n}_{T}=\frac{1}{\sqrt{3}}
	\left(\begin{matrix}
	 1 \\
	 -1 \\
	 -1
	\end{matrix}\right)$
	\begin{displaymath}
	\Rightarrow T:=U(\alpha_{T},\vec{n}_{T})=
	\frac{1}{2}
	\left(\begin{matrix}
	 1+i & 1-i \\
	 -(1+i) & 1-i
	\end{matrix}\right).
	\end{displaymath}
Knowing the generators, noticing $S^{2}=T^{3}=-\mathbbm{1}_{2}$, we can write down all 24 elements of $\tilde{A}_{4}$.
	\begin{displaymath}
	\begin{split}
	\tilde{A}_{4}=\{& \pm E,\pm T,\pm S,\pm T^{2},\pm ST,\pm TS,\pm ST^{2},\\
	&\pm T^{2}S,\pm T^{2}ST,\pm TST^{2},\pm TST,\pm STS\}
	\end{split}
	\end{displaymath}
The conjugate classes are
	\begin{displaymath}
	\begin{split}
		& C_{E}=\{E\},\\
		& C_{-E}=\{-E\},\\
		& C_{S}=C_{-S}=\{\pm S,\pm TST^{2},\pm T^{2}ST\},\\
		& C_{T}=\{T,-ST,-TS,-STS\},\\
		& C_{-T}=\{-T,ST,TS,STS\},\\
		& C_{T^{2}}=\{T^{2},TST,T^{2}S,ST^{2}\},\\
		& C_{-T^{2}}=\{-T^{2},-TST,-T^{2}S,-ST^{2}\}.
	\end{split}
	\end{displaymath}
There are 7 conjugate classes, thus there are 7 non-equivalent irreducible representations. The only solution (up to label exchange) of
	\begin{displaymath}
	\sum_{k=1}^{7}n_{k}^{2}=24
	\end{displaymath}
is
	\begin{displaymath}
	n_{1}=n_{2}=n_{3}=1, n_{4}=n_{5}=n_{6}=2, n_{7}=3,
	\end{displaymath}
which is not surprising since the irreducible representations of $A_{4}$ remain, and the new representation $\textbf{\underline{2}}$ is also irreducible. Thus the irreducible representations of $\tilde{A}_{4}$ are:
	\begin{displaymath}
	\begin{split}
	& \textbf{\underline{1}}: S\mapsto 1,\enspace T\mapsto1,\\
	& \textbf{\underline{1}}': S\mapsto 1,\enspace T\mapsto\omega,\\
	& \textbf{\underline{1}}'': S\mapsto 1,\enspace T\mapsto\omega^{2},\\
	& \textbf{\underline{2}}: S\mapsto S,\enspace T\mapsto T,\\
	& \textbf{\underline{2}}': S\mapsto S,\enspace T\mapsto\omega T,\\
	& \textbf{\underline{2}}'': S\mapsto S,\enspace T\mapsto\omega^{2} T,\\
	& \textbf{\underline{3}}: S\mapsto \left(\begin{matrix}
	 1 & 0 & 0 \\
	 0 & -1 & 0 \\
	 0 & 0 & -1
	\end{matrix}\right),\enspace T\mapsto
	\left(\begin{matrix}
	 0 & 0 & -1 \\
	 -1 & 0 & 0 \\
	 0 & 1 & 0
	\end{matrix}\right).
	\end{split}
	\end{displaymath}
\begin{table}
\begin{center}
\renewcommand{\arraystretch}{1.4}
\begin{tabular}{|l|ccccccc|}
\firsthline
	$\tilde{A}_{4}$ & $C_{E}(1)$ & $C_{-E}(1)$ & $C_{S}(6)$ & $C_{T}(4)$ & $C_{-T}(4)$ & $C_{T^{2}}(4)$ & $C_{-T^{2}}(4)$\\
\hline
	\textbf{\underline{1}} & $1$ & $1$ & $1$ & $1$ & $1$ & $1$ & $1$\\
	\textbf{\underline{1}}' & $1$ & $1$ & $1$ & $\omega$ & $\omega$ & $\omega^{2}$ & $\omega^{2}$\\
	\textbf{\underline{1}}'' & $1$ & $1$ & $1$ & $\omega^{2}$ & $\omega^{2}$ & $\omega$ & $\omega$\\
	\textbf{\underline{2}} & $2$ & $-2$ & $0$ & $1$ & $-1$ & $-1$ & $1$\\
	\textbf{\underline{2}}' & $2$ & $-2$ & $0$ & $\omega$ & $-\omega$ & $-\omega^{2}$ & $\omega^{2}$\\
	\textbf{\underline{2}}'' & $2$ & $-2$ & $0$ & $\omega^{2}$ & $-\omega^{2}$ & $-\omega$ & $\omega$\\
	\textbf{\underline{3}} & $3$ & $3$ & $-1$ & $0$ & $0$ & $0$ & $0$\\
\lasthline
\end{tabular}
\caption{The character table of $\tilde{A}_{4}$.}
\label{tildeA4charactertable}
\end{center}
\end{table}
\hspace{0mm}\\
Using the character table \ref{tildeA4charactertable} one can calculate the tensor products. $\textbf{\underline{3}}\otimes\textbf{\underline{3}}$ remains completely unchanged, and since all representations occurring in $\textbf{\underline{3}}\otimes\textbf{\underline{3}}$ are the same as in the case of $A_{4}$, also the Clebsch-Gordan coefficients are the same.

\subsection{The octahedral group $O$}

$O$ is isomorphic to $S_{4}$, which is treated in \cite{bazzocchi,hagedorn}. We repeat here the conjugate classes, generators and representations given in \cite{bazzocchi}, and calculate the character table and Clebsch-Gordan coefficients using this data.
\medskip
\\
The conjugate classes are
	\begin{displaymath}
	\begin{split}
	& C_{E}=\{E\},\\
	& C_{S^{2}}=\{S^{2},TS^{2}T^{2},S^{2}TS^{2}T^{2}\},\\
	& C_{T}=\{T,T^{2},S^{2}T,S^{2}T^{2},STST^{2},STS, S^{2}TS^{2},S^{3}TS\},\\
	& C_{ST^{2}}=\{ST^{2},T^{2}S,TST,TSTS^{2},STS^{2},S^{2}TS\},\\
	& C_{S}=\{S,TST^{2},ST,TS,S^{3},S^{3}T^{2}\},
	\end{split}
	\end{displaymath}
where $S$ and $T$ are generators of $S_{4}$ fulfilling $S^{4}=T^{3}=E, ST^{2}S=T$. The irreducible representations are
	\begin{displaymath}
	\begin{split}
	& \textbf{\underline{1}}: S\mapsto 1, T\mapsto 1,\\
	& \textbf{\underline{1}}': S\mapsto -1, T\mapsto 1,\\
	& \textbf{\underline{2}}: S\mapsto 
	\left(
	\begin{matrix}
	 -1 & 0 \\
	 0 & 1
	\end{matrix}
	\right),
	T\mapsto
	-\frac{1}{2}
	\left(
	\begin{matrix}
	 1 & \sqrt{3} \\
	 -\sqrt{3} & 1
	\end{matrix}
	\right),\\
	& \textbf{\underline{3}}: S\mapsto 
	\left(
	\begin{matrix}
	 -1 & 0 & 0\\
	 0 & 0 & -1\\
	 0 & 1 & 0
	\end{matrix}
	\right),
	T\mapsto
	\left(
	\begin{matrix}
	 0 & 0 & 1 \\
	 1 & 0 & 0 \\
	 0 & 1 & 0
	\end{matrix}
	\right),\\
	& \textbf{\underline{3}}': S\mapsto 
	\left(
	\begin{matrix}
	 1 & 0 & 0\\
	 0 & 0 & 1\\
	 0 & -1 & 0
	\end{matrix}
	\right),
	T\mapsto
	\left(
	\begin{matrix}
	 0 & 0 & 1 \\
	 1 & 0 & 0 \\
	 0 & 1 & 0
	\end{matrix}
	\right).
	\end{split}	
	\end{displaymath}
Knowing the non-equivalent irreducible representations and the conjugate classes we can construct the character table \ref{S4charactertable}.

\begin{table}
\begin{center}
\renewcommand{\arraystretch}{1.4}
\begin{tabular}{|l|ccccc|}
\firsthline
	$S_{4}$ & $C_{E}(1)$ & $C_{S}(6)$ & $C_{S^{2}}(3)$ & $C_{T}(8)$ & $C_{ST^{2}}(6)$\\
\hline
	\textbf{\underline{1}} & $1$ & $1$ & $1$ & $1$ & $1$\\
	\textbf{\underline{1}}' & $1$ & $-1$ & $1$ & $1$ & $-1$\\
	\textbf{\underline{2}} & $2$ & $0$ & $2$ & $-1$ & $0$\\
	\textbf{\underline{3}} & $3$ & $-1$ & $-1$ & $0$ & $1$\\
	\textbf{\underline{3}}' & $3$ & $1$ & $-1$ & $0$ & $-1$\\
\lasthline
\end{tabular}
\caption{The character table of $S_{4}$.}
\label{S4charactertable}
\end{center}
\end{table}
\hspace{0mm}\\
From the character table we can calculate:
	\begin{displaymath}
	\begin{split}
	& \textbf{\underline{3}}\otimes\textbf{\underline{3}}=\textbf{\underline{1}}\oplus\textbf{\underline{2}} \oplus \textbf{\underline{3}}\oplus \textbf{\underline{3}}',\\
	&
	\textbf{\underline{3}}\otimes\textbf{\underline{3}}'=\textbf{\underline{1}}'\oplus\textbf{\underline{2}} \oplus \textbf{\underline{3}}\oplus \textbf{\underline{3}}',\\
	&
	\textbf{\underline{3}}'\otimes\textbf{\underline{3}}'=\textbf{\underline{1}}\oplus\textbf{\underline{2}} \oplus \textbf{\underline{3}}\oplus \textbf{\underline{3}}'.
	\end{split}
	\end{displaymath}
Looking at the matrix representations of the generators $S$ and $T$ we find
	\begin{displaymath}
	\textbf{\underline{3}}=\textbf{\underline{3}}^{\ast},\quad\textbf{\underline{3}}'=\textbf{\underline{3}}'\hspace{0mm}^{\ast},\quad \textbf{\underline{3}}'=\textbf{\underline{1}}'\otimes\textbf{\underline{3}}.
	\end{displaymath}
Using lemma \ref{LSU313} we conclude that the Clebsch-Gordan coefficients for
	\begin{displaymath}
	\begin{split}
	&\textbf{\underline{3}}\otimes\textbf{\underline{3}}'=(\textbf{\underline{1}}\otimes\textbf{\underline{1}}')\otimes(\textbf{\underline{3}}\otimes\textbf{\underline{3}}),\\ &\textbf{\underline{3}}'\otimes\textbf{\underline{3}}=(\textbf{\underline{1}}'\otimes\textbf{\underline{1}})\otimes(\textbf{\underline{3}}\otimes\textbf{\underline{3}}),\\ &\textbf{\underline{3}}'\otimes\textbf{\underline{3}}'=(\textbf{\underline{1}}'\otimes\textbf{\underline{1}}')\otimes(\textbf{\underline{3}}\otimes\textbf{\underline{3}})
	\end{split}
	\end{displaymath}
are equal to those of $\textbf{\underline{3}}\otimes\textbf{\underline{3}}$. Therefore there is only one type of Clebsch-Gordan coefficients, which we can construct by investigating the Clebsch-Gordan decomposition of $\textbf{\underline{3}}\otimes\textbf{\underline{3}}$. As in the case of $A_{4}\simeq T$ we have
	\begin{displaymath}
	\textbf{\underline{3}}\otimes\textbf{\underline{3}}=\textbf{\underline{3}}\otimes \textbf{\underline{3}}^{\ast}=\textbf{\underline{1}}\oplus \textbf{\underline{8}}=\textbf{\underline{3}}_{a}\oplus\textbf{\underline{6}}_{s}.
	\end{displaymath}
$\textbf{\underline{1}}$ and $\textbf{\underline{3}}_{a}=(\mathrm{det}\textbf{\underline{3}})\otimes\textbf{\underline{3}}^{\ast}=\textbf{\underline{1}}'\otimes\textbf{\underline{3}}=\textbf{\underline{3}}'$ are irreducible, and we get the same basis vectors for the corresponding invariant subspaces as we already found for $\textbf{\underline{3}}\otimes \textbf{\underline{3}}$ of $A_{4}\simeq T$ ($\rightarrow$ table \ref{A4CGC}).
\\
Basis vectors of $V_{\textbf{\underline{2}}}$ can be obtained by using the algorithm described in chapter \ref{chapterclebsch}. One gets
	\begin{displaymath}
	\begin{split}
	\textbf{\underline{2}}:\enspace & u_{1}=\frac{1}{\sqrt{2}}(e_{22}-e_{33}),\\
	& u_{2}=\frac{1}{\sqrt{6}}(-2e_{11}+e_{22}+e_{33}).
	\end{split}
	\end{displaymath}
Basis vectors of $V_{\textbf{\underline{3}}}$ can be obtained by extending the found basis of $V_{\textbf{\underline{1}}}\oplus V_{\textbf{\underline{2}}}$ to a basis of $V_{\textbf{\underline{6}}_{s}}=V_{s}$:
	\begin{displaymath}
	\begin{split}
	\textbf{\underline{3}}:\enspace & u_{1}=\frac{1}{\sqrt{2}}(e_{23}+e_{32}),\\
	& u_{2}=\frac{1}{\sqrt{2}}(e_{13}+e_{31}),\\
	& u_{3}=\frac{1}{\sqrt{2}}(e_{12}+e_{21}).
	\end{split}
	\end{displaymath}
So we have found basis vectors of all invariant subspaces of $V_{\textbf{\underline{3}}}\otimes V_{\textbf{\underline{3}}}$ (and therefore the Clebsch-Gordan coefficients). We list them in table \ref{S4CGC}.

\begin{table}
\begin{center}
\renewcommand{\arraystretch}{1.4}
\begin{tabular}{|l|l|l|}
\firsthline
	$S_{4}$ & $\textbf{\underline{3}}\otimes\textbf{\underline{3}}$ & CGC\\
\hline
	$\textbf{\underline{1}}$
	& $u^{\textbf{\underline{3}}\otimes\textbf{\underline{3}}}_{\textbf{\underline{1}}}=\frac{1}{\sqrt{3}}e_{11}+\frac{1}{\sqrt{3}}e_{22}+\frac{1}{\sqrt{3}}e_{33}$ & \textbf{Ia}\\

	$\textbf{\underline{2}}$
	&	
	$u^{\textbf{\underline{3}}\otimes\textbf{\underline{3}}}_{\textbf{\underline{2}}}(1)=\frac{1}{\sqrt{2}}e_{22}-\frac{1}{\sqrt{2}}e_{33}$ & \textbf{IIa}\\

	&	
	$u^{\textbf{\underline{3}}\otimes\textbf{\underline{3}}}_{\textbf{\underline{2}}}(2)=-\frac{2}{\sqrt{6}}e_{11}+\frac{1}{\sqrt{6}}e_{22}+\frac{1}{\sqrt{6}}e_{33}$ & \\

	$\textbf{\underline{3}}_{a}$
	&	
	$u^{\textbf{\underline{3}}\otimes\textbf{\underline{3}}}_{\textbf{\underline{3}}_{a}}(1)=\frac{1}{\sqrt{2}}e_{23}-\frac{1}{\sqrt{2}}e_{32}$ & \textbf{IIIa$_{(1,-1)}$}\\

	&	
	$u^{\textbf{\underline{3}}\otimes\textbf{\underline{3}}}_{\textbf{\underline{3}}_{a}}(2)=-\frac{1}{\sqrt{2}}e_{13}+\frac{1}{\sqrt{2}}e_{31}$ & \\

	&	
	$u^{\textbf{\underline{3}}\otimes\textbf{\underline{3}}}_{\textbf{\underline{3}}_{a}}(3)=\frac{1}{\sqrt{2}}e_{12}-\frac{1}{\sqrt{2}}e_{21}$ & \\

	$\textbf{\underline{3}}$
	&	
	$u^{\textbf{\underline{3}}\otimes\textbf{\underline{3}}}_{\textbf{\underline{3}}}(1)=\frac{1}{\sqrt{2}}e_{23}+\frac{1}{\sqrt{2}}e_{32}$ & \textbf{IIIa$_{(1,1)}$}\\

	&	
	$u^{\textbf{\underline{3}}\otimes\textbf{\underline{3}}}_{\textbf{\underline{3}}}(2)=\frac{1}{\sqrt{2}}e_{13}+\frac{1}{\sqrt{2}}e_{31}$ & \\

	&	
	$u^{\textbf{\underline{3}}\otimes\textbf{\underline{3}}}_{\textbf{\underline{3}}}(3)=\frac{1}{\sqrt{2}}e_{12}+\frac{1}{\sqrt{2}}e_{21}$ & \\
\lasthline
\end{tabular}
\caption{Clebsch-Gordan coefficients for $\textbf{\underline{3}}\otimes\textbf{\underline{3}}$ of $S_{4}$.}
\label{S4CGC}
\end{center}
\end{table}

\subsection{The double cover of the octahedral group $\tilde{O}$}

We proceed as in the case of $\tilde{T}\simeq \tilde{A}_{4}$. $\tilde{O}$ is isomorphic to $\tilde{S}_{4}$. To find generators of $\tilde{S}_{4}$ we need the generators of a faithful representation $D(S_{4})\subset SO(3)$. $\textbf{\underline{3}}'$ is such a representation of $S_{4}$.
	\begin{displaymath}
	S_{4}:\quad \textbf{\underline{3}}': S\mapsto 
	\hat{S}=\left(
	\begin{matrix}
	 1 & 0 & 0\\
	 0 & 0 & 1\\
	 0 & -1 & 0
	\end{matrix}
	\right),
	T\mapsto
	\hat{T}=\left(
	\begin{matrix}
	 0 & 0 & 1 \\
	 1 & 0 & 0 \\
	 0 & 1 & 0
	\end{matrix}
	\right)
	\end{displaymath}
We express $\hat{S}$ and $\hat{T}$ as $R(\alpha,\vec{n})\in SO(3)$:
	\begin{displaymath}
	\hat{S}=R(\alpha_{S},\vec{n}_{S})\Rightarrow \alpha_{S}=\frac{\pi}{2},\enspace \vec{n}_{S}=\left(\begin{matrix}
	-1 \\
	 0 \\
	 0
	\end{matrix}\right),
	\end{displaymath}
	\begin{displaymath}
	\hat{T}=R(\alpha_{T},\vec{n}_{T})\Rightarrow \alpha_{T}=\frac{2\pi}{3},\enspace \vec{n}_{T}=\frac{1}{\sqrt{3}}\left(\begin{matrix}
	1 \\
	1 \\
	1
	\end{matrix}\right).
	\end{displaymath}
Thus we obtain generators of $\tilde{S}_{4}$ by
       \begin{displaymath}
       S=U(\alpha_{S},\vec{n}_{S})=
       \frac{1}{\sqrt{2}}\left(\begin{matrix}
			 1 & i \\
			 i & 1
	                  \end{matrix}\right),
	\quad
	T=U(\alpha_{T},\vec{n}_{T})=
       \frac{1}{2}\left(\begin{matrix}
			 1-i & -1-i \\
			 1-i & 1+i
	                  \end{matrix}\right).
	\end{displaymath}
The new generators fulfil $S^{4}=T^{3}=-\mathbbm{1}_{2}$. Using these generators we construct the conjugate classes of $\tilde{S}_{4}$:
	\begin{displaymath}
	\begin{split}
	& C_{E}=\{E\},\\
	& C_{-E}=\{-E\},\\
	& C_{S^{2}}=\{\pm S^{2},\pm TS^{2}T^{2},\pm S^{2}TS^{2}T^{2}\},\\
	& C_{T}=\{T,-T^{2},S^{2}T,S^{2}T^{2},-STST^{2},STS,-S^{2}TS^{2},-S^{3}TS\},\\
	&
	C_{-T}=\{-T,T^{2},-S^{2}T,-S^{2}T^{2},STST^{2},-STS,S^{2}TS^{2},S^{3}TS\},\\
	& C_{ST^{2}}=\{\pm ST^{2},\pm T^{2}S,\pm TST,\pm TSTS^{2},\pm STS^{2},\pm S^{2}TS\},\\
	& C_{S}=\{S,-TST^{2},ST,TS,-S^{3},S^{3}T^{2}\},\\
	& C_{-S}=\{-S,TST^{2},-ST,-TS,S^{3},-S^{3}T^{2}\}.
	\end{split}
	\end{displaymath}	
Compared to $S_{4}$ we now have three conjugate classes more, thus we also get three new irreducible representations. Their dimensions must fulfil
	\begin{displaymath}
	\sum_{k=1}^{3}n_{k}^{2}=24.
	\end{displaymath}
The only solution (up to label exchange) is
	\begin{displaymath}
	n_{1}=n_{2}=2,\enspace n_{3}=4,
	\end{displaymath}
thus there are no additional three-dimensional representations.

\subsection{The icosahedral group $I$}\label{icosahedral}

The icosahedral group $I$ is isomorphic to the permutation group $A_{5}$ \cite{everett,luhn1}, which is well presented and analysed in \cite{everett}. We will take the character table (table \ref{A5charactertable}), the generators for the irreducible representations and the tensor products from \cite{everett}.

\begin{table}
\begin{center}
\renewcommand{\arraystretch}{1.4}
\begin{tabular}{|l|ccccc|}
\firsthline
	$A_{5}$ & $C_{1}(1)$ & $C_{2}(12)$ & $C_{3}(12)$ & $C_{4}(20)$ & $C_{5}(15)$\\
\hline
	\textbf{\underline{1}} & $1$ & $1$ & $1$ & $1$ & $1$\\
	\textbf{\underline{3}} & $3$ & $\phi$ & $1-\phi$ & $0$ & $-1$\\
	\textbf{\underline{3}}' & $3$ & $1-\phi$ & $\phi$ & $0$ & $-1$\\
	\textbf{\underline{4}} & $4$ & $-1$ & $-1$ & $1$ & $0$\\
	\textbf{\underline{5}} & $5$ & $0$ & $0$ & $-1$ & $1$\\
\lasthline
\end{tabular}
\caption[The character table of $A_{5}$.]{The character table of $A_{5}$ \cite{everett}. $\phi=\frac{1}{2}(1+\sqrt{5})$}
\label{A5charactertable}
\end{center}
\end{table}
\hspace{0mm}\\
The generators $S,T$ for the non-equivalent irreducible representations given in \cite{everett} are ($\phi=\frac{1}{2}(1+\sqrt{5})$):
	\begin{displaymath}
	\begin{split}
	& \textbf{\underline{1}}: S\mapsto 1, T\mapsto 1,\\
	& \textbf{\underline{3}}: S\mapsto 
	\frac{1}{2}\left(
	\begin{matrix}
	 -1 & \phi & \frac{1}{\phi}\\
	 \phi & \frac{1}{\phi} & 1\\
	 \frac{1}{\phi} & 1 & -\phi
	\end{matrix}
	\right),
	T\mapsto
	\frac{1}{2}\left(
	\begin{matrix}
	 1 & \phi & \frac{1}{\phi} \\
	 -\phi & \frac{1}{\phi} & 1 \\
	 \frac{1}{\phi} & -1 & \phi
	\end{matrix}
	\right),\\
	& \textbf{\underline{3}}': S\mapsto 
	\frac{1}{2}\left(
	\begin{matrix}
	 -\phi & \frac{1}{\phi} & 1\\
	 \frac{1}{\phi} & -1 & \phi\\
	 1 & \phi & \frac{1}{\phi}
	\end{matrix}
	\right),
	T\mapsto
	\frac{1}{2}\left(
	\begin{matrix}
	 -\phi & -\frac{1}{\phi} & 1 \\
	 \frac{1}{\phi} & 1 & \phi \\
	 -1 & \phi & -\frac{1}{\phi}
	\end{matrix}
	\right),\\
	& \textbf{\underline{4}}: S\mapsto 
	\frac{1}{4}\left(
	\begin{matrix}
	 -1 & -1 & -3 & -\sqrt{5}\\
	 -1 & 3 & 1 & -\sqrt{5}\\
	 -3 & 1 & -1 & \sqrt{5}\\
	-\sqrt{5} & -\sqrt{5} & \sqrt{5} & -1
	\end{matrix}
	\right),
	T\mapsto
	\frac{1}{4}\left(
	\begin{matrix}
	 -1 & 1 & -3 & \sqrt{5}\\
	 -1 & -3 & 1 & \sqrt{5}\\
	 3 & 1 & 1 & \sqrt{5}\\
	\sqrt{5} & -\sqrt{5} & -\sqrt{5} & -1
	\end{matrix}
	\right),
	\end{split}
	\end{displaymath}

	\begin{displaymath}
	\begin{split}
	\textbf{\underline{5}}:\enspace & S\mapsto 
	\frac{1}{2}\left(
	\begin{matrix}
	 \frac{1-3\phi}{4} & \frac{\phi^{2}}{2} & -\frac{1}{2\phi^{2}} & \frac{\sqrt{5}}{2} & \frac{\sqrt{3}}{4\phi}\\
	 \frac{\phi^{2}}{2} & 1 & 1 & 0 & \frac{\sqrt{3}}{2\phi}\\
	 -\frac{1}{2\phi^{2}} & 1 & 0 & -1 & -\frac{\sqrt{3}\phi}{2}\\
	\frac{\sqrt{5}}{2} & 0 & -1 & 1 & -\frac{\sqrt{3}}{2}\\
	\frac{\sqrt{3}}{4\phi} & \frac{\sqrt{3}}{2\phi} & -\frac{\sqrt{3}\phi}{2} & -\frac{\sqrt{3}}{2} & \frac{3\phi-1}{4}
	\end{matrix}
	\right),\\
	& T\mapsto
	\frac{1}{2}\left(
	\begin{matrix}
	 \frac{1-3\phi}{4} & -\frac{\phi^{2}}{2} & -\frac{1}{2\phi^{2}} & -\frac{\sqrt{5}}{2} & \frac{\sqrt{3}}{4\phi}\\
	 \frac{\phi^{2}}{2} & -1 & 1 & 0 & \frac{\sqrt{3}}{2\phi}\\
	 \frac{1}{2\phi^{2}} & 1 & 0 & -1 & \frac{\sqrt{3}\phi}{2}\\
	-\frac{\sqrt{5}}{2} & 0 & 1 & 1 & \frac{\sqrt{3}}{2}\\
	\frac{\sqrt{3}}{4\phi} & -\frac{\sqrt{3}}{2\phi} & -\frac{\sqrt{3}\phi}{2} & \frac{\sqrt{3}}{2} & \frac{3\phi-1}{4}
	\end{matrix}
	\right).
	\end{split}	
	\end{displaymath}
The tensor products are given by \cite{everett}
	\begin{displaymath}
	\begin{split}
	& \textbf{\underline{3}}\otimes \textbf{\underline{3}}=\textbf{\underline{1}}\oplus \textbf{\underline{3}} \oplus \textbf{\underline{5}},\\
	& \textbf{\underline{3}}\otimes \textbf{\underline{3}}'=\textbf{\underline{4}}\oplus \textbf{\underline{5}},\\
	& \textbf{\underline{3}}'\otimes \textbf{\underline{3}}'=\textbf{\underline{1}}\oplus \textbf{\underline{3}}' \oplus \textbf{\underline{5}}.
	\end{split}
	\end{displaymath}
\cite{everett} provides a set of only two generators of $A_{5}$, which can be very useful. However for our purpose it is comfortable to use the following set of 3 generators
	\begin{displaymath}
	\begin{split}
	& A =\left(\begin{matrix}
		 0 & 1 & 0 \\
		 0 & 0 & 1 \\
		 1 & 0 & 0
	     \end{matrix}\right), \quad
	  B =\left(\begin{matrix}
		 1 & 0 & 0 \\
		 0 & -1 & 0 \\
		 0 & 0 & -1
	     \end{matrix}\right),\\
	& C =\frac{1}{2}\left(\begin{matrix}
		 -1 & \mu_{2} & \mu_{1} \\
		 \mu_{2} & \mu_{1} & -1 \\
		 \mu_{1} & -1 & \mu_{2}
	     \end{matrix}\right),
	\end{split}
	\end{displaymath}
with $\mu_{1}=\frac{1}{2}(-1+\sqrt{5})$, $\mu_{2}=\frac{1}{2}(-1-\sqrt{5})$ and $\omega=e^{\frac{2\pi i}{3}}$. This set of generators is given by \cite{miller} as a set of generators for the group $\Sigma(60)$, which is listed there as a finite subgroup of $SU(3)$. Because all generators are real, it must be a finite subgroup of $SO(3)$, and it turns out that $\Sigma(60)=I$ ($\Rightarrow \Sigma(60)\simeq A_{5}$), because $S$ and $T$ can be expressed through $A,B$ and $C$:
	\begin{displaymath}
	S=A^{2}BA^{2}CBA,\quad T=BA^{2}CBA^{2}.
	\end{displaymath}
This relation was found using the computer algebra system \textit{GAP} \cite{gap}. There are two reasons why it is more comfortable for us to use the generators $A,B,C$ instead of $S,T$:
	\begin{enumerate}
	 \item It will soon turn out that we cannot use the lemmata we had successfully used in previous cases to find the Clebsch-Gordan coefficients for the tensor product $\textbf{\underline{3}}\otimes \textbf{\underline{3}}'$ of $A_{5}$. So we have to use the algorithm described in chapter \ref{chapterclebsch}. Using the generators $S$ and $T$ the following problems occurred:
		\begin{itemize}
	 	\item Because of the high dimensions of the matrices involved (up to 45$\times$23 for a basis of eigenvectors to the eigenvalue 1 of $N(f_{1})$) memory became too small while performing part 2 of the algorithm (see chapter \ref{chapterclebsch}), which caused \textit{Mathematica 6} to shut down the kernel.\footnote{Used hardware: Intel(R) Core(TM)2 CPU 6420 @ 2.13GHz, 2048 MB RAM.}
	 	\item By choosing a different $f_{1}$ the dimension of the matrix mentioned in 1. could be reduced to 45$\times$9. Unfortunately there occurred another problem. Solving the equations was now possible, but for useful results one has to use commands like \textquotedblleft Simplify\textquotedblright\hspace{1mm} or even \textquotedblleft FullSimplify\textquotedblright, which turn out to be extremely time intensive calculations here. In fact \textit{Mathematica 6} stops simplification, when an internal time constraint is reached.
		\end{itemize}
		A numeric calculation with \textit{Mathematica 6} is extremely fast and gives results that are in agreement with the analytic expressions given in \cite{everett}. For the one- and three-dimensional representations the coefficients could be calculated analytically without problems. The results equal those of \cite{everett} (up to irrelevant phase factors).
		\\
		For the four- and five-dimensional representations we used the Clebsch-Gordan coefficients from \cite{everett} and checked analytically (using \textit{Mathematica 6}) that they are common eigenvectors of $N(f)$ ($f=S,T$) to the eigenvalue 1. This was doubly useful, because on the one hand we have checked the Clebsch-Gordan coefficients given in \cite{everett}, and on the other hand we checked the validity of the algorithm provided in chapter \ref{chapterclebsch}.
	\medskip
	\\
	Our hope is now that using the three generators $A,B,C$ (of which $A$ and $B$ have a quite simple form) the algorithm can be successfully carried out by \textit{Mathematica 6}.
	\item The generators $A,B,C$ also occur as a subset of the generators of $\Sigma(360\phi)$ ($\rightarrow$ subsection \ref{subsectionSigma360phi}), so one can directly see that $I=\Sigma(60)$ is a subgroup of $\Sigma(360\phi)$.
	\end{enumerate}
In contrast to the groups we had considered earlier
	\begin{displaymath}
	\textbf{\underline{3}}'\neq \textbf{\underline{3}}^{\ast},\quad\nexists\hspace{1mm} \textbf{\underline{1}}': \textbf{\underline{3}}'=\textbf{\underline{1}}'\otimes \textbf{\underline{3}},
	\end{displaymath}
so we have to consider all possible tensor products that can be built from $\textbf{\underline{3}}$ and $\textbf{\underline{3}}'$. We start with the tensor product
	\begin{displaymath}
	\textbf{\underline{3}}\otimes \textbf{\underline{3}}=\textbf{\underline{1}}\oplus\textbf{\underline{3}}\oplus\textbf{\underline{5}}. 
	\end{displaymath}
Because $\textbf{\underline{3}}=\textbf{\underline{3}}^{\ast}$ the irreducible representation $\textbf{\underline{1}}$ comes from $\textbf{\underline{3}}\otimes \textbf{\underline{3}}=\textbf{\underline{1}}\oplus \textbf{\underline{8}}$. $\textbf{\underline{3}}$ is again $\textbf{\underline{3}}_{a}$ of $\textbf{\underline{3}}\otimes \textbf{\underline{3}}=\textbf{\underline{3}}_{a}\oplus \textbf{\underline{6}}_{s}$. Thus we already know basis vectors of $V_{\textbf{\underline{1}}}$ and $V_{a}=V_{\textbf{\underline{3}}}$ ($\rightarrow$ table \ref{A4CGC}). The basis vectors of $V_{\textbf{\underline{5}}}$ can be obtained by extending the basis of $V_{\textbf{\underline{1}}}\oplus V_{\textbf{\underline{3}}}$ to an orthonormal basis of $V_{\textbf{\underline{3}}}\otimes V_{\textbf{\underline{3}}}$:
	\begin{displaymath}
	\begin{split}
	\textbf{\underline{5}}:\enspace & u_{1}=\frac{1}{\sqrt{2}}(e_{22}-e_{33}),\\
	& u_{2}=\frac{1}{\sqrt{6}}(-2e_{11}+e_{22}+e_{33}),\\
	& u_{3}=\frac{1}{\sqrt{2}}(e_{23}+e_{32}),\\
	& u_{4}=\frac{1}{\sqrt{2}}(e_{13}+e_{31}),\\
	& u_{5}=\frac{1}{\sqrt{2}}(e_{12}+e_{21}).
	\end{split}
	\end{displaymath}
The resulting orthonormal basis of $V_{\textbf{\underline{3}}}\otimes V_{\textbf{\underline{3}}}$ is shown in table \ref{A5CGCa}.

\begin{table}
\begin{center}
\renewcommand{\arraystretch}{1.4}
\begin{tabular}{|l|l|l|}
\firsthline
	$A_{5}$ & $\textbf{\underline{3}}\otimes\textbf{\underline{3}}$ & CGC\\
\hline
	$\textbf{\underline{1}}$
	& $u^{\textbf{\underline{3}}\otimes\textbf{\underline{3}}}_{\textbf{\underline{1}}}=\frac{1}{\sqrt{3}}e_{11}+\frac{1}{\sqrt{3}}e_{22}+\frac{1}{\sqrt{3}}e_{33}$ & \textbf{Ia}\\

	$\textbf{\underline{3}}$
	&	
	$u^{\textbf{\underline{3}}\otimes\textbf{\underline{3}}}_{\textbf{\underline{3}}}(1)=\frac{1}{\sqrt{2}}e_{23}-\frac{1}{\sqrt{2}}e_{32}$ & \textbf{IIIa$_{(1,-1)}$}\\

	&	
	$u^{\textbf{\underline{3}}\otimes\textbf{\underline{3}}}_{\textbf{\underline{3}}}(2)=-\frac{1}{\sqrt{2}}e_{13}+\frac{1}{\sqrt{2}}e_{31}$ & \\

	&	
	$u^{\textbf{\underline{3}}\otimes\textbf{\underline{3}}}_{\textbf{\underline{3}}}(3)=\frac{1}{\sqrt{2}}e_{12}-\frac{1}{\sqrt{2}}e_{21}$ & \\

	$\textbf{\underline{5}}$
	&	
	$u^{\textbf{\underline{3}}\otimes\textbf{\underline{3}}}_{\textbf{\underline{5}}}(1)=\frac{1}{\sqrt{2}}e_{22}-\frac{1}{\sqrt{2}}e_{33}$ & \textbf{Va}\\

	&	
	$u^{\textbf{\underline{3}}\otimes\textbf{\underline{3}}}_{\textbf{\underline{5}}}(2)=-\frac{2}{\sqrt{6}}e_{11}+\frac{1}{\sqrt{6}}e_{22}+\frac{1}{\sqrt{6}}e_{33}$ & \\

	&
	$u^{\textbf{\underline{3}}\otimes\textbf{\underline{3}}}_{\textbf{\underline{5}}}(3)=\frac{1}{\sqrt{2}}e_{23}+\frac{1}{\sqrt{2}}e_{32}$ & \\

	&	
	$u^{\textbf{\underline{3}}\otimes\textbf{\underline{3}}}_{\textbf{\underline{5}}}(4)=\frac{1}{\sqrt{2}}e_{13}+\frac{1}{\sqrt{2}}e_{31}$ & \\

	&	
	$u^{\textbf{\underline{3}}\otimes\textbf{\underline{3}}}_{\textbf{\underline{5}}}(5)=\frac{1}{\sqrt{2}}e_{12}+\frac{1}{\sqrt{2}}e_{21}$ & \\
\lasthline
\end{tabular}
\caption{Clebsch-Gordan coefficients for $\textbf{\underline{3}}\otimes\textbf{\underline{3}}$ of $A_{5}$.}
\label{A5CGCa}
\end{center}
\end{table}
\hspace{0mm}\\
Using the Clebsch-Gordan coefficients given in table \ref{A5CGCa} we can reduce the tensor products using the matrix of Clebsch-Gordan coefficients $M$ to obtain the generators $A,B,C$ of the 5-dimensional representation.
\\
The matrix representation of a tensor product is the Kronecker product of the matrix representations of its components ($\rightarrow$ definition \ref{DA83}). $M$ reduces a tensor product $D_{a}\otimes D_{b}=D^{(1)}\oplus...\oplus D^{(p)}$ in the following way:
	\begin{equation}\label{reduceequ}
		M^{-1}[D_{a}\otimes D_{b}]M=\left(
		\begin{matrix}
		 [D^{(1)}] & \textbf{0} & \textbf{0} & \textbf{0} \\
		 \textbf{0} & [D^{(2)}] & \textbf{0} & \textbf{0} \\
		 \textbf{0} & \textbf{0} & \ddots & \textbf{0} \\
		 \textbf{0} & \textbf{0} & \textbf{0} & [D^{(p)}]
		\end{matrix}
		\right).
	\end{equation}
On the other hand, using proposition \ref{Pclebsch2} $M$ is the matrix of basis change from $\{e_{i}^{a}\otimes e_{j}^{b}\}_{ij}$ to the bases of the invariant subspaces. Thus
	\begin{displaymath}
	\exists q:\enspace M[e_{i}\otimes e_{j}]=[u_{q}],
	\end{displaymath}
where $[u_{p}]$ is the matrix representation of $u_{p}$ with respect to the basis $\{e^{a}_{i}\otimes e^{b}_{j}\}_{ij}$ of $V_{a}\otimes V_{b}$ and $q$ chooses a specific basis vector from the list of all basis vectors. The matrix representations of the basis vectors $e^{a}_{i}\otimes e^{b}_{j}$ with respect to themselves are
	\begin{displaymath}
	[e_{1}^{a}\otimes e^{b}_{1}]=\left(\begin{matrix}
			 1 \\
			 0 \\
			 \vdots \\
			 0 
	              \end{matrix}\right),\quad...\quad,
	[e^{a}_{n}\otimes e^{b}_{n}]=\left(\begin{matrix}
			  0 \\
			  \vdots \\
			  0 \\
			  1 
	              \end{matrix}\right),
	\end{displaymath}
where we have assumed that $D_{a}$ and $D_{b}$ have the same dimension $n$ (in our case here $n=3$). It follows
	\begin{displaymath}
	M=\left(\begin{matrix}
		[u_{1}] & [u_{2}] & ... & [u_{n^{2}}]
	  \end{matrix}\right).
	\end{displaymath}
To ensure the reducing property (\ref{reduceequ}) we have to order the basis vectors of the invariant subspaces in such a way that
	\begin{displaymath}
	M=\left(\begin{matrix}
		[u^{D_{a}\otimes D_{b}}_{D^{(1)}}(1)] & ... & [u^{D_{a}\otimes D_{b}}_{D^{(1)}}(n_{(1)})] & [u^{D_{a}\otimes D_{b}}_{D^{(2)}}(1)] & ... &
		[u^{D_{a}\otimes D_{b}}_{D^{(p)}}(n_{(p)})]
	  \end{matrix}\right).
	\end{displaymath}
Constructing $M$ for $\textbf{\underline{3}}\otimes \textbf{\underline{3}}$ of $A_{5}$ we find:
\begin{displaymath}
M=\left(
\begin{matrix}
 \frac{1}{\sqrt{3}} & 0 & 0 & 0 & 0 & -\frac{2}{\sqrt{6}} & 0 & 0 & 0 \\
 0 & 0 & 0 & \frac{1}{\sqrt{2}} & 0 & 0 & 0 & 0 & \frac{1}{\sqrt{2}} \\
 0 & 0 & -\frac{1}{\sqrt{2}} & 0 & 0 & 0 & 0 & \frac{1}{\sqrt{2}} & 0 \\
 0 & 0 & 0 & -\frac{1}{\sqrt{2}} & 0 & 0 & 0 & 0 & \frac{1}{\sqrt{2}} \\
 \frac{1}{\sqrt{3}} & 0 & 0 & 0 & \frac{1}{\sqrt{2}} & \frac{1}{\sqrt{6}} & 0 & 0 & 0 \\
 0 & \frac{1}{\sqrt{2}} & 0 & 0 & 0 & 0 & \frac{1}{\sqrt{2}} & 0 & 0 \\
 0 & 0 & \frac{1}{\sqrt{2}} & 0 & 0 & 0 & 0 & \frac{1}{\sqrt{2}} & 0 \\
 0 & -\frac{1}{\sqrt{2}} & 0 & 0 & 0 & 0 & \frac{1}{\sqrt{2}} & 0 & 0 \\
 \frac{1}{\sqrt{3}} & 0 & 0 & 0 & -\frac{1}{\sqrt{2}} & \frac{1}{\sqrt{6}} & 0 & 0 & 0
\end{matrix}
\right).
\end{displaymath}
We can define the representation $\textbf{\underline{3}}$ as the defining representation given by the generators $A,B,C$ listed before. Explicitly reducing the Kronecker product $[\textbf{\underline{3}}\otimes \textbf{\underline{3}}]$ using $M$ we obtain the follwing matrix representation of $\textbf{\underline{5}}$:
	\begin{small}
	\begin{displaymath}
	\begin{split}
	\textbf{\underline{5}}:\enspace & A\mapsto
	\left(
	\begin{matrix}
	 -\frac{1}{2} & \frac{\sqrt{3}}{2} & 0 & 0 & 0 \\
	 -\frac{\sqrt{3}}{2} & -\frac{1}{2} & 0 & 0 & 0 \\
	 0 & 0 & 0 & 1 & 0 \\
	 0 & 0 & 0 & 0 & 1 \\
	 0 & 0 & 1 & 0 & 0
	\end{matrix}
	\right),\quad
	B\mapsto
	\left(
	\begin{matrix}
	 1 & 0 & 0 & 0 & 0 \\
	 0 & 1 & 0 & 0 & 0 \\
	 0 & 0 & 1 & 0 & 0 \\
	 0 & 0 & 0 & -1 & 0 \\
	 0 & 0 & 0 & 0 & -1
	\end{matrix}
	\right),\\
	& C\mapsto
	\left(
	\begin{matrix}
	\frac{1}{8} & -\frac{\sqrt{15}}{8} & -\frac{\sqrt{5}}{4} & \frac{1}{8} \left(3+\sqrt{5}\right) & \frac{1}{8} \left(-3+\sqrt{5}\right) \\
	-\frac{\sqrt{15}}{8} & -\frac{1}{8} & \frac{\sqrt{3}}{4} & \frac{1}{8} \sqrt{3} \left(-1+\sqrt{5}\right) & -\frac{1}{8} \sqrt{3} \left(1+\sqrt{5}\right)
	\\
	-\frac{\sqrt{5}}{4} & \frac{\sqrt{3}}{4} & 0 & \frac{1}{2} & \frac{1}{2} \\
	\frac{1}{8} \left(3+\sqrt{5}\right) & \frac{1}{8} \sqrt{3} \left(-1+\sqrt{5}\right) & \frac{1}{2} & \frac{1}{2} & 0 \\
	\frac{1}{8} \left(-3+\sqrt{5}\right) & -\frac{1}{8} \sqrt{3} \left(1+\sqrt{5}\right) & \frac{1}{2} & 0 & \frac{1}{2}
	\end{matrix}
	\right).
	\end{split}
	\end{displaymath}
	\end{small}
\hspace{0mm}\\
Now we turn to the tensor products involving $\textbf{\underline{3}}'$. How can we find $\textbf{\underline{3}}'$? According to \cite{miller} the generators of $\Sigma(60)=I$ must fulfil the equations
	\begin{displaymath}
	A^{3}=B^{2}=C^{2}=E,\quad (AB)^{3}=(BC)^{3}=E,\quad (AC)^{2}=E,
	\end{displaymath}
where $E$ denotes the identity element of the group. The matrices $\textbf{\underline{3}}(A), \textbf{\underline{3}}(B), \textbf{\underline{3}}(C)$ given before fulfil this relations and the additional condition $\mathrm{det}\hspace{1mm}\textbf{\underline{3}}(A)=\mathrm{det}\hspace{1mm}\textbf{\underline{3}}(B)=\mathrm{det}\hspace{1mm}\textbf{\underline{3}}(C)=1$. There is another solution of these equations which can be found using \textit{Mathematica 6} or a calculation by hand\footnote{I thank Walter Grimus for doing the calculation by hand.}, and it must correspond to $\textbf{\underline{3}}'$:
	\begin{displaymath}
	\begin{split}
	\textbf{\underline{3}}':\enspace & A\mapsto\left(\begin{matrix}
		 0 & 1 & 0 \\
		 0 & 0 & 1 \\
		 1 & 0 & 0
	     \end{matrix}\right),\quad
	B\mapsto\left(\begin{matrix}
		 1 & 0 & 0 \\
		 0 & -1 & 0 \\
		 0 & 0 & -1
	     \end{matrix}\right),\\
	& C \mapsto\frac{1}{2}\left(\begin{matrix}
		 -1 & \mu_{1} & \mu_{2} \\
		 \mu_{1} & \mu_{2} & -1 \\
		 \mu_{2} & -1 & \mu_{1}
	     \end{matrix}\right).
	\end{split}
	\end{displaymath}
Note that from the character table we cannot distinguish which of the given representations is $\textbf{\underline{3}}$ and which is $\textbf{\underline{3}}'$, because the character table remains invariant under $\textbf{\underline{3}}\leftrightarrow\textbf{\underline{3}}'$ and $C_{2}(12)\leftrightarrow C_{3}(12)$, so choosing the defining representation to be $\textbf{\underline{3}}$ was allowed.
\bigskip
\\
We now turn to $\textbf{\underline{3}}\otimes \textbf{\underline{3}}'=\textbf{\underline{4}}\oplus \textbf{\underline{5}}$. At first we calculate the basis vectors of $V_{\textbf{\underline{5}}}$ using the algorithm developed in chapter \ref{chapterclebsch}. One finds
	\begin{displaymath}
	\begin{split}
	\textbf{\underline{5}}:\enspace & u_{1}=-\frac{\sqrt{\frac{5}{3}}}{2}e_{11} + \frac{-3+\sqrt{5}}{4 \sqrt{3}}e_{22} + \frac{3+\sqrt{5}}{4 \sqrt{3}}e_{33},\\
	& u_{2}=\frac{1}{2}e_{11} + \frac{1}{4} \left(-1-\sqrt{5}\right)e_{22} + \frac{1}{4} \left(-1+\sqrt{5}\right)e_{33},\\
	& u_{3}=\sqrt{\frac{1}{6} \left(3+\sqrt{5}\right)}e_{23} -\frac{-1+\sqrt{5}}{2 \sqrt{3}}e_{32},\\
	& u_{4}=-\frac{-1+\sqrt{5}}{2 \sqrt{3}}e_{13} + \sqrt{\frac{1}{6} \left(3+\sqrt{5}\right)}e_{31},\\
	& u_{5}=\sqrt{\frac{1}{6} \left(3+\sqrt{5}\right)}e_{12} -\frac{-1+\sqrt{5}}{2 \sqrt{3}}e_{21}.
	\end{split}
	\end{displaymath}
Extending the basis of $V_{\textbf{\underline{5}}}$ to an orthonormal basis of $V_{\textbf{\underline{3}}}\otimes V_{\textbf{\underline{3}}'}$ we find the basis vectors of $V_{\textbf{\underline{4}}}$. The resulting orthonormal basis of $V_{\textbf{\underline{3}}}\otimes V_{\textbf{\underline{3}}'}$ is shown in table \ref{A5CGCb}.

\begin{table}
\begin{center}
\renewcommand{\arraystretch}{1.4}
\begin{tabular}{|l|l|l|}
\firsthline
	$A_{5}$ & $\textbf{\underline{3}}\otimes\textbf{\underline{3}}'$ & CGC\\
\hline
	$\textbf{\underline{4}}$
	&	
	$u^{\textbf{\underline{3}}\otimes\textbf{\underline{3}}'}_{\textbf{\underline{4}}}(1)=\frac{1}{\sqrt{3}}e_{11}+\frac{1}{\sqrt{3}}e_{22}+\frac{1}{\sqrt{3}}e_{33}$ & \textbf{IVa}\\

	&	
	$u^{\textbf{\underline{3}}\otimes\textbf{\underline{3}}'}_{\textbf{\underline{4}}}(2)=\sqrt{\frac{1}{6} \left(3-\sqrt{5}\right)}e_{12}+\frac{1}{2}(-1+\sqrt{5})\sqrt{\frac{3+\sqrt{5}}{9-3\sqrt{5}}}e_{21}$ & \\

	&	
	$u^{\textbf{\underline{3}}\otimes\textbf{\underline{3}}'}_{\textbf{\underline{4}}}(3)=\sqrt{\frac{1}{6} \left(3+\sqrt{5}\right)}e_{13}+\frac{-1+\sqrt{5}}{2\sqrt{3}}e_{31}$ & \\

	&	
	$u^{\textbf{\underline{3}}\otimes\textbf{\underline{3}}'}_{\textbf{\underline{4}}}(4)=\sqrt{\frac{1}{6} \left(3-\sqrt{5}\right)}e_{23}+\frac{1}{2}(-1+\sqrt{5})\sqrt{\frac{3+\sqrt{5}}{9-3\sqrt{5}}}e_{32}$ & \\

	$\textbf{\underline{5}}$
	&	
	$u^{\textbf{\underline{3}}\otimes\textbf{\underline{3}}'}_{\textbf{\underline{5}}}(1)=-\frac{\sqrt{\frac{5}{3}}}{2}e_{11} + \frac{-3+\sqrt{5}}{4 \sqrt{3}}e_{22} + \frac{3+\sqrt{5}}{4 \sqrt{3}}e_{33}$ & \textbf{Vb}\\

	&	
	$u^{\textbf{\underline{3}}\otimes\textbf{\underline{3}}'}_{\textbf{\underline{5}}}(2)=\frac{1}{2}e_{11} + \frac{1}{4} \left(-1-\sqrt{5}\right)e_{22} + \frac{1}{4} \left(-1+\sqrt{5}\right)e_{33}$ & \\

	&
	$u^{\textbf{\underline{3}}\otimes\textbf{\underline{3}}'}_{\textbf{\underline{5}}}(3)=\sqrt{\frac{1}{6} \left(3+\sqrt{5}\right)}e_{23} -\frac{-1+\sqrt{5}}{2 \sqrt{3}}e_{32}$ & \\

	&	
	$u^{\textbf{\underline{3}}\otimes\textbf{\underline{3}}'}_{\textbf{\underline{5}}}(4)=-\frac{-1+\sqrt{5}}{2 \sqrt{3}}e_{13} + \sqrt{\frac{1}{6} \left(3+\sqrt{5}\right)}e_{31}$ & \\

	&	
	$u^{\textbf{\underline{3}}\otimes\textbf{\underline{3}}'}_{\textbf{\underline{5}}}(5)=\sqrt{\frac{1}{6} \left(3+\sqrt{5}\right)}e_{12} -\frac{-1+\sqrt{5}}{2 \sqrt{3}}e_{21}$ & \\
\lasthline
\end{tabular}
\caption{Clebsch-Gordan coefficients for $\textbf{\underline{3}}\otimes\textbf{\underline{3}}'$ of $A_{5}$.}
\label{A5CGCb}
\end{center}
\end{table}
\hspace{0mm}\\
We can now construct the matrix $M$ that reduces $[\textbf{\underline{3}}\otimes \textbf{\underline{3}}']$ to find $\textbf{\underline{4}}$:
	\begin{displaymath}
	\begin{split}
	\textbf{\underline{4}}:\enspace & A\mapsto
	\left(
	\begin{matrix}
	 1 & 0 & 0 & 0 \\
	 0 & 0 & 0 & 1 \\
	 0 & 1 & 0 & 0 \\
	 0 & 0 & 1 & 0
	\end{matrix}
	\right),\quad
	B\mapsto\left(\begin{matrix}
	1 & 0 & 0 & 0 \\
	0 & -1 & 0 & 0 \\
	0 & 0 & -1 & 0 \\
	0 & 0 & 0 & 1
	     \end{matrix}\right),\\
	& C \mapsto\frac{1}{4}\left(\begin{matrix}
	 -1 & \sqrt{5} & \sqrt{5} & \sqrt{5} \\
	 \sqrt{5} & -1 & 3 & -1 \\
	 \sqrt{5} & 3 & -1 & -1 \\
	 \sqrt{5} & -1 & -1 & 3
	     \end{matrix}\right).
	\end{split}
	\end{displaymath}
Using proposition \ref{PY11} one can easily construct the Clebsch-Gordan coefficients for $\textbf{\underline{3}}'\otimes \textbf{\underline{3}}$. In terms of basis vectors this would correspond to the replacement $e_{ij}\mapsto e_{ji}$ in table \ref{A5CGCb}.
\bigskip
\\
At last we have to consider the Clebsch-Gordan decomposition $\textbf{\underline{3}}'\otimes \textbf{\underline{3}}'=\textbf{\underline{1}}\oplus \textbf{\underline{3}}'\oplus \textbf{\underline{5}}$. Since all arguments used to find the Clebsch-Gordan coefficients for $\textbf{\underline{3}}\otimes \textbf{\underline{3}}$ also hold here, the coefficients can be chosen to be the same as those for $\textbf{\underline{3}}\otimes \textbf{\underline{3}}$ ($\rightarrow$ table \ref{A5CGCa}). However, there is still a little difference: Reducing $[\textbf{\underline{3}}'\otimes \textbf{\underline{3}}']$ with the matrix $M$ for $\textbf{\underline{3}}\otimes \textbf{\underline{3}}$ will lead to an irreducible matrix representation $\hat{\textbf{\underline{5}}}$, which is equivalent (but not equal) to the matrix representation $\textbf{\underline{5}}$ obtained before. We find:
	\begin{small}
	\begin{displaymath}
	\begin{split}
	\hat{\textbf{\underline{5}}}:\enspace & A\mapsto
	\left(
	\begin{matrix}
	 -\frac{1}{2} & \frac{\sqrt{3}}{2} & 0 & 0 & 0 \\
	 -\frac{\sqrt{3}}{2} & -\frac{1}{2} & 0 & 0 & 0 \\
	 0 & 0 & 0 & 1 & 0 \\
	 0 & 0 & 0 & 0 & 1 \\
	 0 & 0 & 1 & 0 & 0
	\end{matrix}
	\right),\quad
	B\mapsto
	\left(
	\begin{matrix}
	 1 & 0 & 0 & 0 & 0 \\
	 0 & 1 & 0 & 0 & 0 \\
	 0 & 0 & 1 & 0 & 0 \\
	 0 & 0 & 0 & -1 & 0 \\
	 0 & 0 & 0 & 0 & -1
	\end{matrix}
	\right),\\
	& C\mapsto
	\left(
	\begin{matrix}
	 \frac{1}{8} & \frac{\sqrt{15}}{8} & \frac{\sqrt{5}}{4} & \frac{1}{8} \left(3-\sqrt{5}\right) & \frac{1}{8} \left(-3-\sqrt{5}\right) \\
 \frac{\sqrt{15}}{8} & -\frac{1}{8} & \frac{\sqrt{3}}{4} & -\frac{1}{8} \sqrt{3} \left(1+\sqrt{5}\right) & \frac{1}{8} \sqrt{3} \left(-1+\sqrt{5}\right)
\\
 \frac{\sqrt{5}}{4} & \frac{\sqrt{3}}{4} & 0 & \frac{1}{2} & \frac{1}{2} \\
 \frac{1}{8} \left(3-\sqrt{5}\right) & -\frac{1}{8} \sqrt{3} \left(1+\sqrt{5}\right) & \frac{1}{2} & \frac{1}{2} & 0 \\
 \frac{1}{8} \left(-3-\sqrt{5}\right) & \frac{1}{8} \sqrt{3} \left(-1+\sqrt{5}\right) & \frac{1}{2} & 0 & \frac{1}{2}
	\end{matrix}
	\right).
	\end{split}
	\end{displaymath}
	\end{small}
\hspace{0mm}\\
Since $\hat{\textbf{\underline{5}}}\sim \textbf{\underline{5}}$, there must exist a matrix $S$ such that
	\begin{displaymath}
	S^{-1}\textbf{\underline{5}}(f)S=\hat{\textbf{\underline{5}}}(f)\quad\forall f\in \mathrm{gen}(A_{5}).
	\end{displaymath}
Rewriting this as
	\begin{displaymath}
	\textbf{\underline{5}}(f)S(\hat{\textbf{\underline{5}}}(f))^{-1}=S\quad\forall f\in \mathrm{gen}(A_{5})
	\end{displaymath}
we can interpret this as an eigenvalue problem
	\begin{displaymath}
	Ns=s
	\end{displaymath}
for
	\begin{displaymath}
	s=\left(\begin{matrix}
	 S_{11} & S_{12} & ... & S_{55}
	  \end{matrix}\right)^{T},
	\end{displaymath}
where $N$ is given by
	\begin{displaymath}
	N=\left(\begin{matrix}
	 [\textbf{\underline{5}}]_{11}[(\hat{\textbf{\underline{5}}})^{-1}]_{11} & ... & [\textbf{\underline{5}}]_{15}[(\hat{\textbf{\underline{5}}})^{-1}]_{51} \\
	 \vdots & & \vdots\\
	 [\textbf{\underline{5}}]_{51}[(\hat{\textbf{\underline{5}}})^{-1}]_{15} & ... & [\textbf{\underline{5}}]_{55}[(\hat{\textbf{\underline{5}}})^{-1}]_{55}
	  \end{matrix}\right).
	\end{displaymath}
We have to find the common eigenvectors to the eigenvalue 1 of $N(f)$ ($f=A,B,C$). Solving this problem (using a computer algebra system) leads to
	\begin{displaymath}
	S=\left(
	\begin{matrix}
	 -\frac{1}{4} & -\frac{\sqrt{15}}{4} & 0 & 0 & 0 \\
	 \frac{\sqrt{15}}{4} & -\frac{1}{4} & 0 & 0 & 0 \\
	 0 & 0 & 1 & 0 & 0 \\
	 0 & 0 & 0 & 1 & 0 \\
	 0 & 0 & 0 & 0 & 1
	\end{matrix}
	\right).
	\end{displaymath}

\subsection{The double cover of the icosahedral group $\tilde{I}$}
The double cover of $I$ is isomorphic to $\tilde{A}_{5}$, which is treated in \cite{luhn1}. According to \cite{luhn1} the new irreducible representations occurring (in addition to the irreducible representations of $A_{5}$) are $\textbf{\underline{2}}$, $\textbf{\underline{2}}'$, $\textbf{\underline{4}}$ and $\textbf{\underline{6}}$, thus there are no new three-dimensional irreducible representations and the Clebsch-Gordan coefficients (for tensor products of the three-dimensional representations) remain unchanged.

\subsection{The dihedral groups $D_{n}$ and their double covers $\tilde{D}_{n}$}
The dihedral group $D_{n}$ is the group of $3$-dimensional ($SO(3)$) rotations that leave the planar regular $n$-gon invariant. Therefore the operations of $D_{n}$ are \cite{hamermesh}:
	\begin{itemize}
	 \item Rotations about the angle $\frac{2\pi k}{n}$ in the plane of the $n$-gon ($k=0,...,n-1$).
	 \item Rotations about the angle $\pi$ around the $n$ symmetry axes of the regular $n$-gon.
	\end{itemize}
Thus we easily see that $D_{n}$ is of order $2n$. The rotations about the angle $\pi$ around the symmetry axes act on the polygon in the same way as reflections at the symmetry axes do. The difference is that using the rotations one gets a finite subgroup of $SO(3)$, and using the reflections one would get a finite subgroup of $O(3)$ instead. Example: Let the plane of the regular polygon be the $(x,y)-$plane, and let the $y$-axis be a symmetry axis of the polygon, then a reflection $P_{yz}$ at the $(y,z)-$plane (this corresponds to a 2-dimensional reflection at the $y$-axis in the $(x,y)-$plane) and a rotation $R_{y}(\pi)$ about the angle $\pi$ around the $y$-axis both leave the polygon invariant.
	\begin{displaymath}
	P_{yz}=\left(\begin{matrix}
-1 & 0 & 0\\
0 & 1 & 0 \\
0 & 0 & 1
	      \end{matrix}\right)\in O(3)
\quad\quad
R_{y}(\pi)=\left(\begin{matrix}
 -1 & 0 & 0 \\
 0 & 1 & 0 \\
 0 & 0 & -1
\end{matrix}\right)\in SO(3)
	\end{displaymath}
The dihedral groups and their double covers  are treated in \cite{blum}. According to this article $D_{n}$ and $\tilde{D}_{n}$ do not possess any three-dimensional irreducible representations, so we do not need to analyse them in our work.
\bigskip
\\
We have now treated all finite subgroups of $SO(3)$ and $SU(2)$ that have three-dimensional irreducible representations, and we will now go on working with the remaining finite subgroups of $SU(3)$.

\section{The finite subgroups of $SU(3)$}
The finite non-Abelian subgroups of $SU(3)$ that are not subgroups of $SU(2)$ or $SO(3)$ are listed in table \ref{SU3subgroups}. Before we will treat the groups separately, we will investigate some general properties of the finite subgroups of $SU(3)$.

\begin{define}\label{DSU32}
	The \textit{center of a group} $G$ is defined as the subgroup $C\subset G$ that fulfils
	\begin{displaymath}
		cg=gc \quad \forall c\in C, g\in G.
	\end{displaymath}
	For $SU(n)$ $C$ is given by
	\begin{displaymath}
		C=\{ \alpha\mathbbm{1}_{n}\vert\alpha\in Z_{n}\}.
	\end{displaymath}
\end{define}

\begin{prop}\label{PSU39}
Let $G$ be a finite subgroup of $SU(n)$, then the center of $G$ is
	\begin{displaymath}
	C=\{\alpha \mathbbm{1}_{n}\vert \alpha\in Y\subset Z_{n}\},
	\end{displaymath}
where $Y$ is a subgroup of $Z_{n}$.
\end{prop}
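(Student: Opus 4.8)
Looking at this statement, I need to prove that for a finite subgroup $G$ of $SU(n)$, its center $C$ consists of matrices $\alpha\mathbbm{1}_n$ where $\alpha$ ranges over a subgroup of $Z_n$. Let me think about the structure.

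The plan is to invoke Schur's lemma. In the situations of interest here the inclusion $G\hookrightarrow SU(n)$ is a faithful \emph{irreducible} $n$-dimensional representation $D$ of $G$ (this is precisely the class of groups under study, e.g.\ the finite subgroups of $SU(3)$ with a faithful $3$-dimensional irreducible representation), so I would take $c\in C$, note that $D(c)$ commutes with $D(g)$ for every $g\in G$ because $cg=gc$, and apply Schur's lemma (\ref{LA50}) to conclude $D(c)=\alpha\,\mathbbm{1}_{n}$ for some $\alpha\in\mathbb{C}$. Since $D$ is the defining inclusion, this means $c=\alpha\,\mathbbm{1}_{n}$ as a matrix.

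Next I would pin down the possible values of $\alpha$: since $c\in SU(n)$ we have $\det c=\alpha^{n}=1$, so $\alpha$ is an $n$-th root of unity, i.e.\ $\alpha\in Z_{n}$ under the usual identification of $Z_{n}$ with the group of $n$-th roots of unity. This shows $C\subseteq\{\alpha\,\mathbbm{1}_{n}\mid\alpha\in Z_{n}\}$, i.e.\ $C$ is contained in the center of $SU(n)$ itself.

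Finally I would define $Y:=\{\alpha\in Z_{n}\mid \alpha\,\mathbbm{1}_{n}\in C\}$, so that $C=\{\alpha\,\mathbbm{1}_{n}\mid\alpha\in Y\}$ by the previous step. The center $C$ of any group is a subgroup, and the map $\alpha\mapsto\alpha\,\mathbbm{1}_{n}$ is an injective group homomorphism $Z_{n}\to SU(n)$ whose image contains $C$; hence $Y$, being the preimage of the subgroup $C$ under this homomorphism, is itself a subgroup of $Z_{n}$. (Concretely: if $\alpha,\beta\in Y$ then $\alpha\beta\,\mathbbm{1}_{n}=(\alpha\,\mathbbm{1}_{n})(\beta\,\mathbbm{1}_{n})\in C$ and $\alpha^{-1}\mathbbm{1}_{n}=(\alpha\,\mathbbm{1}_{n})^{-1}\in C$.) This gives the claimed description.

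The only genuinely delicate point is the irreducibility input: for an arbitrary finite subgroup of $SU(n)$ the center can be larger than the scalars (e.g.\ an abelian $G$ equals its own center), so the proof really uses that the defining representation $D$ is irreducible — which is exactly the standing assumption for the groups analysed in this chapter. I would therefore either state this hypothesis explicitly or rely on the context in which the proposition is used; the rest of the argument is routine once Schur's lemma is in hand.
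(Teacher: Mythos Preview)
Your argument is essentially the same as the paper's: invoke Schur's lemma to force central elements to be scalar, then use $\det(\alpha\mathbbm{1}_n)=\alpha^n=1$ to land in $Z_n$. You are more careful than the paper in two respects: you spell out why $Y$ is a subgroup (the paper leaves this implicit), and you correctly flag that the Schur-lemma step needs the defining representation to be irreducible --- a hypothesis the paper tacitly relies on from the surrounding context but does not state in the proposition itself.
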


\begin{proof}
From Schur's lemma (\ref{LA50}) we see that all elements of the center must be proportional to $\mathbbm{1}_{n}$. Since $C$ is a subgroup of $SU(n)$: $\mathrm{det}(\alpha\mathbbm{1}_{n})=\alpha^{n}=1\Rightarrow \alpha\in Z_{n}$.
\end{proof}

\begin{cor}\label{CSU310}
Let $G$ be a finite subgroup of $SU(3)$, then the center of $G$ can be either
	\begin{displaymath}
	C=\{\alpha\mathbbm{1}_{3}\vert \alpha\in Z_{3}\} \quad\mbox{or}\quad C=\{\mathbbm{1}_{3}\}.
	\end{displaymath}
\end{cor}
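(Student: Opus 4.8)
The plan is to deduce this directly from Proposition \ref{PSU39}, specialized to $n=3$. That proposition tells us that the center of a finite subgroup $G\subset SU(3)$ is necessarily of the form $C=\{\alpha\mathbbm{1}_3\mid\alpha\in Y\}$ for some subgroup $Y\subseteq Z_3$. So the entire task reduces to a purely group-theoretic question: what are the subgroups of $Z_3$? Once that list is written down, substituting each possibility for $Y$ back into the description of $C$ produces exactly the two claimed alternatives.

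First I would invoke Proposition \ref{PSU39} to get $C=\{\alpha\mathbbm{1}_3\mid\alpha\in Y\}$ with $Y\subseteq Z_3$ a subgroup. Next I would observe that $Z_3$ is cyclic of prime order $3$, so by Lagrange's theorem the order of any subgroup $Y$ divides $3$ and is therefore $1$ or $3$; hence $Y=\{1\}$ or $Y=Z_3$. Finally I would translate these two cases: $Y=\{1\}$ gives $C=\{\mathbbm{1}_3\}$, and $Y=Z_3$ gives $C=\{\alpha\mathbbm{1}_3\mid\alpha\in Z_3\}$, which are precisely the two options in the statement.

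There is essentially no obstacle here: the content of the corollary is already contained in Proposition \ref{PSU39}, and the only additional input is the trivial classification of subgroups of a cyclic group of prime order. The one point worth stating explicitly in the write-up is that $3$ is prime, since this is exactly what rules out any intermediate subgroup and thereby any intermediate possibility for the center.
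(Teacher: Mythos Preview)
Your proposal is correct and follows exactly the same approach as the paper: invoke Proposition~\ref{PSU39} and then observe that the only subgroups of $Z_3$ are $\{1\}$ and $Z_3$. The paper's proof is just one sentence to this effect; your added remark that $3$ is prime (so Lagrange forces the subgroup order to be $1$ or $3$) is a harmless bit of extra detail.
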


\begin{proof}
This follows from proposition \ref{PSU39} and the fact that $\{1\}$ and $Z_{3}$ are the only subgroups of $Z_{3}$.
\end{proof}

\begin{define}\label{DSU311}
Let $G$ be a group, then the group $G/C$ is called \textit{collineation group} of $G$.\footnote{For the definition of the factor group $G/C$ we refer the reader to definition \ref{DA22}.}
\end{define}

\begin{define}\label{DSU312b}
The group $SU(n)/C$ is called \textit{projective special unitary group} in $n$ dimensions, i.s. $PSU(n)$.
\end{define}

\begin{table}
\begin{center}
\renewcommand{\arraystretch}{1.4}
\begin{tabular}{|lll|}
\firsthline
	Type of subgroups\rule{3mm}{0mm} & \rule{3mm}{0mm} Subgroup & Order of the subgroup \\
\hline
	$\Sigma(n\phi)$, $n=36, 72, 216, 360$ \rule{3mm}{0mm} & \rule{3mm}{0mm} $\Sigma(36\phi)$ & 108 \\
	\rule{3mm}{0mm} & \rule{3mm}{0mm} $\Sigma(72\phi)$ & 216\\
	\rule{3mm}{0mm} & \rule{3mm}{0mm} $\Sigma(216\phi)$ & 648\\
	\rule{3mm}{0mm} & \rule{3mm}{0mm} $\Sigma(360\phi)$ & 1080\\
\hline
	$\Sigma(m)$, $m=60, 168$\rule{3mm}{0mm} & \rule{3mm}{0mm} $\Sigma(60)$ & 60\\
	\rule{3mm}{0mm} & \rule{3mm}{0mm} $\Sigma(168)$ & 168\\
\hline
	$\Delta(3n^{2})$, $n\in\mathbb{N}\backslash \{0,1\}$\rule{3mm}{0mm} & \rule{3mm}{0mm} $\Delta(3n^{2})$ & $3n^{2}$\\
	$\Delta(6n^{2})$, $n\in\mathbb{N}\backslash \{0,1\}$\rule{3mm}{0mm} & \rule{3mm}{0mm} $\Delta(6n^{2})$ & $6n^{2}$\\
\lasthline
\end{tabular}
\caption[Non-Abelian finite subgroups of $SU(3)$ that are not subgroups of $SU(2)$ or $SO(3)$.]{Non-Abelian finite subgroups of $SU(3)$ that are not subgroups of $SU(2)$ or $SO(3)$, as given by Fairbairn et al. in \cite{fairbairn}. Note that as explained in section \ref{icosahedral} $\Sigma(60)=I$. Thus $\Sigma(60)$ is a finite subgroup of $SO(3)$ and should not be listed in this table.}
\label{SU3subgroups}
\end{center}
\end{table}
\hspace{0mm}\\
The notation $\Sigma(n\phi)$ in table \ref{SU3subgroups} means that these groups have collineation groups $\Sigma(n)$ of order $n$, which are finite subgroups of $PSU(3)$. According to \cite{miller} it could in principle be possible that the collineation group $\Sigma(n)$ has a three-dimensional irreducible representation, so we have to investigate the groups $\Sigma(n)$ too. Fortunately there exists a helpful theorem (theorem \ref{labour-saving-theorem}) which will tell us that the analysis of $\Sigma(n)$ is included in the analysis of $\Sigma(n\phi)$.

\begin{prop}
Let $G$ be a finite group and let $H\subset G$ be an invariant subgroup ($\rightarrow$ definition \ref{DA19}) of $G$. Then
	\begin{displaymath}
	\begin{split}
	\phi:\enspace & G\rightarrow G/H\\
	& a\mapsto aH
	\end{split}
	\end{displaymath}
is a group homomorphism.
\end{prop}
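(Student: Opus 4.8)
The plan is to verify the two axioms defining a group homomorphism: that $\phi$ is well-defined as a map into the factor group $G/H$, and that it respects the group operation. The well-definedness is essentially built into the statement, since we are simply assigning to each $a \in G$ its coset $aH$, and cosets are genuine elements of $G/H$; here we use that $H$ is an invariant (normal) subgroup, so that the set of cosets $G/H$ carries a well-defined group structure (cf.\ definition~\ref{DA22}). The only substantive thing to check is multiplicativity: for all $a, b \in G$ we must show $\phi(ab) = \phi(a)\phi(b)$.

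The key step is the computation $\phi(ab) = (ab)H = (aH)(bH) = \phi(a)\phi(b)$, where the middle equality is precisely the definition of the product of cosets in the factor group $G/H$. More explicitly, the product in $G/H$ is defined by $(aH)(bH) := (ab)H$, and this product is well-defined exactly because $H$ is invariant: if $a' \in aH$ and $b' \in bH$, then $a'b' \in (ab)H$, which is the normality condition $bH = Hb$ in disguise. So the proof reduces to unwinding definitions, and no genuine obstacle arises.

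First I would recall that since $H$ is an invariant subgroup, $G/H$ with the coset multiplication is a group (appealing to definition~\ref{DA22} and the standard fact that normality makes coset multiplication well-defined). Then I would simply observe that $\phi(a) = aH$ lies in $G/H$ for every $a$, so $\phi$ is a map $G \to G/H$. Finally I would write out $\phi(ab) = (ab)H = (aH)(bH) = \phi(a)\phi(b)$, invoking the definition of the product in $G/H$ for the second equality. This establishes that $\phi$ is a group homomorphism. (One may additionally note that $\phi$ is surjective with kernel $H$, which is the content of the first isomorphism theorem, though that is not needed for the statement as given.)

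There is no real obstacle here; the statement is a standard fact about the canonical projection onto a factor group, and the entire content is the observation that the coset product was \emph{defined} so as to make this map multiplicative. If anything, the only point requiring care is not to circularly assume the homomorphism property when justifying that the coset product is well-defined --- but that well-definedness is supplied by the hypothesis that $H$ is invariant, and is presumably already established in the referenced definition~\ref{DA22}.
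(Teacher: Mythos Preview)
Your argument is correct in spirit, but you have misread definition~\ref{DA22}: in this paper, coset multiplication is defined as the \emph{set product} $(aH)(bH):=\{(ah_1)(bh_2)\mid h_1,h_2\in H\}$, not as $(aH)(bH):=(ab)H$. With the paper's definition the identity $\phi(ab)=(ab)H=(aH)(bH)$ is not immediate by definition; one must actually show that the set $\{a b h_1 h_2\}$ coincides with $\{a h_3 b h_2\}$, which is exactly where normality enters. The paper's proof carries this out explicitly by writing $a_1 a_2 h_1 h_2 = a_1 (a_2 h_1 a_2^{-1}) a_2 h_2$ and using $a_2 H a_2^{-1}=H$.

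Your route is the standard alternative: take $(aH)(bH):=(ab)H$ as the definition and check well-definedness via normality. Both are equivalent and equally valid, but since you explicitly appeal to definition~\ref{DA22}, you should either use its actual content (set product) and supply the short computation the paper gives, or state clearly that you are using the equivalent formulation and note that its well-definedness is the place normality is used.
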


\begin{proof}
	\begin{displaymath}
	\begin{split}
	\phi(a_1 a_2) &= (a_1 a_2)H=\\
		&= \{a_1 a_2 h\vert h\in H\}=\\
		&= \{a_1 a_2 h_1 h_2\vert h_1, h_2\in H\}=\\
		&= \{a_1 a_2 h_1 a_2^{-1} a_2 h_2\vert h_1, h_2\in H\}.
	\end{split}
	\end{displaymath}
$H$ is an invariant subgroup of $G$ $\Rightarrow$ $a_2 h_1 a_2^{-1}=:h_3 \in H$ and $a_2 H a_2^{-1}=H$. Thus
	\begin{displaymath}
	\begin{split}
	\phi(a_1 a_2) &= \{a_1 a_2 h_1 a_2^{-1} a_2 h_2\vert h_1, h_2\in H\}=\\
	&= \{a_1 h_3 a_2 h_2\vert h_2, h_3\in H\}=\\
	&= (a_1 H)(a_2 H)=\phi(a_1)\phi(a_2).
	\end{split}
	\end{displaymath}
\end{proof}

\begin{theorem}\label{labour-saving-theorem}
Let $G$ be a finite group, let $H\subset G$ be an invariant subgroup of $G$, let $D$ be an irreducible representation of $G/H$ and let
	\begin{displaymath}
	\begin{split}
	\phi:\enspace & G\rightarrow G/H\\
	& a\mapsto aH
	\end{split}
	\end{displaymath}
be a group homomorphism. Then $D\circ\phi$ is an irreducible representation of $G$.
\end{theorem}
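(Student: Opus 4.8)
The plan is to follow the same strategy as in the proof of Theorem~\ref{TSU36}, replacing the two-to-one cover $\phi:\tilde{G}\to G$ there by the $|H|$-to-one quotient map $\phi:G\to G/H$ here. First I would observe that $D\circ\phi$ is a representation of $G$: by the preceding proposition $\phi$ is a group homomorphism, and $D$ is by definition a homomorphism of $G/H$ into a matrix group, so the composition $D\circ\phi$ is a homomorphism of $G$ into a matrix group, i.e.\ a representation of $G$.

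For irreducibility I would invoke the character criterion (proposition~\ref{PA58}): a representation is irreducible if and only if the Hermitian scalar product of its character with itself equals $1$. Thus it suffices to prove
	\begin{displaymath}
	(\chi_{D\circ\phi},\chi_{D\circ\phi})_{G}=(\chi_{D},\chi_{D})_{G/H}.
	\end{displaymath}
The key point is that the character of $D\circ\phi$ is the pullback $\chi_{D\circ\phi}(a)=\mathrm{Tr}\,(D(\phi(a)))=\chi_{D}(aH)$, and that $\phi$ is surjective with every fibre $\phi^{-1}(gH)=gH$ consisting of exactly $|H|$ elements of $G$ (since $\ker\phi=H$). Grouping the sum over $G$ according to the cosets of $H$ therefore gives
	\begin{displaymath}
	(\chi_{D\circ\phi},\chi_{D\circ\phi})_{G}=\frac{1}{|G|}\sum_{a\in G}|\chi_{D}(aH)|^{2}=\frac{|H|}{|G|}\sum_{gH\in G/H}|\chi_{D}(gH)|^{2}.
	\end{displaymath}
Since $|G/H|=|G|/|H|$, the prefactor $|H|/|G|$ equals $1/|G/H|$, so the right-hand side is exactly $(\chi_{D},\chi_{D})_{G/H}$, which equals $1$ because $D$ is irreducible. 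Hence $(\chi_{D\circ\phi},\chi_{D\circ\phi})_{G}=1$ and $D\circ\phi$ is irreducible.

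There is essentially no hard step here; the only thing needing a moment's care is the bookkeeping in the fibre count — verifying that $\phi$ is genuinely $|H|$-to-one so that the factor $|H|$ in front of the coset sum is correct — together with the trivial observation that $\chi_{D\circ\phi}$ really is $\chi_{D}\circ\phi$. Everything else is the same manipulation of the scalar product on characters that already appears in the proof of Theorem~\ref{TSU36}, so I would simply point to that computation rather than repeat it in full.
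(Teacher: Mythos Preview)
Your proposal is correct and follows essentially the same argument as the paper: both establish that $D\circ\phi$ is a representation as a composition of homomorphisms, then use the character criterion $(\chi,\chi)=1$ together with the fact that $\phi$ is $|H|$-to-one to show $(\chi_{D\circ\phi},\chi_{D\circ\phi})_{G}=(\chi_{D},\chi_{D})_{G/H}=1$. Your explicit reference to Theorem~\ref{TSU36} as the model is apt, since the paper's proof here is precisely the $|H|$-fold generalization of that two-to-one computation.
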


\begin{proof}
$\phi$ and $D$ are group homomorphisms. Thus
	\begin{displaymath}
	\begin{split}
	D\circ\phi:\enspace & G\rightarrow D(G/H)\\
	& a\mapsto D(aH)
	\end{split}
	\end{displaymath}
is a group homomorphism, and therefore a representation of $G$. $D\circ \phi$ is an irreducible representation of $G$ if and only if
	\begin{displaymath}
	(\chi_{D\circ\phi},\chi_{D\circ\phi})_{G}:=\frac{1}{\mathrm{ord}(G)}\sum_{a\in G}\vert\mathrm{Tr}(D(\phi(a)))\vert^2=1.
	\end{displaymath}
(See proposition \ref{PA58}.) Consider now the homomorphism $\phi$: There are $\mathrm{ord}(H)$ different $a\in G$ that are mapped onto the same $aH\in G/H$.
	\begin{displaymath}
	\Rightarrow \sum_{a\in G}\vert\mathrm{Tr}(D(\phi(a)))\vert^2=\mathrm{ord}(H)\sum_{k\in G/H}\vert\mathrm{Tr}(D(k))\vert^2.
	\end{displaymath}
	\begin{displaymath}
	\Rightarrow (\chi_{D\circ\phi},\chi_{D\circ\phi})_G=\underbrace{\frac{\mathrm{ord}(H)}{\mathrm{ord}(G)}}_{\frac{1}{\mathrm{ord}(G/H)}}\sum_{k\in G/H}\vert\mathrm{Tr}(D(k))\vert^2=(\chi_{D},\chi_{D})_{G/H}.
	\end{displaymath}
By assumption $D$ is an irreducible representation of $G/H$.
	\begin{displaymath}
	\Rightarrow(\chi_{D},\chi_{D})_{G/H}=1\Rightarrow(\chi_{D\circ\phi},\chi_{D\circ\phi})_G=1.
	\end{displaymath}
Thus $D\circ\phi$ is an irreducible representation of $G$.
\end{proof}

\begin{cor}\label{Sigmancorollary}
Let $G$ be a finite subgroup of $SU(3)$. Then all irreducible representations of the collineation group $G/C$ are irreducible representations of $G$ too.
\end{cor}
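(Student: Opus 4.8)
The plan is to obtain this corollary as an immediate application of Theorem \ref{labour-saving-theorem}, taking for the invariant subgroup $H$ the center $C$ of $G$. The only prerequisite that has to be checked before the theorem applies is that $C$ is indeed an invariant subgroup of $G$ in the sense of definition \ref{DA19}.

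First I would verify that $C$ is invariant (normal) in $G$. By definition \ref{DSU32} every $c\in C$ satisfies $cg=gc$ for all $g\in G$, i.e. $gcg^{-1}=c$; hence $gCg^{-1}=C$ for every $g\in G$, which is precisely the defining property of an invariant subgroup. That $C$ is a subgroup at all is already part of definition \ref{DSU32} (and in any case is clear: it contains the identity and is closed under products and inverses). I would point out that this step uses nothing about $SU(3)$ — the center of any group is an invariant subgroup — so the labour-saving character of the argument comes entirely from Theorem \ref{labour-saving-theorem}.

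Second, with $H:=C$ invariant in $G$, the canonical map
	\begin{displaymath}
	\phi:\ G\to G/C,\quad a\mapsto aC
	\end{displaymath}
is a group homomorphism (this is the proposition stated immediately before Theorem \ref{labour-saving-theorem}). Now let $D$ be any irreducible representation of the collineation group $G/C$ of definition \ref{DSU311}. Applying Theorem \ref{labour-saving-theorem} verbatim with $H=C$, we conclude that $D\circ\phi$ is an irreducible representation of $G$. Since $D$ was arbitrary, every irreducible representation of $G/C$, pulled back along $\phi$, is an irreducible representation of $G$, which is the claim.

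I do not expect a genuine obstacle: the corollary is essentially a specialization of Theorem \ref{labour-saving-theorem}, and the only point requiring a (trivial) argument is the normality of the center. The one thing to be careful about in the write-up is to be explicit that ``is an irreducible representation of $G$ too'' is meant via the pullback $D\mapsto D\circ\phi$, i.e. that the irreducible representations of $G/C$ are being regarded as exactly those irreducible representations of $G$ on which $C$ acts trivially. As a sanity check one may also note, reusing the character computation from the proof of Theorem \ref{labour-saving-theorem}, that $(\chi_{D\circ\phi},\chi_{D\circ\phi})_{G}=(\chi_{D},\chi_{D})_{G/C}=1$.
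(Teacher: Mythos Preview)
Your proposal is correct and follows exactly the same approach as the paper: show that the center $C$ is an invariant subgroup (because its elements commute with all of $G$) and then apply Theorem \ref{labour-saving-theorem} with $H=C$. The paper's proof is just a one-sentence version of what you wrote; your additional remarks about the pullback interpretation and the character check are helpful elaborations but not required.
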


\begin{proof}
Since the elements of the center $C$ commute with all elements of $G$, $C$ is an invariant subgroup of $G$, and using theorem \ref{labour-saving-theorem} we find the corollary.
\end{proof}
\hspace{0mm}
\\
Since (following corollary \ref{Sigmancorollary}) all irreducible representations of $\Sigma(n)$ are irreducible representations of $\Sigma(n\phi)$ too, for the tensor products and Clebsch-Gordan coefficients it is enough to investigate the groups $\Sigma(n\phi)$.

\subsection{The group $\Sigma(60)$}
We already found out that $\Sigma(60)$ is identical with the icosahedral group $I\simeq A_{5}$, which was already treated in subsection \ref{icosahedral}, where one can also find the generators of $\Sigma(60)$. In subsection \ref{subsectionSigma360phi} we will find out that $\Sigma(60)$ is a subgroup of $\Sigma(360\phi)$.

\subsection{The group $\Sigma(168)$}\label{sigma168subsection}
The group $\Sigma(168)$ is treated in \cite{fairbairn,miller,luhn1,King}. The generators of $\Sigma(168)$ as given in \cite{miller} are
	\begin{displaymath}
	S=\left(
	\begin{matrix}
	 \eta & 0 & 0 \\
	 0 & \eta^{2} & 0 \\
	 0 & 0 & \eta^{4}
	\end{matrix}
	\right),\enspace
	T=\left(
	\begin{matrix}
	 0 & 1 & 0 \\
	 0 & 0 & 1 \\
	 1 & 0 & 0
	\end{matrix}
	\right),\enspace
	R=h\left(
	\begin{matrix}
	 \eta^{4}-\eta^{3} & \eta^{2}-\eta^{5} & \eta-\eta^{6} \\
	 \eta^{2}-\eta^{5} & \eta-\eta^{6} & \eta^{4}-\eta^{3} \\
	 \eta-\eta^{6} & \eta^{4}-\eta^{3} & \eta^{2}-\eta^{5}
	\end{matrix}
	\right),
	\end{displaymath}
where $h=\frac{1}{7}(\eta+\eta^{2}+\eta^{4}-\eta^{6}-\eta^{5}-\eta^{3})$ and $\eta=e^{\frac{2\pi i}{7}}$. According to \cite{luhn1} one needs only two generators to generate $\Sigma(168)$, namely
	\begin{displaymath}
	S'=\frac{i}{\sqrt{7}}\left(
	\begin{matrix}
	 \eta^{2}-\eta^{5} & \eta-\eta^{6} & \eta^{4}-\eta^{3} \\
	 \eta-\eta^{6} & \eta^{4}-\eta^{3} & \eta^{2}-\eta^{5} \\
	 \eta^{4}-\eta^{3} & \eta^{2}-\eta^{5} & \eta-\eta^{6}
	  \end{matrix}
	\right),\quad
	T'=\frac{i}{\sqrt{7}}\left(
	\begin{matrix}
	 \eta^{3}-\eta^{6} & \eta^{3}-\eta & \eta-1 \\
	 \eta^{2}-1 & \eta^{6}-\eta^{5} & \eta^{6}-\eta^{2} \\
	 \eta^{5}-\eta^{4} & \eta^{4}-1 & \eta^{5}-\eta^{3}
	  \end{matrix}
	\right).
	\end{displaymath}
Table \ref{Sigma168charactertable} shows the character table given in \cite{fairbairn}.
\begin{table}
\begin{center}
\renewcommand{\arraystretch}{1.4}
\begin{tabular}{|l|cccccc|}
\firsthline
	$\Sigma(168)$ & $C_{1}(1)$ & $C_{2}(21)$ & $C_{3}(42)$ & $C_{4}(56)$ & $C_{5}(24)$ & $C_{6}(24)$\\
\hline
	\textbf{\underline{1}} & $1$ & $1$ & $1$ & $1$ & $1$ & $1$\\
	\textbf{\underline{3}} & $3$ & $-1$ & $1$ & $0$ & $a$ & $a^{\ast}$\\
	\textbf{\underline{3}}$^{\ast}$ & $3$ & $-1$ & $1$ & $0$ & $a^{\ast}$ & $a$\\
	\textbf{\underline{6}} & $6$ & $2$ & $0$ & $0$ & $-1$ & $-1$\\
	\textbf{\underline{7}} & $7$ & $-1$ & $-1$ & $1$ & $0$ & $0$\\
	\textbf{\underline{8}} & $8$ & $0$ & $0$ & $-1$ & $1$ & $1$\\
\lasthline
\end{tabular}
\caption[The character table of $\Sigma(168)$.]{The character table of $\Sigma(168)$ \cite{fairbairn}. $a=\frac{1}{2}(-1+i\sqrt{7})$}
\label{Sigma168charactertable}
\end{center}
\end{table}
\hspace{0mm}\\
Using the character table we can compute the tensor products:
	\begin{displaymath}
	\begin{split}
	& \textbf{\underline{3}}\otimes\textbf{\underline{3}}=\textbf{\underline{3}}^{\ast}\oplus \textbf{\underline{6}},\\
	& \textbf{\underline{3}}\otimes\textbf{\underline{3}}^{\ast}=\textbf{\underline{1}}\oplus \textbf{\underline{8}},\\
	&
	\textbf{\underline{3}}^{\ast}\otimes\textbf{\underline{3}}^{\ast}=\textbf{\underline{3}}\oplus \textbf{\underline{6}}.\\
	\end{split}
	\end{displaymath}
Here we have the interesting case\footnote{Side remark: It is another interesting feature that (to current knowledge) $\Sigma(168)$ is the only finite subgroup of $SU(3)$ that possesses an irreducible 7-dimensional representation.} that $3\otimes 3$-tensor products of $\Sigma(168)$ only decompose in the \textquotedblleft standard ways\textquotedblright\hspace{1mm} covered by lemmata \ref{CGdecompL1} and \ref{CGdecompL2}. So we already know the basis vectors of the invariant subspaces, and normalizing them we find the orthonormal bases of $V_{\textbf{\underline{3}}}\otimes V_{\textbf{\underline{3}}}$ and $V_{\textbf{\underline{3}}}\otimes V_{\textbf{\underline{3}}^{\ast}}$, which are shown in tables \ref{Sigma168CGCa} and \ref{Sigma168CGCb}. (The basis vectors of $V_{\textbf{\underline{8}}}$ were obtained by extending $u^{\textbf{\underline{3}}\otimes \textbf{\underline{3}}^{\ast}}_{\textbf{\underline{1}}}$ to an orthonormal basis of $V_{\textbf{\underline{3}}}\otimes V_{\textbf{\underline{3}}^{\ast}}$.)

\begin{table}
\begin{center}
\renewcommand{\arraystretch}{1.4}
\begin{tabular}{|l|l|l|}
\firsthline
	$\Sigma(168)$ & $\textbf{\underline{3}}\otimes\textbf{\underline{3}}$/$\textbf{\underline{3}}^{\ast}\otimes\textbf{\underline{3}}^{\ast}$ & CGC\\
\hline
	$\textbf{\underline{3}}$/$\textbf{\underline{3}}^{\ast}$
	&	
	$u^{\textbf{\underline{3}}\otimes\textbf{\underline{3}}}_{\textbf{\underline{3}}}(1)=\frac{1}{\sqrt{2}}e_{23}-\frac{1}{\sqrt{2}}e_{32}$ & \textbf{IIIa$_{(1,-1)}$}\\

	&	
	$u^{\textbf{\underline{3}}\otimes\textbf{\underline{3}}}_{\textbf{\underline{3}}}(2)=-\frac{1}{\sqrt{2}}e_{13}+\frac{1}{\sqrt{2}}e_{31}$ & \\

	&	
	$u^{\textbf{\underline{3}}\otimes\textbf{\underline{3}}}_{\textbf{\underline{3}}}(3)=\frac{1}{\sqrt{2}}e_{12}-\frac{1}{\sqrt{2}}e_{21}$ & \\

	$\textbf{\underline{6}}$
	&	
	$u^{\textbf{\underline{3}}\otimes\textbf{\underline{3}}}_{\textbf{\underline{6}}}(1)=e_{11}$ & \textbf{VIa}\\

	&	
	$u^{\textbf{\underline{3}}\otimes\textbf{\underline{3}}}_{\textbf{\underline{6}}}(2)=e_{22}$ &\\

	&	
	$u^{\textbf{\underline{3}}\otimes\textbf{\underline{3}}}_{\textbf{\underline{6}}}(3)=e_{33}$ &\\

	&	
	$u^{\textbf{\underline{3}}\otimes\textbf{\underline{3}}}_{\textbf{\underline{6}}}(4)=\frac{1}{\sqrt{2}}e_{23}+\frac{1}{\sqrt{2}}e_{32}$ &\\

	&	
	$u^{\textbf{\underline{3}}\otimes\textbf{\underline{3}}}_{\textbf{\underline{6}}}(5)=\frac{1}{\sqrt{2}}e_{13}+\frac{1}{\sqrt{2}}e_{31}$ &\\

	&	
	$u^{\textbf{\underline{3}}\otimes\textbf{\underline{3}}}_{\textbf{\underline{6}}}(6)=\frac{1}{\sqrt{2}}e_{12}+\frac{1}{\sqrt{2}}e_{21}$ &\\
\lasthline
\end{tabular}
\caption{Clebsch-Gordan coefficients for $\textbf{\underline{3}}\otimes\textbf{\underline{3}}$ and $\textbf{\underline{3}}^{\ast}\otimes\textbf{\underline{3}}^{\ast}$ of $\Sigma(168)$.}
\label{Sigma168CGCa}
\end{center}
\end{table}

\begin{table}
\begin{center}
\renewcommand{\arraystretch}{1.4}
\begin{tabular}{|l|l|l|}
\firsthline
	$\Sigma(168)$ & $\textbf{\underline{3}}\otimes\textbf{\underline{3}}^{\ast}$ & CGC\\
\hline
	$\textbf{\underline{1}}$
	&	
	$u^{\textbf{\underline{3}}\otimes\textbf{\underline{3}}^{\ast}}_{\textbf{\underline{1}}}=\frac{1}{\sqrt{3}}e_{11}+\frac{1}{\sqrt{3}}e_{22}+\frac{1}{\sqrt{3}}e_{33}$ & \textbf{Ia}\\

	$\textbf{\underline{8}}$
	&
	$u^{\textbf{\underline{3}}\otimes\textbf{\underline{3}}^{\ast}}_{\textbf{\underline{8}}}(1)=e_{12}$ & \textbf{VIII}\\

	&
	$u^{\textbf{\underline{3}}\otimes\textbf{\underline{3}}^{\ast}}_{\textbf{\underline{8}}}(2)=e_{13}$& \\

	&
	$u^{\textbf{\underline{3}}\otimes\textbf{\underline{3}}^{\ast}}_{\textbf{\underline{8}}}(3)=e_{21}$& \\

	&
	$u^{\textbf{\underline{3}}\otimes\textbf{\underline{3}}^{\ast}}_{\textbf{\underline{8}}}(4)=\frac{1}{\sqrt{2}}e_{11}-\frac{1}{\sqrt{2}}e_{22}$ & \\

	&
	$u^{\textbf{\underline{3}}\otimes\textbf{\underline{3}}^{\ast}}_{\textbf{\underline{8}}}(5)=e_{23}$& \\

	&
	$u^{\textbf{\underline{3}}\otimes\textbf{\underline{3}}^{\ast}}_{\textbf{\underline{8}}}(6)=e_{31}$& \\

	&
	$u^{\textbf{\underline{3}}\otimes\textbf{\underline{3}}^{\ast}}_{\textbf{\underline{8}}}(7)=e_{32}$& \\

	&	
	$u^{\textbf{\underline{3}}\otimes\textbf{\underline{3}}^{\ast}}_{\textbf{\underline{8}}}(8)=\frac{1}{\sqrt{6}}e_{11}+\frac{1}{\sqrt{6}}e_{22}-\frac{2}{\sqrt{6}}e_{33}$ & \\
\lasthline
\end{tabular}
\caption{Clebsch-Gordan coefficients for $\textbf{\underline{3}}\otimes\textbf{\underline{3}}^{\ast}$ of $\Sigma(168)$.}
\label{Sigma168CGCb}
\end{center}
\end{table}

\subsection{The group $\Sigma(36\phi)$}\label{subsectionS36phi}
The generators of this group can be found in \cite{fairbairn} and \cite{miller}. Since Fairbairn et al. \cite{fairbairn} only work with the collineation groups of $\Sigma(n\phi)$, they do not give the character tables for $\Sigma(n\phi)$.
\medskip
\\
Fortunately there exists a computer algebra system, which is able to calculate character tables and irreducible representations of finite groups from their generators. This program is called \textit{GAP} \cite{gap}. (We already used \textit{GAP} in section \ref{icosahedral}.)
\medskip
\\
The generators of $\Sigma(36\phi)$ are \cite{miller}:
	\begin{displaymath}
		A=\left(\begin{matrix}
			 1 & 0 & 0 \\
			 0 & \omega & 0 \\
			 0 & 0 & \omega^2
		        \end{matrix}\right),\enspace
		B=\left(\begin{matrix}
			 0 & 1 & 0 \\
			 0 & 0 & 1 \\
			 1 & 0 & 0
		        \end{matrix}\right),\enspace
		C=\frac{1}{\omega-\omega^2}\left(\begin{matrix}
			 1 & 1 & 1 \\
			 1 & \omega & \omega^2 \\
			 1 & \omega^2 & \omega
		        \end{matrix}\right),
	\end{displaymath}
with $\omega=e^{\frac{2\pi i}{3}}$. The character table calculated with \textit{GAP} is shown in table \ref{sigma36charactertable}.

\begin{table}
\begin{footnotesize}
\begin{center}
\renewcommand{\arraystretch}{1.4}
\begin{tabular}{|l|ccccccc|}
\firsthline
	$\Sigma(36\phi)$ & $C_{1}(1)$ & $C_{2}(9)$ & $C_{3}(9)$ & $C_{4}(12)$ & $C_{5}(12)$ & $C_{6}(9)$ & $C_{7}(9)$\\
\hline
	$\textbf{\underline{1}}_{1}$ & $1$ & $1$ & $1$ & $1$ & $1$ & $1$ & $1$\\
	$\textbf{\underline{1}}_{2}$ & $1$ & $1$ & $1$ & $1$ & $1$ & $1$ & $-1$\\
	$\textbf{\underline{1}}_{3}$ & $1$ & $-1$ & $-1$ & $1$ & $1$ & $-1$ & $b$\\
	$\textbf{\underline{1}}_{4}$ & $1$ & $-1$ & $-1$ & $1$ & $1$ & $-1$ & $-b$\\
	$\textbf{\underline{3}}_{1}$ & $3$ & $a$ & $a^{\ast}$ & $0$ & $0$ & $-1$ & $a$\\
	$\textbf{\underline{3}}_{2}$ & $3$ & $a^{\ast}$ & $a$ & $0$ & $0$ & $-1$ & $a^{\ast}$\\
	$\textbf{\underline{3}}_{3}$ & $3$ & $a$ & $a^{\ast}$ & $0$ & $0$ & $-1$ & $-a$\\
	$\textbf{\underline{3}}_{4}$ & $3$ & $a^{\ast}$ & $a$ & $0$ & $0$ & $-1$ & $-a^{\ast}$\\
	$\textbf{\underline{3}}_{5}$ & $3$ & $-a^{\ast}$ & $-a$ & $0$ & $0$ & $1$ & $c$\\
	$\textbf{\underline{3}}_{6}$ & $3$ & $-a^{\ast}$ & $-a$ & $0$ & $0$ & $1$ & $-c$\\
	$\textbf{\underline{3}}_{7}$ & $3$ & $-a$ & $-a^{\ast}$ & $0$ & $0$ & $1$ & $-c^{\ast}$\\
	$\textbf{\underline{3}}_{8}$ & $3$ & $-a$ & $-a^{\ast}$ & $0$ & $0$ & $1$ & $c^{\ast}$\\
	$\textbf{\underline{4}}_{1}$ & $4$ & $0$ & $0$ & $-2$ & $1$ & $0$ & $0$\\
	$\textbf{\underline{4}}_{2}$ & $4$ & $0$ & $0$ & $1$ & $-2$ & $0$ & $0$\\
\hline
 & $C_{8}(9)$ & $C_{9}(9)$ & $C_{10}(9)$ & $C_{11}(9)$ & $C_{12}(1)$ & $C_{13}(9)$ & $C_{14}(1)$\\
\hline
	$\textbf{\underline{1}}_{1}$ & $1$ & $1$ & $1$ & $1$ & $1$ & $1$ & $1$\\
	$\textbf{\underline{1}}_{2}$ & $-1$ & $-1$ & $-1$ & $-1$ & $1$ & $-1$ & $1$\\
	$\textbf{\underline{1}}_{3}$ & $b$ & $-b$ & $-b$ & $-b$ & $1$ & $b$ & $1$\\
	$\textbf{\underline{1}}_{4}$ & $-b$ & $b$ & $b$ & $b$ & $1$ & $-b$ & $1$\\
	$\textbf{\underline{3}}_{1}$ & $-1$ & $a^{\ast}$ & $-1$ & $a$ & $d$ & $a^{\ast}$ & $d^{\ast}$\\
	$\textbf{\underline{3}}_{2}$ & $-1$ & $a$ & $-1$ & $a^{\ast}$ & $d^{\ast}$ & $a$ & $d$\\
	$\textbf{\underline{3}}_{3}$ & $1$ & $-a^{\ast}$ & $1$ & $-a$ & $d$ & $-a^{\ast}$ & $d^{\ast}$\\
	$\textbf{\underline{3}}_{4}$ & $1$ & $-a$ & $1$ & $-a^{\ast}$ & $d^{\ast}$ & $-a$ & $d$\\
	$\textbf{\underline{3}}_{5}$ & $b$ & $c^{\ast}$ & $-b$ & $-c$ & $d^{\ast}$ & $-c^{\ast}$ & $d$\\
	$\textbf{\underline{3}}_{6}$ & $-b$ & $-c^{\ast}$ & $b$ & $c$ & $d^{\ast}$ & $c^{\ast}$ & $d$\\
	$\textbf{\underline{3}}_{7}$ & $b$ & $-c$ & $-b$ & $c^{\ast}$ & $d$ & $c$ & $d^{\ast}$\\
	$\textbf{\underline{3}}_{8}$ & $-b$ & $c$ & $b$ & $-c^{\ast}$ & $d$ & $-c$ & $d^{\ast}$\\
	$\textbf{\underline{4}}_{1}$ & $0$ & $0$ & $0$ & $0$ & $4$ & $0$ & $4$\\
	$\textbf{\underline{4}}_{2}$ & $0$ & $0$ & $0$ & $0$ & $4$ & $0$ & $4$\\
\lasthline
\end{tabular}
\caption[The character table of $\Sigma(36\phi)$.]{The character table of $\Sigma(36\phi)$ as calculated with \textit{GAP}.
$a=-\omega^2$, $b=-i$, $c=-e^{\frac{7\pi i}{6}}$, $d=3\omega^2$.}
\label{sigma36charactertable}
\end{center}
\end{footnotesize}
\end{table}
\hspace{0mm}\\
Using the character table we can calculate the tensor products of the three-dimensional representations. The results are shown in table \ref{sigma36tensor}.

\begin{table}
\begin{small}
\begin{center}
\renewcommand{\arraystretch}{1.4}
\begin{tabular}{|l@{\hspace{10mm}}l|}
\firsthline
	$\textbf{\underline{3}}_{1} \otimes \textbf{\underline{3}}_{1}=\textbf{\underline{3}}_{4}\oplus\textbf{\underline{3}}_{5}\oplus\textbf{\underline{3}}_{6}$ & $\textbf{\underline{3}}_{3} \otimes \textbf{\underline{3}}_{6}=\textbf{\underline{1}}_{4}\oplus\textbf{\underline{4}}_{1}\oplus\textbf{\underline{4}}_{2}$\\
	$\textbf{\underline{3}}_{1} \otimes \textbf{\underline{3}}_{2}=\textbf{\underline{1}}_{1}\oplus\textbf{\underline{4}}_{1}\oplus\textbf{\underline{4}}_{2}$ & $\textbf{\underline{3}}_{3} \otimes \textbf{\underline{3}}_{7}=\textbf{\underline{3}}_{2}\oplus\textbf{\underline{3}}_{4}\oplus\textbf{\underline{3}}_{5}$\\
	$\textbf{\underline{3}}_{1} \otimes \textbf{\underline{3}}_{3}=\textbf{\underline{3}}_{2}\oplus\textbf{\underline{3}}_{5}\oplus\textbf{\underline{3}}_{6}$ & $\textbf{\underline{3}}_{3} \otimes \textbf{\underline{3}}_{8}=\textbf{\underline{3}}_{2}\oplus\textbf{\underline{3}}_{4}\oplus\textbf{\underline{3}}_{6}$\\
	$\textbf{\underline{3}}_{1} \otimes \textbf{\underline{3}}_{4}=\textbf{\underline{1}}_{2}\oplus\textbf{\underline{4}}_{1}\oplus\textbf{\underline{4}}_{2}$ & $\textbf{\underline{3}}_{4} \otimes \textbf{\underline{3}}_{4}=\textbf{\underline{3}}_{3}\oplus\textbf{\underline{3}}_{7}\oplus\textbf{\underline{3}}_{8}$\\
	$\textbf{\underline{3}}_{1} \otimes \textbf{\underline{3}}_{5}=\textbf{\underline{1}}_{4}\oplus\textbf{\underline{4}}_{1}\oplus\textbf{\underline{4}}_{2}$ & $\textbf{\underline{3}}_{4} \otimes \textbf{\underline{3}}_{5}=\textbf{\underline{3}}_{1}\oplus\textbf{\underline{3}}_{3}\oplus\textbf{\underline{3}}_{7}$\\
	$\textbf{\underline{3}}_{1} \otimes \textbf{\underline{3}}_{6}=\textbf{\underline{1}}_{3}\oplus\textbf{\underline{4}}_{1}\oplus\textbf{\underline{4}}_{2}$ & $\textbf{\underline{3}}_{4} \otimes \textbf{\underline{3}}_{6}=\textbf{\underline{3}}_{1}\oplus\textbf{\underline{3}}_{3}\oplus\textbf{\underline{3}}_{8}$\\
	$\textbf{\underline{3}}_{1} \otimes \textbf{\underline{3}}_{7}=\textbf{\underline{3}}_{2}\oplus\textbf{\underline{3}}_{4}\oplus\textbf{\underline{3}}_{6}$ & $\textbf{\underline{3}}_{4} \otimes \textbf{\underline{3}}_{7}=\textbf{\underline{1}}_{3}\oplus\textbf{\underline{4}}_{1}\oplus\textbf{\underline{4}}_{2}$\\
	$\textbf{\underline{3}}_{1} \otimes \textbf{\underline{3}}_{8}=\textbf{\underline{3}}_{2}\oplus\textbf{\underline{3}}_{4}\oplus\textbf{\underline{3}}_{5}$ & $\textbf{\underline{3}}_{4} \otimes \textbf{\underline{3}}_{8}=\textbf{\underline{1}}_{4}\oplus\textbf{\underline{4}}_{1}\oplus\textbf{\underline{4}}_{2}$\\
	$\textbf{\underline{3}}_{2} \otimes \textbf{\underline{3}}_{2}=\textbf{\underline{3}}_{3}\oplus\textbf{\underline{3}}_{7}\oplus\textbf{\underline{3}}_{8}$ & $\textbf{\underline{3}}_{5} \otimes \textbf{\underline{3}}_{5}=\textbf{\underline{3}}_{1}\oplus\textbf{\underline{3}}_{7}\oplus\textbf{\underline{3}}_{8}$\\
	$\textbf{\underline{3}}_{2} \otimes \textbf{\underline{3}}_{3}=\textbf{\underline{1}}_{2}\oplus\textbf{\underline{4}}_{1}\oplus\textbf{\underline{4}}_{2}$ & $\textbf{\underline{3}}_{5} \otimes \textbf{\underline{3}}_{6}=\textbf{\underline{3}}_{3}\oplus\textbf{\underline{3}}_{7}\oplus\textbf{\underline{3}}_{8}$\\
	$\textbf{\underline{3}}_{2} \otimes \textbf{\underline{3}}_{4}=\textbf{\underline{3}}_{1}\oplus\textbf{\underline{3}}_{7}\oplus\textbf{\underline{3}}_{8}$ & $\textbf{\underline{3}}_{5} \otimes \textbf{\underline{3}}_{7}=\textbf{\underline{1}}_{2}\oplus\textbf{\underline{4}}_{1}\oplus\textbf{\underline{4}}_{2}$\\
	$\textbf{\underline{3}}_{2} \otimes \textbf{\underline{3}}_{5}=\textbf{\underline{3}}_{1}\oplus\textbf{\underline{3}}_{3}\oplus\textbf{\underline{3}}_{8}$ & $\textbf{\underline{3}}_{5} \otimes \textbf{\underline{3}}_{8}=\textbf{\underline{1}}_{1}\oplus\textbf{\underline{4}}_{1}\oplus\textbf{\underline{4}}_{2}$\\
	$\textbf{\underline{3}}_{2} \otimes \textbf{\underline{3}}_{6}=\textbf{\underline{3}}_{1}\oplus\textbf{\underline{3}}_{3}\oplus\textbf{\underline{3}}_{7}$ & $\textbf{\underline{3}}_{6} \otimes \textbf{\underline{3}}_{6}=\textbf{\underline{3}}_{1}\oplus\textbf{\underline{3}}_{7}\oplus\textbf{\underline{3}}_{8}$\\
	$\textbf{\underline{3}}_{2} \otimes \textbf{\underline{3}}_{7}=\textbf{\underline{1}}_{4}\oplus\textbf{\underline{4}}_{1}\oplus\textbf{\underline{4}}_{2}$ & $\textbf{\underline{3}}_{6} \otimes \textbf{\underline{3}}_{7}=\textbf{\underline{1}}_{1}\oplus\textbf{\underline{4}}_{1}\oplus\textbf{\underline{4}}_{2}$\\
	$\textbf{\underline{3}}_{2} \otimes \textbf{\underline{3}}_{8}=\textbf{\underline{1}}_{3}\oplus\textbf{\underline{4}}_{1}\oplus\textbf{\underline{4}}_{2}$ & $\textbf{\underline{3}}_{6} \otimes \textbf{\underline{3}}_{8}=\textbf{\underline{1}}_{2}\oplus\textbf{\underline{4}}_{1}\oplus\textbf{\underline{4}}_{2}$\\
	$\textbf{\underline{3}}_{3} \otimes \textbf{\underline{3}}_{3}=\textbf{\underline{3}}_{4}\oplus\textbf{\underline{3}}_{5}\oplus\textbf{\underline{3}}_{6}$ & $\textbf{\underline{3}}_{7} \otimes \textbf{\underline{3}}_{7}=\textbf{\underline{3}}_{2}\oplus\textbf{\underline{3}}_{5}\oplus\textbf{\underline{3}}_{6}$\\
	$\textbf{\underline{3}}_{3} \otimes \textbf{\underline{3}}_{4}=\textbf{\underline{1}}_{1}\oplus\textbf{\underline{4}}_{1}\oplus\textbf{\underline{4}}_{2}$ & $\textbf{\underline{3}}_{7} \otimes \textbf{\underline{3}}_{8}=\textbf{\underline{3}}_{4}\oplus\textbf{\underline{3}}_{5}\oplus\textbf{\underline{3}}_{6}$\\
	$\textbf{\underline{3}}_{3} \otimes \textbf{\underline{3}}_{5}=\textbf{\underline{1}}_{3}\oplus\textbf{\underline{4}}_{1}\oplus\textbf{\underline{4}}_{2}$ & $\textbf{\underline{3}}_{8} \otimes \textbf{\underline{3}}_{8}=\textbf{\underline{3}}_{2}\oplus\textbf{\underline{3}}_{5}\oplus\textbf{\underline{3}}_{6}$\\
\lasthline
\end{tabular}
\caption{Tensor products of the three-dimensional irreducible representations of $\Sigma(36\phi)$.}
\label{sigma36tensor}
\end{center}
\end{small}
\end{table}
\hspace{0mm}\\
Although the number of tensor products of three-dimensional irreducible representations of $\Sigma(36\phi)$ is very high, there are only three types of Clebsch-Gordan coefficients, which we will show now.
\medskip
\\
From the character table \ref{sigma36charactertable} we can calculate the relations between the inequivalent three-dimensional irreducible representations of $\Sigma(36\phi)$, which are shown in table \ref{sigma36tensorb}.

\begin{table}
\begin{center}
\renewcommand{\arraystretch}{1.4}
\begin{tabular}{|l@{\hspace{10mm}}l|}
\firsthline
	$\textbf{\underline{3}}_{3}=\textbf{\underline{1}}_{2}\otimes \textbf{\underline{3}}_{1}$ & $\textbf{\underline{3}}_{4}=\textbf{\underline{1}}_{2}\otimes \textbf{\underline{3}}_{2}$\\
	$\textbf{\underline{3}}_{7}=\textbf{\underline{1}}_{4}\otimes \textbf{\underline{3}}_{1}$ & $\textbf{\underline{3}}_{5}=\textbf{\underline{1}}_{4}\otimes \textbf{\underline{3}}_{2}$\\
	$\textbf{\underline{3}}_{8}=\textbf{\underline{1}}_{3}\otimes \textbf{\underline{3}}_{1}$ & $\textbf{\underline{3}}_{6}=\textbf{\underline{1}}_{3}\otimes \textbf{\underline{3}}_{2}$\\
\lasthline
\end{tabular}
\caption{Relations of the inequivalent three-dimensional irreducible representations of $\Sigma(36\phi)$.}
\label{sigma36tensorb}
\end{center}
\end{table}
\hspace{0mm}\\
Using the results shown in table \ref{sigma36tensorb} and applying lemma \ref{LSU313} we find that we only need to consider the tensor products
	\begin{displaymath}
	\begin{split}
	& \textbf{\underline{3}}_{1}\otimes \textbf{\underline{3}}_{1}= \textbf{\underline{3}}_{4}\oplus \textbf{\underline{3}}_{5}\oplus \textbf{\underline{3}}_{6},\\
	& \textbf{\underline{3}}_{1}\otimes \textbf{\underline{3}}_{2}= \textbf{\underline{1}}_{1}\oplus \textbf{\underline{4}}_{1}\oplus \textbf{\underline{4}}_{2},\\
	& \textbf{\underline{3}}_{2}\otimes \textbf{\underline{3}}_{2}= \textbf{\underline{3}}_{3}\oplus \textbf{\underline{3}}_{7}\oplus \textbf{\underline{3}}_{8}.
	\end{split}
	\end{displaymath}
Before we can start to calculate bases of the invariant subspaces we have to find out which representation listed in the character table the defining representation (given at the beginning of this subsection) corresponds to. We proceed in the following way:
	\begin{itemize}
	 \item We calculate all group elements in the defining representation. This can be done by multiplication of each generator with each, in the next step multiplying each then known group element with each, and so on, until all group elements are found. An example for a \textit{Mathematica 6} program implementing such an algorithm (used on the group $\Sigma(216\phi)$ ($\rightarrow$ subsection \ref{sigma216phisubsection})) can be found in appendix \ref{appendixD}.
	 \item We calculate the traces of all group elements to obtain the characters.
	 \item We compare the characters with those listed in the character table \ref{sigma36charactertable}.
	\end{itemize}
We find characters with the following multiplicities
	\begin{displaymath}
	\begin{split}
	& 1\times 3,\enspace 9\times a,\enspace 9\times a^{\ast},\enspace 18\times -a,\enspace18\times -a^{\ast},\\
	& 24\times 0,\enspace 18\times 1,\enspace 9\times -1,\enspace 1\times d,\enspace 1\times d^{\ast}.
	\end{split}
	\end{displaymath}
The only representations that allow these multiplicities of characters are $\textbf{\underline{3}}_{3}$ and $\textbf{\underline{3}}_{4}=\textbf{\underline{3}}_{3}^{\ast}$. From the character table alone one cannot decide whether $\textbf{\underline{3}}_{3}$ or $\textbf{\underline{3}}_{4}$ corresponds to the defining representation for the following reason: Consider the generators $A,B,C$ of the defining representation. In this representation we have:
	\begin{displaymath}
	A^{\ast}=A^{-1},\enspace B^{\ast}=B,\enspace C^{\ast}=C^{-1}.
	\end{displaymath}
Thus the defining representation and its complex conjugate representation, interpreted as matrix groups, consist of the same elements. Nevertheless the defining representation and its complex conjugate are not equivalent. Since the defining representation and its complex conjugate, interpreted as matrix groups, consist of the same elements, these two inequivalent representations cannot be distinguished from the character table.
\\
This can also be verified from a different point of view. Let $a$ be an element of a matrix group $G$ that fulfils
	\begin{displaymath}
	g\in G\Rightarrow g^{\ast}\in G\quad\forall g\in G.
	\end{displaymath}
Then the conjugate classes $C_{a}$ and $C_{a^{\ast}}$ fulfil
	\begin{displaymath}
	(C_{a})^{\ast}=C_{a^{\ast}}.
	\end{displaymath}
This is true, because $bab^{-1}=a'$ $\Rightarrow$ $b^{\ast}a^{\ast}(b^{\ast})^{-1}=a'\hspace{0mm}^{\ast}$. Thus the character table remains invariant under interchange of the irreducible representations and their complex conjugates, if at the same time the conjugate classes and their complex conjugates are interchanged.\footnote{Clearly one has to interchange all irreducible representations with their complex conjugates, because the interchange of the conjugate classes affects all representations.}
\medskip
\\
Therefore we could choose the defining representation to be equivalent to $\textbf{\underline{3}}_{3}$ (and with the same right we could choose that it shall be equivalent to $\textbf{\underline{3}}_4$). Furthermore we need generators for the one-dimensional irreducible representations, and we use \textit{GAP} to obtain them:
	\begin{displaymath}
	\begin{split}
	& \textbf{\underline{1}}_1: A\mapsto 1,\enspace B\mapsto 1,\enspace C\mapsto 1,\\
	& \textbf{\underline{1}}_2: A\mapsto 1,\enspace B\mapsto 1,\enspace C\mapsto -1,\\
	& \textbf{\underline{1}}_3: A\mapsto 1,\enspace B\mapsto 1,\enspace C\mapsto i,\\
	& \textbf{\underline{1}}_4: A\mapsto 1,\enspace B\mapsto 1,\enspace C\mapsto -i.
	\end{split}
	\end{displaymath}
Conjecture: Due to the choice of generators for the $1$-dimensional representations, the freedom of choice for the defining representation being equivalent to $\textbf{\underline{3}}_3$ or $\textbf{\underline{3}}_4$ is no longer given. The reason for the disappearance of this freedom is the following:
\\
We have argued that the character table remains invariant under an interchange of all irreducible representations with their complex conjugates, if at the same time all conjugate classes are interchanged with their complex conjugates. Due to the choice of the $1$-dimensional irreducible representations, $\textbf{\underline{1}}_3$ and $\textbf{\underline{1}}_4=\textbf{\underline{1}}_3^{\ast}$ cannot be interchanged any longer, and thus also $\textbf{\underline{3}}_3$ and $\textbf{\underline{3}}_4$ should be fixed.
\bigskip
\\
To find out which representation the defining representation corresponds to, we investigate the following matrix representations constructed with \textit{GAP}\footnote{Since $GAP$ was used to construct the $1$-dimensional representations too, there is no danger that the representations we will work with could be inconsistent.}:
	\begin{displaymath}
	\begin{split}
	& \textbf{\underline{3}}_3: A\mapsto
	\left(
	\begin{matrix}
	 \omega^2 & \omega^2 & -1 \\
	 0 & -\omega^2 & -\omega \\
	 0 & 1 & 0
	\end{matrix}
	\right),\enspace
	B\mapsto
	\left(
	\begin{matrix}
	 -\omega & -\omega & 0 \\
	 \omega & 0 & 0 \\
	 \omega & -1 & \omega
	\end{matrix}
	\right),\enspace
	C\mapsto
	\left(
	\begin{matrix}
	 1 & 1 & -\omega \\
	 \omega^2 & 0 & 0 \\
	 -1 & -1 & 0
	\end{matrix}
	\right),\\
	& \textbf{\underline{3}}_4: A\mapsto
	\left(
	\begin{matrix}
	 \omega & -\omega & \omega \\
	 0 & 0 & \omega^2 \\
	 0 & -1 & -\omega
	\end{matrix}
	\right),\enspace
	B\mapsto
	\left(
	\begin{matrix}
	 -\omega^2 & -1 & 0 \\
	 \omega & 0 & 0 \\
	 \omega^2 & -\omega^2 & \omega^2
	\end{matrix}
	\right),\enspace
	C\mapsto
	\left(
	\begin{matrix}
	 1 & -1 & 1 \\
	 1 & 0 & 0 \\
	 0 & 1 & 0
	\end{matrix}
	\right).
	\end{split}
	\end{displaymath}
Let $[D]$ denote the defining matrix representation given at the beginning of this subsection, and let $[\textbf{\underline{3}}_3]$ and $[\textbf{\underline{3}}_4]$ denote the matrix representations calculated with $GAP$. In subsection \ref{icosahedral} we developed an algorithm to solve the equation
	\begin{displaymath}
	S^{-1}\textbf{\underline{5}}(f)S=\hat{\textbf{\underline{5}}}(f)\quad\forall f\in \mathrm{gen}(A_{5}).
	\end{displaymath}
Using the same algorithm here we find
	\begin{displaymath}
	S^{-1}[D(f)]S=[\textbf{\underline{3}}_4(f)]\quad\quad f=A,B,C
	\end{displaymath}
with
	\begin{displaymath}
	S=\left(\begin{matrix}
	 0 & -\frac{1}{3} & \frac{1}{3} \\
	 -\frac{i}{\sqrt{3}} & \frac{1+i\sqrt{3}}{6} & -\frac{1+i\sqrt{3}}{6} \\
	 0 & \frac{1+i\sqrt{3}}{6} & \frac{-1+i\sqrt{3}}{6}
	  \end{matrix}\right).
	\end{displaymath}
Thus the defining representation must be equivalent to $\textbf{\underline{3}}_4$ (the algorithm does not find solutions of $S^{-1}[D(f)]S=[\textbf{\underline{3}}_3(f)]$). We will therefore use the matrices $A,B,C$ as (unitary) generators of the representation $\textbf{\underline{3}}_4$.
\medskip
\\
We can now calculate all other unitary generators of three-dimensional representations via
	\begin{center}
	$
	\begin{array}{l}
	\textbf{\underline{3}}_{1}=\textbf{\underline{1}}_{2}\otimes \textbf{\underline{3}}_{3},\\
	\textbf{\underline{3}}_{7}=\textbf{\underline{1}}_{3}\otimes \textbf{\underline{3}}_{3},\\
	\textbf{\underline{3}}_{8}=\textbf{\underline{1}}_{4}\otimes \textbf{\underline{3}}_{3},\\
	\textbf{\underline{3}}_{4}=\textbf{\underline{3}}_{3}^{\ast},\\
	\textbf{\underline{3}}_{2}=\textbf{\underline{1}}_{2}\otimes \textbf{\underline{3}}_{4},\\
	\textbf{\underline{3}}_{5}=\textbf{\underline{1}}_{3}\otimes \textbf{\underline{3}}_{4},\\
	\textbf{\underline{3}}_{6}=\textbf{\underline{1}}_{4}\otimes \textbf{\underline{3}}_{4},
	\end{array}
	$
	\end{center}
using the $1$-dimensional irreducible representations constructed with \textit{GAP}. Furthermore we need unitary generators of $\textbf{\underline{4}}_{1}$ and $\textbf{\underline{4}}_{2}$. \textit{GAP} can also calculate generators for these irreducible representations. The generators for the 4-dimensional irreducible representations given by \textit{GAP} are:
	 \begin{displaymath}
	 \begin{split}
	 \textbf{\underline{4}}_{1}:\enspace & A\mapsto 
	 \left(\begin{matrix}
	 \omega & 0 & 0 & 0 \\
	 0 & \omega^{2} & 0 & 0 \\
	 0 & 0 & \omega & 0 \\
	 0 & 0 & 0 & \omega^{2}
	\end{matrix}\right),\quad B\mapsto
	\left(\begin{matrix}
	 \omega & 0 & 0 & 0 \\
	 0 & \omega & 0 & 0 \\
	 0 & 0 & \omega^{2} & 0 \\
	 0 & 0 & 0 & \omega^{2}
	\end{matrix}\right),\\
	& C\mapsto 
	\left(\begin{matrix}
	 0 & 0 & \omega & 0 \\
	 \omega^{2} & 0 & 0 & 0 \\
	 0 & 0 & 0 & \omega^{2} \\
	 0 & \omega & 0 & 0
	\end{matrix}\right),\\
	\textbf{\underline{4}}_{2}:\enspace & A\mapsto
	\left(\begin{matrix}
	 \omega^{2} & 0 & 0 & 0 \\
	 0 & 1 & 0 & 0 \\
	 0 & 0 & \omega & 0 \\
	 0 & 0 & 0 & 1
	\end{matrix}\right),\enspace
	B\mapsto 
	\left(\begin{matrix}
	 1 & 0 & 0 & 0 \\
	 0 & \omega^{2} & 0 & 0 \\
	 0 & 0 & 1 & 0 \\
	 0 & 0 & 0 & \omega
	\end{matrix}\right),\\
	& C\mapsto
	\left(\begin{matrix}
	 0 & 0 & 0 & 1 \\
	 1 & 0 & 0 & 0 \\
	 0 & \omega^{2} & 0 & 0 \\
	 0 & 0 & \omega & 0
	\end{matrix}\right).
	 \end{split}
	 \end{displaymath}
In tables \ref{Sigma36CGCa} and \ref{Sigma36CGCb} we list the Clebsch-Gordan coefficients of the tensor products of three-dimensional irreducible representations of $\Sigma(36\phi)$. They were calculated using a \textit{Mathematica 6} program realizing the algorithm described in chapter \ref{chapterclebsch}. An example for this program can be found in appendix \ref{appendixD}. Since the Clebsch-Gordan coefficients for $\textbf{\underline{3}}_{1}\otimes\textbf{\underline{3}}_{1}$ are real, lemma \ref{LSU314} tells us that the coefficients of $\textbf{\underline{3}}_{2}\otimes\textbf{\underline{3}}_{2}=\textbf{\underline{3}}_{1}^{\ast}\otimes\textbf{\underline{3}}_{1}^{\ast}$ and $\textbf{\underline{3}}_{1}\otimes\textbf{\underline{3}}_{1}$ are equal. The Clebsch-Gordan coefficients for $\textbf{\underline{3}}_{2}\otimes \textbf{\underline{3}}_{1}$ can again be obtained by replacing $e_{ij}\mapsto e_{ji}$ in table \ref{Sigma36CGCb} ($\rightarrow$ proposition \ref{PY11}).

\begin{table}
\begin{center}
\renewcommand{\arraystretch}{1.4}
\begin{tabular}{|l|l|l|}
\firsthline
$\Sigma(36\phi)$
  & $\textbf{\underline{3}}_1\otimes \textbf{\underline{3}}_1$ & CGC\\
\hline
 $\textbf{\underline{3}}_4$ & $u_{\textbf{\underline{3}}_{4}}^{\textbf{\underline{3}}_{1}\otimes \textbf{\underline{3}}_{1}}\text{(1)=}-\frac{1}{\sqrt{2}}e_{23} +\frac{1}{\sqrt{2}}e_{32}$ & \textbf{IIIa$_{(1,-1)}$}$^{(\ast)}$\\
  & $u_{\textbf{\underline{3}}_{4}}^{\textbf{\underline{3}}_{1}\otimes \textbf{\underline{3}}_{1}}\text{(2)=}\frac{1}{\sqrt{2}}e_{13} - \frac{1}{\sqrt{2}}e_{31}$ &\\
  & $u_{\textbf{\underline{3}}_{4}}^{\textbf{\underline{3}}_{1}\otimes \textbf{\underline{3}}_{1}}\text{(3)=}-\frac{1}{\sqrt{2}}e_{12} +\frac{1}{\sqrt{2}}e_{21}$ &\\
 $\textbf{\underline{3}}_5$ & $u_{\textbf{\underline{3}}_{5}}^{\textbf{\underline{3}}_{1}\otimes \textbf{\underline{3}}_{1}}\text{(1)=}-\frac{\tau_{-}}{\sqrt{12}}e_{11} + \frac{1}{\tau_{-}}e_{23} + \frac{1}{\tau_{-}}e_{32}$ & \textbf{IIIb}\\
  & $u_{\textbf{\underline{3}}_{5}}^{\textbf{\underline{3}}_{1}\otimes \textbf{\underline{3}}_{1}}\text{(2)=}\frac{1}{\tau_{-}}e_{13} -\frac{\tau_{-}}{\sqrt{12}}e_{22} + \frac{1}{\tau_{-}}e_{31}$ &\\
  & $u_{\textbf{\underline{3}}_{5}}^{\textbf{\underline{3}}_{1}\otimes \textbf{\underline{3}}_{1}}\text{(3)=}\frac{1}{\tau_{-}}e_{12} + \frac{1}{\tau_{-}}e_{21} -\frac{\tau_{-}}{\sqrt{12}}e_{33}$ &\\
 $\textbf{\underline{3}}_6$ & $u_{\textbf{\underline{3}}_{6}}^{\textbf{\underline{3}}_{1}\otimes \textbf{\underline{3}}_{1}}\text{(1)=}\frac{\tau_{+}}{\sqrt{12}}e_{11} + \frac{1}{\tau_{+}}e_{23} + \frac{1}{\tau_{+}}e_{32}$ & \textbf{IIIc}\\
  & $u_{\textbf{\underline{3}}_{6}}^{\textbf{\underline{3}}_{1}\otimes \textbf{\underline{3}}_{1}}\text{(2)=}\frac{1}{\tau_{+}}e_{13} + \frac{\tau_{+}}{\sqrt{12}}e_{22} + \frac{1}{\tau_{+}}e_{31}$ &\\
  & $u_{\textbf{\underline{3}}_{6}}^{\textbf{\underline{3}}_{1}\otimes \textbf{\underline{3}}_{1}}\text{(3)=}\frac{1}{\tau_{+}}e_{12} + \frac{1}{\tau_{+}}e_{21} + \frac{\tau_{+}}{\sqrt{12}}e_{33}$ &
\\
\lasthline
\end{tabular}
\caption[Clebsch-Gordan coefficients for $\textbf{\underline{3}}_{1}\otimes\textbf{\underline{3}}_{1}$ of $\Sigma(36\phi)$.]{Clebsch-Gordan coefficients for $\textbf{\underline{3}}_{1}\otimes\textbf{\underline{3}}_{1}$ of $\Sigma(36\phi)$. We use the abbreviation $\tau_{\pm}=\sqrt{2(3\pm \sqrt{3})}$. ($(\ast)$...up to irrelevant phase factors.)}
\label{Sigma36CGCa}
\end{center}
\end{table}

\begin{table}
\begin{center}
\renewcommand{\arraystretch}{1.4}
\begin{tabular}{|l|l|l|}
\firsthline
$\Sigma(36\phi)$ & $\textbf{\underline{3}}_1\otimes \textbf{\underline{3}}_2$ & CGC\\
\hline
$\textbf{\underline{1}}_1$ & $u_{\textbf{\underline{1}}_{1}}^{\textbf{\underline{3}}_{1}\otimes\textbf{\underline{3}}_{2}}=\frac{1}{\sqrt{3}}e_{11} + \frac{1}{\sqrt{3}}e_{22} +\frac{1}{\sqrt{3}}e_{33}$ & \textbf{Ia}\\

$\textbf{\underline{4}}_1$ & $u_{\textbf{\underline{4}}_1}^{\textbf{\underline{3}}_{1}\otimes\textbf{\underline{3}}_{2}}(1)=\frac{\omega^{2}}{\sqrt{3}}e_{12} + \frac{1}{\sqrt{3}}e_{23} + \frac{\omega}{\sqrt{3}}e_{31}$ & \textbf{IVb}
\\

& $u_{\textbf{\underline{4}}_1}^{\textbf{\underline{3}}_{1}\otimes\textbf{\underline{3}}_{2}}(2)=\frac{\omega^2}{\sqrt{3}}e_{13} + \frac{1}{\sqrt{3}}e_{21} + \frac{\omega}{\sqrt{3}}e_{32}$ &
\\

& $u_{\textbf{\underline{4}}_1}^{\textbf{\underline{3}}_{1}\otimes\textbf{\underline{3}}_{2}}(3)=\frac{\omega^2}{\sqrt{3}}e_{12} + \frac{\omega}{\sqrt{3}}e_{23} + \frac{1}{\sqrt{3}}e_{31}$ &
\\

& $u_{\textbf{\underline{4}}_1}^{\textbf{\underline{3}}_{1}\otimes\textbf{\underline{3}}_{2}}(4)=\frac{\omega^{2}}{\sqrt{3}}e_{13} + \frac{\omega}{\sqrt{3}}e_{21} + \frac{1}{\sqrt{3}}e_{32}$ &
\\

$\textbf{\underline{4}}_2$ & $u_{\textbf{\underline{4}}_2}^{\textbf{\underline{3}}_{1}\otimes\textbf{\underline{3}}_{2}}(1)=\frac{\omega}{\sqrt{3}}e_{13} + \frac{\omega}{\sqrt{3}}e_{21} + \frac{\omega}{\sqrt{3}}e_{32}$ & \textbf{IVc}\\

& $u_{\textbf{\underline{4}}_2}^{\textbf{\underline{3}}_{1}\otimes\textbf{\underline{3}}_{2}}(2)=\frac{\omega}{\sqrt{3}}e_{11} + \frac{1}{\sqrt{3}}e_{22} + \frac{\omega^2}{\sqrt{3}}e_{33}$ &\\

& $u_{\textbf{\underline{4}}_2}^{\textbf{\underline{3}}_{1}\otimes\textbf{\underline{3}}_{2}}(3)=\frac{\omega^2}{\sqrt{3}}e_{12} + \frac{\omega^2}{\sqrt{3}}e_{23} + \frac{\omega^2}{\sqrt{3}}e_{31}$ &
\\

& $u_{\textbf{\underline{4}}_2}^{\textbf{\underline{3}}_{1}\otimes\textbf{\underline{3}}_{2}}(4)=\frac{\omega}{\sqrt{3}}e_{11} + \frac{\omega^2}{\sqrt{3}}e_{22} + \frac{1}{\sqrt{3}}e_{33}$ & \\
\lasthline
\end{tabular}
\caption[Clebsch-Gordan coefficients for $\textbf{\underline{3}}_{1}\otimes\textbf{\underline{3}}_{2}$ of $\Sigma(36\phi)$.]{Clebsch-Gordan coefficients for $\textbf{\underline{3}}_{1}\otimes\textbf{\underline{3}}_{2}$ of $\Sigma(36\phi)$. ($\omega=e^{\frac{2\pi i}{3}}$)}
\label{Sigma36CGCb}
\end{center}
\end{table}

\subsection{The group $\Sigma(72\phi)$}
The generators of $\Sigma(72\phi)$ are \cite{miller}:
	\begin{displaymath}
		A=\left(\begin{matrix}
			 1 & 0 & 0 \\
			 0 & \omega & 0 \\
			 0 & 0 & \omega^2
		        \end{matrix}\right),\enspace
		B=\left(\begin{matrix}
			 0 & 1 & 0 \\
			 0 & 0 & 1 \\
			 1 & 0 & 0
		        \end{matrix}\right),\enspace
		C=\frac{1}{\omega-\omega^2}\left(\begin{matrix}
			 1 & 1 & 1 \\
			 1 & \omega & \omega^2 \\
			 1 & \omega^2 & \omega
		        \end{matrix}\right),\enspace
	\end{displaymath}
and
	\begin{displaymath}
		D=\frac{1}{\omega-\omega^2}\left(\begin{matrix}
			 1 & 1 & \omega^{2} \\
			 1 & \omega & \omega \\
			 \omega & 1 & \omega
		        \end{matrix}\right),
	\end{displaymath}
with $\omega=e^{\frac{2\pi i}{3}}$, so $\Sigma(36\phi)$ must be a subgroup of $\Sigma(72\phi)$. The character table calculated with \textit{GAP} is shown in table \ref{Sigma72charactertable}.

\begin{table}
\begin{scriptsize}
\begin{center}
\renewcommand{\arraystretch}{1.4}
\begin{tabular}{|l|cccccccc|}
\firsthline
	$\Sigma(72\phi)$ & $C_{1}(1)$ & $C_{2}(9)$ & $C_{3}(9)$ & $C_{4}(24)$ & $C_{5}(9)$ & $C_{6}(18)$ & $C_{7}(18)$ & $C_{8}(18)$\\
\hline
	$\textbf{\underline{1}}_{1}$ & $1$ & $1$ & $1$ & $1$ & $1$ & $1$ & $1$ & $1$\\
	$\textbf{\underline{1}}_{2}$ & $1$ & $1$ & $1$ & $1$ & $1$ & $-1$ & $-1$ & $-1$\\
	$\textbf{\underline{1}}_{3}$ & $1$ & $1$ & $1$ & $1$ & $1$ & $-1$ & $-1$ & $1$\\
	$\textbf{\underline{1}}_{4}$ & $1$ & $1$ & $1$ & $1$ & $1$ & $1$ & $1$ & $-1$\\
	$\textbf{\underline{2}}$ & $2$ & $-2$ & $-2$ & $2$ & $-2$ & $0$ & $0$ & $0$\\
	$\textbf{\underline{3}}_{1}$ & $3$ & $a$ & $a^{\ast}$ & $0$ & $-1$ & $a$ & $-1$ & $a$\\
	$\textbf{\underline{3}}_{2}$ & $3$ & $a^{\ast}$ & $a$ & $0$ & $-1$ & $a^{\ast}$ & $-1$ & $a^{\ast}$\\
	$\textbf{\underline{3}}_{3}$ & $3$ & $a$ & $a^{\ast}$ & $0$ & $-1$ & $a$ & $-1$ & $-a$\\
	$\textbf{\underline{3}}_{4}$ & $3$ & $a^{\ast}$ & $a$ & $0$ & $-1$ & $a^{\ast}$ & $-1$ & $-a^{\ast}$\\
	$\textbf{\underline{3}}_{5}$ & $3$ & $a$ & $a^{\ast}$ & $0$ & $-1$ & $-a$ & $1$ & $a$\\
	$\textbf{\underline{3}}_{6}$ & $3$ & $a^{\ast}$ & $a$ & $0$ & $-1$ & $-a^{\ast}$ & $1$ & $a^{\ast}$\\
	$\textbf{\underline{3}}_{7}$ & $3$ & $a$ & $a^{\ast}$ & $0$ & $-1$ & $-a$ & $1$ & $-a$\\
	$\textbf{\underline{3}}_{8}$ & $3$ & $a^{\ast}$ & $a$ & $0$ & $-1$ & $-a^{\ast}$ & $1$ & $-a^{\ast}$\\
	$\textbf{\underline{6}}_{1}$ & $6$ & $b$ & $b^{\ast}$ & $0$ & $2$ & $0$ & $0$ & $0$\\
	$\textbf{\underline{6}}_{2}$ & $6$ & $b^{\ast}$ & $b$ & $0$ & $2$ & $0$ & $0$ & $0$\\
	$\textbf{\underline{8}}$ & $8$ & $0$ & $0$ & $-1$ & $0$ & $0$ & $0$ & $0$\\
\hline
 & $C_{9}(18)$ & $C_{10}(18)$ & $C_{11}(18)$ & $C_{12}(18)$ & $C_{13}(18)$ & $C_{14}(18)$ & $C_{15}(1)$ & $C_{16}(1)$\\
\hline
	$\textbf{\underline{1}}_{1}$ & $1$ & $1$ & $1$ & $1$ & $1$ & $1$ & $1$ & $1$\\
	$\textbf{\underline{1}}_{2}$ & $-1$ & $1$ & $1$ & $-1$ & $1$ & $-1$ & $1$ & $1$\\
	$\textbf{\underline{1}}_{3}$ & $1$ & $-1$ & $-1$ & $-1$ & $-1$ & $1$ & $1$ & $1$\\
	$\textbf{\underline{1}}_{4}$ & $-1$ & $-1$ & $-1$ & $1$ & $-1$ & $-1$ & $1$ & $1$\\
	$\textbf{\underline{2}}$ & $0$ & $0$ & $0$ & $0$ & $0$ & $0$ & $2$ & $2$\\
	$\textbf{\underline{3}}_{1}$ & $-1$ & $-a$ & $1$ & $a^{\ast}$ & $-a^{\ast}$ & $a^{\ast}$ & $c$ & $c^{\ast}$\\
	$\textbf{\underline{3}}_{2}$ & $-1$ & $-a^{\ast}$ & $1$ & $a$ & $-a$ & $a$ & $c^{\ast}$ & $c$\\
	$\textbf{\underline{3}}_{3}$ & $1$ & $a$ & $-1$ & $a^{\ast}$ & $a^{\ast}$ & $-a^{\ast}$ & $c$ & $c^{\ast}$\\
	$\textbf{\underline{3}}_{4}$ & $1$ & $a^{\ast}$ & $-1$ & $a$ & $a$ & $-a$ & $c^{\ast}$ & $c$\\
	$\textbf{\underline{3}}_{5}$ & $-1$ & $a$ & $-1$ & $-a^{\ast}$ & $a^{\ast}$ & $a^{\ast}$ & $c$ & $c^{\ast}$\\
	$\textbf{\underline{3}}_{6}$ & $-1$ & $a^{\ast}$ & $-1$ & $-a$ & $a$ & $a$ & $c^{\ast}$ & $c$\\
	$\textbf{\underline{3}}_{7}$ & $1$ & $-a$ & $1$ & $-a^{\ast}$ & $-a^{\ast}$ & $-a^{\ast}$ & $c$ & $c^{\ast}$\\
	$\textbf{\underline{3}}_{8}$ & $1$ & $-a^{\ast}$ & $1$ & $-a$ & $-a$ & $-a$ & $c^{\ast}$ & $c$\\
	$\textbf{\underline{6}}_{1}$ & $0$ & $0$ & $0$ & $0$ & $0$ & $0$ & $d$ & $d^{\ast}$\\
	$\textbf{\underline{6}}_{2}$ & $0$ & $0$ & $0$ & $0$ & $0$ & $0$ & $d^{\ast}$ & $d$\\
	$\textbf{\underline{8}}$ & $0$ & $0$ & $0$ & $0$ & $0$ & $0$ & $8$ & $8$\\
\lasthline
\end{tabular}
\caption[The character table of $\Sigma(72\phi)$.]{The character table of $\Sigma(72\phi)$ as calculated with \textit{GAP}.
$a=-\omega^2$, $b=2\omega^{2}$, $c=3\omega^2$, $d=6\omega^2$.}
\label{Sigma72charactertable}
\end{center}
\end{scriptsize}
\end{table}
\hspace{0mm}\\
Using the character table we can calculate the tensor products of the three-dimensional representations. The results are shown in \ref{sigma72tensor}.

\begin{table}
\begin{small}
\begin{center}
\renewcommand{\arraystretch}{1.4}
\begin{tabular}{|l@{\hspace{10mm}}l|}
\firsthline
	$\textbf{\underline{3}}_{1} \otimes \textbf{\underline{3}}_{1}=\textbf{\underline{3}}_{8}\oplus\textbf{\underline{6}}_{2}$ & $\textbf{\underline{3}}_{3} \otimes \textbf{\underline{3}}_{6}=\textbf{\underline{1}}_{2}\oplus\textbf{\underline{8}}$ \\
	$\textbf{\underline{3}}_{1} \otimes \textbf{\underline{3}}_{2}=\textbf{\underline{1}}_{1}\oplus\textbf{\underline{8}}$ & $\textbf{\underline{3}}_{3} \otimes \textbf{\underline{3}}_{7}=\textbf{\underline{3}}_{4}\oplus\textbf{\underline{6}}_{2}$\\
	$\textbf{\underline{3}}_{1} \otimes \textbf{\underline{3}}_{3}=\textbf{\underline{3}}_{6}\oplus\textbf{\underline{6}}_{2}$ & $\textbf{\underline{3}}_{3} \otimes \textbf{\underline{3}}_{8}=\textbf{\underline{1}}_{3}\oplus\textbf{\underline{8}}$\\
	$\textbf{\underline{3}}_{1} \otimes \textbf{\underline{3}}_{4}=\textbf{\underline{1}}_{4}\oplus\textbf{\underline{8}}$ & $\textbf{\underline{3}}_{4} \otimes \textbf{\underline{3}}_{4}=\textbf{\underline{3}}_{7}\oplus\textbf{\underline{6}}_{1}$\\
	$\textbf{\underline{3}}_{1} \otimes \textbf{\underline{3}}_{5}=\textbf{\underline{3}}_{4}\oplus\textbf{\underline{6}}_{2}$ & $\textbf{\underline{3}}_{4} \otimes \textbf{\underline{3}}_{5}=\textbf{\underline{1}}_{2}\oplus\textbf{\underline{8}}$\\
	$\textbf{\underline{3}}_{1} \otimes \textbf{\underline{3}}_{6}=\textbf{\underline{1}}_{3}\oplus\textbf{\underline{8}}$ & $\textbf{\underline{3}}_{4} \otimes \textbf{\underline{3}}_{6}=\textbf{\underline{3}}_{1}\oplus\textbf{\underline{6}}_{1}$\\
	$\textbf{\underline{3}}_{1} \otimes \textbf{\underline{3}}_{7}=\textbf{\underline{3}}_{2}\oplus\textbf{\underline{6}}_{2}$ & $\textbf{\underline{3}}_{4} \otimes \textbf{\underline{3}}_{7}=\textbf{\underline{1}}_{3}\oplus\textbf{\underline{8}}$\\
	$\textbf{\underline{3}}_{1} \otimes \textbf{\underline{3}}_{8}=\textbf{\underline{1}}_{2}\oplus\textbf{\underline{8}}$ & $\textbf{\underline{3}}_{4} \otimes \textbf{\underline{3}}_{8}=\textbf{\underline{3}}_{3}\oplus\textbf{\underline{6}}_{1}$\\
	$\textbf{\underline{3}}_{2} \otimes \textbf{\underline{3}}_{2}=\textbf{\underline{3}}_{7}\oplus\textbf{\underline{6}}_{1}$ & $\textbf{\underline{3}}_{5} \otimes \textbf{\underline{3}}_{5}=\textbf{\underline{3}}_{8}\oplus\textbf{\underline{6}}_{2}$\\
	$\textbf{\underline{3}}_{2} \otimes \textbf{\underline{3}}_{3}=\textbf{\underline{1}}_{4}\oplus\textbf{\underline{8}}$ & $\textbf{\underline{3}}_{5} \otimes \textbf{\underline{3}}_{6}=\textbf{\underline{1}}_{1}\oplus\textbf{\underline{8}}$\\
	$\textbf{\underline{3}}_{2} \otimes \textbf{\underline{3}}_{4}=\textbf{\underline{3}}_{5}\oplus\textbf{\underline{6}}_{1}$ & $\textbf{\underline{3}}_{5} \otimes \textbf{\underline{3}}_{7}=\textbf{\underline{3}}_{6}\oplus\textbf{\underline{6}}_{2}$\\
	$\textbf{\underline{3}}_{2} \otimes \textbf{\underline{3}}_{5}=\textbf{\underline{1}}_{3}\oplus\textbf{\underline{8}}$ & $\textbf{\underline{3}}_{5} \otimes \textbf{\underline{3}}_{8}=\textbf{\underline{1}}_{4}\oplus\textbf{\underline{8}}$\\
	$\textbf{\underline{3}}_{2} \otimes \textbf{\underline{3}}_{6}=\textbf{\underline{3}}_{3}\oplus\textbf{\underline{6}}_{1}$ & $\textbf{\underline{3}}_{6} \otimes \textbf{\underline{3}}_{6}=\textbf{\underline{3}}_{7}\oplus\textbf{\underline{6}}_{1}$\\
	$\textbf{\underline{3}}_{2} \otimes \textbf{\underline{3}}_{7}=\textbf{\underline{1}}_{2}\oplus\textbf{\underline{8}}$ & $\textbf{\underline{3}}_{6} \otimes \textbf{\underline{3}}_{7}=\textbf{\underline{1}}_{4}\oplus\textbf{\underline{8}}$\\
	$\textbf{\underline{3}}_{2} \otimes \textbf{\underline{3}}_{8}=\textbf{\underline{3}}_{1}\oplus\textbf{\underline{6}}_{1}$ & $\textbf{\underline{3}}_{6} \otimes \textbf{\underline{3}}_{8}=\textbf{\underline{3}}_{5}\oplus\textbf{\underline{6}}_{1}$\\
	$\textbf{\underline{3}}_{3} \otimes \textbf{\underline{3}}_{3}=\textbf{\underline{3}}_{8}\oplus\textbf{\underline{6}}_{2}$ & $\textbf{\underline{3}}_{7} \otimes \textbf{\underline{3}}_{7}=\textbf{\underline{3}}_{8}\oplus\textbf{\underline{6}}_{2}$\\
	$\textbf{\underline{3}}_{3} \otimes \textbf{\underline{3}}_{4}=\textbf{\underline{1}}_{1}\oplus\textbf{\underline{8}}$ & $\textbf{\underline{3}}_{7} \otimes \textbf{\underline{3}}_{8}=\textbf{\underline{1}}_{1}\oplus\textbf{\underline{8}}$\\
	$\textbf{\underline{3}}_{3} \otimes \textbf{\underline{3}}_{5}=\textbf{\underline{3}}_{2}\oplus\textbf{\underline{6}}_{2}$ & $\textbf{\underline{3}}_{8} \otimes \textbf{\underline{3}}_{8}=\textbf{\underline{3}}_{7}\oplus\textbf{\underline{6}}_{1}$\\
\lasthline
\end{tabular}
\caption{Tensor products of the three-dimensional irreducible representations of $\Sigma(72\phi)$.}
\label{sigma72tensor}
\end{center}
\end{small}
\end{table}
\hspace{0mm}\\
As in the case of $\Sigma(36\phi)$ all irreducible three-dimensional representations can be constructed from two representations, e.g. $\textbf{\underline{3}}_{1}$ and $\textbf{\underline{3}}_{2}=\textbf{\underline{3}}_{1}^{\ast}$, as is shown in table \ref{sigma72tensorb}.

\begin{table}
\begin{center}
\renewcommand{\arraystretch}{1.4}
\begin{tabular}{|l@{\hspace{10mm}}l|}
\firsthline
	$\textbf{\underline{3}}_{3}=\textbf{\underline{1}}_{4}\otimes \textbf{\underline{3}}_{1}$ & $\textbf{\underline{3}}_{4}=\textbf{\underline{1}}_{4}\otimes \textbf{\underline{3}}_{2}$\\
	$\textbf{\underline{3}}_{5}=\textbf{\underline{1}}_{3}\otimes \textbf{\underline{3}}_{1}$ &
	$\textbf{\underline{3}}_{6}=\textbf{\underline{1}}_{3}\otimes \textbf{\underline{3}}_{2}$\\
	$\textbf{\underline{3}}_{7}=\textbf{\underline{1}}_{2}\otimes \textbf{\underline{3}}_{1}$ & $\textbf{\underline{3}}_{8}=\textbf{\underline{1}}_{2}\otimes \textbf{\underline{3}}_{2}$\\
\lasthline
\end{tabular}
\caption{Relations of the inequivalent three-dimensional irreducible representations of $\Sigma(72\phi)$.}
\label{sigma72tensorb}
\end{center}
\end{table}
\hspace{0mm}\\
Therefore for the calculation of the Clebsch-Gordan coefficients for $\Sigma(72\phi)$ we have to consider the tensor products
	\begin{displaymath}
	\begin{split}
	& \textbf{\underline{3}}_{1}\otimes \textbf{\underline{3}}_{1}= \textbf{\underline{3}}_{8}\oplus \textbf{\underline{6}}_{2},\\
	& \textbf{\underline{3}}_{1}\otimes \textbf{\underline{3}}_{2}= \textbf{\underline{1}}_{1}\oplus \textbf{\underline{8}}_{1},\\
	& \textbf{\underline{3}}_{2}\otimes \textbf{\underline{3}}_{2}= \textbf{\underline{3}}_{7}\oplus \textbf{\underline{6}}_{1}.
	\end{split}
	\end{displaymath}	
These tensor products are of the two types analysed when we investigated the group $\Sigma(168)$ in subsection \ref{sigma168subsection}, namely
	$\textbf{\underline{3}}\otimes\textbf{\underline{3}}=\textbf{\underline{3}}_{a}+\textbf{\underline{6}}_{s}$
and
$\textbf{\underline{3}}\otimes\textbf{\underline{3}}^{\ast}=\textbf{\underline{1}}+\textbf{\underline{8}}$.

\subsection{The group $\Sigma(216\phi)$}\label{sigma216phisubsection}
$\Sigma(216\phi)$ is also known as the \textit{Hessian group} \cite{miller}. This group is also mentioned in \cite{fairbairn}. Both references give the same generators, but Fairbairn et al. \cite{fairbairn} list them as generators of $\Sigma(216)=\Sigma(216\phi)/C$. To test which group the given generators belong to, one can for instance directly calculate all elements of the group by multiplication of the generators. A \textit{Mathematica 6} program constructed for this purpose can be found in appendix \ref{appendixD}. The result is that the generators given in \cite{fairbairn} and \cite{miller} generate $\Sigma(216\phi)$. As an additional result using the program one finds that the two generators $A$ and $B$ of the four generators $A,B,C,D$ given in \cite{fairbairn} and \cite{miller} generate $\Sigma(216\phi)$ alone:
	\begin{displaymath}
	\begin{split}
	& A=\frac{1}{\omega -\omega ^2}\left(
	\begin{array}{ccc}
 	1 & 1 & 1 \\
 	1 & \omega  & \omega ^2 \\
 	1 & \omega ^2 & \omega 
	\end{array}\right),\quad
	B=\epsilon \left(
	\begin{array}{ccc}
 	1 & 0 & 0 \\
 	0 & 1 & 0 \\
 	0 & 0 & \omega 
	\end{array}
	\right),\\
	& C=\left(
	\begin{array}{ccc}
 	1 & 0 & 0 \\
 	0 & \omega & 0 \\
 	0 & 0 & \omega^2 
	\end{array}
	\right),\quad
	D=\left(
	\begin{array}{ccc}
 	0 & 1 & 0 \\
 	0 & 0 & 1 \\
 	1 & 0 & 0 
	\end{array}
	\right),
	\end{split}
	\end{displaymath}
where $\omega=e^{\frac{2\pi i}{3}}, \epsilon=e^{\frac{4\pi i}{9}}$. Knowing these generators \textit{GAP} can construct the character table, which is shown in tables \ref{Sigma216charactertable1} and \ref{Sigma216charactertable2}.
\\
Remark: Since $\Sigma(36\phi)$ is generated by $A,C,D$ it is a subgroup of $\Sigma(216\phi)$.

\begin{table}
\begin{tiny}
\begin{center}
\renewcommand{\arraystretch}{1.4}
\begin{tabular}{|l|cccccccc|}
\firsthline
	$\Sigma(216\phi)$ & $C_{1}(1)$ & $C_{2}(9)$ & $C_{3}(9)$ & $C_{4}(24)$ & $C_{5}(72)$ & $C_{6}(36)$ & $C_{7}(36)$ & $C_{8}(36)$\\
\hline
	$\textbf{\underline{1}}_{1}$ & $1$ & $1$ & $1$ & $1$ & $1$ & $1$ & $1$ & $1$\\
	$\textbf{\underline{1}}_{2}$ & $1$ & $1$ & $1$ & $1$ & $-a$ & $-a$ & $-a$ & $-a$\\
	$\textbf{\underline{1}}_{3}$ & $1$ & $1$ & $1$ & $1$ & $-a^{\ast}$ & $-a^{\ast}$ & $-a^{\ast}$ & $-a^{\ast}$\\
	$\textbf{\underline{2}}_{1}$ & $2$ & $-2$ & $-2$ & $2$ & $-1$ & $1$ & $1$ & $1$\\
	$\textbf{\underline{2}}_{2}$ & $2$ & $-2$ & $-2$ & $2$ & $a^{\ast}$ & $-a^{\ast}$ & $-a^{\ast}$ & $-a^{\ast}$\\
	$\textbf{\underline{2}}_{3}$ & $2$ & $-2$ & $-2$ & $2$ & $a$ & $-a$ & $-a$ & $-a$\\
	$\textbf{\underline{3}}_{1}$ & $3$ & $3$ & $3$ & $3$ & $0$ & $0$ & $0$ & $0$\\
	$\textbf{\underline{3}}_{2}$ & $3$ & $a$ & $a^{\ast}$ & $0$ & $0$ & $d$ & $f$ & $E$\\
	$\textbf{\underline{3}}_{3}$ & $3$ & $a$ & $a^{\ast}$ & $0$ & $0$ & $E$ & $d$ & $f$\\
	$\textbf{\underline{3}}_{4}$ & $3$ & $a$ & $a^{\ast}$ & $0$ & $0$ & $f$ & $E$ & $d$\\
	$\textbf{\underline{3}}_{5}$ & $3$ & $a^{\ast}$ & $a$ & $0$ & $0$ & $E^{\ast}$ & $d^{\ast}$ & $f^{\ast}$\\
	$\textbf{\underline{3}}_{6}$ & $3$ & $a^{\ast}$ & $a$ & $0$ & $0$ & $d^{\ast}$ & $f^{\ast}$ & $E^{\ast}$\\
	$\textbf{\underline{3}}_{7}$ & $3$ & $a^{\ast}$ & $a$ & $0$ & $0$ & $f^{\ast}$ & $E^{\ast}$ & $d^{\ast}$\\
	$\textbf{\underline{6}}_{1}$ & $6$ & $b$ & $b^{\ast}$ & $0$ & $0$ & $f$ & $E$ & $d$\\
	$\textbf{\underline{6}}_{2}$ & $6$ & $b$ & $b^{\ast}$ & $0$ & $0$ & $d$ & $f$ & $E$\\
	$\textbf{\underline{6}}_{3}$ & $6$ & $b$ & $b^{\ast}$ & $0$ & $0$ & $E$ & $d$ & $f$\\
	$\textbf{\underline{6}}_{4}$ & $6$ & $b^{\ast}$ & $b$ & $0$ & $0$ & $f^{\ast}$ & $E^{\ast}$ & $d^{\ast}$\\
	$\textbf{\underline{6}}_{5}$ & $6$ & $b^{\ast}$ & $b$ & $0$ & $0$ & $E^{\ast}$ & $d^{\ast}$ & $f^{\ast}$\\
	$\textbf{\underline{6}}_{6}$ & $6$ & $b^{\ast}$ & $b$ & $0$ & $0$ & $d^{\ast}$ & $f^{\ast}$ & $E^{\ast}$\\
	$\textbf{\underline{8}}_{1}$ & $8$ & $0$ & $0$ & $-1$ & $-1$ & $0$ & $0$ & $0$\\
	$\textbf{\underline{8}}_{2}$ & $8$ & $0$ & $0$ & $-1$ & $a$ & $0$ & $0$ & $0$\\
	$\textbf{\underline{8}}_{3}$ & $8$ & $0$ & $0$ & $-1$ & $a^{\ast}$ & $0$ & $0$ & $0$\\
	$\textbf{\underline{9}}_{1}$ & $9$ & $c$ & $c^{\ast}$ & $0$ & $0$ & $0$ & $0$ & $0$\\
	$\textbf{\underline{9}}_{2}$ & $9$ & $c^{\ast}$ & $c$ & $0$ & $0$ & $0$ & $0$ & $0$\\
\hline
 & $C_{9}(72)$ & $C_{10}(36)$ & $C_{11}(36)$ & $C_{12}(36)$ & $C_{13}(9)$ & $C_{14}(54)$ & $C_{15}(54)$ & $C_{16}(54)$\\
\hline
	$\textbf{\underline{1}}_{1}$ & $1$ & $1$ & $1$ & $1$ & $1$ & $1$ & $1$ & $1$\\
	$\textbf{\underline{1}}_{2}$ & $-a^{\ast}$ & $-a^{\ast}$ & $-a^{\ast}$ & $-a^{\ast}$ & $1$ & $1$ & $1$ & $1$\\
	$\textbf{\underline{1}}_{3}$ & $-a$ & $-a$ & $-a$ & $-a$ & $1$ & $1$ & $1$ & $1$\\
	$\textbf{\underline{2}}_{1}$ & $-1$ & $1$ & $1$ & $1$ & $-2$ & $0$ & $0$ & $0$\\
	$\textbf{\underline{2}}_{2}$ & $a$ & $-a$ & $-a$ & $-a$ & $-2$ & $0$ & $0$ & $0$\\
	$\textbf{\underline{2}}_{3}$ & $a^{\ast}$ & $-a^{\ast}$ & $-a^{\ast}$ & $-a^{\ast}$ & $-2$ & $0$ & $0$ & $0$\\
	$\textbf{\underline{3}}_{1}$ & $0$ & $0$ & $0$ & $0$ & $3$ & $-1$ & $-1$ & $-1$\\
	$\textbf{\underline{3}}_{2}$ & $0$ & $f^{\ast}$ & $d^{\ast}$ & $E^{\ast}$ & $-1$ & $-a$ & $1$ & $-a^{\ast}$\\
	$\textbf{\underline{3}}_{3}$ & $0$ & $d^{\ast}$ & $E^{\ast}$ & $f^{\ast}$ & $-1$ & $-a$ & $1$ & $-a^{\ast}$\\
	$\textbf{\underline{3}}_{4}$ & $0$ & $E^{\ast}$ & $f^{\ast}$ & $d^{\ast}$ & $-1$ & $-a$ & $1$ & $-a^{\ast}$\\
	$\textbf{\underline{3}}_{5}$ & $0$ & $d$ & $E$ & $f$ & $-1$ & $-a^{\ast}$ & $1$ & $-a$\\
	$\textbf{\underline{3}}_{6}$ & $0$ & $f$ & $d$ & $E$ & $-1$ & $-a^{\ast}$ & $1$ & $-a$\\
	$\textbf{\underline{3}}_{7}$ & $0$ & $E$ & $f$ & $d$ & $-1$ & $-a^{\ast}$ & $1$ & $-a$\\
	$\textbf{\underline{6}}_{1}$ & $0$ & $E^{\ast}$ & $f^{\ast}$ & $d^{\ast}$ & $2$ & $0$ & $0$ & $0$\\
	$\textbf{\underline{6}}_{2}$ & $0$ & $f^{\ast}$ & $d^{\ast}$ & $E^{\ast}$ & $2$ & $0$ & $0$ & $0$\\
	$\textbf{\underline{6}}_{3}$ & $0$ & $d^{\ast}$ & $E^{\ast}$ & $f^{\ast}$ & $2$ & $0$ & $0$ & $0$\\
	$\textbf{\underline{6}}_{4}$ & $0$ & $E$ & $f$ & $d$ & $2$ & $0$ & $0$ & $0$\\
	$\textbf{\underline{6}}_{5}$ & $0$ & $d$ & $E$ & $f$ & $2$ & $0$ & $0$ & $0$\\
	$\textbf{\underline{6}}_{6}$ & $0$ & $f$ & $d$ & $E$ & $2$ & $0$ & $0$ & $0$\\
	$\textbf{\underline{8}}_{1}$ & $-1$ & $0$ & $0$ & $0$ & $0$ & $0$ & $0$ & $0$\\
	$\textbf{\underline{8}}_{2}$ & $a^{\ast}$ & $0$ & $0$ & $0$ & $0$ & $0$ & $0$ & $0$\\
	$\textbf{\underline{8}}_{3}$ & $a$ & $0$ & $0$ & $0$ & $0$ & $0$ & $0$ & $0$\\
	$\textbf{\underline{9}}_{1}$ & $0$ & $0$ & $0$ & $0$ & $-3$ & $a$ & $-1$ & $a^{\ast}$\\
	$\textbf{\underline{9}}_{2}$ & $0$ & $0$ & $0$ & $0$ & $-3$ & $a^{\ast}$ & $-1$ & $a$\\
\lasthline
\end{tabular}
\caption[The character table of $\Sigma(216\phi)$. (Part 1)]{The character table of $\Sigma(216\phi)$ as calculated with \textit{GAP}. (Part 1)
$a=-\omega^2$, $b=2\omega^{2}$, $c=-3\omega^2$, $d=-e^{\frac{10\pi i}{9}}$, $E=-e^{\frac{4\pi i}{9}}$, $f=e^{\frac{4\pi i}{9}}+e^{\frac{10 \pi i}{9}}$.}
\label{Sigma216charactertable1}
\end{center}
\end{tiny}
\end{table}

\begin{table}
\begin{tiny}
\begin{center}
\renewcommand{\arraystretch}{1.4}
\begin{tabular}{|l|cccccccc|}
\firsthline
	$\Sigma(216\phi)$ & $C_{17}(1)$ & $C_{18}(1)$ & $C_{19}(12)$ & $C_{20}(12)$ & $C_{21}(12)$ & $C_{22}(12)$ & $C_{23}(12)$ & $C_{24}(12)$\\
\hline
	$\textbf{\underline{1}}_{1}$ & $1$ & $1$ & $1$ & $1$ & $1$ & $1$ & $1$ & $1$\\
	$\textbf{\underline{1}}_{2}$ & $1$ & $1$ & $-a$ & $-a$ & $-a^{\ast}$ & $-a^{\ast}$ & $-a^{\ast}$ & $-a$\\
	$\textbf{\underline{1}}_{3}$ & $1$ & $1$ & $-a^{\ast}$ & $-a^{\ast}$ & $-a$ & $-a$ & $-a$ & $-a^{\ast}$\\
	$\textbf{\underline{2}}_{1}$ & $2$ & $2$ & $-1$ & $-1$ & $-1$ & $-1$ & $-1$ & $-1$\\
	$\textbf{\underline{2}}_{2}$ & $2$ & $2$ & $a^{\ast}$ & $a^{\ast}$ & $a$ & $a$ & $a$ & $a^{\ast}$\\
	$\textbf{\underline{2}}_{3}$ & $2$ & $2$ & $a$ & $a$ & $a^{\ast}$ & $a^{\ast}$ & $a^{\ast}$ & $a$\\
	$\textbf{\underline{3}}_{1}$ & $3$ & $3$ & $0$ & $0$ & $0$ & $0$ & $0$ & $0$\\
	$\textbf{\underline{3}}_{2}$ & $-c$ & $-c^{\ast}$ & $I$ & $j$ & $I^{\ast}$ & $j^{\ast}$ & $k^{\ast}$ & $k$\\
	$\textbf{\underline{3}}_{3}$ & $-c$ & $-c^{\ast}$ & $j$ & $k$ & $j^{\ast}$ & $k^{\ast}$ & $I^{\ast}$ & $I$\\
	$\textbf{\underline{3}}_{4}$ & $-c$ & $-c^{\ast}$ & $k$ & $I$ & $k^{\ast}$ & $I^{\ast}$ & $j^{\ast}$ & $j$\\
	$\textbf{\underline{3}}_{5}$ & $-c^{\ast}$ & $-c$ & $j^{\ast}$ & $k^{\ast}$ & $j$ & $k$ & $I$ & $I^{\ast}$\\
	$\textbf{\underline{3}}_{6}$ & $-c^{\ast}$ & $-c$ & $I^{\ast}$ & $j^{\ast}$ & $I$ & $j$ & $k$ & $k^{\ast}$\\
	$\textbf{\underline{3}}_{7}$ & $-c^{\ast}$ & $-c$ & $k^{\ast}$ & $I^{\ast}$ & $k$ & $I$ & $j$ & $j^{\ast}$\\
	$\textbf{\underline{6}}_{1}$ & $g$ & $g^{\ast}$ & $-k$ & $-I$ & $-k^{\ast}$ & $-I^{\ast}$ & $-j^{\ast}$ & $-j$\\
	$\textbf{\underline{6}}_{2}$ & $g$ & $g^{\ast}$ & $-I$ & $-j$ & $-I^{\ast}$ & $-j^{\ast}$ & $-k^{\ast}$ & $-k$\\
	$\textbf{\underline{6}}_{3}$ & $g$ & $g^{\ast}$ & $-j$ & $-k$ & $-j^{\ast}$ & $-k^{\ast}$ & $-I^{\ast}$ & $-I$\\
	$\textbf{\underline{6}}_{4}$ & $g^{\ast}$ & $g$ & $-k^{\ast}$ & $-I^{\ast}$ & $-k$ & $-I$ & $-j$ & $-j^{\ast}$\\
	$\textbf{\underline{6}}_{5}$ & $g^{\ast}$ & $g$ & $-j^{\ast}$ & $-k^{\ast}$ & $-j$ & $-k$ & $-I$ & $-I^{\ast}$\\
	$\textbf{\underline{6}}_{6}$ & $g^{\ast}$ & $g$ & $-I^{\ast}$ & $-j^{\ast}$ & $-I$ & $-j$ & $-k$ & $-k^{\ast}$\\
	$\textbf{\underline{8}}_{1}$ & $8$ & $8$ & $2$ & $2$ & $2$ & $2$ & $2$ & $2$\\
	$\textbf{\underline{8}}_{2}$ & $8$ & $8$ & $b$ & $b$ & $b^{\ast}$ & $b^{\ast}$ & $b^{\ast}$ & $b$\\
	$\textbf{\underline{8}}_{3}$ & $8$ & $8$ & $b^{\ast}$ & $b^{\ast}$ & $b$ & $b$ & $b$ & $b^{\ast}$\\
	$\textbf{\underline{9}}_{1}$ & $h$ & $h^{\ast}$ & $0$ & $0$ & $0$ & $0$ & $0$ & $0$\\
	$\textbf{\underline{9}}_{2}$ & $h^{\ast}$ & $h$ & $0$ & $0$ & $0$ & $0$ & $0$ & $0$\\
\lasthline
\end{tabular}
\caption[The character table of $\Sigma(216\phi)$. (Part 2)]{The character table of $\Sigma(216\phi)$ as calculated with \textit{GAP}. (Part 2)
$a=-\omega^2$, $b=2\omega^{2}$, $c=-3\omega^2$, $d=-e^{\frac{10\pi i}{9}}$, $E=-e^{\frac{4\pi i}{9}}$, $f=e^{\frac{4\pi i}{9}}+e^{\frac{10 \pi i}{9}}$, $g=6\omega^{2}$, $h=9\omega^{2}$, $I=2e^{\frac{4\pi i}{9}}+e^{\frac{10\pi i}{9}}$, $j=-e^{\frac{4\pi i}{9}}-2e^{\frac{10\pi i}{9}}$, $k=-e^{\frac{4\pi i}{9}}+e^{\frac{10 \pi i}{9}}$.}
\label{Sigma216charactertable2}
\end{center}
\end{tiny}
\end{table}
\hspace{0mm}\\
Using the character table we can calculate the tensor products of the three-dimensional representations. The results are shown in \ref{sigma216tensor}.

\begin{table}
\begin{small}
\begin{center}
\renewcommand{\arraystretch}{1.4}
\begin{tabular}{|l@{\hspace{10mm}}l|}
\firsthline
	$\textbf{\underline{3}}_{1} \otimes \textbf{\underline{3}}_{1}=\textbf{\underline{1}}_{1}\oplus\textbf{\underline{1}}_{2}\oplus\textbf{\underline{1}}_{3}\oplus\textbf{\underline{3}}_{1}\oplus\textbf{\underline{3}}_{1}$ & $\textbf{\underline{3}}_{3} \otimes \textbf{\underline{3}}_{4}=\textbf{\underline{3}}_{6}\oplus\textbf{\underline{6}}_{5}$ \\
	$\textbf{\underline{3}}_{1} \otimes \textbf{\underline{3}}_{2}=\textbf{\underline{9}}_{1}$ & $\textbf{\underline{3}}_{3} \otimes \textbf{\underline{3}}_{5}=\textbf{\underline{1}}_{1}\oplus\textbf{\underline{8}}_{1}$\\
	$\textbf{\underline{3}}_{1} \otimes \textbf{\underline{3}}_{3}=\textbf{\underline{9}}_{1}$ & $\textbf{\underline{3}}_{3} \otimes \textbf{\underline{3}}_{6}=\textbf{\underline{1}}_{2}\oplus\textbf{\underline{8}}_{2}$\\
	$\textbf{\underline{3}}_{1} \otimes \textbf{\underline{3}}_{4}=\textbf{\underline{9}}_{1}$ & $\textbf{\underline{3}}_{3} \otimes \textbf{\underline{3}}_{7}=\textbf{\underline{1}}_{3}\oplus\textbf{\underline{8}}_{3}$\\
	$\textbf{\underline{3}}_{1} \otimes \textbf{\underline{3}}_{5}=\textbf{\underline{9}}_{2}$ & $\textbf{\underline{3}}_{4} \otimes \textbf{\underline{3}}_{4}=\textbf{\underline{3}}_{7}\oplus\textbf{\underline{6}}_{6}$\\
	$\textbf{\underline{3}}_{1} \otimes \textbf{\underline{3}}_{6}=\textbf{\underline{9}}_{2}$ & $\textbf{\underline{3}}_{4} \otimes \textbf{\underline{3}}_{5}=\textbf{\underline{1}}_{2}\oplus\textbf{\underline{8}}_{2}$\\
	$\textbf{\underline{3}}_{1} \otimes \textbf{\underline{3}}_{7}=\textbf{\underline{9}}_{2}$ & $\textbf{\underline{3}}_{4} \otimes \textbf{\underline{3}}_{6}=\textbf{\underline{1}}_{3}\oplus\textbf{\underline{8}}_{3}$\\
	$\textbf{\underline{3}}_{2} \otimes \textbf{\underline{3}}_{2}=\textbf{\underline{3}}_{6}\oplus\textbf{\underline{6}}_{5}$ & $\textbf{\underline{3}}_{4} \otimes \textbf{\underline{3}}_{7}=\textbf{\underline{1}}_{1}\oplus\textbf{\underline{8}}_{1}$\\
	$\textbf{\underline{3}}_{2} \otimes \textbf{\underline{3}}_{3}=\textbf{\underline{3}}_{7}\oplus\textbf{\underline{6}}_{6}$ & $\textbf{\underline{3}}_{5} \otimes \textbf{\underline{3}}_{5}=\textbf{\underline{3}}_{3}\oplus\textbf{\underline{6}}_{1}$\\
	$\textbf{\underline{3}}_{2} \otimes \textbf{\underline{3}}_{4}=\textbf{\underline{3}}_{5}\oplus\textbf{\underline{6}}_{4}$ & $\textbf{\underline{3}}_{5} \otimes \textbf{\underline{3}}_{6}=\textbf{\underline{3}}_{4}\oplus\textbf{\underline{6}}_{2}$\\
	$\textbf{\underline{3}}_{2} \otimes \textbf{\underline{3}}_{5}=\textbf{\underline{1}}_{3}\oplus\textbf{\underline{8}}_{3}$ & $\textbf{\underline{3}}_{5} \otimes \textbf{\underline{3}}_{7}=\textbf{\underline{3}}_{2}\oplus\textbf{\underline{6}}_{3}$\\
	$\textbf{\underline{3}}_{2} \otimes \textbf{\underline{3}}_{6}=\textbf{\underline{1}}_{1}\oplus\textbf{\underline{8}}_{1}$ & $\textbf{\underline{3}}_{6} \otimes \textbf{\underline{3}}_{6}=\textbf{\underline{3}}_{2}\oplus\textbf{\underline{6}}_{3}$\\
	$\textbf{\underline{3}}_{2} \otimes \textbf{\underline{3}}_{7}=\textbf{\underline{1}}_{2}\oplus\textbf{\underline{8}}_{2}$ & $\textbf{\underline{3}}_{6} \otimes \textbf{\underline{3}}_{7}=\textbf{\underline{3}}_{3}\oplus\textbf{\underline{6}}_{1}$\\
	$\textbf{\underline{3}}_{3} \otimes \textbf{\underline{3}}_{3}=\textbf{\underline{3}}_{5}\oplus\textbf{\underline{6}}_{4}$ & $\textbf{\underline{3}}_{7} \otimes \textbf{\underline{3}}_{7}=\textbf{\underline{3}}_{4}\oplus\textbf{\underline{6}}_{2}$\\
\lasthline
\end{tabular}
\caption{Tensor products of the three-dimensional irreducible representations of $\Sigma(216\phi)$.}
\label{sigma216tensor}
\end{center}
\end{small}
\end{table}
\hspace{0mm}\\
All irreducible three-dimensional representations, except $\textbf{\underline{3}}_{1}$, can be constructed from two representations, e.g. $\textbf{\underline{3}}_{2}$ and $\textbf{\underline{3}}_{6}=\textbf{\underline{3}}_{2}^{\ast}$, as is shown in table \ref{sigma216tensorb}.

\begin{table}
\begin{center}
\renewcommand{\arraystretch}{1.4}
\begin{tabular}{|l@{\hspace{10mm}}l|}
\firsthline
	$\textbf{\underline{3}}_{3}=\textbf{\underline{1}}_{2}\otimes \textbf{\underline{3}}_{2}$ & $\textbf{\underline{3}}_{5}=\textbf{\underline{1}}_{3}\otimes \textbf{\underline{3}}_{6}$\\
	$\textbf{\underline{3}}_{4}=\textbf{\underline{1}}_{3}\otimes \textbf{\underline{3}}_{2}$ &
	$\textbf{\underline{3}}_{7}=\textbf{\underline{1}}_{2}\otimes \textbf{\underline{3}}_{6}$\\
\lasthline
\end{tabular}
\caption{Relations of the inequivalent three-dimensional irreducible representations of $\Sigma(216\phi)$.}
\label{sigma216tensorb}
\end{center}
\end{table}
\hspace{0mm}\\
The irreducible representation $\textbf{\underline{3}}_{1}$ is different of the others, because of the fact that it is not faithful. This can be easily seen from the character table. Obviously there is more than one element of the group that is mapped to $\mathbbm{1}_{3}$ under $\textbf{\underline{3}}_{1}$, because several conjugate classes have character $3$. (Let $U\in SU(3)$, then $\mathrm{Tr}(U)=3$ $\Longleftrightarrow$ $U=\mathbbm{1}_{3}$.) For further investigation of this representation we need generators, which are provided by \textit{GAP}:
	\begin{displaymath}
	 \textbf{\underline{3}}_{1}(A)=\left(\begin{array}{rrr}
	 -1 & -1 & -1 \\
	 0 & 0 & 1 \\
	 0 & 1 & 0
	\end{array}\right),\quad
	\textbf{\underline{3}}_{1}(B)=\left(\begin{array}{rrr}
	 1 & 0 & 0 \\
	 -1 & -1 & -1 \\
	 0 & 1 & 0
	\end{array}\right).
	\end{displaymath}
Constructing the group $\textbf{\underline{3}}_{1}(\Sigma(216\phi))$ using the \textit{Mathematica 6} program already used to calculate all elements of $\textbf{\underline{3}}_{4}(\Sigma(216\phi))$, we find that $\textbf{\underline{3}}_{1}(\Sigma(216\phi))$ is a non-Abelian group with 12 elements, and comparing the characters with those of $A_{4}$ we find $\textbf{\underline{3}}_{1}(\Sigma(216\phi))\simeq A_{4}$.
\bigskip
\\
Due to lemma \ref{LSU313}, for the Clebsch-Gordan coefficients we only need to consider the following tensor products:
	\begin{displaymath}
	\begin{split}
	& \textbf{\underline{3}}_{1}\otimes \textbf{\underline{3}}_{1}= \textbf{\underline{1}}_{1}\oplus \textbf{\underline{1}}_{2}\oplus \textbf{\underline{1}}_{3}\oplus \textbf{\underline{3}}_{1}\oplus \textbf{\underline{3}}_{1},\\
	& \textbf{\underline{3}}_{1}\otimes \textbf{\underline{3}}_{2}=\textbf{\underline{9}}_{1},\\
	& \textbf{\underline{3}}_{1}\otimes \textbf{\underline{3}}_{6}=\textbf{\underline{9}}_{2},\\
	& \textbf{\underline{3}}_{2}\otimes \textbf{\underline{3}}_{2}=\textbf{\underline{3}}_{6}\oplus \textbf{\underline{6}}_{5},\\
	& \textbf{\underline{3}}_{2}\otimes \textbf{\underline{3}}_{6}=\textbf{\underline{1}}_{1}\oplus \textbf{\underline{8}}_{1},\\
	& \textbf{\underline{3}}_{6}\otimes \textbf{\underline{3}}_{6}=\textbf{\underline{3}}_{2}\oplus \textbf{\underline{6}}_{3}.
	\end{split}
	\end{displaymath}
The Clebsch-Gordan coefficients for $\textbf{\underline{3}}_{1}\otimes \textbf{\underline{3}}_{1}$ can be found in subsection \ref{subsectionA4}, and the tensor products $\textbf{\underline{3}}_{1}\otimes \textbf{\underline{3}}_{2}$ and $\textbf{\underline{3}}_{1}\otimes \textbf{\underline{3}}_{6}$ are irreducible, thus the matrix $M$ of Clebsch-Gordan coefficients is an arbitrary unitary $9\times 9$-matrix and can be chosen to be the identity matrix $\mathbbm{1}_{9}$.
\medskip
\\
As in the case of $\Sigma(72\phi)$ the remaining tensor products are of the types $\textbf{\underline{3}}\otimes\textbf{\underline{3}}=\textbf{\underline{3}}_{a}+\textbf{\underline{6}}_{s}$
and
$\textbf{\underline{3}}\otimes\textbf{\underline{3}}^{\ast}=\textbf{\underline{1}}+\textbf{\underline{8}}$ ($\rightarrow$ tables \ref{Sigma168CGCa}, \ref{Sigma168CGCb}).

\subsection{The group $\Sigma(360\phi)$}\label{subsectionSigma360phi}
Generators of $\Sigma(360\phi)$ are provided by Miller et al. in \cite{miller}:
	\begin{displaymath}
	\begin{split}
	& A =\left(\begin{matrix}
		 0 & 1 & 0 \\
		 0 & 0 & 1 \\
		 1 & 0 & 0
	     \end{matrix}\right), \quad
	  B =\left(\begin{matrix}
		 1 & 0 & 0 \\
		 0 & -1 & 0 \\
		 0 & 0 & -1
	     \end{matrix}\right),\\
	& C =\frac{1}{2}\left(\begin{matrix}
		 -1 & \mu_{2} & \mu_{1} \\
		 \mu_{2} & \mu_{1} & -1 \\
		 \mu_{1} & -1 & \mu_{2}
	     \end{matrix}\right), \quad
	  D =\left(\begin{matrix}
		 -1 & 0 & 0 \\
		 0 & 0 & -\omega \\
		 0 & -\omega^{2} & 0
	     \end{matrix}\right),
	\end{split}
	\end{displaymath}
with $\mu_{1}=\frac{1}{2}(-1+\sqrt{5})$, $\mu_{2}=\frac{1}{2}(-1-\sqrt{5})$ and $\omega=e^{\frac{2\pi i}{3}}$. Using these generators \textit{GAP} can calculate the character table, which is shown in table \ref{Sigma360charactertable}.
\medskip
\\
According to \cite{miller} $\Sigma(60)=I\simeq A_{5}$ is generated by $A$, $B$ and $C$. Therefore $\Sigma(60)$ is a subgroup of $\Sigma(360\phi)$. In subsection \ref{icosahedral} we listed two matrices $S$ and $T$ (given in \cite{everett}) that generate the icosahedral group $I\simeq A_{5}$.  Therefore one could also use $S,T$ and $D$ as generators of $\Sigma(360\phi)$.

\begin{table}
\begin{tiny}
\begin{center}
\renewcommand{\arraystretch}{1.4}
\begin{tabular}{|l|ccccccccc|}
\firsthline
	$\Sigma(360\phi)$ & $C_{1}(1)$ & $C_{2}(45)$ & $C_{3}(90)$ & $C_{4}(90)$ & $C_{5}(45)$ & $C_{6}(45)$ & $C_{7}(1)$ & $C_{8}(1)$ & $C_{9}(90)$\\
\hline
	$\textbf{\underline{1}}_{1}$ & $1$ & $1$ & $1$ & $1$ & $1$ & $1$ & $1$ & $1$ & $1$\\
	$\textbf{\underline{3}}_{1}$ & $3$ & $-1$ & $a$ & $a^{\ast}$ & $-a^{\ast}$ & $-a$ & $c$ & $c^{\ast}$ & $1$\\
	$\textbf{\underline{3}}_{2}$ & $3$ & $-1$ & $a$ & $a^{\ast}$ & $-a^{\ast}$ & $-a$ & $c$ & $c^{\ast}$ & $1$\\
	$\textbf{\underline{3}}_{3}$ & $3$ & $-1$ & $a^{\ast}$ & $a$ & $-a$ & $-a^{\ast}$ & $c^{\ast}$ & $c$ & $1$\\
	$\textbf{\underline{3}}_{4}$ & $3$ & $-1$ & $a^{\ast}$ & $a$ & $-a$ & $-a^{\ast}$ & $c^{\ast}$ & $c$ & $1$\\
	$\textbf{\underline{5}}_{1}$ & $5$ & $1$ & $-1$ & $-1$ & $1$ & $1$ & $5$ & $5$ & $-1$\\
	$\textbf{\underline{5}}_{2}$ & $5$ & $1$ & $-1$ & $-1$ & $1$ & $1$ & $5$ & $5$ & $-1$\\
	$\textbf{\underline{6}}_{1}$ & $6$ & $2$ & $0$ & $0$ & $b$ & $b^{\ast}$ & $d$ & $d^{\ast}$ & $0$\\
	$\textbf{\underline{6}}_{2}$ & $6$ & $2$ & $0$ & $0$ & $b^{\ast}$ & $b$ & $d^{\ast}$ & $d$ & $0$\\
	$\textbf{\underline{8}}_{1}$ & $8$ & $0$ & $0$ & $0$ & $0$ & $0$ & $8$ & $8$ & $0$\\
	$\textbf{\underline{8}}_{2}$ & $8$ & $0$ & $0$ & $0$ & $0$ & $0$ & $8$ & $8$ & $0$\\
	$\textbf{\underline{9}}_{1}$ & $9$ & $1$ & $1$ & $1$ & $1$ & $1$ & $9$ & $9$ & $1$\\
	$\textbf{\underline{9}}_{2}$ & $9$ & $1$ & $a$ & $a^{\ast}$ & $a^{\ast}$ & $a$ & $E$ & $E^{\ast}$ & $1$\\
	$\textbf{\underline{9}}_{3}$ & $9$ & $1$ & $a^{\ast}$ & $a$ & $a$ & $a^{\ast}$ & $E^{\ast}$ & $E$ & $1$\\
	$\textbf{\underline{10}}$ & $10$ & $-2$ & $0$ & $0$ & $-2$ & $-2$ & $10$ & $10$ & $0$\\
	$\textbf{\underline{15}}_{1}$ & $15$ & $-1$ & $-a$ & $-a^{\ast}$ & $-a^{\ast}$ & $-a$ & $f$ & $f^{\ast}$ & $-1$\\
	$\textbf{\underline{15}}_{2}$ & $15$ & $-1$ & $-a^{\ast}$ & $-a$ & $-a$ & $-a^{\ast}$ & $f^{\ast}$ & $f$ & $-1$\\
\hline
	 & $C_{10}(120)$ & $C_{11}(72)$ & $C_{12}(72)$ & $C_{13}(72)$ & $C_{14}(72)$ & $C_{15}(72)$ & $C_{16}(72)$ & $C_{17}(120)$ &\\
\hline
	$\textbf{\underline{1}}_{1}$ & $1$ & $1$ & $1$ & $1$ & $1$ & $1$ & $1$ & $1$ &\\
	$\textbf{\underline{3}}_{1}$ & $0$ & $g$ & $j$ & $h$ & $I^{\ast}$ & $I$ & $h^{\ast}$ & $0$ &\\
	$\textbf{\underline{3}}_{2}$ & $0$ & $j$ & $g$ & $I$ & $h^{\ast}$ & $h$ & $I^{\ast}$ & $0$ &\\
	$\textbf{\underline{3}}_{3}$ & $0$ & $j$ & $g$ & $I^{\ast}$ & $h$ & $h^{\ast}$ & $I$ & $0$ &\\
	$\textbf{\underline{3}}_{4}$ & $0$ & $g$ & $j$ & $h^{\ast}$ & $I$ & $I^{\ast}$ & $h$ & $0$ &\\
	$\textbf{\underline{5}}_{1}$ & $-1$ & $0$ & $0$ & $0$ & $0$ & $0$ & $0$ & $2$ &\\
	$\textbf{\underline{5}}_{2}$ & $2$ & $0$ & $0$ & $0$ & $0$ & $0$ & $0$ & $-1$ &\\
	$\textbf{\underline{6}}_{1}$ & $0$ & $1$ & $1$ & $a$ & $a^{\ast}$ & $a$ & $a^{\ast}$ & $0$ &\\
	$\textbf{\underline{6}}_{2}$ & $0$ & $1$ & $1$ & $a^{\ast}$ & $a$ & $a^{\ast}$ & $a$ & $0$ &\\
	$\textbf{\underline{8}}_{1}$ & $-1$ & $g$ & $j$ & $g$ & $j$ & $j$ & $g$ & $-1$ &\\
	$\textbf{\underline{8}}_{2}$ & $-1$ & $j$ & $g$ & $j$ & $g$ & $g$ & $j$ & $-1$ &\\
	$\textbf{\underline{9}}_{1}$ & $0$ & $-1$ & $-1$ & $-1$ & $-1$ & $-1$ & $-1$ & $0$ &\\
	$\textbf{\underline{9}}_{2}$ & $0$ & $-1$ & $-1$ & $-a$ & $-a^{\ast}$ & $-a$ & $-a^{\ast}$ & $0$ &\\
	$\textbf{\underline{9}}_{3}$ & $0$ & $-1$ & $-1$ & $-a^{\ast}$ & $-a$ & $-a^{\ast}$ & $-a$ & $0$ &\\
	$\textbf{\underline{10}}$ & $1$ & $0$ & $0$ & $0$ & $0$ & $0$ & $0$ & $1$ &\\
	$\textbf{\underline{15}}_{1}$ & $0$ & $0$ & $0$ & $0$ & $0$ & $0$ & $0$ & $0$ &\\
	$\textbf{\underline{15}}_{2}$ & $0$ & $0$ & $0$ & $0$ & $0$ & $0$ & $0$ & $0$ &\\
\lasthline
\end{tabular}
\caption[The character table of $\Sigma(360\phi)$.]{The character table of $\Sigma(360\phi)$ as calculated by \textit{GAP}.
$a=\omega$, $b=2\omega^{2}$, $c=3\omega$, $d=6\omega$, $E=9\omega$, $f=15\omega$, $g=-e^{\frac{4\pi i}{5}}-e^{\frac{6\pi i}{5}}=\frac{1}{2}(1+\sqrt{5})$, $h=-e^{\frac{22\pi i}{15}}-e^{\frac{28\pi i}{15}}$, $I=-e^{\frac{4\pi i}{15}}-e^{\frac{16\pi i}{15}}$, $j=-e^{\frac{2\pi i}{5}}-e^{\frac{8\pi i}{5}}=\frac{1}{2}(1-\sqrt{5})$.}
\label{Sigma360charactertable}
\end{center}
\end{tiny}
\end{table}
\hspace{0mm}\\
Using the character table we can calculate the tensor products of the three-dimensional irreducible representations. The results are shown in \ref{sigma360tensor}.

\begin{table}
\begin{small}
\begin{center}
\renewcommand{\arraystretch}{1.4}
\begin{tabular}{|l@{\hspace{10mm}}l|}
\firsthline
	$\textbf{\underline{3}}_{1} \otimes \textbf{\underline{3}}_{1}=\textbf{\underline{3}}_{4}\oplus\textbf{\underline{6}}_{2}$ & $\textbf{\underline{3}}_{2} \otimes \textbf{\underline{3}}_{3}=\textbf{\underline{1}}\oplus\textbf{\underline{8}}_{2}$ \\
	$\textbf{\underline{3}}_{1} \otimes \textbf{\underline{3}}_{2}=\textbf{\underline{9}}_{3}$ & $\textbf{\underline{3}}_{2} \otimes \textbf{\underline{3}}_{4}=\textbf{\underline{9}}_{1}$\\
	$\textbf{\underline{3}}_{1} \otimes \textbf{\underline{3}}_{3}=\textbf{\underline{9}}_{1}$ & $\textbf{\underline{3}}_{3} \otimes \textbf{\underline{3}}_{3}=\textbf{\underline{3}}_{2}\oplus\textbf{\underline{6}}_{1}$\\
	$\textbf{\underline{3}}_{1} \otimes \textbf{\underline{3}}_{4}=\textbf{\underline{1}}\oplus\textbf{\underline{8}}_{1}$ & $\textbf{\underline{3}}_{3} \otimes \textbf{\underline{3}}_{4}=\textbf{\underline{9}}_{2}$\\
	$\textbf{\underline{3}}_{2} \otimes \textbf{\underline{3}}_{2}=\textbf{\underline{3}}_{3}\oplus\textbf{\underline{6}}_{2}$ & $\textbf{\underline{3}}_{4} \otimes \textbf{\underline{3}}_{4}=\textbf{\underline{3}}_{1}\oplus\textbf{\underline{6}}_{1}$\\
\lasthline
\end{tabular}
\caption{Tensor products of the three-dimensional irreducible representations of $\Sigma(360\phi)$.}
\label{sigma360tensor}
\end{center}
\end{small}
\end{table}
\hspace{0mm}\\
Looking at the character table we find $\textbf{\underline{3}}_{4}=\textbf{\underline{3}}_{1}^{\ast}$ and $\textbf{\underline{3}}_{3}=\textbf{\underline{3}}_{2}^{\ast}$. Therefore all tensor products listed in table \ref{sigma360tensor} are either irreducible, or of the types $\textbf{\underline{3}}\otimes\textbf{\underline{3}}=\textbf{\underline{3}}_{a}+\textbf{\underline{6}}_{s}$
and
$\textbf{\underline{3}}\otimes\textbf{\underline{3}}^{\ast}=\textbf{\underline{1}}+\textbf{\underline{8}}$. For these tensor products we had already constructed the Clebsch-Gordan coefficients ($\rightarrow$ subsection \ref{sigma168subsection}, tables \ref{Sigma168CGCa}, \ref{Sigma168CGCb}; $\rightarrow$ subsection \ref{sigma216phisubsection}).

\subsection{The group $\Delta(3n^{2})$}\label{d3nnsubsect}
$\Delta(3n^{2})$ is an infinite series of non-Abelian finite subgroups of $SU(3)$ of order $3n^{2}$ ($n\in \mathbb{N}\backslash\{0,1\}$). It has been intensively studied by Luhn, Nasri and Ramond in \cite{luhn2}, as well as by Bovier, L\"uling and Wyler in \cite{BLW1}. Luhn et al. construct conjugate classes, generators for all irreducible representations, character tables and Clebsch-Gordan coefficients for $\Delta(3n^{2})$. We will concentrate especially on the Clebsch-Gordan coefficients for tensor products of three-dimensional irreducible representations.
\medskip
\\
It turns out that the structure of $\Delta(3n^{2})$ is mainly dependent on whether $n$ is divisible by $3$ ($n\in 3\mathbb{N}\backslash\{0\}$) or not ($n\not\in 3\mathbb{N}\backslash\{0\}$).
\medskip
\\
According to \cite{luhn2} $\Delta(3n^2)$ is isomorphic to the group
	\begin{displaymath}
	(Z_{n}\times Z_{n})\rtimes Z_3,
	\end{displaymath}
where $\rtimes$ denotes the so-called \textit{semidirect product} of two groups. Semidirect products $G\rtimes H$ are characterized by the issue that $H$ acts on $G$ via a homomorphism
	\begin{displaymath}
	\phi: H\rightarrow\mathrm{Aut}(G),
	\end{displaymath}
where $\mathrm{Aut}(G)$ is the set of homomorphisms of $G$ onto itself. For a precise definition of the semidirect product we refer the reader to \cite{sexl} (p.338).
\medskip
\\
$\Delta(3n^{2})$ is generated by three generators $a,c,d$ which fulfil \cite{luhn2}
	\begin{equation}\label{d3nnpresentation1}
	\begin{split}
	& a^3=e \quad(\mbox{generator of }Z_3),\\
	& c^n=e,\enspace d^n=e,\enspace cd=dc \quad(\mbox{generators of }Z_n\times Z_n).\\
	\end{split}
	\end{equation}
Due to the semidirect product structure there is an action of $Z_3$ on $Z_n\times Z_n$, which is given by \cite{luhn2}
	\begin{equation}\label{d3nnpresentation2}
	\begin{split}
	& \phi(a)(c)=aca^{-1}=c^{-1}d^{-1},\\
	& \phi(a)(d)=ada^{-1}=c.
	\end{split}
	\end{equation}
$\phi$ is a homomorphism
	\begin{displaymath}
	\begin{split}
	\phi: Z_3\rightarrow \mathrm{Aut}(Z_n\times Z_n),
	\end{split}
	\end{displaymath}
which is given by $\phi(x)(y)=xyx^{-1}\enspace\forall x\in Z_3, y\in Z_n\times Z_n$.
\medskip
\\
The relations (\ref{d3nnpresentation1}) and (\ref{d3nnpresentation2}) form the \textit{presentation} of $\Delta(3n^2)$. (The abstract generators together with their relations are called the presentation of a group.)
\medskip
\\
The irreducible representations are listed in table \ref{Delta3nn-irreps}.

\begin{table}
\begin{footnotesize}
\begin{center}
\renewcommand{\arraystretch}{1.4}
\begin{tabular}{|l|ccl|}
\firsthline
	$\Delta(3n^{2})$ & Representations & number of representations & parameter values \\
\hline
	$n\in 3\mathbb{N}\backslash\{0\}$ & $\textbf{\underline{1}}_{r,s}$ & $9$ & $r,s=0,1,2$\\
	 & $\textbf{\underline{3}}_{\widetilde{(k,l)}}$ & $\frac{n^{2}-3}{3}$ & $k,l=1,...,n-1$\\
	& & & $(k,l)\neq(\frac{n}{3},\frac{n}{3}), (\frac{2n}{3},\frac{2n}{3})$\\
\hline
	$n\not\in 3\mathbb{N}\backslash\{0\}$ & $\textbf{\underline{1}}_{r}$ & $3$ & $r=0,1,2$\\
	 & $\textbf{\underline{3}}_{\widetilde{(k,l)}}$ & $\frac{n^{2}-1}{3}$ & $k,l=1,...,n-1$\\
\lasthline
\end{tabular}
\caption[Irreducible representations of $\Delta(3n^{2})$.]{Irreducible representations of $\Delta(3n^{2})$ as given in \cite{luhn2}.}
\label{Delta3nn-irreps}
\end{center}
\end{footnotesize}
\end{table}
\hspace{0mm}\\
The generators of the irreducible representations as given by \cite{luhn2} are ($\omega=e^{\frac{2\pi i}{3}}$, $\eta=e^{\frac{2\pi i}{n}}$):
	
	\begin{enumerate}
	 \item $n\not\in 3\mathbb{N}\backslash\{0\}$
		\begin{displaymath}
		\begin{split}
		& \textbf{\underline{1}}_{r} : a\mapsto \omega^{r}, c\mapsto 1, d\mapsto 1. \\
		& \textbf{\underline{3}}_{(k,l)} : a\mapsto
			\left(\begin{matrix}
			 0 & 1 & 0 \\
			 0 & 0 & 1 \\
			 1 & 0 & 0
			\end{matrix}\right),\enspace
			c\mapsto
			\left(\begin{matrix}
			 \eta^{l} & 0 & 0 \\
			 0 & \eta^{k} & 0 \\
			 0 & 0 & \eta^{-k-l}
			\end{matrix}\right),\enspace
			d\mapsto
			\left(\begin{matrix}
			 \eta^{-k-l} & 0 & 0 \\
			 0 & \eta^{l} & 0 \\
			 0 & 0 & \eta^{k}
			\end{matrix}\right).
		\end{split}
		\end{displaymath}
	\item $n\in 3\mathbb{N}\backslash\{0\}$
		\begin{displaymath}
		\begin{split}
		& \textbf{\underline{1}}_{r,s} : a\mapsto \omega^{r}, c\mapsto \omega^{s}, d\mapsto \omega^{s}. \\
		& \textbf{\underline{3}}_{(k,l)} : a\mapsto
			\left(\begin{matrix}
			 0 & 1 & 0 \\
			 0 & 0 & 1 \\
			 1 & 0 & 0
			\end{matrix}\right),\enspace
			c\mapsto
			\left(\begin{matrix}
			 \eta^{l} & 0 & 0 \\
			 0 & \eta^{k} & 0 \\
			 0 & 0 & \eta^{-k-l}
			\end{matrix}\right),\enspace
			d\mapsto
			\left(\begin{matrix}
			 \eta^{-k-l} & 0 & 0 \\
			 0 & \eta^{l} & 0 \\
			 0 & 0 & \eta^{k}
			\end{matrix}\right).
		\end{split}
		\end{displaymath}
	\end{enumerate}
Remark: Though not named $\Delta(3n^{2})$ and only shortly mentioned, the group $\Delta(3n^{2})$ is also treated in \cite{miller}. Miller et al. describe a group (C) generated by
	\begin{displaymath}
	H=\left(
	\begin{matrix}
 \alpha & 0 & 0 \\
 0 & \beta & 0 \\
 0 & 0 & \gamma
	\end{matrix}
	\right),\quad T=\left(\begin{matrix}
			 0 & 1 & 0 \\
			 0 & 0 & 1 \\
			 1 & 0 & 0
			\end{matrix}\right).
	\end{displaymath}
Setting $\alpha=\eta^{l}, \beta=\eta^{k}$ and $\gamma=\eta^{-k-l}$ one obtains generators of $\Delta(3n^{2})$. These generators are related to those given in \cite{luhn2} via
	\begin{displaymath}
	\textbf{\underline{3}}_{(k,l)} : a\mapsto
			T,\enspace
			c\mapsto
			H,\enspace
			d\mapsto
			T^{2}HT^{-2}.
	\end{displaymath}
From this we can also see that $a$ and $c$ alone generate $\Delta(3n^{2})$, and one could discard $d$.
\bigskip
\\
It turns out that the representations labeled by
	\begin{displaymath}
	(k,l),\enspace (-k-l,k),\enspace (l,-k-l)
	\end{displaymath}
are equivalent \cite{luhn2}. Considering the generators of $\textbf{\underline{3}}_{(k,l)}$, $k,l\in \mathbb{Z}$ would be allowed, but one would overcount the non-equivalent representations. Therefore Luhn et al. developed a \textquotedblleft standard form\textquotedblright\hspace{1mm} for the three-dimensional irreducible representations, which is indicated by an index $\widetilde{(k,l)}$. The details of this standard form are explained in \cite{luhn2}. For we are only interested in the \textquotedblleft types\textquotedblright\hspace{1mm} of Clebsch-Gordan coefficients that can occur, we can (and will) at many places drop the notation $\widetilde{(k,l)}$.
\bigskip
\\
The tensor products are \cite{luhn2}:
	\begin{enumerate}
	 \item $n\not\in 3\mathbb{N}\backslash\{0\}$
		\begin{displaymath}
		\begin{split}
		\textbf{\underline{3}}_{(k,l)}\otimes \textbf{\underline{3}}_{(k',l')}=& \delta_{(k',l'),(-k,-l)}(\textbf{\underline{1}}_{0} \oplus \textbf{\underline{1}}_{1} \oplus\textbf{\underline{1}}_{2})\oplus\\ & \quad\oplus\textbf{\underline{3}}_{(k'+k,l'+l)}\oplus\\ & \quad\oplus\textbf{\underline{3}}_{(k'-k-l,l'+k)}\oplus\\ & \quad\oplus\textbf{\underline{3}}_{(k'+l,l'-k-l)}.
		\end{split}
		\end{displaymath}
		If $(k',l')=(-k,-l)$, then the representation $\textbf{\underline{3}}_{(k'+k,l'+l)}=\textbf{\underline{3}}_{(0,0)}$ has to be cancelled from the right hand side of the tensor product, thus the dimensions are correct.
	\item $n\in 3\mathbb{N}\backslash\{0\}$
		\begin{displaymath}
		\begin{split}
		\textbf{\underline{3}}_{(k,l)}\otimes \textbf{\underline{3}}_{(k',l')}=& \sum_{s=0}^{2}\delta_{(k',l'),(-k+sn/3,-l+sn/3)}(\textbf{\underline{1}}_{0,s} \oplus \textbf{\underline{1}}_{1,s} \oplus\textbf{\underline{1}}_{2,s})\oplus\\ & \quad\oplus\textbf{\underline{3}}_{(k'+k,l'+l)}\oplus\\ & \quad\oplus\textbf{\underline{3}}_{(k'-k-l,l'+k)}\oplus\\ & \quad\oplus\textbf{\underline{3}}_{(k'+l,l'-k-l)}.
		\end{split}
		\end{displaymath}
		Depending on $(k,l)$ there are 0, 3 or 9 one-dimensional irreducible representations in the tensor product (see later). If $(k',l')=(-k+sn/3,-l+sn/3)$ for an $s\in\{0,1,2\}$, then one of the three-dimensional representations becomes reducible and has to be cancelled from the right hand side of the tensor product, thus the dimensions are correct.
		\end{enumerate}
Remark: To be exact, one should write $\widetilde{\quad}$ over all indices of three-dimensional representations on the right hand side of the tensor products, but for the investigation of the general structure of Clebsch-Gordan coefficients this is not necessary.
\bigskip
\\
The Clebsch-Gordan coefficients for all tensor products are calculated in \cite{luhn2}. We will use their data, extract the information we need, and test, if the matrix $M$ of Clebsch-Gordan coefficients reduces the tensor product.
\medskip
\\
We start with the case $n\not\in 3\mathbb{N}\backslash\{0\}$.

\paragraph{\underline{(i) $n\not\in 3\mathbb{N}\backslash\{0\}$:}}
In the notation of \cite{luhn2} the Clebsch-Gordan coefficients for
	\begin{displaymath}
	\textbf{\underline{3}}_{(k,l)}\otimes \textbf{\underline{3}}_{(k',l')}= \delta_{(k',l'),(-k,-l)}(\textbf{\underline{1}}_{0} \oplus \textbf{\underline{1}}_{1} \oplus\textbf{\underline{1}}_{2})\oplus...
	\end{displaymath}
are given by
	\begin{equation}\label{cgcluhn1}
	\langle \textbf{\underline{3}}_{(k',l')}^{i'},\textbf{\underline{3}}_{(k,l)}^{i}\vert \textbf{\underline{1}}_{r}\rangle=\omega^{r(1-i)}\delta_{i',i}\delta_{(k',l'),(-k,-l)},
	\end{equation}
which is in our notation:
	\begin{displaymath}
	u^{\textbf{\underline{3}}_{(k',l')}\otimes\textbf{\underline{3}}_{(k,l)}}_{\textbf{\underline{1}}_{r}}=\omega^{r(1-i)}\delta_{i',i}\enspace e_{i'}\otimes e_{i}.
	\end{displaymath}
Thus \cite{luhn2} give the Clebsch-Gordan coefficients for $\textbf{\underline{3}}'\otimes\textbf{\underline{3}}$.
($\delta_{(k',l'),(-k,-l)}$ just denotes that $\textbf{\underline{1}}_{r}$ is not contained in $\textbf{\underline{3}}_{(k',l')}\otimes\textbf{\underline{3}}_{(k,l)}$, if $(k',l')\neq(-k,-l)$.)
\medskip
\\
The coefficients for 
	\begin{displaymath}
	\textbf{\underline{3}}_{(k',l')}\otimes \textbf{\underline{3}}_{(k,l)}= 3_{(k'',l'')}\oplus...
	\end{displaymath}
are given by \cite{luhn2}
	\begin{equation}\label{cgcluhn2}
	\langle \textbf{\underline{3}}_{(k',l')}^{i'},\textbf{\underline{3}}_{(k,l)}^{i}\vert \textbf{\underline{3}}_{(k'',l'')}^{i''}\rangle=\delta^{(3)}_{i'',i'-p}\delta^{(3)}_{i',i-q}\delta_{(\ast)},
	\end{equation}
where $\delta^{(3)}_{ij}=\left\{\begin{array}{ll}
  				1 \enspace\mbox{if}\enspace i\hspace{0.5mm}\mathrm{mod}\hspace{0.5mm} 3=j\hspace{0.5mm} \mathrm{mod}\hspace{0.5mm} 3\\
				0 \enspace\mbox{else}
                                \end{array}\right.$,$\quad$ and
$\delta_{(\ast)}=\delta_{\left(\begin{array}{ll}
                                k''\\l''
                               \end{array}\right),
			M^{p}\left[
				\left(\begin{array}{ll}
                                k'\\l'
                               \end{array}\right)
				+M^{q}
				\left(\begin{array}{ll}
                                k\\l
                               \end{array}\right)
				\right]}$.
\medskip
\\
$q$ numbers the three-dimensional irreducible representations contained in the tensor product:
	\begin{displaymath}
	\begin{split}
	& q=0,1,2 \quad\mbox{if}\enspace (k',l')\neq(-k,-l),\\
	& q=1,2\hspace{4mm} \quad\mbox{if}\enspace (k',l')=(-k,-l).
	\end{split}
	\end{displaymath}
The missing of $q=0$ is due to the fact that the tensor product contains only two 3-dimensional irreducible representations if $(k',l')=(-k,-l)$. The number $p$ has to do with the choice of a \textquotedblleft standard form\textquotedblright\hspace{1mm} for the representations (see \cite{luhn2} for more information), it can take the values $0,1,2$. The choice of $p$ only influences the order of the Clebsch-Gordan coefficients and furthermore we don't need to distinguish between equivalent representations, so we can choose $p:=0$ for our purpose.\footnote{$M$ is the matrix 			$\left(\begin{matrix}
		 -1 & -1 \\
		 1 & 0
		\end{matrix}\right)$. $M^{p}$, $p\in\{0,1,2\}$ maps a pair of indices $(k,l)^{T}$ onto the equivalent pairs of indices $(k,l)^{T}, (-k-l,k)^{T}, (l,-k-l)^{T}$.}
Another simplification is the following: $\delta_{(\ast)}$ determines $\textbf{\underline{3}}_{(k'',l'')}$ contained in the tensor product as a function of $p$ and $q$, thus we can always choose $(k'',l'')$ in such a way that $\delta_{(\ast)}=1$. We can therefore proceed in the following way:
	\begin{enumerate}
	 \item We construct the Clebsch-Gordan coefficients setting $p=0$, $\delta_{(\ast)}=1$.
	 \item We explicitly reduce the tensor product using the obtained Clebsch-Gordan coefficients.
	 \item We read off $(k'',l'')$ from the reduced tensor product.
	\end{enumerate}

\paragraph{\underline{(ia) $(k',l')=(-k,-l)$:}}
In this case we have:
	\begin{equation}\label{caseiatensorprodequ}
	\textbf{\underline{3}}_{(-k,-l)}\otimes\textbf{\underline{3}}_{(k,l)}=\textbf{\underline{1}}_{0}\oplus\textbf{\underline{1}}_{1}\oplus\textbf{\underline{1}}_{2}\oplus\textbf{\underline{3}}_{(-2k-l,k-l)}\oplus\textbf{\underline{3}}_{(l-k,-k-2l)}.
	\end{equation}
Remark: $\textbf{\underline{3}}_{(k,l)}=\textbf{\underline{3}}_{(-k,-l)}^{\ast}$.
\medskip
\\
The allowed values for $q$ are $1$ and $2$. Using equations (\ref{cgcluhn1}) and (\ref{cgcluhn2}) we can construct basis vectors of the invariant subspaces. We find
	\begin{displaymath}
	\begin{split}
	& u^{\textbf{\underline{3}}_{(-k,-l)}\otimes \textbf{\underline{3}}_{(k,l)}}_{\textbf{\underline{1}}_{0}}=e_{11}+e_{22}+e_{33},\\
	& u^{\textbf{\underline{3}}_{(-k,-l)}\otimes \textbf{\underline{3}}_{(k,l)}}_{\textbf{\underline{1}}_{1}}=e_{11}+\omega^{2}e_{22}+\omega e_{33},\\
	& u^{\textbf{\underline{3}}_{(-k,-l)}\otimes \textbf{\underline{3}}_{(k,l)}}_{\textbf{\underline{1}}_{2}}=e_{11}+\omega e_{22}+\omega^{2} e_{33},\\
	& u^{\textbf{\underline{3}}_{(-k,-l)}\otimes \textbf{\underline{3}}_{(k,l)}}_{\textbf{\underline{3}}_{q=1}}(1)=e_{12},\\
	& u^{\textbf{\underline{3}}_{(-k,-l)}\otimes \textbf{\underline{3}}_{(k,l)}}_{\textbf{\underline{3}}_{q=1}}(2)=e_{23},\\
	& u^{\textbf{\underline{3}}_{(-k,-l)}\otimes \textbf{\underline{3}}_{(k,l)}}_{\textbf{\underline{3}}_{q=1}}(3)=e_{31},\\
	& u^{\textbf{\underline{3}}_{(-k,-l)}\otimes \textbf{\underline{3}}_{(k,l)}}_{\textbf{\underline{3}}_{q=2}}(1)=e_{13},\\
	& u^{\textbf{\underline{3}}_{(-k,-l)}\otimes \textbf{\underline{3}}_{(k,l)}}_{\textbf{\underline{3}}_{q=2}}(2)=e_{21},\\
	& u^{\textbf{\underline{3}}_{(-k,-l)}\otimes \textbf{\underline{3}}_{(k,l)}}_{\textbf{\underline{3}}_{q=2}}(3)=e_{32}.\\
	\end{split}
	\end{displaymath}
Normalizing the basis vectors we can form a unitary matrix $M$ of Clebsch-Gordan coefficients
	\begin{displaymath}
	M=\left(\begin{matrix}
	 \frac{1}{\sqrt{3}} & \frac{1}{\sqrt{3}} & \frac{1}{\sqrt{3}} & 0 & 0 & 0 & 0 & 0 & 0 \\
	 0 & 0 & 0 & 1 & 0 & 0 & 0 & 0 & 0 \\
	 0 & 0 & 0 & 0 & 0 & 0 & 1 & 0 & 0 \\
	 0 & 0 & 0 & 0 & 0 & 0 & 0 & 1 & 0 \\
	 \frac{1}{\sqrt{3}} & \frac{\omega^{2}}{\sqrt{3}} & \frac{\omega}{\sqrt{3}} & 0 & 0 & 0 & 0 & 0 & 0 \\
	 0 & 0 & 0 & 0 & 1 & 0 & 0 & 0 & 0 \\
	 0 & 0 & 0 & 0 & 0 & 1 & 0 & 0 & 0 \\
	 0 & 0 & 0 & 0 & 0 & 0 & 0 & 0 & 1 \\
	 \frac{1}{\sqrt{3}} & \frac{\omega}{\sqrt{3}} & \frac{\omega^{2}}{\sqrt{3}} & 0 & 0 & 0 & 0 & 0 & 0
	  \end{matrix}\right).
	\end{displaymath}
We will now test, if this matrix reduces the tensor product. We find:
	\begin{displaymath}
	\begin{split}
	& M^{-1}[\textbf{\underline{3}}_{(-k,-l)}(a)\otimes\textbf{\underline{3}}_{(k,l)}(a)]M=
	\left(\begin{matrix}
	 1 & 0 & 0 & 0 & 0 & 0 & 0 & 0 & 0 \\
	 0 & \omega^{2} & 0 & 0 & 0 & 0 & 0 & 0 & 0 \\
	 0 & 0 & \omega & 0 & 0 & 0 & 0 & 0 & 0 \\
	 0 & 0 & 0 & 0 & 1 & 0 & 0 & 0 & 0 \\
	 0 & 0 & 0 & 0 & 0 & 1 & 0 & 0 & 0 \\
	 0 & 0 & 0 & 1 & 0 & 0 & 0 & 0 & 0 \\
	 0 & 0 & 0 & 0 & 0 & 0 & 0 & 1 & 0 \\
	 0 & 0 & 0 & 0 & 0 & 0 & 0 & 0 & 1 \\
	 0 & 0 & 0 & 0 & 0 & 0 & 1 & 0 & 0
	  \end{matrix}\right)=\\
	& =\left(\begin{matrix}
			 (\textbf{\underline{1}}_{1}\oplus\textbf{\underline{1}}_{2}\oplus\textbf{\underline{1}}_{3})(a) & \textbf{0} & \textbf{0} \\
			 \textbf{0} & \textbf{\underline{3}}_{(-2k-l,k-l)}(a) & \textbf{0} \\
			 \textbf{0} & \textbf{0} & \textbf{\underline{3}}_{(-k+l,-k-2l)}(a)
			\end{matrix}\right),
	\end{split}
	\end{displaymath}
	\begin{displaymath}
	M^{-1}[\textbf{\underline{3}}_{(-k,-l)}(c)\otimes\textbf{\underline{3}}_{(k,l)}(c)]M=\left(\begin{matrix}
			 \mathbbm{1}_{3} & \textbf{0} & \textbf{0} \\
			 \textbf{0} & \textbf{\underline{3}}_{(-2k-l,k-l)}(c) & \textbf{0} \\
			 \textbf{0} & \textbf{0} & \textbf{\underline{3}}_{(-k+l,-k-2l)}(c)
			\end{matrix}\right),
	\end{displaymath}
	\begin{displaymath}
	M^{-1}[\textbf{\underline{3}}_{(-k,-l)}(d)\otimes\textbf{\underline{3}}_{(k,l)}(d)]M=\left(\begin{matrix}
			 \mathbbm{1}_{3} & \textbf{0} & \textbf{0} \\
			 \textbf{0} & \textbf{\underline{3}}_{(-2k-l,k-l)}(d) & \textbf{0} \\
			 \textbf{0} & \textbf{0} & \textbf{\underline{3}}_{(-k+l,-k-2l)}(d)
	\end{matrix}\right),
	\end{displaymath}
where $\textbf{0}$ is the $3\times3$-nullmatrix.
\\
Thus $M$ reduces the tensor product to exactly those representations given in equation (\ref{caseiatensorprodequ}). Thus we have confirmed that the Clebsch-Gordan coefficients are correct. We list them in table \ref{Delta3nnCGC1}. It turns out that these Clebsch-Gordan coefficients are equivalent to those listed in table \ref{A4CGC} ($A_{4}$), which can be easily verified by reducing the Kronecker product $[\textbf{\underline{3}}\otimes \textbf{\underline{3}}]$ of $A_{4}$ using the Clebsch-Gordan coefficients given here (One obtains 1- and 3-dimensional irreducible representations of $A_{4}$ that are equivalent to those given in subsection \ref{subsectionA4}). 



\paragraph{\underline{(ib) $(k',l')\neq(-k,-l)$:}}
In this case the three 1-dimensional irreducible representations are replaced by a 3-dimensional irreducible representation.
	\begin{displaymath}
		\textbf{\underline{3}}_{(k',l')}\otimes \textbf{\underline{3}}_{(k,l)}=\textbf{\underline{3}}_{(k'+k,l'+l)}\oplus \textbf{\underline{3}}_{(k'-k-l,l'+k)}\oplus \textbf{\underline{3}}_{(k'+l,l'-k-l)}.
		\end{displaymath}
Proceeding exactly as before one finds that the Clebsch-Gordan coefficients given in \cite{luhn2} are correct. We list the normalized coefficients in table \ref{Delta3nnCGC2}.
\begin{table}
\begin{center}
\renewcommand{\arraystretch}{1.5}
\begin{tabular}[p]{|l|l|l|}
\firsthline
	$\Delta(3n^{2})$ & $\textbf{\underline{3}}_{(-k,-l)}\otimes\textbf{\underline{3}}_{(k,l)}$ & CGC\\
\hline
	$\textbf{\underline{1}}_{0}$
	&	
	$u^{\textbf{\underline{3}}_{(-k,-l)}\otimes\textbf{\underline{3}}_{(k,l)}}_{\textbf{\underline{1}}_{0}}=\frac{1}{\sqrt{3}}e_{11}+\frac{1}{\sqrt{3}}e_{22}+\frac{1}{\sqrt{3}}e_{33}$ & \textbf{Ia}\\

	$\textbf{\underline{1}}_{1}$
	&
	$u^{\textbf{\underline{3}}_{(-k,-l)}\otimes\textbf{\underline{3}}_{(k,l)}}_{\textbf{\underline{1}}_{1}}=\frac{1}{\sqrt{3}}e_{11}+\frac{\omega^{2}}{\sqrt{3}}e_{22}+\frac{\omega}{\sqrt{3}}e_{33}$ & \textbf{Ib}$^{(\ast)}$\\

	$\textbf{\underline{1}}_{2}$
	&
	$u^{\textbf{\underline{3}}_{(-k,-l)}\otimes\textbf{\underline{3}}_{(k,l)}}_{\textbf{\underline{1}}_{2}}=\frac{1}{\sqrt{3}}e_{11}+\frac{\omega}{\sqrt{3}}e_{22}+\frac{\omega^{2}}{\sqrt{3}}e_{33}$ & \textbf{Ic}$^{(\ast)}$\\

	$\textbf{\underline{3}}_{(-2k-l,k-l)}$
	&
	$u^{\textbf{\underline{3}}_{(-k,-l)}\otimes\textbf{\underline{3}}_{(k,l)}}_{\textbf{\underline{3}}_{(-2k-l,k-l)}}(1)=e_{12}$ & \textbf{IIIa$_{(1,0)}$}\\

	&
	$u^{\textbf{\underline{3}}_{(-k,-l)}\otimes\textbf{\underline{3}}_{(k,l)}}_{\textbf{\underline{3}}_{(-2k-l,k-l)}}(2)=e_{23}$ & \\

	&
	$u^{\textbf{\underline{3}}_{(-k,-l)}\otimes\textbf{\underline{3}}_{(k,l)}}_{\textbf{\underline{3}}_{(-2k-l,k-l)}}(3)=e_{31}$ & \\

	$\textbf{\underline{3}}_{(-k+l,-k-2l)}$
	&
	$u^{\textbf{\underline{3}}_{(-k,-l)}\otimes\textbf{\underline{3}}_{(k,l)}}_{\textbf{\underline{3}}_{(-k+l,-k-2l)}}(1)=e_{13}$ & \textbf{IIIa$_{(0,1)}$}\\

	&
	$u^{\textbf{\underline{3}}_{(-k,-l)}\otimes\textbf{\underline{3}}_{(k,l)}}_{\textbf{\underline{3}}_{(-k+l,-k-2l)}}(2)=e_{21}$ & \\

	&
	$u^{\textbf{\underline{3}}_{(-k,-l)}\otimes\textbf{\underline{3}}_{(k,l)}}_{\textbf{\underline{3}}_{(-k+l,-k-2l)}}(3)=e_{32}$ & \\
\lasthline
\end{tabular}
\caption[Clebsch-Gordan coefficients for $\textbf{\underline{3}}_{(-k,-l)}\otimes\textbf{\underline{3}}_{(k,l)}$ of $\Delta(3n^{2})$. $n\not\in 3\mathbb{N}\backslash\{0\}$.]{Clebsch-Gordan coefficients for $\textbf{\underline{3}}_{(-k,-l)}\otimes\textbf{\underline{3}}_{(k,l)}$ of $\Delta(3n^{2})$. $n\not\in 3\mathbb{N}\backslash\{0\}$. (\textbf{Ib}$^{(\ast)}$, \textbf{Ic}$^{(\ast)}$ $\rightarrow$ \textbf{Ib} and \textbf{Ic} up to an (irrelevant) phase factor.)}
\label{Delta3nnCGC1}
\end{center}
\end{table}

\begin{table}
\begin{center}
\renewcommand{\arraystretch}{1.5}
\begin{tabular}[p]{|l|l|l|}
\firsthline
	$\Delta(3n^{2})$ & $\textbf{\underline{3}}_{(k',l')\neq (-k,-l)}\otimes\textbf{\underline{3}}_{(k,l)}$ & CGC\\
\hline
	$\textbf{\underline{3}}_{(k'+k,l'+l)}$
	&	
	$u^{\textbf{\underline{3}}_{(k',l')}\otimes\textbf{\underline{3}}_{(k,l)}}_{\textbf{\underline{3}}_{(k'+k,l'+l)}}(1)=e_{11}$ & \textbf{IIId}\\

	&	
	$u^{\textbf{\underline{3}}_{(k',l')}\otimes\textbf{\underline{3}}_{(k,l)}}_{\textbf{\underline{3}}_{(k'+k,l'+l)}}(2)=e_{22}$ & \\

	&	
	$u^{\textbf{\underline{3}}_{(k',l')}\otimes\textbf{\underline{3}}_{(k,l)}}_{\textbf{\underline{3}}_{(k'+k,l'+l)}}(3)=e_{33}$ & \\

	$\textbf{\underline{3}}_{(k'-k-l,l'+k)}$
	&
	$u^{\textbf{\underline{3}}_{(k',l')}\otimes\textbf{\underline{3}}_{(k,l)}}_{\textbf{\underline{3}}_{(k'-k-l,l'+k)}}(1)=e_{12}$ & \textbf{IIIa$_{(1,0)}$}\\

	&
	$u^{\textbf{\underline{3}}_{(k',l')}\otimes\textbf{\underline{3}}_{(k,l)}}_{\textbf{\underline{3}}_{(k'-k-l,l'+k)}}(2)=e_{23}$ & \\

	&
	$u^{\textbf{\underline{3}}_{(k',l')}\otimes\textbf{\underline{3}}_{(k,l)}}_{\textbf{\underline{3}}_{(k'-k-l,l'+k)}}(3)=e_{31}$ & \\

	$\textbf{\underline{3}}_{(k'+l,l'-k-l)}$
	&
	$u^{\textbf{\underline{3}}_{(k',l')}\otimes\textbf{\underline{3}}_{(k,l)}}_{\textbf{\underline{3}}_{(k'+l,l'-k-l)}}(1)=e_{13}$ & \textbf{IIIa$_{(0,1)}$}\\

	&
	$u^{\textbf{\underline{3}}_{(k',l')}\otimes\textbf{\underline{3}}_{(k,l)}}_{\textbf{\underline{3}}_{(k'+l,l'-k-l)}}(2)=e_{21}$ & \\

	&
	$u^{\textbf{\underline{3}}_{(k',l')}\otimes\textbf{\underline{3}}_{(k,l)}}_{\textbf{\underline{3}}_{(k'+l,l'-k-l)}}(3)=e_{32}$ & \\
\lasthline
\end{tabular}
\caption{Clebsch-Gordan coefficients for $\textbf{\underline{3}}_{(k',l')\neq(-k,-l)}\otimes\textbf{\underline{3}}_{(k,l)}$ of $\Delta(3n^{2})$. $n\not\in 3\mathbb{N}\backslash\{0\}$.}
\label{Delta3nnCGC2}
\end{center}
\end{table}
\paragraph{\underline{(ii) $n\in 3\mathbb{N}\backslash\{0\}$:}}
Luhn et al. \cite{luhn2} list the following Clebsch-Gordan coefficients for the case $n\in 3\mathbb{N}\backslash\{0\}$:
	\begin{equation}\label{cgcluhn3}
	\langle \textbf{\underline{3}}^{i'}_{(k',l')}, \textbf{\underline{3}}^{i}_{(k,l)}\vert \textbf{\underline{1}}_{r,s}\rangle=\omega^{r(1-i)}\delta^{(3)}_{i',i\pm s}\delta_{(k',l'),(-k,-l)} \quad\mbox{if}\quad (k,l)=(0,\pm\frac{n}{3}),
	\end{equation}
	\begin{equation}\label{cgcluhn4}
	\langle \textbf{\underline{3}}^{i'}_{(k',l')}, \textbf{\underline{3}}^{i}_{(k,l)}\vert \textbf{\underline{1}}_{r,s}\rangle=\omega^{r(1-i)}\delta^{(3)}_{i',i-p}\delta_{(\ast\ast)} \quad\mbox{if}\quad (k,l)\neq(0,\pm\frac{n}{3}),
	\end{equation}
where $\delta_{(\ast\ast)}=\delta_{\left(\begin{array}{ll} k'\\l'\end{array}\right),M^{p}\left(\begin{array}{ll} -k+sn/3\\-l+sn/3\end{array}\right)}$. Again as before, the determination of $p$ ensures that the representations on the right hand side of the tensor product have \textquotedblleft standard form\textquotedblright. Since we don't need to distinguish between equivalent representations, we can choose $p:=0$. The Clebsch-Gordan coefficients for the $3$-dimensional irreducible representations are the same as in (i).
\medskip
\\
As is shown in \cite{luhn2} there are the following cases:
\paragraph{\underline{(iia):}}
$k,l,k',l'$ are multiples of $\frac{n}{3}$. In this case $\textbf{\underline{3}}_{(k,l)}\otimes\textbf{\underline{3}}_{(k',l')}$ contains either $0$ or $9$ one-dimensional irreducible representations. The Clebsch-Gordan coefficients for $0$ one-dimensional representations ($3$ three-dimensional representations) are the same as for (i). The split-up into $9$ one-dimensional representations occurs if $(k,l)=(-k',-l')=(0,\pm \frac{n}{3})$. We only need to consider the case $l=+\frac{n}{3}$, since $l'=-l$ (therefore $l=-\frac{n}{3}$ just corresponds to reversing the order of the factors of the tensor product) . The reducing matrix $M$ derived from equation (\ref{cgcluhn3}) is given by
	\begin{displaymath}
	 M=\frac{1}{\sqrt{3}}\left(\begin{matrix}
 1 & 1 & 1 & 0 & 0 & 0 & 0 & 0 & 0 \\
 0 & 0 & 0 & 0 & 0 & 0 & 1 & 1 & 1 \\
 0 & 0 & 0 & 1 & 1 & 1 & 0 & 0 & 0 \\
 0 & 0 & 0 & 1 & \omega^{2} & \omega & 0 & 0 & 0 \\
 1 & \omega^{2} & \omega & 0 & 0 & 0 & 0 & 0 & 0 \\
 0 & 0 & 0 & 0 & 0 & 0 & 1 & \omega^{2} & \omega \\
 0 & 0 & 0 & 0 & 0 & 0 & 1 & \omega & \omega^{2} \\
 0 & 0 & 0 & 1 & \omega & \omega^{2} & 0 & 0 & 0 \\
 1 & \omega & \omega^{2} & 0 & 0 & 0 & 0 & 0 & 0
	   \end{matrix}\right).
	\end{displaymath}
Testing the coefficients as in case (ia) we find that they reduce the tensor product to nine 1-dimensional representations.\footnote{Remark: When one reduces the tensor product using the coefficients given by \cite{luhn2}, one finds that the order of the 1-dimensional representations obtained is different to the expected one. There may be an error in the construction of the reducing matrix $M$ from equation (\ref{cgcluhn3}), or the error is already in equation (\ref{cgcluhn3}). It seems as the order would become right, if $\omega^{r(1-i)}$ was replaced by $\omega^{r(i-1)}$ in equation (\ref{cgcluhn3}). Table \ref{Delta3nnCGC3} shows the representations in the order one gets when one uses $M$ for the reduction. However, the order of the representations is completely irrelevant for our studies.} We list the normalized coefficients in table \ref{Delta3nnCGC3}.
\\
Remark: $\textbf{\underline{3}}_{(0,\frac{n}{3})}=\textbf{\underline{3}}_{(0,-\frac{n}{3})}^{\ast}$.

\begin{table}
\begin{center}
\renewcommand{\arraystretch}{1.6}
\begin{tabular}{|l|l|l|}
\firsthline
	$\Delta(3n^{2})$ & $\textbf{\underline{3}}_{(0,-\frac{n}{3})}\otimes\textbf{\underline{3}}_{(0,\frac{n}{3})}$ & CGC\\
\hline
	$\textbf{\underline{1}}_{0,0}$
	&	
	$u^{\textbf{\underline{3}}_{(0,-\frac{n}{3})}\otimes\textbf{\underline{3}}_{(0,\frac{n}{3})}}_{\textbf{\underline{1}}_{0,0}}=\frac{1}{\sqrt{3}}e_{11}+\frac{1}{\sqrt{3}}e_{22}+\frac{1}{\sqrt{3}}e_{33}$ & \textbf{Ia}\\

	$\textbf{\underline{1}}_{2,0}$
	&
	$u^{\textbf{\underline{3}}_{(0,-\frac{n}{3})}\otimes\textbf{\underline{3}}_{(0,\frac{n}{3})}}_{\textbf{\underline{1}}_{2,0}}=\frac{1}{\sqrt{3}}e_{11}+\frac{\omega^{2}}{\sqrt{3}}e_{22}+\frac{\omega}{\sqrt{3}}e_{33}$ & \textbf{Ib}$^{(\ast)}$\\

	$\textbf{\underline{1}}_{1,0}$
	&
	$u^{\textbf{\underline{3}}_{(0,-\frac{n}{3})}\otimes\textbf{\underline{3}}_{(0,\frac{n}{3})}}_{\textbf{\underline{1}}_{2,0}}=\frac{1}{\sqrt{3}}e_{11}+\frac{\omega}{\sqrt{3}}e_{22}+\frac{\omega^{2}}{\sqrt{3}}e_{33}$ & \textbf{Ic}$^{(\ast)}$\\

	$\textbf{\underline{1}}_{0,1}$
	&	
	$u^{\textbf{\underline{3}}_{(0,-\frac{n}{3})}\otimes\textbf{\underline{3}}_{(0,\frac{n}{3})}}_{\textbf{\underline{1}}_{0,1}}=\frac{1}{\sqrt{3}}e_{13}+\frac{1}{\sqrt{3}}e_{21}+\frac{1}{\sqrt{3}}e_{32}$ & \textbf{Id}\\

	$\textbf{\underline{1}}_{2,1}$
	&	
	$u^{\textbf{\underline{3}}_{(0,-\frac{n}{3})}\otimes\textbf{\underline{3}}_{(0,\frac{n}{3})}}_{\textbf{\underline{1}}_{1,1}}=\frac{1}{\sqrt{3}}e_{13}+\frac{\omega^{2}}{\sqrt{3}}e_{21}+\frac{\omega}{\sqrt{3}}e_{32}$ & \textbf{Ie}\\

	$\textbf{\underline{1}}_{1,1}$
	&	
	$u^{\textbf{\underline{3}}_{(0,-\frac{n}{3})}\otimes\textbf{\underline{3}}_{(0,\frac{n}{3})}}_{\textbf{\underline{1}}_{2,1}}=\frac{1}{\sqrt{3}}e_{13}+\frac{\omega}{\sqrt{3}}e_{21}+\frac{\omega^{2}}{\sqrt{3}}e_{32}$ & \textbf{If}\\

	$\textbf{\underline{1}}_{0,2}$
	&	
	$u^{\textbf{\underline{3}}_{(0,-\frac{n}{3})}\otimes\textbf{\underline{3}}_{(0,\frac{n}{3})}}_{\textbf{\underline{1}}_{0,2}}=\frac{1}{\sqrt{3}}e_{12}+\frac{1}{\sqrt{3}}e_{23}+\frac{1}{\sqrt{3}}e_{31}$ & \textbf{Ig}\\

	$\textbf{\underline{1}}_{2,2}$
	&	
	$u^{\textbf{\underline{3}}_{(0,-\frac{n}{3})}\otimes\textbf{\underline{3}}_{(0,\frac{n}{3})}}_{\textbf{\underline{1}}_{1,2}}=\frac{1}{\sqrt{3}}e_{12}+\frac{\omega^{2}}{\sqrt{3}}e_{23}+\frac{\omega}{\sqrt{3}}e_{31}$ & \textbf{Ih}\\

	$\textbf{\underline{1}}_{1,2}$
	&	
	$u^{\textbf{\underline{3}}_{(0,-\frac{n}{3})}\otimes\textbf{\underline{3}}_{(0,\frac{n}{3})}}_{\textbf{\underline{1}}_{2,2}}=\frac{1}{\sqrt{3}}e_{12}+\frac{\omega}{\sqrt{3}}e_{23}+\frac{\omega^{2}}{\sqrt{3}}e_{31}$ & \textbf{Ii}\\
\lasthline
\end{tabular}
\caption[Clebsch-Gordan coefficients for $\textbf{\underline{3}}_{(0,-\frac{n}{3})}\otimes\textbf{\underline{3}}_{(0,\frac{n}{3})}$ of $\Delta(3n^{2})$. $n\in 3\mathbb{N}\backslash\{0\}$.]{Clebsch-Gordan coefficients for $\textbf{\underline{3}}_{(0,-\frac{n}{3})}\otimes\textbf{\underline{3}}_{(0,\frac{n}{3})}$ of $\Delta(3n^{2})$. $n\in 3\mathbb{N}\backslash\{0\}$. ($(\ast)$... up to an (irrelevant) phase factor)}
\label{Delta3nnCGC3}
\end{center}
\end{table}

\paragraph{\underline{(iib):}}
$k,l,k',l'$ are not all multiples of $\frac{n}{3}$. The coefficients for the 1-dimensional representations are determined by equation (\ref{cgcluhn4}). Again we set $p:=0$, and we see that $\delta_{(\ast\ast)}$ becomes
	\begin{displaymath}
	\delta_{\left(\begin{array}{ll} k'\\l'\end{array}\right), \left(\begin{array}{ll} -k+sn/3\\-l+sn/3\end{array}\right)}.
	\end{displaymath}
This corresponds to the analysis of the tensor product $\textbf{\underline{3}}_{(-k,-l)}\otimes \textbf{\underline{3}}_{(k,l)}$, thus the coefficients are the same as in case (ia). The coefficients for the three-dimensional representations are equal to case (ia) too. Trying to reduce the tensor product using these coefficients one finds that they are correct.

\subsection{The group $\Delta(6n^{2})$}\label{d6nnsubsection}

The infinite series $\Delta(6n^{2})$ ($n\in \mathbb{N}\backslash\{0,1\}$) of finite subgroups of $SU(3)$ has been intensively studied by Escobar and Luhn in \cite{escobar}, as well as by Bovier, L\"uling and Wyler in \cite{BLW1}. Escobar et al. give all important information on $\Delta(6n^{2})$, including generators, conjugate classes, character tables, tensor products and Clebsch-Gordan coefficients. We will now summarize those properties of the groups given in \cite{escobar} that we will need for our purpose.
\medskip
\\
According to \cite{escobar} $\Delta(6n^2)$ is isomorphic to the semidirect product group
	\begin{displaymath}
	(Z_n \times Z_n)\rtimes S_3.
	\end{displaymath}
$\Delta(6n^2)$ is generated by four generators $a,b,c,d$ fulfilling \cite{escobar}
	\begin{equation}\label{d6nnpresentation1}
	\begin{split}
	& a^3=b^2=(ab)^2=e \quad(\mbox{generators of }S_3),\\
	& c^n=d^n=e,\enspace cd=dc\quad(\mbox{generators of }Z_n\times Z_n).
	\end{split}
	\end{equation}
Due to the semidirect product structure there is an action of the group $S_3$ on the group $Z_n\times Z_n$, which is given by \cite{escobar}
	\begin{equation}\label{d6nnpresentation2}
	\begin{split}
	& \phi(a)(c)=aca^{-1}=c^{-1}d^{-1},\\
	& \phi(a)(d)=ada^{-1}=c,\\
	& \phi(b)(c)=bcb^{-1}=d^{-1},\\
	& \phi(b)(d)=bdb^{-1}=c^{-1}.
	\end{split}
	\end{equation}
Similar to the case of $\Delta(3n^2)$ $\phi$ is a homomorphism
	\begin{displaymath}
	\begin{split}
	\phi: S_3\rightarrow \mathrm{Aut}(Z_n\times Z_n),
	\end{split}
	\end{displaymath}
which is given by $\phi(x)(y)=xyx^{-1}\enspace\forall x\in S_3, y\in Z_n\times Z_n$.
\bigskip
\\
Table \ref{Delta6nn-irreps} shows the irreducible representations of $\Delta(6n^{2})$.

\begin{table}
\begin{footnotesize}
\begin{center}
\renewcommand{\arraystretch}{1.4}
\begin{tabular}{|l|ccl|}
\firsthline
	$\Delta(6n^{2})$ & Representations & number of representations & parameter values \\
\hline
	$n\in 3\mathbb{N}\backslash\{0\}$ & $\textbf{\underline{1}}_{r}$ & $2$ & $r=1,2$\\
	 & $\textbf{\underline{2}}_{i}$ & $4$ & $i=1,2,3,4$\\
	 & $\textbf{\underline{3}}_{j(l)}$ & $2(n-1)$ & $j=1,2$; $l=1,...,n-1$\\
	 & $\textbf{\underline{6}}_{\widetilde{(k,l)}}$ & $\frac{n(n-3)}{6}$ & $k,l=0,...,n-1$\\
	 & & & $(k,l)\neq (\frac{n}{3},\frac{n}{3}),(\frac{2n}{3},\frac{2n}{3})$\\
	 & & & $(k+l)\hspace{1mm} \mathrm{mod}\hspace{1mm} n\neq 0$\\
	 & & & $(k,l)\neq (0,l),(k,0)$\\
\hline
	$n\not\in 3\mathbb{N}\backslash\{0\}$ & $\textbf{\underline{1}}_{r}$ & $2$ & $r=1,2$\\
	 & $\textbf{\underline{2}}_{1}$ & $1$ & \\
	 & $\textbf{\underline{3}}_{j(l)}$ & $2(n-1)$ & $j=1,2$; $l=1,...,n-1$\\
	 & $\textbf{\underline{6}}_{\widetilde{(k,l)}}$ & $\frac{(n-1)(n-2)}{6}$ & $k,l=0,...,n-1$\\
	 & & & $(k+l)\hspace{1mm} \mathrm{mod}\hspace{1mm} n\neq 0$\\
	 & & & $(k,l)\neq (0,l),(k,0)$\\
\lasthline
\end{tabular}
\caption[Irreducible representations of $\Delta(6n^{2})$.]{Irreducible representations of $\Delta(6n^{2})$ as given in \cite{escobar}. The indices $k$ and $l$ have a precise mathematical meaning when refering to the generators of the representations. In fact they can take all integer values, but then some of the representations listed above would be reducible or equivalent. The restrictions for $k$,$l$ ensure that every irreducible representation occurs exactly once in the above table.}
\label{Delta6nn-irreps}
\end{center}
\end{footnotesize}
\end{table}
\hspace{0mm}\\
The generators of $\Delta(6n^{2})$ for the three-dimensional representations $\textbf{\underline{3}}_{1(l)}$ have the form
	\begin{displaymath}
	\begin{split}
	& A_{3_{1(l)}}=\left(\begin{matrix}
		0  & 1 & 0 \\
		0  & 0 & 1 \\
		1  & 0 & 0
	          \end{matrix}\right),\quad
	B_{3_{1(l)}}=\left(\begin{matrix}
		0  & 0 & 1 \\
		0  & 1 & 0 \\
		1  & 0 & 0
	          \end{matrix}\right),\\
	& C_{3_{1(l)}}=\left(\begin{matrix}
		\eta^{l}  & 0 & 0 \\
		0  & \eta^{-l} & 0 \\
		0  & 0 & 1
	          \end{matrix}\right),\quad
	D_{3_{1(l)}}=\left(\begin{matrix}
		1 & 0 & 0 \\
		0  & \eta^{l} & 0 \\
		0  & 0 & \eta^{-l}
	          \end{matrix}\right),
	\end{split}
	\end{displaymath}
where $\eta=e^{\frac{2\pi i}{n}}$ \cite{escobar}.
\\
The generators of $\textbf{\underline{3}}_{2(l)}$ are the same, except that $B_{3_{2(l)}}=-B_{3_{1(l)}}$. From the explicit form of the generators one can easily see that $\textbf{\underline{3}}_{i(l)}=\textbf{\underline{3}}_{i(l\hspace{0.2mm} \mathrm{mod}\hspace{0.2mm}n)}\enspace \forall l$. It is also clear what happens if $(l+l')\hspace{1mm} \mathrm{mod}\hspace{1mm}n=0$. Then $C_{3_{i(l)}}=D_{3_{i(l)}}=\mathbbm{1}_{3}$, and we end up with a 3-dimensional representation of the group $S_{3}$, which breaks up into \cite{escobar}
	\begin{displaymath}
	\textbf{\underline{3}}_{i(0)}=\textbf{\underline{2}}_{1}\oplus\textbf{\underline{1}}_{i}.
	\end{displaymath}
Remark: Though not named $\Delta(6n^{2})$ and only shortly mentioned, the group $\Delta(6n^{2})$ is also treated in \cite{miller}. Miller et al. describe a group (D) ($\rightarrow$ subsection \ref{Dsubsection}) generated by
	\begin{displaymath}
	H=\left(
	\begin{matrix}
 \alpha & 0 & 0 \\
 0 & \beta & 0 \\
 0 & 0 & \gamma
	\end{matrix}
	\right),\quad T=\left(\begin{matrix}
			 0 & 1 & 0 \\
			 0 & 0 & 1 \\
			 1 & 0 & 0
			\end{matrix}\right), \quad
	R=\left(\begin{matrix}
			 a & 0 & 0 \\
			 0 & 0 & c \\
			 0 & b & 0
			\end{matrix}\right).
	\end{displaymath}
Setting $\alpha=\eta^{l}, \beta=\eta^{-l}$, $\gamma=1$ and $a=b=c=1$ one obtains generators of $\Delta(6n^{2})$. These generators are related to those given in \cite{escobar} via
	\begin{displaymath}
	\textbf{\underline{3}}_{1(l)}: A_{3_{1(l)}}=T, \enspace B_{3_{1(l)}}=TR,\enspace C_{3_{1(l)}}=H,\enspace D_{3_{1(l)}}=T^{2}HT^{-2}.
	\end{displaymath}
From this we can also see that $A, B$ and $C$ alone generate $\Delta(6n^{2})$, and one could discard $D$.
\medskip
\\
The generators of the 6-dimensional irreducible representations are \cite{escobar}:
	\begin{displaymath}
	\begin{split}
	& A_{6_{(k,l)}}=\left(\begin{matrix}
		 A_{1} & \textbf{0} \\
		     \textbf{0} & A_{2}
	          \end{matrix}\right),\quad
	B_{6_{(k,l)}}=\left(\begin{matrix}
		\textbf{0} & \mathbbm{1}_{3} \\
		\mathbbm{1}_{3} & \textbf{0}
	          \end{matrix}\right),\\
	& C_{6_{(k,l)}}=\left(\begin{matrix}
		 C_{1} & \textbf{0} \\
		\textbf{0} & C_{2} 
	          \end{matrix}\right),\quad
	 D_{6_{(k,l)}}=\left(\begin{matrix}
		 D_{1} & \textbf{0} \\
		\textbf{0} & D_{2} 
	          \end{matrix}\right),
	\end{split}
	\end{displaymath}
with
	\begin{displaymath}
	\begin{split}
	& A_{1}=A_{2}^{T}=\left(\begin{matrix}
		0  & 1 & 0 \\
		0  & 0 & 1 \\
		1  & 0 & 0
	          \end{matrix}\right),\quad
	C_{1}=D_{2}^{-1}=\left(\begin{matrix}
		\eta^{l}  & 0 & 0 \\
		0  & \eta^{k} & 0 \\
		0  & 0 & \eta^{-l-k}
	          \end{matrix}\right),\\
	& C_{2}=D_{1}^{-1}=\left(\begin{matrix}
		\eta^{l+k} & 0 & 0 \\
		0  & \eta^{-l} & 0 \\
		0  & 0 & \eta^{-k}
	          \end{matrix}\right).
	\end{split}
	\end{displaymath}
\textbf{0} denotes the $3\times 3$-nullmatrix.
\medskip
\\
Remark: Care must be taken especially for the six-dimensional representations. They can be labeled by two integer numbers $k$ and $l$. It turns out that the representations labeled by
	\begin{displaymath}
	(k,l),\enspace (-k-l,k),\enspace (l,-k-l),\enspace (-l,-k),\enspace (k+l,-l),\enspace (-k,k+l)
	\end{displaymath}
are equivalent, which we denote by an index $\widetilde{(k,l)}$, as in the similar situation of the 3-dimensional representations of $\Delta(3n^{2})$.
\medskip
\\
The tensor products we will need are:
	\begin{equation}\label{D6nntensorequ}
	\begin{split}
	& \textbf{\underline{3}}_{1(l)}\otimes\textbf{\underline{3}}_{1(l')}=\textbf{\underline{3}}_{1(l+l')}\oplus\textbf{\underline{6}}_{\widetilde{(l,-l')}},\\
	& \textbf{\underline{3}}_{1(l)}\otimes\textbf{\underline{3}}_{2(l')}=\textbf{\underline{3}}_{2(l+l')}\oplus\textbf{\underline{6}}_{\widetilde{(l,-l')}},\\
	& \textbf{\underline{3}}_{2(l)}\otimes\textbf{\underline{3}}_{2(l')}=\textbf{\underline{3}}_{1(l+l')}\oplus\textbf{\underline{6}}_{\widetilde{(l,-l')}}.
	\end{split}
	\end{equation} 
Note that some of the 6-dimensional representations occurring in the tensor products (\ref{D6nntensorequ}) can be reducible \cite{escobar}, namely
	\begin{displaymath}
	(\textbf{\underline{6}}_{(-l,l)},\enspace\textbf{\underline{6}}_{(0,-l)},\enspace\textbf{\underline{6}}_{(l,0)})\sim \textbf{\underline{6}}_{(l,0)} \sim \textbf{\underline{3}}_{1(l)}\oplus\textbf{\underline{3}}_{2(l)}
	\end{displaymath}
and
	\begin{displaymath}
	(\textbf{\underline{6}}_{(\frac{n}{3},\frac{n}{3})},\enspace\textbf{\underline{6}}_{(\frac{2n}{3},\frac{2n}{3})})\sim \textbf{\underline{2}}_{2}\oplus\textbf{\underline{2}}_{3}\oplus\textbf{\underline{2}}_{4}.
	\end{displaymath}
\hspace{0mm}
\\
In tables \ref{Delta6nnCGCa} and \ref{Delta6nnCGCb} we give the Clebsch-Gordan coefficients for the tensor products (\ref{D6nntensorequ}), where we suppose that the 3- and 6-dimensional representations do not break up into lower dimensional representations. (These cases will be treated separately.)
\medskip
\\
Escobar et al. write $\textbf{\underline{6}}_{\widetilde{(l,-l')}}$ for all representations equivalent to $\textbf{\underline{6}}_{(l,-l')}$. We will write the representation one gets by explicitly reducing the given tensor product using the given Clebsch-Gordan coefficients. From explicit reduction one finds that $\widetilde{(l,-l')}$ takes the value $(-l,l-l')$ when one uses the Clebsch-Gordan coefficients given in tables \ref{Delta6nnCGCa} and \ref{Delta6nnCGCb}. Therefore we can express equations (\ref{D6nntensorequ}) in the following form:
\begin{displaymath}
	\textbf{\underline{3}}_{i(l)}\otimes \textbf{\underline{3}}_{j(l')}=\textbf{\underline{3}}_{\kappa(i,j)(l+l')}\oplus\textbf{\underline{6}}_{(-l,l-l')},
	\end{displaymath}
where $\kappa(i,j)=2-\delta_{ij}$, $\enspace i,j\in \{1,2\}$.

\begin{table}
\begin{center}
\renewcommand{\arraystretch}{1.4}
\begin{tabular}{|l|l|l|}
\firsthline
	$\Delta(6n^{2})$ & $\textbf{\underline{3}}_{1(l)}\otimes\textbf{\underline{3}}_{1(l')}$ / $\textbf{\underline{3}}_{2(l)}\otimes\textbf{\underline{3}}_{2(l')}$ & CGC\\
\hline
	$\textbf{\underline{3}}_{1(l+l')}$
	&	
	$u^{\textbf{\underline{3}}_{1(l)}\otimes\textbf{\underline{3}}_{1(l')} / \textbf{\underline{3}}_{2(l)}\otimes\textbf{\underline{3}}_{2(l')}}_{\textbf{\underline{3}}_{1(l+l')}}(1)=e_{11}$  & \textbf{IIId}\\

	&	
	$u^{\textbf{\underline{3}}_{1(l)}\otimes\textbf{\underline{3}}_{1(l')} / \textbf{\underline{3}}_{2(l)}\otimes\textbf{\underline{3}}_{2(l')}}_{\textbf{\underline{3}}_{1(l+l')}}(2)=e_{22}$ & \\

	&	
	$u^{\textbf{\underline{3}}_{1(l)}\otimes\textbf{\underline{3}}_{1(l')} / \textbf{\underline{3}}_{2(l)}\otimes\textbf{\underline{3}}_{2(l')}}_{\textbf{\underline{3}}_{1(l+l')}}(3)=e_{33}$ & \\

	$\textbf{\underline{6}}_{(-l,l-l')}$
	&	
	$u^{\textbf{\underline{3}}_{1(l)}\otimes\textbf{\underline{3}}_{1(l')} / \textbf{\underline{3}}_{2(l)}\otimes\textbf{\underline{3}}_{2(l')}}_{\textbf{\underline{6}}_{(-l,l-l')}}(1)=e_{12}$  & \textbf{VIb}\\

	&	
	$u^{\textbf{\underline{3}}_{1(l)}\otimes\textbf{\underline{3}}_{1(l')} / \textbf{\underline{3}}_{2(l)}\otimes\textbf{\underline{3}}_{2(l')}}_{\textbf{\underline{6}}_{(-l,l-l')}}(2)=e_{23}$ & \\

	&	
	$u^{\textbf{\underline{3}}_{1(l)}\otimes\textbf{\underline{3}}_{1(l')} / \textbf{\underline{3}}_{2(l)}\otimes\textbf{\underline{3}}_{2(l')}}_{\textbf{\underline{6}}_{(-l,l-l')}}(3)=e_{31}$ & \\

	&	
	$u^{\textbf{\underline{3}}_{1(l)}\otimes\textbf{\underline{3}}_{1(l')} / \textbf{\underline{3}}_{2(l)}\otimes\textbf{\underline{3}}_{2(l')}}_{\textbf{\underline{6}}_{(-l,l-l')}}(4)=e_{32}$ & \\

	&	
	$u^{\textbf{\underline{3}}_{1(l)}\otimes\textbf{\underline{3}}_{1(l')} / \textbf{\underline{3}}_{2(l)}\otimes\textbf{\underline{3}}_{2(l')}}_{\textbf{\underline{6}}_{(-l,l-l')}}(5)=e_{21}$ & \\

	&	
	$u^{\textbf{\underline{3}}_{1(l)}\otimes\textbf{\underline{3}}_{1(l')} / \textbf{\underline{3}}_{2(l)}\otimes\textbf{\underline{3}}_{2(l')}}_{\textbf{\underline{6}}_{(-l,l-l')}}(6)=e_{13}$ & \\
\lasthline
\end{tabular}
\caption[Clebsch-Gordan coefficients for $\textbf{\underline{3}}_{1(l)}\otimes\textbf{\underline{3}}_{1(l')}$ and $\textbf{\underline{3}}_{2(l)}\otimes\textbf{\underline{3}}_{2(l')}$ of $\Delta(6n^{2})$.]{Clebsch-Gordan coefficients for $\textbf{\underline{3}}_{1(l)}\otimes\textbf{\underline{3}}_{1(l')}$ and $\textbf{\underline{3}}_{2(l)}\otimes\textbf{\underline{3}}_{2(l')}$ of $\Delta(6n^{2})$ as given in \cite{escobar}.}
\label{Delta6nnCGCa}
\end{center}
\end{table}

\begin{table}
\begin{center}
\renewcommand{\arraystretch}{1.4}
\begin{tabular}{|l|l|l|}
\firsthline
	$\Delta(6n^{2})$ & $\textbf{\underline{3}}_{1(l)}\otimes\textbf{\underline{3}}_{2(l')}=\textbf{\underline{3}}_{2(l+l')} \oplus \textbf{\underline{6}}_{(-l,l-l')}$ & CGC\\
\hline
	$\textbf{\underline{3}}_{2(l+l')}$
	&	
	$u^{\textbf{\underline{3}}_{1(l)}\otimes\textbf{\underline{3}}_{2(l')}}_{\textbf{\underline{3}}_{2(l+l')}}(1)=e_{11}$  & \textbf{IIId}\\

	&	
	$u^{\textbf{\underline{3}}_{1(l)}\otimes\textbf{\underline{3}}_{2(l')}}_{\textbf{\underline{3}}_{2(l+l')}}(2)=e_{22}$ & \\

	&	
	$u^{\textbf{\underline{3}}_{1(l)}\otimes\textbf{\underline{3}}_{2(l')}}_{\textbf{\underline{3}}_{2(l+l')}}(3)=e_{33}$ & \\

	$\textbf{\underline{6}}_{(-l,l-l')}$
	&	
	$u^{\textbf{\underline{3}}_{1(l)}\otimes\textbf{\underline{3}}_{2(l')}}_{\textbf{\underline{6}}_{(-l,l-l')}}(1)=e_{12}$  & \textbf{VIb}\\

	&	
	$u^{\textbf{\underline{3}}_{1(l)}\otimes\textbf{\underline{3}}_{2(l')}}_{\textbf{\underline{6}}_{(-l,l-l')}}(2)=e_{23}$ & \\

	&	
	$u^{\textbf{\underline{3}}_{1(l)}\otimes\textbf{\underline{3}}_{2(l')}}_{\textbf{\underline{6}}_{(-l,l-l')}}(3)=e_{31}$ & \\

	&	
	$u^{\textbf{\underline{3}}_{1(l)}\otimes\textbf{\underline{3}}_{2(l')}}_{\textbf{\underline{6}}_{(-l,l-l')}}(4)=-e_{32}$ & \\

	&	
	$u^{\textbf{\underline{3}}_{1(l)}\otimes\textbf{\underline{3}}_{2(l')}}_{\textbf{\underline{6}}_{(-l,l-l')}}(5)=-e_{21}$ & \\

	&	
	$u^{\textbf{\underline{3}}_{1(l)}\otimes\textbf{\underline{3}}_{2(l')}}_{\textbf{\underline{6}}_{(-l,l-l')}}(6)=-e_{13}$ & \\
\lasthline
\end{tabular}
\caption[Clebsch-Gordan coefficients for $\textbf{\underline{3}}_{1(l)}\otimes\textbf{\underline{3}}_{2(l')}$ of $\Delta(6n^{2})$.]{Clebsch-Gordan coefficients for $\textbf{\underline{3}}_{1(l)}\otimes\textbf{\underline{3}}_{2(l')}$ of $\Delta(6n^{2})$ as given in \cite{escobar}.}
\label{Delta6nnCGCb}
\end{center}
\end{table}
\hspace{0mm}\\
We have already investigated how the Clebsch-Gordan coefficients behave under general basis transformations ($\rightarrow$ proposition \ref{Pclebsch1}). We will now concentrate on a useful special case.

\begin{prop}\label{PSU315}
Let $T$ be a matrix representation of a tensor product, let $M$ be the matrix of Clebsch-Gordan coefficients which reduces $T$, and let $U$ be an invertible block-diagonal matrix, where the dimensions of the blocks equal those of the blocks of the reduced tensor product $R=M^{-1}TM$. Then $MU$ reduces $T$ too, and if the blocks of $R$ commute with the blocks of $U$, then $(MU)^{-1}TMU=R$.
\end{prop}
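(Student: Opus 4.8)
The plan is to verify the two claims of Proposition~\ref{PSU315} by direct computation, relying only on the defining property of $M$ as a reducing matrix, namely $M^{-1}TM=R$ with $R$ block-diagonal. First I would establish that $MU$ reduces $T$. Since $U$ is invertible, $MU$ is invertible, so it makes sense to form $(MU)^{-1}T(MU)$. Computing,
	\begin{displaymath}
	(MU)^{-1}T(MU)=U^{-1}M^{-1}TMU=U^{-1}RU.
	\end{displaymath}
It remains to see that $U^{-1}RU$ is block-diagonal with the same block structure as $R$. This is where the hypothesis on $U$ enters: $U$ is block-diagonal with blocks of exactly the sizes of the blocks of $R$, say $U=\mathrm{diag}(U_1,\dots,U_p)$ and $R=\mathrm{diag}(R_1,\dots,R_p)$. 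Then $U^{-1}RU=\mathrm{diag}(U_1^{-1}R_1U_1,\dots,U_p^{-1}R_pU_p)$, which is again block-diagonal with the same block sizes. Hence $MU$ reduces $T$ in the sense that the conjugated tensor product is again in block form. (One should note here that $U^{-1}R_jU$ is in general only \emph{equivalent} to $R_j$, not equal to it; this is the analogue of the phenomenon already seen for $\hat{\textbf{\underline{5}}}\sim\textbf{\underline{5}}$ in subsection~\ref{icosahedral}.)

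For the second, stronger claim, I would simply add the commutation hypothesis $[R_j,U_j]=0$ for all $j$ (``the blocks of $R$ commute with the blocks of $U$''). Under this assumption, $U_j^{-1}R_jU_j=U_j^{-1}U_jR_j=R_j$ for each $j$, so $U^{-1}RU=R$, and therefore $(MU)^{-1}T(MU)=R$ exactly. This finishes the proof.

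The argument is essentially bookkeeping about block-diagonal matrices, so there is no serious obstacle; the only point requiring a little care is the consistency of block sizes. I would want to state explicitly that because the blocks of $U$ are assumed to match the dimensions of the blocks of $R$, the product $U^{-1}RU$ decomposes blockwise and no ``mixing'' between blocks can occur --- this is exactly why the block-diagonal structure is preserved. Once that observation is made, both conjugation identities follow by a one-line computation per block. A remark worth appending is that this proposition is precisely what justifies the freedom we have been exploiting throughout the chapter to adjust Clebsch-Gordan matrices by block-diagonal phase (or other) factors without destroying the reduction, and in the commuting case without even changing the reduced matrix representations of the irreducible constituents.
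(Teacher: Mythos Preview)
Your proof is correct and follows essentially the same approach as the paper: compute $(MU)^{-1}T(MU)=U^{-1}RU$, observe that block-diagonality of $U$ with matching block sizes keeps this block-diagonal, and then use the commutation hypothesis to get $U^{-1}RU=R$. The paper's version is just more terse; your explicit block decomposition $U=\mathrm{diag}(U_1,\dots,U_p)$ and the remark linking this to the $\hat{\textbf{\underline{5}}}\sim\textbf{\underline{5}}$ phenomenon are helpful elaborations but not substantively different.
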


\begin{proof}
	\begin{displaymath}
	(MU)^{-1}T(MU)=U^{-1}\underbrace{M^{-1}TM}_{R}U=U^{-1}RU.
	\end{displaymath}
If $U$ is block-diagonal, $U^{-1}RU$ is block-diagonal too, and if the blocks of $R$ commute with the corresponding blocks of $U$ we have $U^{-1}RU=R$.
\end{proof}

\begin{cor}\label{CSU316}
Let $M$ be a matrix of Clebsch-Gordan coefficients in a given basis, and let $U$ be an invertible block-diagonal matrix, then there exists a basis in which the Clebsch-Gordan coefficients are given by $MU$. Especially each column of $M$ can be multiplied with a phase factor.
\end{cor}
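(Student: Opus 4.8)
The plan is to derive Corollary \ref{CSU316} as a direct application of Proposition \ref{PSU315}, so the only real work is to set up the correct basis-change picture and to observe that the restrictive hypothesis of Proposition \ref{PSU315} (that the blocks of $R$ commute with the blocks of $U$) can be dropped if one is content with $MU$ merely \emph{reducing} $T$ rather than reducing it to the \emph{same} block-diagonal form $R$. Recall from Proposition \ref{Pclebsch2} that the matrix $M$ of Clebsch-Gordan coefficients in a given basis is exactly the matrix of basis change from the tensor-product basis $\{e_i^a\otimes e_j^b\}$ to a basis adapted to the invariant subspaces, and that the statement "$M$ reduces $T$" means $M^{-1}TM=R$ is block-diagonal with blocks of the dimensions dictated by the Clebsch-Gordan decomposition.

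First I would take an arbitrary invertible block-diagonal matrix $U$ whose block sizes match those of the reduced tensor product $R=M^{-1}TM$. By Proposition \ref{PSU315}, the matrix $MU$ also reduces $T$, i.e. $(MU)^{-1}T(MU)=U^{-1}RU$ is again block-diagonal with the same block sizes (since conjugating a block-diagonal matrix by a block-diagonal matrix of compatible block structure preserves block-diagonality). Then I would invoke the interpretation of $M$ via Proposition \ref{Pclebsch2}: since $MU$ reduces $T$ and is obtained from $M$ by right multiplication with an invertible matrix, its columns span the same collection of invariant subspaces (the block-diagonal structure of $U$ guarantees that $U$ does not mix columns belonging to different irreducible blocks), so $MU$ is the matrix of basis change to a new basis of those same invariant subspaces. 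Hence $MU$ is a legitimate matrix of Clebsch-Gordan coefficients, now expressed in a different choice of bases for the invariant subspaces — which is precisely the claimed statement.

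For the "especially" clause, I would specialise: take $U=\operatorname{diag}(e^{i\alpha_1},\dots,e^{i\alpha_{n^2}})$, a diagonal (hence block-diagonal for \emph{any} block decomposition) unitary matrix of phases. This is invertible, so the general statement applies and $MU$ is again a matrix of Clebsch-Gordan coefficients; but right multiplication by $\operatorname{diag}(e^{i\alpha_k})$ simply multiplies the $k$-th column of $M$ by the phase $e^{i\alpha_k}$. This is exactly the column-wise phase freedom that was mentioned when listing the "unitary form" of the Clebsch-Gordan coefficients in Chapter \ref{chapterclebsch} (the remark about normalising each column of $M$), and it also explains why the entries in the tables carry the annotation "up to irrelevant phase factors".

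I do not expect a serious obstacle here, since the corollary is essentially a bookkeeping consequence of Proposition \ref{PSU315} together with the basis-change interpretation of $M$. The one point that needs a little care — and which I would state explicitly rather than gloss over — is the claim that right multiplication by a block-diagonal $U$ does not destroy the property of being "Clebsch-Gordan coefficients": one must note that the blocks of $U$ are aligned with the irreducible summands, so that within each invariant subspace $V^\lambda$ the corresponding block of $U$ merely performs an invertible change of basis of $V^\lambda$, leaving the subspace itself (and hence the decomposition) intact. Once that is spelled out, the proof is a one-line deduction from Proposition \ref{PSU315}.
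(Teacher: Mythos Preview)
Your proposal is correct and follows essentially the same approach as the paper: invoke Proposition \ref{PSU315} to see that $MU$ still reduces the tensor product, then interpret the block-diagonal $U$ as a change of basis on the invariant-subspace side. The paper phrases the second step by pointing to Proposition \ref{Pclebsch1} (identifying $U$ with the basis transformation $S_\lambda$ there, with $S_A=S_B=\mathbbm{1}$), whereas you phrase it via Proposition \ref{Pclebsch2}; these are two sides of the same observation, and your more explicit treatment of why the block structure of $U$ preserves the invariant subspaces is a welcome clarification.
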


\begin{proof}
	This follows directly from proposition \ref{PSU315}. The action of $U$ on the Clebsch-Gordan coefficients can be interpreted as a basis transformation $S_{\lambda}$ in the sense of proposition \ref{Pclebsch1}. If $U$ is a diagonal phase matrix, $M\mapsto MU$ means multiplication of each column of $M$ with a phase factor.
\end{proof}

\begin{define}\label{DSU317}
Let $C^{\lambda}_{ijk}$ and $D^{\lambda}_{ijk}$ be Clebsch-Gordan coefficients for two tensor products. We call the Clebsch-Gordan coefficients \textit{equivalent up to basis transformations} or simply \textit{equivalent}, if there exists a basis transformation in the sense of proposition \ref{Pclebsch1} such that $C^{\lambda}_{ijk}\hspace{0mm}'=D^{\lambda}_{ijk}$.
\end{define}
\hspace{0mm}\\
From corollary \ref{CSU316} follows that, if we are only interested in Clebsch-Gordan coefficients up to basis transformations, we can multiply each basis vector of the invariant subspaces by arbitrary phase factors. Therefore the Clebsch-Gordan coefficients given in table \ref{Delta6nnCGCb} are equivalent to those shown in table \ref{Delta6nnCGCa}.
\medskip
\\
Now to the cases where the 3- or 6-dimensional representations contained in the tensor product are reducible.

\paragraph{\underline{(i) $\textbf{\underline{3}}_{r(0)}=\textbf{\underline{1}}_{r}\oplus \textbf{\underline{2}}_{1}$}}
This case occurs in $\textbf{\underline{3}}_{i(l)}\otimes \textbf{\underline{3}}_{j(l')}$ if $(l+l')$\hspace{1mm}mod\hspace{1mm}$n=0$. Since $l=l'=0$ is not allowed (then the factors of the product would be reducible), we have to consider $l+l'=n$ $\Rightarrow l'=n-l$.
	\begin{displaymath}
	\begin{split}
	& \textbf{\underline{3}}_{1(l)}\otimes \textbf{\underline{3}}_{1(n-l)}=\textbf{\underline{1}}_{1}\oplus \textbf{\underline{2}}_{1}\oplus \textbf{\underline{6}}_{(\widetilde{l,l-n})},\\
	& \textbf{\underline{3}}_{1(l)}\otimes \textbf{\underline{3}}_{2(n-l)}=\textbf{\underline{1}}_{2}\oplus \textbf{\underline{2}}_{1}\oplus \textbf{\underline{6}}_{(\widetilde{l,l-n})},\\
	& \textbf{\underline{3}}_{2(l)}\otimes \textbf{\underline{3}}_{2(n-l)}=\textbf{\underline{1}}_{1}\oplus \textbf{\underline{2}}_{1}\oplus \textbf{\underline{6}}_{(\widetilde{l,l-n})}.\\
	\end{split}
	\end{displaymath}
There is no $l\in \{1,...,n-1\}$ s.t. $\textbf{\underline{6}}_{(\widetilde{l,l-n})}$ is reducible, thus we only have to analyse the split-up $\textbf{\underline{3}}_{r(0)}=\textbf{\underline{1}}_{r}\oplus \textbf{\underline{2}}_{1}$.
\medskip
\\
Reduction of $\textbf{\underline{3}}_{1(0)}$:
	\begin{displaymath}
	\begin{split}
	& A_{3_{1(0)}}=\left(\begin{matrix}
		0  & 1 & 0 \\
		0  & 0 & 1 \\
		1  & 0 & 0
	          \end{matrix}\right),\quad
	B_{3_{1(0)}}=\left(\begin{matrix}
		0  & 0 & 1 \\
		0  & 1 & 0 \\
		1  & 0 & 0
	          \end{matrix}\right),\\
	& C_{3_{1(0)}}=\left(\begin{matrix}
		1  & 0 & 0 \\
		0  & 1 & 0 \\
		0  & 0 & 1
	          \end{matrix}\right),\quad
	D_{3_{1(0)}}=\left(\begin{matrix}
		1 & 0 & 0 \\
		0  & 1 & 0 \\
		0  & 0 & 1
	          \end{matrix}\right),
	\end{split}
	\end{displaymath}
For the reduction of $\textbf{\underline{3}}_{1(0)}$ we will use the following useful lemma:
\begin{lemma}\label{LSU39}
Let $D$ contain a one-dimensional irreducible representation $\textbf{\underline{1}}$, then all $D(a),\enspace a\in G$ have at least one common eigenvector $v$, and $D(a)v=(\textbf{\underline{1}}(a))v$.
\end{lemma}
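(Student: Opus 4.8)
\textbf{Proof proposal for Lemma \ref{LSU39}.}

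The plan is to reduce the statement to a fact about a single linear operator and then invoke the structure of a representation containing a one-dimensional subrepresentation. First I would use the hypothesis: if $D$ contains the one-dimensional irreducible representation $\textbf{\underline{1}}$, then by definition of ``contains'' there is a $D$-invariant one-dimensional subspace $W=\mathrm{Span}(v)$ of $V_D$ on which $D$ acts via $\textbf{\underline{1}}$, i.e.\ $D(a)v=\textbf{\underline{1}}(a)\,v$ for every $a\in G$. This $v$ is then automatically a common eigenvector of all the operators $D(a)$, with eigenvalue $\textbf{\underline{1}}(a)$, which is exactly the claim. So the only thing to justify carefully is why the decomposition of $D$ into irreducible representations actually produces such an invariant line.

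The key step is therefore to appeal to complete reducibility: every representation of a finite group is equivalent to a unitary representation (theorem \ref{TA42}), and unitary representations decompose as a direct sum of irreducible subrepresentations. Hence $V_D=\bigoplus_\mu V^\mu$ with each $V^\mu$ $D$-invariant and the restriction $D|_{V^\mu}$ irreducible; since $\textbf{\underline{1}}$ appears in this decomposition, one of the summands, say $V^{\mu_0}$, carries $\textbf{\underline{1}}$ and is one-dimensional. Picking any nonzero $v\in V^{\mu_0}$ gives the desired vector. Alternatively, and perhaps more in the spirit of the surrounding material, one can argue via characters: the multiplicity of $\textbf{\underline{1}}$ in $D$ is $(\chi_{\textbf{\underline{1}}},\chi_D)=\frac{1}{\mathrm{ord}(G)}\sum_{a\in G}\textbf{\underline{1}}(a)^\ast\chi_D(a)\geq 1$, and a positive multiplicity of an irreducible constituent forces the existence of an invariant subspace isomorphic to it.

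The main obstacle is essentially bookkeeping rather than mathematics: one must be precise about what ``$D$ contains a one-dimensional irreducible representation'' means, and ensure the invariant line really exists and is not merely a subquotient. Over $\mathbb{C}$ for finite groups this is harmless because of complete reducibility, so the subtlety evaporates; on a real or non-semisimple setting one would have to be more careful, but that is outside the scope here. After that, the identification $D(a)v=\textbf{\underline{1}}(a)v$ is immediate from the invariance of $W$ together with $\dim W=1$, and no computation is required.
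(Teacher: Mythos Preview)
Your proposal is correct and is essentially the same argument as the paper's: the paper simply writes $D$ in block-diagonal form $\begin{pmatrix}[\textbf{\underline{1}}]&\textbf{0}\\\textbf{0}&X\end{pmatrix}$ and observes that $(1,0,\ldots,0)^T$ is a common eigenvector with eigenvalue $\textbf{\underline{1}}(a)$, which is the coordinate version of your invariant-line argument. Your additional remarks on complete reducibility and the character-theoretic multiplicity are fine but not needed here, since the paper treats ``$D$ contains $\textbf{\underline{1}}$'' as already meaning that such a block decomposition exists.
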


\begin{proof}
By definition there exists a basis such that $[D]$ has the following block form:
	\begin{displaymath}
	\left(
	\begin{matrix}
	 [\textbf{\underline{1}}] & \textbf{0} \\
	 \textbf{0} & X
	\end{matrix}
	\right),
	\end{displaymath}
where \textbf{0} are appropriate null vectors. In this basis the vector $\left(
	\begin{matrix}
	 1 & 0 & ... & 0 
	\end{matrix}
	\right)^{T}$ is an eigenvector to all $[D(a)]$, and the eigenvalue is $\textbf{\underline{1}}(a)$.
\end{proof}
\hspace{0mm}\\
Using this lemma we will do the following:
	\begin{enumerate}
	\item We search for the common eigenvectors of $\textbf{\underline{3}}_{1(0)}(f), \enspace{f=A,B,C,D}$ to the eigenvalue $\textbf{\underline{1}}_{1}(f)=1$.
	\item We construct an orthonormal basis of $\mathbb{C}^{3}$ having the common eigenvector as first vector. A matrix $S$ having these basisvectors as columns will reduce the representation $\textbf{\underline{3}}_{1(0)}$ via
	\begin{displaymath}
	[\textbf{\underline{3}}_{1(0)}]\mapsto S^{-1}[\textbf{\underline{3}}_{1(0)}]S=[\textbf{\underline{1}}_{1}]\oplus[\textbf{\underline{2}}_{1}].
	\end{displaymath}
	\end{enumerate}
A common eigenvector of $A_{3_{1(0)}}, B_{3_{1(0)}}, C_{3_{1(0)}}$ and $D_{3_{1(0)}}$ to the eigenvalue $1$ is
	\begin{displaymath}
	v=\frac{1}{\sqrt{3}}\left(\begin{matrix}
	                        1\\1\\1
	                        \end{matrix}
				\right).
	\end{displaymath}
We extend $v$ to an orthonormal basis of $\mathbb{C}^{3}$ using the other eigenvectors of $A_{3_{1(0)}}$ ($\Rightarrow$ $S$ is unitary).
	\begin{displaymath}
	S=\frac{1}{\sqrt{3}}\left(\begin{matrix}
			 1 & 1 & 1 \\
			 1 & \omega & \omega^{2} \\
			 1 & \omega^{2} & \omega
	     		    \end{matrix}\right),
	\end{displaymath}
	\begin{displaymath}
	S^{-1}A_{3_{1(0)}}S=\left(\begin{matrix}
			 1 & 0 & 0 \\
			 0 & \omega & 0 \\
			 0 & 0 & \omega^{2}
	     		    \end{matrix}\right),\quad
	S^{-1}B_{3_{1(0)}}S=\left(\begin{matrix}
			 1 & 0 & 0 \\
			 0 & 0 & \omega \\
			 0 & \omega^{2} & 0
	     		    \end{matrix}\right).
	\end{displaymath}
Therefore we have found
	\begin{displaymath}
	\textbf{\underline{2}}_{1}: A\mapsto \left(\begin{matrix}
	                                      \omega & 0\\ 0 &\omega^{2}      
	                                      \end{matrix}\right),\enspace
				B\mapsto \left(\begin{matrix}
	                                      0 & \omega\\ \omega^{2} & 0      
	                                      \end{matrix}\right),\enspace
				C,D\mapsto \left(\begin{matrix}
	                                      1 & 0\\ 0 &1      
	                                      \end{matrix}\right).
	\end{displaymath}
Since only the sign of the generator $B_{3_{2}}$ is different to the representation $\textbf{\underline{3}}_{1}$, $S$ reduces $\textbf{\underline{3}}_{2}$ too, and one gets
	 \begin{displaymath}
	\textbf{\underline{2}}_{2}: A\mapsto \left(\begin{matrix}
	                                      \omega & 0\\ 0 &\omega^{2}      
	                                      \end{matrix}\right),\enspace
				B\mapsto \left(\begin{matrix}
	                                      0 & -\omega\\ -\omega^{2} & 0      
	                                      \end{matrix}\right),\enspace
				C,D\mapsto \left(\begin{matrix}
	                                      1 & 0\\ 0 &1      
	                                      \end{matrix}\right).
	\end{displaymath}
The basis vectors of the invariant subspaces $V_{\textbf{\underline{1}}_{1}}$ and $V_{\textbf{\underline{2}}_{1}}$ are given by
	\begin{displaymath}
	\begin{split}
	& u_{\textbf{\underline{1}}_{r}}=S_{j1}u_{\textbf{\underline{3}}_{r}}(j),\\
	&
	u_{\textbf{\underline{2}}_{1}}(1)=S_{j2}u_{\textbf{\underline{3}}_{r}}(j),\\
	&
	u_{\textbf{\underline{2}}_{1}}(2)=S_{j3}u_{\textbf{\underline{3}}_{r}}(j).
	\end{split}
	\end{displaymath}
Thus we have found the new Clebsch-Gordan coefficients, which we list in table \ref{Delta6nnCGCc}.

\begin{table}
\begin{center}
\renewcommand{\arraystretch}{1.4}
\begin{tabular}{|l|l|l|}
\firsthline
	$\Delta(6n^{2})$ & $\textbf{\underline{3}}_{1(l)}\otimes\textbf{\underline{3}}_{1(n-l)}$ /$\textbf{\underline{3}}_{1(l)}\otimes\textbf{\underline{3}}_{2(n-l)}$/ $\textbf{\underline{3}}_{2(l)}\otimes\textbf{\underline{3}}_{2(n-l)}$ & CGC\\
\hline
	$\textbf{\underline{1}}_{1}$/$\textbf{\underline{1}}_{2}$/$\textbf{\underline{1}}_{1}$
	&	
	\hspace{6.7mm}$u_{\textbf{\underline{1}}}=\frac{1}{\sqrt{3}}e_{11}+\frac{1}{\sqrt{3}}e_{22}+\frac{1}{\sqrt{3}}e_{33}$ & \textbf{Ia}\\
	
	$\textbf{\underline{2}}_{1}$
	&	
	$u_{\textbf{\underline{2}}_{1}}(1)=\frac{1}{\sqrt{3}}e_{11}+\frac{\omega^{2}}{\sqrt{3}}e_{22}+\frac{\omega}{\sqrt{3}}e_{33}$ & \textbf{IIb}\\

	&	
	$u_{\textbf{\underline{2}}_{1}}(2)=\frac{1}{\sqrt{3}}e_{11}+\frac{\omega}{\sqrt{3}}e_{22}+\frac{\omega^{2}}{\sqrt{3}}e_{33}$ & \\
\lasthline
\end{tabular}
\caption[Clebsch-Gordan coefficients for $\textbf{\underline{3}}_{1(l)}\otimes\textbf{\underline{3}}_{1(n-l)}$, $\textbf{\underline{3}}_{1(l)}\otimes\textbf{\underline{3}}_{2(n-l)}$ and $\textbf{\underline{3}}_{2(l)}\otimes\textbf{\underline{3}}_{2(n-l)}$ of $\Delta(6n^{2})$.]{Clebsch-Gordan coefficients for $\textbf{\underline{3}}_{1(l)}\otimes\textbf{\underline{3}}_{1(n-l)}$, $\textbf{\underline{3}}_{1(l)}\otimes\textbf{\underline{3}}_{2(n-l)}$ and $\textbf{\underline{3}}_{2(l)}\otimes\textbf{\underline{3}}_{2(n-l)}$ of $\Delta(6n^{2})$. The basis vectors of the 6-dimensional invariant subspace are not shown. They are the same as in tables \ref{Delta6nnCGCa} and \ref{Delta6nnCGCb}.}
\label{Delta6nnCGCc}
\end{center}
\end{table}

\paragraph{\underline{(ii) $\textbf{\underline{6}}_{(l,0)}=\textbf{\underline{3}}_{1(l)}\oplus \textbf{\underline{3}}_{2(l)}$}}
Here the reduction is not as easy as in case (i). To solve the problem, we will use the (slightly adapted) algorithm explained in chapter \ref{chapterclebsch} on the Clebsch-Gordan decomposition
	\begin{displaymath}
	\textbf{\underline{6}}_{(l,0)}\otimes \textbf{\underline{1}}_{1}=\textbf{\underline{3}}_{1(l)}\oplus\textbf{\underline{3}}_{2(l)}.
	\end{displaymath}
The Clebsch-Gordan coefficients for
	\begin{displaymath}
	D_{a}\otimes D_{b}=\bigoplus_{\lambda}D^{\lambda}
	\end{displaymath}
fulfil
	\begin{displaymath}
	[D^{\lambda}]_{ir}[(D_{a}^{-1})^{T}]_{js}[(D_{b}^{-1})^{T}]_{kt}C^{\lambda}_{ijk}=C^{\lambda}_{rst}.
	\end{displaymath}
Here we will deal with the special case $[D_{b}]=1$. Setting $C^{\lambda}_{ij}:=C^{\lambda}_{ij1}$ we get
	\begin{displaymath}
	[D^{\lambda}]_{ir}[(D_{a}^{-1})^{T}]_{js}C^{\lambda}_{ij}=C^{\lambda}_{rs}.
	\end{displaymath}
We can interpret this equation as an eigenvalue-equation
	\begin{displaymath}
		NC=C
	\end{displaymath}
with
	\begin{displaymath}
	C=\left(\begin{matrix}
		 C_{11}\\
		 \vdots\\
		 C_{1n_{a}}\\
		 C_{21}\\
		 \vdots\\
		 C_{n_{\lambda}n_{a}} 
	  \end{matrix}\right)
	\quad
	\mbox{and}
	\quad
	N=\left(
	\begin{matrix}
	 [D^{\lambda}]_{11}[(D_{a}^{-1})^{T}]_{11} & ... & [D^{\lambda}]_{n_{\lambda}1}[(D_{a}^{-1})^{T}]_{n_{a}1} \\
	 \vdots &  & \vdots \\
	 [D^{\lambda}]_{1n_{\lambda}}[(D_{a}^{-1})^{T}]_{1n_{a}} & ... & [D^{\lambda}]_{n_{\lambda}n_{\lambda}}[(D_{a}^{-1})^{T}]_{n_{a}n_{a}} 
	\end{matrix}\right).
	\end{displaymath}
Using the known algorithm with the new matrix $N$ and the generators given in \cite{escobar} we can calculate the matrix of \textquotedblleft reduction coefficients\textquotedblright\hspace{1mm} $S$ for $\textbf{\underline{6}}_{(l,0)}$.
	\begin{displaymath}
	S=\frac{1}{\sqrt{2}}\left(\begin{matrix}
 0 & 0 & 1 & 0 & 0 & 1 \\
 1 & 0 & 0 & 1 & 0 & 0 \\
 0 & 1 & 0 & 0 & 1 & 0 \\
 1 & 0 & 0 & -1 & 0 & 0  \\
 0 & 0 & 1 & 0 & 0 & -1 \\
 0 & 1 & 0 & 0 & -1 & 0
\end{matrix}\right)
	\end{displaymath}
As in case (i) we find the basis vectors of the new invariant subspaces by
	\begin{displaymath}
	\begin{split}
	& u_{\textbf{\underline{3}}_{1(l)}}(j)=S_{kj}u_{\textbf{\underline{6}}_{(l,0)}}(k)\quad j=1,2,3,\\
	& u_{\textbf{\underline{3}}_{2(l)}}(j)=S_{k\hspace{0.4mm} j+3}u_{\textbf{\underline{6}}_{(l,0)}}(k)\quad j=1,2,3.
	\end{split}
	\end{displaymath}
In which tensor products can $\textbf{\underline{6}}_{(l,0)}$ occur? We already know that
	\begin{displaymath}
	\textbf{\underline{3}}_{i(l)}\otimes \textbf{\underline{3}}_{j(l')}=\textbf{\underline{3}}_{\kappa(i,j)(l+l')}\oplus\textbf{\underline{6}}_{(-l,l-l')}.
	\end{displaymath}
Therefore the only possibility to get $\textbf{\underline{6}}_{(l,0)}$ is
	\begin{equation}\label{6l0reduction}
	\textbf{\underline{3}}_{i(-l)}\otimes \textbf{\underline{3}}_{j(-l)}=\textbf{\underline{3}}_{\kappa(i,j)(-2l)}\oplus\textbf{\underline{6}}_{(l,0)}=\textbf{\underline{3}}_{\kappa(i,j)(-2l)}\oplus\textbf{\underline{3}}_{1(l)}\oplus\textbf{\underline{3}}_{2(l)}.
	\end{equation}
Since $l=0$ is not allowed (then $\textbf{\underline{3}}_{i(l)}$ would be reducible), all representations on the right hand side of equation (\ref{6l0reduction}) are irreducible, except $l=\pm \frac{n}{2}$, which is of course only possible, if $n\in 2\mathbb{N}\backslash\{0\}$. If $l=\pm \frac{n}{2}$ we have
	\begin{equation}\label{6l0reductionb}
	\textbf{\underline{3}}_{i(\pm \frac{n}{2})}\otimes \textbf{\underline{3}}_{j(\pm \frac{n}{2})}=\textbf{\underline{1}}_{\kappa(i,j)}\oplus \textbf{\underline{2}}_{1}\oplus\textbf{\underline{3}}_{1(\mp \frac{n}{2})}\oplus\textbf{\underline{3}}_{2(\mp \frac{n}{2})}.
	\end{equation}
The coefficients of the Clebsch-Gordan decomposition (\ref{6l0reduction}) are listed in tables \ref{Delta6nnCGCd} and \ref{Delta6nnCGCe}. The Clebsch-Gordan coefficients for
	\begin{displaymath}
	\textbf{\underline{3}}_{1(-l)}\otimes \textbf{\underline{3}}_{2(-l)}=\textbf{\underline{3}}_{2(-2l)}\oplus \textbf{\underline{3}}_{1(l)}\oplus\textbf{\underline{3}}_{2(l)}
	\end{displaymath}
are shown in table \ref{Delta6nnCGCe}. These coefficients are of the same type as those listed in table \ref{Delta6nnCGCd}, only the roles of $\textbf{\underline{3}}_{1(l)}$ and $\textbf{\underline{3}}_{2(l)}$ are interchanged.
\medskip
\\
The Clebsch-Gordan coefficients for the decomposition (\ref{6l0reductionb}) can be obtained from tables \ref{Delta6nnCGCd} and \ref{Delta6nnCGCe} by replacing the basis vectors of the invariant subspace $V_{\textbf{\underline{3}}_{\kappa(i,j)(-2l)}}$ with the basis vectors given in table \ref{Delta6nnCGCc} (decomposition of $\textbf{\underline{3}}_{\kappa(i,j)(0)}$) .

\begin{table}
\begin{center}
\renewcommand{\arraystretch}{1.4}
\begin{tabular}{|l|l|l|}
\firsthline
	$\Delta(6n^{2})$ & $\textbf{\underline{3}}_{1(-l)}\otimes\textbf{\underline{3}}_{1(-l)}/\textbf{\underline{3}}_{2(-l)}\otimes\textbf{\underline{3}}_{2(-l)}$ & CGC\\
\hline
	$\textbf{\underline{3}}_{1(-2l)}$
	&	
	$u^{\textbf{\underline{3}}_{1(-l)}\otimes\textbf{\underline{3}}_{1(-l)}/\textbf{\underline{3}}_{2(-l)}\otimes\textbf{\underline{3}}_{2(-l)}}_{\textbf{\underline{3}}_{1(-2l)}}(1)=e_{11}$ & \textbf{IIId}\\

	&	
	$u^{\textbf{\underline{3}}_{1(-l)}\otimes\textbf{\underline{3}}_{1(-l)}/\textbf{\underline{3}}_{2(-l)}\otimes\textbf{\underline{3}}_{2(-l)}}_{\textbf{\underline{3}}_{1(-2l)}}(2)=e_{22}$ & \\

	&	
	$u^{\textbf{\underline{3}}_{1(-l)}\otimes\textbf{\underline{3}}_{1(-l)}/\textbf{\underline{3}}_{2(-l)}\otimes\textbf{\underline{3}}_{2(-l)}}_{\textbf{\underline{3}}_{1(-2l)}}(3)=e_{33}$ & \\

	$\textbf{\underline{3}}_{1(l)}$
	&	
	$u^{\textbf{\underline{3}}_{1(-l)}\otimes\textbf{\underline{3}}_{1(-l)}/\textbf{\underline{3}}_{2(-l)}\otimes\textbf{\underline{3}}_{2(-l)}}_{\textbf{\underline{3}}_{1(l)}}(1)=\frac{1}{\sqrt{2}}e_{23}+\frac{1}{\sqrt{2}}e_{32}$  & \textbf{IIIa$_{(1,1)}$}\\

	&	
	$u^{\textbf{\underline{3}}_{1(-l)}\otimes\textbf{\underline{3}}_{1(-l)}/\textbf{\underline{3}}_{2(-l)}\otimes\textbf{\underline{3}}_{2(-l)}}_{\textbf{\underline{3}}_{1(l)}}(2)=\frac{1}{\sqrt{2}}e_{13}+\frac{1}{\sqrt{2}}e_{31}$ & \\

	&	
	$u^{\textbf{\underline{3}}_{1(-l)}\otimes\textbf{\underline{3}}_{1(-l)}/\textbf{\underline{3}}_{2(-l)}\otimes\textbf{\underline{3}}_{2(-l)}}_{\textbf{\underline{3}}_{1(l)}}(3)=\frac{1}{\sqrt{2}}e_{12}+\frac{1}{\sqrt{2}}e_{21}$ & \\

	$\textbf{\underline{3}}_{2(l)}$
	&	
	$u^{\textbf{\underline{3}}_{1(-l)}\otimes\textbf{\underline{3}}_{1(-l)}/\textbf{\underline{3}}_{2(-l)}\otimes\textbf{\underline{3}}_{2(-l)}}_{\textbf{\underline{3}}_{2(l)}}(1)=\frac{1}{\sqrt{2}}e_{23}-\frac{1}{\sqrt{2}}e_{32}$  & \textbf{IIIa$_{(1,-1)}$}\\

	&	
	$u^{\textbf{\underline{3}}_{1(-l)}\otimes\textbf{\underline{3}}_{1(-l)}/\textbf{\underline{3}}_{2(-l)}\otimes\textbf{\underline{3}}_{2(-l)}}_{\textbf{\underline{3}}_{2(l)}}(2)=-\frac{1}{\sqrt{2}}e_{13}+\frac{1}{\sqrt{2}}e_{31}$ & \\

	&	
	$u^{\textbf{\underline{3}}_{1(-l)}\otimes\textbf{\underline{3}}_{1(-l)}/\textbf{\underline{3}}_{2(-l)}\otimes\textbf{\underline{3}}_{2(-l)}}_{\textbf{\underline{3}}_{2(l)}}(3)=\frac{1}{\sqrt{2}}e_{12}-\frac{1}{\sqrt{2}}e_{21}$ & \\
\lasthline
\end{tabular}
\caption{Clebsch-Gordan coefficients for $\textbf{\underline{3}}_{1(-l)}\otimes\textbf{\underline{3}}_{1(-l)}$ and $\textbf{\underline{3}}_{2(-l)}\otimes\textbf{\underline{3}}_{2(-l)}$ of $\Delta(6n^{2})$.}
\label{Delta6nnCGCd}
\end{center}
\end{table}

\begin{table}
\begin{center}
\renewcommand{\arraystretch}{1.4}
\begin{tabular}{|l|l|l|}
\firsthline
	$\Delta(6n^{2})$ & $\textbf{\underline{3}}_{1(-l)}\otimes\textbf{\underline{3}}_{2(-l)}$ & CGC\\
\hline
	$\textbf{\underline{3}}_{2(-2l)}$
	&	
	$u^{\textbf{\underline{3}}_{1(-l)}\otimes\textbf{\underline{3}}_{2(-l)}}_{\textbf{\underline{3}}_{2(-2l)}}(1)=e_{11}$  & \textbf{IIId}\\

	&	
	$u^{\textbf{\underline{3}}_{1(-l)}\otimes\textbf{\underline{3}}_{2(-l)}}_{\textbf{\underline{3}}_{2(-2l)}}(2)=e_{22}$ & \\

	&	
	$u^{\textbf{\underline{3}}_{1(-l)}\otimes\textbf{\underline{3}}_{2(-l)}}_{\textbf{\underline{3}}_{2(-2l)}}(3)=e_{33}$ & \\

	$\textbf{\underline{3}}_{1(l)}$
	&	
	$u^{\textbf{\underline{3}}_{1(-l)}\otimes\textbf{\underline{3}}_{2(-l)}}_{\textbf{\underline{3}}_{1(l)}}(1)=\frac{1}{\sqrt{2}}e_{23}-\frac{1}{\sqrt{2}}e_{32}$ & \textbf{IIIa$_{(1,-1)}$}\\

	&	
	$u^{\textbf{\underline{3}}_{1(-l)}\otimes\textbf{\underline{3}}_{2(-l)}}_{\textbf{\underline{3}}_{1(l)}}(2)=-\frac{1}{\sqrt{2}}e_{13}+\frac{1}{\sqrt{2}}e_{31}$ & \\

	&	
	$u^{\textbf{\underline{3}}_{1(-l)}\otimes\textbf{\underline{3}}_{2(-l)}}_{\textbf{\underline{3}}_{1(l)}}(3)=\frac{1}{\sqrt{2}}e_{12}-\frac{1}{\sqrt{2}}e_{21}$ & \\

	$\textbf{\underline{3}}_{2(l)}$
	&	
	$u^{\textbf{\underline{3}}_{1(-l)}\otimes\textbf{\underline{3}}_{2(-l)}}_{\textbf{\underline{3}}_{2(l)}}(1)=\frac{1}{\sqrt{2}}e_{23}+\frac{1}{\sqrt{2}}e_{32}$  & \textbf{IIIa$_{(1,1)}$}\\

	&	
	$u^{\textbf{\underline{3}}_{1(-l)}\otimes\textbf{\underline{3}}_{2(-l)}}_{\textbf{\underline{3}}_{2(l)}}(2)=\frac{1}{\sqrt{2}}e_{13}+\frac{1}{\sqrt{2}}e_{31}$ & \\

	&	
	$u^{\textbf{\underline{3}}_{1(-l)}\otimes\textbf{\underline{3}}_{2(-l)}}_{\textbf{\underline{3}}_{2(l)}}(3)=\frac{1}{\sqrt{2}}e_{12}+\frac{1}{\sqrt{2}}e_{21}$ & \\
\lasthline
\end{tabular}
\caption{Clebsch-Gordan coefficients for $\textbf{\underline{3}}_{1(-l)}\otimes\textbf{\underline{3}}_{2(-l)}$ of $\Delta(6n^{2})$.}
\label{Delta6nnCGCe}
\end{center}
\end{table}

\paragraph{\underline{(iii) $\textbf{\underline{6}}_{(\frac{n}{3},\frac{n}{3})}$, $\textbf{\underline{6}}_{(\frac{2n}{3},\frac{2n}{3})}$}}
According to \cite{escobar} $\textbf{\underline{6}}_{(\frac{n}{3},\frac{n}{3})}$ and $\textbf{\underline{6}}_{(\frac{2n}{3},\frac{2n}{3})}$ of $\Delta(6n^{2})$ ($n\in 3\mathbb{N}\backslash\{0\}$) are reducible to
	\begin{displaymath}
	\textbf{\underline{2}}_{2}\oplus \textbf{\underline{2}}_{3}\oplus \textbf{\underline{2}}_{4}.
	\end{displaymath}
This automatically means that $\textbf{\underline{6}}_{(\frac{n}{3},\frac{n}{3})}$ and $\textbf{\underline{6}}_{(\frac{2n}{3},\frac{2n}{3})}$ are equivalent (because they are direct sums of the same representations). Therefore we only need to consider one of these two representations. We choose $\textbf{\underline{6}}_{(\frac{n}{3},\frac{n}{3})}$. \cite{escobar} give the following generators for $\textbf{\underline{2}}_{2}$, $\textbf{\underline{2}}_{3}$ and $\textbf{\underline{2}}_{4}$:
	\begin{displaymath}
	\begin{split}
	& \textbf{\underline{2}}_{2}:
A\mapsto
\left(\begin{matrix}
 \omega & 0 \\
 0 & \omega^{2}
\end{matrix}\right),\enspace
B\mapsto
\left(\begin{matrix}
 0 & 1 \\
 1 & 0
\end{matrix}\right),\enspace
C,D\mapsto
\left(\begin{matrix}
 \omega^{2} & 0 \\
 0 & \omega
\end{matrix}\right),\\
& \textbf{\underline{2}}_{3}:
A\mapsto
\left(\begin{matrix}
 \omega & 0 \\
 0 & \omega^{2}
\end{matrix}\right),\enspace
B\mapsto
\left(\begin{matrix}
 0 & 1 \\
 1 & 0
\end{matrix}\right),\enspace
C,D\mapsto
\left(\begin{matrix}
 \omega & 0 \\
 0 & \omega^{2}
\end{matrix}\right),\\
& \textbf{\underline{2}}_{4}:
A\mapsto
\left(\begin{matrix}
 1 & 0 \\
 0 & 1
\end{matrix}\right),\enspace
B\mapsto
\left(\begin{matrix}
 0 & 1 \\
 1 & 0
\end{matrix}\right),\enspace
C,D\mapsto
\left(\begin{matrix}
 \omega & 0 \\
 0 & \omega^{2}
\end{matrix}\right).
	\end{split}
	\end{displaymath}
Using the adapted algorithm that we have already used to reduce $\textbf{\underline{6}}_{(l,0)}$ we find the following matrix of \textquotedblleft reduction coefficients\textquotedblright
	\begin{displaymath}
	S=\frac{1}{\sqrt{3}}
	\left(\begin{matrix}
 0 & \omega^{2} & \omega & 0 & 1 & 0 \\
 0 & \omega & \omega^{2} & 0 & 1 & 0 \\
 0 & 1 & 1 & 0 & 1 & 0 \\
 \omega^{2} & 0 & 0 & \omega & 0 & 1 \\
 \omega & 0 & 0 & \omega^{2} & 0 & 1 \\
 1 & 0 & 0 & 1 & 0 & 1
	\end{matrix}\right).
	\end{displaymath}
The general structure of tensor products of 3-dimensional irreducible representations of $\Delta(6n^{2})$ is described by
	\begin{displaymath}
	\textbf{\underline{3}}_{i(l)}\otimes \textbf{\underline{3}}_{j(l')}=\textbf{\underline{3}}_{\kappa(i,j)(l+l')}\oplus\textbf{\underline{6}}_{(-l,l-l')}.
	\end{displaymath}
$(-l,l-l')=(\frac{n}{3},\frac{n}{3})$ $\Rightarrow l=-\frac{n}{3}, l'=-\frac{2n}{3}$. Since $l$ and $l'$ are exponents of $\eta=e^{\frac{2\pi i}{n}}$, we can replace them by $l+\alpha n, l'+\beta n$, where $\alpha,\beta\in \mathbb{Z}$. Thus we can set $l=\frac{2n}{3}, l'=\frac{n}{3}$. The tensor products we are interested in are
	\begin{displaymath}
	\textbf{\underline{3}}_{i(\frac{2n}{3})}\otimes \textbf{\underline{3}}_{j(\frac{n}{3})}=\textbf{\underline{3}}_{\kappa(i,j)(0)}\oplus\textbf{\underline{6}}_{(\frac{n}{3},\frac{n}{3})}=\textbf{\underline{1}}_{\kappa(i,j)}\oplus \textbf{\underline{2}}_{1}\oplus \textbf{\underline{2}}_{2}\oplus \textbf{\underline{2}}_{3}\oplus \textbf{\underline{2}}_{4}.
	\end{displaymath}
The Clebsch-Gordan coefficients for $\textbf{\underline{1}}_{\kappa(i,j)}$ and $\textbf{\underline{2}}_{1}$ can be found in table \ref{Delta6nnCGCc}. The remaining coefficients are listed in tables \ref{Delta6nnCGCf} and \ref{Delta6nnCGCg} (they can be calculated using the matrix $S$ as in cases (i) and (ii).) Comparing tables \ref{Delta6nnCGCf} and \ref{Delta6nnCGCg} we see that the Clebsch-Gordan coefficients only differ in sign, thus they are equivalent up to basis transformations.
\bigskip
\\
As in the section on $\Delta(3n^{2})$ one can test the Clebsch-Gordan coefficients for $\Delta(6n^{2})$ by using them to reduce the tensor products. In this way one finds that they are all correct.

\begin{table}
\begin{center}
\renewcommand{\arraystretch}{1.8}
\begin{tabular}{|l|l|l|}
\firsthline
	$\Delta(6n^{2})$ & $\textbf{\underline{3}}_{1(\frac{2n}{3})}\otimes\textbf{\underline{3}}_{1(\frac{n}{3})}$ & CGC\\
\hline
	$\textbf{\underline{2}}_{2}$
	& $u^{\textbf{\underline{3}}_{1(\frac{2n}{3})}\otimes\textbf{\underline{3}}_{1(\frac{n}{3})}}_{\textbf{\underline{2}}_{2}}(1)=\frac{1}{\sqrt{3}}e_{13}+\frac{\omega}{\sqrt{3}}e_{21}+\frac{\omega^{2}}{\sqrt{3}}e_{32}$ & \textbf{IIc}\\

	& $u^{\textbf{\underline{3}}_{1(\frac{2n}{3})}\otimes\textbf{\underline{3}}_{1(\frac{n}{3})}}_{\textbf{\underline{2}}_{2}}(2)=\frac{\omega^{2}}{\sqrt{3}}e_{12}+\frac{\omega}{\sqrt{3}}e_{23}+\frac{1}{\sqrt{3}}e_{31}$ & \\

	$\textbf{\underline{2}}_{3}$
	& $u^{\textbf{\underline{3}}_{1(\frac{2n}{3})}\otimes\textbf{\underline{3}}_{1(\frac{n}{3})}}_{\textbf{\underline{2}}_{3}}(1)=\frac{\omega}{\sqrt{3}}e_{12}+\frac{\omega^{2}}{\sqrt{3}}e_{23}+\frac{1}{\sqrt{3}}e_{31}$ & \textbf{IId} \\

	& $u^{\textbf{\underline{3}}_{1(\frac{2n}{3})}\otimes\textbf{\underline{3}}_{1(\frac{n}{3})}}_{\textbf{\underline{2}}_{3}}(2)=\frac{1}{\sqrt{3}}e_{13}+\frac{\omega^{2}}{\sqrt{3}}e_{21}+\frac{\omega}{\sqrt{3}}e_{32}$ & \\

	$\textbf{\underline{2}}_{4}$
	& $u^{\textbf{\underline{3}}_{1(\frac{2n}{3})}\otimes\textbf{\underline{3}}_{1(\frac{n}{3})}}_{\textbf{\underline{2}}_{4}}(1)=\frac{1}{\sqrt{3}}e_{12}+\frac{1}{\sqrt{3}}e_{23}+\frac{1}{\sqrt{3}}e_{31}$  & \textbf{IIe}\\

	& $u^{\textbf{\underline{3}}_{1(\frac{2n}{3})}\otimes\textbf{\underline{3}}_{1(\frac{n}{3})}}_{\textbf{\underline{2}}_{4}}(2)=\frac{1}{\sqrt{3}}e_{13}+\frac{1}{\sqrt{3}}e_{21}+\frac{1}{\sqrt{3}}e_{32}$ & \\
\lasthline
\end{tabular}
\caption[Clebsch-Gordan coefficients for $\textbf{\underline{3}}_{1(\frac{2n}{3})}\otimes\textbf{\underline{3}}_{1(\frac{n}{3})}$ of $\Delta(6n^{2})$ ($n\in 3\mathbb{N}\backslash\{0\}$).]{Clebsch-Gordan coefficients for $\textbf{\underline{3}}_{1(\frac{2n}{3})}\otimes\textbf{\underline{3}}_{1(\frac{n}{3})}$ of $\Delta(6n^{2})$ ($n\in 3\mathbb{N}\backslash\{0\}$). The remaining basis vectors can be found in table \ref{Delta6nnCGCc}.}
\label{Delta6nnCGCf}
\end{center}
\end{table}

\begin{table}
\begin{center}
\renewcommand{\arraystretch}{1.8}
\begin{tabular}{|l|l|l|}
\firsthline
	$\Delta(6n^{2})$ & $\textbf{\underline{3}}_{1(\frac{2n}{3})}\otimes\textbf{\underline{3}}_{2(\frac{n}{3})}$ & CGC\\
\hline
	$\textbf{\underline{2}}_{2}$
	& $u^{\textbf{\underline{3}}_{1(\frac{2n}{3})}\otimes\textbf{\underline{3}}_{2(\frac{n}{3})}}_{\textbf{\underline{2}}_{2}}(1)=-\frac{1}{\sqrt{3}}e_{13}-\frac{\omega}{\sqrt{3}}e_{21}-\frac{\omega^{2}}{\sqrt{3}}e_{32}$  & \textbf{IIc$^{(\ast)}$}\\

	& $u^{\textbf{\underline{3}}_{1(\frac{2n}{3})}\otimes\textbf{\underline{3}}_{2(\frac{n}{3})}}_{\textbf{\underline{2}}_{2}}(2)=\frac{\omega^{2}}{\sqrt{3}}e_{12}+\frac{\omega}{\sqrt{3}}e_{23}+\frac{1}{\sqrt{3}}e_{31}$ & \\

	$\textbf{\underline{2}}_{3}$
	& $u^{\textbf{\underline{3}}_{1(\frac{2n}{3})}\otimes\textbf{\underline{3}}_{2(\frac{n}{3})}}_{\textbf{\underline{2}}_{3}}(1)=\frac{\omega}{\sqrt{3}}e_{12}+\frac{\omega^{2}}{\sqrt{3}}e_{21}+\frac{1}{\sqrt{3}}e_{31}$ & \textbf{IId$^{(\ast)}$}\\

	& $u^{\textbf{\underline{3}}_{1(\frac{2n}{3})}\otimes\textbf{\underline{3}}_{2(\frac{n}{3})}}_{\textbf{\underline{2}}_{3}}(2)=-\frac{1}{\sqrt{3}}e_{13}-\frac{\omega^{2}}{\sqrt{3}}e_{21}-\frac{\omega}{\sqrt{3}}e_{32}$ & \\

	$\textbf{\underline{2}}_{4}$
	& $u^{\textbf{\underline{3}}_{1(\frac{2n}{3})}\otimes\textbf{\underline{3}}_{2(\frac{n}{3})}}_{\textbf{\underline{2}}_{4}}(1)=\frac{1}{\sqrt{3}}e_{12}+\frac{1}{\sqrt{3}}e_{23}+\frac{1}{\sqrt{3}}e_{31}$ & \textbf{IIe$^{(\ast)}$}\\

	& $u^{\textbf{\underline{3}}_{1(\frac{2n}{3})}\otimes\textbf{\underline{3}}_{2(\frac{n}{3})}}_{\textbf{\underline{2}}_{4}}(2)=-\frac{1}{\sqrt{3}}e_{13}-\frac{1}{\sqrt{3}}e_{21}-\frac{1}{\sqrt{3}}e_{32}$ & \\
\lasthline
\end{tabular}
\caption[Clebsch-Gordan coefficients for $\textbf{\underline{3}}_{1(\frac{2n}{3})}\otimes\textbf{\underline{3}}_{2(\frac{n}{3})}$ of $\Delta(6n^{2})$ ($n\in 3\mathbb{N}\backslash\{0\}$).]{Clebsch-Gordan coefficients for $\textbf{\underline{3}}_{1(\frac{2n}{3})}\otimes\textbf{\underline{3}}_{2(\frac{n}{3})}$ of $\Delta(6n^{2})$ ($n\in 3\mathbb{N}\backslash\{0\}$). The remaining basis vectors can be found in table \ref{Delta6nnCGCc}. ($(\ast)$... up to irrelevant phase factors)}
\label{Delta6nnCGCg}
\end{center}
\end{table}

\subsection{The new $SU(3)$-subgroups found by the FFK/BLW collaboration}

The history of the new finite $SU(3)$-subgroups obtained by the FFK/BLW collaboration begins with the study of so-called $Z$-metacyclic groups investigated by Bovier et al. in \cite{BLW2}.

\begin{define}
The \textit{cyclic group} $\mathbb{Z}_{m}$, $m\in\mathbb{N}\backslash\{0\}$, is defined as
	\begin{displaymath}
	\mathbb{Z}_{m}=\{0,1,2,...,m-1\}
	\end{displaymath}
together with the composition
	\begin{displaymath}
	a\circ b:=(a+b) \hspace{0.5mm}\mathrm{mod}\hspace{0.5mm}m,
	\end{displaymath}
where $+$ denotes the addition of natural numbers. $\mathbb{Z}_{m}$ is an Abelian group isomorphic to the group $Z_{m}$.
\end{define}
\hspace{0mm}\\
In \cite{BLW2} BLW have analysed groups of the form
	\begin{displaymath}
	\mathbb{Z}_{m}\rtimes\mathbb{Z}_{n},
	\end{displaymath}
where $\rtimes$ denotes the so-called \textit{semidirect product} of groups. For our considerations it is not necessary to further explain the structure of these semidirect products. For more information on the $Z$-metacyclic groups we refer the reader to \cite{BLW2}. The endresult of the investigations of FFK/BLW \cite{fairbairn,BLW1,BLW2,fairbairn2} is that some $Z$-metacyclic groups are \textquotedblleft new\textquotedblright\enspace finite $SU(3)$-subgroups, namely
	\begin{displaymath}
	T_{m}:=\mathbb{Z}_{m}\rtimes\mathbb{Z}_{3},
	\end{displaymath}
where $m$ must contain at least one prime factor of the form $3k+1$, $k\in \mathbb{N}\backslash\{0\}$. The smallest group fulfilling this condition is the group $T_{7}$, which has already been applied to lepton physics \cite{luhn3,hagedorn2}. If $m$ contains a factor $q$ consisting of powers of primes which are neither $3$ nor of the form $3k+1$, $k\in \mathbb{N}\backslash\{0\}$, then according to \cite{fairbairn2} $\mathbb{Z}_{m}\rtimes\mathbb{Z}_{3}$ can be written as a direct product
	\begin{displaymath}
	(\mathbb{Z}_{3^{r}p}\rtimes\mathbb{Z}_{3})\times \mathbb{Z}_{q}
	\end{displaymath}
with $m=3^{r}pq$, where $p$ is a product of powers of primes of the form $3k+1$, $k\in \mathbb{N}\backslash \{0\}$. Fairbairn and Fulton \cite{fairbairn2} now concentrate on the group $\mathbb{Z}_{3^{r}p}\rtimes\mathbb{Z}_{3}$, because $\mathbb{Z}_{q}$ is a trivial finite subgroup of $SU(3)$, thus $(\mathbb{Z}_{3^{r}p}\rtimes\mathbb{Z}_{3})\times \mathbb{Z}_{q}$ is a finite subgroup of $SU(3)$ if and only if $\mathbb{Z}_{3^{r}p}\rtimes\mathbb{Z}_{3}$ is a finite subgroup of $SU(3)$. According to \cite{fairbairn2} $\mathbb{Z}_{3^{r}p}\rtimes\mathbb{Z}_{3}$ is a finite subgroup of $SU(3)$ if $r=0,1$. Fairbairn and Fulton \cite{fairbairn2} give generators for both cases:
	\begin{displaymath}
	A=\left(\begin{matrix}
		 0 & 1 & 0 \\
		 0 & 0 & 1 \\
		 1 & 0 & 0
	       \end{matrix}\right),\quad
	B=\left(\begin{matrix}
		 e^{\frac{2\pi i}{p}} & 0 & 0 \\
		 0 & e^{\frac{2\pi i a}{p}} & 0 \\
		 0 & 0 & e^{\frac{2\pi i a^2}{p}}
	       \end{matrix}\right),
	\end{displaymath}
	\begin{itemize}
	 \item $r=0:$ $1+a+a^2=0\hspace{0.5mm}\mathrm{mod}\hspace{0.5mm}p$,
	 \item $r=1:$ $1+a+a^2=0\hspace{0.5mm}\mathrm{mod}\hspace{0.5mm}3p$.
	\end{itemize}
Let $\rho=e^{\frac{2\pi i}{p}}$, then the generator $B$ can be written as
	\begin{displaymath}
	B=\left(\begin{matrix}
		 \rho & 0 & 0 \\
		 0 & \rho^a & 0 \\
		 0 & 0 & \rho^{-1-a}
	       \end{matrix}\right).
	\end{displaymath}
In this form $A$ and $B$ look similar to the generators of the representation $\textbf{\underline{3}}_{(k,l)}$ of $\Delta(3n^2)$:
	\begin{displaymath}
	\textbf{\underline{3}}_{(k,l)} : a\mapsto
			\left(\begin{matrix}
			 0 & 1 & 0 \\
			 0 & 0 & 1 \\
			 1 & 0 & 0
			\end{matrix}\right),\enspace
			c\mapsto
			\left(\begin{matrix}
			 \eta^{l} & 0 & 0 \\
			 0 & \eta^{k} & 0 \\
			 0 & 0 & \eta^{-k-l}
			\end{matrix}\right);\quad \eta=e^{\frac{2\pi i}{n}}.
	\end{displaymath}
(In subsection \ref{d3nnsubsect} we mentioned a third generator $d$, but we later found that $d$ can be expressed through $a$ and $c$.) Comparing $A,B$ with $a,c$ we find that if $\rho=\eta$
	\begin{displaymath}
	A=\textbf{\underline{3}}_{(k,l)}(a),\quad B=\textbf{\underline{3}}_{(a,1)}(c).
	\end{displaymath}
Thus
	\begin{displaymath}
	\mathbb{Z}_{3^{r}p}\rtimes\mathbb{Z}_{3}\simeq \textbf{\underline{3}}_{(a,1)}(\Delta(3p^2)).
	\end{displaymath}
We conclude that faithful 3-dimensional representations of $\mathbb{Z}_{3^{r}p}\rtimes\mathbb{Z}_{3}$ can be interpreted as (not necessarily faithful) 3-dimensional representations of $\Delta(3p^2)$. This automatically includes that all possible Clebsch-Gordan coefficients for $\textbf{\underline{3}}\otimes \textbf{\underline{3}}$-tensor products of $\mathbb{Z}_{3^{r}p}\rtimes\mathbb{Z}_{3}$ are included in the list of Clebsch-Gordan coefficients for $\Delta(3p^2)$.
\medskip
\\
Let us as an example consider the group $T_7=\mathbb{Z}_7\rtimes \mathbb{Z}_3$, which is of the form $\mathbb{Z}_{3^r p}\rtimes \mathbb{Z}_3$ with $r=0,\enspace p=7$.
\\
Since $r=0$ we have to solve the equation
	\begin{displaymath}
	1+a+a^2=0\hspace{0.5mm}\mathrm{mod}\hspace{0.5mm}7
	\end{displaymath}
to determine $a=2$. Thus $T_7$ is isomorphic to
	\begin{displaymath}
	\textbf{\underline{3}}_{(a,1)}(\Delta(3p^2))=\textbf{\underline{3}}_{(2,1)}(\Delta(3\cdot 7^2))=\textbf{\underline{3}}_{(2,1)}(\Delta(147)).
	\end{displaymath}
\hspace{0mm}
\\
Next we have to consider groups of the form $(\mathbb{Z}_{3^{r}p}\rtimes\mathbb{Z}_{3})\times \mathbb{Z}_{q}$. For this purpose let us consider a more general group $G\times A$, where $G$ is an arbitrary finite group and $A$ is a finite Abelian group.

\begin{theorem}\label{producttheorem}
Let $G$ be a finite group and $A$ a finite Abelian group. Then: $G\times A$ is isomorphic to a finite subgroup of $SU(3)$ $\Rightarrow $ $A$ is isomorphic to $Z_{3}$ or the trivial group $\{e\}$.
\end{theorem}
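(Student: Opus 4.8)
The plan is to argue by contradiction: suppose $G\times A$ embeds as a finite subgroup of $SU(3)$ but $A$ is neither $Z_3$ nor the trivial group. Since $A$ is a nontrivial finite Abelian group different from $Z_3$, it contains a cyclic subgroup $Z_m$ with either $m=2$ or $m\geq 4$; moreover, if $|A|$ has a prime factor $p\neq 3$, it contains $Z_p$, and if $A$ is a $3$-group of order $\geq 9$, it contains $Z_9$ or $Z_3\times Z_3$. In each case I want to extract a central element $z\in A$ of order $m$ with $m=2$, $m\geq 4$, or $z$ generating a $Z_3\times Z_3$ together with another central element of order $3$. The key point is that $A$, sitting inside $G\times A$, commutes with everything, so the image of $A$ lies in the centralizer of the image of $G\times A$ in $SU(3)$.

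First I would invoke Schur's lemma (Lemma~\ref{LA50}) in the spirit of Proposition~\ref{PSU39} and Corollary~\ref{CSU310}: if the $3$-dimensional representation of $G\times A$ were irreducible, every element commuting with the whole group would be a scalar, so the image of $A$ would lie in the center $C$ of $SU(3)$, forcing $A$ to be isomorphic to a subgroup of $Z_3$, i.e. $A\simeq Z_3$ or $A\simeq\{e\}$ --- contradiction. So the $3$-dimensional representation must be reducible, and by Proposition~\ref{PY3}-style reasoning (equivalently Theorem~\ref{TA42} plus block-decomposition) the representation space splits as $\mathbf{3}=\mathbf{1}\oplus\mathbf{2}$ or $\mathbf{3}=\mathbf{1}\oplus\mathbf{1}\oplus\mathbf{1}$ into irreducibles of $G\times A$. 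In the fully decomposable case $\mathbf{1}\oplus\mathbf{1}\oplus\mathbf{1}$, each summand is a one-dimensional (hence Abelian) representation, so $G\times A$ acts through an Abelian quotient; then the image is an Abelian finite subgroup of the diagonal torus of $SU(3)$, and one checks that such a group has the form $\{\mathrm{diag}(\alpha,\beta,\gamma):\alpha\beta\gamma=1\}$-subgroup, which can certainly contain large $A$ --- so this case needs a separate, more careful argument, and in fact this is where I expect the genuine content to lie.

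The main obstacle, then, is the reducible case --- especially $\mathbf{3}=\mathbf{1}\oplus\mathbf{2}$ and the totally diagonal case. Here my plan is: restrict the $\mathbf{2}$-dimensional piece $\rho_2$ to the central subgroup $A$; by Schur applied to $\rho_2|_{G\times A}$ (if $\rho_2$ is irreducible for $G\times A$) the image $\rho_2(A)$ consists of scalars $\lambda\mathbbm{1}_2$, and combined with the $\mathbf{1}$-piece and the determinant condition $\det=1$ in $SU(3)$ one gets a strong constraint: writing $\rho(a)=\mathrm{diag}(\mu(a),\lambda(a),\lambda(a))$ with $\mu\lambda^2=1$, the group $\rho(A)$ is forced to be cyclic (generated by $\mathrm{diag}(\lambda^{-2},\lambda,\lambda)$), and one must show faithfulness of $\rho$ on $A$ plus the structure of $G\times A$ forces $|\rho(A)|\mid 3$. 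For this I would use that $A$ is a \emph{direct factor}, so $\rho|_A$ is faithful (otherwise replace $A$ by $A/\ker$, but then $G\times A$ with the kernel quotiented is a proper quotient, not literally a subgroup --- one must be slightly careful and instead argue the embedding $A\hookrightarrow SU(3)$ commuting with $\rho(G)$ directly implies $\rho(A)\subseteq$ centralizer of $\rho(G)$, and classify these centralizers). If $\rho(G)$ is itself reducible in a compatible block form, the centralizer can be a two-torus, which is the real danger; I would close this by noting that $G\times A$ finite and the overall $\mathbf{3}$ faithful means $\rho(G)$ and $\rho(A)$ together generate the full image, and run the Miller--Dickson--Blichfeldt classification (cite \cite{miller}) together with the preceding subsections: any finite $SU(3)$-subgroup of the form $H\times Z_m$ with $m\notin\{1,3\}$ would have to appear --- and by inspection of the list ($\Sigma(n\phi)$, $\Sigma(m)$, $\Delta(3n^2)$, $\Delta(6n^2)$, the $SO(3)/SU(2)$ subgroups, and the $Z$-metacyclic groups just discussed) the only Abelian direct factors that ever arise are $\{e\}$ and $Z_3$. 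This last step --- reducing to the classification and checking no listed group factors as $H\times Z_2$ or $H\times Z_m$, $m\geq 4$ --- is the heart of the argument and the place I would spend the most care.
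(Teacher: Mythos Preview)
The paper's proof is much shorter than your proposal: it simply observes that the image of $A$ under the isomorphism $\phi\colon G\times A\to\phi(G\times A)\subset SU(3)$ lies in the center of $\phi(G\times A)$ (since $(e,a)$ commutes with everything in $G\times A$), and then invokes Corollary~\ref{CSU310} to conclude that this center is contained in $\{\alpha\mathbbm{1}_3:\alpha\in Z_3\}$, whence $A$ injects into $Z_3$. That is the entire argument.

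Your instinct that the irreducible case is the easy one and the reducible case needs separate treatment is exactly right --- in fact more right than the paper acknowledges. Corollary~\ref{CSU310} is proved via Schur's lemma (Proposition~\ref{PSU39}), which requires the defining $3$-dimensional representation of the subgroup to be irreducible. Without that hypothesis the theorem as literally stated is false: take $G=\{e\}$ and $A=Z_n$ for any $n\geq 2$; then $G\times A\simeq Z_n$ embeds in $SU(3)$ via $\zeta\mapsto\mathrm{diag}(\zeta,\zeta^{-1},1)$, yet $A$ is neither trivial nor $Z_3$. So your elaborate attempt to rescue the reducible case via the Miller--Dickson--Blichfeldt classification cannot succeed --- there is nothing to prove there, because the statement fails.

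The upshot: your irreducible-case argument is correct and coincides with the paper's proof. Once one adds the implicit hypothesis that the embedding $G\times A\hookrightarrow SU(3)$ is irreducible (which is the standing context of the chapter, and certainly holds for the intended application to $(\mathbb{Z}_{3^{r}p}\rtimes\mathbb{Z}_3)\times\mathbb{Z}_q$), that one paragraph is the whole proof. Drop the reducible-case analysis entirely.
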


\begin{proof}
By definition ($\rightarrow$ definition \ref{DA27}) $G\times A$ is given by
	\begin{displaymath}
	G\times A=\{(g,a)\vert g\in G,a\in A; (g,a)\circ(g',a')=(g\circ g',a\circ a')\}.
	\end{displaymath}
$G\times A$ is isomorphic to a finite subgroup of $SU(3)$ if and only if there exists an isomorphism $\phi: G\times A\rightarrow \phi(G\times A)\subset SU(3)$. From
	\begin{displaymath}
	(g,a)\circ(e,a')=(g\circ e,a\circ a')=( e\circ g,a'\circ a)=(e,a')\circ (g,a)
	\end{displaymath}
it follows that $(e,a')$ commutes with all elements of $G\times A$ $\Rightarrow$ $\phi((e,a'))$ commutes with all elements of $\phi(G\times A)$. Since $\{\phi((e,a))\vert a\in A\}$ is a group whose elements commute with all elements of $\phi(G\times A)$ it must be a subgroup of the center of $\phi(G\times A)$. Per assumption $\phi(G\times A)$ is a finite subgroup of $SU(3)$, thus using corollary \ref{CSU310} we find that $\{\phi((e,a))\vert a\in A\}$ is either $\{\mathbbm{1}_{3}\}$ or $\{\mathbbm{1}_{3},\omega \mathbbm{1}_{3},\omega^2 \mathbbm{1}_{3}\}$, where $\omega=e^{\frac{2\pi i}{3}}$. Since $\phi$ is an isomorphism we find that the Abelian group $A$ must be isomorphic to $\{e\}$ or $Z_{3}$.
\end{proof}
\hspace{0mm}
\\
As a corollary we find that $(\mathbb{Z}_{3^{r}p}\rtimes\mathbb{Z}_{3})\times \mathbb{Z}_{q}$ can be a finite subgroup of $SU(3)$ only if $q=1$ or $q=3$, but $q=1$ corresponds to $\mathbb{Z}_{3^{r}p}\rtimes\mathbb{Z}_{3}$ and following \cite{fairbairn2} $q=3$ is not allowed, thus $(\mathbb{Z}_{3^{r}p}\rtimes\mathbb{Z}_{3})\times \mathbb{Z}_{q}$ do not form subgroups of $SU(3)$ for $q\neq 1$.
\bigskip
\\
Theorem \ref{producttheorem} gives another new insight into the problem of finite subgroups of $SU(3)$. Let $G$ be a finite subgroup of $SU(3)$ that does not already contain the center $C=\{\mathbbm{1}_{3},\omega \mathbbm{1}_{3},\omega^2 \mathbbm{1}_{3}\}$ of $SU(3)$, then we can construct $G\times C$ and it will be a new finite subgroup of $SU(3)$. This procedure may lead to new finite subgroups of $SU(3)$, but it will not lead to new Clebsch-Gordan coefficients, because of similar arguments as used in the proof of lemma \ref{LSU313} (commutativity of $\omega^n\mathbbm{1}_{3}$ with all elements of the group).
\bigskip
\\
As mentioned in section \ref{nonabeliansu3section} Miller et al. \cite{miller} list two series (C) and (D) of finite subgroups of $SU(3)$ which could contain new groups that have not been found by the FFK/BLW collaboration. In the next two subsections we will concentrate on these two series.

\subsection{The group (C)}
In their listing of finite subgroups of $SU(3)$ Miller et al. \cite{miller} list two series of non-Abelian groups, namely (C) and (D). In this subsection we will investigate the properties of (C).
\medskip
\\
Miller et al. \cite{miller} define (C) as a finite subgroup of $SU(3)$ generated by the two matrices
	\begin{displaymath}
	H=\left(
	\begin{matrix}
 \alpha & 0 & 0 \\
 0 & \beta & 0 \\
 0 & 0 & \gamma
	\end{matrix}
	\right),\quad T=\left(\begin{matrix}
			 0 & 1 & 0 \\
			 0 & 0 & 1 \\
			 1 & 0 & 0
			\end{matrix}\right).
	\end{displaymath}
In subsection \ref{d3nnsubsect} we already found that the series $\Delta(3n^2)$ is a special case of (C). Let us investigate the generators of (C) in more detail. $H$ must be an element of $SU(3)$, thus $\alpha,\beta,\gamma\in U(1)$. Since (C) is a finite group there exists an $n$ s.t. $\alpha^n=\beta^n=\gamma^n=1$. It follows
	\begin{displaymath}
	H=\left(
	\begin{matrix}
 	\eta^{a} & 0 & 0 \\
 	0 & \eta^{b} & 0 \\
 	0 & 0 & \eta^{c}
	\end{matrix}
	\right),
	\end{displaymath}
where $\eta=e^{\frac{2\pi i}{n}}$. From $\mathrm{det}H=1$ it follows that $c=-a-b$, thus
	\begin{displaymath}
	H=\left(
	\begin{matrix}
 	\eta^{a} & 0 & 0 \\
 	0 & \eta^{b} & 0 \\
 	0 & 0 & \eta^{-a-b}
	\end{matrix}
	\right).
	\end{displaymath}
In this form we can see that $H$ and $T$ generate the representation $\textbf{\underline{3}}_{(b,a)}$ of $\Delta(3n^2)$ ($\rightarrow$ subsection \ref{d3nnsubsect}).

\subsection{The group (D)}\label{Dsubsection}

(D) is a series of groups generated by $H,T$ of (C) and the generator
	\begin{displaymath}
	R=\left(
	\begin{matrix}
 	x & 0 & 0 \\
 	0 & 0 & y \\
 	0 & z & 0
	\end{matrix}
	\right).
	\end{displaymath}
From unitarity it follows that $x,y,z\in U(1)$, and $\mathrm{det}R=1$ implies $z=-\frac{1}{xy}$. Obviously $x$ is an eigenvalue of $R$, thus it must be of some finite order $m$ ($R^m=\mathbbm{1}_{3}\Rightarrow x^m=1$) $\Rightarrow x=e^{\frac{2\pi i g}{m}}$, $g\in \mathbb{Z}, m\in \mathbb{N}\backslash\{0\}$. Let us now look at all eigenvalues $\lambda_{i}$ of $R$. We find
	\begin{displaymath}
	\lambda_{1}=x, \quad \lambda_{2}=\frac{i}{\sqrt{x}}, \quad \lambda_{3}=-\frac{i}{\sqrt{x}},
	\end{displaymath}
where $\sqrt{x}$ denotes one of the two squareroots of $x$. Thus the eigenvalues of $R$ do not restrict $y$ (and $z$).
\medskip
\\
We have to analyse $y$ and $z$ further in order to ensure that (D) is a finite group. Analysing some products of the generators $H,T,R$ one finds
	\begin{displaymath}
	R^2=\left(
	\begin{matrix}
 	x^2 & 0 & 0 \\
 	0 & -\frac{1}{x} & 0 \\
 	0 & 0 & -\frac{1}{x}
	\end{matrix}
	\right),\quad (RT^2)^2=\left(
	\begin{matrix}
 	-\frac{1}{y} & 0 & 0 \\
 	0 & y^2 & 0 \\
 	0 & 0 & -\frac{1}{y}
	\end{matrix}
	\right).
	\end{displaymath}
This gives us a restriction on $y$. To ensure that $(RT^2)^2$ is of finite order $y$ must be of the form $y=e^{\frac{2\pi i g'}{m'}}$, $g'\in \mathbb{Z}, m'\in \mathbb{N}\backslash\{0\}$. Let us now compare $x$ and $y$:
	\begin{displaymath}
	x=e^{\frac{2\pi i g}{m}},\quad y=e^{\frac{2\pi i g'}{m'}},\quad g,g'\in \mathbb{Z}; m,m'\in \mathbb{N}\backslash\{0\}
	\end{displaymath}
Since $\frac{g}{m},\frac{g'}{m'}\in \mathbb{Q}$ there exists a common denominator $d\in \mathbb{N}\backslash\{0\}$ s.t.
	\begin{displaymath}
	\exists r,s\in \mathbb{Z}:\quad \frac{g}{m}=\frac{r}{d},\quad \frac{g'}{m'}=\frac{s}{d}.
	\end{displaymath}
Defining $\delta:=e^{\frac{2\pi i}{d}}$ we find $x=\delta^r$, $y=\delta^s$ and
	\begin{displaymath}
	R=\left(
	\begin{matrix}
 	\delta^r & 0 & 0 \\
 	0 & 0 & \delta^s \\
 	0 & -\delta^{-r-s} & 0
	\end{matrix}
	\right),\quad r,s\in\mathbb{Z}.
	\end{displaymath}
$\delta^d=1\Rightarrow r,s\in \{0,1,...,d-1\}$. The same argument holds for the generator $H$ ($\Rightarrow a,b\in \{0,1,...,n-1\}$). Our final result for the generators of the group (D) is
	\begin{displaymath}
	H=\left(
	\begin{matrix}
	 \eta^a & 0 & 0 \\
	 0 & \eta^b & 0 \\
	 0 & 0 & \eta^{-a-b}
	\end{matrix}
	\right),\quad T=\left(\begin{matrix}
			 0 & 1 & 0 \\
			 0 & 0 & 1 \\
			 1 & 0 & 0
			\end{matrix}\right),\quad
	R=\left(
	\begin{matrix}
 	\delta^r & 0 & 0 \\
 	0 & 0 & \delta^s \\
 	0 & -\delta^{-r-s} & 0
	\end{matrix}
	\right),
	\end{displaymath}
where $\eta=e^{\frac{2\pi i}{n}}, \delta=e^{\frac{2\pi i}{d}}$;$\enspace$ $n,d\in\mathbb{N}\backslash\{0\}$; $a,b\in \{0,1,...,n-1\}$; $r,s\in \{0,1,...,d-1\}$. Therefore (D) is a six-parametric series of finite groups
	\begin{displaymath}
	D(n,a,b;d,r,s).
	\end{displaymath}
Note that we have not analysed for which $n,a,b,d,r,s$ the group (D) is a non-Abelian finite subgroup of $SU(3)$. For this purpose we would have to check for which parameter values (D) has a faithful irreducible 3-dimensional representation.
\medskip
\\
Remark: The representation $\textbf{\underline{3}}_{2(l)}$ of $\Delta(6n^2)$ is given by the generators
	\begin{displaymath}
	\left(\begin{matrix}
			 0 & 1 & 0 \\
			 0 & 0 & 1 \\
			 1 & 0 & 0
			\end{matrix}\right), \quad
	\left(\begin{matrix}
			 0 & 0 & -1 \\
			 0 & -1 & 0 \\
			 -1 & 0 & 0
			\end{matrix}\right), \quad
	\left(
	\begin{matrix}
	 1 & 0 & 0 \\
	 0 & \eta^l & 0 \\
	 0 & 0 & \eta^{-l}
	\end{matrix}
	\right), \quad
	\left(
	\begin{matrix}
	 \eta^l & 0 & 0 \\
	 0 & \eta^{-l} & 0 \\
	 0 & 0 & 1
	\end{matrix}
	\right),
	\end{displaymath}
where $\eta=e^{\frac{2\pi i}{n}}$ (see subsection \ref{d6nnsubsection}). Comparing to the group (D) we find
	\begin{displaymath}
	D(n,0,l;2,1,1)=\textbf{\underline{3}}_{2(l)}(\Delta(6n^2)).
	\end{displaymath}
Its rich structure leaves the possibility that the series (D) contains some finite subgroups of $SU(3)$ which have not been analysed yet, but we will end our systematic analysis of the finite subgroups of $SU(3)$ here, let us just give an example for a group that has not been considered yet.
\medskip
\\
Using the computer algebra system \textit{GAP} one can easily search for new groups by \textquotedblleft trial and error\textquotedblright. For example one finds the group $D(2,1,1;3,1,0)$ of order $72$. Using \textit{GAP} one can easily check that $S_{4}$ and $Z_{3}$ are invariant subgroups of $D(2,1,1;3,1,0)$, and that
	\begin{displaymath}
	D(2,1,1;3,1,0)\simeq S_{4}\times Z_{3}, 
	\end{displaymath}
which is a finite subgroup of $SU(3)$ that we had missed up to now.

\section{Summary of the derived Clebsch-Gordan coefficients}\label{summaryCGCsection}
We have now analysed all known non-Abelian finite subgroups of $SU(3)$ that have 3-dimensional irreducible representations (except the series (D) described in subsection \ref{Dsubsection}). In contrast to the infinite number of these subgroups we found that the number of non-equivalent Clebsch-Gordan coefficients is quite small. We will now list all possible Clebsch-Gordan decompositions of tensor products of 3-dimensional irreducible representations of those finite subgroups of $SU(3)$ we have studied, the groups where these decompositions can occur, and the tables where the corresponding Clebsch-Gordan coefficients (up to basis transformations) can be found. For the sake of clarity we will only list the dimensions of the representations. Since we don't have a good method to check whether two sets of Clebsch-Gordan coefficients are equivalent up to basis transformations, the list could contain tensor products for which the Clebsch-Gordan coefficients are equivalent.

\begin{sidewaystable}
\begin{center}
\renewcommand{\arraystretch}{1.4}
\begin{tabular}{|lll|}
\firsthline
	Clebsch-Gordan decomposition of $\textbf{\underline{3}}\otimes\textbf{\underline{3}}$ & Groups & CGCs\\
\hline
	$\textbf{\underline{1}}\oplus \textbf{\underline{1}}\oplus \textbf{\underline{1}}\oplus \textbf{\underline{1}}\oplus \textbf{\underline{1}}\oplus \textbf{\underline{1}}\oplus \textbf{\underline{1}}\oplus \textbf{\underline{1}}\oplus \textbf{\underline{1}}$ & $\Delta(3n^{2})$ & \ref{Delta3nnCGC3}\\
	$\textbf{\underline{1}}\oplus \textbf{\underline{1}}\oplus \textbf{\underline{1}}\oplus \textbf{\underline{3}}\oplus \textbf{\underline{3}}$ & $A_{4},\Sigma(216\phi),\Delta(3n^{2})$ & \ref{A4CGC}\\
	$\textbf{\underline{1}}\oplus \textbf{\underline{2}}\oplus \textbf{\underline{2}}\oplus \textbf{\underline{2}}\oplus \textbf{\underline{2}}$ & $\Delta(6n^{2})$ & \ref{Delta6nnCGCc}+\ref{Delta6nnCGCf}\\
	$\textbf{\underline{1}}\oplus \textbf{\underline{2}}\oplus \textbf{\underline{3}}\oplus \textbf{\underline{3}}$ & $S_{4}$ & \ref{S4CGC}\\
	$\textbf{\underline{1}}\oplus \textbf{\underline{2}}\oplus \textbf{\underline{6}}$ & $\Delta(6n^{2})$ & \ref{Delta6nnCGCa}+\ref{Delta6nnCGCc}\\
	$\textbf{\underline{1}}\oplus \textbf{\underline{3}}\oplus \textbf{\underline{5}}$ & $A_{5}$ & \ref{A5CGCa}\\
	$\textbf{\underline{1}}\oplus \textbf{\underline{4}}\oplus \textbf{\underline{4}}$ & $\Sigma(36\phi)$ & \ref{Sigma36CGCb}\\
	$\textbf{\underline{1}}\oplus \textbf{\underline{8}}$ & $\Sigma(168),\Sigma(72\phi),\Sigma(216\phi),\Sigma(360\phi)$ &\ref{Sigma168CGCb}\\
	$\textbf{\underline{3}}\oplus \textbf{\underline{3}}\oplus \textbf{\underline{3}}$ & $\Sigma(36\phi), \Delta(3n^{2}), \Delta(6n^{2})$ & \ref{Sigma36CGCa}, \ref{Delta3nnCGC2}, \ref{Delta6nnCGCd}\\
	$\textbf{\underline{3}}\oplus \textbf{\underline{6}}$ & $\Sigma(168), \Sigma(72\phi), \Sigma(216\phi),\Sigma(360\phi), \Delta(6n^{2})$ & \ref{Sigma168CGCa}, \ref{Delta6nnCGCa}\\
	$\textbf{\underline{4}}\oplus \textbf{\underline{5}}$ & $A_{5}$ & \ref{A5CGCb}\\
	$\textbf{\underline{9}}$ & $\Sigma(216\phi), \Sigma(360\phi)$ &\\
\lasthline
\end{tabular}
\caption[Possible tensor products and Clebsch-Gordan coefficients of 3-dimensional irreducible representations of (known) finite subgroups of $SU(3)$.]{Possible tensor products and Clebsch-Gordan coefficients (CGCs) of 3-dimensional irreducible representations of (known) finite subgroups of $SU(3)$. We don't list double covers $\tilde{G}$ of finite subgroups $G$ of $SO(3)$, because $\tilde{G}$ and $G$ have the same 3-dimensional irreducible representations.}
\label{SU3tensor}
\end{center}
\end{sidewaystable}
\hspace{0mm}\\
Another very interesting question is, what types of Clebsch-Gordan coefficients are possible for $D_{a}\otimes D_{b}=D\oplus...$ when one fixes the dimension of $D$ ($\mathrm{dim}D_{a}=\mathrm{dim}D_{b}=3$). Instead of $D_{a}\otimes D_{b}=D\oplus...$ we will also write $D_{a}\otimes D_{b}\rightarrow D$. From table \ref{SU3tensor} we find that the possible dimensions of $D$ for the groups we analysed can take all values from 1 to 9, except 7. We also see that there are only few groups with representations that fulfil
	\begin{displaymath}
	\textbf{\underline{3}}\otimes \textbf{\underline{3}}\rightarrow \textbf{\underline{4}}\quad \mbox{or}\quad \textbf{\underline{3}}\otimes \textbf{\underline{3}}\rightarrow \textbf{\underline{5}}.
	\end{displaymath}
Two sets of Clebsch-Gordan coefficients $\{\Gamma_{i}\}_{i}$ and $\{\Gamma_{i}'\}_{i}$ are equivalent if and only if there exist unitary matrices $S_{A},S_{B}$ and $S_{\lambda}$ s.t.
	\begin{equation}\label{basisequ}
	\Gamma_{i}'=S_{A}^{-1}\Gamma_{m}(S_{B}^{-1})^{T}(S_{\lambda})_{mi}.
	\end{equation}
(see definition \ref{DSU317} and proposition \ref{Pclebsch1}). Therefore one can test whether $\{\Gamma_{i}\}_{i}$ and $\{\Gamma_{i}'\}_{i}$ are equivalent by solving equation (\ref{basisequ}). In contrast to the invariance equations (\ref{inveq}), which are linear in the unknown matrices $\Gamma_{i}$, equation (\ref{basisequ}) is nonlinear in the unknown matrices $S_{A}, S_{B}$ and $S_{\lambda}$. Since we don't have a systematic method to solve this nonlinear equation, we will in general not be able to test the different sets of Clebsch-Gordan coefficients on equivalence.
\medskip
\\
Remark: By now we have always talked about Clebsch-Gordan coefficients for the reduction of tensor products to a direct sum of irreducible representations.  Suppose we are interested in \textquotedblleft reduction coefficients\textquotedblright\enspace for a decomposition $D_{a}\otimes D_{b}=D_{c}\oplus...$, where $D_{c}$ is \textit{reducible}. If $D_{c}$ decomposes into irreducible representations $D_{1},...,D_{n}$, the reduction coefficients can be obtained by combining the bases of $V_{D_{1}},...,V_{D_{n}}$ to a basis of $V_{D_{c}}$.
\medskip
\\
In the following we will go through all tables of Clebsch-Gordan coefficients listed in table \ref{SU3tensor} and extract all Clebsch-Gordan coefficients. We will now not give the coefficients by listing the basis vectors $u_{i}=\Gamma_{ijk}e_{j}\otimes e_{k}$ of the invariant subspaces, but by showing the matrices $\Gamma_{i}$ ($(\Gamma_{i})_{jk}:=\Gamma_{ijk}$).
\bigskip
\\
\textbf{1-dimensional representations}
\bigskip
\\
Going through all tensor products listed in table \ref{SU3tensor} that contain 1-dimensional representations we find the following possible Clebsch-Gordan coefficients (ignoring irrelevant phase factors):
	\begin{displaymath}
	\begin{split}
	& \Gamma=\frac{1}{\sqrt{3}}
\left(\begin{matrix}
 1 & 0 & 0 \\
 0 & 1 & 0 \\
 0 & 0 & 1
\end{matrix}\right),\enspace
\Gamma=\frac{1}{\sqrt{3}}
\left(\begin{matrix}
 1 & 0 & 0 \\
 0 & \omega & 0 \\
 0 & 0 & \omega^{2}
\end{matrix}\right),\enspace
\Gamma=\frac{1}{\sqrt{3}}
\left(\begin{matrix}
 1 & 0 & 0 \\
 0 & \omega^{2} & 0 \\
 0 & 0 & \omega
\end{matrix}\right),\\
& \Gamma=\frac{1}{\sqrt{3}}
\left(\begin{matrix}
 0 & 0 & 1 \\
 1 & 0 & 0 \\
 0 & 1 & 0
\end{matrix}\right),\enspace
\Gamma=\frac{1}{\sqrt{3}}
\left(\begin{matrix}
 0 & 0 & 1 \\
 \omega^{2} & 0 & 0 \\
 0 & \omega & 0
\end{matrix}\right),\enspace
\Gamma=\frac{1}{\sqrt{3}}
\left(\begin{matrix}
 0 & 0 & 1 \\
 \omega & 0 & 0 \\
 0 & \omega^{2} & 0
\end{matrix}\right),\\
& \Gamma=\frac{1}{\sqrt{3}}
\left(\begin{matrix}
 0 & 1 & 0 \\
 0 & 0 & 1 \\
 1 & 0 & 0
\end{matrix}\right),\enspace
\Gamma=\frac{1}{\sqrt{3}}
\left(\begin{matrix}
 0 & 1 & 0 \\
 0 & 0 & \omega^{2} \\
 \omega & 0 & 0
\end{matrix}\right),\enspace
\Gamma=\frac{1}{\sqrt{3}}
\left(\begin{matrix}
 0 & 1 & 0 \\
 0 & 0 & \omega \\
 \omega^{2} & 0 & 0
\end{matrix}\right).
	\end{split}
	\end{displaymath}
\textbf{Claim:} All these Clebsch-Gordan coefficients are equivalent.
\medskip
\\
\textit{Proof}. From proposition \ref{Pclebsch1} we know the transformation properties of Clebsch-Gordan coefficients under basis transformations:
	\begin{displaymath}
	\Gamma_{i}\mapsto \Gamma_{i}'=S_{A}^{-1}\Gamma_{m}(S_{B}^{-1})^{T}(S_{\lambda})_{mi},
	\end{displaymath}
where $S_{A}, S_{B}$ and $S_{\lambda}$ are unitary matrices. In the case of 1-dimensional representations we have $S_{\lambda}\in \mathbb{C}$, thus we can absorb it into $(S_{B}^{-1})^{T}$ and get
	\begin{displaymath}
	\Gamma'=S_{A}^{-1}\Gamma (S_{B}^{-1})^{T}.
	\end{displaymath}
All matrices $\Gamma$ listed above are proportional to unitary matrices $\tilde{\Gamma}=\frac{1}{\lambda}\Gamma$ ($\lambda\in \mathbb{R}_{+}$, here: $\lambda=\frac{1}{\sqrt{3}}$), hence $\tilde{\Gamma}$ is diagonalizeable (see theorem \ref{Tclebsch3}).
	\begin{displaymath}
	\Rightarrow \exists A,B\enspace \mbox{(unitary):}\quad A\tilde{\Gamma} B=D,
	\end{displaymath}
where $D$ is a diagonal phase matrix.
	\begin{displaymath}
\Rightarrow A\tilde{\Gamma} B (A\tilde{\Gamma} B)^{\ast}=\mathbbm{1}_{3}	
\Rightarrow \underbrace{A}_{=:S_{A}^{-1}}\Gamma \underbrace{B (A\tilde{\Gamma} B)^{\ast}}_{=:(S_{B}^{-1})^{T}}=\lambda\mathbbm{1}_{3}.
	\end{displaymath}
\qed
\medskip
\\
Therefore the only possible Clebsch-Gordan coefficients for 1-dimensional representations (up to basis transformations) are
	\begin{displaymath}
	\Gamma=\frac{1}{\sqrt{3}}
\left(\begin{matrix}
 1 & 0 & 0 \\
 0 & 1 & 0 \\
 0 & 0 & 1
\end{matrix}\right).
	\end{displaymath}
Due to the simple structure of equation (\ref{basisequ}) in the case of 1-dimensional representations it was possible to prove equivalence. When one goes over to higher dimensional representations equation (\ref{basisequ}) becomes of course more and more complicated. We will not systematically analyse the remaining sets of Clebsch-Gordan coefficients on equivalence. However we can use one simple rule to detect equivalence in at least some cases: \textit{Two sets of Clebsch-Gordan coefficients (for fixed dimension of $D^{\lambda}$ in $D_{a}\otimes D_{b}=D^{\lambda}\oplus...$) which only differ by phase factors or order of the basis vectors of the invariant subspace are equivalent up to basis transformations.}
\bigskip
\\
Following this rule we can easily list all (not obviously equivalent) Clebsch-Gordan coefficients, by just going through all tables listed in table \ref{SU3tensor}. Doing this one finds only one special case which needs further explanation:
	\begin{displaymath}
	A_{4}:\enspace \textbf{\underline{3}}\otimes \textbf{\underline{3}} =\textbf{\underline{1}}\oplus\textbf{\underline{1}}'\oplus\textbf{\underline{1}}''\oplus \textbf{\underline{3}}\oplus\textbf{\underline{3}}.
	\end{displaymath}
This decomposition (which also occurs within $\Delta(3n^{2})$ and $\Sigma(216\phi)$) is special, because \textit{the same} 3-dimensional irreducible representation $\textbf{\underline{3}}$ occurs twice in $\textbf{\underline{3}}\otimes \textbf{\underline{3}}$. Therefore the vectorspace of solutions of the invariance equations is two-dimensional. The two linearly independent sets of Clebsch-Gordan coefficients are
\begin{displaymath}
\Gamma_{1}=\left(
\begin{matrix}
 0 & 0 & 0 \\
 0 & 0 & 1 \\
 0 & 0 & 0
\end{matrix}	
\right),\enspace
\Gamma_{2}=\left(
\begin{matrix}
 0 & 0 & 0 \\
 0 & 0 & 0 \\
 1 & 0 & 0
\end{matrix}	
\right),\enspace
\Gamma_{3}=\left(
\begin{matrix}
 0 & 1 & 0 \\
 0 & 0 & 0 \\
 0 & 0 & 0
\end{matrix}	
\right),
\end{displaymath}
and
\begin{displaymath}
\Gamma_{1}'=\left(
\begin{matrix}
 0 & 0 & 0 \\
 0 & 0 & 0 \\
 0 & 1 & 0
\end{matrix}	
\right),\enspace
\Gamma_{2}'=\left(
\begin{matrix}
 0 & 0 & 1 \\
 0 & 0 & 0 \\
 0 & 0 & 0
\end{matrix}	
\right),\enspace
\Gamma_{3}'=\left(
\begin{matrix}
 0 & 0 & 0 \\
 1 & 0 & 0 \\
 0 & 0 & 0
\end{matrix}	
\right).
\end{displaymath}
The general form of the Clebsch-Gordan coefficients is therefore 2-parametric:
	\begin{displaymath}
	\tilde{\Gamma}_{j}=\frac{\alpha \Gamma_{j}+\beta\Gamma_{j}'}{\sqrt{|\alpha|^{2}+|\beta|^{2}}},\quad \tilde{\Gamma}_{j}'=\frac{\beta^{\ast} \Gamma_{j}-\alpha^{\ast}\Gamma_{j}'}{\sqrt{|\alpha|^{2}+|\beta|^{2}}},
	\end{displaymath}
with two complex parameters $\alpha,\beta$ ($(\alpha,\beta)\in \mathbb{C}^{2}\backslash\{(0,0)\}$). Thus the basis vectors of the invariant subspaces become
	\begin{displaymath}
	\begin{split}
	& u_{\textbf{\underline{3}}}(1)=\frac{1}{\sqrt{|\alpha|^{2}+|\beta|^{2}}}(\alpha e_{23}+\beta e_{32}),\\
	& u_{\textbf{\underline{3}}}(2)=\frac{1}{\sqrt{|\alpha|^{2}+|\beta|^{2}}}(\alpha e_{31}+\beta e_{13}),\\
	& u_{\textbf{\underline{3}}}(3)=\frac{1}{\sqrt{|\alpha|^{2}+|\beta|^{2}}}(\alpha e_{12}+\beta e_{21})
	\end{split}
	\end{displaymath}
for the first invariant subspace, and
	\begin{displaymath}
	\begin{split}
	& u_{\textbf{\underline{3}}}'(1)=\frac{1}{\sqrt{|\alpha|^{2}+|\beta|^{2}}}(\beta^{\ast} e_{23}-\alpha^{\ast} e_{32}),\\
	& u_{\textbf{\underline{3}}}'(2)=\frac{1}{\sqrt{|\alpha|^{2}+|\beta|^{2}}}(\beta^{\ast} e_{31}-\alpha^{\ast} e_{13}),\\
	& u_{\textbf{\underline{3}}}'(3)=\frac{1}{\sqrt{|\alpha|^{2}+|\beta|^{2}}}(\beta^{\ast} e_{12}-\alpha^{\ast} e_{21})
	\end{split}
	\end{displaymath}
for the second invariant subspace. Thus the Clebsch-Gordan coefficients for
	\begin{displaymath}
	A_{4}: \textbf{\underline{3}}\otimes \textbf{\underline{3}}\rightarrow \textbf{\underline{3}}
	\end{displaymath}
have the general form
\begin{displaymath}
\begin{split}
& \Gamma_{1}(\alpha,\beta)=
\frac{1}{\sqrt{|\alpha|^{2}+|\beta|^{2}}}\left(
\begin{matrix}
 0 & 0 & 0 \\
 0 & 0 & \alpha \\
 0 & \beta & 0
\end{matrix}
\right),\\
& \Gamma_{2}(\alpha,\beta)=
\frac{1}{\sqrt{|\alpha|^{2}+|\beta|^{2}}}\left(
\begin{matrix}
 0 & 0 & \beta \\
 0 & 0 & 0 \\
 \alpha & 0 & 0
\end{matrix}
\right),\\
& \Gamma_{3}(\alpha,\beta)=
\frac{1}{\sqrt{|\alpha|^{2}+|\beta|^{2}}}\left(
\begin{matrix}
 0 & \alpha & 0 \\
 \beta & 0 & 0 \\
 0 & 0 & 0
\end{matrix}
\right),
\end{split}
\end{displaymath}
with $(\alpha,\beta)\in \mathbb{C}^{2}\backslash\{(0,0)\}$. Fixing $\alpha,\beta$ the coefficients of the second 3-dimensional invariant subspace are determined by $\Gamma_{j}'(\alpha,\beta)=\Gamma_{j}(\beta^{\ast},-\alpha^{\ast})$.
\medskip
\\
We will call $\Gamma_{j}(\alpha,\beta)$ the Clebsch-Gordan coefficients of type \textbf{IIIa$_{(\alpha,\beta)}$}. When one goes through the tables of Clebsch-Gordan coefficients of this chapter one finds that \textbf{IIIa$_{(1,1)}$} and \textbf{IIIa$_{(1,-1)}$} occur within the study of $S_{4}$. In principle \textbf{IIIa$_{(1,1)}$} and \textbf{IIIa$_{(1,-1)}$} could be treated as independent types of coefficients (because of the fixed $\alpha$ and $\beta$). On the other hand they are included in the more general type \textbf{IIIa$_{(\alpha,\beta)}$}.
\medskip
\\
Going through the lists of Clebsch-Gordan coefficients listed in table \ref{SU3tensor}, one also finds some other types of Clebsch-Gordan coefficients which are special cases of type \textbf{IIIa$_{(\alpha,\beta)}$}. As an example, for $\textbf{\underline{3}}_{(-k,-l)}\otimes \textbf{\underline{3}}_{(k,l)}$ of $\Delta(3n^{2})$ there occur Clebsch-Gordan coefficients of type \textbf{IIIa$_{(1,0)}$} and \textbf{IIIa$_{(0,1)}$}. We will also include these types in the list.
\bigskip
\\
The different types of Clebsch-Gordan coefficients are listed in table \ref{SU3CGClist}.
\begin{footnotesize}
\renewcommand{\arraystretch}{1.6}
\begin{longtable}{|ll|}
\hline
Type & Clebsch-Gordan coefficients\\
\hline
\endhead
\hline
\endfoot
\endlastfoot
\textbf{Ia} & $\Gamma=\frac{1}{\sqrt{3}}
\left(
\begin{matrix}
 1 & 0 & 0 \\
 0 & 1 & 0 \\
 0 & 0 & 1
\end{matrix}
\right)
$
\\
\textbf{Ib} & $\Gamma=\frac{1}{\sqrt{3}}
\left(
\begin{matrix}
 \omega^{2} & 0 & 0 \\
 0 & \omega & 0 \\
 0 & 0 & 1
\end{matrix}
\right)
$
\\
\textbf{Ic} & $\Gamma=\frac{1}{\sqrt{3}}
\left(
\begin{matrix}
 \omega & 0 & 0 \\
 0 & \omega^{2} & 0 \\
 0 & 0 & 1
\end{matrix}
\right)
$
\\
\textbf{Id} & $\Gamma=\frac{1}{\sqrt{3}}
\left(\begin{matrix}
 0 & 0 & 1 \\
 1 & 0 & 0 \\
 0 & 1 & 0
\end{matrix}\right)
$
\\
\textbf{Ie} & $\Gamma=\frac{1}{\sqrt{3}}
\left(\begin{matrix}
 0 & 0 & 1 \\
 \omega^{2} & 0 & 0 \\
 0 & \omega & 0
\end{matrix}\right)
$
\\
\textbf{If} & $\Gamma=\frac{1}{\sqrt{3}}
\left(\begin{matrix}
 0 & 0 & 1 \\
 \omega & 0 & 0 \\
 0 & \omega^{2} & 0
\end{matrix}\right)
$
\\
\textbf{Ig} & $\Gamma=\frac{1}{\sqrt{3}}
\left(\begin{matrix}
 0 & 1 & 0 \\
 0 & 0 & 1 \\
 1 & 0 & 0
\end{matrix}\right)
$
\\
\textbf{Ih} & $\Gamma=\frac{1}{\sqrt{3}}
\left(\begin{matrix}
 0 & 1 & 0 \\
 0 & 0 & \omega^{2} \\
 \omega & 0 & 0
\end{matrix}\right)
$
\\
\textbf{Ii} & $\Gamma=\frac{1}{\sqrt{3}}
\left(\begin{matrix}
 0 & 1 & 0 \\
 0 & 0 & \omega \\
 \omega^{2} & 0 & 0
\end{matrix}\right)
$
\\
\textbf{IIa} & $\Gamma_{1}=\frac{1}{\sqrt{2}}
\left(
\begin{matrix}
 0 & 0 & 0 \\
 0 & 1 & 0 \\
 0 & 0 & -1
\end{matrix}
\right)
$,
$\enspace\Gamma_{2}=\frac{1}{\sqrt{6}}
\left(
\begin{matrix}
 -2 & 0 & 0 \\
 0 & 1 & 0 \\
 0 & 0 & 1
\end{matrix}
\right)
$
\\
\textbf{IIb} & $\Gamma_{1}=\frac{1}{\sqrt{3}}
\left(
\begin{matrix}
 1 & 0 & 0 \\
 0 & \omega^{2} & 0 \\
 0 & 0 & \omega
\end{matrix}
\right)
$,$\enspace$
$\Gamma_{2}=\frac{1}{\sqrt{3}}
\left(
\begin{matrix}
 1 & 0 & 0 \\
 0 & \omega & 0 \\
 0 & 0 & \omega^{2}
\end{matrix}
\right)
$
\\
\textbf{IIc} & $\Gamma_{1}=\frac{1}{\sqrt{3}}
\left(
\begin{matrix}
 0 & 0 & 1 \\
 \omega & 0 & 0 \\
 0 & \omega^{2} & 0
\end{matrix}
\right)
$,
$\enspace\Gamma_{2}=\frac{1}{\sqrt{3}}
\left(
\begin{matrix}
 0 & \omega^{2} & 0 \\
 0 & 0 & \omega \\
 1 & 0 & 0
\end{matrix}
\right)
$
\\
\textbf{IId} & $\Gamma_{1}=\frac{1}{\sqrt{3}}
\left(
\begin{matrix}
 0 & \omega & 0 \\
 0 & 0 & \omega^{2} \\
 1 & 0 & 0
\end{matrix}
\right)
$,
$\enspace\Gamma_{2}=\frac{1}{\sqrt{3}}
\left(
\begin{matrix}
 0 & 0 & 1 \\
 \omega^{2} & 0 & 0 \\
 0 & \omega & 0
\end{matrix}
\right)
$
\\
\textbf{IIe} & $\Gamma_{1}=\frac{1}{\sqrt{3}}
\left(
\begin{matrix}
 0 & 1 & 0 \\
 0 & 0 & 1 \\
 1 & 0 & 0
\end{matrix}
\right)
$,
$\enspace\Gamma_{2}=\frac{1}{\sqrt{3}}
\left(
\begin{matrix}
 0 & 0 & 1 \\
 1 & 0 & 0 \\
 0 & 1 & 0
\end{matrix}
\right)
$
\\
\textbf{IIIa$_{(\alpha,\beta)}$} & $\Gamma_{1}=
\frac{1}{n_{\alpha\beta}}
\left(
\begin{matrix}
 0 & 0 & 0 \\
 0 & 0 & \alpha \\
 0 & \beta & 0
\end{matrix}
\right),\enspace
\Gamma_{2}=
\frac{1}{n_{\alpha\beta}}\left(
\begin{matrix}
 0 & 0 & \beta \\
 0 & 0 & 0 \\
 \alpha & 0 & 0
\end{matrix}
\right),\enspace
\Gamma_{3}=
\frac{1}{n_{\alpha\beta}}\left(
\begin{matrix}
 0 & \alpha & 0 \\
 \beta & 0 & 0 \\
 0 & 0 & 0
\end{matrix}
\right)$
\\
\textbf{IIIa$_{(1,1)}$} & $\Gamma_{1}=
\frac{1}{\sqrt{2}}
\left(
\begin{matrix}
 0 & 0 & 0 \\
 0 & 0 & 1 \\
 0 & 1 & 0
\end{matrix}
\right),\enspace
\Gamma_{2}=
\frac{1}{\sqrt{2}}\left(
\begin{matrix}
 0 & 0 & 1 \\
 0 & 0 & 0 \\
 1 & 0 & 0
\end{matrix}
\right),\enspace
\Gamma_{3}=
\frac{1}{\sqrt{2}}\left(
\begin{matrix}
 0 & 1 & 0 \\
 1 & 0 & 0 \\
 0 & 0 & 0
\end{matrix}
\right)$
\\
\textbf{IIIa$_{(1,-1)}$} & $\Gamma_{1}=
\frac{1}{\sqrt{2}}
\left(
\begin{matrix}
 0 & 0 & 0 \\
 0 & 0 & 1 \\
 0 & -1 & 0
\end{matrix}
\right),\enspace
\Gamma_{2}=
\frac{1}{\sqrt{2}}\left(
\begin{matrix}
 0 & 0 & -1 \\
 0 & 0 & 0 \\
 1 & 0 & 0
\end{matrix}
\right),\enspace
\Gamma_{3}=
\frac{1}{\sqrt{2}}\left(
\begin{matrix}
 0 & 1 & 0 \\
 -1 & 0 & 0 \\
 0 & 0 & 0
\end{matrix}
\right)$
\\
\textbf{IIIa$_{(1,0)}$} & $\Gamma_{1}=
\left(
\begin{matrix}
 0 & 0 & 0 \\
 0 & 0 & 1 \\
 0 & 0 & 0
\end{matrix}
\right),\enspace
\Gamma_{2}=
\left(
\begin{matrix}
 0 & 0 & 0 \\
 0 & 0 & 0 \\
 1 & 0 & 0
\end{matrix}
\right),\enspace
\Gamma_{3}=
\left(
\begin{matrix}
 0 & 1 & 0 \\
 0 & 0 & 0 \\
 0 & 0 & 0
\end{matrix}
\right)$
\\
\textbf{IIIa$_{(0,1)}$} & $\Gamma_{1}=
\left(
\begin{matrix}
 0 & 0 & 0 \\
 0 & 0 & 0 \\
 0 & 1 & 0
\end{matrix}
\right),\enspace
\Gamma_{2}=
\left(
\begin{matrix}
 0 & 0 & 1 \\
 0 & 0 & 0 \\
 0 & 0 & 0
\end{matrix}
\right),\enspace
\Gamma_{3}=
\left(
\begin{matrix}
 0 & 0 & 0 \\
 1 & 0 & 0 \\
 0 & 0 & 0
\end{matrix}
\right)$
\\
\textbf{IIIb} & $\Gamma_{1}=
\left(
\begin{matrix}
 -\frac{\tau_{-}}{\sqrt{12}} & 0 & 0 \\
 0 & 0 & \frac{1}{\tau_{-}} \\
 0 & \frac{1}{\tau_{-}} & 0
\end{matrix}
\right),\enspace
\Gamma_{2}=
\left(
\begin{matrix}
 0 & 0 & \frac{1}{\tau_{-}} \\
 0 & -\frac{\tau_{-}}{\sqrt{12}} & 0 \\
 \frac{1}{\tau_{-}} & 0 & 0
\end{matrix}
\right),\enspace
\Gamma_{3}=
\left(
\begin{matrix}
 0 & \frac{1}{\tau_{-}} & 0 \\
 \frac{1}{\tau_{-}} & 0 & 0 \\
 0 & 0 & -\frac{\tau_{-}}{\sqrt{12}}
\end{matrix}
\right)$
\\
\textbf{IIIc} & $\Gamma_{1}=
\left(
\begin{matrix}
 \frac{\tau_{+}}{\sqrt{12}} & 0 & 0 \\
 0 & 0 & \frac{1}{\tau_{+}} \\
 0 & \frac{1}{\tau_{+}} & 0
\end{matrix}
\right),\enspace
\Gamma_{2}=
\left(
\begin{matrix}
 0 & 0 & \frac{1}{\tau_{+}} \\
 0 & \frac{\tau_{+}}{\sqrt{12}} & 0 \\
 \frac{1}{\tau_{+}} & 0 & 0
\end{matrix}
\right),\enspace
\Gamma_{3}=
\left(
\begin{matrix}
 0 & \frac{1}{\tau_{+}} & 0 \\
 \frac{1}{\tau_{+}} & 0 & 0 \\
 0 & 0 & \frac{\tau_{+}}{\sqrt{12}}
\end{matrix}
\right)$
\\
\textbf{IIId} & $\Gamma_{1}=
\left(
\begin{matrix}
 1 & 0 & 0 \\
 0 & 0 & 0 \\
 0 & 0 & 0
\end{matrix}
\right),\enspace
\Gamma_{2}=
\left(
\begin{matrix}
 0 & 0 & 0 \\
 0 & 1 & 0 \\
 0 & 0 & 0
\end{matrix}
\right),\enspace
\Gamma_{3}=
\left(
\begin{matrix}
 0 & 0 & 0 \\
 0 & 0 & 0 \\
 0 & 0 & 1
\end{matrix}
\right)$
\\
\textbf{IVa} & $\Gamma_{1}=\frac{1}{\sqrt{3}}
\left(
\begin{matrix}
 1 & 0 & 0 \\
 0 & 1 & 0 \\
 0 & 0 & 1
\end{matrix}
\right)$,\\ &
$\Gamma_{2}=\left(
\begin{matrix}
 0 & \sqrt{\frac{1}{6} \left(3-\sqrt{5}\right)} & 0 \\
 \frac{1}{2}(-1+\sqrt{5})\sqrt{\frac{3+\sqrt{5}}{9-3\sqrt{5}}} & 0 & 0 \\
 0 & 0 & 0
\end{matrix}
\right)$,\\
&
$\Gamma_{3}=\left(
\begin{matrix}
 0 & 0 & \sqrt{\frac{1}{6} \left(3+\sqrt{5}\right)} \\
 0 & 0 & 0 \\
 \frac{-1+\sqrt{5}}{2\sqrt{3}} & 0 & 0
\end{matrix}
\right)$,
\\ &
$\Gamma_{4}=\left(
\begin{matrix}
 0 & 0 & 0 \\
 0 & 0 & \sqrt{\frac{1}{6} \left(3-\sqrt{5}\right)} \\
 0 & \frac{1}{2}(-1+\sqrt{5})\sqrt{\frac{3+\sqrt{5}}{9-3\sqrt{5}}} & 0
\end{matrix}
\right)$
\\
\textbf{IVb} & $\Gamma_{1}=\frac{1}{\sqrt{3}}
\left(
\begin{matrix}
 0 & \omega^2 & 0 \\
 0 & 0 & 1 \\
 \omega & 0 & 0
\end{matrix}
\right),\enspace
\Gamma_{2}=\frac{1}{\sqrt{3}}\left(
\begin{matrix}
 0 & 0 & \omega^2 \\
 1 & 0 & 0 \\
 0 & \omega & 0
\end{matrix}
\right)$,\\
&
$\Gamma_{3}=\frac{1}{\sqrt{3}}\left(
\begin{matrix}
 0 & \omega^2 & 0 \\
 0 & 0 & \omega \\
 1 & 0 & 0
\end{matrix}
\right),\enspace
\Gamma_{4}=\frac{1}{\sqrt{3}}\left(
\begin{matrix}
 0 & 0 & \omega^2 \\
 \omega & 0 & 0 \\
 0 & 1 & 0
\end{matrix}
\right)$
\\
\textbf{IVc} & $\Gamma_{1}=\frac{1}{\sqrt{3}}
\left(
\begin{matrix}
 0 & 0 & \omega \\
 \omega & 0 & 0 \\
 0 & \omega & 0
\end{matrix}
\right),\enspace
\Gamma_{2}=\frac{1}{\sqrt{3}}\left(
\begin{matrix}
 \omega & 0 & 0 \\
 0 & 1 & 0 \\
 0 & 0 & \omega^2
\end{matrix}
\right)$,\\
&
$\Gamma_{3}=\frac{1}{\sqrt{3}}\left(
\begin{matrix}
 0 & \omega^2 & 0 \\
 0 & 0 & \omega^2 \\
 \omega^2 & 0 & 0
\end{matrix}
\right),\enspace
\Gamma_{4}=\frac{1}{\sqrt{3}}\left(
\begin{matrix}
 \omega & 0 & 0 \\
 0 & \omega^2 & 0 \\
 0 & 0 & 1
\end{matrix}
\right)$
\\
\textbf{Va} & $\Gamma_{1}=
\frac{1}{\sqrt{2}}\left(
\begin{matrix}
 0 & 0 & 0 \\
 0 & 1 & 0 \\
 0 & 0 & -1
\end{matrix}
\right),\enspace
\Gamma_{2}=\frac{1}{\sqrt{6}}\left(
\begin{matrix}
 -2 & 0 & 0 \\
 0 & 1 & 0 \\
 0 & 0 & 1
\end{matrix}
\right),\enspace
\Gamma_{3}=\frac{1}{\sqrt{2}}\left(
\begin{matrix}
 0 & 0 & 0 \\
 0 & 0 & 1 \\
 0 & 1 & 0
\end{matrix}
\right)$,\\
&
$\Gamma_{4}=\frac{1}{\sqrt{2}}\left(
\begin{matrix}
 0 & 0 & 1 \\
 0 & 0 & 0 \\
 1 & 0 & 0
\end{matrix}
\right),\enspace
\Gamma_{5}=\frac{1}{\sqrt{2}}\left(
\begin{matrix}
 0 & 1 & 0 \\
 1 & 0 & 0 \\
 0 & 0 & 0
\end{matrix}
\right)$
\\
\textbf{Vb} & $\Gamma_{1}=
\left(
\begin{matrix}
 -\frac{\sqrt{\frac{5}{3}}}{2} & 0 & 0 \\
 0 & \frac{-3+\sqrt{5}}{4 \sqrt{3}} & 0 \\
 0 & 0 & \frac{3+\sqrt{5}}{4 \sqrt{3}}
\end{matrix}
\right),\enspace
\Gamma_{2}=\left(
\begin{matrix}
 \frac{1}{2} & 0 & 0 \\
 0 & \frac{1}{4} \left(-1-\sqrt{5}\right) & 0 \\
 0 & 0 & \frac{1}{4} \left(-1+\sqrt{5}\right)
\end{matrix}
\right)$,\\
& $\Gamma_{3}=\left(
\begin{matrix}
 0 & 0 & 0 \\
 0 & 0 & \sqrt{\frac{1}{6} \left(3+\sqrt{5}\right)} \\
 0 & -\frac{-1+\sqrt{5}}{2 \sqrt{3}} & 0
\end{matrix}
\right),\enspace
\Gamma_{4}=\left(
\begin{matrix}
 0 & 0 & -\frac{-1+\sqrt{5}}{2 \sqrt{3}} \\
 0 & 0 & 0 \\
 \sqrt{\frac{1}{6} \left(3+\sqrt{5}\right)} & 0 & 0
\end{matrix}
\right)$,\\
& $\Gamma_{5}=\left(
\begin{matrix}
 0 & \sqrt{\frac{1}{6} \left(3+\sqrt{5}\right)} & 0 \\
 -\frac{-1+\sqrt{5}}{2 \sqrt{3}} & 0 & 0 \\
 0 & 0 & 0
\end{matrix}
\right)$
\\
\textbf{VIa} & $\Gamma_{1}=
\left(
\begin{matrix}
 1 & 0 & 0 \\
 0 & 0 & 0 \\
 0 & 0 & 0
\end{matrix}
\right),\enspace
\Gamma_{2}=\left(
\begin{matrix}
 0 & 0 & 0 \\
 0 & 1 & 0 \\
 0 & 0 & 0
\end{matrix}
\right),\enspace
\Gamma_{3}=\left(
\begin{matrix}
 0 & 0 & 0 \\
 0 & 0 & 0 \\
 0 & 0 & 1
\end{matrix}
\right)$,\\
&
$\Gamma_{4}=\frac{1}{\sqrt{2}}\left(
\begin{matrix}
 0 & 0 & 0 \\
 0 & 0 & 1 \\
 0 & 1 & 0
\end{matrix}
\right),\enspace
\Gamma_{5}=\frac{1}{\sqrt{2}}\left(
\begin{matrix}
 0 & 0 & 1 \\
 0 & 0 & 0 \\
 1 & 0 & 0
\end{matrix}
\right),\enspace
\Gamma_{6}=\frac{1}{\sqrt{2}}\left(
\begin{matrix}
 0 & 1 & 0 \\
 1 & 0 & 0 \\
 0 & 0 & 0
\end{matrix}
\right)$
\\
\textbf{VIb} & $\Gamma_{1}=
\left(
\begin{matrix}
 0 & 1 & 0 \\
 0 & 0 & 0 \\
 0 & 0 & 0
\end{matrix}
\right),\enspace
\Gamma_{2}=\left(
\begin{matrix}
 0 & 0 & 0 \\
 0 & 0 & 1 \\
 0 & 0 & 0
\end{matrix}
\right),\enspace
\Gamma_{3}=\left(
\begin{matrix}
 0 & 0 & 0 \\
 0 & 0 & 0 \\
 1 & 0 & 0
\end{matrix}
\right)$,\\
&
$\Gamma_{4}=\left(
\begin{matrix}
 0 & 0 & 0 \\
 0 & 0 & 0 \\
 0 & 1 & 0
\end{matrix}
\right),\enspace
\Gamma_{5}=\left(
\begin{matrix}
 0 & 0 & 0 \\
 1 & 0 & 0 \\
 0 & 0 & 0
\end{matrix}
\right),\enspace
\Gamma_{6}=\left(
\begin{matrix}
 0 & 0 & 1 \\
 0 & 0 & 0 \\
 0 & 0 & 0
\end{matrix}
\right)$
\\
\textbf{VIII} & $\Gamma_{1}=
\left(
\begin{matrix}
 0 & 1 & 0 \\
 0 & 0 & 0 \\
 0 & 0 & 0
\end{matrix}
\right),\enspace
\Gamma_{2}=\left(
\begin{matrix}
 0 & 0 & 1 \\
 0 & 0 & 0 \\
 0 & 0 & 0
\end{matrix}
\right),\enspace
\Gamma_{3}=\left(
\begin{matrix}
 0 & 0 & 0 \\
 1 & 0 & 0 \\
 0 & 0 & 0
\end{matrix}
\right)$,\\
&
$\Gamma_{4}=\frac{1}{\sqrt{2}}\left(
\begin{matrix}
 1 & 0 & 0 \\
 0 & -1 & 0 \\
 0 & 0 & 0
\end{matrix}
\right),\enspace
\Gamma_{5}=\left(
\begin{matrix}
 0 & 0 & 0 \\
 0 & 0 & 1 \\
 0 & 0 & 0
\end{matrix}
\right),\enspace
\Gamma_{6}=\left(
\begin{matrix}
 0 & 0 & 0 \\
 0 & 0 & 0 \\
 1 & 0 & 0
\end{matrix}
\right)$,\\
& $
\Gamma_{7}=\left(
\begin{matrix}
 0 & 0 & 0 \\
 0 & 0 & 0 \\
 0 & 1 & 0
\end{matrix}
\right),\enspace
\Gamma_{8}=\frac{1}{\sqrt{6}}\left(
\begin{matrix}
 1 & 0 & 0 \\
 0 & 1 & 0 \\
 0 & 0 & -2
\end{matrix}
\right)
$
\\
\textbf{IX} & $\Gamma_{1}=
\left(
\begin{matrix}
 1 & 0 & 0 \\
 0 & 0 & 0 \\
 0 & 0 & 0
\end{matrix}
\right),\enspace
\Gamma_{2}=\left(
\begin{matrix}
 0 & 1 & 0 \\
 0 & 0 & 0 \\
 0 & 0 & 0
\end{matrix}
\right),\enspace
\Gamma_{3}=\left(
\begin{matrix}
 0 & 0 & 1 \\
 0 & 0 & 0 \\
 0 & 0 & 0
\end{matrix}
\right)$,\\
&
$\Gamma_{4}=\left(
\begin{matrix}
 0 & 0 & 0 \\
 1 & 0 & 0 \\
 0 & 0 & 0
\end{matrix}
\right),\enspace
\Gamma_{5}=\left(
\begin{matrix}
 0 & 0 & 0 \\
 0 & 1 & 0 \\
 0 & 0 & 0
\end{matrix}
\right),\enspace
\Gamma_{6}=\left(
\begin{matrix}
 0 & 0 & 0 \\
 0 & 0 & 1 \\
 0 & 0 & 0
\end{matrix}
\right)$,\\
& $
\Gamma_{7}=\left(
\begin{matrix}
 0 & 0 & 0 \\
 0 & 0 & 0 \\
 1 & 0 & 0
\end{matrix}
\right),\enspace
\Gamma_{8}=\left(
\begin{matrix}
 0 & 0 & 0 \\
 0 & 0 & 0 \\
 0 & 1 & 0
\end{matrix}
\right),\enspace
\Gamma_{9}=\left(
\begin{matrix}
 0 & 0 & 0 \\
 0 & 0 & 0 \\
 0 & 0 & 1
\end{matrix}
\right)
$
\\
\hline
\caption[The possible types of Clebsch-Gordan coefficients for tensor products of the 3-dimensional irreducible representations of the studied finite subgroups of $SU(3)$ in matrix form.]{The possible types of Clebsch-Gordan coefficients for tensor products of the 3-dimensional irreducible representations of the studied finite subgroups of $SU(3)$ in matrix form. Remark: This list maybe contains equivalent sets of Clebsch-Gordan coefficients.
\\
$\omega=e^{\frac{2\pi i}{3}}$; $n_{\alpha\beta}=\sqrt{|\alpha|^{2}+|\beta|^{2}}, \alpha,\beta\in U(1)$; $\tau_{\pm}=\sqrt{2(3\pm\sqrt{3})}$.}
\label{SU3CGClist}
\end{longtable}
\end{footnotesize}
\hspace{0mm}\\
The types of Clebsch-Gordan coefficients listed in table \ref{SU3CGClist} are also indicated in all tables of Clebsch-Gordan coefficients in this chapter. In table \ref{SU3CGClist2} we present a list showing which groups allow tensor products that lead to a given type of Clebsch-Gordan coefficients. The corresponding tensor products can be found by going through the subsections treating the groups, and especially by going through the tables of Clebsch-Gordan coefficients in these subsections.
\medskip
\\
At last in table \ref{SU3tensor2} we list all Clebsch-Gordan coefficients for $3\otimes 3$-tensor products of 3-dimensional irreducible representations of (known) finite subgroups of $SU(3)$, the groups for which the decompositions occur and the tables where the coefficients can be found.
\medskip
\\
\textbf{Remark}: When we treated tensor products of the form $\textbf{\underline{3}}\otimes \textbf{\underline{3}}'$ with $\textbf{\underline{3}}\neq \textbf{\underline{3}}'$ we (most times) did not mention the Clebsch-Gordan coefficients for $\textbf{\underline{3}}'\otimes \textbf{\underline{3}}$. They can be obtained by replacing $e_{ij}\mapsto e_{ji}$, which corresponds to $\Gamma_{i}\mapsto \Gamma_{i}^{T}$ in table \ref{SU3CGClist}.
\bigskip
\\
Looking at table \ref{SU3tensor2} we notice that we have found 17 types of Clebsch-Gordan decompositions (of which some could be equivalent) for $3\otimes 3$-tensor products of 3-dimensional irreducible representations of finite subgroups of $SU(3)$.

\newpage
\begin{footnotesize}
\renewcommand{\arraystretch}{1.6}
\begin{longtable}{|ll|}
\hline
CGC-type & Groups\\
\hline
\endhead
\hline
\endfoot
\endlastfoot
\textbf{Ia} & $A_{4}$, $S_{4}$, $A_{5}$, $\Sigma(168)$, $\Sigma(36\phi)$, $\Sigma(72\phi)$, $\Sigma(216\phi)$, $\Sigma(360\phi)$, $\Delta(3n^{2})$, $\Delta(6n^{2})$ \\
\textbf{Ib} & $A_{4}$, $\Sigma(216\phi)$, $\Delta(3n^{2})$\\
\textbf{Ic} & $A_{4}$, $\Sigma(216\phi)$, $\Delta(3n^{2})$\\
\textbf{Id},\textbf{Ie},...,\textbf{Ii} & $\Delta(6n^{2})$\\
\textbf{IIa} & $S_{4}$\\
\textbf{IIb} & $\Delta(6n^{2})$\\
\textbf{IIc} & $\Delta(6n^{2})$\\
\textbf{IId} & $\Delta(6n^{2})$\\
\textbf{IIe} & $\Delta(6n^{2})$ \\
\textbf{IIIa$_{(\alpha,\beta)}$} & $A_{4}$, $\Sigma(216\phi)$ \\
\textbf{IIIa$_{(1,1)}$} & $A_{4}$, $S_{4}$, $\Sigma(216\phi)$, $\Delta(6n^{2})$ \\
\textbf{IIIa$_{(1,-1)}$} & $A_{4}$, $S_{4}$, $A_{5}$, $\Sigma(168)$, $\Sigma(36\phi)$, $\Sigma(72\phi)$, $\Sigma(216\phi)$, $\Sigma(360\phi)$, $\Delta(3n^{2})$, $\Delta(6n^{2})$ \\
\textbf{IIIa$_{(1,0)}$} & $\Delta(3n^{2})$ \\
\textbf{IIIa$_{(0,1)}$} & $\Delta(3n^{2})$ \\
\textbf{IIIb} & $\Sigma(36\phi)$\\
\textbf{IIIc} & $\Sigma(36\phi)$\\
\textbf{IIId} & $\Delta(3n^{2})$, $\Delta(6n^{2})$\\
\textbf{IVa} & $A_{5}$\\
\textbf{IVb} & $\Sigma(36\phi)$\\
\textbf{IVc} & $\Sigma(36\phi)$ \\
\textbf{Va} & $A_{5}$ \\
\textbf{Vb} & $A_{5}$ \\
\textbf{VIa} & $\Sigma(168)$, $\Sigma(72\phi)$, $\Sigma(216\phi)$, $\Sigma(360\phi)$, $\Delta(6n^{2})$\\
\textbf{VIb} & $\Delta(6n^{2})$\\
\textbf{VIII} & $\Sigma(168)$, $\Sigma(72\phi)$, $\Sigma(216\phi)$, $\Sigma(360\phi)$\\
\textbf{IX} & $\Sigma(216\phi)$, $\Sigma(360\phi)$\\
\hline
\caption[Known finite subgroups of $SU(3)$ that allow a given type of Clebsch-Gordan coefficients.]{Known finite subgroups of $SU(3)$ that allow a given type of Clebsch-Gordan coefficients. We don't list double covers $\tilde{G}$ of finite subgroups $G$ of $SO(3)$. (C) and (D) are not listed either.}
\label{SU3CGClist2}
\end{longtable}
\end{footnotesize}

\begin{sidewaystable}
\begin{center}
\renewcommand{\arraystretch}{1.4}
\begin{tabular}{|lll|}
\firsthline
	Clebsch-Gordan coefficients for $\textbf{\underline{3}}\otimes\textbf{\underline{3}}$ & Groups & CGCs\\
\hline
	\textbf{Ia}+\textbf{Ib}+\textbf{Ic}+\textbf{IIIa}$_{(1,-1)}$+\textbf{IIIa}$_{(1,1)}$ & $A_{4}$, $\Sigma(216\phi)$ & \ref{A4CGC}\\
	\textbf{Ia}+\textbf{IIa}+\textbf{IIIa}$_{(1,-1)}$+\textbf{IIIa}$_{(1,1)}$ & $S_{4}$ & \ref{S4CGC}\\
	\textbf{Ia}+\textbf{IIIa}$_{(1,-1)}$+\textbf{Va} & $A_{5}$ & \ref{A5CGCa}\\
	\textbf{IVa}+\textbf{Vb} & $A_{5}$ & \ref{A5CGCb}\\
	\textbf{IIIa}$_{(1,-1)}$+\textbf{VIa} & $\Sigma(168)$, $\Sigma(72\phi)$, $\Sigma(216\phi)$, $\Sigma(360\phi)$, $\Delta(6n^{2})$ & \ref{Sigma168CGCa}\\
	\textbf{Ia}+\textbf{VIII} & $\Sigma(168)$, $\Sigma(72\phi)$, $\Sigma(216\phi)$, $\Sigma(360\phi)$ & \ref{Sigma168CGCb}\\
	\textbf{IIIa}$_{(1,-1)}$+\textbf{IIIb}+\textbf{IIIc} & $\Sigma(36\phi)$ & \ref{Sigma36CGCa}\\
	\textbf{Ia}+\textbf{IVb}+\textbf{IVc} & $\Sigma(36\phi)$ & \ref{Sigma36CGCb}\\
	\textbf{IX} & $\Sigma(216\phi)$, $\Sigma(360\phi)$ & \\
	\textbf{Ia}+\textbf{Ib}+\textbf{Ic}+\textbf{IIIa}$_{(1,0)}$+\textbf{IIIa}$_{(0,1)}$ & $\Delta(3n^{2})$ & \ref{Delta3nnCGC1}\\
	\textbf{IIId}+\textbf{IIIa}$_{(1,0)}$+\textbf{IIIa}$_{(0,1)}$ & $\Delta(3n^{2})$ & \ref{Delta3nnCGC2}\\
	\textbf{Ia}+\textbf{Ib}+\textbf{Ic}+\textbf{Id}+\textbf{Ie}+\textbf{If}+\textbf{Ig}+\textbf{Ih}+\textbf{Ii} & $\Delta(3n^{2})$ & \ref{Delta3nnCGC3}\\
	\textbf{IIId}+\textbf{VIb} & $\Delta(6n^{2})$ & \ref{Delta6nnCGCb}\\
	\textbf{Ia}+\textbf{IIb}+\textbf{VIb} & $\Delta(6n^{2})$ & \ref{Delta6nnCGCb}+\ref{Delta6nnCGCc}\\
	\textbf{IIId}+\textbf{IIIa}$_{(1,1)}$+\textbf{IIIa}$_{(1,-1)}$ & $\Delta(6n^{2})$ & \ref{Delta6nnCGCd}, \ref{Delta6nnCGCe}\\
	\textbf{Ia}+\textbf{IIb}+\textbf{IIIa}$_{(1,1)}$+\textbf{IIIa}$_{(1,-1)}$ & $\Delta(6n^{2})$ & \ref{Delta6nnCGCc}+\ref{Delta6nnCGCd}, \ref{Delta6nnCGCc}+\ref{Delta6nnCGCe}\\
	\textbf{Ia}+\textbf{IIb}+\textbf{IIc}+\textbf{IId}+\textbf{IIe} & $\Delta(6n^{2})$ & \ref{Delta6nnCGCc}+\ref{Delta6nnCGCf}, \ref{Delta6nnCGCc}+\ref{Delta6nnCGCg}\\
\lasthline
\end{tabular}
\caption[Clebsch-Gordan coefficients for $3\otimes 3$-tensor products of 3-dimensional irreducible representations of (known) finite subgroups of $SU(3)$.]{Clebsch-Gordan coefficients for $3\otimes 3$-tensor products of 3-dimensional irreducible representations of (known) finite subgroups of $SU(3)$. We don't list double covers $\tilde{G}$ of finite subgroups $G$ of $SO(3)$, because $\tilde{G}$ and $G$ have the same 3-dimensional irreducible representations. (C) and (D) are not listed either.}
\label{SU3tensor2}
\end{center}
\end{sidewaystable}

\chapter{The application of finite family symmetry groups to the lepton sector}\label{LeptonSector}
After the systematic analysis of the finite subgroups of $SU(3)$ we want to investigate the applications of finite family symmetry groups to the lepton sector.

\section{$G$-invariant Higgs potentials}
So far we have only considered $G$-invariant Yukawa-couplings, but of course we want the whole Lagrangian to be $G$-invariant. The remaining part, which we had not analysed up to now, is the Lagrangian of the Higgs fields.
\\
At first we have to generalize the Lagrangian for one standard model Higgs doublet to several standard model Higgs doublets. This has to be done keeping $SU(2)_{I}\times U(1)_{Y}$-invariance in mind. The easiest way to ensure gauge invariance is to construct a Lagrangian consisting of building blocks of the structures
	\begin{displaymath}
	(D_{\mu}\phi_{i})^{\dagger}(D^{\mu}\phi_{j})\quad \mathrm{and}\quad\phi_{i}^{\dagger}\phi_{j}.
	\end{displaymath}
Constructing the Lagrangian from these building blocks Lorentz-invariance is ensured too. We make the following ansatz for the Lagrangian of the Higgs fields:
	\begin{equation}\label{LHiggs}
	\mathcal{L}_{\mathrm{Higgs}}=A_{ij}(D_{\mu}\phi_{i})^{\dagger}(D^{\mu}\phi_{j})-V(\phi),
	\end{equation}
where
	\begin{equation}
	V(\phi)=\mu^2 B_{ij}(\phi_{i}^{\dagger}\phi_{j})+\lambda C_{ijkl}(\phi_{i}^{\dagger}\phi_{j})(\phi_{k}^{\dagger}\phi_{l}).
	\end{equation}
$G$-invariance of $\mathcal{L}_{\mathrm{Higgs}}$ can be ensured by appropriate choice of the coefficients $A_{ij}, B_{ij}$ and $C_{ijkl}$.
\medskip
\\
The condition for $G$-invariance is
	\begin{displaymath}
	\mathcal{L}_{\mathrm{Higgs}}(\phi)=\mathcal{L}_{\mathrm{Higgs}}(D_{\phi}\phi),
	\end{displaymath}
where $D_{\phi}$ is a representation of $G$. As in chapter \ref{yukawa} we will write $[D_{\phi}]$ for the matrix representation of $D_{\phi}$, but for the sake of simplicity we will write $\phi$ instead of $[\phi]$ for the matrix representation of the vector $\phi$.

\begin{prop}\label{Plepton1}
The Higgs Lagrangian (\ref{LHiggs}) is invariant under $\phi\mapsto D_{\phi}\phi$ if
	\begin{displaymath}
	\begin{split}
	&[D_{\phi}]^{\ast}_{ia}[D_{\phi}]_{jb}A_{ij}=A_{ab},\\
	&[D_{\phi}]^{\ast}_{ia}[D_{\phi}]_{jb}B_{ij}=B_{ab},\\
	&[D_{\phi}]^{\ast}_{ia}[D_{\phi}]_{jb}[D_{\phi}]^{\ast}_{kc}[D_{\phi}]_{ld}C_{ijkl}=C_{abcd}.
	\end{split}
	\end{displaymath}
\end{prop}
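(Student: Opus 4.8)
The plan is to substitute the transformed field $D_\phi\phi$ into the Higgs Lagrangian (\ref{LHiggs}) and demand that the result coincides with $\mathcal{L}_{\mathrm{Higgs}}(\phi)$ term by term. First I would write out the components. Under $\phi\mapsto D_\phi\phi$ the $j$-th Higgs doublet becomes $\phi_j\mapsto [D_\phi]_{ja}\phi_a$ (sum over $a$), so $\phi_i^\dagger\mapsto [D_\phi]^\ast_{ib}\phi_b^\dagger$ and likewise $(D_\mu\phi_i)^\dagger\mapsto [D_\phi]^\ast_{ib}(D_\mu\phi_b)^\dagger$ — note the gauge covariant derivative acts only on the $SU(2)_I\times U(1)_Y$ indices and is untouched by the family-space matrix $[D_\phi]$, which is the one point worth stating explicitly since it is what makes the two structures $(D_\mu\phi_i)^\dagger(D^\mu\phi_j)$ and $\phi_i^\dagger\phi_j$ behave identically under the transformation.

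Next I would treat the three pieces. For the kinetic term, $A_{ij}(D_\mu\phi_i)^\dagger(D^\mu\phi_j)\mapsto A_{ij}[D_\phi]^\ast_{ia}[D_\phi]_{jb}(D_\mu\phi_a)^\dagger(D^\mu\phi_b)$; relabelling the summation, invariance holds provided $[D_\phi]^\ast_{ia}[D_\phi]_{jb}A_{ij}=A_{ab}$. The $\mu^2$-term is handled identically with $B$ in place of $A$. For the quartic term, $C_{ijkl}(\phi_i^\dagger\phi_j)(\phi_k^\dagger\phi_l)\mapsto C_{ijkl}[D_\phi]^\ast_{ia}[D_\phi]_{jb}[D_\phi]^\ast_{kc}[D_\phi]_{ld}(\phi_a^\dagger\phi_b)(\phi_c^\dagger\phi_d)$, so the stated condition on $C$ is exactly what is needed. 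Collecting the three conditions gives the proposition; since the statement only claims sufficiency (``if''), no converse argument is required, and in particular one need not worry about whether distinct monomials $(\phi_a^\dagger\phi_b)(\phi_c^\dagger\phi_d)$ are linearly independent as functions.

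There is essentially no obstacle here: the argument is a direct computation mirroring the proof of Proposition~\ref{PY4} and the invariance-equation manipulations of Proposition~\ref{PY6}, and the only thing to be careful about is bookkeeping of which indices are complex-conjugated (those coming from the $\phi_i^\dagger$ factors) versus not (those from $\phi_j$), together with consistent relabelling of dummy indices. I would present it as three short displayed chains of equalities, one per coefficient tensor, and then conclude.
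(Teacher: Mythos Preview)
Your proposal is correct and follows exactly the same direct substitution argument as the paper's proof, which in fact shows only the $B_{ij}$ case explicitly and remarks that the other two are identical. Your additional comments about the gauge covariant derivative commuting with the family-space matrix $[D_\phi]$ and about sufficiency versus necessity are accurate and harmless elaborations.
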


\begin{proof}
The calculation is identical for all three statements. For $B_{ij}$ we find:
	\begin{displaymath}
	B_{ij}(\phi_{i}^{\dagger}\phi_{j})\mapsto \underbrace{B_{ij}[D_{\phi}]^{\ast}_{ia}[D_{\phi}]_{jb}}_{B_{ab}}(\phi_{a}^{\dagger}\phi_{b})=B_{ab}(\phi_{a}^{\dagger}\phi_{b}).
	\end{displaymath}
\end{proof}

\begin{prop}\label{Plepton2}
Let $A_{ij}, B_{ij}, C_{ijkl}$ be the coefficients of a $G$-invariant Higgs Lagrangian in a given basis defined by the choice of the matrix representation $[D_{\phi}]$ of the operator $D_{\phi}$. Under the basis transformation
	\begin{displaymath}
	[D_{\phi}]\mapsto S^{-1}[D_{\phi}]S
	\end{displaymath}
the coefficients $A_{ij}, B_{ij}, C_{ijkl}$ transform as
	\begin{equation}
	\begin{split}
	&A_{ij}\mapsto S^{\ast}_{ai}S_{bj}A_{ab},\\
	&B_{ij}\mapsto S^{\ast}_{ai}S_{bj}B_{ab},\\
	&C_{ijkl}\mapsto S^{\ast}_{ai}S_{bj}S^{\ast}_{ck}S_{dl}C_{abcd}.
	\end{split}
	\end{equation}
\end{prop}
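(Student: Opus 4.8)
The plan is to use the fact that the Higgs Lagrangian (\ref{LHiggs}), written in terms of the abstract Higgs doublets $\phi_i$, is a scalar quantity whose numerical value cannot depend on which matrix representation $[D_\phi]$ of the operator $D_\phi$ we happen to work with. Exactly as in Proposition \ref{Pclebsch1}, the replacement $[D_\phi]\mapsto S^{-1}[D_\phi]S$ is induced by a change of basis of $V_\phi$, and on the level of matrix representations of vectors this means $\phi\mapsto\phi'=S^{-1}\phi$, i.e. $\phi_i=S_{ik}\phi'_k$ and $\phi_i^\dagger=S^\ast_{ik}(\phi'_k)^\dagger$ (the $S$-entries being scalars, only those sitting inside a dagger get conjugated). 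First I would substitute this relation into every building block of $\mathcal{L}_{\mathrm{Higgs}}$ and read off the compensating change of the coefficients.

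Carrying this out for the kinetic term gives $A_{ij}(D_\mu\phi_i)^\dagger(D^\mu\phi_j)=A_{ij}S^\ast_{ia}S_{jb}(D_\mu\phi'_a)^\dagger(D^\mu\phi'_b)$, since the covariant derivative is linear; hence the coefficient multiplying $(D_\mu\phi'_a)^\dagger(D^\mu\phi'_b)$ in the new basis is $A'_{ab}=S^\ast_{ia}S_{jb}A_{ij}$, which after a relabelling of the free and dummy indices is precisely the stated rule $A_{ij}\mapsto S^\ast_{ai}S_{bj}A_{ab}$. The quadratic mass term $\mu^2 B_{ij}(\phi_i^\dagger\phi_j)$ is handled identically and produces the same rule for $B_{ij}$. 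For the quartic term $\lambda C_{ijkl}(\phi_i^\dagger\phi_j)(\phi_k^\dagger\phi_l)$ one inserts $\phi_i=S_{ia}\phi'_a$, $\phi_j=S_{jb}\phi'_b$, $\phi_k=S_{kc}\phi'_c$, $\phi_l=S_{ld}\phi'_d$, picking up four $S$-factors, two of them complex-conjugated because they sit inside daggers, which yields $C_{ijkl}\mapsto S^\ast_{ai}S_{bj}S^\ast_{ck}S_{dl}C_{abcd}$.

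An equivalent route, which also serves as a consistency check, starts from the $G$-invariance conditions of Proposition \ref{Plepton1}: if $A,B,C$ satisfy those identities for $[D_\phi]$, one verifies by direct insertion of $S$ and $S^{-1}$ (together with unitarity of $S$) that the transformed coefficients satisfy the same identities with $[D_\phi]$ replaced by $S^{-1}[D_\phi]S$, as a covariant reformulation must. Either way, the argument is pure index bookkeeping.

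The only genuinely delicate point — and the thing to pin down first — is getting the direction of the transformation of $\phi$ right (that it is $\phi\mapsto S^{-1}\phi$, not $S\phi$) together with the placement of the complex conjugations, so that $\phi\mapsto D_\phi\phi$ in the new basis is the conjugate by $S$ of the old transformation law and $\mathcal{L}_{\mathrm{Higgs}}$ stays numerically invariant. This is the same subtlety already settled in the proof of Proposition \ref{Pclebsch1}, so it can simply be imported; everything after that is routine.
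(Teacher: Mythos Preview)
Your main argument — reading the transformation off from the basis-independence of the scalar Lagrangian by substituting $\phi_i = S_{ik}\phi'_k$ — is correct and is more direct than what the paper does. The paper instead takes precisely what you sketch as your ``equivalent route'': it starts from the $G$-invariance condition of Proposition~\ref{Plepton1}, rewrites it in the new basis with $[D_\phi]$ replaced by $S^{-1}[D_\phi]S$, and then unwinds the inserted $S$-factors to identify $B'_{ij}$ in terms of $B_{ab}$. Your approach has the advantage of making the origin of the rule transparent (it is nothing but the passive transformation of tensor components under a change of basis in $V_\phi$) and does not even use $G$-invariance as an input; the paper's route ties the computation to the invariance condition, but is otherwise the same index bookkeeping read in reverse. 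One small remark: neither argument actually needs $S$ to be unitary, only invertible — the paper uses nothing beyond $S^{-1}S=\mathbbm{1}$, and your main argument uses nothing about $S$ at all beyond it being a change-of-basis matrix.
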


\begin{proof}
As in the proof of proposition \ref{Plepton1} the calculation is identical for all three statements, thus we will only investigate the transformation property of $B_{ij}$. $G$-invariance is given if
	\begin{displaymath}
	[D_{\phi}]^{\ast}_{ia}[D_{\phi}]_{jb}B_{ij}=B_{ab}.
	\end{displaymath}
In the new basis this condition becomes
	\begin{displaymath}
	\begin{split}
	& (S^{-1})_{ip}^{\ast}[D_{\phi}]^{\ast}_{pq}S^{\ast}_{qa}(S^{-1})_{jr}[D_{\phi}]_{rs}S_{sb}B_{ij}'=B_{ab}'\\
	&
	[D_{\phi}]^{\ast}_{pq}\{(S^{-1})_{ip}^{\ast}B_{ij}'(S^{-1})_{jr}\}[D_{\phi}]_{rs}S^{\ast}_{qa}S_{sb}=B_{ab}'/\cdot (S^{-1})_{at}^{\ast}(S^{-1})_{bu}\\
	&
	[D_{\phi}]^{\ast}_{pt}\{\underbrace{(S^{-1})_{ip}^{\ast}B_{ij}'(S^{-1})_{jr}}_{B_{pr}}\}[D_{\phi}]_{ru}=\underbrace{(S^{-1})_{at}^{\ast}B_{ab}'(S^{-1})_{bu}}_{B_{tu}}.
	\end{split}
	\end{displaymath}
Thus
	\begin{displaymath}
	(S^{-1})_{at}^{\ast}B_{ab}'(S^{-1})_{bu}=B_{tu}\Rightarrow B_{ij}'=S^{\ast}_{ai}S_{bj}B_{ab}.
	\end{displaymath}
\end{proof}

\begin{lemma}\label{Llepton3}
Under the basis transformation
	\begin{displaymath}
	[D_{\phi}]\mapsto S^{-1}[D_{\phi}]S
	\end{displaymath}
the Higgs potential $V(\phi)$ transforms as
	\begin{displaymath}
	V(\phi)\mapsto V'(\phi)=V(S\phi).
	\end{displaymath}
\end{lemma}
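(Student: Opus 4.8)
The plan is to show that under the basis transformation $[D_\phi]\mapsto S^{-1}[D_\phi]S$, replacing the coefficients $B_{ij}$ and $C_{ijkl}$ by their transformed versions (as given in Proposition \ref{Plepton2}) and substituting back into $V(\phi)$ reproduces exactly $V(S\phi)$. Concretely, I would start from the transformed potential
	\begin{displaymath}
	V'(\phi)=\mu^2 B'_{ij}(\phi_i^\dagger\phi_j)+\lambda C'_{ijkl}(\phi_i^\dagger\phi_j)(\phi_k^\dagger\phi_l),
	\end{displaymath}
insert $B'_{ij}=S^\ast_{ai}S_{bj}B_{ab}$ and $C'_{ijkl}=S^\ast_{ai}S_{bj}S^\ast_{ck}S_{dl}C_{abcd}$, and then reorganize the index contractions.

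The key observation is that $S^\ast_{ai}S_{bj}(\phi_i^\dagger\phi_j)=(S^\ast_{ai}\phi_i)^\dagger(S_{bj}\phi_j)^{\phantom{\dagger}}$ — wait, more carefully: since $(\phi_i^\dagger\phi_j)$ is a scalar, $S^\ast_{ai}S_{bj}(\phi_i^\dagger\phi_j)=(S_{ai}\phi_i)^\dagger(S_{bj}\phi_j)=(S\phi)_a^\dagger(S\phi)_b$, where $(S\phi)_a:=S_{ai}\phi_i$ denotes the $a$-th component of the vector $S\phi$. (The complex conjugate on $S$ is precisely what is needed to build the Hermitian conjugate of the linear combination $S_{ai}\phi_i$.) Applying this once in the quadratic term gives
	\begin{displaymath}
	\mu^2 B'_{ij}(\phi_i^\dagger\phi_j)=\mu^2 B_{ab}(S\phi)_a^\dagger(S\phi)_b,
	\end{displaymath}
and applying it twice in the quartic term gives
	\begin{displaymath}
	\lambda C'_{ijkl}(\phi_i^\dagger\phi_j)(\phi_k^\dagger\phi_l)=\lambda C_{abcd}(S\phi)_a^\dagger(S\phi)_b(S\phi)_c^\dagger(S\phi)_d.
	\end{displaymath}
Summing these two contributions, the right-hand side is literally $V$ evaluated at the argument $S\phi$, i.e. $V'(\phi)=V(S\phi)$, which is the claim.

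I do not expect any genuine obstacle here; the statement is essentially a bookkeeping consequence of Proposition \ref{Plepton2} together with the elementary fact that linear substitution in the Higgs doublets commutes with forming the gauge-invariant bilinears $\phi_i^\dagger\phi_j$. The only point requiring a little care is the placement of complex conjugates: one must check that the $S^\ast$ appearing on the ``daggered'' slots of $B'$ and $C'$ matches the conjugation produced when $(S_{ai}\phi_i)^\dagger$ is expanded, which is exactly the case. One might also note that the kinetic term behaves the same way, $A'_{ij}(D_\mu\phi_i)^\dagger(D^\mu\phi_j)=A_{ab}(D_\mu(S\phi)_a)^\dagger(D^\mu(S\phi)_b)$, so in fact the whole Higgs Lagrangian satisfies $\mathcal{L}_{\mathrm{Higgs}}'(\phi)=\mathcal{L}_{\mathrm{Higgs}}(S\phi)$, but since the lemma only asserts the statement for $V$, I would restrict the written proof to the potential.
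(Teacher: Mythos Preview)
Your proof is correct and follows the same approach as the paper: invoke Proposition \ref{Plepton2} for the transformed coefficients and then recognize $S^\ast_{ai}S_{bj}(\phi_i^\dagger\phi_j)=(S\phi)_a^\dagger(S\phi)_b$. The paper's version is slightly terser, treating only the $B$-term explicitly and remarking that the calculation is identical for the quartic part, whereas you spell out both.
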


\begin{proof}
Again the calculation is identical for all parts of the Higgs potential, thus we will only investigate one part of it. From proposition \ref{Plepton2} we know that
	\begin{displaymath}
	B_{ij}(\phi_{i}^{\dagger}\phi_{j})\mapsto S^{\ast}_{ai}S_{bj}B_{ab}(\phi_{i}^{\dagger}\phi_{j})=B_{ab}(S\phi)_{a}^{\dagger}(S\phi)_{b}.
	\end{displaymath}
\end{proof}

\section{$G$-invariant Lagrangians}\label{GinvLagr}
We now want to consider a general $G$-invariant Lagrangian and investigate its properties.
\medskip
\\
Let $a,b$ transform as
	\begin{displaymath}
	a\mapsto D_{a}a,\quad b\mapsto D_{b}b,
	\end{displaymath}
then from corollary \ref{CY10} we know that $D_{\phi}$ is determined by
	\begin{displaymath}
	(D_{a}^{-1})^{\dagger}\otimes (D_{b}^{-1})^{T}=\bigoplus_{\sigma}D^{\sigma},
	\end{displaymath}
in the way that $D_{\phi}$ can take all values $D^{\sigma}$ occurring in $(D_{a}^{-1})^{\dagger}\otimes (D_{b}^{-1})^{T}$. Of course one can combine some of the $D^{\sigma}$ to a reducible representation
	\begin{displaymath}
	D_{\phi}=D^1\oplus...\oplus D^m,
	\end{displaymath}
where $D^1,...,D^m$ are irreducible representations occurring in $D_a\otimes D_b$. This corresponds to including $m$ independent $G$-invariant Yukawa-couplings in the Lagrangian. The most general $G$-invariant Yukawa-coupling is then given by
	\begin{equation}\label{GinvYukcoupl}
	\mathcal{L}_{\mathrm{Yukawa}}=-\sum_{\sigma=1}^m g_{\sigma}\Gamma_{ijk}^{\sigma}\bar{a}_{j}\phi_{i}^{(\sigma)}b_{k}+\mathrm{H.c.},
	\end{equation}
where $g_{\sigma}$ are arbitrary coupling constants, $\Gamma^{\sigma}_{ijk}$ are the Clebsch-Gordan coefficients for $(D_{a}^{-1})^{\dagger}\otimes (D_{b}^{-1})^{T}\rightarrow D^{\sigma}$ and $\phi^{(\sigma)}$ transforms as
	\begin{displaymath}
	\phi^{(\sigma)}\mapsto D^{\sigma}\phi^{(\sigma)}
	\end{displaymath}
under $G$.
\\
In addition one will get a Higgs Lagrangian of the form:
	\begin{equation}\label{LHiggs-generalized}
	\mathcal{L}_{\mathrm{Higgs}}=\sum_{\sigma=1}^m (A_{ij}^{\sigma}(D_{\mu}\phi_{i}^{(\sigma)})^{\dagger}(D^{\mu}\phi_{j}^{(\sigma)}))-V(\phi^{(1)},...,\phi^{(m)}),
	\end{equation}
where
	\begin{equation}\label{generalized-Higgs-potential}
	\begin{split}
	V(\phi^{(1)},...,\phi^{(m)}) & =\sum_{\rho,\sigma=1}^{m}\mu_{\rho\sigma}^2 B_{ij}^{\rho\sigma}(\phi_{i}^{(\rho)\dagger}\phi_{j}^{(\sigma)})+\\
	& +\sum_{\rho,\sigma,\tau,\omega=1}^{m}\lambda_{\rho\sigma\tau\omega}C^{\rho\sigma\tau\omega}_{ijkl}(\phi_{i}^{(\rho)\dagger}\phi_{j}^{(\sigma)})(\phi_{k}^{(\tau)\dagger}\phi_{l}^{(\omega)}).
	\end{split}
	\end{equation}
Please do not be confused by the nomenclature $D_{\mu}$ for the covariant derivative and $D^{\sigma}$ for a representation of $G$. In the Lagrangian only the covariant derivative occurs.
\medskip
\\
Remark: In this chapter we don't use Einstein's summation convention on Greek indices, except it are Lorentz indices.
\medskip
\\
Let us investigate the properties of the coefficients $A^{\sigma}_{ij}$, $B^{\rho\sigma}_{ij}$ and $C^{\rho\sigma\tau\omega}_{ijkl}$. The claim for $G$-invariance of the Higgs Lagrangian (\ref{LHiggs-generalized}) leads to the conditions
	\begin{displaymath}
	\begin{split}
	& [D^{\sigma}]_{ia}^{\ast}[D^{\sigma}]_{jb}A^{\sigma}_{ij}=A_{ab}^{\sigma},\\
	& [D^{\rho}]_{ia}^{\ast}[D^{\sigma}]_{jb}B^{\rho\sigma}_{ij}=B_{ab}^{\rho\sigma},\\
	&
	[D^{\rho}]_{ia}^{\ast}[D^{\sigma}]_{jb}[D^{\tau}]_{kc}^{\ast}[D^{\omega}]_{ld}C^{\rho\sigma\tau\omega}_{ijkl}=C_{abcd}^{\rho\sigma\tau\omega}.
	\end{split}
	\end{displaymath}
(The proof works completely analogous to the proof of proposition \ref{Plepton1}.) The transformation properties under basis transformations
	\begin{displaymath}
	[D^{\sigma}]\mapsto S_{\sigma}^{-1}[D^{\sigma}]S_{\sigma}\quad\quad(\sigma=1,...,m)
	\end{displaymath}
are completely analogous to proposition \ref{Plepton2}:
	\begin{displaymath}
	\begin{split}
	& A^{\sigma}_{ij}\mapsto (S_{\sigma})^{\ast}_{ai}(S_{\sigma})_{bj}A^{\sigma}_{ab},\\
	& B_{ij}^{\rho\sigma}\mapsto (S_{\rho})^{\ast}_{ai}(S_{\sigma})_{bj}B_{ab}^{\rho\sigma},\\
	& C_{ijkl}^{\rho\sigma\tau\omega}\mapsto (S_{\rho})^{\ast}_{ai}(S_{\sigma})_{bj}(S_{\tau})^{\ast}_{ck}(S_{\omega})_{dl}C_{abcd}^{\rho\sigma\tau\omega}.
	\end{split}
	\end{displaymath}
(Again the proof works completely analogous to the proof of proposition \ref{Plepton2}.)

\begin{lemma}\label{LLepton2}
Under a basis transformation
	\begin{displaymath}
	[D^{\sigma}]\mapsto S_{\sigma}^{-1}[D^{\sigma}]S_{\sigma}\quad\quad(\sigma=1,...,m)
	\end{displaymath}
the $G$-invariant Higgs potential (\ref{generalized-Higgs-potential}) transforms as
	\begin{displaymath}
	V(\phi^{(1)},...,\phi^{(m)})\mapsto V'(\phi^{(1)},...,\phi^{(m)})=V(S_1 \phi^{(1)},...,S_m \phi^{(m)}).
	\end{displaymath}
\end{lemma}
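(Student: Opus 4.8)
The plan is to proceed exactly as in the proof of Lemma \ref{Llepton3}, now carrying along the extra labels $\rho,\sigma,\tau,\omega$ that index the irreducible blocks $D^{(1)},\dots,D^{(m)}$. The only input needed is the transformation behaviour of the coefficients $B^{\rho\sigma}_{ij}$ and $C^{\rho\sigma\tau\omega}_{ijkl}$ under $[D^{\sigma}]\mapsto S_{\sigma}^{-1}[D^{\sigma}]S_{\sigma}$, which was recorded just above the lemma (the analogue of Proposition \ref{Plepton2}); the coefficients $A^{\sigma}_{ij}$ sit only in the covariant-derivative part of $\mathcal{L}_{\mathrm{Higgs}}$ and play no role, since the statement concerns $V$ alone.

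First I would treat the quadratic term. Substituting $B^{\rho\sigma}_{ij}\mapsto (S_{\rho})^{\ast}_{ai}(S_{\sigma})_{bj}B^{\rho\sigma}_{ab}$ into $\mu_{\rho\sigma}^{2}B^{\rho\sigma}_{ij}(\phi_{i}^{(\rho)\dagger}\phi_{j}^{(\sigma)})$, and using $(S_{\rho})^{\ast}_{ai}\phi_{i}^{(\rho)\dagger}=\bigl((S_{\rho}\phi^{(\rho)})_{a}\bigr)^{\dagger}$ together with $(S_{\sigma})_{bj}\phi_{j}^{(\sigma)}=(S_{\sigma}\phi^{(\sigma)})_{b}$, the term becomes $\mu_{\rho\sigma}^{2}B^{\rho\sigma}_{ab}\bigl((S_{\rho}\phi^{(\rho)})_{a}^{\dagger}(S_{\sigma}\phi^{(\sigma)})_{b}\bigr)$, i.e. the original quadratic term with each $\phi^{(\rho)}$ replaced by $S_{\rho}\phi^{(\rho)}$. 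Summing over $\rho,\sigma$ reproduces the quadratic part of $V(S_{1}\phi^{(1)},\dots,S_{m}\phi^{(m)})$.

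Next I would repeat the identical manipulation for the quartic term, using $C^{\rho\sigma\tau\omega}_{ijkl}\mapsto (S_{\rho})^{\ast}_{ai}(S_{\sigma})_{bj}(S_{\tau})^{\ast}_{ck}(S_{\omega})_{dl}C^{\rho\sigma\tau\omega}_{abcd}$ and absorbing the four matrices $S_{\rho},S_{\sigma},S_{\tau},S_{\omega}$ into the four Higgs fields $\phi^{(\rho)}_{i},\phi^{(\sigma)}_{j},\phi^{(\tau)}_{k},\phi^{(\omega)}_{l}$ in the same fashion. Adding the transformed quadratic and quartic pieces gives $V'(\phi^{(1)},\dots,\phi^{(m)})=V(S_{1}\phi^{(1)},\dots,S_{m}\phi^{(m)})$, which is the claim. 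There is no genuine obstacle here: the only thing requiring care is the bookkeeping of indices and complex conjugations, and the observation that the relabelled coefficients and the absorbed matrices recombine precisely into $V$ evaluated at the transformed fields — exactly as in the single-block case of Lemma \ref{Llepton3}.
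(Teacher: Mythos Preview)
Your proof is correct and follows exactly the approach the paper intends: the paper's own proof consists of the single sentence ``The proof works in the same way as the proof of lemma \ref{Llepton3},'' and you have spelled out precisely that argument, carrying the extra block labels $\rho,\sigma,\tau,\omega$ through the computation.
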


\begin{proof}
The proof works in the same way as the proof of lemma \ref{Llepton3}.
\end{proof}

\begin{prop}\label{PLepton4}
Let
	\begin{displaymath}
	\mathcal{M}:=\left(\sum_{\sigma=1}^{m}\frac{g_{\sigma}v_{j}^{(\sigma)}}{\sqrt{2}}\Gamma^{\sigma}_{j}\right)^{\dagger}
	\end{displaymath}
be the mass matrix corresponding to the Yukawa-coupling (\ref{GinvYukcoupl}), where we have defined $(\Gamma_{i}^{\sigma})_{jk}:=\Gamma^{\sigma}_{ijk}$ and $v_{j}^{(\sigma)}$ is defined via
	\begin{displaymath}
	\langle 0\vert\phi_{j}^{(\sigma)}(x)\vert0\rangle=\frac{1}{\sqrt{2}}\left(
	\begin{matrix}
	0\\
	v_{j}^{(\sigma)}
	\end{matrix}
	\right).
	\end{displaymath}
Then under the basis transformation
	\begin{displaymath}
	\begin{split}
	& [D_{a}]\mapsto S_{a}^{-1}[D_{a}]S_{a},\\
	& [D_{b}]\mapsto S_{b}^{-1}[D_{b}]S_{b},\\
	& [D^{\sigma}]\mapsto S_{\sigma}^{-1}[D^{\sigma}]S_{\sigma}\quad\quad(\sigma=1,...,m)
	\end{split}
	\end{displaymath}
$\mathcal{M}$ transforms as
	\begin{displaymath}
	\mathcal{M}\mapsto S_{b}^{\dagger}\mathcal{M}S_{a}.
	\end{displaymath}
\end{prop}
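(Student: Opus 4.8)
The plan is to track how each basis-dependent ingredient of $\mathcal{M}$ behaves under the three basis changes and then reassemble it; the coupling constants $g_\sigma$ and the numerical factor $1/\sqrt{2}$ are inert, so only the Clebsch-Gordan coefficients $\Gamma^\sigma_j$ and the vacuum expectation values $v^{(\sigma)}_j$ need attention.

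First I would fix the transformation law of the $\Gamma^\sigma_j$. By Corollary \ref{CY10} these are Clebsch-Gordan coefficients for the decomposition $(D_a^{-1})^\dagger\otimes(D_b^{-1})^T=\bigoplus_\sigma D^\sigma$, so Proposition \ref{Pclebsch1} applies with its representations $A,B,\Phi$ identified with $[(D_a^{-1})^\dagger]$, $[(D_b^{-1})^T]$, $[D^\sigma]$. The one point that needs care is expressing the basis-change matrices of Proposition \ref{Pclebsch1} through $S_a,S_b,S_\sigma$: from the identities
\begin{gather*}
\big((S_a^{-1}[D_a]S_a)^{-1}\big)^\dagger=S_a^\dagger\,[(D_a^{-1})^\dagger]\,(S_a^\dagger)^{-1},\\
\big((S_b^{-1}[D_b]S_b)^{-1}\big)^T=S_b^T\,[(D_b^{-1})^T]\,(S_b^T)^{-1},
\end{gather*}
one reads off that the matrix ``$S_A$'' of Proposition \ref{Pclebsch1} obeys $S_A^{-1}=S_a^\dagger$, that ``$(S_B^{-1})^T$'' equals $S_b$, and that ``$S_\lambda$'' equals $S_\sigma$. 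Substituting into the law $C_i\mapsto S_A^{-1}C_m(S_B^{-1})^T(S_\lambda)_{mi}$ of Proposition \ref{Pclebsch1} gives
\[
\Gamma^\sigma_j\ \longmapsto\ \sum_m S_a^\dagger\,\Gamma^\sigma_m\,S_b\,(S_\sigma)_{mj}.
\]

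Next I would record that $v^{(\sigma)}$ is the coordinate vector of an element of the representation space of $D^\sigma$, so under $[D^\sigma]\mapsto S_\sigma^{-1}[D^\sigma]S_\sigma$ its components transform as $v^{(\sigma)}_j\mapsto(S_\sigma^{-1})_{jl}\,v^{(\sigma)}_l$ (this may also be justified from Lemma \ref{LLepton2} together with the fact that the $v^{(\sigma)}$ minimize the Higgs potential). Multiplying the two laws, for each fixed $\sigma$ the $S_\sigma$-factors telescope through $\sum_j(S_\sigma)_{mj}(S_\sigma^{-1})_{jl}=\delta_{ml}$, so that
\[
\sum_j v^{(\sigma)}_j\Gamma^\sigma_j\ \longmapsto\ S_a^\dagger\Big(\sum_m v^{(\sigma)}_m\Gamma^\sigma_m\Big)S_b .
\]
Summing over $\sigma$ with the weights $g_\sigma/\sqrt{2}$ and then applying the Hermitian conjugate yields $\mathcal{M}\mapsto\big(S_a^\dagger\,\mathcal{M}^\dagger\,S_b\big)^\dagger=S_b^\dagger\,\mathcal{M}\,S_a$, which is the assertion.

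I expect the only genuine obstacle to be the bookkeeping in the second paragraph --- propagating the dagger and the transpose through the identifications $A=(D_a^{-1})^\dagger$, $B=(D_b^{-1})^T$ so as to read off $S_A$ and $S_B$ correctly from Proposition \ref{Pclebsch1}; everything after that is the same routine index manipulation already used in Propositions \ref{Plepton1}, \ref{Plepton2} and \ref{Pclebsch1}. It is worth noting that the argument never uses unitarity of $S_a,S_b,S_\sigma$: the cancellation rests only on $S_\sigma S_\sigma^{-1}=\mathbbm{1}$, and the leftover $S_a^\dagger$ and $S_b$ simply come along for the ride.
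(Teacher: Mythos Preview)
Your proposal is correct and follows essentially the same route as the paper: identify $\Gamma^\sigma_j$ as Clebsch--Gordan coefficients for $(D_a^{-1})^\dagger\otimes(D_b^{-1})^T$, push the basis changes through the dagger and transpose to read off the correct inputs to Proposition~\ref{Pclebsch1}, obtain $\Gamma^\sigma_j\mapsto S_a^\dagger\Gamma^\sigma_m S_b(S_\sigma)_{mj}$, and then cancel $S_\sigma$ against the transformation $v^{(\sigma)}_j\mapsto(S_\sigma^{-1})_{jk}v^{(\sigma)}_k$ coming from Lemma~\ref{LLepton2} and the minimization condition. Your closing remark that unitarity of the $S$'s is never used is a valid observation not made explicit in the paper.
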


\begin{proof}
$\Gamma_{ijk}^{\sigma}$ are Clebsch-Gordan coefficients for
	\begin{displaymath}
	(D_{a}^{-1})^{\dagger}\otimes (D_{b}^{-1})^T=D^{\sigma}\oplus...\enspace.
	\end{displaymath}
$[D^{-1}_{a}]^{\dagger}$ and $[D_{b}^{-1}]^T$ transform as
	\begin{displaymath}
	\begin{split}
	& [D_{a}^{-1}]^{\dagger}\mapsto S_{a}^{\dagger}[D_{a}^{-1}]^{\dagger}(S_{a}^{-1})^{\dagger},\\
	& [D_{b}^{-1}]^{T}\mapsto S_{b}^{T}[D_{b}^{-1}]^{T}(S_{b}^{-1})^{T}
	\end{split}
	\end{displaymath}
and thus, using proposition \ref{Pclebsch1}, $\Gamma_{i}^{\sigma}$ transforms as
	\begin{displaymath}
	\Gamma_{i}^{\sigma}\mapsto S_{a}^{\dagger}\Gamma_{m}^{\sigma}S_{b}(S_{\sigma})_{mi}.
	\end{displaymath}
We also have to take into account that the Higgs potential changes under
	\begin{displaymath}
	[D^{\sigma}]\mapsto S_{\sigma}^{-1}[D^{\sigma}] S_{\sigma},
	\end{displaymath}
and thus $v_{j}^{(\sigma)}$ are not invariant under basis change. $\langle0\vert\phi^{(\sigma)}\vert0\rangle$ are defined by the condition:
	\begin{quote}
	$V(\phi^{(1)},...,\phi^{(m)})$ \textit{has a minimum at} $(\langle0\vert\phi^{(1)}\vert0\rangle,...,\langle0\vert\phi^{(m)}\vert0\rangle)$\textit{, i.s.}
		\begin{displaymath}
		V(\langle0\vert\phi^{(1)}\vert0\rangle,...,\langle0\vert\phi^{(m)}\vert0\rangle)=\mathrm{min}.
		\end{displaymath}
	\end{quote}	
In the new basis this equation becomes ($\rightarrow$ lemma \ref{LLepton2})
	\begin{displaymath}
	V(S_{1}\langle0\vert\phi^{(1)}\vert0\rangle',...,S_m\langle0\vert\phi^{(m)}\vert0\rangle')=\mathrm{min}.
	\end{displaymath}
From this we find
	\begin{displaymath}
	\langle0\vert\phi^{(\sigma)}\vert0\rangle\mapsto\langle0\vert\phi^{(\sigma)}\vert0\rangle'= S_{\sigma}^{-1}\langle0\vert\phi^{(\sigma)}\vert0\rangle\Rightarrow v_{j}^{(\sigma)}\mapsto (S_{\sigma}^{-1})_{jk}v_{k}^{(\sigma)}.
	\end{displaymath}
The transformation property of $\mathcal{M}$ is therefore given by
	\begin{displaymath}
	\mathcal{M}^{\dagger}\mapsto \sum_{\sigma=1}^{m}\frac{g_{\sigma}(S_{\sigma}^{-1})_{jk}v_{k}^{(\sigma)}}{\sqrt{2}}S_{a}^{\dagger}\Gamma_{m}^{\sigma}S_{b}(S_{\sigma})_{mj}=S_{a}^{\dagger}\mathcal{M}^{\dagger}S_{b}\Rightarrow \mathcal{M}\mapsto S_{b}^{\dagger}\mathcal{M}S_{a}.
	\end{displaymath}
\end{proof}
\hspace{0mm}\\
Before we turn to the lepton sector, let us further investigate the coefficients $A_{ij}^{\sigma}$ and $B_{ij}^{\rho\sigma}$ of the Higgs Lagrangian (\ref{LHiggs-generalized}).

\begin{prop}\label{Plepton5}
Let $\mathcal{L}_{\mathrm{Higgs}}$ be a $G$-invariant Lagrangian of the form (\ref{LHiggs-generalized}), then
	\begin{displaymath}
	\begin{split}
	& A_{ij}^{\sigma}=A^{\sigma}\delta_{ij},\\
	& B^{\rho\sigma}_{ij}=\beta^{\rho\sigma}\delta_{ij}\quad\mbox{if }D^{\rho}\sim D^{\sigma},\\
	& B^{\rho\sigma}_{ij}=0\quad\mbox{if }D^{\rho}\not\sim D^{\sigma}.
	\end{split}
	\end{displaymath}
($\beta^{\rho\sigma}$ is a complex number.)
\end{prop}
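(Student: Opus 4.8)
The plan is to use the $G$-invariance conditions on the coefficients $A^{\sigma}_{ij}$ and $B^{\rho\sigma}_{ij}$ that were already recorded just before the proposition (they are derived exactly as in proposition \ref{Plepton1}), rewrite them in matrix form, and then recognise them as a commutation relation for $A^{\sigma}$ and an intertwining relation for $B^{\rho\sigma}$ between irreducible representations, after which Schur's lemma (\ref{LA50}) does all the work. The one preliminary step is to arrange that the irreducible representations $D^{\sigma}$ are \emph{unitary}, which is possible by theorem \ref{TA42}; then $[D^{\sigma}(g)]^{\dagger}=[D^{\sigma}(g)]^{-1}$. I shall also fix the harmless convention that equivalent irreducible representations occurring in $D_{\phi}=D^{1}\oplus\dots\oplus D^{m}$ are written as the \emph{same} matrix representation.

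First I would treat $A^{\sigma}$. Writing $(A^{\sigma})_{ij}:=A^{\sigma}_{ij}$, the invariance condition $[D^{\sigma}(g)]^{\ast}_{ia}[D^{\sigma}(g)]_{jb}A^{\sigma}_{ij}=A^{\sigma}_{ab}$ reads in matrix form $[D^{\sigma}(g)]^{\dagger}A^{\sigma}[D^{\sigma}(g)]=A^{\sigma}$ for all $g\in G$, which by unitarity of $D^{\sigma}$ is equivalent to $A^{\sigma}[D^{\sigma}(g)]=[D^{\sigma}(g)]A^{\sigma}$ for all $g$. Since $D^{\sigma}$ is irreducible, Schur's lemma forces $A^{\sigma}=A^{\sigma}\,\mathbbm{1}$ for a scalar (which I again denote $A^{\sigma}$), i.e.\ $A^{\sigma}_{ij}=A^{\sigma}\delta_{ij}$. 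The same computation handles $B^{\rho\sigma}$: the condition $[D^{\rho}(g)]^{\ast}_{ia}[D^{\sigma}(g)]_{jb}B^{\rho\sigma}_{ij}=B^{\rho\sigma}_{ab}$ becomes $[D^{\rho}(g)]^{\dagger}B^{\rho\sigma}[D^{\sigma}(g)]=B^{\rho\sigma}$, hence $B^{\rho\sigma}[D^{\sigma}(g)]=[D^{\rho}(g)]B^{\rho\sigma}$ for all $g\in G$, so $B^{\rho\sigma}$ is an intertwiner between $D^{\sigma}$ and $D^{\rho}$. Schur's lemma then gives $B^{\rho\sigma}=0$ whenever $D^{\rho}\not\sim D^{\sigma}$, and $B^{\rho\sigma}=\beta^{\rho\sigma}\,\mathbbm{1}$ with $\beta^{\rho\sigma}\in\mathbb{C}$ whenever $D^{\rho}\sim D^{\sigma}$ (using the convention that equivalent summands are represented by identical matrices), i.e.\ $B^{\rho\sigma}_{ij}=\beta^{\rho\sigma}\delta_{ij}$, which completes the proof.

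There is no serious obstacle here; the proof is essentially a two-line application of Schur's lemma once the invariance conditions are in matrix form. The one point that deserves care is the unitarity/basis convention: in a general (non-unitary) basis the daggers in the invariance conditions cannot be traded for inverses, and the conclusions would instead read $A^{\sigma}_{ij}=A^{\sigma}(S^{\dagger}S)_{ij}$ and $B^{\rho\sigma}$ proportional to a fixed unitary intertwiner rather than to the identity; choosing the $D^{\sigma}$ unitary (and equivalent summands equal) is precisely what makes the clean statement true. It is also worth noting — and I would remark on it — that no analogous diagonality statement is made, or should be expected, for the quartic coefficients $C^{\rho\sigma\tau\omega}_{ijkl}$, since these are rank-four tensors on which Schur's lemma does not act directly.
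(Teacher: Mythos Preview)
Your proof is correct and follows essentially the same route as the paper: rewrite the $G$-invariance conditions for $A^{\sigma}$ and $B^{\rho\sigma}$ in matrix form as intertwining relations between the irreducible representations, then apply Schur's lemma. You are in fact slightly more careful than the paper about the unitarity and ``equivalent summands represented identically'' conventions needed to land on $\beta^{\rho\sigma}\mathbbm{1}$ rather than a general intertwiner, which is a welcome clarification.
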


\begin{proof}
We will only investigate the coefficients $B^{\rho\sigma}_{ij}$, the analysis of $A_{ij}^{\sigma}$ is included in the analysis of $B^{\rho\sigma}_{ij}$ as the special case $\rho=\sigma$.
\medskip
\\
$G$-invariance of $\mathcal{L}_{\mathrm{Higgs}}$ requires
	\begin{displaymath}
	[D^{\rho}]_{ia}^{\ast}[D^{\sigma}]_{jb}B^{\rho\sigma}_{ij}=B_{ab}^{\rho\sigma},
	\end{displaymath}
which is in matrix form
	\begin{displaymath}
	[D^{\rho}]^{\dagger}B^{\rho\sigma}[D^{\sigma}]=B^{\rho\sigma}.
	\end{displaymath}
This can be rewritten as
	\begin{displaymath}
	([D^{\rho}]^{-1})^{\dagger}B^{\rho\sigma}=B^{\rho\sigma}[D^{\sigma}].
	\end{displaymath}
Since $D^{\rho},D^{\sigma}$ are representations of a finite group, they are equivalent to unitary representations, thus $((D^{\rho})^{-1})^{\dagger}\sim D^{\rho}$, and we find
	\begin{displaymath}
	[D^{\rho}]B^{\rho\sigma}=B^{\rho\sigma}[D^{\sigma}].
	\end{displaymath}
Using Schur's lemma ($\rightarrow$ lemma \ref{LA50}) and the fact that $D^{\rho}$ and $D^{\sigma}$ are irreducible, we notice that there are the following two possibilities:
	\begin{enumerate}
	 \item $D^{\rho}\not\sim D^{\sigma}\Rightarrow B^{\rho\sigma}=0.$
	 \item $D^{\rho}\sim D^{\sigma}\Rightarrow$ $B^{\rho\sigma}=\beta^{\rho\sigma}\mathbbm{1}$, with a complex number $\beta^{\rho\sigma}$.
	\end{enumerate}
\end{proof}
\hspace{0mm}
\\
We now want to concentrate on the lepton sector. From proposition \ref{Plepton5} we know that $A^{\sigma}_{ij}=A^{\sigma}\delta_{ij}$. To keep analogy to the Higgs Lagrangian of the standard model we set $A^{\sigma}=1\enspace\forall \sigma\in\{1,...,m\}$.
\bigskip
\bigskip
\\
The total $G$-invariant Lagrangian of our model becomes
	\begin{equation}\label{totalL}
	\begin{split}
	\mathcal{L} & =-\frac{1}{2}\mathrm{Tr}(W_{\lambda\rho}W^{\lambda\rho})-\frac{1}{4}\tilde{B}_{\lambda\rho}\tilde{B}^{\lambda\rho}+\\
	&\phantom{=}+\sum_{\tau=1}^{m}(D_{\mu}\phi_{i}^{(\tau)})^{\dagger}(D^{\mu}\phi_{i}^{(\tau)})-V(\phi^{(1)},...,\phi^{(m)})+\\
	&\phantom{=}+\bar{l}'i\gamma^{\mu}D_{\mu}l'+\bar{\nu}i\gamma^{\mu}D_{\mu}\nu+\\
	&\phantom{=}-(\sum_{\sigma}g_{\sigma}\Gamma_{ijk}^{(cl)\sigma}\bar{D'}_{jL}\phi_{i}^{(\sigma)}l'_{kR}+\mathrm{H.c.})+\\
	&\phantom{=}-(\sum_{\rho}g_{\rho}\Gamma_{ijk}^{(\nu)\rho}\bar{D'}_{jL}\tilde{\phi_{i}}^{(\rho)}\nu_{kR}+\mathrm{H.c.}),
	\end{split}
	\end{equation}
where
	\begin{itemize}
	 \item[] $\tilde{B}_{\lambda\rho}$ is the field strength tensor of the $U(1)_{Y}$-gauge field,
	 \item[] $\mathnormal{l}_{L,R}'=\left(
		\begin{matrix}
		 e_{L,R}' \\ \mu_{L,R}' \\ \tau_{L,R}'
		\end{matrix}
		\right)$, $\nu_{L,R}=\left(
		\begin{matrix}
		 \nu_{e L,R} \\ \nu_{\mu L,R} \\ \nu_{\tau L,R}
		\end{matrix}
		\right)$, $D'_{L}=\left(
		\begin{matrix}
		 D_{e L}' \\ D_{\mu L}' \\ D_{\tau L}'
		\end{matrix}
		\right)$, $D'_{\alpha L}=\left(
		\begin{matrix}
		\nu_{\alpha L} \\ \alpha'_{L}
		\end{matrix}
		\right)$,
	\item[] $\nu=\nu_L+\nu_R$,\quad $l'=l'_L+l'_R$,\quad $\tilde{\phi}=i\tau^{2}\phi^{\ast}$, $\tau^2=\left(
	\begin{matrix}
		0 & -i \\ i & 0
		\end{matrix}
	\right)$.
	\end{itemize}
$m$ is the total number of Higgs doublets of the model. $\rho$ and $\sigma$ label the Higgs doublets that interact with the right-handed neutrinofields and the right-handed charged lepton fields, respectively. The index $\tau$ runs over all Higgs doublets.
\\
As always, the representations under which the $\phi^{(\tau)}$ transform are determined by the claim for $G$-invariance of the two Yukawa-couplings
	\begin{displaymath}
	\begin{split}
	& \mathcal{L}_{\mathrm{Yukawa}}^{(cl)}=-(\sum_{\sigma}g_{\sigma}\Gamma_{ijk}^{(cl)\sigma}\bar{D'}_{jL}\phi_{i}^{(\sigma)}l'_{kR}+\mathrm{H.c.}),\\ & \mathcal{L}_{\mathrm{Yukawa}}^{(\nu)}=-(\sum_{\rho}g_{\rho}\Gamma_{ijk}^{(\nu)\rho}\bar{D'}_{jL}\tilde{\phi_{i}}^{(\rho)}\nu_{kR}+\mathrm{H.c.}).
	\end{split}
	\end{displaymath}
Remark: Due to the extended Higgs sector the formulae for the vector boson masses are different to those of the standard model.
\medskip
\\
We will now investigate $G$-invariance under
	\begin{equation}\label{Gtrafo}
	\begin{split}
	& l_{L,R}'\mapsto [D_{L,R}^{cl}]l_{L,R}',\\
	& \nu_{L,R}\mapsto [D_{L,R}^{\nu}]\nu_{L,R},\\
	& \phi^{(\sigma)}\mapsto [D^{\sigma}]\phi^{(\sigma)},\\
	& \tilde{\phi}^{(\rho)}\mapsto [D^{\rho}]\tilde{\phi}^{(\rho)}.
	\end{split}
	\end{equation}
Since $\nu_{\alpha L}$ and $\alpha'_{L}$ are members of an $SU(2)_{I}$-doublet they must transform equally, thus
	\begin{displaymath}
	[D^{cl}_{L}]=[D^{\nu}_{L}]\quad (\Leftrightarrow D^{cl}_{L}=D^{\nu}_{L}).
	\end{displaymath}
In order to ensure that $\bar{l}'i\gamma^{\mu}D_{\mu}l'+\bar{\nu}i\gamma^{\mu}D_{\mu}\nu$ and $\sum_{\tau=1}^{m}(D_{\mu}\phi_{i}^{(\tau)})^{\dagger}(D^{\mu}\phi_{i}^{(\tau)})$ are invariant under the transformation (\ref{Gtrafo}) we have to assume that $[D_{L,R}^{cl}]$, $[D_{L,R}^{\nu}]$ and $[D^{\tau}]$ \textit{are unitary}. $G$-invariance of the Yukawa-couplings can be achieved by appropriate choice of $\Gamma^{(cl)\sigma}_{ijk}$ and $\Gamma^{(\nu)\rho}_{ijk}$:
	\begin{itemize}
	 \item $\Gamma^{(cl)\sigma}_{ijk}$ are Clebsch-Gordan coefficients for
		\begin{displaymath}
		((D^{cl}_{L})^{-1})^{\dagger}\otimes ((D^{cl}_{R})^{-1})^T=D^{\sigma}\oplus...
		\end{displaymath}
	 \item $\Gamma^{(\nu)\rho}_{ijk}$ are Clebsch-Gordan coefficients for
		\begin{displaymath}
		((D^{cl}_{L})^{-1})^{\dagger}\otimes ((D^{\nu}_{R})^{-1})^T=D^{\rho}\oplus...
		\end{displaymath}
	\end{itemize}
We already found that only unitary matrix representations $[D_{L,R}^{cl}]$ and $[D_{L,R}^{\nu}]$ are allowed, thus we can also write
	\begin{displaymath}
	D^{cl}_{L}\otimes D^{cl\ast}_{R}=D^{\sigma}\oplus...,\quad\quad D^{cl}_{L}\otimes D^{\nu\ast}_{R}=D^{\rho}\oplus...\enspace.
	\end{displaymath}

\section{Lepton masses and mixing}
We now know how to construct the fermionic parts of $G$-invariant Lagrangians of the type (\ref{totalL}). The main objects we are interested in are the mass matrices. From the Lagrangian (\ref{totalL}) we find (using the mass matrix definition of subsection \ref{GIMsubsection})
	\begin{displaymath}
	(\mathcal{M}^{cl})^{\dagger}=\sum_{\sigma}\frac{g_{\sigma}v_{i}^{(\sigma)}}{\sqrt{2}}\Gamma^{(cl)\sigma}_{i},\quad\quad (\mathcal{M}^{\nu})^{\dagger}=\sum_{\rho}\frac{g_{\rho}v_{i}^{(\rho)}}{\sqrt{2}}\Gamma^{(\nu)\rho}_{i},
	\end{displaymath}
where $v_{i}^{(\sigma)}$ and $v_{i}^{(\rho)}$ are defined via the vacuum expectation values
	\begin{displaymath}
	\begin{split}
	& \langle 0\vert\phi_{j}^{(\sigma)}(x)\vert0\rangle=\frac{1}{\sqrt{2}}\left(
	\begin{matrix}
	0\\
	v_{j}^{(\sigma)}
	\end{matrix}
	\right),\\
	& \langle 0\vert\tilde{\phi}_{j}^{(\rho)}(x)\vert0\rangle=\frac{1}{\sqrt{2}}\left(
	\begin{matrix}
	v_{j}^{(\rho)}\\
	0
	\end{matrix}
	\right).
	\end{split}
	\end{displaymath}
We already know that the mass matrices are basis dependent, their transformation properties under basis change were developed in proposition \ref{PLepton4}. Any observable physical quantity must not be basis dependent, and in fact it will turn out that the lepton masses and the mixing matrix are indeed not dependent on the basis.

\begin{theorem}\label{Tlepton5}
The lepton masses and the lepton mixing matrix corresponding to the $G$-invariant Lagrangian (\ref{totalL}) do not depend on the choice of basis.
\end{theorem}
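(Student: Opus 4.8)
The plan is to reduce the statement to the transformation law of the Dirac mass matrices under a change of basis and then to read off directly that the singular values (the masses) and the combination $U=U_{cl}^{\dagger}U_{\nu}$ (the mixing matrix) are invariant. The only freedom available in the construction of the Lagrangian (\ref{totalL}) is the choice of basis in which the (necessarily unitary, cf.\ Section \ref{GinvLagr}) representations $D^{cl}_{L},D^{cl}_{R},D^{\nu}_{R},D^{(\sigma)}$ are written as matrices; such a change amounts to conjugation $[D]\mapsto S^{-1}[D]S$ by unitary matrices $S$. The crucial structural fact I would use is that one and the same $S_{L}$ governs the left-handed charged and neutral leptons, because the $SU(2)_{I}$-doublet structure forces $D^{cl}_{L}=D^{\nu}_{L}$. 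By Proposition \ref{PLepton4} --- which already packages the fact that the vacuum expectation values transform as well (Lemma \ref{LLepton2}) --- the mass matrices then transform as $\mathcal{M}^{cl}\mapsto (S^{cl}_{R})^{\dagger}\mathcal{M}^{cl}S_{L}$ and $\mathcal{M}^{\nu}\mapsto (S^{\nu}_{R})^{\dagger}\mathcal{M}^{\nu}S_{L}$ with $S_{L},S^{cl}_{R},S^{\nu}_{R}$ unitary.

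For the masses I would argue as follows. Bidiagonalizing as in Theorem \ref{TY1}, the diagonal matrix $\hat{\mathcal{M}}^{cl}$ consists of the positive square roots of the eigenvalues of $\mathcal{M}^{cl}(\mathcal{M}^{cl})^{\dagger}$. Under the transformation above this matrix becomes $(S^{cl}_{R})^{\dagger}\,\mathcal{M}^{cl}(\mathcal{M}^{cl})^{\dagger}\,S^{cl}_{R}$, which is similar to $\mathcal{M}^{cl}(\mathcal{M}^{cl})^{\dagger}$; hence its spectrum, and therefore the charged-lepton masses, is unchanged. The identical computation with $S^{\nu}_{R}$ handles the neutrino masses.

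For the mixing matrix I would fix a bidiagonalization $\mathcal{M}^{cl}=V_{cl}\hat{\mathcal{M}}^{cl}U_{cl}^{\dagger}$, $\mathcal{M}^{\nu}=V_{\nu}\hat{\mathcal{M}}^{\nu}U_{\nu}^{\dagger}$ in the original basis and check that $V'_{cl}:=(S^{cl}_{R})^{\dagger}V_{cl}$, $U'_{cl}:=S_{L}^{\dagger}U_{cl}$, $V'_{\nu}:=(S^{\nu}_{R})^{\dagger}V_{\nu}$, $U'_{\nu}:=S_{L}^{\dagger}U_{\nu}$ are again unitary and bidiagonalize the transformed mass matrices with the same $\hat{\mathcal{M}}^{cl},\hat{\mathcal{M}}^{\nu}$; hence they are admissible bidiagonalizers in the new basis. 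With this choice
\[
U'=(U'_{cl})^{\dagger}U'_{\nu}=U_{cl}^{\dagger}S_{L}S_{L}^{\dagger}U_{\nu}=U_{cl}^{\dagger}U_{\nu}=U ,
\]
the cancellation of $S_{L}$ being precisely the place where $D^{cl}_{L}=D^{\nu}_{L}$ enters.

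The hard part will be dealing with the non-uniqueness of the bidiagonalization --- rephasings of the columns of $V_{cl},U_{cl},\dots$ and, when masses are degenerate, rotations within the degenerate subspaces --- which means $U$ is not literally a single well-defined matrix. I would handle this by showing that the residual ambiguity is of the same type in every basis: any bidiagonalizer in the new basis differs from the transported one $\bigl((S^{cl}_{R})^{\dagger}V_{cl},\,S_{L}^{\dagger}U_{cl}\bigr)$ only by such an unphysical transformation, so the whole set of mixing matrices obtainable from (\ref{totalL}) --- and hence every physical quantity extracted from $U$ through the fixed conventions of Chapter \ref{neutrinoosc} (the mixing angles and the CP-violating phases) --- is basis-independent. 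I would close with the remark that adding a Majorana mass term for the left-handed sector does not affect the argument, since only $U_{L}$-type matrices enter the observable mixing and these are exactly the ones controlled by the single $S_{L}$.
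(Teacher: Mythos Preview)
Your proposal is correct and follows essentially the same route as the paper: invoke Proposition~\ref{PLepton4} for the transformation law $\mathcal{M}\mapsto S_{R}^{\dagger}\mathcal{M}S_{L}$, deduce invariance of the spectrum of $\mathcal{M}\mathcal{M}^{\dagger}$ under unitary conjugation by $S_{R}$, and observe that both $U_{cl}$ and $U_{\nu}$ pick up the \emph{same} left factor $S_{L}^{\dagger}$ (because $D^{cl}_{L}=D^{\nu}_{L}$), which cancels in $U=U_{cl}^{\dagger}U_{\nu}$. Your additional paragraph on the residual rephasing/degeneracy ambiguity of the bidiagonalization and the Majorana remark go beyond what the paper proves; the paper simply stops after the cancellation, so you may trim that discussion if you want to match its scope.
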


\begin{proof}
The lepton masses are obtained by bidiagonalization of the mass matrices. Looking at the proof of theorem \ref{TY1} we find that the lepton masses are the positive square roots
of the eigenvalues of $\mathcal{M}^{cl}\mathcal{M}^{cl}\hspace{0mm}^{\dagger}$ and $\mathcal{M}^{\nu}\mathcal{M}^{\nu}\hspace{0mm}^{\dagger}$. Under basis change
	\begin{displaymath}
	\begin{split}
	& [D_{L,R}^{cl}]\mapsto (S^{cl}_{L,R})^{-1}[D_{L,R}^{cl}]S^{cl}_{L,R},\\
	& [D_{L,R}^{\nu}]\mapsto (S^{\nu}_{L,R})^{-1}[D_{L,R}^{\nu}]S^{\nu}_{L,R},\\
	& [D^{\sigma}]\mapsto S_{\sigma}^{-1}[D^{\sigma}]S_{\sigma},\\
	& [D^{\rho}]\mapsto S_{\rho}^{-1}[D^{\rho}]S_{\rho}\\
	\end{split}
	\end{displaymath}
(from the unitarity of $[D_{L,R}^{cl}]$ and $[D_{L,R}^{\nu}]$ it follows that $S^{cl}_{L,R}$ and $S^{\nu}_{L,R}$ must be unitary too) $\mathcal{M}^{cl}\mathcal{M}^{cl}\hspace{0mm}^{\dagger}$ and $\mathcal{M}^{\nu}\mathcal{M}^{\nu}\hspace{0mm}^{\dagger}$ transform as ($\rightarrow$ proposition \ref{PLepton4}):
	\begin{displaymath}
	\mathcal{M}^{cl}\mathcal{M}^{cl}\hspace{0mm}^{\dagger}\mapsto (S_{R}^{cl})^{\dagger}\mathcal{M}^{cl}\underbrace{S^{cl}_{L}(S^{cl}_{L})^{\dagger}}_{\mathbbm{1}_{3}}\mathcal{M}^{cl}\hspace{0mm}^{\dagger}S_{R}^{cl}=(S_{R}^{cl})^{\dagger}\mathcal{M}^{cl}\mathcal{M}^{cl}\hspace{0mm}^{\dagger}S_{R}^{cl}
	\end{displaymath}
and in a similar manner
	\begin{displaymath}
	\mathcal{M}^{\nu}\mathcal{M}^{\nu}\hspace{0mm}^{\dagger}\mapsto (S_{R}^{\nu})^{\dagger}\mathcal{M}^{\nu}\underbrace{S^{cl}_{L}(S^{cl}_{L})^{\dagger}}_{\mathbbm{1}_{3}}\mathcal{M}^{\nu}\hspace{0mm}^{\dagger}S_{R}^{\nu}=(S_{R}^{\nu})^{\dagger}\mathcal{M}^{\nu}\mathcal{M}^{\nu}\hspace{0mm}^{\dagger}S_{R}^{\nu}.
	\end{displaymath}
The unitary transformations $S_{R}^{cl}$ and $S_{R}^{\nu}$ do not change the eigenvalues of $\mathcal{M}^{cl}\mathcal{M}^{cl}\hspace{0mm}^{\dagger}$ and $\mathcal{M}^{\nu}\mathcal{M}^{\nu}\hspace{0mm}^{\dagger}$, thus the lepton masses are basis independent.
\\
Now for the mixing matrix. Bidiagonalization of $\mathcal{M}^{cl}$ and $\mathcal{M}^{\nu}$ leads to
	\begin{displaymath}
	\mathcal{M}^{cl}=V_{cl}\hat{\mathcal{M}}^{cl}U_{cl}^{\dagger},\quad \mathcal{M}^{\nu}=V_{\nu}\hat{\mathcal{M}}^{\nu}U_{\nu}^{\dagger}
	\end{displaymath}
In the new basis these equations become
	\begin{displaymath}
	\begin{split}
	& (\mathcal{M}^{cl})'=(S_{R}^{cl})^{\dagger}\mathcal{M}^{cl}S^{cl}_{L}=\underbrace{(S_{R}^{cl})^{\dagger}V_{cl}}_{V_{cl}'}\hat{\mathcal{M}}^{cl}\underbrace{U_{cl}^{\dagger}S^{cl}_{L}}_{U_{cl}'\hspace{0mm}^{\dagger}},\\
	&
	(\mathcal{M}^{\nu})'=(S_{R}^{\nu})^{\dagger}\mathcal{M}^{\nu}S^{cl}_{L}=\underbrace{(S_{R}^{\nu})^{\dagger}V_{\nu}}_{V_{\nu}'}\hat{\mathcal{M}}^{\nu}\underbrace{U_{\nu}^{\dagger}S^{cl}_{L}}_{U_{\nu}'\hspace{0mm}^{\dagger}}.
	\end{split}
	\end{displaymath}
Thus under basis transformation
	\begin{displaymath}
	U_{cl}\mapsto (S^{cl}_{L})^{\dagger}U_{cl},\quad U_{\nu}\mapsto (S^{cl}_{L})^{\dagger}U_{\nu},
	\end{displaymath}
and the lepton mixing matrix
	\begin{displaymath}
	U:=U_{cl}^{\dagger}U_{\nu}
	\end{displaymath}
remains invariant under basis transformation.
\end{proof}
\hspace{0mm}\\
Theorem \ref{Tlepton5} justifies our treatment of the Clebsch-Gordan coefficients in chapter \ref{SU3chapter} - we only need them up to basis transformations.

\section{Model building with finite family symmetry groups}
A model realizing finite family symmetry groups in the lepton sector can be constructed in the following manner:
	\begin{enumerate}
	 \item Choose a finite group $G$ and three-dimensional irreducible representations $D^{(cl)}_{L,R}$ and $D^{(\nu)}_{R}$ ($D^{(\nu)}_{L}=D^{(cl)}_{L}$). (The restriction on irreducible representations is not necessary in general, we just restrict on 3-dimensional irreducible representations, because these are the representations we had studied in this thesis.)
	\item Construct a $G$-invariant Lagrangian of the form (\ref{totalL}) by appropriate choice of Clebsch-Gordan coefficients as described in section \ref{GinvLagr}. The construction of the coefficients for the Higgs Lagrangian will need a similar study as the study of $G$-invariant Yukawa-couplings in chapter \ref{yukawa}. If one uses the vacuum expectation values as free parameters, one does not need to consider the $G$-invariant Higgs Lagrangian at all.
	\item Construct the mass matrices $\mathcal{M}^{cl}$ and $\mathcal{M}^{\nu}$. They will contain the following free parameters (if the Higgs Lagrangian is ignored as described in 2.):
		\begin{displaymath}
		g_{\sigma}, v_{j}^{(\sigma)}, g_{\rho}, v_{j}^{(\rho)}.
		\end{displaymath}
	\item One could now try to fit the free parameters in order to obtain the desired lepton masses and mixing angles. Due to the high number of parameters this may need usage of a computer. Also the masses and mixing angles as functions of the free parameters would be quite interesting, but bidiagonalization is unlikely to work analytically in the presence of many free parameters. However, there are at least four functions of the free parameters that can be easily calculated by hand, namely
		\begin{displaymath}
		\begin{split}
		& \mathrm{Tr}(\mathcal{M}^{cl}\mathcal{M}^{cl}\hspace{0mm}^{\dagger})=m_{e}^{2}+m_{\mu}^{2}+m_{\tau}^{2},\\
		& \mathrm{Tr}(\mathcal{M}^{\nu}\mathcal{M}^{\nu}\hspace{0mm}^{\dagger})=m_{\nu_{1}}^{2}+m_{\nu_{2}}^{2}+m_{\nu_{3}}^{2},\\
		& \mathrm{det}(\mathcal{M}^{cl}\mathcal{M}^{cl}\hspace{0mm}^{\dagger})=(m_{e}m_{\mu}m_{\tau})^{2},\\
		& \mathrm{det}(\mathcal{M}^{\nu}\mathcal{M}^{\nu}\hspace{0mm}^{\dagger})=(m_{\nu_{1}}m_{\nu_{2}}m_{\nu_{3}})^{2}.
		\end{split}
		\end{displaymath}
	\end{enumerate}

\section{Example for a model involving a finite family symmetry group}
Let us at last give a short example for a model involving a finite family symmetry group. Let us choose the group $\Sigma(36\phi)$ which has been investigated in subsection \ref{subsectionS36phi}. $\Sigma(36\phi)$ has 8 non-equivalent three-dimensional irreducible representations, which can all be constructed from $\textbf{\underline{3}}_{1}$ and $\textbf{\underline{3}}_{1}^{\ast}$ by multiplication with one-dimensional representations. Let for example
	\begin{displaymath}
	D_{L}^{cl}=\textbf{\underline{3}}_{1},
	D_{R}^{cl}=D_{R}^{\nu}=\textbf{\underline{3}}_{1}^{\ast}.
	\end{displaymath}
For the Higgs doublets we have to consider the tensor products
	\begin{displaymath}
	\begin{split}
	& ((D^{cl}_{L})^{-1})^{\dagger}\otimes ((D^{cl}_{R})^{-1})^T=\textbf{\underline{3}}_{1}\otimes\textbf{\underline{3}}_{1}=\textbf{\underline{3}}_{4}\oplus \textbf{\underline{3}}_{5}\oplus\textbf{\underline{3}}_{6},\\
	& ((D^{cl}_{L})^{-1})^{\dagger}\otimes ((D^{\nu}_{R})^{-1})^T=\textbf{\underline{3}}_{1}\otimes\textbf{\underline{3}}_{1}=\textbf{\underline{3}}_{4}\oplus \textbf{\underline{3}}_{5}\oplus\textbf{\underline{3}}_{6}.
	\end{split}
	\end{displaymath}
Now we can for example consider a 6-Higgs doublet model based on $\textbf{\underline{3}}_{5}$ and $\textbf{\underline{3}}_{6}$:
	\begin{displaymath}
	\phi^{(\sigma)}\mapsto [\textbf{\underline{3}}_{5}]\phi^{(\sigma)},\quad \tilde{\phi}^{(\rho)}\mapsto [\textbf{\underline{3}}_{6}]\tilde{\phi}^{(\rho)}.
	\end{displaymath}
Because we only consider one irreducible representation for the Higgs doublets there will be no summation over $\sigma$ and $\rho$ in the Lagrangian. We can now construct the mass matrices from the Clebsch-Gordan coefficients for the decompositions
	\begin{displaymath}
	\begin{split}
	& ((D^{cl}_{L})^{-1})^{\dagger}\otimes ((D^{cl}_{R})^{-1})^T=\textbf{\underline{3}}_{1}\otimes\textbf{\underline{3}}_{1}= \textbf{\underline{3}}_{5}\oplus...\Rightarrow \Gamma^{(cl)}_{i},\\
	& ((D^{cl}_{L})^{-1})^{\dagger}\otimes ((D^{\nu}_{R})^{-1})^T=\textbf{\underline{3}}_{1}\otimes\textbf{\underline{3}}_{1}=\textbf{\underline{3}}_{6}\oplus...\Rightarrow \Gamma^{(\nu)}_{j}.
	\end{split}
	\end{displaymath}
From tables \ref{Sigma36CGCa} and \ref{SU3CGClist} we can read off
	\begin{displaymath}
	\Gamma_{1}^{(cl)}=
	\left(
	\begin{matrix}
	 -\frac{\tau_{-}}{\sqrt{12}} & 0 & 0 \\
	 0 & 0 & \frac{1}{\tau_{-}} \\
	 0 & \frac{1}{\tau_{-}} & 0
	\end{matrix}
	\right),\enspace
	\Gamma_{2}^{(cl)}=
	\left(
	\begin{matrix}
	 0 & 0 & \frac{1}{\tau_{-}} \\
	 0 & -\frac{\tau_{-}}{\sqrt{12}} & 0 \\
	 \frac{1}{\tau_{-}} & 0 & 0
	\end{matrix}
	\right),\enspace
	\Gamma_{3}^{(cl)}=
	\left(
	\begin{matrix}
 	0 & \frac{1}{\tau_{-}} & 0 \\
	 \frac{1}{\tau_{-}} & 0 & 0 \\
	 0 & 0 & -\frac{\tau_{-}}{\sqrt{12}}
	\end{matrix}
	\right),
	\end{displaymath}
	\begin{displaymath}
	\Gamma_{1}^{(\nu)}=
	\left(
	\begin{matrix}
 	\frac{\tau_{+}}{\sqrt{12}} & 0 & 0 \\
 	0 & 0 & \frac{1}{\tau_{+}} \\
	 0 & \frac{1}{\tau_{+}} & 0
	\end{matrix}
	\right),\enspace\Gamma_{2}^{(\nu)}=
	\left(
	\begin{matrix}
	 0 & 0 & \frac{1}{\tau_{+}} \\
	 0 & \frac{\tau_{+}}{\sqrt{12}} & 0 \\
	 \frac{1}{\tau_{+}} & 0 & 0
	\end{matrix}
	\right),\enspace
	\Gamma_{3}^{(\nu)}=
	\left(
	\begin{matrix}
	 0 & \frac{1}{\tau_{+}} & 0 \\
	 \frac{1}{\tau_{+}} & 0 & 0 \\
	 0 & 0 & \frac{\tau_{+}}{\sqrt{12}}
	\end{matrix}
	\right)
	\end{displaymath}
where $\tau_{\pm}=\sqrt{2(3\pm \sqrt{3})}$. Using these Clebsch-Gordan coefficients we find the mass matrices
	\begin{displaymath}
	\mathcal{M}^{cl}=\frac{g_{\sigma}^{\ast}}{\sqrt{2}}\left(
	\begin{matrix}
 	-\frac{v^{(\sigma)}_{1}\tau_{-}}{\sqrt{12}} & \frac{v^{(\sigma)}_{3}}{\tau_{-}} & \frac{v^{(\sigma)}_{2}}{\tau_{-}} \\
 	\frac{v^{(\sigma)}_{3}}{\tau_{-}} & -\frac{v^{(\sigma)}_{2}\tau_{-}}{\sqrt{12}} & \frac{v^{(\sigma)}_{1}}{\tau_{-}} \\
	 \frac{v^{(\sigma)}_{2}}{\tau_{-}} & \frac{v^{(\sigma)}_{1}}{\tau_{-}} & -\frac{v^{(\sigma)}_{3}\tau_{-}}{\sqrt{12}}
	\end{matrix}
	\right)^{\dagger},
	\end{displaymath}
	\begin{displaymath}
	\mathcal{M}^{\nu}=\frac{g_{\rho}^{\ast}}{\sqrt{2}}\left(
	\begin{matrix}
 	\frac{v^{(\rho)}_{1}\tau_{+}}{\sqrt{12}} & \frac{v^{(\rho)}_{3}}{\tau_{+}} & \frac{v^{(\rho)}_{2}}{\tau_{+}} \\
 	\frac{v^{(\rho)}_{3}}{\tau_{+}} & \frac{v^{(\rho)}_{2}\tau_{+}}{\sqrt{12}} & \frac{v^{(\rho)}_{1}}{\tau_{+}} \\
	 \frac{v^{(\rho)}_{2}}{\tau_{+}} & \frac{v^{(\rho)}_{1}}{\tau_{+}} & \frac{v^{(\rho)}_{3}\tau_{+}}{\sqrt{12}}
	\end{matrix}
	\right)^{\dagger}.
	\end{displaymath}
We find that this simple model has 8 complex parameters
	\begin{displaymath}
	g_{\sigma}, g_{\rho}, v_1^{(\sigma)}, v_2^{(\sigma)}, v_3^{(\sigma)}, v_1^{(\rho)}, v_2^{(\rho)}, v_3^{(\rho)}.
	\end{displaymath}
$g_{\sigma}$ and $g_{\rho}$ can be absorbed into $v_1^{(\sigma)}, v_2^{(\sigma)}, v_3^{(\sigma)}, v_1^{(\rho)}, v_2^{(\rho)}, v_3^{(\rho)}$, thus we have 6 complex parameters, which correspond to 12 real parameters. Since one can absorb two phases into the fermion fields, there are 10 real free parameters.

\newpage

\fancyhf{}
\fancyhead[RO,LE]{ \thepage}
\fancyhead[RE]{\small \rmfamily \nouppercase{Conclusions}}
\fancyhead[LO]{\small \rmfamily \nouppercase{Conclusions}}
\fancyfoot{} 

\chapter{Conclusions}
In this thesis we systematically analysed finite family symmetry groups and their application to the lepton sector. Especially we investigated the finite subgroups of $SU(3)$ in detail. We defined and studied $G$-invariant Yukawa-cou\-plings and found a direct relation between $G$-invariant Yukawa-couplings and Clebsch-Gordan coefficients of finite family symmetry groups.
\\
An important result from the study of (finite) family symmetry groups in the lepton sector is the fact that the lepton mass matrices and the lepton mixing matrix are directly related to the Clebsch-Gordan decompositions of tensor products of representations of finite family symmetry groups in a basis-independent way. As a consequence the lepton masses and the mixing matrix become group properties to some extent. Such models may be helpful tools to understand lepton physics beyond the standard model. Our work opens many possibilities for future studies in this field, for example
	\begin{itemize}
	\item Construction of $G$-invariant  $\phi^{4}$-Lagrangians for the Higgs doublets. This would restrict the vacuum expectation values (VEVs) of the Higgs doublets.
	\item Consideration of models with Majorana neutrinos.
	\item Concentration onto concrete models (seesaw mechanism,...). Are there interesting models that give results in agreement with the experiments without a big amount of fine tuning? These models may be interesting objects of study using computer supported methods.
	\item Extension of the analysis to finite subgroups of $U(3)$.
	\item The theory can also be used on the quark mixing problem. Can one reproduce the CKM matrix using $G$-invariant Yukawa-couplings?
	\item Suppose there exists a model involving a finite family symmetry group, which reproduces well experimental values for the lepton masses and the lepton mixing matrix. What can this finite family symmetry group tell us about underlying, more general theories? Could such theories \textquotedblleft explain\textquotedblright\hspace{1mm} the existence of the three generations of fermions?
	\end{itemize}
I hope that the analysis performed in this work will help to answer these questions.

\newpage

\fancyhf{}
\fancyhead[RO,LE]{ \thepage}
\fancyhead[RE]{\small \rmfamily \nouppercase{\leftmark}}
\fancyhead[LO]{\small \rmfamily \nouppercase{\rightmark}}
\fancyfoot{} 

\begin{appendix}
\chapter{Basic definitions and theorems in group theory}\label{groupappendix}
In this section we will briefly discuss the basic definitions and theorems of the theory of finite groups and their representations. Many of the definitions and theorems presented here follow the classic textbook \cite{hamermesh}.
\\
Other good resources are the books \cite{chen},\cite{joshi} and \cite{sternberg}.

\begin{conv}
	If not stated otherwise we will always use Einstein's summation convention.
\end{conv}

\section{General definitions and theorems in group theory}\label{A1}

\begin{define}\label{DA1}
	A \textit{group} $(G,\circ)$ is a set $G$ together with a \textit{composition}
		\begin{displaymath}
		\circ: G\times G\longrightarrow G
		\end{displaymath}
	such that
	\begin{enumerate}
 		\item $a\circ(b\circ c)=(a\circ b)\circ c \hspace{10mm}\forall a,b,c\in G.$
		\item $\exists e\in G: \quad e\circ a=a \quad \forall a\in G.$
		\item $\forall a\in G: \quad \exists a^{-1}\in G: \quad a^{-1}\circ a=e. $
	\end{enumerate}
$e$ is called \textit{identity}, $a^{-1}$ is called the \textit{inverse} of $a$.
\end{define}

\begin{prop}\label{PA2}
	Let $(G,\circ)$ be a group, then
	\begin{enumerate}
		\item $a\circ e=a \quad \forall a\in G.$
		\item $a\circ a^{-1}=e \quad \forall a\in G.$
		\item $e$ is unique.
		\item $\forall a\in G: \exists!\hspace{0.5mm}a^{-1}: a^{-1}\circ a=e.$
	\end{enumerate}
\end{prop}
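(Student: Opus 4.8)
\textbf{Proof plan for Proposition \ref{PA2}.}

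The plan is to derive the four statements in order, using only the axioms in Definition \ref{DA1} (associativity, the existence of a left identity, and the existence of left inverses) together with the earlier parts of this same proposition as they become available. The key algebraic trick I would use repeatedly is the ``cancellation via a left inverse'' manoeuvre: to simplify an expression $x$, multiply on the left by $a^{-1}$, use associativity to regroup, and apply $a^{-1}\circ a = e$ followed by $e\circ a = a$.

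First I would prove statement 2, $a\circ a^{-1}=e$, since statement 1 will turn out to depend on it. Let $b:=a^{-1}$ and let $b^{-1}$ be a left inverse of $b$ (guaranteed by axiom 3 applied to $b$). Then compute
\[
a\circ a^{-1} \;=\; e\circ(a\circ b)\;=\;(b^{-1}\circ b)\circ(a\circ b)\;=\;b^{-1}\circ\bigl((b\circ a)\circ b\bigr)\;=\;b^{-1}\circ(e\circ b)\;=\;b^{-1}\circ b\;=\;e,
\]
using in the middle that $b\circ a=a^{-1}\circ a=e$. Next, statement 1, $a\circ e=a$: write $e=a^{-1}\circ a$, so $a\circ e=a\circ(a^{-1}\circ a)=(a\circ a^{-1})\circ a=e\circ a=a$, where the third equality uses statement 2. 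At this point $e$ is a genuine two-sided identity.

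For statement 3 (uniqueness of $e$), suppose $e'$ is another element with $e'\circ a=a$ for all $a$. Then $e'=e'\circ e=e$, the first equality by statement 1 (now available) and the second by the defining property of $e'$ applied to $a=e$. For statement 4 (uniqueness of the inverse), suppose $c\circ a=e$; then multiply on the right by $a^{-1}$ (or equivalently compute $c=c\circ e=c\circ(a\circ a^{-1})=(c\circ a)\circ a^{-1}=e\circ a^{-1}=a^{-1}$), using statements 1 and 2. The only mildly delicate point — the ``main obstacle'' such as it is — is bookkeeping: one must resist using two-sided identity or two-sided inverse facts before they have been established, so the order (prove 2, then 1, then 3, then 4) matters. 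Everything else is a routine chain of associativity rewrites.
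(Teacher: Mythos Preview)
Your proposal is correct and follows essentially the same approach as the paper: prove item 2 first (using the left inverse of $a^{-1}$), then deduce item 1 from item 2, then handle uniqueness of $e$ and of $a^{-1}$ in that order. Your chain for item 2 is exactly the paper's computation rewritten with the abbreviation $b=a^{-1}$, and your argument for item 4 is in fact slightly more transparent than the paper's, which tacitly relies on the observation that the proof of item 2 works for \emph{any} left inverse.
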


\begin{proof}
	\begin{itemize}
	\item[]
	\item[2.] $a\circ a^{-1}=e\circ (a\circ a^{-1})=((a^{-1})^{-1}\circ a^{-1})\circ a\circ a^{-1}=$\\$=(a^{-1})^{-1}\circ (a^{-1}\circ a)\circ a^{-1}=(a^{-1})^{-1}\circ a^{-1}=e$.
	\item[1.] $a\circ e=a\circ (a^{-1}\circ a)=$[using 2.]$=e\circ a=a$.
	\item[3.] Let $e$ and $e'$ be identity elements: $e=e\circ e'=e'$.
	\item[4.] Let $f$ and $f'$ be inverse elements of $a$: $e=a\circ f=a\circ f'\Longrightarrow a^{-1}\circ a\circ f=a^{-1}\circ a\circ f' \Longrightarrow e\circ f=e\circ f' \Longrightarrow f=f'$.
	\end{itemize}
\end{proof}

\begin{define}\label{DA3}
Two elements $a,b$ of a group $G$ \textit{commute} if $a\circ b=b\circ a$.
\\
If $a\circ b=b\circ a \quad \forall a,b\in G$, $G$ is called \textit{commutative} or \textit{Abelian}.
\\
The number of different elements of a group is called \textit{order} of the group, i.s. 
$\mathrm{ord}(G)$.
\\
If $\mathrm{ord}(G)\in \mathbb{N}\backslash\lbrace0\rbrace$, $G$ is called \textit{finite}.
\\
$a^{n}:=\underbrace{a\circ...\circ a}_n$,\hspace{0mm} $a^{0}:=e$.
\\
Let $a\in G$: The smallest number $n\in \mathbb{N}\backslash\lbrace0\rbrace$ s.t. $a^{n}=e$, is called \textit{order} of $a$, i.s. $\mathrm{ord}(a)$. If $a^{n}\neq e \quad \forall n\in\mathbb{N}\backslash\lbrace0\rbrace$, $a$ is said to be of order infinity, i.s. $\mathrm{ord}(a)=\infty$.
\end{define}

\begin{conv}
From now on we will often write $G$ instead of $(G,\circ)$ and $ab$ instead of $a\circ b$.
\end{conv}

\begin{define}\label{DA4}
Two groups $G$ and $G'$ are called \textit{isomorphic}, i.s. $G\simeq G'$, if $\exists \phi:G\rightarrow G'$ bijective with:
	\begin{displaymath}
		\phi(ab)=\phi(a)\phi(b) \quad \forall a,b \in G.
	\end{displaymath}
\end{define}


\begin{define}\label{DA6}
The \textit{symmetric group} $\mathcal{S}_{n}$ of order $n$ is the group of all \textit{permutations} of $n$ elements. Permutations of $n$ elements are bijective mappings of a set of $n$ different elements onto itself. $\mathrm{ord}(\mathcal{S}_{n})=n!$ . The product of permutations is understood in the sense of composition of functions ($\phi\psi:=\phi\circ\psi$).
\end{define}

\begin{example}
 	\begin{gather*}
	\phi: \lbrace 1,2,3\rbrace\rightarrow\lbrace1,2,3\rbrace\\
	\phi(1)=3, \phi(2)=1, \phi(3)=2
	\end{gather*}
is a permutation of 3 elements.
\end{example}

\begin{conv}
	We will often write permutations as tables, e.g.
	\begin{displaymath}
		\left(
		\begin{array}{ccc}
			...&i&...\\
			...&j&...
		\end{array}
		\right)
	\end{displaymath}
is a permutation that maps $i$ on $j$. The permutation $\phi$ from the example above would be written as
	\begin{displaymath}
		\left(
		\begin{array}{ccc}
			1&2&3\\
			3&1&2
		\end{array}
		\right).
	\end{displaymath}
Permutations can be written as products of \textit{cycles} (\textit{cyclic permutations}):
\medskip
\\
\begin{displaymath}
	\left(
		\begin{array}{cccc}
			i&i+1&...&i+m\\
			i+1&i+2&...&i
		\end{array}
		\right)=
(i \quad i+1 \quad ... \quad i+m).
\end{displaymath}
\end{conv}

\begin{example}
	\begin{displaymath}
	\left(
	\begin{array}{cccccccc}
			1&2&3&4&5&6&7&8\\
			2&3&1&5&4&7&6&8
		\end{array}
		\right)=(1\enspace2\enspace3)(4\enspace5)(6\enspace7)(8)=(1\enspace2\enspace3)(4\enspace5)(6\enspace7).
	\end{displaymath}
(Cycles containing only one element are often suppressed.)
\end{example}
\bigskip
\bigskip
\hspace{0mm}\\
Cycles are invariant under cyclic permutations of their elements, e.g.
\begin{displaymath}
	(1\enspace2\enspace3)=(3\enspace1\enspace2)=(2\enspace3\enspace1),
\end{displaymath}
but \begin{displaymath}
    	(1\enspace2\enspace3)\neq(1\enspace3\enspace2).
    \end{displaymath}


%
%

\begin{define}\label{DA9}
	A subset $H\subset G$ of a group $(G,\circ)$ is called \textit{subgroup}, if $(H,\circ)$ is a group.
	\\
	If $H\neq G$ and $H\neq\lbrace e\rbrace$, $H$ is called \textit{proper subgroup} of $G$. $\lbrace e\rbrace$ and $G$ are the \textit{trivial subgroups} of $G$.
\end{define}
\begin{prop}\label{PA10}
	\textbf{(Criterion for subgroups of finite groups)} A subset $H$ of a finite group $G$ is a subgroup if and only if
	\begin{displaymath}
		h_{1}\circ h_{2}\in H \quad \forall h_{1},h_{2}\in H.
	\end{displaymath}
\end{prop}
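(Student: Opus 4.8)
The plan is to prove Proposition~\ref{PA10} (the criterion for subgroups of finite groups) by showing that the closure condition alone forces the remaining group axioms to hold when $G$ is finite. One direction is immediate: if $H$ is a subgroup of $G$, then by Definition~\ref{DA9} $(H,\circ)$ is a group, so in particular the composition restricts to a map $H\times H\to H$, i.e. $h_1\circ h_2\in H$ for all $h_1,h_2\in H$. So the content is the converse.

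For the converse, assume $H\subset G$ is nonempty (one should note that the statement implicitly requires $H$ to be nonempty, or else take this as part of the hypothesis) and closed under $\circ$. First I would observe that associativity is inherited for free: the identity $a\circ(b\circ c)=(a\circ b)\circ c$ holds for all elements of $G$, hence a fortiori for all elements of $H$. The two things that need work are the existence of the identity $e$ in $H$ and the existence of inverses in $H$. The key device is to use finiteness: pick any $h\in H$; by closure, all the powers $h, h^2, h^3,\dots$ lie in $H$, and since $H$ is finite this sequence must repeat, so $h^i=h^j$ for some $i<j$. Using the group structure of the ambient $G$ (where inverses do exist) one cancels to get $h^{j-i}=e$ with $j-i\geq 1$. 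This shows $e=h^{j-i}\in H$ (since $j-i\geq 1$, this is a genuine power of $h$, obtained by closure), and moreover $h^{-1}=h^{j-i-1}\in H$ as well — here if $j-i-1=0$ we interpret $h^0=e$, which we have just shown lies in $H$. Thus $H$ contains an identity element and is closed under inverses, so together with associativity and closure, $(H,\circ)$ satisfies Definition~\ref{DA1} and is a group, hence a subgroup of $G$.

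The main obstacle — really the only subtle point — is making the finiteness argument rigorous: one must be careful that the element whose existence we deduce, namely $h^{j-i}$, is obtained purely from the closure hypothesis (it is a product of elements of $H$), even though the cancellation step that identifies it as $e$ is carried out in $G$. It is worth emphasizing in the write-up that we never assume inverses or the identity lie in $H$ a priori; we only use that they exist in $G$ (by Proposition~\ref{PA2}) to perform the cancellation, and then we exhibit the relevant elements as $H$-powers. A secondary, more cosmetic point is the degenerate case $\mathrm{ord}(h)=1$, i.e. $h=e$, where the argument still goes through trivially. I would also remark that this criterion genuinely fails for infinite groups (e.g. $\mathbb{N}\setminus\{0\}$ inside $(\mathbb{Z},+)$ is closed under addition but is not a subgroup), which motivates why finiteness is hypothesized; this can be mentioned in a sentence after the proof if desired, though it is not required.
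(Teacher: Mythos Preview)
Your proof is correct and follows essentially the same approach as the paper: both argue that finiteness forces every $h\in H$ to have finite order, so that $e$ and $h^{-1}$ arise as positive powers of $h$ and hence lie in $H$ by closure. Your version is simply more detailed (you derive the finite order via pigeonhole on the sequence of powers and flag the nonemptiness caveat), whereas the paper just asserts $\exists\, n\neq 0:h^n=e$ directly.
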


\begin{proof}
	\begin{itemize}
		\item[]
		\item[$\Longrightarrow$]: Clear by definition of a group.
		\item[$\Longleftarrow$]: Because $G$ is finite $\exists n\neq 0:h^{n}=e$ for any $h\in H \Rightarrow e\in H$ and $h^{-1}=h^{n-1}\in H \Rightarrow H$ is a group.
	\end{itemize}
\end{proof}

\begin{theorem}\label{TA11} 
	\textbf{(Cayley)} Every group of order $n$ is isomorphic to a subgroup of $\mathcal{S}_{n}$.
\end{theorem}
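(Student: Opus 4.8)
The plan is to realize an abstract group $G$ of order $n$ concretely as a group of permutations via the \emph{left regular action}. First I would fix an enumeration of the elements, $G=\{g_1,\dots,g_n\}$, and for each $a\in G$ define a map $\pi_a:G\to G$ by $\pi_a(g)=ag$. The key elementary observation is that each $\pi_a$ is a bijection of the set $G$: it is injective because $ag=ag'$ implies $g=g'$ by left cancellation (Proposition \ref{PA2}), and since $G$ is finite, injectivity on a finite set forces surjectivity. Hence $\pi_a\in\mathcal{S}_n$ once we identify $G$ with $\{1,\dots,n\}$ through the chosen enumeration.

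Next I would show that the assignment $\phi:G\to\mathcal{S}_n$, $a\mapsto\pi_a$, is a group homomorphism. This is the computation $\pi_a(\pi_b(g))=\pi_a(bg)=a(bg)=(ab)g=\pi_{ab}(g)$ for all $g\in G$, using associativity (Definition \ref{DA1}), so $\pi_a\circ\pi_b=\pi_{ab}$, i.e. $\phi(a)\phi(b)=\phi(ab)$ in the composition convention of $\mathcal{S}_n$ from Definition \ref{DA6}. Then I would argue $\phi$ is injective: if $\pi_a=\pi_b$ then evaluating at the identity gives $a=ae=\pi_a(e)=\pi_b(e)=be=b$. Therefore $\phi$ is a bijection from $G$ onto its image $\phi(G)$.

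Finally, $\phi(G)$ is a subset of $\mathcal{S}_n$ closed under composition (since $\phi(a)\phi(b)=\phi(ab)\in\phi(G)$), so by the subgroup criterion for finite groups (Proposition \ref{PA10}) it is a subgroup of $\mathcal{S}_n$; and $\phi$ restricted to its image is a bijective homomorphism, hence an isomorphism in the sense of Definition \ref{DA4}. This gives $G\simeq\phi(G)\subset\mathcal{S}_n$, as claimed.

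There is no real obstacle here — the statement is a standard exercise and every ingredient (cancellation, finiteness forcing surjectivity, the subgroup criterion) is already available in the excerpt. The only point that deserves a sentence of care is the passage from "$\pi_a$ is a bijection of the set $G$" to "$\pi_a\in\mathcal{S}_n$," which is just the bookkeeping of transporting the permutation along the fixed enumeration $G\leftrightarrow\{1,\dots,n\}$; I would state this identification explicitly at the outset so that the rest of the argument reads cleanly.
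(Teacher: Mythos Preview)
Your proof is correct and follows essentially the same approach as the paper: both construct the left regular representation $a\mapsto\pi_a$, $\pi_a(g)=ag$, verify it is a homomorphism via associativity, and conclude that $G$ is isomorphic to its image in $\mathcal{S}_n$. Your version is slightly more explicit about the bijectivity of each $\pi_a$ and the injectivity of $\phi$ (evaluating at $e$), and you invoke Proposition~\ref{PA10} for the subgroup property, whereas the paper checks the group axioms for $\phi(G)$ directly --- but these are cosmetic differences, not a different route.
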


\begin{proof}
	We construct an isomorphism: Let $b\in G=\lbrace a_{1},...,a_{n}\rbrace$.
	\begin{displaymath}
	\phi: b\mapsto \phi(b)=\left( \begin{matrix}
					 a_{1} & ... & a_{n} \\
					 ba_{1}& ... & ba_{n}
	                              \end{matrix}
				\right).
	\end{displaymath}
$ba_{1},...,ba_{n}$ are the elements of the group $G$ in a new order, so $\phi$ is a map of $G$ to a subset of $\mathcal{S}_{n}$. We will now show that $\phi$ is an isomorphism.

		\begin{displaymath}
		\begin{split}
			\phi(b)\phi(c) & =\left( \begin{matrix}
					 a_{1} & ... & a_{n} \\
					 ba_{1}& ... & ba_{n}
	                              \end{matrix}
				\right)
				\cdot
				\left( \begin{matrix}
					 a_{1} & ... & a_{n} \\
					 ca_{1}& ... & ca_{n}
	                              \end{matrix}
				\right)=\\
			& =\left( \begin{matrix}
					 a_{1} & ... & a_{n} \\
					 ba_{1}& ... & ba_{n}
	                              \end{matrix}
				\right)
				\cdot
				\left( \begin{matrix}
					 ba_{1} & ... & ba_{n} \\
					 bca_{1}& ... & bca_{n}
	                              \end{matrix}
				\right)=\\
			& = \left( \begin{matrix}
					 a_{1} & ... & a_{n} \\
					 (bc)a_{1}& ... & (bc)a_{n}
	                              \end{matrix}
				\right)=\phi(bc).
		\end{split}
		\end{displaymath}
		\item[] $\phi(G)$ is a group, because
		\begin{displaymath}
			\begin{split}
				&\phi(e)\phi(a_{i})=\phi(a_{i}).\\
				&\phi(a^{-1})\phi(a)=\phi(e).\\
			\end{split}
		\end{displaymath}
It follows that $\phi$ is an isomorphism of $G$ on a subgroup of $\mathcal{S}_{n}$.
\end{proof}

\begin{define}\label{DA12}
	Let $G$ be an $n$-dimensional subgroup of $\mathcal{S}_{n}$ with the following property:
	\begin{itemize}
		\item[] $G\ni a\neq e: a(...,i,...)=(...,a(i),...)\neq(...,i,...) \quad \forall i\in\lbrace1,2,...,n\rbrace$. 
	\end{itemize}
($a$ leaves no element unchanged.) Then $G$ is called a \textit{regular permutation group} and the elements of $G$ are called \textit{regular permutations}.
\end{define}

\begin{cor}\label{CA12a}
	Every group of order $n$ is isomorphic to a regular subgroup of $\mathcal{S}_{n}$. 
\end{cor}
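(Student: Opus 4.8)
The plan is to combine Cayley's theorem (Theorem \ref{TA11}) with the observation that the specific isomorphism constructed in its proof actually lands inside the regular permutation group. First I would recall the map $\phi$ from the proof of Theorem \ref{TA11}: for $b \in G = \{a_1, \dots, a_n\}$ we set
	\begin{displaymath}
	\phi(b) = \left(\begin{matrix} a_1 & \dots & a_n \\ ba_1 & \dots & ba_n \end{matrix}\right),
	\end{displaymath}
and it was already shown there that $\phi$ is an isomorphism of $G$ onto a subgroup $\phi(G) \subset \mathcal{S}_n$. So the only thing left to check is that $\phi(G)$ is a \emph{regular} permutation group in the sense of Definition \ref{DA12}.

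Next I would verify the regularity condition directly. Take $b \in G$ with $b \neq e$, and suppose $\phi(b)$ fixes some element $a_i$ of the underlying set, i.e. $b a_i = a_i$. Multiplying on the right by $a_i^{-1}$ (which exists by Proposition \ref{PA2}) gives $b = e$, contradicting $b \neq e$. Hence every non-identity element $\phi(b)$ of $\phi(G)$ moves every element of $\{a_1, \dots, a_n\}$, which is exactly the defining property of a regular permutation group. Therefore $\phi(G)$ is a regular subgroup of $\mathcal{S}_n$ isomorphic to $G$.

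There is essentially no hard part here — the corollary is a strengthening of Cayley's theorem that falls out of inspecting the proof already given, rather than requiring any new idea. The one point to be slightly careful about is making sure the cancellation $ba_i = a_i \Rightarrow b = e$ is phrased using only the group axioms as set up in Definition \ref{DA1} and Proposition \ref{PA2} (right-inverses and right-identity were the derived facts), but this is routine. One could also remark, if desired, that $\phi(G)$ is a subgroup of order $n$ inside $\mathcal{S}_n$, so the statement really says: up to isomorphism, the finite groups of order $n$ are exactly the regular subgroups of $\mathcal{S}_n$.
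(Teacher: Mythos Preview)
Your proof is correct and takes essentially the same approach as the paper: the paper's proof simply observes that the permutation group $\phi(G)$ constructed in the proof of Cayley's theorem is regular, and you have spelled out in detail why this is so (the cancellation argument $ba_i = a_i \Rightarrow b = e$).
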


\begin{proof}
	In the explicit construction of the isomorphism in the proof of Cayley's theorem (theorem \ref{TA11}) one can see that the permutation group $\phi(G)$ is regular.
\end{proof}

\begin{theorem}\label{TA13}
	\textbf{(Lagrange)} Let $G$ be a finite group of order $n$, and $H\subset G$ a subgroup of $G$ with order $m$. Then $m$ is a divisor of $n$.
\end{theorem}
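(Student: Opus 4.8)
The plan is to prove that the order $m$ of a subgroup $H$ divides the order $n$ of a finite group $G$ by partitioning $G$ into left cosets of $H$ and showing that all cosets have the same size. First I would define, for each $g \in G$, the left coset $gH := \{gh \mid h \in H\}$. The key structural claim is that the collection of distinct left cosets forms a partition of $G$: every element $g \in G$ lies in some coset (namely $gH$, since $e \in H$ gives $g = ge \in gH$), and any two cosets are either identical or disjoint. For the latter, I would show that if $g_1 H \cap g_2 H \neq \emptyset$, say $g_1 h_1 = g_2 h_2$ for some $h_1, h_2 \in H$, then $g_1 = g_2 h_2 h_1^{-1}$, and since $h_2 h_1^{-1} \in H$ (using that $H$ is a subgroup, so closed under composition and inverses — or invoking Proposition \ref{PA10} for the finite case), it follows that $g_1 H \subseteq g_2 H$; the reverse inclusion is symmetric, so $g_1 H = g_2 H$.

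Next I would establish that each coset $gH$ has exactly $m = \mathrm{ord}(H)$ elements. This follows because the map $H \to gH$, $h \mapsto gh$, is a bijection: it is surjective by definition of $gH$, and it is injective because $gh_1 = gh_2$ implies $h_1 = h_2$ after left-multiplying by $g^{-1}$ (using the group axioms and Proposition \ref{PA2}). Hence $|gH| = |H| = m$ for every $g \in G$.

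Finally, since $G$ is the disjoint union of its distinct left cosets, and there are, say, $k$ of them each of size $m$, we obtain $n = \mathrm{ord}(G) = km$. Therefore $m$ divides $n$, and in fact $k = n/m$ is the number of distinct left cosets (the \emph{index} of $H$ in $G$). I do not anticipate a serious obstacle here; the only point requiring a little care is the disjointness-or-equality dichotomy for cosets, and checking that the bijection $h \mapsto gh$ is well-defined into $gH$ and invertible — both follow directly from the group axioms already recorded in Definition \ref{DA1} and Proposition \ref{PA2}.
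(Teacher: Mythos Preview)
Your proposal is correct and follows essentially the same approach as the paper: partition $G$ into left cosets of $H$, show each coset has exactly $m$ elements, and conclude $n = km$. The paper's proof is phrased slightly more iteratively (building the cosets one at a time by choosing $a_2 \notin H$, then $a_3 \notin H \cup a_2 H$, etc.), whereas you state the coset partition via the identical-or-disjoint dichotomy, but the content is the same.
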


\begin{proof}
	\begin{itemize}
		\item[]
		\item[] If $H\supset G\Rightarrow H=G\Rightarrow n=m\Rightarrow m$ is a divisor of $n$.
		\item[] If $H\not\supset G$, let $a_{2}$ be an element of $G$ that is not contained in $H$.
		\\
		$H=\lbrace e,h_{2},...,h_{m}\rbrace$.
		\\
		$a_{2}H:=\lbrace a_{2}e,a_{2}h_{2},...,a_{2}h_{m} \rbrace$.
		\item[] All $a_{2}h_{i}$ are different, because $a_{2}h_{i}=a_{2}h_{j}\Rightarrow h_{i}=h_{j}$.
		\\
		$a_{2}h_{i}\not\in H$, because else: $a_{2}h_{i}=h_{j}\in H \Rightarrow a_{2}=h_{i}h_{j}^{-1}\in H\Rightarrow$ contradiction to $a_{2}\not\in H$.
		\item[] We now have won two disjoint subsets of $G$. If $H\cup a_{2}H\neq G\Rightarrow$ take an element $a_{3}\in G, a_{3}\not\in(H\cup a_{2}H)\Rightarrow H,a_{2}H,a_{3}H$ are disjoint subsets of $G$ (by the same arguments as before).
		\item[] Continue till $G=H\cup a_{2}H\cup ...\cup a_{s}H$, $G$ is divided into $s$ disjoint subsets with $m$ elements. $\Rightarrow n=s\cdot m$.
	\end{itemize}
\end{proof}

\begin{define}\label{DA14}
	Let $G=H\cup a_{2}H\cup ...\cup a_{s}H$ as in the proof of theorem \ref{TA13}. It follows $n=s\cdot m$. $s$ is called \textit{index} of the subgroup $H$ under $G$. The sets $a_{i}H$ are called \textit{left cosets} of $H$ in $G$. The cosets are not subgroups of $G$, because $e\not\in a_{i}H$.
\end{define}



\section{Conjugate classes, invariant subgroups, homomorphisms}

\begin{define}\label{DA16}
	$b\in G$ is \textit{conjugate} to $a\in G$, i.s. $b\sim a$, if $\exists u\in G: uau^{-1}=b$.
\end{define}

\begin{defprop}\label{DA17}
	A relation $\sim$ is called \textit{equivalence relation} if
	\begin{enumerate}
		\item $a\sim a.$
		\item $a\sim b \Leftrightarrow b\sim a.$
		\item $a\sim b, b\sim c \Rightarrow a\sim c.$ 
	\end{enumerate}
The relation \textit{conjugate} $\sim$ is an equivalence relation.
\end{defprop}

\begin{proof}
	\begin{itemize}
		\item[]
		\item[1.] $eae^{-1}=a\Rightarrow a\sim a$.
		\item[2.] $uau^{-1}=b \Rightarrow u^{-1}bu=u^{-1}b(u^{-1})^{-1}=a\Rightarrow \quad a\sim b\Leftrightarrow b\sim a$.
		\item[3.] $uau^{-1}=b$, $vbv^{-1}=c\Rightarrow$ $(vu)a(vu)^{-1}=c\Rightarrow \quad$ $a\sim b,b\sim c\Rightarrow a\sim c$. 
	\end{itemize}
\end{proof}

\begin{define}\label{DA18}
	Given an equivalence relation $\sim$, a set $M$ can be divided into \textit{equivalence classes} $C_{i}$ such that
		\begin{enumerate}
			\item $a_{i}\sim b_{i} \quad \forall a_{i},b_{i}\in C_{i}.$
			\item $a_{i}\not\sim b_{j}$ if $a_{i}\in C_{i},b_{j}\in C_{j}, i\neq j.$
			\item $M=\bigcup_{i}C_{i}$.
		\end{enumerate}
	The equivalence classes to the equivalence relation \textquotedblleft conjugate\textquotedblright\hspace{1mm} are called conjugate classes.
\end{define}

\begin{define}\label{DA19}
	Let $H$ be a subgroup of $G$, $a\in G$.
	\begin{displaymath}
		aHa^{-1}:=\lbrace aha^{-1}\vert h\in H \rbrace
	\end{displaymath}
is a \textit{conjugate subgroup} of $H$ in $G$.
\medskip
\\
If $aHa^{-1}=H \quad \forall a\in G$, $H$ is called an \textit{invariant subgroup} in $G$.
\end{define}

\begin{crit}\label{CA20}
	A subgroup $H$ of $G$ is invariant if and only if it consists of complete conjugate classes of $G$.
\end{crit}

\begin{proof}
	\begin{itemize}
		\item[]
		\item[$\Longrightarrow$]: Let $H$ be an invariant subgroup. $h\in H\Rightarrow$ all conjugates $aha^{-1}$ are elements of $H\Rightarrow H$ contains the conjugate class of $h$.  
		\item[$\Longleftarrow$]: Let $H$ be a subgroup consisting of conjugate classes. Suppose $\exists a\in G: aha^{-1}=f\not\in H$. $f\sim h\Rightarrow f$ is an element of the conjugate class of $h\Rightarrow f\in H \Rightarrow$ contradiction. 
	\end{itemize}
\end{proof}

\begin{define}\label{DA21}
	A group is called \textit{simple} if it contains no proper invariant subgroup.
	\\
	A group is called \textit{semisimple} if none of its proper invariant subgroups is Abelian.
\end{define}

\begin{define}\label{DA22}
	\textit{Coset multiplication}: 
	\begin{displaymath}
		(aH)(bH):=\lbrace (ah_{1})(bh_{2})\vert h_{1},h_{2}\in H\rbrace.
	\end{displaymath}
	The set of all cosets of an invariant subgroup $H\subset G$ together with the composition \textquotedblleft coset multiplication\textquotedblright\hspace{1mm} forms a group, the \textit{factor group} $G/H$. ($e=H, (aH)^{-1}=a^{-1}H$.)
\end{define}

\begin{prop}
Let $G$ be a finite group and let $H\subset G$ be an invariant subgroup of $G$. Then
	\begin{displaymath}
	\mathrm{ord}(G/H)=\frac{\mathrm{ord}(G)}{\mathrm{ord}(H)}.
	\end{displaymath}
\end{prop}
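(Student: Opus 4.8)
The plan is to use the coset decomposition established in the proof of Lagrange's theorem (Theorem~\ref{TA13}) directly. Recall that that proof showed $G = H \cup a_2 H \cup \dots \cup a_s H$ as a disjoint union of $s$ left cosets, each of size $\mathrm{ord}(H)$, yielding $\mathrm{ord}(G) = s\cdot \mathrm{ord}(H)$, where $s$ is the index of $H$ in $G$ (Definition~\ref{DA14}). Since $H$ is an invariant subgroup, the set of cosets carries the group structure of the factor group $G/H$ (Definition~\ref{DA22}), so $\mathrm{ord}(G/H)$ is precisely the number of distinct cosets, namely $s$. Hence $\mathrm{ord}(G/H) = s = \mathrm{ord}(G)/\mathrm{ord}(H)$.

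First I would make explicit that the elements of $G/H$ are exactly the distinct left cosets $aH$, $a\in G$, and that there are exactly $s$ of them: this is immediate from the disjoint decomposition in the proof of Theorem~\ref{TA13}, where the construction terminates precisely when the cosets exhaust $G$. Second, I would invoke $\mathrm{ord}(G) = s\cdot\mathrm{ord}(H)$ from that same proof. Combining, $\mathrm{ord}(G/H) = s = \mathrm{ord}(G)/\mathrm{ord}(H)$, which is the claim. One should also note in passing that invariance of $H$ guarantees the coset multiplication is well-defined and that left and right cosets coincide, so there is no ambiguity in counting.

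There is essentially no obstacle here: the statement is an immediate bookkeeping consequence of the already-proved Lagrange theorem together with the definition of the factor group. The only point requiring a sentence of care is the identification of $\mathrm{ord}(G/H)$ with the index $s$ — i.e. that two cosets $a_iH$ and $a_jH$ are equal as elements of $G/H$ if and only if they are equal as sets — but this is built into Definition~\ref{DA22} and the disjointness established in Theorem~\ref{TA13}. So the proof is just: $\mathrm{ord}(G/H) = (\text{number of distinct cosets of }H\text{ in }G) = s$, and $s\cdot\mathrm{ord}(H) = \mathrm{ord}(G)$ by Lagrange, whence the result.
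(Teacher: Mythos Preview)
Your proposal is correct and takes essentially the same approach as the paper: count the distinct cosets using the disjoint decomposition from Lagrange's theorem, so that $\mathrm{ord}(G/H)$ equals the index $s = \mathrm{ord}(G)/\mathrm{ord}(H)$. The paper's proof is just a one-line version of what you wrote.
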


\begin{proof}
Every coset $gH$, $g\in G$ has $\mathrm{ord}(H)$ elements, thus the number of different cosets is $\frac{\mathrm{ord}(G)}{\mathrm{ord}(H)}=\mathrm{ord}(G/H)$.
\end{proof}

\begin{define}\label{DA23}
	A mapping $\phi:G\rightarrow G'$ is called \textit{homomorphism}, if
	\begin{displaymath}
		\phi(ab)=\phi(a)\phi(b) \quad \forall a,b\in G.
	\end{displaymath}
\end{define}

\begin{prop}\label{PA24}
	\textbf{Properties of homomorphisms:} Let $\phi:G\rightarrow G'$ be a homomorphism, then
	\begin{enumerate}
		\item $\phi(e)=e'.$
		\item $\phi(a^{-1})=\phi(a)^{-1}.$
	\end{enumerate}
\end{prop}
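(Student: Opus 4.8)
The statement to prove is Proposition \ref{PA24}: for a homomorphism $\phi: G \to G'$, we have $\phi(e) = e'$ and $\phi(a^{-1}) = \phi(a)^{-1}$. This is a completely standard exercise, and my proof proposal mirrors the pattern already used in the excerpt for Proposition \ref{PA2}.

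The plan is to use only the defining property $\phi(ab) = \phi(a)\phi(b)$ together with the group axioms in $G'$. First I would establish the identity claim: apply $\phi$ to the equation $e = e\circ e$ in $G$ to get $\phi(e) = \phi(e)\phi(e)$ in $G'$. Then multiply both sides on the left by $\phi(e)^{-1}$ (which exists since $G'$ is a group), using associativity and Proposition \ref{PA2} (part 2 and the already-proven facts $e'\circ a = a = a\circ e'$), to conclude $e' = \phi(e)$. Second, for the inverse claim, I would apply $\phi$ to $a^{-1}\circ a = e$ in $G$, obtaining $\phi(a^{-1})\phi(a) = \phi(e) = e'$, where the last equality is the part just proven. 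This says precisely that $\phi(a^{-1})$ is a left inverse of $\phi(a)$ in $G'$; by Proposition \ref{PA2} (parts 2 and 4, uniqueness of inverses), it equals $\phi(a)^{-1}$.

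There is no real obstacle here — the only mild subtlety worth flagging is that the definition of group (Definition \ref{DA1}) only posits a left identity and left inverses, so one must be slightly careful not to assume two-sided identities or inverses before invoking Proposition \ref{PA2}, which supplies those facts. Since Proposition \ref{PA2} is available earlier in the excerpt, I would simply cite it where needed. The proof will be three or four short lines, structured exactly like the proof of Proposition \ref{PA2}.

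Concretely, the proof would read roughly as follows:
\begin{enumerate}
\item[1.] $\phi(e)=\phi(e\circ e)=\phi(e)\phi(e)$. Multiplying on the left by $\phi(e)^{-1}$ and using associativity and Proposition \ref{PA2} gives $e'=\phi(e)$.
\item[2.] $\phi(a^{-1})\phi(a)=\phi(a^{-1}\circ a)=\phi(e)=e'$, so $\phi(a^{-1})$ is an inverse of $\phi(a)$; by the uniqueness of inverses (Proposition \ref{PA2}), $\phi(a^{-1})=\phi(a)^{-1}$.
\end{enumerate}
That completes the argument.
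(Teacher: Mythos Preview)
Your proof is correct and follows essentially the same approach as the paper. The paper's version differs only in cosmetic choices: for part 1 it applies $\phi$ to $a=ea$ (rather than $e=ee$) to get $\phi(a)=\phi(e)\phi(a)$ and cancels $\phi(a)$, and for part 2 it uses $aa^{-1}=e$ instead of $a^{-1}a=e$; both variants amount to the same cancellation argument you give.
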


\begin{proof}
	\begin{itemize}
		\item[]
		\item[1.]
		$\phi(a)=\phi(ea)=\phi(e)\phi(a)\Rightarrow \phi(e)=e'$.
		\item[2.] $e'=\phi(e)=\phi(aa^{-1})=\phi(a)\phi(a^{-1})\Rightarrow \phi(a^{-1})=\phi(a)^{-1}$.
	\end{itemize}
\end{proof}

%
%

\begin{define}\label{DA27}
	Let $G,G'$ be groups. The \textit{direct product} $G\times G'$ is defined by
	\begin{displaymath}
		G\times G':=(\lbrace (a,a')\vert a\in G, a'\in G'\rbrace,(a,a')\circ(b,b')=(a\circ b,a'\circ b')).
	\end{displaymath}
Remark: $\mathrm{ord}(G\times G')=\mathrm{ord}(G)\cdot\mathrm{ord}(G')$.
\end{define}

\section{Representations}

\begin{define}\label{DA28}
	Let $D(G)$ be a group of operators on a vectorspace $V$. A homomorphism
	\begin{displaymath}
		D: G\rightarrow D(G)
	\end{displaymath}
is called \textit{representation} of the group $G$ on the vectorspace $V$, and $V$ is called \textit{representation space}. $\mathrm{dim}V$ is called \textit{dimension} of the representation.
\\
If $D(G)$ is a group of linear operators on $V$, $D$ is called a \textit{linear representation}.
\\
If the linear operators in $D(G)$ are represented as matrices, $D$ is called a \textit{matrix representation} of $G$. 
\end{define}

\begin{conv}
In this work we will only consider linear representations. Therefore by \textquotedblleft representation\textquotedblright\hspace{1mm} we will always mean \textquotedblleft linear representation\textquotedblright.
\\
We will only consider finite dimensional representations, so all usual operations of linear algebra ($\mathrm{det}$, $\mathrm{Tr}$,...) are well defined.
\end{conv}

\begin{prop}\label{PA29}
	\textbf{Properties of representations:}
	\begin{enumerate}
		\item $D(ab)=D(a)D(b) \quad \forall a,b\in G$.
		\item $D(a^{-1})=D(a)^{-1}$.
		\item $D(e)=id_{V}$.
	\end{enumerate}
\end{prop}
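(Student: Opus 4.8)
The statement to prove (Proposition \ref{PA29}) lists the three standard functorial properties of a representation: $D(ab)=D(a)D(b)$, $D(a^{-1})=D(a)^{-1}$, and $D(e)=\mathrm{id}_V$.

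The plan is to observe that a representation $D\colon G\to D(G)$ is by Definition \ref{DA28} a \emph{homomorphism} of groups, so the three claimed properties are nothing more than the defining property of a homomorphism together with the two consequences already recorded in Proposition \ref{PA24}. Concretely, I would argue as follows. First, property 1 is immediate: it is exactly the condition $\phi(ab)=\phi(a)\phi(b)$ from Definition \ref{DA23} applied to $\phi=D$, so there is nothing to prove beyond citing the definition. Second, for property 3, I would invoke part 1 of Proposition \ref{PA24} (``$\phi(e)=e'$''), noting that the identity element $e'$ of the operator group $D(G)$ is the identity operator $\mathrm{id}_V$ on the representation space $V$; hence $D(e)=\mathrm{id}_V$. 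Third, for property 2, I would invoke part 2 of Proposition \ref{PA24} (``$\phi(a^{-1})=\phi(a)^{-1}$''), which gives $D(a^{-1})=D(a)^{-1}$ directly, the inverse on the right-hand side being the inverse operator in the group $D(G)$ (well-defined since $D(G)$ is a group of operators).

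Alternatively, if one prefers a self-contained derivation not routing through Proposition \ref{PA24}, the same short computations reproduce those proofs: from property 1, $D(a)=D(ea)=D(e)D(a)$, and since $D(a)$ is invertible in the operator group $D(G)$ one cancels to get $D(e)=\mathrm{id}_V$; then $\mathrm{id}_V=D(e)=D(aa^{-1})=D(a)D(a^{-1})$ shows $D(a^{-1})$ is the inverse operator of $D(a)$, i.e.\ $D(a^{-1})=D(a)^{-1}$.

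There is essentially no obstacle here; the only point requiring a word of care is the identification of the abstract group-theoretic notions (identity, inverse) in the target group $D(G)$ with the concrete operator-theoretic notions (identity operator $\mathrm{id}_V$, inverse operator) on $V$, which is exactly what it means for $D(G)$ to be a group of operators on $V$. I would state that identification explicitly and then let the proof be the two-line invocation of Definitions \ref{DA23}, \ref{DA28} and Proposition \ref{PA24}.
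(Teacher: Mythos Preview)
Your proposal is correct and matches the paper's approach exactly: the paper's proof reads in its entirety ``By definition. (Homomorphism. See proposition \ref{PA24}.)'', which is precisely your strategy of invoking Definition \ref{DA28} for property 1 and Proposition \ref{PA24} for properties 2 and 3. Your additional remark about identifying the abstract identity and inverse in $D(G)$ with $\mathrm{id}_V$ and the inverse operator is a nice clarification the paper leaves implicit.
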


\begin{proof} By definition. (Homomorphism. See proposition \ref{PA24}.)
\end{proof}

\begin{cor}\label{CA30}
	Let $D$ be a representation of a group $G$, then \begin{displaymath}
	\mathrm{det}(D(a))\neq 0 \quad \forall a\in G.
	\end{displaymath}
\end{cor}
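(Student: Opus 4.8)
The final statement is Corollary \ref{CA30}: if $D$ is a representation of a group $G$, then $\det(D(a)) \neq 0$ for all $a \in G$. Let me sketch a proof.

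The plan is to derive this directly from Proposition \ref{PA29}, specifically from the fact that $D(a)$ has an inverse, namely $D(a^{-1})$. First I would fix an arbitrary element $a \in G$. By property 2 of Proposition \ref{PA29}, we have $D(a^{-1}) = D(a)^{-1}$, so in particular $D(a)$ is an invertible linear operator on the (finite-dimensional) representation space $V$. Concretely, combining properties 1 and 3 of Proposition \ref{PA29}, $D(a)D(a^{-1}) = D(aa^{-1}) = D(e) = \mathrm{id}_V$, and likewise $D(a^{-1})D(a) = \mathrm{id}_V$.

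Next I would apply the determinant. Since we are working with finite-dimensional linear representations (as fixed by the Convention following Definition \ref{DA28}), the determinant is well defined and multiplicative: $\det(D(a))\det(D(a^{-1})) = \det(D(a)D(a^{-1})) = \det(\mathrm{id}_V) = 1$. A product of two scalars equalling $1$ forces each factor to be nonzero, hence $\det(D(a)) \neq 0$. Since $a$ was arbitrary, this holds for all $a \in G$, completing the proof.

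There is no real obstacle here; the statement is an immediate consequence of multiplicativity of the determinant together with the homomorphism property already recorded in Proposition \ref{PA29}. The only point worth a word of care is ensuring the determinant makes sense, which is exactly why the excerpt restricts attention to finite-dimensional linear representations in the Convention just after Definition \ref{DA28}; I would cite that restriction explicitly so the argument is clean.
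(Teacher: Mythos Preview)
Your proof is correct and follows essentially the same approach as the paper: both argue that $D(a)$ is invertible because $D(a^{-1})$ serves as its inverse (via Proposition~\ref{PA29}), hence $\det(D(a))\neq 0$. The paper's proof is a one-liner leaving the determinant step implicit, while you spell out the multiplicativity argument explicitly, but the underlying idea is identical.
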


\begin{proof}
$D(a)\in D(G)\Rightarrow\exists D(a)^{-1}=D(a^{-1})\in D(G)\Rightarrow$ $D(a)$ invertible.
\end{proof}

\begin{define}\label{DA31}
	A representation $D:G\rightarrow D(G)$ is called \textit{faithful}, if $D$ is an isomorphism.
\end{define}

\begin{define}\label{DA32}
	A representation $D'$ is \textit{equivalent} to a representation $D$, i.s. $D'\sim D$, if there exists an invertible linear operator $C$ such that
	\begin{displaymath}
		D'(a)=CD(a)C^{-1} \quad \forall a\in G.
	\end{displaymath}
\end{define}

\begin{define}\label{DA41}
	A representation is called \textit{unitary} if $D(a)$ is unitary for all $a\in G$.
\end{define}

\begin{theorem}\label{TA42}
	Let $G$ be a finite group. Then every representation $D(G)$ is equivalent to a unitary representation.
\end{theorem}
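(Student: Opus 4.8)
The plan is to exhibit, for an arbitrary representation $D$ of a finite group $G$ on a complex vector space $V$, an invertible operator $C$ such that $C D(a) C^{-1}$ is unitary for every $a \in G$; equivalently, to construct a new Hermitian inner product on $V$ with respect to which every $D(a)$ is unitary, and then pass to an orthonormal basis for that product. This is the standard averaging (Weyl unitarity) argument, and it works because $G$ is finite so the averaging sum is finite and well-defined.

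First I would fix any Hermitian positive-definite inner product $\langle\cdot,\cdot\rangle_0$ on $V$ (e.g. the standard one after choosing a basis) and define a new sesquilinear form by averaging over the group:
	\begin{displaymath}
	\langle u, v\rangle := \frac{1}{\mathrm{ord}(G)}\sum_{g\in G}\langle D(g)u, D(g)v\rangle_0.
	\end{displaymath}
I would then check that $\langle\cdot,\cdot\rangle$ is again Hermitian and positive-definite: Hermiticity and linearity/antilinearity are inherited termwise from $\langle\cdot,\cdot\rangle_0$, and positive-definiteness holds because each summand $\langle D(g)u, D(g)u\rangle_0 \geq 0$ (with the $g=e$ term, or any term, strictly positive for $u\neq 0$ since $D(g)$ is invertible by Corollary~\ref{CA30}). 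Next, the key invariance step: for any $a\in G$,
	\begin{displaymath}
	\langle D(a)u, D(a)v\rangle = \frac{1}{\mathrm{ord}(G)}\sum_{g\in G}\langle D(ga)u, D(ga)v\rangle_0 = \frac{1}{\mathrm{ord}(G)}\sum_{h\in G}\langle D(h)u, D(h)v\rangle_0 = \langle u,v\rangle,
	\end{displaymath}
where I used $D(g)D(a)=D(ga)$ (Proposition~\ref{PA29}) and the substitution $h=ga$, which is a bijection of $G$ onto itself. Hence every $D(a)$ preserves $\langle\cdot,\cdot\rangle$, i.e. is unitary with respect to it.

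Finally I would choose a basis of $V$ that is orthonormal for $\langle\cdot,\cdot\rangle$ (such a basis exists by Gram--Schmidt, since $\langle\cdot,\cdot\rangle$ is Hermitian positive-definite), and let $C$ be the change-of-basis operator from the original basis to this one. Then in the new basis the matrix of each $D(a)$ is unitary in the usual sense, so $D'(a) := C D(a) C^{-1}$ defines a unitary matrix representation equivalent to $D$. There is no real obstacle here — the only thing to be careful about is keeping track of which argument of the sesquilinear form is antilinear and confirming that the reindexing $g\mapsto ga$ genuinely permutes $G$ (it does, with inverse $h\mapsto ha^{-1}$); the finiteness of $G$ is what makes the averaging legitimate, and the proof would break for infinite groups without an invariant measure.
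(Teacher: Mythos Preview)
Your proof is correct and is essentially the same argument as the paper's: both average a Hermitian inner product over $G$ to obtain a $D(G)$-invariant inner product, then change to an orthonormal basis for it. If anything, you are slightly more explicit than the paper in verifying positive-definiteness and the bijectivity of the reindexing $g\mapsto ga$.
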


\begin{proof}
	\begin{itemize}
		\item[] Let $x,y\in V$. Define
		\begin{displaymath}
			\langle x,y\rangle:=\frac{1}{\mathrm{ord}(G)}\sum_{a\in G}(D(a)x,D(a)y).
		\end{displaymath}
		where $(\cdot,\cdot)$ is a Hermitian scalar product, thus $\langle\cdot,\cdot\rangle$ is a Hermitian scalar product too. Furthermore $\langle\cdot,\cdot\rangle$ is invariant under $D(G)$, because
		\begin{displaymath}
		\begin{split}
			\langle D(a)x,D(a)y\rangle
			& = \frac{1}{\mathrm{ord}(G)}\sum_{b\in G}(D(b)D(a)x,D(b)D(a)y)=\\
			& = \frac{1}{\mathrm{ord}(G)}\sum_{b\in G}(D(ba)x,D(ba)y)=\\
			& =\frac{1}{\mathrm{ord}(G)}\sum_{ba\in G}(D(ba)x,D(ba)y)=\\
			& = \langle x,y\rangle.
		\end{split}
		\end{displaymath}
	
		\item[] Let $\lbrace u_{i}\rbrace_{i}$ and $\lbrace v_{j}\rbrace_{j}$ be orthonormal bases with respect to $(\cdot,\cdot)$ and $\langle\cdot ,\cdot\rangle$. Then there exists a linear operator $T$ such that $v_{i}=Tu_{i}$.\\
		$\Rightarrow \langle Tx,Ty\rangle=x_{i}^{\ast}y_{j}\langle Tu_{i},Tu_{j}\rangle=x_{i}^{\ast}y_{j}\langle v_{i},v_{j}\rangle=x_{i}^{\ast}y_{j}\delta_{ij}=x_{i}^{\ast}y_{j}(u_{i},u_{j})=(x,y)$.

		\item[] Define $D'(a):=T^{-1}D(a)T$. It follows:
		\begin{displaymath}
		\begin{split}
			(D'(a)x,D'(a)y) & = (T^{-1}D(a)Tx,T^{-1}D(a)Ty)=\\
					& = [(x,y)=\langle Tx,Ty\rangle]=\\
					& = \langle D(a)Tx,D(a)Ty\rangle=\\
					& = \langle Tx,Ty\rangle=\\
					& = (x,y).
		\end{split}
		\end{displaymath}
		
		\item[] So $D'$ is unitary with respect to the (arbitrary) Hermitian scalar product $(\cdot,\cdot)$. 
	\end{itemize}
\end{proof}

\begin{define}\label{DA43}
	Let $D$ be a representation of a group $G$ on a vectorspace $V$. A subspace $W\subset V$ is called \textit{invariant} if $D(a)W\subset W \enspace\forall a\in G$.\\
	Because $D(a)$ is invertible it follows that $D(a)W=W$.\\
	$\lbrace 0\rbrace$ and $V$ are the trivial invariant subspaces of $V$.
\end{define}

\begin{define}\label{DA44}
	A representation $D(G)$ is called \textit{irreducible}, if there is no nontrivial invariant subspace of $V$.
\end{define}

\begin{define}\label{DA45}
	Let $V_{1}$ and $V_{2}$ be vectorspaces. The \textit{direct sum} of $V_{1}$ and $V_{2}$, i.s. $V_{1}\oplus V_{2}$, is defined by
	\begin{displaymath}
	\begin{split}
		V_{1}\oplus V_{2} := & \lbrace (v_{1},v_{2})\vert v_{1}\in V_{1},v_{2}\in V_{2};\\ & (v_{1},v_{2})+(w_{1},w_{2})=(v_{1}+w_{1},v_{2}+w_{2}),\\
		& \lambda(v_{1},v_{2})=(\lambda v_{1},\lambda v_{2})\rbrace.
	\end{split}
	\end{displaymath}
	Instead of the pair $(v_{1},v_{2})$ we will write $v_{1}\oplus v_{2}$.
	\medskip
	\\
	Let $D_{1}$ and $D_{2}$ be representations of a group $G$ on the vectorspaces $V_{1}$ and $V_{2}$. The \textit{direct sum} of the representations $D_{1}$ and $D_{2}$, i.s. $D_{1}\oplus D_{2}$, on the vectorspace $V_{1}\oplus V_{2}$ is defined by
	\begin{displaymath}
		(D_{1}\oplus D_{2})(a)(v_{1}\oplus v_{2}):=D_{1}(a)v_{1}\oplus D_{2}(a)v_{2} \quad \forall a\in G.		
	\end{displaymath}
	If $[D]$ is the matrix representation of $D$ in a certain basis, then the matrix representation of $D:=D_{1}\oplus D_{2}$ is given by
	\begin{displaymath}
		[D(a)]=\left( \begin{matrix}
				[D_{1}(a)]  & \textbf{0} \\
				\textbf{0} & [D_{2}(a)]
		             \end{matrix} \right),
	\end{displaymath}
	where $\textbf{0}$ are null matrices (with appropriate dimensions).
\end{define}

\begin{define}\label{DA46}
	A representation $D(G)$ is called \textit{completely reducible}, if it can be written as a direct sum of irreducible representations.
	\begin{displaymath}
		D=D_{1}\oplus \cdots \oplus D_{n} \quad\mbox{on}\quad V=V_{1}\oplus\cdots\oplus V_{n}.
	\end{displaymath}
\end{define}

\begin{theorem}\label{TA47}
	Every finite dimensional unitary representation of a finite group is completely reducible.
\end{theorem}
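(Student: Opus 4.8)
The plan is to prove the statement by induction on the dimension $n$ of the representation space $V$. The base case $n=1$ is trivial: a one-dimensional representation has no nontrivial subspaces, hence is irreducible, and irreducible representations count as completely reducible by definition \ref{DA46}.

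For the inductive step, suppose $D$ is a finite dimensional unitary representation of a finite group $G$ on $V$ with $\dim V = n > 1$. If $D$ is irreducible, we are done. Otherwise, by definition \ref{DA44}, there exists a nontrivial invariant subspace $W \subset V$, i.e. $D(a)W \subset W$ for all $a \in G$. The key step is to produce a complementary invariant subspace. Here I would use unitarity: take the orthogonal complement $W^{\perp}$ of $W$ with respect to the Hermitian scalar product under which the $D(a)$ are unitary. I claim $W^{\perp}$ is also invariant. Indeed, let $x \in W^{\perp}$ and $w \in W$; then for any $a \in G$, $(D(a)x, w) = (x, D(a)^{\dagger}w) = (x, D(a)^{-1}w) = (x, D(a^{-1})w)$, and since $D(a^{-1})w \in W$ because $W$ is invariant, this vanishes. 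Hence $D(a)x \in W^{\perp}$, so $W^{\perp}$ is invariant. Thus $V = W \oplus W^{\perp}$ as a direct sum of invariant subspaces, and the restrictions $D|_W$ and $D|_{W^{\perp}}$ are again finite dimensional unitary representations of $G$ (unitarity of the restriction is immediate, as the scalar product restricts to each summand). Since $0 < \dim W < n$ and $0 < \dim W^{\perp} < n$, the induction hypothesis applies to each, giving $D|_W = D_1 \oplus \cdots \oplus D_k$ and $D|_{W^{\perp}} = D_{k+1} \oplus \cdots \oplus D_m$ with all $D_i$ irreducible, whence $D = D_1 \oplus \cdots \oplus D_m$ is completely reducible.

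The main obstacle — and the only place requiring genuine care — is verifying that $W^{\perp}$ is invariant, which is precisely where unitarity (as opposed to mere reducibility over $\mathbb{C}$) is essential; without it one cannot conclude the complement is invariant. It is worth noting that the finiteness of $G$ enters only indirectly: it was already used in theorem \ref{TA42} to guarantee that a representation can be taken unitary in the first place, but in the present statement unitarity is part of the hypothesis, so the argument is purely linear-algebraic. I would also remark that combining this theorem with theorem \ref{TA42} yields the corollary that \emph{every} finite dimensional representation of a finite group is completely reducible, which is the form most often used in the sequel.
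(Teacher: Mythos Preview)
Your proof is correct and follows essentially the same approach as the paper: both hinge on the observation that unitarity forces the orthogonal complement $W^{\perp}$ of an invariant subspace $W$ to be invariant, after which the decomposition follows by iteration. The only cosmetic difference is that the paper picks $W$ to be a \emph{smallest} nontrivial invariant subspace (so $D|_W$ is irreducible by construction) and then repeats on $W^{\perp}$, whereas you take an arbitrary invariant $W$ and induct on the dimensions of both summands; these are equivalent ways of packaging the same argument, and your closing remark anticipating the corollary is exactly the paper's Corollary~\ref{CA48}.
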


\begin{proof}
	Let $D$ be a finite dimensional unitary representation of a finite group $G$.
	
	\begin{itemize}
		\item If $D$ is irreducible, it is obviously completely reducible.
		\item $D$ is not irreducible. $\Rightarrow$ There exists a nontrivial invariant subspace $W\subset V$:
		\begin{displaymath}
			D(a)W=W \quad \forall a\in G.
		\end{displaymath}
		$W^{\perp}:=\lbrace x\in V\vert (x,w)=0 \enspace\forall w\in W\rbrace$, where $(\cdot,\cdot)$ is the Hermitian scalar product.
		\\
		Claim: $W^{\perp}$ is an invariant subspace. This is true, because:
		\\
		Let $v\in W^{\perp},w\in W$.
		\begin{displaymath}
			0=(v,w)=(D(a)v,\underbrace{D(a)w}_{\in W})\Rightarrow D(a)v\in W^{\perp}.
		\end{displaymath}
		$\Rightarrow W^{\perp}$ is an invariant subspace. $\Rightarrow V=W\oplus W^{\perp}$ is a decomposition of $V$ into invariant subspaces.
		\medskip
		\\
		We now take for $W$ the smallest nontrivial invariant subspace of $V$. (This means that $W$ has the smallest possible positive dimension.)
		\\
		$\Rightarrow$ The restriction of $D$ on $W$, i.s. $D\vert_{W}$, is irreducible.

		\item If $D\vert_{W^{\perp}}$ is irreducible, $D\vert_{V}$ is completely reducible.

		\item If $D\vert_{W^{\perp}}$ is not irreducible $\Rightarrow$ Performing the same procedure as before $D\vert_{V}$ can be decomposed into a direct sum of irreducible representations:
		\begin{displaymath}
			V=V_{1}\oplus\cdots\oplus V_{n} \quad\quad D\vert_{V}=D\vert_{V_{1}}\oplus\cdots\oplus D\vert_{V_{n}}.
		\end{displaymath}
	\end{itemize}
$\Rightarrow D$ is completely reducible.
\end{proof}

\begin{cor}\label{CA48}
	Every finite dimensional representation of a finite group is completely reducible.
\end{cor}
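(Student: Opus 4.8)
The plan is to obtain the corollary immediately by chaining together the two theorems just established. Given an arbitrary finite-dimensional representation $D$ of a finite group $G$ on a vectorspace $V$, Theorem \ref{TA42} furnishes an invertible linear operator $C$ and a unitary representation $D'$ with $D'(a) = C D(a) C^{-1}$ for all $a \in G$. Theorem \ref{TA47} then applies to $D'$, yielding a decomposition $V = V_1 \oplus \cdots \oplus V_n$ into $D'$-invariant subspaces on which the restrictions $D'\vert_{V_i}$ are irreducible. The only work left is to transport this decomposition back along $C$.

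Concretely, I would first check that if $W \subset V$ is a $D'$-invariant subspace, then $C^{-1} W$ is a $D$-invariant subspace: for $w \in W$ and $a \in G$ one has $D(a) C^{-1} w = C^{-1} D'(a) w \in C^{-1} W$, since $D'(a) w \in W$ by invariance. Applying this to each $V_i$ gives $V = C^{-1} V_1 \oplus \cdots \oplus C^{-1} V_n$ as a direct sum of $D$-invariant subspaces; directness is preserved because $C^{-1}$ is a linear isomorphism. Finally I would observe that the restriction $D\vert_{C^{-1} V_i}$ is equivalent to the irreducible representation $D'\vert_{V_i}$ (conjugation by $C$ restricted appropriately), hence itself irreducible, so $D = \bigoplus_i D\vert_{C^{-1} V_i}$ exhibits $D$ as completely reducible in the sense of Definition \ref{DA46}.

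There is essentially no obstacle here. The single point that warrants a line of care is the verification that equivalence of representations carries invariant subspaces to invariant subspaces (and hence preserves irreducibility of the summands), which is exactly the short computation $D(a) C^{-1} w = C^{-1} D'(a) w$ indicated above; everything else is routine bookkeeping with direct sums.
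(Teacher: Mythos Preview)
Your argument is correct. The paper takes a slightly different and shorter route: instead of passing to an equivalent unitary representation $D'$ and then transporting the decomposition back along $C^{-1}$, it observes (from the proof of Theorem~\ref{TA42}) that $D$ itself is already unitary with respect to the averaged inner product $\langle x,y\rangle = \frac{1}{\mathrm{ord}(G)}\sum_{a\in G}(D(a)x,D(a)y)$, and then the orthogonal-complement argument of Theorem~\ref{TA47} applies directly to $D$ using this inner product. This avoids the conjugation-and-transport step entirely. Your approach has the virtue of treating Theorems~\ref{TA42} and~\ref{TA47} as black boxes, using only their statements; the paper's approach is more economical but requires reaching back into the proof of Theorem~\ref{TA42} for the invariant inner product rather than the conjugating operator.
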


\begin{proof}
	As seen in the proof of theorem \ref{TA42} any representation $D$ is unitary w.r.t.
	\begin{displaymath}
		\langle x,y\rangle:=\frac{1}{\mathrm{ord}(G)}\sum_{a\in G}(D(a)x,D(a)y).
	\end{displaymath}
	$\Rightarrow$ By the same arguments as in the proof of theorem \ref{TA47} $D$ is completely reducible.
\end{proof}

\begin{sublemma}\label{SA49}
	Let $V$ be a finite dimensional vectorspace over $\mathbb{C}$. Then every linear operator on $V$ has at least one eigenvalue.
\end{sublemma}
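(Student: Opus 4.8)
The plan is to prove the standard fact that every linear operator on a finite-dimensional complex vector space $V$ has an eigenvalue, using the fundamental theorem of algebra. Let $A$ be a linear operator on $V$ with $\dim V = n \geq 1$, and pick any nonzero vector $v \in V$. First I would consider the $n+1$ vectors $v, Av, A^2 v, \dots, A^n v$; since $V$ has dimension $n$, these are linearly dependent, so there exist complex coefficients $c_0, c_1, \dots, c_n$, not all zero, with $c_0 v + c_1 A v + \cdots + c_n A^n v = 0$. Let $p(z) = c_0 + c_1 z + \cdots + c_n z^n$ be the corresponding polynomial; it is not the zero polynomial, and it has positive degree (if it were a nonzero constant we would get $c_0 v = 0$ with $v \neq 0$, a contradiction).

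Next I would invoke the fundamental theorem of algebra to factor $p(z) = c\,(z - \lambda_1)(z - \lambda_2)\cdots(z - \lambda_m)$ over $\mathbb{C}$, where $c \neq 0$, $m \geq 1$, and $\lambda_1, \dots, \lambda_m \in \mathbb{C}$ are the roots. Substituting $A$ for $z$ gives the operator identity $c\,(A - \lambda_1 \mathrm{id}_V)(A - \lambda_2 \mathrm{id}_V)\cdots(A - \lambda_m \mathrm{id}_V) v = p(A) v = 0$. Since $v \neq 0$, at least one of the factors $(A - \lambda_j \mathrm{id}_V)$ must fail to be injective: otherwise the composite would be injective and could not send the nonzero vector $v$ to $0$. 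Picking such a $\lambda_j$, there is a nonzero $w \in V$ with $(A - \lambda_j \mathrm{id}_V) w = 0$, i.e. $Aw = \lambda_j w$, so $\lambda_j$ is an eigenvalue of $A$ with eigenvector $w$.

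The only genuine input here is the fundamental theorem of algebra, which is where the hypothesis that the field is $\mathbb{C}$ enters in an essential way; everything else is elementary linear algebra. The main (minor) obstacle is simply being careful that $p$ has positive degree so that it actually has roots, and that the product of injective operators is injective — both are routine but worth stating cleanly. No deeper difficulty is expected, since this is a classical lemma; it is stated here only because it will be needed (presumably to establish Schur's lemma or the diagonalizability results that follow).
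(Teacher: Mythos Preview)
Your proof is correct, but it takes a different route from the paper's. The paper uses the characteristic polynomial: $\det(A-\lambda\,\mathrm{id})$ is a polynomial in $\lambda$ of degree $\dim V$, and by the fundamental theorem of algebra it has at least one complex root, which is then an eigenvalue. Your argument is the determinant-free version (in the style of Axler): you manufacture a nonzero polynomial $p$ with $p(A)v=0$ from a linear dependence among the iterates $v,Av,\dots,A^n v$, factor it over $\mathbb{C}$, and conclude that some factor $A-\lambda_j\,\mathrm{id}$ has nontrivial kernel. Both proofs rest on the fundamental theorem of algebra; the paper's is shorter if determinants are already in hand, while yours avoids determinants altogether and would work in settings where one prefers not to invoke them. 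Either is perfectly adequate for the modest role this sublemma plays (it is used immediately afterwards in the proof of Schur's lemma).
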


\begin{proof}
	The eigenvalues $\lambda$ of an operator $A$ are uniquely characterized by 
		\begin{displaymath}
			\mathrm{det}(A-\lambda id)=0.
		\end{displaymath}
$\mathrm{det}(A-\lambda id)$ is a polynomial of order $\mathrm{dim}V$ in $\lambda$. From the fundamental theorem of algebra we know that there exists at least one zero of the polynomial, therefore it exists at least one eigenvalue $\lambda$.
\end{proof}

\begin{lemma}\label{LA50}
	\textbf{(Schur)} Let $D_{1}$ and $D_{2}$ be finite dimensional irreducible representations of a finite group $G$ on the vectorspaces $V_{1}$ and $V_{2}$.
	\\
	Let $A: V_{1}\rightarrow V_{2}$ be a linear operator s.t.
	\begin{displaymath}
		D_{2}(a) A = A D_{1}(a) \quad\forall a\in G.
	\end{displaymath}
	Then:

	\begin{enumerate}
		\item If $D_{1}\not\sim D_{2} \Rightarrow A=0$.
		\item If $D_{1}=D_{2} \Rightarrow \exists \lambda\in \mathbb{C}: A=\lambda id$.
	\end{enumerate}
\end{lemma}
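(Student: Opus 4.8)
The plan is to prove Schur's lemma in its two parts, using the standard argument built on invariant subspaces (definitions \ref{DA43}, \ref{DA44}) together with sublemma \ref{SA49} for the existence of eigenvalues. The whole proof rests on two elementary observations: if $A$ intertwines $D_1$ and $D_2$, then $\ker A$ is an invariant subspace of $V_1$ and $\mathrm{Im}\,A$ is an invariant subspace of $V_2$.

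For the first part, I would argue as follows. Suppose $Av = 0$. Then for any $a \in G$, $A(D_1(a)v) = D_2(a)(Av) = 0$, so $D_1(a)v \in \ker A$; hence $\ker A$ is invariant under $D_1$. Similarly, if $w = Av \in \mathrm{Im}\,A$, then $D_2(a)w = D_2(a)Av = A D_1(a)v \in \mathrm{Im}\,A$, so $\mathrm{Im}\,A$ is invariant under $D_2$. Since $D_1$ is irreducible, $\ker A$ is either $\{0\}$ or $V_1$; since $D_2$ is irreducible, $\mathrm{Im}\,A$ is either $\{0\}$ or $V_2$. If $\ker A = V_1$, then $A = 0$ and we are done. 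Otherwise $\ker A = \{0\}$, so $A$ is injective; then $\mathrm{Im}\,A \neq \{0\}$, so $\mathrm{Im}\,A = V_2$, and $A$ is surjective. Thus $A$ is a bijective linear map, i.e. invertible, and $D_2(a) = A D_1(a) A^{-1}$ for all $a$, which means $D_1 \sim D_2$ by definition \ref{DA32} — contradicting the hypothesis $D_1 \not\sim D_2$. Hence $A = 0$.

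For the second part, assume $D_1 = D_2 =: D$ on $V := V_1 = V_2$, and $A$ commutes with all $D(a)$. By sublemma \ref{SA49}, $A$ has at least one eigenvalue $\lambda \in \mathbb{C}$. Consider $B := A - \lambda\, \mathrm{id}$. Then $B$ also commutes with every $D(a)$, since $\mathrm{id}$ does. Moreover $\ker B$ is nontrivial (it contains a nonzero eigenvector of $A$ for $\lambda$), and by the argument of part one $\ker B$ is an invariant subspace of $V$. Since $D$ is irreducible and $\ker B \neq \{0\}$, we must have $\ker B = V$, i.e. $B = 0$, so $A = \lambda\, \mathrm{id}$.

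The only mild subtlety — hardly an obstacle — is being careful that the intertwining hypothesis is stated for $A: V_1 \to V_2$ in one direction ($D_2 A = A D_1$) so that the invariance statements come out for $\ker A \subset V_1$ and $\mathrm{Im}\,A \subset V_2$ respectively; everything else is a direct application of the definitions of irreducibility and of equivalence of representations already in the excerpt, plus the existence of an eigenvalue over $\mathbb{C}$. No new machinery is needed.
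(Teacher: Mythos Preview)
Your proof is correct and follows essentially the same approach as the paper: both use that $\ker A$ and $\mathrm{Im}\,A$ are invariant subspaces to force $A$ to be either zero or an isomorphism, and then apply sublemma~\ref{SA49} to $A-\lambda\,\mathrm{id}$ for part~2. The only cosmetic difference is that you state the invariance of $\mathrm{Im}\,A$ separately upfront, whereas the paper folds it into the second case.
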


\begin{proof}
	\begin{itemize}
		\item[]
		\item[1.] $\mathrm{ker}(A):=\lbrace v\in V_{1}\vert A(v_{1})=0\rbrace$ (\textit{kernel of $A$}).
		\smallskip
		\\
		Let $v\in \mathrm{ker}(A)$: $A D_{1}(a)v=D_{2}(a) A v=D_{2}(a)0=0$.
		\smallskip
		\\
		$\Rightarrow D_{1}(a)\mathrm{ker}(A)\subset \mathrm{ker}(A)\Rightarrow \mathrm{ker}(A)$ is an invariant subspace.
		\smallskip
		\\
		Because $D_{1}$ is irreducible, it follows that $\mathrm{ker}(A)=\lbrace 0\rbrace$, or $\mathrm{ker}(A)=V_{1}$.
		\bigskip
		\\
		First case: $\mathrm{ker}(A)=V_{1}\Rightarrow A=0$.
		\smallskip
		\\
		Second case: $\mathrm{ker}(A)=\lbrace 0\rbrace \Rightarrow D_{2}(a) A(V_{1})\neq \lbrace 0\rbrace$.
		\smallskip
		\\
		$\Rightarrow A(V_{1})$ is an invariant subspace of $V_{2}\Rightarrow A(V_{1})=V_{2}$, because $D_{2}$ is irreducible.
		\smallskip
		\\
		$\Rightarrow A(V_{1})=V_{2}$ and $\mathrm{ker}(A)=\lbrace 0\rbrace\Rightarrow A$ is an isomorphism.
		\\
		$\Rightarrow AD_{1}(a)A^{-1}=D_{2}(a)\Rightarrow D_{1}\sim D_{2}$.

		\item[2.] $D_{1}=D_{2}\Rightarrow V_{1}=V_{2}\Rightarrow$
		\begin{displaymath}
			D_{2}(a)(A-\lambda id)=(A-\lambda id)D_{1}(a)
		\end{displaymath}
		By sublemma \ref{SA49} every linear operator has at least one eigenvalue ($\in \mathbb{C}$).
		\\
		$\Rightarrow$ Let $\lambda$ be this eigenvalue $\Rightarrow (A-\lambda id)$ is singular.
		\smallskip
		\\
		$\Rightarrow \mathrm{ker}(A-\lambda id)\neq\lbrace 0\rbrace\Rightarrow \mathrm{ker}(A-\lambda id)=V_{1}$ (see proof of 1.)
		\smallskip
		\\
		$\Rightarrow A-\lambda id = 0 \Rightarrow A=\lambda id$.
	\end{itemize}
\end{proof}
\hspace{0mm}\\
This very elegant and compact proof follows \cite{sternberg}(p.55).

\section{Orthogonality relations}

\begin{lemma}\label{TA51}
	Let $G$ be a finite group with order $g=\mathrm{ord}(G)$, and let $D(G)$ be an $n$-dimensional irreducible representation of $G$. Then
	\begin{displaymath}
		\sum_{a\in G}D_{ij}(a)D_{kl}(a^{-1})=\frac{g}{n}\delta_{jk}\delta_{il}.
	\end{displaymath}
	($D_{ij}$ means the $ij$-component of the matrix representation of $D$ w.r.t. a chosen basis.)
\end{lemma}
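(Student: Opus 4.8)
The plan is to prove this orthogonality relation (the so-called Great Orthogonality Theorem, in its ``contracted'' form) by constructing an auxiliary operator out of an arbitrary linear map, averaging it over the group, and then applying Schur's lemma (Lemma \ref{LA50}, part 2).

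First I would fix an arbitrary $n \times n$ matrix $X$ and form the averaged operator
	\begin{displaymath}
		A_{X} := \sum_{a \in G} D(a) X D(a^{-1}).
	\end{displaymath}
The key observation is that $A_{X}$ intertwines $D$ with itself: for any $b \in G$,
	\begin{displaymath}
		D(b) A_{X} D(b^{-1}) = \sum_{a \in G} D(ba) X D((ba)^{-1}) = A_{X},
	\end{displaymath}
using that $a \mapsto ba$ permutes $G$ and that $D$ is a homomorphism (Proposition \ref{PA29}). Hence $D(b) A_{X} = A_{X} D(b)$ for all $b \in G$, and since $D$ is irreducible, Schur's lemma gives $A_{X} = \lambda(X)\, \mathrm{id}$ for some scalar $\lambda(X) \in \mathbb{C}$. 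Taking the trace of the defining sum yields $n\lambda(X) = \mathrm{Tr}(A_{X}) = \sum_{a\in G} \mathrm{Tr}(D(a)XD(a^{-1})) = \sum_{a\in G}\mathrm{Tr}(X) = g\,\mathrm{Tr}(X)$, so $\lambda(X) = \frac{g}{n}\mathrm{Tr}(X)$.

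Next I would extract the component identity by choosing $X$ to be the elementary matrix $E^{(jk)}$ with entries $(E^{(jk)})_{pq} = \delta_{pj}\delta_{qk}$. Then $\mathrm{Tr}(E^{(jk)}) = \delta_{jk}$, while the $il$-component of $A_{E^{(jk)}}$ computes to
	\begin{displaymath}
		(A_{E^{(jk)}})_{il} = \sum_{a\in G} D_{ip}(a)\,(E^{(jk)})_{pq}\,D_{ql}(a^{-1}) = \sum_{a\in G} D_{ij}(a)\,D_{kl}(a^{-1}).
	\end{displaymath}
Equating this with $\lambda(E^{(jk)})\delta_{il} = \frac{g}{n}\delta_{jk}\delta_{il}$ gives exactly the claimed formula. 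One small point to be careful about: Schur's lemma as stated in the excerpt applies to irreducible representations on complex vector spaces, which is precisely our setting, and the proof of Sublemma \ref{SA49} (every linear operator on a finite-dimensional complex space has an eigenvalue) is what makes the scalar $\lambda(X)$ exist in the first place.

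I do not expect any serious obstacle here; the argument is short and self-contained. The only mild subtlety is the bookkeeping with the index contractions when passing from the operator identity $A_{X} = \lambda(X)\,\mathrm{id}$ to the component form, and making sure the reindexing $\sum_{a\in G} = \sum_{ba \in G}$ is justified (it is, since left multiplication by $b$ is a bijection of $G$). If one wanted to avoid even mentioning $E^{(jk)}$ explicitly, one could equivalently note that $\lambda$ is linear in $X$ and compare coefficients, but the elementary-matrix substitution is the cleanest route.
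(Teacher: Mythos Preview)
Your proof is correct and follows essentially the same approach as the paper: both average an arbitrary operator over the group, invoke Schur's lemma to deduce it is a scalar, and then specialize to an elementary matrix and trace to pin down the constant. The only cosmetic difference is that you compute the scalar $\lambda(X)=\frac{g}{n}\mathrm{Tr}(X)$ before specializing to $E^{(jk)}$, whereas the paper first specializes and then traces; the content is identical.
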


\begin{proof}
We define
\begin{displaymath}
	A:=\sum_{a\in G}D(a)BD(a^{-1})
\end{displaymath}
with an arbitrary linear operator $B$ on $V$.
\medskip
\\
Claim: $D(a)A=AD(a)$. This is true, because:
\begin{displaymath}
		\begin{split}
			D(a)A & = \sum_{b\in G}D(a)D(b)BD(b^{-1})=\\
			& = \sum_{b\in G} D(ab)BD(b^{-1})D(a^{-1})D(a)=\\
			& = \sum_{b\in G} D(ab)BD((ab)^{-1})D(a)=\\
			& = \sum_{ab\in G} D(ab)BD((ab)^{-1})D(a)=AD(a).
		\end{split}
		\end{displaymath}
Using Schur's lemma 2 (lemma \ref{LA50} 2.) it follows: $A=\lambda id$ for a $\lambda\in\mathbb{C}$.
\bigskip
\\
We now set $B_{ij}=\delta_{il}\delta_{jm}$ for fixed $l,m$. It follows
\begin{displaymath}
	\begin{split}
	A_{ij} & =\sum_{a\in G} D_{ik}(a)B_{kp}D_{pj}(a^{-1})=\\
	& = \sum_{a\in G} D_{ik}(a)\delta_{kl}\delta_{pm}D_{pj}(a^{-1})=\\
	& = \sum_{a\in G}D_{il}(a)D_{mj}(a^{-1})=\lambda(l,m) \delta_{ij}.
	\end{split}
\end{displaymath}
$\lambda(l,m)$ can be determined by setting $i=j$:
\begin{displaymath}
	\begin{split}
		A_{ii} & = \lambda(l,m)\delta_{ii}=n\lambda(l,m)=\\
		& =\sum_{a\in G}D_{il}(a)D_{mi}(a^{-1})=\\
		& = \sum_{a\in G}D_{mi}(a^{-1})D_{il}(a)=\\
		& = \sum_{a\in G}D_{ml}(e)=g\delta_{ml}.
	\end{split}
\end{displaymath}
$$\Rightarrow \lambda(l,m)=\frac{g}{n}\delta_{lm}\Rightarrow \sum_{a\in G}D_{il}(a)D_{mj}(a^{-1})=\frac{g}{n}\delta_{lm} \delta_{ij}.$$
\end{proof}

\begin{lemma}\label{LA52}
	Let $D^{(1)}$ and $D^{(2)}$ be two non-equivalent irreducible representations of $G$ ($\mathrm{dim}V_{1}=n_{1}$, $\mathrm{dim}V_{2}=n_{2}$). Then
	\begin{displaymath}
			\sum_{a\in G}D^{(2)}_{il}(a)D^{(1)}_{mj}(a^{-1})=0 \quad\forall i,j,l,m.
	\end{displaymath}
\end{lemma}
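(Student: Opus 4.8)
The statement to prove is the ``mixed'' Schur orthogonality relation: for two non-equivalent irreducible representations $D^{(1)},D^{(2)}$ of a finite group $G$, one has $\sum_{a\in G}D^{(2)}_{il}(a)D^{(1)}_{mj}(a^{-1})=0$ for all index choices. The plan is to mimic exactly the construction used in the proof of Lemma \ref{TA51}, but now with an intertwining operator going between the two different representation spaces, and then to invoke the \emph{first} alternative of Schur's lemma (Lemma \ref{LA50}.1) instead of the second.

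First I would fix arbitrary indices $l$ (ranging over $1,\dots,n_1$) and $m$ (ranging over $1,\dots,n_2$) and introduce the linear map $A:V_1\to V_2$ defined by
	\begin{displaymath}
	A:=\sum_{a\in G}D^{(2)}(a)\,B\,D^{(1)}(a^{-1}),
	\end{displaymath}
where $B:V_1\to V_2$ is, for the moment, an arbitrary linear operator. The same one-line reindexing computation as in Lemma \ref{TA51} — replacing the summation variable $b$ by $ab$ and using the homomorphism property of both $D^{(1)}$ and $D^{(2)}$ — shows that $A$ is an intertwiner:
	\begin{displaymath}
	D^{(2)}(a)\,A=A\,D^{(1)}(a)\qquad\forall a\in G.
	\end{displaymath}
Since $D^{(1)}$ and $D^{(2)}$ are irreducible and $D^{(1)}\not\sim D^{(2)}$, Schur's lemma (Lemma \ref{LA50}, case 1) forces $A=0$.

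Next I would make the special choice $B_{ij}=\delta_{im}\delta_{jl}$ (the matrix with a single $1$ in the $(m,l)$ slot), so that in components
	\begin{displaymath}
	A_{pq}=\sum_{a\in G}D^{(2)}_{pk}(a)\,B_{kr}\,D^{(1)}_{rq}(a^{-1})=\sum_{a\in G}D^{(2)}_{pm}(a)\,D^{(1)}_{lq}(a^{-1}).
	\end{displaymath}
Because $A=0$, every component vanishes: $\sum_{a\in G}D^{(2)}_{pm}(a)D^{(1)}_{lq}(a^{-1})=0$ for all $p,q$. Relabeling $p\to i$, $q\to j$, $m\to l'$, $l\to m'$ (i.e.\ choosing the single nonzero entry of $B$ appropriately) yields precisely $\sum_{a\in G}D^{(2)}_{il}(a)D^{(1)}_{mj}(a^{-1})=0$ for all $i,j,l,m$, which is the claim.

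There is no real obstacle here — the argument is a direct transcription of Lemma \ref{TA51} with the trivial modification that the auxiliary operator now maps $V_1$ to $V_2$, so that the relevant invocation of Schur's lemma is case 1 rather than case 2. The only point requiring a moment's care is bookkeeping: making sure the single nonzero entry of $B$ is placed so that the resulting component identity matches the index pattern $D^{(2)}_{il}(a)D^{(1)}_{mj}(a^{-1})$ stated in the lemma. One could optionally remark that, just as Lemma \ref{TA51} and this lemma are usually combined into a single ``grand orthogonality'' statement, the two can be written uniformly using $D^{(\mu)},D^{(\nu)}$ and a Kronecker delta $\delta_{\mu\nu}$ on representation labels, but that unification is not needed for the proof itself.
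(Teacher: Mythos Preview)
Your proposal is correct and follows essentially the same approach as the paper: define the averaged intertwiner $A=\sum_{a\in G}D^{(2)}(a)BD^{(1)}(a^{-1})$, observe that it satisfies $D^{(2)}(a)A=AD^{(1)}(a)$ by the same reindexing as in Lemma~\ref{TA51}, invoke case~1 of Schur's lemma to conclude $A=0$, and then specialize $B$ to a single-entry matrix. The paper's proof is almost word-for-word the same, differing only in the (immaterial) choice of which slot of $B$ carries the nonzero entry.
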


\begin{proof}
We define
\begin{displaymath}
	A:=\sum_{a\in G}D^{(2)}(a)BD^{(1)}(a^{-1})
\end{displaymath}
with an arbitrary linear operator $B:V_{1}\rightarrow V_{2}$. Completely analogous to the proof of lemma \ref{TA51} we have
\begin{displaymath}
	D^{(2)}(a)A=AD^{(1)}(a) \quad\forall a\in G.
\end{displaymath}
Using Schur's lemma 1 (lemma \ref{LA50} 1.) it follows $A=0$. By choosing $B_{ij}=\delta_{il}\delta_{jm}$ for fixed $l,m$ as in the proof of lemma \ref{TA51} we obtain:
\begin{displaymath}
	\sum_{a\in G}D^{(2)}_{il}(a)D^{(1)}_{mj}(a^{-1})=0 \quad\forall i,j,l,m.
\end{displaymath}
\end{proof}

\begin{theorem}\label{TA53}
	Let $D^{(\alpha)},\enspace \alpha\in\lbrace 1,...,k\rbrace$ be all non-equivalent irreducible representations of a finite group $G$ on vectorspaces $V_{\alpha}$ ($\mathrm{dim}V_{\alpha}=n_{\alpha}$). Then the following \textit{orthogonality relation} holds:
	\begin{displaymath}
		\sum_{a\in G}D^{(\alpha)}_{il}(a)D^{(\beta)}_{mj}(a^{-1})=\frac{g}{n_{\alpha}}\delta_{\alpha\beta}\delta_{ij}\delta_{lm}.
	\end{displaymath}
\end{theorem}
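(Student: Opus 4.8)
The plan is to derive Theorem \ref{TA53} by a single case distinction on whether $\alpha=\beta$, invoking Lemma \ref{TA51} in the diagonal case and Lemma \ref{LA52} in the off-diagonal case; no new machinery is needed beyond what has already been established.

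Concretely, I would fix $\alpha,\beta\in\{1,\dots,k\}$ and set $S_{ijlm}:=\sum_{a\in G}D^{(\alpha)}_{il}(a)\,D^{(\beta)}_{mj}(a^{-1})$. In the case $\alpha=\beta$ the representation $D^{(\alpha)}$ appears in both factors (literally the same representation, not merely an equivalent one), so Lemma \ref{TA51} applies directly; reading off its four free indices as $i,l,m,j$ in that order gives $S_{ijlm}=\frac{g}{n_{\alpha}}\delta_{lm}\delta_{ij}$. In the case $\alpha\neq\beta$ the hypothesis that $D^{(1)},\dots,D^{(k)}$ are pairwise non-equivalent means $D^{(\alpha)}\not\sim D^{(\beta)}$, so Lemma \ref{LA52} applies and gives $S_{ijlm}=0$. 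Inserting a factor $\delta_{\alpha\beta}$ to combine the two cases yields $S_{ijlm}=\frac{g}{n_{\alpha}}\delta_{\alpha\beta}\delta_{ij}\delta_{lm}$, which is exactly the asserted orthogonality relation.

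I do not expect any genuine obstacle here: the mathematical content is entirely contained in Lemmata \ref{TA51} and \ref{LA52}, which themselves rest on the two halves of Schur's lemma (\ref{LA50}), and the theorem merely packages these two statements under a common $\delta_{\alpha\beta}$. The only points deserving a word of care are the correct matching of the free indices when specializing Lemma \ref{TA51}, and the observation that \emph{equality} $\alpha=\beta$ (rather than mere equivalence) is what forces the two matrix representations occurring in $S_{ijlm}$ to coincide, so that Lemma \ref{TA51} is applicable without invoking any equivalence-invariant reformulation.
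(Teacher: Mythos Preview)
Your proposal is correct and matches the paper's proof essentially line for line: the paper likewise splits into the cases $\alpha=\beta$ (Lemma \ref{TA51}) and $\alpha\neq\beta$ (Lemma \ref{LA52}) and combines them with a $\delta_{\alpha\beta}$. The only cosmetic difference is that the paper phrases the combination via a scalar product $(D^{(\alpha)}_{ij},D^{(\beta)}_{kl}):=\frac{1}{g}\sum_{a\in G}D^{(\alpha)}_{ij}(a)D^{(\beta)}_{kl}(a^{-1})$ introduced on the spot, which adds no mathematical content.
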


\begin{proof}
	$\lbrace D^{(\alpha)}_{ij}(a)\rbrace_{\alpha,i,j}$ are functions (with the discrete variable $a$), which are elements in a vectorspace (the so called \textit{group space}). 
	\smallskip
	\\
	We define a scalar product on the group space:
	\begin{displaymath}
		(D^{(\alpha)}_{ij},D^{(\beta)}_{kl}):=\frac{1}{g}\sum_{a\in G}D^{(\alpha)}_{ij}(a)D^{(\beta)}_{kl}(a^{-1}).
	\end{displaymath}
	From lemma \ref{TA51} and lemma \ref{LA52} we deduce
	\begin{displaymath}
		(D^{(\alpha)}_{ij},D^{(\beta)}_{kl})=0 \quad\mbox{for}\quad \alpha\neq\beta,
	\end{displaymath}
	and
	\begin{displaymath}
		(D^{(\alpha)}_{ij},D^{(\alpha)}_{kl})=\frac{1}{n_{\alpha}}\delta_{jk}\delta_{il}.
	\end{displaymath}
	It follows
	\begin{displaymath}
		(D^{(\alpha)}_{il},D^{(\beta)}_{mj})=\frac{1}{n_{\alpha}}\delta_{\alpha\beta}\delta_{ij}\delta_{lm}
		\Rightarrow
		\sum_{a\in G}D^{(\alpha)}_{il}(a)D^{(\beta)}_{mj}(a^{-1})=\frac{g}{n_{\alpha}}\delta_{\alpha\beta}\delta_{ij}\delta_{lm}.
	\end{displaymath}
\end{proof}
\hspace{0mm}\\
To formulate the orthogonality relations in an elegant way, we will now define a scalar product similar to the product in the proof of theorem \ref{TA53}.

\begin{defprop}\label{DA54}
	Let $D^{(\alpha)}$ and $D^{(\beta)}$ be representations of a finite group $G$. We define a scalar product on the group space by
	\begin{displaymath}
		(D^{(\alpha)}_{ij},D^{(\beta)}_{kl}):=\frac{1}{\mathrm{ord}(G)}\sum_{a\in G}D^{(\alpha)}_{ij}(a)D^{(\beta)}_{kl}(a^{-1}).
	\end{displaymath}
	\begin{enumerate}
		\item Let $D$ be an $n$-dimensional irreducible representation, then
		\begin{displaymath}
			(D_{ij},D_{kl})=\frac{1}{n}\delta_{jk}\delta_{il}.
		\end{displaymath}
		\item Let $D^{(1)}$ and $D^{(2)}$ be two non-equivalent irreducible representations of $G$, then
		\begin{displaymath}
			(D^{(1)}_{ij},D^{(2)}_{kl})=0 \quad \forall i,j,k,l.
		\end{displaymath}
		\item Let $D^{(\alpha)},\enspace \alpha\in\lbrace 1,...,k\rbrace$ be all non-equivalent irreducible representations of a finite group $G$ on vectorspaces $V_{\alpha}$ ($\mathrm{dim}V_{\alpha}=n_{\alpha}$), then
		\begin{displaymath}
			(D^{(\alpha)}_{ij},D^{(\beta)}_{kl})=\frac{1}{n_{\alpha}}\delta_{\alpha\beta}\delta_{jk}\delta_{il}.
		\end{displaymath}
	\end{enumerate}
\end{defprop}

\begin{proof}
	This proposition directly follows from lemma \ref{TA51}, lemma \ref{LA52} and theorem \ref{TA53}.
\end{proof}

\section{Characters and their orthogonality relations}

\begin{defprop}\label{DA55}
	Let $A,B$ be different matrix representations of a linear operator $\phi$. Then $\mathrm{Tr}(A)=\mathrm{Tr}(B)=:\mathrm{Tr}(\phi)$.
\end{defprop}

\begin{proof}
Because $A,B$ are matrix representations of the same linear operator, there exists an invertible matrix $C$ such that $A=CBC^{-1}.\Rightarrow \mathrm{Tr}(A)=\mathrm{Tr}(CBC^{-1})=\mathrm{Tr}(C^{-1}CB)=\mathrm{Tr}(B).$
\end{proof}

\begin{define}\label{DA56}
	Let $D$ be a representation of $G\ni a$. $\chi_{D}(a):=\mathrm{Tr}(D(a))$ is called \textit{character} of $a$ in the representation $D$.
\end{define}

\begin{define}\label{DA57}
	Hermitian scalar product for characters:
	\begin{displaymath}
		(\chi_{D_{1}},\chi_{D_{2}}):=\frac{1}{\mathrm{ord}(G)}\sum_{a\in G}\chi_{D_{1}}(a^{-1})\chi_{D_{2}}(a).
	\end{displaymath}
\end{define}

\begin{prop}\label{PA58}
	\textbf{Properties of characters:}
	\begin{enumerate}
		\item If $a$ and $b$ are conjugate elements of $G$, then
		\begin{displaymath}
			\chi_{D}(a)=\chi_{D}(b).
		\end{displaymath}
		\item If $D$ and $D'$ are equivalent, then
		\begin{displaymath}
			\chi_{D}(a)=\chi_{D'}(a) \quad\forall a\in G.
		\end{displaymath}
		\item $\chi_{D}(a^{-1})=\chi_{D}^{\ast}(a).$
		\item Let $D_{1}$ and $D_{2}$ be representations of $G$, then
		\begin{displaymath}
			\chi_{D_{1}\oplus D_{2}}(a)=\chi_{D_{1}}(a)+\chi_{D_{}2}(a).
		\end{displaymath}
		\item $D_{1}\not\sim D_{2}\Rightarrow (\chi_{D_{1}},\chi_{D_{2}})=0.$
		\item $D$ irreducible $\Leftrightarrow (\chi_{D},\chi_{D})=1.$ 
	\end{enumerate}
\end{prop}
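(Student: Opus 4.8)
The plan is to handle the elementary items (1), (2), (4) directly from the behaviour of the trace, to reduce (3) to the unitary case, and to derive (5) and (6) from the orthogonality relations collected in Definition and Proposition \ref{DA54}.

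For (1) I would note that if $b=uau^{-1}$ then $D(b)=D(u)D(a)D(u)^{-1}$ by Proposition \ref{PA29}, and cyclicity of the trace gives $\chi_D(b)=\mathrm{Tr}(D(b))=\mathrm{Tr}(D(a))=\chi_D(a)$. For (2), $D'(a)=CD(a)C^{-1}$ for some invertible $C$, and the same cyclicity argument applies. For (4), in a basis adapted to the decomposition the matrix of $(D_1\oplus D_2)(a)$ is block diagonal with diagonal blocks $[D_1(a)]$ and $[D_2(a)]$ (Definition \ref{DA45}), so its trace is the sum of the two block traces. For (3), I would first invoke Theorem \ref{TA42} to replace $D$ by an equivalent unitary representation $D_u$; by (2) it suffices to verify the identity for $D_u$, and then $D_u(a^{-1})=D_u(a)^{-1}=D_u(a)^{\dagger}$, so since $\mathrm{Tr}(M^{\dagger})=(\mathrm{Tr}\,M)^{\ast}$ for every square matrix $M$ we get $\chi_{D_u}(a^{-1})=(\chi_{D_u}(a))^{\ast}$.

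For (5), read (as intended in this context) with $D_1\not\sim D_2$ both irreducible, I would write $\chi_{D_1}(a)=D^{(1)}_{ii}(a)$ and $\chi_{D_2}(a)=D^{(2)}_{jj}(a)$ (summation over repeated indices) and use the scalar product of Definition \ref{DA57}:
\begin{displaymath}
(\chi_{D_1},\chi_{D_2})=\frac{1}{\mathrm{ord}(G)}\sum_{a\in G}\chi_{D_1}(a^{-1})\chi_{D_2}(a)=\sum_{i,j}(D^{(2)}_{jj},D^{(1)}_{ii})=0
\end{displaymath}
by part 2 of Definition and Proposition \ref{DA54}. For (6), I would use complete reducibility (Corollary \ref{CA48}) to write $D\sim\bigoplus_{\alpha}m_{\alpha}D^{(\alpha)}$, where the $D^{(\alpha)}$ range over the inequivalent irreducible representations and $m_{\alpha}\in\mathbb{N}$; by (2) and (4), $\chi_D=\sum_{\alpha}m_{\alpha}\chi_{D^{(\alpha)}}$, and hence
\begin{displaymath}
(\chi_D,\chi_D)=\sum_{\alpha,\beta}m_{\alpha}m_{\beta}(\chi_{D^{(\alpha)}},\chi_{D^{(\beta)}})=\sum_{\alpha}m_{\alpha}^{2},
\end{displaymath}
where the cross terms vanish by (5) and $(\chi_{D^{(\alpha)}},\chi_{D^{(\alpha)}})=\sum_{i,j}(D^{(\alpha)}_{jj},D^{(\alpha)}_{ii})=\sum_{i,j}\frac{1}{n_{\alpha}}\delta_{ij}=1$ by part 1 of Definition and Proposition \ref{DA54}. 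Since the $m_{\alpha}$ are non-negative integers, $\sum_{\alpha}m_{\alpha}^{2}=1$ holds if and only if exactly one $m_{\alpha}$ equals $1$ and the others vanish, that is, if and only if $D$ is equivalent to a single irreducible representation; this yields the equivalence in (6).

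The routine part is the trace algebra in (1)--(4); the hard part, such as it is, lies in (5)--(6). There I would have to check carefully that the character scalar product of Definition \ref{DA57} pairs $\chi(a^{-1})$ with $\chi(a)$ in the same order as the matrix-coefficient scalar product of Definition and Proposition \ref{DA54}, so that the orthogonality relations transfer without any sign or index mismatch, and I would have to record that the instance of (5) used inside (6) is precisely the irreducible case already established. No genuinely new idea beyond the orthogonality relations is required.
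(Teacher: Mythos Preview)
Your proposal is correct and follows essentially the same route as the paper: items (1)--(4) via trace cyclicity and block-diagonality, item (3) via reduction to the unitary case using Theorem \ref{TA42}, and items (5)--(6) via the matrix-coefficient orthogonality relations of Definition and Proposition \ref{DA54}. Your treatment of (6) via the multiplicity formula $(\chi_D,\chi_D)=\sum_\alpha m_\alpha^2$ is a slightly cleaner packaging of the same argument the paper gives by handling the two implications separately.
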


\begin{proof}
	\begin{itemize}
		\item[]
		\item[1.] Because of the invariance of the trace. (See definition and proposition \ref{DA55}.)
		\item[2.] See 1.
		\item[3.] Every representation is equivalent to a unitary representation, and equivalent representations have the same characters. Thus we can assume that $D$ is unitary for all calculations involving only characters.
		\smallskip
		\\
		$\Rightarrow D(a)^{\dagger}=D(a^{-1})$ $\Rightarrow \mathrm{Tr}D(a)^{\dagger}=\mathrm{Tr}D(a^{-1})$ $\Rightarrow \chi_{D}^{\ast}(a)=\chi_{D}(a^{-1})$.
		\item[4.] This is easy to see in the matrix representation of $D_{1}\oplus D_{2}$.
		\item[5.] $D_{1}\not\sim D_{2}\Rightarrow $ From definition and proposition \ref{DA54} 2. follows:
		\smallskip
		\\
		$(D^{(1)}_{ij},D^{(2)}_{kl})=0 \quad \forall i,j,k,l.$ $\Rightarrow (D^{(1)}_{ii},D^{(2)}_{kk})=0\Rightarrow (\chi_{D_{1}},\chi_{D_{2}})=0.$
		\item[6.] 
		\begin{itemize}
		\item[]$\Rightarrow$: From definition and proposition \ref{DA54} 3. follows:
		\begin{displaymath}
			(D^{(\alpha)}_{ij},D^{(\alpha)}_{kl})=\frac{1}{n_{\alpha}}\delta_{jk}\delta_{il}\Rightarrow (\chi^{(\alpha)},\chi^{(\alpha)})=\frac{1}{n_{\alpha}}\delta_{ik}\delta_{ik}=\frac{n_{\alpha}}{n_{\alpha}}=1.
		\end{displaymath}
		\item[]$\Leftarrow$: Suppose $(\chi_{D},\chi_{D})=1$ with $D$ not irreducible. $\Rightarrow D$ can be decomposed into irreducible representations $D^{(\alpha)}$:
		\begin{displaymath}
			D(a)=\bigoplus_{\alpha=1}^{k}D^{(\alpha)}(a)
		\end{displaymath}
		with $k\geq 2$. It follows:
		\begin{displaymath}
		\begin{split}
			(\chi_{D},\chi_{D}) & =\sum_{\alpha,\beta}(\chi^{(\alpha)},\chi^{(\beta)})=\\
			& =\sum_{\alpha=1}^k \underbrace{(\chi^{(\alpha)},\chi^{(\alpha)})}_{1}+\underbrace{\sum_{\alpha\neq\beta}(\chi^{(\alpha)},\chi^{(\beta)})}_{0}=k\geq 2.
		\end{split}
		\end{displaymath}
		$\Rightarrow$ contradiction to $(\chi_{D},\chi_{D})=1$.
		\end{itemize}
		
	\end{itemize}
\end{proof}

\begin{theorem}\label{TA59}
	\textbf{Orthogonality of characters:} Let $D^{(\alpha)},\enspace \alpha\in\lbrace 1,...,k\rbrace$ be all non-equivalent irreducible representations of a finite group $G$ on vectorspaces $V_{\alpha}$ ($\mathrm{dim}V_{\alpha}=n_{\alpha}$), then
	\begin{displaymath}
		(\chi^{(\alpha)},\chi^{(\beta)})=\delta_{\alpha\beta}.
	\end{displaymath}
\end{theorem}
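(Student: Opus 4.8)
The plan is to derive this from the general orthogonality relation for matrix elements of irreducible representations (Definition and Proposition~\ref{DA54}), which has already been established. The character of a representation is the trace of its matrix representation, so $\chi^{(\alpha)}(a) = D^{(\alpha)}_{ii}(a)$ with summation over $i$ from $1$ to $n_\alpha$. First I would write out the Hermitian scalar product of characters explicitly using Definition~\ref{DA57}:
	\begin{displaymath}
	(\chi^{(\alpha)},\chi^{(\beta)}) = \frac{1}{\mathrm{ord}(G)}\sum_{a\in G}\chi^{(\alpha)}(a^{-1})\chi^{(\beta)}(a) = \frac{1}{\mathrm{ord}(G)}\sum_{a\in G} D^{(\alpha)}_{ii}(a^{-1}) D^{(\beta)}_{jj}(a).
	\end{displaymath}
Here the indices $i$ and $j$ are summed independently over the dimensions of the respective representation spaces.

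Next I would recognize the sum over the group as precisely the scalar product on the group space from Definition and Proposition~\ref{DA54}, namely $(D^{(\beta)}_{jj}, D^{(\alpha)}_{ii})$ (one must be slightly careful about which slot carries $a$ versus $a^{-1}$, but the structure matches: one factor at $a$, one at $a^{-1}$, divided by $\mathrm{ord}(G)$). Applying part 3 of that proposition, $(D^{(\alpha)}_{ij}, D^{(\beta)}_{kl}) = \frac{1}{n_\alpha}\delta_{\alpha\beta}\delta_{jk}\delta_{il}$, with the appropriate index identifications gives a factor $\frac{1}{n_\alpha}\delta_{\alpha\beta}\delta_{ij}\delta_{ij}$ (or the analogous expression), so that
	\begin{displaymath}
	(\chi^{(\alpha)},\chi^{(\beta)}) = \frac{1}{n_\alpha}\delta_{\alpha\beta}\sum_{i=1}^{n_\alpha}\sum_{j=1}^{n_\alpha}\delta_{ij}\delta_{ij} = \frac{1}{n_\alpha}\delta_{\alpha\beta}\sum_{i=1}^{n_\alpha} 1 = \frac{n_\alpha}{n_\alpha}\delta_{\alpha\beta} = \delta_{\alpha\beta}.
	\end{displaymath}
This completes the argument. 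Note that the case $\alpha \neq \beta$ also follows directly from property 5 of Proposition~\ref{PA58}, and the case $\alpha = \beta$ from property 6, so an alternative route is simply to cite those two facts; but grounding everything in Definition and Proposition~\ref{DA54} keeps the proof self-contained.

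The main obstacle — really the only subtlety — is bookkeeping with the index contractions: one must make sure that when specializing the four free indices in $(D^{(\alpha)}_{ij}, D^{(\beta)}_{kl})$ to the trace situation (setting $i=j$ and $k=l$ and then summing), the Kronecker deltas $\delta_{jk}\delta_{il}$ collapse correctly to produce exactly $n_\alpha$ and not $n_\alpha^2$ or $1$. Writing the double sum over $i$ and $j$ carefully and tracking that $\delta_{ij}$ forces $i=j$ resolves this. Everything else is a routine substitution, since the heavy lifting (Schur's lemma and the matrix-element orthogonality relations) is already done in the earlier parts of the appendix.
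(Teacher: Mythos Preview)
Your proof is correct and follows essentially the same route as the paper: take the matrix-element orthogonality relation from Definition and Proposition~\ref{DA54}, set the indices to trace form $(D^{(\alpha)}_{ii},D^{(\beta)}_{kk})$, and contract the Kronecker deltas to obtain $\delta_{\alpha\beta}$. The paper's version is terser (it does the contraction in one line), but the argument is the same.
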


\begin{proof}
	From definition and proposition \ref{DA54}:
	\begin{displaymath}
		(D^{(\alpha)}_{ij},D^{(\beta)}_{kl})=\frac{1}{n_{\alpha}}\delta_{\alpha\beta}\delta_{jk}\delta_{il}.
	\end{displaymath}
	Forming the trace leads to:
	\begin{displaymath}
		(\chi^{(\alpha)},\chi^{(\beta)})=(D^{(\alpha)}_{ii},D^{(\beta)}_{kk})=\frac{1}{n_{\alpha}}\delta_{\alpha\beta}\delta_{jk}\delta_{jk}=\delta_{\alpha\beta}.
	\end{displaymath}
\end{proof}

\begin{prop}\label{PA59a}
Using the orthogonality relation for characters one can easily calculate how often an irreducible representation is contained in a reducible representation.
\smallskip
\\
Let $D$ be an arbitrary representation. $D$ can be decomposed into a direct sum of non-equivalent irreducible representations.
	\begin{displaymath}
		D(a)=\bigoplus_{\alpha}b_{\alpha}D^{(\alpha)}(a) \quad\quad \chi_{D}(a)=b_{\alpha}\chi^{(\alpha)}(a)
	\end{displaymath}
	with $b_{\alpha}\in \mathbb{N}$. $\alpha$ labels the non-equivalent irreducible representations.
	\smallskip
	\\
	Using the orthogonality relation for characters (theorem \ref{TA59}) one gets
	\begin{displaymath}
		b_{\alpha}=(\chi^{(\alpha)},\chi_{D}).
	\end{displaymath}
	\qed
\end{prop}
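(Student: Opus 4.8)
The plan is to reduce everything to the orthogonality relation for characters (Theorem \ref{TA59}) by passing to traces. First I would invoke complete reducibility: by Corollary \ref{CA48}, the finite-dimensional representation $D$ decomposes as a direct sum of irreducible representations of $G$; collecting the summands into equivalence classes and using that equivalent representations have equal characters (Proposition \ref{PA58}, item 2), one may write $D(a)=\bigoplus_{\alpha}b_{\alpha}D^{(\alpha)}(a)$ with nonnegative integers $b_{\alpha}$, exactly the form assumed in the statement. Here $\alpha$ ranges over the labels of the non-equivalent irreducible representations $D^{(\alpha)}$.

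Next I would take characters on both sides. Using additivity of the character under direct sums (Proposition \ref{PA58}, item 4) and a trivial induction on the number of summands, one obtains
	\begin{displaymath}
	\chi_{D}(a)=\sum_{\beta}b_{\beta}\chi^{(\beta)}(a)\qquad\forall a\in G.
	\end{displaymath}
Now I would plug this into the Hermitian scalar product of Definition \ref{DA57}. Since that product is linear in its second argument and the multiplicities $b_{\beta}$ are real numbers, it follows that
	\begin{displaymath}
	(\chi^{(\alpha)},\chi_{D})=\sum_{\beta}b_{\beta}(\chi^{(\alpha)},\chi^{(\beta)}).
	\end{displaymath}
Finally, applying the orthogonality relation $(\chi^{(\alpha)},\chi^{(\beta)})=\delta_{\alpha\beta}$ from Theorem \ref{TA59} collapses the sum to the single term $\beta=\alpha$, giving $(\chi^{(\alpha)},\chi_{D})=b_{\alpha}$, as claimed.

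There is really no hard step here; the proposition is a direct corollary of Theorem \ref{TA59}. The only points that deserve a word of care are (i) justifying that $D$ can be put in the stated block form at all — which is precisely Corollary \ref{CA48} together with the observation that one may group equivalent irreducibles — and (ii) noting that the $b_{\beta}$ may be pulled out of the (conjugate-linear-in-the-first-slot) scalar product unchanged because they are natural numbers, so no complex conjugation intervenes. I would present the argument in the three displayed lines above, citing Corollary \ref{CA48}, Proposition \ref{PA58}, Definition \ref{DA57} and Theorem \ref{TA59} at the appropriate places.
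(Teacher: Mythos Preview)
Your proposal is correct and follows exactly the approach the paper intends: the paper itself treats this proposition as an immediate consequence of Theorem \ref{TA59} and simply places a \qed after the displayed formula, so your more detailed write-up (additivity of characters under direct sums, linearity of the scalar product, then orthogonality) just spells out what the paper leaves implicit.
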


\begin{lemma}\label{LA60}
	Let $D^{(\alpha)},\enspace \alpha\in\lbrace 1,...,k\rbrace$ be all non-equivalent irreducible representations of a finite group $G$, and let $C_{i}$ ($i\in \lbrace 1,...,c\rbrace$) be the conjugate classes of $G$. Then
	\begin{displaymath}
		k\leq c.
	\end{displaymath}
\end{lemma}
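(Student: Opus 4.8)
The plan is to realize characters as elements of the finite-dimensional space of \emph{class functions} on $G$ and then use their orthonormality to bound their number. First I would invoke Proposition \ref{PA58}(1.): for any representation $D$ the character $\chi_D$ takes the same value on all elements of a conjugate class, so $\chi_D$ is constant on each $C_i$ and is completely determined by the $c$ numbers $\chi_D(C_1),\dots,\chi_D(C_c)$. Thus every character lies in the vector space $W$ of functions $G\to\mathbb{C}$ that are constant on conjugate classes; a basis of $W$ is given by the $c$ indicator functions $\mathbbm{1}_{C_1},\dots,\mathbbm{1}_{C_c}$, so $\dim_{\mathbb{C}}W=c$.

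Next I would show that the $k$ characters $\chi^{(1)},\dots,\chi^{(k)}$ of the non-equivalent irreducible representations $D^{(1)},\dots,D^{(k)}$ are linearly independent elements of $W$. Suppose $\sum_{\alpha=1}^{k}c_\alpha\chi^{(\alpha)}=0$ as a function on $G$. Applying the pairing $(\,\cdot\,,\chi^{(\beta)})$ of Definition \ref{DA57}, which is linear in its first argument, and using the orthogonality relation for characters (Theorem \ref{TA59}) in the form $(\chi^{(\alpha)},\chi^{(\beta)})=\delta_{\alpha\beta}$, we get $0=\sum_{\alpha}c_\alpha(\chi^{(\alpha)},\chi^{(\beta)})=c_\beta$ for every $\beta$. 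Hence all $c_\beta$ vanish, and the $\chi^{(\alpha)}$ are linearly independent.

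A set of $k$ linearly independent vectors in the $c$-dimensional space $W$ must satisfy $k\le c$, which is the claim. I do not expect any serious obstacle: the only point needing a moment's care is that the pairing from Definition \ref{DA57} behaves well on class functions — here one uses $\chi_D(a^{-1})=\chi_D^{\ast}(a)$ (Proposition \ref{PA58}(3.)) to see it agrees with the usual Hermitian inner product on $\mathbb{C}^{c}$ — but the purely formal linear-independence step above does not even require positivity of the pairing, so the argument closes immediately.
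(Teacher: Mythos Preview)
Your proof is correct and follows essentially the same route as the paper: both arguments place the irreducible characters in the $c$-dimensional space of class functions (the paper does this by writing $(\chi^{(\alpha)},\chi^{(\beta)})=\frac{1}{\mathrm{ord}(G)}\sum_{i=1}^{c}c_i\chi^{(\alpha)\ast}_i\chi^{(\beta)}_i$ and viewing the characters as vectors in $\mathbb{C}^{c}$) and then use the orthogonality relation of Theorem~\ref{TA59} to conclude that the $k$ characters are orthogonal, hence independent, hence $k\le c$. The only cosmetic difference is that the paper phrases the conclusion as ``$k$ orthogonal vectors in a $c$-dimensional space'' rather than ``$k$ linearly independent vectors''.
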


\begin{proof}
	Let $c_{i}$ be the number of elements in $C_{i}$.
	\smallskip
	\\
	\begin{displaymath}
		\sum_{a\in C_{i}}\chi^{(\alpha)}(a^{-1})\chi^{(\beta)}(a)=c_{i}\chi^{(\alpha)}(a^{-1})\chi^{(\beta)}(a)=c_{i}\chi^{(\alpha)\ast}(a)\chi^{(\beta)}(a).
	\end{displaymath}
	\begin{displaymath}
		(\chi^{(\alpha)},\chi^{(\beta)})=\frac{1}{\mathrm{ord}(G)}\sum_{i=1}^{c}c_{i}\chi^{(\alpha)\ast}_{i}\chi^{(\beta)}_{i}=\delta_{\alpha\beta},
	\end{displaymath}
with $\chi^{(\alpha)}_{i}=\chi^{(\alpha)}(a_{i})$, where $a_{i}\in C_{i}$ arbitrary.
$(\sqrt{\frac{c_{i}}{\mathrm{ord}(G)}}\chi^{(\alpha)}_{i})_{i}$ are elements in a $c$-dimensional vectorspace, and vectors with different $\alpha$ are orthogonal and unequal $0$.
\smallskip
\\
$\Rightarrow$ There are $k$ orthogonal vectors in a $c$-dimensional vectorspace. 
\begin{displaymath}
	\Rightarrow k\leq c.
\end{displaymath}
\end{proof}

\section{General theorems}

\begin{define}\label{DA63}
	Let $G=\lbrace a_{1},...,a_{g}\rbrace$ be a group of order $g$. The \textit{regular representation} $R(G)$ on $V=\mathbb{R}^{g}$ is defined by
	\begin{displaymath}
		a_{i}a_{j}=R_{jk}(a_{i})a_{k}.
	\end{displaymath}
	By corollary \ref{CA12a} $a_{i}$ can be interpreted as a regular permutation, and therefore $R(G)$ is called regular representation.
\end{define}

\begin{prop}\label{PA64}
	\begin{displaymath}
		\chi_{R}(a)=\left\{
			\begin{array}{ll}
				0 & \mbox{for}\enspace a\neq e. \\ 
				g & \mbox{for}\enspace a=e.
		        \end{array}
			\right.
	\end{displaymath}
\end{prop}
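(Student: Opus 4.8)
The plan is to compute the character of the regular representation directly from its definition. Recall that $R(G)$ acts on $V=\mathbb{R}^{g}$ with basis vectors labelled by the group elements $a_1,\dots,a_g$, and the action is given by $a_i a_j = R_{jk}(a_i)a_k$. Fixing a group element $a\in G$, I want to read off the diagonal entries $R_{jj}(a)$ of the matrix $[R(a)]$ and sum them to get $\chi_R(a)=\mathrm{Tr}(R(a))$.

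First I would observe that for each fixed $j$, the product $a\,a_j$ is again some group element, say $a\,a_j = a_k$ with $k$ uniquely determined by $j$ (since left multiplication by $a$ is a bijection of $G$, as used already in the proof of Cayley's theorem, Theorem \ref{TA11}). Hence the $j$-th column of $[R(a)]$ has exactly one nonzero entry, equal to $1$, in row $k$. The diagonal entry $R_{jj}(a)$ is therefore $1$ if $a\,a_j = a_j$ and $0$ otherwise. Next I would note that $a\,a_j = a_j$ is equivalent to $a = e$ (multiply on the right by $a_j^{-1}$). So if $a\neq e$, every diagonal entry vanishes and $\chi_R(a)=0$; if $a=e$, then $R(e)$ is the identity operator on $\mathbb{R}^g$, so $\chi_R(e)=\mathrm{Tr}(\mathbbm{1}_g)=g$.

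This is essentially a one-line argument once the combinatorial structure of the regular representation is unwound, so there is no serious obstacle; the only point requiring a touch of care is confirming that left multiplication by a fixed element is a permutation of $G$ with no fixed points unless that element is the identity, which is immediate from the group axioms (Proposition \ref{PA2}). I would present the two cases $a\neq e$ and $a=e$ explicitly and conclude with the displayed piecewise formula.
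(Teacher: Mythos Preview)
Your proposal is correct and follows essentially the same approach as the paper: both compute the diagonal entries of $[R(a)]$ directly and observe that $R_{jj}(a)=1$ precisely when $a\,a_j=a_j$, i.e.\ when $a=e$. The only cosmetic difference is that the paper phrases the no-fixed-point condition by invoking the regularity of the permutation $a_j\mapsto a\,a_j$ (Corollary~\ref{CA12a}), whereas you verify it directly from the cancellation law.
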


\begin{proof}
	\begin{displaymath}
		a_{i}a_{j}=R_{jk}(a_{i})a_{k}=a_{p_{j}}\Rightarrow R_{jk}(a_{i})=\delta_{p_{j}k}.
	\end{displaymath}
We form the trace:
	\begin{displaymath}
		\mathrm{Tr}(R(a_{i}))=R_{jj}(a_{i})=\delta_{p_{j}j}.
	\end{displaymath}
Because the permutation $a_{j}\mapsto a_{i}a_{j}=a_{p_{j}}$ is regular we find
\begin{displaymath}
	\delta_{p_{j}j}=\left\{
			\begin{array}{ll}
				0 & \mbox{for}\enspace a_{i}\neq e. \\ 
				g & \mbox{for}\enspace a_{i}=e.
		        \end{array}
			\right.
\end{displaymath}
\end{proof}

\begin{theorem}\label{TA65}
	Let $D^{(\alpha)}$ be the non-equivalent irreducible representations of a group $G$ with order $g$ on vectorspaces $V_{\alpha}$  ($\mathrm{dim}V_{\alpha}=n_{\alpha}$). Then
	\begin{displaymath}
		\sum_{\alpha}n_{\alpha}^{2}=g.
	\end{displaymath}
\end{theorem}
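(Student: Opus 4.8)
The plan is to decompose the regular representation of $G$ into irreducibles and to read off the multiplicities using the character machinery already assembled. First I would take $R(G)$ to be the regular representation on $V=\mathbb{R}^{g}$ from Definition \ref{DA63}. Since $G$ is finite, Corollary \ref{CA48} tells us $R$ is completely reducible, so by Proposition \ref{PA59a} we may write $R(a)=\bigoplus_{\alpha}b_{\alpha}D^{(\alpha)}(a)$ with $b_{\alpha}=(\chi^{(\alpha)},\chi_{R})\in\mathbb{N}$, where $\alpha$ ranges over precisely the non-equivalent irreducible representations $D^{(\alpha)}$ of $G$ — the same index set occurring in the statement to be proved.

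Next I would evaluate the multiplicities $b_{\alpha}$ explicitly. Proposition \ref{PA64} gives $\chi_{R}(e)=g$ and $\chi_{R}(a)=0$ for $a\neq e$, so using the scalar product of Definition \ref{DA57},
\begin{displaymath}
b_{\alpha}=(\chi^{(\alpha)},\chi_{R})=\frac{1}{g}\sum_{a\in G}\chi^{(\alpha)}(a^{-1})\chi_{R}(a)=\frac{1}{g}\,\chi^{(\alpha)}(e^{-1})\,\chi_{R}(e)=\frac{1}{g}\,n_{\alpha}\,g=n_{\alpha},
\end{displaymath}
where I have used $\chi^{(\alpha)}(e)=\mathrm{Tr}(D^{(\alpha)}(e))=\mathrm{Tr}(id_{V_{\alpha}})=n_{\alpha}$ (see Proposition \ref{PA29}). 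Hence each irreducible representation $D^{(\alpha)}$ appears in the regular representation exactly $n_{\alpha}$ times.

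Finally I would take dimensions on both sides of $R=\bigoplus_{\alpha}b_{\alpha}D^{(\alpha)}$: the left-hand side has dimension $\dim V=g$, while the right-hand side has dimension $\sum_{\alpha}b_{\alpha}n_{\alpha}=\sum_{\alpha}n_{\alpha}^{2}$, which gives the claimed identity. Every step is a direct application of an already-established result, so I do not expect a genuine obstacle; the only points that warrant a moment of care are the evaluation $\chi^{(\alpha)}(e)=n_{\alpha}$ and the remark that the family of non-equivalent irreducibles indexed in Proposition \ref{PA59a} is exactly the family $D^{(\alpha)}$, $\alpha\in\{1,\dots,k\}$, appearing in the theorem, so that no irreducible is missed in the sum.
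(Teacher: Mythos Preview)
Your proof is correct and follows essentially the same route as the paper: decompose the regular representation, compute the multiplicities $b_{\alpha}=(\chi^{(\alpha)},\chi_{R})=n_{\alpha}$ via Proposition~\ref{PA64}, and then compare dimensions (equivalently, evaluate characters at $e$) to obtain $g=\sum_{\alpha}n_{\alpha}^{2}$. The only cosmetic difference is that the paper phrases the final step as evaluating $\chi_{R}(e)=\sum_{\alpha}b_{\alpha}\chi^{(\alpha)}(e)$ rather than ``taking dimensions,'' which amounts to the same thing.
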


\begin{proof}
	Let $C_{i}$ ($i\in \lbrace 1,...,c\rbrace$) be the conjugate classes of $G$.
	\\
	We decompose the regular representation into non-equivalent irreducible representations $\Rightarrow$
	\begin{displaymath}
		\chi_{R}(a_{i})=\sum_{\alpha}b_{\alpha}\chi^{(\alpha)}(a_{i}),
	\end{displaymath}
	where $a_{i}\in C_{i}$ arbitrary.
	\medskip
	\\
	Consider now the conjugate class containing $e$:
	\begin{displaymath}
		\chi_{R}(e)=g \quad\quad\quad \chi^{(\alpha)}(e)=\mathrm{Tr}D^{(\alpha)}(e)=\mathrm{dim}V_{\alpha}=n_{\alpha}.
	\end{displaymath}
	\begin{displaymath}
		\chi_{R}(e)=\sum_{\alpha}b_{\alpha}\chi^{(\alpha)}(e)\Rightarrow g=\sum_{\alpha}b_{\alpha}n_{\alpha}.
	\end{displaymath}
	By proposition \ref{PA59a}
	\begin{displaymath}
		b_{\alpha}=(\chi^{(\alpha)},\chi_{R})=\frac{1}{g}\sum_{a\in G}\chi^{(\alpha)}(a^{-1})\chi_{R}(a)=\frac{1}{g}\chi^{(\alpha)}(e)\chi_{R}(e)=\chi^{(\alpha)}(e)=n_{\alpha}.
	\end{displaymath}
It follows
	\begin{displaymath}
		g=\sum_{\alpha}b_{\alpha}n_{\alpha}=\sum_{\alpha}n_{\alpha}^{2}.
	\end{displaymath}
\end{proof}

\begin{cor}\label{CA66}
Let $D^{(\alpha)}$ be the non-equivalent irreducible representations of a group $G$ on vectorspaces $V_{\alpha}$  ($\mathrm{dim}V_{\alpha}=n_{\alpha}$), and let $R(G)$ be the regular representation of $G$. Then
	\begin{displaymath}
		R=\bigoplus_{\alpha}n_{\alpha}D^{(\alpha)}.
	\end{displaymath}
	\qed
\end{cor}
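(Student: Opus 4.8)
The statement to prove is Corollary \ref{CA66}: that the regular representation $R$ of a finite group $G$ decomposes as $R=\bigoplus_{\alpha}n_{\alpha}D^{(\alpha)}$, where the $D^{(\alpha)}$ run over all non-equivalent irreducible representations and $n_{\alpha}=\dim V_{\alpha}$.

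The plan is to read off the answer directly from the character computation already carried out in the proof of Theorem \ref{TA65}. First I would invoke Corollary \ref{CA48}: since $G$ is finite, $R$ is completely reducible, so $R=\bigoplus_{\alpha}b_{\alpha}D^{(\alpha)}$ for some non-negative integers $b_{\alpha}$, where $\alpha$ labels the non-equivalent irreducible representations. The only thing left is to determine the multiplicities $b_{\alpha}$. By Proposition \ref{PA59a} these are given by $b_{\alpha}=(\chi^{(\alpha)},\chi_{R})$. Using Proposition \ref{PA64}, which tells us $\chi_{R}(a)=0$ for $a\neq e$ and $\chi_{R}(e)=g=\mathrm{ord}(G)$, the Hermitian scalar product collapses to a single term:
	\begin{displaymath}
	b_{\alpha}=(\chi^{(\alpha)},\chi_{R})=\frac{1}{g}\sum_{a\in G}\chi^{(\alpha)}(a^{-1})\chi_{R}(a)=\frac{1}{g}\chi^{(\alpha)}(e^{-1})\chi_{R}(e)=\frac{1}{g}\cdot n_{\alpha}\cdot g=n_{\alpha},
	\end{displaymath}
since $\chi^{(\alpha)}(e)=\mathrm{Tr}\,D^{(\alpha)}(e)=\mathrm{Tr}\,\mathrm{id}_{V_{\alpha}}=n_{\alpha}$. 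Substituting $b_{\alpha}=n_{\alpha}$ back into the decomposition gives $R=\bigoplus_{\alpha}n_{\alpha}D^{(\alpha)}$, which is exactly the claim.

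There is essentially no obstacle here — the corollary is a one-line consequence of material already in hand. The entire computation $b_{\alpha}=n_{\alpha}$ literally appears inside the proof of Theorem \ref{TA65}; the present corollary just repackages that intermediate result as a statement about $R$ itself rather than about the sum $\sum_{\alpha}n_{\alpha}^{2}$. The only point worth being slightly careful about is making sure the list $\{D^{(\alpha)}\}$ is understood to range over \emph{all} non-equivalent irreducibles (so that complete reducibility really does express $R$ in terms of exactly these), but that is part of the hypothesis. So the write-up would be just the three or four lines above: cite Corollary \ref{CA48} for complete reducibility, cite Proposition \ref{PA59a} for the multiplicity formula, plug in Proposition \ref{PA64} and $\chi^{(\alpha)}(e)=n_{\alpha}$, and conclude.
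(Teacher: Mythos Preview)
Your proof is correct and matches the paper's approach exactly: the paper simply writes $\qed$ after the statement because, as you yourself observe, the computation $b_{\alpha}=(\chi^{(\alpha)},\chi_{R})=n_{\alpha}$ was already carried out verbatim inside the proof of Theorem~\ref{TA65}. There is nothing to add.
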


\begin{define}\label{DA67}
	Let $G$ be a finite group. The \textit{group algebra} $\mathcal{A}(G)$ is the set
	\begin{displaymath}
		\mathcal{A}(G):=\lbrace A=\sum_{b\in G}A_{b}b \vert A_{b}\in\mathbb{C}\rbrace.
	\end{displaymath}
	together with the compositions
	\begin{itemize}
		\item[] Addition: 
		\begin{displaymath}
			A+B=\sum_{b\in G}A_{b}b+\sum_{b\in G}B_{b}b:=\sum_{b\in G}(A_{b}+B_{b})b.
		\end{displaymath}
		\item[] Scalar multiplication:
		\begin{displaymath}
			\lambda A=\lambda\sum_{b\in G}A_{b}b:=\sum_{b\in G}(\lambda A_{b})b.
		\end{displaymath}
		\item[] Multiplication: 
		\begin{displaymath}
			AB=(\sum_{b\in G}A_{b}b)(\sum_{d\in G}B_{d}d):=\sum_{b,d\in G}(A_{b}B_{d})bd.
		\end{displaymath}
	\end{itemize}
With addition and scalar multiplication $\mathcal{A}(G)$ becomes a vectorspace over $\mathbb{C}$. Including the multiplication the vectorspace $\mathcal{A}(G)$ becomes an \textit{algebra} over $\mathbb{C}$.
\end{define}

\begin{define}\label{DA68}
	Let $C_{i}=\lbrace a^{i}_{1},...,a^{i}_{c_{i}}\rbrace$ ($i\in \lbrace 1,...,c\rbrace$, $\sum_{i}c_{i}=\mathrm{ord}(G)$) be the conjugate classes of a finite group $G$. For every conjugate class $C_{i}$ we define a corresponding element of the group algebra.
	\begin{displaymath}
		\mathcal{C}_{i}:=\sum_{j=1}^{c_{i}}a^{i}_{j}
	\end{displaymath}
\end{define}

\begin{defprop}\label{PA69}
	\begin{displaymath}
		\mathcal{C}_{i}\mathcal{C}_{j}=c_{ijk}\mathcal{C}_{k}
	\end{displaymath}
with $c_{ijk}\in \mathbb{N}$. $c_{ijk}$ are called \textit{structure constants} of the finite group $G$.
\end{defprop}

\begin{proof}
	The proposition follows if $\mathcal{C}_{i}\mathcal{C}_{j}$ is a sum of all elements of complete conjugate classes. This is the case if
	\begin{displaymath}
		a(\mathcal{C}_{i}\mathcal{C}_{j})a^{-1}=\mathcal{C}_{i}\mathcal{C}_{j} \quad\forall a\in G.
	\end{displaymath}
	This immediately follows from the definition of a conjugate class:
	\begin{displaymath}
		a\mathcal{C}_{i}a^{-1}=\mathcal{C}_{i}\Rightarrow a(\mathcal{C}_{i}\mathcal{C}_{j})a^{-1}=(a\mathcal{C}_{i}a^{-1})(a\mathcal{C}_{j}a^{-1})=\mathcal{C}_{i}\mathcal{C}_{j}.
	\end{displaymath}
\end{proof}

\begin{sublemma}\label{SA70}
	Let $D$ be an $n$-dimensional irreducible representation of a finite group $G$ with conjugate classes $C_{i}=\lbrace a^{i}_{1},...,a^{i}_{c_{i}}\rbrace$ ($i\in \lbrace 1,...,c\rbrace$, $\sum_{i}c_{i}=\mathrm{ord}(G)$), and let $\chi_{i}=\chi_{D}(a^{i}_{j})$. Then
	\begin{displaymath}
		c_{i}\chi_{i}c_{j}\chi_{j}=n\sum_{l}c_{ijl}c_{l}\chi_{l}.
	\end{displaymath}
In this sublemma (and the proof of the sublemma) we do \underline{not} use Einstein's summation convention!
\end{sublemma}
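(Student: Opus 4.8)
The plan is to exploit Schur's lemma (lemma \ref{LA50}) applied to the representing operators of the group-algebra elements $\mathcal{C}_i$. First I would observe that for an $n$-dimensional irreducible representation $D$ the operator
\[
D(\mathcal{C}_i):=\sum_{j=1}^{c_i}D(a^i_j)
\]
commutes with every $D(a)$, $a\in G$: indeed $D(a)D(\mathcal{C}_i)D(a)^{-1}=\sum_j D(a a^i_j a^{-1})=D(\mathcal{C}_i)$ because $a C_i a^{-1}=C_i$ by definition of a conjugate class. Hence by Schur's lemma, part 2, there is a scalar $\lambda_i\in\mathbb{C}$ with $D(\mathcal{C}_i)=\lambda_i\,\mathrm{id}$.

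Next I would pin down $\lambda_i$ by taking traces: $\mathrm{Tr}\,D(\mathcal{C}_i)=\sum_{j=1}^{c_i}\mathrm{Tr}\,D(a^i_j)=c_i\chi_i$, since all elements of $C_i$ have the same character (proposition \ref{PA58}, part 1), while $\mathrm{Tr}(\lambda_i\,\mathrm{id})=n\lambda_i$. Therefore
\[
\lambda_i=\frac{c_i\chi_i}{n}.
\]
Then I would use the structure-constant relation of definition and proposition \ref{PA69}, namely $\mathcal{C}_i\mathcal{C}_j=\sum_l c_{ijl}\mathcal{C}_l$, and apply $D$ (a homomorphism of the group algebra, so it respects this product): $D(\mathcal{C}_i)D(\mathcal{C}_j)=\sum_l c_{ijl}D(\mathcal{C}_l)$. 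Substituting the scalar forms gives $\lambda_i\lambda_j\,\mathrm{id}=\sum_l c_{ijl}\lambda_l\,\mathrm{id}$, hence $\lambda_i\lambda_j=\sum_l c_{ijl}\lambda_l$. Multiplying through by $n^2$ and inserting $\lambda_i=c_i\chi_i/n$, $\lambda_l=c_l\chi_l/n$ yields
\[
c_i\chi_i\,c_j\chi_j=n\sum_l c_{ijl}\,c_l\chi_l,
\]
which is exactly the claimed identity.

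I do not expect a serious obstacle here; the only point requiring a little care is the justification that $D$ extends to an algebra homomorphism $\mathcal{A}(G)\to\mathrm{End}(V_D)$ so that $D(\mathcal{C}_i\mathcal{C}_j)=D(\mathcal{C}_i)D(\mathcal{C}_j)$ — this is immediate from linearity together with $D(ab)=D(a)D(b)$ (proposition \ref{PA29}). A second minor bookkeeping point is that the statement is to be read without Einstein summation, so the sums over $j$, $l$ must be written out explicitly, as above.
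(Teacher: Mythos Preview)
Your proposal is correct and follows essentially the same approach as the paper: define the class sums $D_i=\sum_{a\in C_i}D(a)$, use conjugation-invariance plus Schur's lemma to get $D_i=\lambda_i\,\mathrm{id}$, compute $\lambda_i=c_i\chi_i/n$ by taking traces, and then read off $\lambda_i\lambda_j=\sum_l c_{ijl}\lambda_l$ from the structure-constant relation. The only cosmetic difference is that you phrase the last step via the algebra-homomorphism extension of $D$, whereas the paper simply writes $\mathcal{C}_i\mathcal{C}_j=\sum_l c_{ijl}\mathcal{C}_l\Rightarrow D_iD_j=\sum_l c_{ijl}D_l$ directly.
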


\begin{proof}
	We define the linear operators $D_{i}$:
	\begin{displaymath}
		D_{i}:=\sum_{a\in C_{i}}D(a).
	\end{displaymath}
	Analogous to $\mathcal{C}_{i}$ the operators $D_{i}$ are invariant under $D_{i}\mapsto D(a)D_{i}D(a)^{-1}$.
	\begin{displaymath}
		\Rightarrow [D_{i},D(a)]=0 \quad \forall a\in G.
	\end{displaymath}
	Using Schur's lemma (lemma \ref{LA50}) we get
	\begin{displaymath}
		D_{i}=\lambda_{i}id.
	\end{displaymath}
	We calculate the trace:
	\begin{displaymath}
		\mathrm{Tr}D_{i}=\sum_{a\in C_{i}}\chi_{D}(a_{i})=\lambda_{i}\mathrm{Tr}(id)=\lambda_{i}n.
	\end{displaymath}
	\begin{displaymath}
		\sum_{a\in C_{i}}\chi_{D}(a_{i})=c_{i}\chi_{i}\Rightarrow c_{i}\chi_{i}=n\lambda_{i}\Rightarrow \lambda_{i}=\frac{c_{i}\chi_{i}}{n}.
	\end{displaymath}
	\begin{displaymath}
		\mathcal{C}_{i}\mathcal{C}_{j}=\sum_{l}c_{ijl}\mathcal{C}_{l}\Rightarrow D_{i}D_{j}=\sum_{l}c_{ijl}D_{l}\Rightarrow \lambda_{i}\lambda_{j}=\sum_{l}c_{ijl}\lambda_{l}.
	\end{displaymath}
Using $\lambda_{i}=\frac{c_{i}\chi_{i}}{n}$ we get
	\begin{displaymath}
		\frac{c_{i}\chi_{i}}{n}\cdot \frac{c_{j}\chi_{j}}{n}=\sum_{l}c_{ijl}\frac{c_{l}\chi_{l}}{n}\Rightarrow c_{i}\chi_{i}c_{j}\chi_{j}=n\sum_{l}c_{ijl}c_{l}\chi_{l}.
	\end{displaymath}
\end{proof}

\begin{sublemma}\label{SA71}
	Let $C_{i}$ be a conjugate class, then
	\begin{displaymath}
		C_{i}^{-1}:=\lbrace a^{-1}\vert a\in C_{i}\rbrace
	\end{displaymath}
	is also a conjugate class.
\end{sublemma}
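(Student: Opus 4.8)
The plan is to show directly that $C_i^{-1}$ is closed under conjugation, which by Criterion~\ref{CA20} (applied conceptually to a single class) — or rather, simply by the definition of conjugate classes (Definition~\ref{DA18}) — suffices to identify it as a union of conjugate classes; then one observes it is a single class because conjugation is a bijection that respects the inverse map.

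First I would take an arbitrary element of $C_i^{-1}$, which has the form $a^{-1}$ with $a \in C_i$, and an arbitrary $u \in G$, and compute $u a^{-1} u^{-1} = (u a u^{-1})^{-1}$. Since $C_i$ is a conjugate class, $u a u^{-1} \in C_i$, hence $(u a u^{-1})^{-1} \in C_i^{-1}$. This shows $C_i^{-1}$ is a union of complete conjugate classes (it is closed under conjugation and nonempty). Next, to see that it is exactly one class, I would note that for any two elements $a, b \in C_i$ there is a $u$ with $b = u a u^{-1}$, whence $b^{-1} = u a^{-1} u^{-1}$, so $a^{-1} \sim b^{-1}$; thus all elements of $C_i^{-1}$ are mutually conjugate, i.e. $C_i^{-1}$ lies inside a single conjugate class, and combined with the previous observation it equals that class.

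There is essentially no obstacle here — the only thing to be slightly careful about is the elementary identity $(uau^{-1})^{-1} = u a^{-1} u^{-1}$, which follows from $u^{-1-1} = u$ and the rule for the inverse of a product (Proposition~\ref{PA2}). I would present the argument in two short displayed lines, one for closure under conjugation and one for the mutual-conjugacy of inverses, and then conclude. The statement is used later (in Sublemma~\ref{SA70}'s surrounding development and in the character orthogonality relations via $\chi(a^{-1}) = \chi^\ast(a)$), so keeping the proof crisp is appropriate.

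\begin{proof}
Let $a \in C_i$ and $u \in G$ be arbitrary. Then
	\begin{displaymath}
		u a^{-1} u^{-1} = (u a u^{-1})^{-1}.
	\end{displaymath}
Since $C_i$ is a conjugate class, $u a u^{-1} \in C_i$, and therefore $(u a u^{-1})^{-1} \in C_i^{-1}$. Hence $C_i^{-1}$ is closed under conjugation by arbitrary elements of $G$, so it is a union of complete conjugate classes of $G$.
\medskip
\\
It remains to show that $C_i^{-1}$ consists of a single conjugate class. Let $a, b \in C_i$. Since $a \sim b$, there exists $u \in G$ with $b = u a u^{-1}$, and thus $b^{-1} = u a^{-1} u^{-1}$, i.e. $a^{-1} \sim b^{-1}$. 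Therefore all elements of $C_i^{-1}$ are conjugate to one another, so $C_i^{-1}$ is contained in a single conjugate class. Together with the previous paragraph, $C_i^{-1}$ equals this conjugate class.
\end{proof}
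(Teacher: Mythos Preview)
Your proof is correct and follows essentially the same idea as the paper's: both hinge on the identity $(uau^{-1})^{-1} = ua^{-1}u^{-1}$ to show that $a \sim b$ implies $a^{-1} \sim b^{-1}$. The paper presents only the mutual-conjugacy step and concludes directly, whereas you additionally spell out closure of $C_i^{-1}$ under conjugation; this extra paragraph makes the argument slightly more complete but is not a genuinely different route.
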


\begin{proof}
	Let $a,b\in C_{i}$, $a\sim b\Rightarrow$ $\exists u\in G: a=ubu^{-1}.$
	\smallskip
	\\
	$\Rightarrow a^{-1}=u^{-1}b^{-1}u=:vb^{-1}v^{-1}.$
	\smallskip
	\\
	$\Rightarrow \exists v\in G: a^{-1}=vb^{-1}v^{-1}\Rightarrow a^{-1}\sim b^{-1}\quad \forall a,b\in C_{i}$.
	\medskip
	\\
	$\Rightarrow C_{i}^{-1}$ is a conjugate class.
\end{proof}

\begin{sublemma}\label{SA72}
	Let $C_{i}=\lbrace a^{i}_{1},...,a^{i}_{c_{i}}\rbrace$ be the conjugate classes of a finite group $G$, and let $C_{j}^{-1}=:C_{j'}$. Furthermore let $C_{1}:=\lbrace e\rbrace$. Then the following equality holds for the structure constants:
	\begin{displaymath}
		c_{ij1}=c_{i}\delta_{ij'} \quad(\Leftrightarrow c_{ij'1}=c_{i}\delta_{ij}).
	\end{displaymath}
In this sublemma (and its proof) we do \underline{not} use Einstein's summation convention!
\end{sublemma}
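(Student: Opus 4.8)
\textbf{Proof proposal for Sublemma \ref{SA72}.} The plan is to compute the structure constant $c_{ij1}$ directly from its definition. Recall that $c_{ij1}$ is the number of ways to write the identity element $e$ (the single element of $\mathcal{C}_1$) as a product $xy$ with $x\in C_i$ and $y\in C_j$; more precisely, $\mathcal{C}_i\mathcal{C}_j=\sum_{x\in C_i,\,y\in C_j}xy$, and $c_{ij1}$ is the coefficient of $e$ in this sum, i.e. the cardinality of $\{(x,y)\in C_i\times C_j : xy=e\}$.

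First I would observe that $xy=e$ is equivalent to $y=x^{-1}$. So a pair $(x,y)\in C_i\times C_j$ with $xy=e$ exists only if $x^{-1}\in C_j$ for some (hence, by Sublemma \ref{SA71}, every) $x\in C_i$; that is, only if $C_i^{-1}=C_j$, i.e. $j=i'$, equivalently $i=j'$. In that case the pairs contributing to $e$ are exactly $\{(x,x^{-1}):x\in C_i\}$, and since $x\mapsto x^{-1}$ is a bijection from $C_i$ onto $C_{i'}=C_j$, there are precisely $c_i=|C_i|$ such pairs. If $C_i^{-1}\neq C_j$ there are no such pairs, so $c_{ij1}=0$. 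Combining the two cases gives $c_{ij1}=c_i\,\delta_{i j'}$, which is the asserted formula (and the parenthetical version follows by replacing $j$ with $j'$, using $(j')'=j$ and $c_{j'}=c_j$, though only the displayed identity is needed).

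I do not anticipate a genuine obstacle here; the only point requiring mild care is bookkeeping around the involution $x\mapsto x^{-1}$ — checking that it genuinely restricts to a bijection $C_i\to C_{i'}$ (which is exactly the content of Sublemma \ref{SA71}) and that distinct $x$ give distinct contributing pairs $(x,x^{-1})$, so that no multiplicity is lost or double-counted. One should also note explicitly that $\mathcal{C}_1=e$ is a single group element, so that the coefficient of $e$ in $\mathcal{C}_i\mathcal{C}_j$ is well-defined and equals $c_{ij1}$ by Definition and Proposition \ref{PA69}. With those remarks in place the proof is a one-line cardinality count.
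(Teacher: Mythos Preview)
Your proposal is correct and follows essentially the same approach as the paper: both arguments count the occurrences of $e$ in the product $\mathcal{C}_i\mathcal{C}_j$ by observing that $xy=e$ forces $y=x^{-1}$, so contributions arise only when $C_j=C_i^{-1}=C_{i'}$, in which case there are exactly $c_i$ of them. The paper phrases this via the set product $C_iC_i^{-1}=\{e,e,\dots,e,\dots\}$ and then substitutes $j\mapsto j'$ at the end, while you work directly with pairs $(x,y)$, but the content is identical.
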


\begin{proof}
	\begin{displaymath}
		C_{i}C_{i}^{-1}=\lbrace \underbrace{e,e,...,e}_{c_{i}},... \rbrace\Rightarrow \mathcal{C}_{i}\mathcal{C}_{i}^{-1}=c_{i}e+...=\mathcal{C}_{i}\mathcal{C}_{i'}=c_{ii'1}\mathcal{C}_{1}+...
	\end{displaymath}
	\begin{displaymath}
		\Rightarrow c_{ii'1}=c_{i}.
	\end{displaymath}
$C_{i}C_{j}^{-1}=C_{i}C_{j'}$  does not contain $e$ if $i\neq j$.
	\begin{displaymath}
		\Rightarrow c_{ij'1}=0 \enspace\mbox{if}\enspace i\neq j.
	\end{displaymath}
	\begin{displaymath}
		\Rightarrow c_{ij'1}=c_{i}\delta_{ij}.
	\end{displaymath}
Clearly $j''=j\Rightarrow$ $c_{ij''1}=c_{i}\delta_{ij'}$.
	\begin{displaymath}
		\Rightarrow c_{ij1}=c_{i}\delta_{ij'}.
	\end{displaymath}
\end{proof}

\begin{lemma}\label{LA73}
	Let $D^{(\alpha)},\enspace \alpha\in\lbrace 1,...,k\rbrace$ be all non-equivalent irreducible representations of a group $G$ with order $g$, and let $C_{i}$ ($i\in \lbrace 1,...,c\rbrace$) be the conjugate classes of $G$. Then
	\begin{displaymath}
		k\geq c.
	\end{displaymath}
\end{lemma}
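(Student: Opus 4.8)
\textbf{Plan of proof for Lemma \ref{LA73} ($k \geq c$).}

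The plan is to show that the class sums $\mathcal{C}_1, \ldots, \mathcal{C}_c$ span a vector space of dimension at most $k$, which forces $c \leq k$. The key structural fact I would exploit is that the characters of the $k$ non-equivalent irreducible representations give a complete set of invariants distinguishing the class sums, and conversely the $c$ numbers $\chi^{(\alpha)}_i := \chi^{(\alpha)}(a_i)$ (for $a_i \in C_i$) determine the irreducible representation $D^{(\alpha)}$ up to equivalence. So the $c$-tuples $(\chi^{(\alpha)}_1, \ldots, \chi^{(\alpha)}_c)$, $\alpha = 1, \ldots, k$, must be linearly independent in $\mathbb{C}^c$ if we want $k$ of them, and the opposite count (expressing class functions in terms of irreducible characters) gives the bound in the other direction. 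The cleanest route is to prove that the irreducible characters, viewed as functions on the $c$ conjugate classes, span the whole $c$-dimensional space of class functions; combined with Theorem \ref{TA59} (orthogonality, hence linear independence of the $k$ characters), this yields $k = c$, in particular $k \geq c$.

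Concretely, first I would introduce the space $\mathcal{F}$ of class functions on $G$, i.e.\ functions $f: G \to \mathbb{C}$ constant on conjugate classes; clearly $\dim \mathcal{F} = c$, with the indicator functions $\mathbbm{1}_{C_i}$ as a basis. By Theorem \ref{TA59} the $k$ irreducible characters $\chi^{(\alpha)}$ are orthonormal with respect to the Hermitian product of Definition \ref{DA57}, hence linearly independent, so $k \leq c$ (this is already Lemma \ref{LA60}). For the reverse inequality I would argue that every class function orthogonal to all $\chi^{(\alpha)}$ is zero. Given such an $f$, form the operator $T^{(\alpha)} := \sum_{a \in G} f(a)^\ast D^{(\alpha)}(a)$ for each irreducible representation $D^{(\alpha)}$; using that $f$ is a class function one checks $T^{(\alpha)} D^{(\alpha)}(b) = D^{(\alpha)}(b) T^{(\alpha)}$ for all $b \in G$ (the same averaging trick as in the proof of Lemma \ref{TA51}), so by Schur's lemma (Lemma \ref{LA50}) $T^{(\alpha)} = \lambda_\alpha \, \mathrm{id}$. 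Taking the trace gives $n_\alpha \lambda_\alpha = \sum_a f(a)^\ast \chi^{(\alpha)}(a) = \mathrm{ord}(G)\,(\chi^{(\alpha)}, f') $ for a suitable rewriting, which vanishes by hypothesis; hence $T^{(\alpha)} = 0$ for every $\alpha$. Finally, since the regular representation $R$ decomposes as $\bigoplus_\alpha n_\alpha D^{(\alpha)}$ (Corollary \ref{CA66}), the operator $\sum_a f(a)^\ast R(a)$ also vanishes; applying it to the basis vector corresponding to $e \in G$ and reading off components shows $f(a) = 0$ for all $a$. Therefore the $\chi^{(\alpha)}$ span $\mathcal{F}$, so $k \geq \dim \mathcal{F} = c$.

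The main obstacle I anticipate is the bookkeeping in the last step: correctly translating "$\sum_a f(a)^\ast R(a) = 0$" into "$f \equiv 0$" via the explicit form of the regular representation from Definition \ref{DA63}, where $a_i a_j = R_{jk}(a_i) a_k$. One must pick out the right matrix entry — acting on the coordinate vector of $e = a_1$ and looking at the $a_j$-component yields $\sum_i f(a_i)^\ast [\text{entry}] = f(a_j^{-1})^\ast$ or similar, which gives all values of $f$. An alternative that sidesteps the regular representation entirely is to invoke the structure constants: Sublemma \ref{SA70} relates $c_i \chi_i c_j \chi_j$ to $\sum_l c_{ijl} c_l \chi_l$, and Sublemma \ref{SA72} pins down $c_{ij1}$, so one can show the class sums $\mathcal{C}_i$ act on each irreducible representation as scalars $\lambda^{(\alpha)}_i = c_i \chi^{(\alpha)}_i / n_\alpha$, and the map $\mathcal{C}_i \mapsto (\lambda^{(1)}_i, \ldots, \lambda^{(k)}_i)$ is injective on the center of the group algebra (which is $c$-dimensional, spanned by the $\mathcal{C}_i$); injectivity of a linear map from a $c$-dimensional space into a $k$-dimensional space forces $c \leq k$. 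I would likely present this algebra-based argument as the primary one since all the needed sublemmas are already in the excerpt, and the injectivity claim follows because the $\mathcal{C}_i$ have distinct "eigenvalue patterns" — distinct because the characters $\chi^{(\alpha)}$ already separate the classes by the orthogonality relations.
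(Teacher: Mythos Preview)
Your first approach (showing that any class function orthogonal to all irreducible characters must vanish, via Schur's lemma and the regular representation) is correct and complete; the bookkeeping you flag in the last step works exactly as you suspect, since $\sum_a f(a)^\ast R(a)$ acting on the basis vector $e$ gives $\sum_a f(a)^\ast a$ in the group algebra, and linear independence of the $a$'s forces $f\equiv 0$.

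This is, however, a genuinely different route from the paper's. The paper does not argue abstractly that characters span class functions; instead it derives the \emph{column} orthogonality relation
\[
\sum_{\alpha=1}^{k}\chi^{(\alpha)}_i\,\chi^{(\alpha)\ast}_j=\frac{g}{c_j}\,\delta_{ij}
\]
directly. It takes Sublemma~\ref{SA70}, sums over $\alpha$, and uses Corollary~\ref{CA66} to recognise $\sum_\alpha n_\alpha\chi^{(\alpha)}_l$ as the regular character $g\,\delta_{1l}$; then Sublemma~\ref{SA72} identifies the surviving structure constant $c_{ij1}=c_i\delta_{ij'}$, and the conjugation $\chi^{(\alpha)}_j=\chi^{(\alpha)\ast}_{j'}$ finishes the job. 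This exhibits $c$ orthogonal vectors in $\mathbb{C}^k$, whence $c\leq k$. Your argument is more conceptual and reusable (it is the standard ``characters form a basis of class functions'' proof); the paper's is more hands-on and yields the second orthogonality relation as an explicit by-product.

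One caution on your alternative sketch: the claimed injectivity of $\mathcal{C}_i\mapsto(\lambda^{(1)}_i,\ldots,\lambda^{(k)}_i)$ cannot be justified by saying ``characters separate classes by the orthogonality relations,'' since only the \emph{row} orthogonality (Theorem~\ref{TA59}) is available at this point, and that does not by itself show distinct classes have distinct character columns. The injectivity is genuinely equivalent to what you are trying to prove; it follows from faithfulness of the regular representation, which is exactly your first argument again.
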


\begin{proof}
	In this proof we do \underline{not} use Einstein's summation convention!
	\\
	Consider $D^{(\alpha)}$ with characters $\chi^{(\alpha)}_{i}$, where $i$ denotes the conjugate class. From sublemma \ref{SA70} we know
	\begin{displaymath}
		c_{i}\chi^{(\alpha)}_{i}c_{j}\chi^{(\alpha)}_{j}=n_{\alpha}\sum_{l}c_{ijl}c_{l}\chi^{(\alpha)}_{l}.
	\end{displaymath}
	We sum over $\alpha$ and get
	\begin{displaymath}
		c_{i}c_{j}\sum_{\alpha=1}^{k}\chi^{(\alpha)}_{i}\chi^{(\alpha)}_{j}=\sum_{l}c_{ijl}c_{l}\sum_{\alpha=1}^{k}n_{\alpha}\chi^{(\alpha)}_{l}.	
	\end{displaymath}
	From corollary \ref{CA66} $\sum_{\alpha=1}^{k}n_{\alpha}\chi^{(\alpha)}_{l}=\chi^{R}_{l}$, where $\chi^{R}$ is the character of the regular representation. If we denote the conjugate class $\lbrace e\rbrace$ with $C_{1}$, using proposition \ref{PA64} we see $\chi^{R}_{l}=g\delta_{1l}$.
	\\
	It follows
	\begin{displaymath}
		c_{i}c_{j}\sum_{\alpha=1}^{k}\chi^{(\alpha)}_{i}\chi^{(\alpha)}_{j}=gc_{ij1}c_{1}=gc_{ij1}.
	\end{displaymath}
	From sublemma \ref{SA72} we know that $c_{ij1}=c_{i}\delta_{ij'}$, so
	\begin{displaymath}
		c_{i}c_{j}\sum_{\alpha=1}^{k}\chi^{(\alpha)}_{i}\chi^{(\alpha)}_{j}=gc_{i}\delta_{ij'}\Rightarrow \sum_{\alpha=1}^{k}\chi^{(\alpha)}_{i}\chi^{(\alpha)}_{j}=\frac{g}{c_{j}}\delta_{ij'}.
	\end{displaymath}
	We now use proposition \ref{PA58} 3.: $\chi(a^{-1})=\chi^{\ast}(a)$. Therefore $\chi^{(\alpha)}_{j}=\chi^{(\alpha)\ast}_{j'}$.
	\begin{displaymath}
		\Rightarrow \sum_{\alpha=1}^{k}\chi^{(\alpha)}_{i}\chi^{(\alpha)\ast}_{j'}=\frac{g}{c_{j}}\delta_{ij'}\Rightarrow \sum_{\alpha=1}^{k}\chi^{(\alpha)}_{i}\chi^{(\alpha)\ast}_{j}=\frac{g}{c_{j}}\delta_{ij},
	\end{displaymath}
where we have also used $c_{j'}=c_{j}$.
\medskip
\\
This has the form of a Hermitian scalar product, thus we can interpret $\chi_{i}$ as orthogonal vectors in a $k$-dimensional unitary vectorspace.
\medskip
\\
$\Rightarrow$ $c$ orthogonal vectors in a $k$-dimensional vectorspace $\Rightarrow k\geq c$. 
\end{proof}

\begin{theorem}\label{TA74}
	The number of non-equivalent irreducible representations of a finite group is equal to the number of conjugate classes of the group, i.s.
	\begin{displaymath}
		k=c.
	\end{displaymath}
\end{theorem}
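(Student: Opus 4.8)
The statement to prove is Theorem~\ref{TA74}: the number $k$ of non-equivalent irreducible representations of a finite group $G$ equals the number $c$ of conjugate classes of $G$. The plan is to combine the two inequalities that have already been established in the excerpt. Lemma~\ref{LA60} gives $k \leq c$, and Lemma~\ref{LA73} gives $k \geq c$. Putting these together immediately yields $k = c$, which is exactly the assertion. So the proof is essentially a two-line synthesis of results proved earlier.

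First I would recall Lemma~\ref{LA60}: writing $\chi^{(\alpha)}_i = \chi^{(\alpha)}(a_i)$ for $a_i \in C_i$ and using the orthogonality relation $(\chi^{(\alpha)},\chi^{(\beta)}) = \delta_{\alpha\beta}$ (Theorem~\ref{TA59}) rewritten as a sum over conjugate classes, one sees that the $c$-dimensional vectors $\big(\sqrt{c_i/\mathrm{ord}(G)}\,\chi^{(\alpha)}_i\big)_{i=1,\dots,c}$, indexed by $\alpha \in \{1,\dots,k\}$, are pairwise orthogonal and nonzero; hence $k$ such vectors live in a $c$-dimensional space, forcing $k \leq c$. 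Second I would recall Lemma~\ref{LA73}, whose proof uses the group algebra, the class elements $\mathcal{C}_i$ and their structure constants $c_{ijl}$ (Definition and Proposition~\ref{PA69}), Sublemma~\ref{SA70}, Sublemma~\ref{SA72}, and Corollary~\ref{CA66} (the regular representation decomposes as $\bigoplus_\alpha n_\alpha D^{(\alpha)}$, with character $g\delta_{1l}$ on class $C_l$). That argument produces the dual orthogonality relation $\sum_{\alpha=1}^{k} \chi^{(\alpha)}_i \chi^{(\alpha)\ast}_j = \frac{g}{c_j}\delta_{ij}$, which exhibits the $c$ vectors $(\chi^{(\alpha)}_i)_{\alpha=1,\dots,k}$, indexed by $i$, as pairwise orthogonal nonzero vectors in a $k$-dimensional space, forcing $k \geq c$.

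There is no real obstacle here, since both inequalities are already in hand; the only thing to be careful about is simply citing the two lemmas in the correct direction and noting that together they pin down $k = c$. I would write:

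\begin{proof}
From Lemma~\ref{LA60} we have $k \leq c$, and from Lemma~\ref{LA73} we have $k \geq c$. Therefore $k = c$.
\end{proof}

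If one wished to make the exposition more self-contained, one could instead inline the two orthogonality arguments (character orthogonality of Theorem~\ref{TA59} for $k \leq c$, and the structure-constant computation of Sublemma~\ref{SA70} together with Corollary~\ref{CA66} and Sublemma~\ref{SA72} for $k \geq c$), but since both are proved in full above, the clean route is simply to invoke them.
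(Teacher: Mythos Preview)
Your proposal is correct and matches the paper's proof exactly: the paper likewise simply cites Lemma~\ref{LA60} for $k\leq c$ and Lemma~\ref{LA73} for $k\geq c$, then concludes $k=c$.
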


\begin{proof}
	From lemma \ref{LA60} we know $k\leq c$, from lemma \ref{LA73} we know $k\geq c$.
	\begin{displaymath}
		\Rightarrow k=c.
	\end{displaymath}
\end{proof}

\section{Tensor products of representations}

\begin{define}\label{DA81}
	Let $V_{1}$ and $V_{2}$ be vectorspaces over $\mathbb{R}(\mathbb{C})$. The \textit{tensor product} of $V_{1}$ and $V_{2}$, i.s. $V_{1}\otimes V_{2}$, is defined by
	\begin{displaymath}
		V_{1}\otimes V_{2}:= \mathrm{Span}(\lbrace v_{1}\otimes v_{2}\vert v_{1}\in V_{1}, v_{2}\in V_{2}\rbrace)
	\end{displaymath}
	with
	\begin{displaymath}
		\begin{split}
			& (\alpha v_{1}+\beta w_{1})\otimes v_{2}=\alpha v_{1}\otimes v_{2}+\beta w_{1}\otimes v_{2},\\
			& v_{1}\otimes(\alpha v_{2}+\beta w_{2})=\alpha v_{1}\otimes v_{2}+\beta v_{1}\otimes w_{2}.
		\end{split}
	\end{displaymath}
If $V_{1}$ and $V_{2}$ are Euclidian (unitary) we furthermore define
	\begin{displaymath}
		\langle v_{1}\otimes v_{2},w_{1}\otimes w_{2}\rangle=\langle v_{1},w_{1}\rangle\cdot \langle v_{2},w_{2}\rangle.
	\end{displaymath}
\end{define}

\begin{define}\label{DA82}
	Let $A_{1}$ and $A_{2}$ be linear operators on $V_{1}$ and $V_{2}$. We define the tensor product of the operators $A_{1}$ and $A_{2}$, i.s. $A_{1}\otimes A_{2}$, by
	\begin{displaymath}
		(A_{1}\otimes A_{2})(a_{ij}v_{i}\otimes v_{j})=a_{ij}(A_{1}v_{i})\otimes(A_{2}v_{j}) \quad\forall v=a_{ij}v_{i}\otimes v_{j}\in V_{1}\otimes V_{2}.
	\end{displaymath}
$\Rightarrow A_{1}\otimes A_{2}$ is a linear operator on $V_{1}\otimes V_{2}$.
\end{define}
\bigskip
\bigskip
\hspace{0mm}\\
Let $A_{1}$ and $A_{2}$ be linear operators on $V_{1}$ and $V_{2}$ ($\mathrm{dim}V_{1}=n_{1}, \mathrm{dim} V_{2}=n_{2}$), and let $\lbrace v_{i}\rbrace_{i}$ be a basis of $V_{1}$ and $\lbrace w_{j}\rbrace_{j}$ be a basis of $V_{2}$.
	\begin{displaymath}
		\begin{split}
			\Rightarrow & A_{1}v_{i}=A_{1ji}v_{j},\\
			& A_{2}w_{k}=A_{2lk}w_{l},
		\end{split}
	\end{displaymath}
where $[A_{1ij}]$ and $[A_{2kl}]$ are the matrix representations of $A_{1}$ and $A_{2}$ w.r.t. the bases $\lbrace v_{i}\rbrace_{i}$ and $\lbrace w_{j}\rbrace_{j}$.
\	\begin{displaymath}
 		(A_{1}\otimes A_{2})(v_{i}\otimes w_{k})=(A_{1}v_{i})\otimes(A_{2}w_{k})=A_{1ji}A_{2lk}(v_{j}\otimes w_{l}).
 	\end{displaymath}
$\Rightarrow [A_{1ij}A_{2kl}]$ is the matrix representation of $A_{1}\otimes A_{2}$ in the basis $\lbrace v_{j}\otimes w_{l}\rbrace_{jl}$.
\\
If we interpret the components $a_{ij}$ of vectors in $V_{1}\otimes V_{2}$ as $n_{1}n_{2}$-column-vectors
	\begin{displaymath}
		v=a_{ij}v_{i}\otimes v_{j}\rightarrow [v]=
			\left(
			\begin{array}{c}
				a_{11} \\ 
				a_{12} \\ 
				\vdots \\ 
				a_{1n_{2}} \\ 
				a_{21} \\ 
				\vdots \\ 
				a_{n_{1}n_{2}}            \end{array}
				\right),
	\end{displaymath}
we can interpret the matrix representation of $A_{1}\otimes A_{2}$ as the $(n_{1}n_{2})\cdot(n_{1}n_{2})$-matrix
	\begin{displaymath}
	[A_{1}\otimes A_{2}]=
	\left(
	\begin{array}{ccc}
	A_{111}\cdot [A_{2ij}] & ... & A_{11n_{1}}\cdot [A_{2ij}] \\ 
	\vdots & × & \vdots \\ 
	A_{1n_{1}1}\cdot [A_{2ij}] & ... & A_{1n_{1}n_{1}}\cdot [A_{2ij}]
	\end{array}
	\right).
	\end{displaymath}

\begin{define}\label{DA83}
	The matrix $[A_{1}\otimes A_{2}]$ is called \textit{Kronecker product} of the matrices $[A_{1}]$ and $[A_{2}]$.
	\medskip
	\\
	$\Rightarrow$ The Kronecker product is the matrix representation of the tensor product.
\end{define}

\begin{define}\label{DA84}
	Let $D_{1}$ and $D_{2}$ be representations of a group $G$ on vectorspaces $V_{1}$ and $V_{2}$ ($\mathrm{dim}V_{1}=n_{1}$, $\mathrm{dim}V_{2}=n_{2}$). The \textit{tensor product} (Kronecker product) \textit{of the representations} $D_{1}$ and $D_{2}$, i.s. $D_{1}\otimes D_{2}$, is defined as the tensor product of the linear operators $D_{1}$ and $D_{2}$, therefore $D_{1}\otimes D_{2}$ is an $n_{1}n_{2}$-dimensional representation of $G$ on $V_{1}\otimes V_{2}$.
\end{define}

\begin{prop}\label{PA85}
	\begin{displaymath}
		\chi_{D_{1}\otimes D_{2}}=\chi_{D_{1}}\cdot\chi_{D_{2}}.
	\end{displaymath}
\end{prop}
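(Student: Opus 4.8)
The plan is to prove the character formula $\chi_{D_1\otimes D_2}(a)=\chi_{D_1}(a)\cdot\chi_{D_2}(a)$ for all $a\in G$ by working directly with matrix representations, which is legitimate since the character is independent of the choice of matrix representation (definition and proposition \ref{DA55}). Fix $a\in G$ and pick bases of $V_1$ and $V_2$; let $[D_1(a)]$ and $[D_2(a)]$ be the corresponding matrices, of sizes $n_1\times n_1$ and $n_2\times n_2$. By definition \ref{DA84} the operator $D_1\otimes D_2$ acting on $V_1\otimes V_2$ is exactly the tensor product of the linear operators $D_1(a)$ and $D_2(a)$, so by definition \ref{DA83} its matrix representation in the basis $\{v_j\otimes w_l\}_{jl}$ is the Kronecker product $[D_1(a)\otimes D_2(a)]$.

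The key step is then to compute the trace of a Kronecker product. From the explicit block form of the Kronecker product displayed just before definition \ref{DA83},
\begin{displaymath}
[D_1(a)\otimes D_2(a)]=
\left(
\begin{matrix}
[D_1(a)]_{11}\,[D_2(a)] & \cdots & [D_1(a)]_{1n_1}\,[D_2(a)]\\
\vdots & & \vdots\\
[D_1(a)]_{n_11}\,[D_2(a)] & \cdots & [D_1(a)]_{n_1n_1}\,[D_2(a)]
\end{matrix}
\right),
\end{displaymath}
the diagonal blocks are precisely $[D_1(a)]_{ii}\,[D_2(a)]$ for $i=1,\dots,n_1$. The trace of the whole matrix is the sum of the traces of the diagonal blocks, and $\mathrm{Tr}\big([D_1(a)]_{ii}\,[D_2(a)]\big)=[D_1(a)]_{ii}\,\mathrm{Tr}[D_2(a)]$. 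Summing over $i$ gives
\begin{displaymath}
\mathrm{Tr}[D_1(a)\otimes D_2(a)]=\Big(\sum_{i=1}^{n_1}[D_1(a)]_{ii}\Big)\,\mathrm{Tr}[D_2(a)]=\mathrm{Tr}[D_1(a)]\cdot\mathrm{Tr}[D_2(a)].
\end{displaymath}
Recalling $\chi_{D}(a)=\mathrm{Tr}(D(a))$ from definition \ref{DA56}, the left-hand side is $\chi_{D_1\otimes D_2}(a)$ and the right-hand side is $\chi_{D_1}(a)\cdot\chi_{D_2}(a)$. Since $a$ was arbitrary, this is the claimed identity.

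There is essentially no obstacle here: the only thing to be careful about is bookkeeping with the index conventions (which of the two factors becomes the outer block structure), but since trace is basis-independent and symmetric in the two factors, the result is the same either way. I would present the proof as a two-line computation: identify the matrix of $D_1\otimes D_2$ as the Kronecker product via definitions \ref{DA83} and \ref{DA84}, then read off the trace from the block decomposition. If desired one can instead write the fully index-based version, $\mathrm{Tr}[D_1(a)\otimes D_2(a)]=[D_1(a)]_{ii}[D_2(a)]_{jj}=\chi_{D_1}(a)\chi_{D_2}(a)$ summing over repeated indices, which is equivalent and perhaps even shorter.
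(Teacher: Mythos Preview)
Your proof is correct and follows essentially the same approach as the paper: both compute the trace of the Kronecker product by summing the traces of the diagonal blocks $[D_1(a)]_{ii}\,[D_2(a)]$ and factoring out $\mathrm{Tr}[D_2(a)]$.
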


\begin{proof}
	We calculate the character of $D_{1}\otimes D_{2}$ using the matrix representation:
	\begin{displaymath}
		\begin{split}
			\mathrm{Tr}[D_{1}\otimes D_{2}] & = D_{111}\cdot \mathrm{Tr}[D_{2ij}]+...+D_{1n_{1}n_{1}}\cdot \mathrm{Tr}[D_{2ij}] =\\
			& = \mathrm{Tr}[D_{1kl}]\mathrm{Tr}[D_{2ij}]
		\end{split}
	\end{displaymath}
	\begin{displaymath}
		\Rightarrow \mathrm{Tr}(D_{1}\otimes D_{2})=\mathrm{Tr}D_{1}\cdot \mathrm{Tr}D_{2}
		\Rightarrow \chi_{D_{1}\otimes D_{2}}=\chi_{D_{1}}\cdot\chi_{D_{2}}.
	\end{displaymath}
\end{proof}

\begin{prop}\label{PA86}
	Let $D$ be a representation of a group $G$ on a vectorspace $V$ ($\mathrm{dim}V=n>1$), then the product $D\otimes D$ is reducible (even if $D$ is irreducible).
\end{prop}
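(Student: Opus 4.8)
The plan is to exhibit an explicit nontrivial invariant subspace of $V\otimes V$ under $D\otimes D$. The most natural candidates are the \emph{symmetric} and \emph{antisymmetric} subspaces of $V\otimes V$. Concretely, fixing a basis $\{e_i\}_{i=1}^n$ of $V$, I would set
	\begin{displaymath}
	V_s:=\mathrm{Span}\left(\frac{1}{2}(e_i\otimes e_j+e_j\otimes e_i)\right),\qquad
	V_a:=\mathrm{Span}\left(\frac{1}{2}(e_i\otimes e_j-e_j\otimes e_i)\right),
	\end{displaymath}
so that $\dim V_s=\frac{n(n+1)}{2}$ and $\dim V_a=\frac{n(n-1)}{2}$, and $V\otimes V=V_s\oplus V_a$.

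First I would check that $V_s$ (and likewise $V_a$) is invariant under $D\otimes D$. For this, take a symmetric tensor $t=\sum_{ij}t_{ij}\,e_i\otimes e_j$ with $t_{ij}=t_{ji}$, and apply $(D(g)\otimes D(g))t=\sum_{ij}t_{ij}\,(D(g)e_i)\otimes(D(g)e_j)=\sum_{ij,kl}t_{ij}[D(g)]_{ki}[D(g)]_{lj}\,e_k\otimes e_l$. The new coefficient matrix is $[D(g)]\,t\,[D(g)]^T$, which is again symmetric because $t$ is; hence $(D(g)\otimes D(g))t\in V_s$. The identical computation with an antisymmetric $t$ shows $(D(g)\otimes D(g))V_a\subset V_a$. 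So both $V_s$ and $V_a$ are invariant subspaces.

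Next I would argue these subspaces are nontrivial whenever $n=\dim V>1$: indeed $\dim V_a=\frac{n(n-1)}{2}\ge 1$ and $\dim V_s=\frac{n(n+1)}{2}<n^2=\dim(V\otimes V)$ for $n>1$, so neither $V_a$ nor $V_s$ equals $\{0\}$ or all of $V\otimes V$. By Definition~\ref{DA44}, the existence of such a nontrivial invariant subspace means $D\otimes D$ is not irreducible, i.e.\ it is reducible — and this holds even if $D$ itself is irreducible, since the argument never used irreducibility of $D$. (For $n=2$ one has $\dim V_a=1$, the antisymmetric line, which already does the job; this is consistent with, e.g., $\mathbf{2}\otimes\mathbf{2}=\mathbf{1}\oplus\mathbf{3}$ for $SU(2)$-type representations.)

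I do not expect any serious obstacle here: the only mild care needed is the bookkeeping in verifying $V_s,V_a$ are invariant (getting the indices on the Kronecker product straight) and the elementary dimension count confirming nontriviality for all $n>1$. If one wishes, one can also phrase invariance more slickly by noting that the flip operator $P:e_i\otimes e_j\mapsto e_j\otimes e_i$ commutes with $D(g)\otimes D(g)$ for all $g$, so its eigenspaces (which are exactly $V_s$ for eigenvalue $+1$ and $V_a$ for eigenvalue $-1$) are automatically invariant; this is essentially the same proof, and it is the version I would most likely write down.
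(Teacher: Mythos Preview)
Your proposal is correct and follows essentially the same approach as the paper: both proofs exhibit the symmetric and antisymmetric subspaces $V_s=\mathrm{Span}(e_{(i}\otimes e_{j)})$ and $V_a=\mathrm{Span}(e_{[i}\otimes e_{j]})$ as nontrivial invariant subspaces of $V\otimes V$. Your version is in fact slightly more complete, since you include the explicit dimension count verifying nontriviality for $n>1$ (the paper leaves this implicit), and your remark about the flip operator $P$ commuting with $D(g)\otimes D(g)$ is a nice alternative formulation of the same invariance argument.
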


\begin{proof}
	Let $\lbrace e_{j}\rbrace_{j}$ be a basis of $V$ ($j=1,...,n$). $\Rightarrow \lbrace e_{i}\otimes e_{j}\rbrace_{ij}$ is a basis of $V\otimes V$.
	\begin{displaymath}
		\begin{split}
			& \Rightarrow (D\otimes D)(e_{i}\otimes e_{j})=\underbrace{D_{ki}D_{lj}}_{(D\otimes D)_{ki,lj}}(e_{k}\otimes e_{l}).\\
			& \Rightarrow (D\otimes D)(e_{j}\otimes e_{i})=(D\otimes D)_{kj,li}(e_{k}\otimes e_{l}).
		\end{split}
	\end{displaymath}
Calculating the sum and the difference of these equations we get
	\begin{displaymath}
		(D\otimes D)(\frac{1}{2}(e_{i}\otimes e_{j}\pm e_{j}\otimes e_{i}))=\frac{1}{2}((D\otimes D)_{ki,lj}\pm (D\otimes D)_{kj,li})(\frac{1}{2}(e_{k}\otimes e_{l}\pm e_{l}\otimes e_{k})),
	\end{displaymath}
where we have used that $A_{ij}B_{ij}=A_{ij}B_{(ij)}$ for symmetric $A$.
\medskip
\\
We conclude that
	\begin{displaymath}
		\mathrm{Span}(\frac{1}{2}(e_{k}\otimes e_{l}+e_{l}\otimes e_{k}))=\mathrm{Span}(e_{(k}\otimes e_{l)})
	\end{displaymath}
and
	\begin{displaymath}
		\mathrm{Span}(\frac{1}{2}(e_{k}\otimes e_{l}-e_{l}\otimes e_{k}))=\mathrm{Span}(e_{[k}\otimes e_{l]})
	\end{displaymath}
are invariant subspaces of $V\otimes V$. $\Rightarrow$ $D\otimes D$ is reducible.
\end{proof}

\section{The Clebsch-Gordan-series}\label{ACGC}

\begin{define}\label{DA87}
	Let $D$ be a representation of a finite group $G$. The representation $D^{\ast}$ which is defined by
	\begin{displaymath}
		[D^{\ast}]=[D]^{\ast},
	\end{displaymath}
	is called \textit{complex conjugate representation} of $D$.
	\medskip
	\\
	The representation $\bar{D}$, which is defined by
	\begin{displaymath}
		[\bar{D}]=[D^{-1}]^{T},
	\end{displaymath}
	is called \textit{adjoint representation} of $D$. 
\end{define}
\hspace{0mm}\\
Remark: For finite groups $D^{\ast}$ and $(D^{-1})^{T}$ are equivalent, because $D$ is equivalent to a unitary representation.

\begin{prop}\label{PA88}
	\begin{itemize}
		\item[]
		\item[1.] $\chi_{\bar{D}}(a)=\chi_{D}(a^{-1}).$
		\item[2.] $\chi_{D^{\ast}}=\chi_{D}^{\ast}.$
	\end{itemize}
\end{prop}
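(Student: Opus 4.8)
The plan is to compute both characters directly from their definitions, relying only on the elementary facts that the trace of a matrix is invariant under transposition and commutes with complex conjugation, together with the basic identity $D(a)^{-1}=D(a^{-1})$ from Proposition \ref{PA29}.

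For part 1, I would start from the definition $[\bar{D}]=[D^{-1}]^{T}$ in Definition \ref{DA87}, so that $[\bar{D}(a)]=[D(a)^{-1}]^{T}=[D(a^{-1})]^{T}$, where the last step uses $D(a^{-1})=D(a)^{-1}$. Taking traces and using $\mathrm{Tr}(M^{T})=\mathrm{Tr}(M)$ for every square matrix $M$ then gives $\chi_{\bar{D}}(a)=\mathrm{Tr}([D(a^{-1})]^{T})=\mathrm{Tr}([D(a^{-1})])=\chi_{D}(a^{-1})$, which is the assertion.

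For part 2, I would start from $[D^{\ast}]=[D]^{\ast}$, so that $[D^{\ast}(a)]=[D(a)]^{\ast}$. Since the trace is the sum of the diagonal entries and complex conjugation is additive, one has $\mathrm{Tr}(M^{\ast})=(\mathrm{Tr}\,M)^{\ast}$ for every square matrix $M$; applying this yields $\chi_{D^{\ast}}(a)=\mathrm{Tr}([D(a)]^{\ast})=(\mathrm{Tr}[D(a)])^{\ast}=\chi_{D}(a)^{\ast}$ for all $a\in G$, i.e. $\chi_{D^{\ast}}=\chi_{D}^{\ast}$ as functions on $G$.

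There is essentially no obstacle here: the statement is a one-line consequence of the definitions of the adjoint and complex conjugate representations together with standard properties of the trace. The only point worth making explicit is the identity $D(a)^{-1}=D(a^{-1})$, already recorded in Proposition \ref{PA29}. As an alternative route to part 1, one could instead invoke the remark following Definition \ref{DA87} that for finite groups $\bar{D}$ and $D^{\ast}$ are equivalent, combined with part 2 and Proposition \ref{PA58}(2), which states that equivalent representations have equal characters.
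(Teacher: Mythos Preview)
Your proof is correct and follows essentially the same route as the paper's own argument: direct computation of the traces using $\mathrm{Tr}(M^{T})=\mathrm{Tr}(M)$, $\mathrm{Tr}(M^{\ast})=(\mathrm{Tr}\,M)^{\ast}$, and $D(a)^{-1}=D(a^{-1})$. If anything, you are slightly more explicit than the paper in justifying each step.
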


\begin{proof}
	\begin{itemize}
		\item[]
		\item[1.] $\chi_{\bar{D}}(a)=\mathrm{Tr}([D(a)^{-1}]^{T})=\mathrm{Tr}([D(a)^{-1}])=\mathrm{Tr}([D(a^{-1})])=\chi_{D}(a^{-1}).$
		\item[2.]$\chi_{D^{\ast}}=\mathrm{Tr}(D^{\ast})=\mathrm{Tr}([D]^{\ast})=\mathrm{Tr}(D)^{\ast}=\chi_{D}^{\ast}.$
	\end{itemize}
\end{proof}

\begin{define}\label{DA89}
	Let $D^{(\alpha)}$ be the non-equivalent irreducible representations of a finite group $G$. In general the tensor product
		\begin{displaymath}
			D^{(\alpha)}\otimes D^{(\beta)}
		\end{displaymath}
	is not irreducible, thus
	\begin{displaymath}
		D^{(\alpha)}\otimes D^{(\beta)}=\bigoplus_{\gamma}a_{\gamma}D^{(\gamma)}.	
	\end{displaymath}
	This sum is called \textit{Clebsch-Gordan-series}.
\end{define}

\begin{define}\label{DA90}
	Let $D^{(\mu)}(G)$ and $D^{(\nu)}(G)$ be irreducible representations of a finite group $G$ on vectorspaces $V^{(\mu)}$ and $V^{(\nu)}$ ($\mathrm{dim}V^{(\mu)}=n_{\mu}$, $\mathrm{dim}V^{(\nu)}=n_{\nu}$).
	\smallskip
	\\
	Let $\lbrace v_{j}\rbrace_{j=1,...,n_{\mu}}$ be a basis of $V^{(\mu)}$ and $\lbrace w_{k}\rbrace_{k=1,...,n_{\nu}}$ a basis of $V^{(\nu)}$.
	\smallskip
	\\
	Consider the Clebsch-Gordan decomposition
	\begin{displaymath}
		D^{(\mu)}\otimes D^{(\nu)}=\bigoplus_{\lambda}a_{\lambda}D^{(\lambda)},
	\end{displaymath}
where $D^{(\lambda)}$ are the non-equivalent irreducible representations of $G$.
\smallskip
\\
$\lbrace v_{j}\otimes w_{k}\rbrace_{j,k}$ is a basis of $V^{(\mu)}\otimes V^{(\nu)}$. Let $\lbrace u^{(\lambda)}_{i}\rbrace_{i=1,...,n_{\lambda}}$ be a basis of $V^{(\lambda)}$, then there exist complex numbers $C^{\lambda}_{ijl}$, such that
	\begin{displaymath}
		u^{(\lambda)}_{i}=C^{\lambda}_{ijl}(v_{j}\otimes w_{l}).
	\end{displaymath}
$C^{\lambda}_{ijl}$ are called \textit{Clebsch-Gordan coefficients}. Clearly they are basis-dependent, since they form the matrix representation of a basis-change transformation from the basis of $V^{(\mu)}\otimes V^{(\nu)}$ to bases of the invariant vectorspaces $V^{(\lambda)}$.
\end{define}

\begin{prop}\label{PA91}
Let $D=[D^{(\mu)}\otimes D^{(\nu)}]$ be the matrix representation of $D^{(\mu)}\otimes D^{(\nu)}$ in the basis $\lbrace v_{i}\otimes w_{j}\rbrace_{ij}$, and let $M$ be the matrix of basis transformation to the basis $\lbrace u^{(\lambda)}_{i}\rbrace_{\lambda,i}$ of $V^{(\mu)}\otimes V^{(\nu)}$ (the elements of $M$ are the Clebsch-Gordan coefficients), then
	\begin{displaymath}
		M^{-1}DM=\left(
		\begin{matrix}
		 [D^{(1)}] & \textbf{0} & \textbf{0} & \textbf{0} \\
		 \textbf{0} & [D^{(2)}] & \textbf{0} & \textbf{0} \\
		 \textbf{0} & \textbf{0} & \ddots & \textbf{0} \\
		 \textbf{0} & \textbf{0} & \textbf{0} & [D^{(p)}]
		\end{matrix}
		\right),
	\end{displaymath}
where $[D^{(j)}]$ are matrix representations of the irreducible representations $D^{(\lambda)}$ of $G$ ($j\in \lbrace 1,...,k \rbrace$), $p=\sum_{\lambda}a_{\lambda}$ , and $\textbf{0}$ are appropriate null matrices.
\end{prop}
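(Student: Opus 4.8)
The plan is to translate the basis-change statement $u_i^{(\lambda)} = C^\lambda_{ijl}(v_j\otimes w_l)$ into matrix language and then compute directly. First I would fix the enumeration conventions: the basis $\{v_i\otimes w_j\}_{ij}$ of $V^{(\mu)}\otimes V^{(\nu)}$ ordered lexicographically as in the discussion preceding Definition~\ref{DA83}, and the basis $\{u^{(\lambda)}_i\}_{\lambda,i}$ of the same space, grouped so that all basis vectors of the first invariant subspace come first, then those of the second, and so on (this is exactly the ordering that will make the right-hand side block-diagonal, cf.\ the analogous discussion around equation~(\ref{reduceequ})). With these conventions, $M$ is the matrix whose columns are the coordinate vectors $[u^{(\lambda)}_i]$ with respect to $\{v_j\otimes w_l\}$; equivalently $M$ is the matrix of the linear map sending the $q$-th standard basis vector to the $q$-th vector in the list $\{u^{(\lambda)}_i\}$, so that $M[v_j\otimes w_l] = [u_{\sigma(j,l)}]$ for the appropriate reindexing $\sigma$, which is precisely the setup of Proposition~\ref{Pclebsch2}.

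The core computation is then a one-line calculation analogous to the proof of Proposition~\ref{Pclebsch2}: for each $a\in G$,
\[
M^{-1}DM\,[v_j\otimes w_l] = M^{-1}D\,[u_{\sigma(j,l)}] = M^{-1}\big[(D^{(\mu)}\otimes D^{(\nu)})(a)u_{\sigma(j,l)}\big].
\]
The key input is that each $V^{(\lambda)}$ appearing in the decomposition is by definition an \emph{invariant} subspace of $V^{(\mu)}\otimes V^{(\nu)}$ on which $D^{(\mu)}\otimes D^{(\nu)}$ acts as (a matrix representation of) the irreducible representation $D^{(\lambda)}$. Concretely, $(D^{(\mu)}\otimes D^{(\nu)})(a)u^{(\lambda)}_i = [D^{(\lambda)}(a)]_{ki}\,u^{(\lambda)}_k$, with no mixing between different invariant subspaces and no summation over $\lambda$. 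Applying $M^{-1}$ sends $[u^{(\lambda)}_k]$ back to the corresponding standard basis vector, so the image of the $q$-th standard basis vector is a linear combination, with coefficients $[D^{(\lambda)}(a)]_{ki}$, of standard basis vectors lying in the same block. Reading this off column by column gives exactly the claimed block-diagonal form $\bigoplus_{j=1}^{p}[D^{(j)}(a)]$ with $p=\sum_\lambda a_\lambda$, where the blocks repeat according to the multiplicities $a_\lambda$.

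Since this holds for every $a\in G$, the statement follows. I would also remark (following Proposition~\ref{Pclebsch2} and Lemma~\ref{LSU314}) that if the representations $[D^{(\mu)}]$ and $[D^{(\nu)}]$ are chosen unitary with respect to the given bases, then $M$ may be taken unitary, although the proposition as stated only asserts the block-diagonalization. The main (and really only) obstacle is bookkeeping: one must be careful that the enumeration $\sigma$ of the product basis and the grouping of the $u^{(\lambda)}_i$ are compatible, so that ``same invariant subspace'' on the domain side corresponds to ``same diagonal block'' on the codomain side — once the conventions are pinned down, the invariance of the $V^{(\lambda)}$ (Definition~\ref{DA43} and the construction in Theorem~\ref{TA47}) does all the work, and there is no genuine computation to grind through.
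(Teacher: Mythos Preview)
Your proposal is correct and follows essentially the same approach as the paper: the paper's proof is a one-liner (``This is clear, since $D^{(\mu)}\otimes D^{(\nu)}=\bigoplus_{\lambda}a_{\lambda}D^{(\lambda)}$, and $\{u^{(\lambda)}_{i}\}_{i}$ is a basis of the invariant subspace $V^{(\lambda)}$''), and you have simply unpacked this into the explicit basis-change computation, mirroring the argument of Proposition~\ref{Pclebsch2}. The added remarks on ordering conventions and unitarity are helpful but not required for the bare statement.
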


\begin{proof}
	This is clear, since $D^{(\mu)}\otimes D^{(\nu)}=\bigoplus_{\lambda}a_{\lambda}D^{(\lambda)}$, and $\lbrace u^{(\lambda)}_{i}\rbrace_{i}$ is a basis of the invariant subspace $V^{(\lambda)}$.
\end{proof}

\begin{example}
	$D^{(\mu)}\otimes D^{(\nu)}=2D^{(1)}\oplus D^{(2)}$
	\begin{displaymath}
		\Rightarrow M^{-1}DM=\left(
	\begin{matrix}
	 [D^{(1)}] & \textbf{0} & \textbf{0} \\
	 \textbf{0} & [D^{(1)}] & \textbf{0} \\
	 \textbf{0} & \textbf{0} & [D^{(2)}]
	\end{matrix}
	\right).
	\end{displaymath}
\end{example}

\chapter{The Lie groups $SU(2)$, $SO(3)$ and $SL(2,\mathbb{C})$}\label{lieappendix}

\section{Definition and properties of $SU(2)$ and $SO(3)$}

\begin{define}\label{DSU2SU31}
	\begin{displaymath}
	\begin{split}
	& SU(2):=\{A\in \mathrm{Mat}(2,\mathbb{C})\vert A^{\dagger}=A^{-1}, \mathrm{det}A=1\}
	\\ &
	SO(3):=\{A\in \mathrm{Mat}(3,\mathbb{R})\vert A^{T}=A^{-1}, \mathrm{det}A=1\}
	\end{split}
	\end{displaymath}
where $\mathrm{Mat}(n,\mathbb{F})$ denotes the set of $n\times n$-matrices over the field $\mathbb{F}$.
\end{define}

\begin{prop}\label{PSU2SU32}
	\textbf{Properties of $SU(2)$ and $SO(3)$.} Let $\alpha\in \mathbb{R}$, $\vec{n}\in \mathbb{R}^{3}, \|\vec{n}\|=1$.

	\begin{enumerate}
		\item Every $U\in SU(2)$ can be written as
		\begin{equation}
		U(\alpha,\vec{n})=e^{-i\alpha\vec{n}\cdot\frac{\vec{\sigma}}{2}}=\mathrm{cos}\frac{\alpha}{2}\hspace{0.5mm}\mathbbm{1}_{2}-i\mathrm{sin}\frac{\alpha}{2}\hspace{0.5mm}\vec{n}\cdot\vec{\sigma}.
		\end{equation}
		\item Every $R\in SO(3)$ can be written as
		\begin{equation}
		R(\alpha,\vec{n})=e^{-i\alpha\vec{n}\cdot\vec{T}}=\mathrm{cos}\alpha\hspace{0.5mm} \mathbbm{1}_{3}+(1-\mathrm{cos}\alpha)\vec{n}\vec{n}^{T}-i\mathrm{sin}\alpha \hspace{0.5mm}\vec{n}\cdot\vec{T}
		\end{equation}
		with $(T_{j})_{kl}=\frac{1}{i}\varepsilon_{jkl}$.
		\end{enumerate}
\end{prop}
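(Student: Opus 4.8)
The plan is to read each of the two statements as a pair of claims: the second equality in each line is a closed-form evaluation of a matrix exponential, while the phrase ``every $U\in SU(2)$ can be written as'' (resp. ``every $R\in SO(3)$'') is a surjectivity assertion about the axis--angle parametrization. I would dispose of the exponential identities first, since they are purely computational, and then settle surjectivity by putting an arbitrary group element into a normal form.

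\textbf{The $SU(2)$ case.} The crucial algebraic fact is $(\vec{n}\cdot\vec{\sigma})^{2}=\mathbbm{1}_{2}$ whenever $\|\vec{n}\|=1$; this follows at once from $\sigma_{i}\sigma_{j}=\delta_{ij}\mathbbm{1}_{2}+i\varepsilon_{ijk}\sigma_{k}$, because $n_{i}n_{j}\sigma_{i}\sigma_{j}=\|\vec{n}\|^{2}\mathbbm{1}_{2}+i\varepsilon_{ijk}n_{i}n_{j}\sigma_{k}$ and the last term vanishes by antisymmetry. Hence in the Taylor series of $\exp(-i\tfrac{\alpha}{2}\vec{n}\cdot\vec{\sigma})$ every even power of $-i\tfrac{\alpha}{2}\vec{n}\cdot\vec{\sigma}$ is a multiple of $\mathbbm{1}_{2}$ and every odd power a multiple of $\vec{n}\cdot\vec{\sigma}$, and resumming the two resulting scalar series gives $\cos\tfrac{\alpha}{2}\,\mathbbm{1}_{2}-i\sin\tfrac{\alpha}{2}\,\vec{n}\cdot\vec{\sigma}$. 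For surjectivity I would write a general element of $SU(2)$ in the familiar form $U=\begin{pmatrix} a & b \\ -\bar{b} & \bar{a}\end{pmatrix}$ with $|a|^{2}+|b|^{2}=1$, re-expand it as $U=x_{0}\mathbbm{1}_{2}-i(x_{1}\sigma_{1}+x_{2}\sigma_{2}+x_{3}\sigma_{3})$ with real $x_{\mu}$ obeying $x_{0}^{2}+x_{1}^{2}+x_{2}^{2}+x_{3}^{2}=1$, and then pick $\alpha\in[0,2\pi]$ with $\cos\tfrac{\alpha}{2}=x_{0}$ and the unit vector $\vec{n}=(x_{1},x_{2},x_{3})/\sin\tfrac{\alpha}{2}$ (its norm is $\sqrt{1-x_{0}^{2}}/\sin\tfrac{\alpha}{2}=1$); the exceptional case $x_{0}=\pm1$, i.e. $U=\pm\mathbbm{1}_{2}$, is the value $\alpha=0$ or $2\pi$ with $\vec{n}$ arbitrary.

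\textbf{The $SO(3)$ case.} Put $N:=\vec{n}\cdot\vec{T}$ with $(T_{j})_{kl}=\tfrac{1}{i}\varepsilon_{jkl}$. Using the contraction identity for two $\varepsilon$-symbols one finds $N^{2}=\mathbbm{1}_{3}-\vec{n}\vec{n}^{T}$ for $\|\vec{n}\|=1$, so $N\vec{n}=0$ and therefore $N^{3}=N$. Then in $\exp(-i\alpha N)$ the $m=0$ term is $\mathbbm{1}_{3}$, the even terms $m\ge2$ sum to $(\cos\alpha-1)N^{2}$, and the odd terms sum to $-i\sin\alpha\,N$; substituting $N^{2}=\mathbbm{1}_{3}-\vec{n}\vec{n}^{T}$ yields exactly $\cos\alpha\,\mathbbm{1}_{3}+(1-\cos\alpha)\,\vec{n}\vec{n}^{T}-i\sin\alpha\,\vec{n}\cdot\vec{T}$. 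For surjectivity I would use that a real orthogonal $3\times3$ matrix with $\det R=1$ has eigenvalues $1,e^{i\theta},e^{-i\theta}$: the characteristic polynomial is a real cubic, hence has a real root, which must be $\pm1$, and since the product of all three roots equals $1$ the root $+1$ occurs. Choosing a real unit eigenvector $\vec{n}$ for the eigenvalue $1$ and completing it to a positively oriented orthonormal basis, $R$ acts on the plane $\vec{n}^{\perp}$ as a planar rotation by some angle $\alpha$; thus $R$ and $R(\alpha,\vec{n})$ agree on $\vec{n}$ and on $\vec{n}^{\perp}$, hence everywhere. Again $R=\mathbbm{1}_{3}$ is the degenerate case with $\vec{n}$ arbitrary.

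\textbf{Where the work lies.} The identities $(\vec{n}\cdot\vec{\sigma})^{2}=\mathbbm{1}_{2}$ and $N^{2}=\mathbbm{1}_{3}-\vec{n}\vec{n}^{T}$, $N^{3}=N$, together with the resummation of the exponential series, are routine. The one genuinely substantive point is the surjectivity of the parametrization, and within it the spectral fact that every proper rotation of $\mathbb{R}^{3}$ has a fixed axis; this is the step I would write out most carefully. It is worth noting that the parametrization is far from injective (for $SO(3)$, $(\alpha,\vec{n})$ and $(2\pi-\alpha,-\vec{n})$ give the same rotation, and $\alpha=0$ gives $\mathbbm{1}_{3}$ for every $\vec{n}$; for $SU(2)$ one must allow $\alpha\in[0,4\pi)$ to sweep out the whole group), but since the proposition only claims that every element \emph{can be written} in this form, injectivity need not be discussed.
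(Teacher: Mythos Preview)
Your proposal is correct, and the series resummations for the closed-form expressions are handled exactly as in the paper: via $(\vec{n}\cdot\vec{\sigma})^{2}=\mathbbm{1}_{2}$ for $SU(2)$, and via $(\vec{n}\cdot\vec{T})^{2k}=\mathbbm{1}_{3}-\vec{n}\vec{n}^{T}$ (for $k\ge1$), $(\vec{n}\cdot\vec{T})^{2k+1}=\vec{n}\cdot\vec{T}$ for $SO(3)$.

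Where you differ from the paper is in the surjectivity argument. The paper does not construct $(\alpha,\vec{n})$ explicitly; it simply observes that $\{-i\sigma^{j}/2\}$ and $\{-iT^{j}\}$ are bases of the Lie algebras $su(2)$ and $so(3)$, and invokes the relationship between a matrix Lie group and its Lie algebra to conclude that every group element is a single exponential. That is a one-line appeal to a nontrivial background fact (surjectivity of the exponential map for compact connected groups). Your route is more elementary and self-contained: for $SU(2)$ you use the parametrization $U=x_{0}\mathbbm{1}_{2}-i\vec{x}\cdot\vec{\sigma}$ with $x_{0}^{2}+|\vec{x}|^{2}=1$ and read off $(\alpha,\vec{n})$ directly; for $SO(3)$ you prove Euler's rotation theorem via the real eigenvalue of the characteristic polynomial. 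Your argument is longer but requires no Lie-theoretic input, which is a genuine advantage if one does not want to rely on that machinery. The paper's version is terser but leans on a result that it does not itself justify.
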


\begin{proof}
	The statements of this proposition can be easily proved using the relations between matrix Lie groups and their Lie algebras.
	\begin{enumerate}
		\item $\{-i\frac{\sigma^{j}}{2}\}_{j=1,2,3}$ is a basis of the Lie algebra
		\begin{displaymath}
		su(2)=\{A\in \mathrm{Mat}(2,\mathbb{C})\vert A^{\dagger}=-A, \mathrm{Tr}A=0\},
		\end{displaymath}
		thus every $U\in SU(2)$ can be written as $U(\alpha,\vec{n})=e^{-i\alpha\vec{n}\cdot\frac{\vec{\sigma}}{2}}$.
		\medskip
		\\
		Using $(\vec{n}\cdot\vec{\sigma})^{2k}=\mathbbm{1}_{2}$ we find
		\begin{displaymath}
		\begin{split}
			&e^{-i\alpha\vec{n}\cdot\frac{\vec{\sigma}}{2}}=
			\sum_{n=0}^{\infty}\frac{1}{n!}(-\frac{i\alpha}{2}\vec{n}\cdot\vec{\sigma})^{n}=
			\\&
			=\sum_{k=0}^{\infty}\frac{(-1)^{k}}{(2k)!}\left(\frac{\alpha}{2}\right)^{2k}\underbrace{(\vec{n}\cdot\vec{\sigma})^{2k}}_{\mathbbm{1}_{2}}-i\sum_{k=0}^{\infty}\frac{(-1)^{k}}{(2k+1)!}\left(\frac{\alpha}{2}\right)^{2k+1}\underbrace{(\vec{n}\cdot\vec{\sigma})^{2k+1}}_{\vec{n}\cdot\vec{\sigma}}=
			\\&=
			\mathrm{cos}\left(\frac{\alpha}{2}\right)\mathbbm{1}_{2}-i\mathrm{sin}\left(\frac{\alpha}{2}\right)\vec{n}\cdot\vec{\sigma}.
		\end{split}
		\end{displaymath}
		\item $\{iT^{j}\}_{j=1,2,3}$ is a basis of the Lie algebra
		\begin{displaymath}
		so(3)=\{A\in \mathrm{Mat}(3,\mathbb{R})\vert A^{T}=-A, \mathrm{Tr}A=0\},
		\end{displaymath}
		thus every $R\in SO(3)$ can be written as $R(\alpha,\vec{n})=e^{-i\alpha\vec{n}\cdot\vec{T}}$.
		\medskip
		\\
		$(\vec{n}\cdot\vec{T})$ fulfils:
		\begin{displaymath}
		\begin{split}
			&(\vec{n}\cdot\vec{T})^{0}=\mathbbm{1}_{3},\\
			&(\vec{n}\cdot\vec{T})^{2k}=\mathbbm{1}_{3}-\vec{n}\vec{n}^{T}\quad (k>0),\\
			&(\vec{n}\cdot\vec{T})^{2k+1}=\vec{n}\cdot\vec{T}\quad (k\geq0).
		\end{split}
		\end{displaymath}
		Performing the same series expansion as in 1. we find
		\begin{displaymath}
		e^{-i\alpha\vec{n}\cdot\vec{T}}=\mathrm{cos}\alpha\hspace{0.5mm} \mathbbm{1}_{3}+(1-\mathrm{cos}\alpha)\vec{n}\vec{n}^{T}-i\mathrm{sin}\alpha \hspace{0.5mm}\vec{n}\cdot\vec{T}.
		\end{displaymath}
	\end{enumerate}
\end{proof}
\hspace{0mm}\\
\textbf{The Lie group homomorphism between $SU(2)$ and $SO(3)$}

\begin{lemma}\label{LSU2SU33}
	Let $g,h$ be Lie algebras of the Lie groups $G$ and $H$, and let $\gamma:g\rightarrow h$ be a Lie algebra homomorphism. If $G$ is simply connected, then there exists a unique Lie group homomorphism $\phi:G\rightarrow H$ such that $d\phi=\gamma$. 
\end{lemma}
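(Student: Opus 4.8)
The plan is to build the desired homomorphism as a "graph" sitting inside the product group $G\times H$, and then to use the simple connectedness of $G$ to unfold it. First I would consider the subspace
\[
\mathfrak{k}:=\{(X,\gamma(X))\mid X\in g\}\subset g\oplus h ,
\]
where $g\oplus h$ is the Lie algebra of $G\times H$. Since $\gamma$ is a Lie algebra homomorphism, $\mathfrak{k}$ is a Lie subalgebra: $[(X,\gamma X),(Y,\gamma Y)]=([X,Y],[\gamma X,\gamma Y])=([X,Y],\gamma[X,Y])\in\mathfrak{k}$. This is the only place where the homomorphism property of $\gamma$ enters.

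Next I would invoke the correspondence between Lie subalgebras and connected (immersed) Lie subgroups: to $\mathfrak{k}$ there corresponds a unique connected Lie subgroup $K\subset G\times H$ with $\mathrm{Lie}(K)=\mathfrak{k}$. Restricting the two canonical projections of $G\times H$ gives Lie group homomorphisms $p_1:K\to G$ and $p_2:K\to H$. The differential $dp_1:\mathfrak{k}\to g$, $(X,\gamma X)\mapsto X$, is a linear isomorphism, so $p_1$ is a local diffeomorphism; its image is therefore an open subgroup of the connected group $G$, hence all of $G$, and $p_1:K\to G$ is a covering homomorphism with discrete kernel.

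The hypothesis is used here: because $G$ is simply connected and $K$ is connected, the covering $p_1$ must be trivial, i.e.\ $p_1$ is a diffeomorphism, hence an isomorphism of Lie groups. I would then define $\phi:=p_2\circ p_1^{-1}:G\to H$, which is a composition of Lie group homomorphisms, and compute $d\phi=dp_2\circ(dp_1)^{-1}:X\mapsto(X,\gamma X)\mapsto\gamma(X)$, so $d\phi=\gamma$. For uniqueness, if $\phi,\phi'$ are homomorphisms with $d\phi=d\phi'=\gamma$, then $\phi\circ\exp_G=\exp_H\circ\,d\phi=\exp_H\circ\,d\phi'=\phi'\circ\exp_G$, so $\phi$ and $\phi'$ agree on the image of $\exp_G$, which contains a neighbourhood of the identity; since $G$ is connected this neighbourhood generates $G$, and homomorphisms agreeing on a generating set coincide.

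The main obstacle is the existence half of the subalgebra--subgroup correspondence used in the second step: it is not elementary, resting on the global Frobenius integrability theorem for the left-invariant distribution on $G\times H$ cut out by $\mathfrak{k}$. In a self-contained treatment one would have to develop that machinery; in the spirit of this appendix I would instead cite a standard Lie-theory reference for it, and likewise for the fact that a connected covering of a simply connected manifold is a diffeomorphism. Everything else — the product Lie algebra, the projections and their differentials, and the exponential-map argument for uniqueness — is routine.
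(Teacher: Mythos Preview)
Your proof is correct and is in fact the standard ``graph subgroup'' argument (essentially the one in Warner, \emph{Foundations of Differentiable Manifolds and Lie Groups}). The paper, however, does not prove this lemma at all: it simply states that the proof requires good knowledge of Lie theory and refers the reader to \cite{warner}, p.~101. So you have supplied considerably more than the paper does, and what you supplied is the proof one would find by following that reference.
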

\hspace{0mm}\\
The proof of this lemma needs good knowledge about Lie groups and Lie algebras. It can be found in \cite{warner} (p.101).

\begin{theorem}\label{TSU2SU34}
	\begin{gather*}
	\phi:SU(2)\rightarrow SO(3)
	\\
	U(\alpha,\vec{n})\mapsto R(\alpha,\vec{n})
	\end{gather*}
is a group homomorphism.
\end{theorem}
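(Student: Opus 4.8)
\textbf{Proof plan for Theorem \ref{TSU2SU34}.}

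The plan is to construct $\phi$ at the level of Lie algebras first and then lift it to the group level using the simple connectedness of $SU(2)$, exactly as foreshadowed by Lemma \ref{LSU2SU33}. First I would define the candidate Lie algebra homomorphism $\gamma: su(2)\to so(3)$ by sending the basis element $-i\frac{\sigma^j}{2}$ of $su(2)$ to the basis element $-iT^j$ of $so(3)$, where $(T_j)_{kl}=\frac{1}{i}\varepsilon_{jkl}$ as in Proposition \ref{PSU2SU32}. To see that $\gamma$ is a Lie algebra homomorphism one only needs to check that it respects brackets on a basis, i.e. that the structure constants agree: both $\{-i\frac{\sigma^j}{2}\}$ and $\{-iT^j\}$ satisfy the commutation relations of $so(3)$, namely $[-i\frac{\sigma^j}{2},-i\frac{\sigma^k}{2}] = \varepsilon_{jkl}(-i\frac{\sigma^l}{2})$ and likewise $[-iT^j,-iT^k]=\varepsilon_{jkl}(-iT^l)$. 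This is a short computation with Pauli matrices and with the explicit form of the $T^j$.

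Next, since $SU(2)$ is simply connected (it is diffeomorphic to $S^3$), Lemma \ref{LSU2SU33} yields a unique Lie group homomorphism $\phi: SU(2)\to SO(3)$ with $d\phi=\gamma$. It then remains to identify $\phi$ explicitly and verify the formula $\phi(U(\alpha,\vec n))=R(\alpha,\vec n)$ claimed in the statement. Because a Lie group homomorphism intertwines the exponential maps, $\phi(e^{X})=e^{\gamma(X)}$ for $X\in su(2)$; applying this to $X=-i\alpha\,\vec n\cdot\tfrac{\vec\sigma}{2}$ and using $\gamma(X)=-i\alpha\,\vec n\cdot\vec T$ gives
\begin{displaymath}
\phi\bigl(U(\alpha,\vec n)\bigr)=\phi\bigl(e^{-i\alpha\vec n\cdot\vec\sigma/2}\bigr)=e^{-i\alpha\vec n\cdot\vec T}=R(\alpha,\vec n),
\end{displaymath}
where the last equality is Proposition \ref{PSU2SU32}(2). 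The only subtlety here is that the parametrization $U(\alpha,\vec n)$ is not injective (e.g. $(\alpha,\vec n)$ and $(\alpha+2\pi,-\vec n)$, and $\pm U$, give the same or related group elements), so strictly one should argue that $\phi$ is well defined as a map on group elements — but this is automatic since $\phi$ is produced by the lemma as a genuine homomorphism on $SU(2)$, and the displayed computation merely evaluates it on the surjective parametrization.

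An alternative, more hands-on route would avoid Lemma \ref{LSU2SU33} entirely: define $\phi(U)$ to be the rotation given by the action of $U$ by conjugation on the real vector space of traceless Hermitian $2\times 2$ matrices $\{\vec x\cdot\vec\sigma : \vec x\in\mathbb{R}^3\}$, i.e. $\phi(U)\vec x\cdot\vec\sigma = U(\vec x\cdot\vec\sigma)U^{\dagger}$; one checks this preserves the quadratic form $\det(\vec x\cdot\vec\sigma)=-\|\vec x\|^2$, hence lands in $O(3)$, and by connectedness of $SU(2)$ in $SO(3)$, and homomorphy is immediate from $U_1(U_2(\cdot)U_2^\dagger)U_1^\dagger=(U_1U_2)(\cdot)(U_1U_2)^\dagger$. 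I expect the main obstacle in either approach to be purely bookkeeping: matching the normalization conventions (the factor $\tfrac12$ in $\tfrac{\vec\sigma}{2}$ versus the absence of it in $\vec T$, and the convention $(T_j)_{kl}=\tfrac1i\varepsilon_{jkl}$) so that the exponentials line up exactly with the formulas in Proposition \ref{PSU2SU32}; there is no deep difficulty, since the hard analytic content is entirely absorbed into the cited Lemma \ref{LSU2SU33}.
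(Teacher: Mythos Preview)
Your proposal is correct and follows essentially the same approach as the paper: both invoke Lemma \ref{LSU2SU33} (simple connectedness of $SU(2)$) to reduce the problem to checking that $d\phi:-i\tfrac{\sigma^j}{2}\mapsto -iT^j$ is a Lie algebra homomorphism, which amounts to verifying that both bases satisfy the same commutation relations $[\cdot,\cdot]=\varepsilon_{jkl}(\cdot)$. You are actually more thorough than the paper's proof, which omits the explicit verification via exponential intertwining that the resulting $\phi$ is given by $U(\alpha,\vec n)\mapsto R(\alpha,\vec n)$; your alternative conjugation-action argument is not in the paper but is a standard and equally valid route.
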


\begin{proof}
	Because $SU(2)$ is simply connected we only need to show that
	\begin{gather*}
	d\phi: su(2)\rightarrow so(3)
	\\
	-i\frac{\sigma^{j}}{2}\mapsto -iT^{j}
	\end{gather*}
	is a Lie algebra homomorphism. Since $d\phi$ is linear, and $-i\frac{\sigma^{j}}{2}$ and $-iT^{j}$ both fulfil the commutation relations (Lie bracket)
	\begin{displaymath}
	[-iT^{j},-iT^{k}]=\varepsilon_{jkl}(-iT^{l}),
	\end{displaymath}
	$d\phi$ is a Lie algebra homomorphism.
\end{proof}
\hspace{0mm}\\
One usually calls $\phi$ a \textquotedblleft 2 to 1\textquotedblright\hspace{1mm} homomorphism. This comes from
	\begin{displaymath}
	\phi(-U)=\phi(U),
	\end{displaymath}
which is true, because $-\mathbbm{1}_{2}=U(2\pi,\vec{n})$, and $R(\alpha+2\pi,\vec{n})=R(\alpha,\vec{n})$.

\section{The group $SL(2,\mathbb{C})$}

\begin{define}\label{DSU2SU35}
	\begin{displaymath}
	SL(2,\mathbb{C}):=\{ A\in \mathrm{Mat}(2,\mathbb{C})\vert \mathrm{det}A=1\}.
	\end{displaymath}
We furthermore define
	\begin{displaymath}
	\sigma^{0}=\mathbbm{1}_{2},\quad (\sigma^{\mu})=(\mathbbm{1},\vec{\sigma}),
	\end{displaymath}
where $\sigma^{i}$ ($i=1,2,3$) are the Pauli matrices.
	\begin{displaymath}
	\sigma_{\mu}:=\eta_{\mu\nu}\sigma^{\nu} \Rightarrow (\sigma_{\mu})=(\mathbbm{1},-\vec{\sigma}).
	\end{displaymath}
\end{define}

\begin{theorem}\label{TSU2SU36}
There exists a \textquotedblleft 2 to 1\textquotedblright-homomorphism
	\begin{displaymath}
	\phi: SL(2,\mathbb{C})\rightarrow \mathcal{L}_{+}^{\uparrow},
	\end{displaymath}
where $\mathcal{L}_{+}^{\uparrow}$ is the group of \textit{proper orthochronous Lorentz-transformations}.
\end{theorem}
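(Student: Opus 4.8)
The plan is to mimic the construction of the homomorphism $SU(2)\to SO(3)$ from Theorem~\ref{TSU2SU34}, replacing the role of $su(2)$ by $sl(2,\mathbb{C})$ and the role of three-dimensional rotations by the four-dimensional Lorentz group acting on Minkowski space. The key idea is to identify a real four-dimensional space on which $SL(2,\mathbb{C})$ acts naturally while preserving a quadratic form of signature $(1,3)$. First I would set up the correspondence between four-vectors $x=(x^0,x^1,x^2,x^3)$ and Hermitian $2\times 2$-matrices via
	\begin{displaymath}
	X:=x_\mu\sigma^\mu=x^0\mathbbm{1}_2+\vec{x}\cdot\vec{\sigma},
	\end{displaymath}
noting that $X\mapsto x$ is an $\mathbb{R}$-linear bijection between Hermitian matrices and $\mathbb{R}^4$, and that $\det X=(x^0)^2-|\vec{x}|^2=x_\mu x^\mu$ is exactly the Minkowski norm.

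Next I would define, for $A\in SL(2,\mathbb{C})$, the action $X\mapsto AXA^\dagger$. One checks that $AXA^\dagger$ is again Hermitian (since $(AXA^\dagger)^\dagger=AX^\dagger A^\dagger=AXA^\dagger$), so this defines an $\mathbb{R}$-linear map $\Lambda(A)$ on $\mathbb{R}^4$. Because $\det(AXA^\dagger)=\det A\,\det X\,\overline{\det A}=\det X$, the map $\Lambda(A)$ preserves the Minkowski form, hence $\Lambda(A)\in O(1,3)$. The homomorphism property $\Lambda(AB)=\Lambda(A)\Lambda(B)$ is immediate from $(AB)X(AB)^\dagger=A(BXB^\dagger)A^\dagger$. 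It then remains to show that $\Lambda(A)$ always lies in the identity component $\mathcal{L}_+^\uparrow$: this follows because $SL(2,\mathbb{C})$ is connected (it is path-connected, e.g.\ by polar decomposition), so its continuous image $\Lambda(SL(2,\mathbb{C}))$ is connected and contains $\Lambda(\mathbbm{1}_2)=\mathbbm{1}_4$, forcing it into $\mathcal{L}_+^\uparrow$. Surjectivity onto $\mathcal{L}_+^\uparrow$ can be obtained by exhibiting preimages of a generating set: boosts come from positive Hermitian $A=e^{\vec{\phi}\cdot\vec{\sigma}/2}$ and spatial rotations come from the $SU(2)$-subgroup via Theorem~\ref{TSU2SU34}, and these two families generate $\mathcal{L}_+^\uparrow$. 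Finally, to see the map is \textquotedblleft 2 to 1\textquotedblright, I would compute the kernel: $AXA^\dagger=X$ for all Hermitian $X$ forces $A$ to commute with $\mathbbm{1}_2$ and all $\sigma^i$, hence $A=\lambda\mathbbm{1}_2$ with $|\lambda|=1$ from Hermiticity-preservation and $\lambda^2=1$ from $\det A=1$, so $A=\pm\mathbbm{1}_2$ and $\Lambda(A)=\Lambda(-A)$.

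The main obstacle I anticipate is the surjectivity onto $\mathcal{L}_+^\uparrow$ together with the precise identification of the image as the \emph{proper orthochronous} component rather than merely some subgroup of $O(1,3)$. Establishing connectedness of $SL(2,\mathbb{C})$ cleanly (via the polar/Cartan decomposition $A=UP$ with $U\in SU(2)$ and $P$ positive Hermitian of determinant one) handles the component issue, but verifying that the boosts and rotations just described actually generate all of $\mathcal{L}_+^\uparrow$ — or alternatively, a direct dimension/Lie-algebra argument showing $d\Lambda\colon sl(2,\mathbb{C})\to \mathfrak{so}(1,3)$ is an isomorphism and invoking a lemma analogous to Lemma~\ref{LSU2SU33} — is where the real work lies. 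Since the excerpt treats this theorem as a stated fact (the proof is not given), I would most likely present the explicit map $\Lambda(A)X = AXA^\dagger$, verify the homomorphism and kernel properties in full, and cite a standard reference for the surjectivity and connectedness details, exactly paralleling how Theorem~\ref{TSU2SU36} is handled immediately before in the appendix.
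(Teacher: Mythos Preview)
Your proposal is correct and follows essentially the same construction as the paper: identify Minkowski space with Hermitian $2\times 2$ matrices via $X=x^\mu\sigma_\mu$ (the paper uses $\sigma_\mu$ rather than your $\sigma^\mu$, a harmless sign convention), let $A\in SL(2,\mathbb{C})$ act by $X\mapsto AXA^\dagger$, and check that $\det X$ is preserved and that $A\mapsto L_A$ is a homomorphism. The paper in fact does \emph{less} than you propose: it establishes only the homomorphism property explicitly and then defers both the identification of the image with $\mathcal{L}_+^\uparrow$ and the 2-to-1 statement to a citation of Sexl--Urbantke \cite{sexl}, whereas you supply a connectedness argument for the image and an explicit kernel computation.
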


\begin{proof}
Let $x\in \mathbb{R}^{4}:\quad$ $X:=x^{\mu}\sigma_{\mu}=
\left(\begin{matrix}
 x^{0}-x^{3} & -x^{1}+ix^{2} \\
 -x^{1}-ix^{2} & x^{0}+x^{3}
\end{matrix}\right)=X^{\dagger}.
$
\\
Consider now the mapping
	\begin{displaymath}
	X\mapsto AXA^{\dagger}=X'=x'\hspace{0mm}^{\mu}\sigma_{\mu},\quad A\in SL(2,\mathbb{C}).
	\end{displaymath}
	\begin{displaymath}
	\Rightarrow \mathrm{det}X=(x^{0})^{2}-\vec{x}^{2}=\mathrm{det}X'=(x'\hspace{0mm}^0)^{2}-\vec{x}'\hspace{0mm}^{2}\Rightarrow x'\hspace{0mm}^{\mu}=(L_{A})^{\mu}_{\phantom{\mu}\nu}x^{\nu}.
	\end{displaymath}
$X'$ is again Hermitian, so $(x'\hspace{0mm}^\mu)\in \mathbb{R}^{4}$. Because the mapping leaves the 4-distance $\eta_{\mu\nu}x^{\mu}x^{\nu}=(x^{0})^{2}-\vec{x}^{2}$ invariant, one can interpret $A$ acting on 
\begin{center}
$\{X\in \mathrm{Mat}(2,\mathbb{C})\vert X^{\dagger}=X\}$\end{center}
as a Lorentz transformation $L_{A}$ acting on $\mathbb{R}^{4}$. (Here we have of course interpreted $\mathbb{R}^{4}$ as the 4-dimensional Minkowski-space.)
\medskip
\\
To show that $\phi: A\mapsto L_{A}$ is a group homomorphism we have to show that $AB\mapsto L_{A}L_{B}$:
	\begin{displaymath}
	X\mapsto ABX(AB)^{\dagger}=A\underbrace{BXB^{\dagger}}_{(L_{B})^{\mu}_{\phantom{\mu}\nu}x^{\nu}\sigma_{\mu}}A^{\dagger}=(L_{A})^{\mu}_{\phantom{\mu}\rho}(L_{B})^{\rho}_{\phantom{\rho}\nu}x^{\nu}\sigma_{\mu}=(L_{A}L_{B})^{\mu}_{\phantom{\mu}\nu}x^{\nu}\sigma_{\mu}.
	\end{displaymath}
Since $\mathbbm{1}_{2}\in SL(2,\mathbb{C})$ is mapped onto $\mathbbm{1}_{4}\in \mathcal{L}_{+}^{\uparrow}$, we can also conclude $A^{-1}\mapsto L_{A}^{-1}=L_{A^{-1}}$.
\\
Currently we only know that there exists a homomorphism onto some subgroup of the full Lorentz-group. In fact $\phi$ is a homomorphism onto $\mathcal{L}_{+}^{\uparrow}$ (see \cite{sexl} (p.237) ).
\medskip
\\
In analogy to the group homomorphism $SU(2)\rightarrow SO(3)$,  $\phi:SL(2,\mathbb{C})\rightarrow \mathcal{L}_{+}^{\uparrow}$ is \textquotedblleft 2 to 1\textquotedblright\hspace{1mm} too (see \cite{sexl} (p.237)).
\end{proof}
\hspace{0mm}\\
In nonrelativistic quantum mechanics the transition from the rotation group $SO(3)$ to its universal covering group $SU(2)$ leads to a description of particles with half-integer spin. Here we have the same situation. \textit{The natural laws are not only Lorentz-invariant, they are $SL(2,\mathbb{C})$-invariant}, which leads to the description of half integer spin as a relativistic quantum effect.
\bigskip
\\
We will later need the following important property of $SL(2,\mathbb{C})$:

\begin{prop}\label{PSU3SU2}
Let $\varepsilon:=\left(
\begin{matrix}
 0 & 1 \\
 -1 & 0
\end{matrix}
\right)
$, and let $A\in SL(2,\mathbb{C})$, then
	\begin{displaymath}
	\varepsilon A \varepsilon^{-1}=(A^{-1})^{T}.
	\end{displaymath}
\end{prop}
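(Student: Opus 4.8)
The statement is a purely computational identity about $2\times 2$ matrices of determinant one, so the plan is to verify it directly. First I would write a general element $A=\left(\begin{smallmatrix} a & b \\ c & d\end{smallmatrix}\right)\in SL(2,\mathbb{C})$, so that $ad-bc=1$, and recall that the inverse of such a matrix has the closed form $A^{-1}=\left(\begin{smallmatrix} d & -b \\ -c & a\end{smallmatrix}\right)$, whence $(A^{-1})^{T}=\left(\begin{smallmatrix} d & -c \\ -b & a\end{smallmatrix}\right)$.

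Next I would compute the left-hand side $\varepsilon A\varepsilon^{-1}$ explicitly. Note $\varepsilon^{-1}=\left(\begin{smallmatrix} 0 & -1 \\ 1 & 0\end{smallmatrix}\right)=-\varepsilon=\varepsilon^{T}$ (and $\varepsilon^{2}=-\mathbbm{1}_{2}$), so the conjugation is an honest similarity transformation. Carrying out the two matrix multiplications $\varepsilon A$ and then $(\varepsilon A)\varepsilon^{-1}$ gives $\left(\begin{smallmatrix} d & -c \\ -b & a\end{smallmatrix}\right)$, which is exactly $(A^{-1})^{T}$ as computed above. That closes the proof.

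The only ``obstacle'' is bookkeeping: one must get the signs right in $\varepsilon^{-1}$ and in the adjugate formula for $A^{-1}$, and one must remember that the determinant condition $ad-bc=1$ is what makes the adjugate equal to the inverse (without it, the identity would read $\varepsilon A\varepsilon^{-1}=(\det A)(A^{-1})^{T}$, i.e. the cofactor/adjugate transpose). Since for $SL(2,\mathbb{C})$ we have $\det A=1$, no such factor appears, and the identity holds as stated. A one-line alternative, if a slicker presentation is wanted, is to observe that $\varepsilon$ defines the unique (up to scale) antisymmetric bilinear form on $\mathbb{C}^{2}$ and that $SL(2,\mathbb{C})$ is precisely its invariance group, i.e. $A^{T}\varepsilon A=\varepsilon$ for all $A\in SL(2,\mathbb{C})$; rearranging $A^{T}\varepsilon A=\varepsilon$ to $\varepsilon A = (A^{T})^{-1}\varepsilon = (A^{-1})^{T}\varepsilon$ and multiplying on the right by $\varepsilon^{-1}$ yields the claim immediately. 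I would likely include the brief explicit computation as the primary argument and mention this invariant-form viewpoint as a remark.
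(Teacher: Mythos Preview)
Your proof is correct and follows essentially the same direct computational route as the paper. The only cosmetic difference is that the paper computes the product $\varepsilon A\varepsilon^{-1}A^{T}$ in one step and observes that it equals $(\det A)\,\mathbbm{1}_{2}$, whereas you compute $\varepsilon A\varepsilon^{-1}$ and $(A^{-1})^{T}$ separately and compare; your remark that without $\det A=1$ one gets an extra factor of $\det A$ matches the paper's computation exactly.
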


\begin{proof}
Let $A=\left(
\begin{matrix}
 a & b \\
 c & d
\end{matrix}
\right)$, then
	\begin{displaymath}
	\varepsilon A \varepsilon^{-1}A^{T}=\left(
\begin{matrix}
 ad-bc & 0 \\
 0 & ad-bc
\end{matrix}
\right)=(\mathrm{det}A)\mathbbm{1}_{2}.
	\end{displaymath}
For $A\in SL(2,\mathbb{C})$ we find $\varepsilon A \varepsilon^{-1}A^{T}=\mathbbm{1}_{2}$.
\end{proof}

\chapter{Dirac and Majorana particles}\label{appendixdiracmajorana}

\section{The Majorana equation}

The \textit{Majorana equation} is an $SL(2,\mathbb{C})$-invariant field equation for massive neutral spin-$\frac{1}{2}$-particles. It reads:
	\begin{equation}\label{majoranaequation1}
	i\sigma^{\mu}\partial_{\mu}\bar{\chi}-m\chi=0,
	\end{equation}
where $\chi$ is a so-called 2-spinor ($\chi\in \mathbb{C}^{2}$) and $\bar{\chi}=\varepsilon\chi^{\ast}$. $\varepsilon$ and $\sigma^{\mu}$ are defined as in appendix \ref{lieappendix}.
\medskip
\\
By complex conjugation and multiplication of equation (\ref{majoranaequation1}) with $\varepsilon$ from the left we find
	\begin{displaymath}
	-i\underbrace{\varepsilon \sigma^{\mu\ast}\varepsilon}_{-\bar{\sigma}^{\mu}}\partial_{\mu}\chi-m\bar{\chi}=0,
	\end{displaymath}
where we have defined $(\bar{\sigma}^{\mu}):=(\mathbbm{1}_{2},-\vec{\sigma})$. So we have obtained an equivalent form of the Majorana equation:
	\begin{equation}\label{majoranaequation2}
	i\bar{\sigma}^{\mu}\partial_{\mu}\chi-m\bar{\chi}=0.
	\end{equation}

\begin{theorem}\label{Tmajor1}
The Majorana equations (\ref{majoranaequation1}) and (\ref{majoranaequation2}) are $SL(2,\mathbb{C})$-invariant.
\end{theorem}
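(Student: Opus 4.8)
The plan is to verify directly that the two forms of the Majorana equation transform covariantly under $SL(2,\mathbb{C})$, using the homomorphism $\phi: SL(2,\mathbb{C})\to\mathcal{L}_{+}^{\uparrow}$ of Theorem~\ref{TSU2SU36} together with the key algebraic identity of Proposition~\ref{PSU3SU2}. First I would fix the transformation laws: a left-handed Weyl spinor $\chi$ transforms as $\chi(x)\mapsto \chi'(x')=A\,\chi(x)$ with $x'=L_A x$ and $A\in SL(2,\mathbb{C})$, while the conjugate spinor $\bar\chi=\varepsilon\chi^{\ast}$ then transforms as $\bar\chi\mapsto \varepsilon A^{\ast}\varepsilon^{-1}\,\bar\chi=(A^{\dagger})^{-1}\bar\chi$, where the last equality uses Proposition~\ref{PSU3SU2} (applied to $A^{\ast}\in SL(2,\mathbb{C})$, giving $\varepsilon A^{\ast}\varepsilon^{-1}=(A^{-1})^{T\ast}=(A^{\dagger})^{-1}$). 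I would also record how the four-vector of matrices $\sigma^{\mu}=(\mathbbm{1}_2,\vec\sigma)$ and $\bar\sigma^{\mu}=(\mathbbm{1}_2,-\vec\sigma)$ behave, namely the standard intertwining relations $A^{\dagger}\sigma^{\mu}A=(L_A)^{\mu}{}_{\nu}\,\sigma^{\nu}$ and $A\,\bar\sigma^{\mu}A^{\dagger}=(L_A^{-1})^{\mu}{}_{\nu}\,\bar\sigma^{\nu}$ — these are exactly the content of the construction $X\mapsto AXA^{\dagger}$ in the proof of Theorem~\ref{TSU2SU36}, rewritten in index form.

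The core computation is then short. Starting from $i\bar\sigma^{\mu}\partial_{\mu}\chi-m\bar\chi=0$, substitute the primed quantities: $\partial_{\mu}$ transforms as $\partial'_{\mu}=(L_A^{-1})^{\nu}{}_{\mu}\partial_{\nu}$, so that $\bar\sigma^{\mu}\partial'_{\mu}\chi'=\bar\sigma^{\mu}(L_A^{-1})^{\nu}{}_{\mu}\partial_{\nu}(A\chi)$. Using $A\bar\sigma^{\nu}A^{\dagger}=(L_A^{-1})^{\nu}{}_{\mu}\bar\sigma^{\mu}$ — equivalently $(L_A^{-1})^{\nu}{}_{\mu}\bar\sigma^{\mu}=A\bar\sigma^{\nu}A^{\dagger}$ — one pulls out the $A$ factors and finds $i\bar\sigma^{\mu}\partial'_{\mu}\chi'-m\bar\chi'=(A^{\dagger})^{-1}\bigl(i\bar\sigma^{\nu}\partial_{\nu}\chi-m\bar\chi\bigr)$, which vanishes precisely when the original equation holds. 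The same manipulation, using $A^{\dagger}\sigma^{\mu}A=(L_A)^{\mu}{}_{\nu}\sigma^{\nu}$ and the transformation $\bar\chi\mapsto(A^{\dagger})^{-1}\bar\chi$, $\chi\mapsto A\chi$, establishes covariance of the first form $i\sigma^{\mu}\partial_{\mu}\bar\chi-m\chi=0$, with the residual factor being $A$ rather than $(A^{\dagger})^{-1}$. Since the two forms were already shown (in the excerpt) to be equivalent, it suffices in fact to do this for one of them, but I would present both for completeness.

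The main obstacle is not conceptual but bookkeeping: getting the index placements and the distinction between $\sigma^{\mu}$ versus $\bar\sigma^{\mu}$, and between $A$, $A^{\dagger}$, $(A^{-1})^{T}$ and their conjugates, exactly right, so that the intertwining identities are quoted in the correct direction. In particular one must be careful that $\varepsilon\sigma^{\mu\ast}\varepsilon^{-1}=-\bar\sigma^{\mu}$ (used already in deriving the second form) is consistent with $\varepsilon A^{\ast}\varepsilon^{-1}=(A^{\dagger})^{-1}$; a clean way to organize this is to prove once and for all the single lemma $A^{\dagger}\sigma^{\mu}A=(L_A)^{\mu}{}_{\nu}\sigma^{\nu}$ directly from $X\mapsto AXA^{\dagger}$ with $X=x_{\mu}\sigma^{\mu}$, and then obtain every other intertwining relation from it by complex conjugation and multiplication by $\varepsilon$. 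Once that lemma is in hand, the invariance of the Majorana equation is a two-line substitution.
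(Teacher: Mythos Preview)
Your approach is essentially the same as the paper's: use Proposition~\ref{PSU3SU2} to determine how $\bar\chi$ transforms, extract the $\sigma/\bar\sigma$ intertwining relations from the construction $X\mapsto AXA^{\dagger}$ in Theorem~\ref{TSU2SU36}, and then show the equation transforms covariantly. The paper does one thing slightly differently: it proves invariance for the more general equation $i\sigma^{\mu}\partial_{\mu}\bar\varphi-m\chi=0$ with \emph{independent} $\varphi,\chi$, which immediately sets up the Dirac equation in the next section; but the computation is identical.

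One concrete point, since you flagged bookkeeping as the hazard: in the paper's conventions (with $X=x^{\mu}\sigma_{\mu}$ and $X\mapsto AXA^{\dagger}$) one gets $A\sigma^{\mu}A^{\dagger}=(L_A^{-1})^{\mu}{}_{\rho}\sigma^{\rho}$, equivalently $A^{-1}\sigma^{\mu}(A^{-1})^{\dagger}=(L_A)^{\mu}{}_{\rho}\sigma^{\rho}$, and for $\bar\sigma$ the roles of $A$ and $A^{\dagger}$ swap, giving $(A^{\dagger})^{-1}\bar\sigma^{\mu}A^{-1}=(L_A^{-1})^{\mu}{}_{\rho}\bar\sigma^{\rho}$. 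Your stated relations have $\sigma$ and $\bar\sigma$ interchanged; if you follow them literally the substitution produces a stray $A^{\dagger}A\neq\mathbbm{1}$ that does not cancel. With the corrected relation the factor $A^{-1}$ cancels against the $A$ from $\chi'=A\chi$, leaving exactly the residual $(A^{\dagger})^{-1}$ you claim. So your final conclusion is right and your proposed remedy (derive the intertwiners once from $X\mapsto AXA^{\dagger}$ and obtain the rest by conjugation with $\varepsilon$) is precisely what fixes this.
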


\begin{proof}
For the proof we will consider the more general equation 
	\begin{displaymath}
	i\sigma^{\mu}\partial_{\mu}\bar{\varphi}-m\chi=0
	\end{displaymath}
with independent spinors $\varphi$ and $\chi$. If this equation is $SL(2,\mathbb{C})$-invariant, then the Majorana equations (\ref{majoranaequation1}) and (\ref{majoranaequation2}) are $SL(2,\mathbb{C})$-invariant too.
\bigskip
\\
$y:=L_{A}^{-1}x$.
\medskip
\\
$\overline{A\varphi(y)}=\varepsilon (A\varphi(y))^{\ast}=\varepsilon A^{\ast}\varphi^{\ast}(y)=\varepsilon A^{\ast}\varepsilon^{-1}\bar{\varphi}(y)$.
\smallskip
\\
From proposition \ref{PSU3SU2} we know that $\varepsilon A^{\ast}\varepsilon^{-1}=(A^{-1})^{\dagger}$, therefore
	\begin{displaymath}
	\begin{split}
	i\sigma^{\mu}\partial_{\mu}\overline{A\varphi(y)}-mA\chi(y) & = i\sigma^{\mu}\partial_{\mu} (A^{-1})^{\dagger}\bar{\varphi}(y)-mA\chi(y)=\\
	& = i\sigma^{\mu} (A^{-1})^{\dagger} \underbrace{\frac{\partial y^{\nu}}{\partial x^{\mu}}}_{(L_{A}^{-1})^{\nu}_{\phantom{\nu}\mu}}\frac{\partial \bar{\varphi}(y)}{\partial y^{\nu}}-mA\chi(y)=\\
	& = AA^{-1}i\sigma^{\mu}(A^{-1})^{\dagger}(L_{A}^{-1})^{\nu}_{\phantom{\nu}\mu}\frac{\partial \bar{\varphi}(y)}{\partial y^{\nu}}-mA\chi(y).
	\end{split}
	\end{displaymath}
We now use the results of appendix \ref{lieappendix}, where we had found:
	\begin{displaymath}
	Ax^{\mu}\sigma_{\mu}A^{\dagger}=X'=x'\hspace{0mm}^{\mu}\sigma_{\mu}=(L_{A})^{\mu}_{\phantom{\mu}\nu}x^{\nu}\sigma_{\mu}\Rightarrow A\sigma_{\mu}A^{\dagger}=(L_{A})^{\rho}_{\phantom{\rho}\mu}\sigma_{\rho}.
	\end{displaymath}
	\begin{displaymath}
	\Rightarrow Ax_{\mu}\sigma^{\mu}A^{\dagger}=X'=x'_{\mu}\sigma^{\mu}=(L_{A}^{-1})^{\nu}_{\phantom{\nu}\mu}x_{\nu}\sigma^{\mu}\Rightarrow A\sigma^{\mu}A^{\dagger}=(L_{A}^{-1})^{\mu}_{\phantom{\mu}\rho}\sigma^{\rho}.
	\end{displaymath}
We conclude that $A^{-1}\sigma^{\mu}(A^{-1})^{\dagger}=(L_{A})^{\mu}_{\phantom{\mu}\rho}\sigma^{\rho}$. Using this relation we find
	\begin{displaymath}
	i\sigma^{\mu}\partial_{\mu}\overline{A\varphi(y)}-mA\chi(y)= A(\underbrace{i\sigma^{\rho}\frac{\partial}{\partial y^{\rho}}\bar{\varphi}(y)-m\chi(y)}_{0})=0.
	\end{displaymath}
\end{proof}

\section{The Dirac equation}
In the previous section we showed that the Majorana equation is $SL(2,\mathbb{C})$-invariant, but in the proof of theorem \ref{Tmajor1} we already saw that $SL(2,\mathbb{C})$-invariance also holds for a more general equation including two independent spinors $\bar{\varphi}$ and $\chi$:
	\begin{displaymath}
	i\sigma^{\mu}\partial_{\mu}\bar{\varphi}-m\chi=0.
	\end{displaymath}
Combining this equation with another Majorana equation
	\begin{displaymath}
	i\bar{\sigma}^{\mu}\partial_{\mu}\chi-m\bar{\varphi}=0
	\end{displaymath}
we find
	\begin{displaymath}
	\left[i\left(\begin{matrix}
	 \textbf{0} & \sigma^{\mu} \\
	 \bar{\sigma}^{\mu} & \textbf{0}
	\end{matrix}\right)\partial_{\mu}-m\right] \left(\begin{matrix}
	 \chi \\
	 \bar{\varphi}
	\end{matrix}\right)=0.
	\end{displaymath}
	\begin{displaymath}
	\gamma^{\mu}:=\left(\begin{matrix}
	 \textbf{0} & \sigma^{\mu} \\
	 \bar{\sigma}^{\mu} & \textbf{0}
	\end{matrix}\right)
	\end{displaymath}
are the Dirac gamma matrices in the \textit{Weyl basis} or \textit{chiral representation}.
	\begin{displaymath}
	\psi:=\left(\begin{matrix}
		\chi\\ \bar{\varphi}
	\end{matrix}\right)
	\end{displaymath}
is called \textit{Dirac spinor} or 4-spinor. Using these new definitions we find the \textit{Dirac equation}
	\begin{equation}\label{Diracequation}
	(i\gamma^{\mu}\partial_{\mu}-m)\psi=0.
	\end{equation} 
\underline{Remark:} The Dirac equation is Lorentz-invariant ($SL(2,\mathbb{C})$-invariant) for any set of four matrices $\gamma^{\mu}$ that fulfil the anticommutation relation
	\begin{displaymath}
	\{\gamma^{\mu},\gamma^{\nu}\}=2\eta^{\mu\nu}\mathbbm{1}_{4}.
	\end{displaymath}
(For a justification of this statement see for example \cite{PeskinSchroeder} p.40f.) Therefore there are many alternatives to the Weyl basis. Note that one gets the Majorana equations \ref{majoranaequation1} and \ref{majoranaequation2} from the Dirac equation \ref{Diracequation} by setting $\varphi=\chi$ only in the Weyl basis.

\section{The charge conjugation matrix}
The standard procedure for deriving the Dirac equation with an external electromagnetic field is the so called \textit{minimal coupling} via the introduction of the covariant derivative
	\begin{equation}
		\partial_{\mu}\mapsto \partial_{\mu}+ieA_{\mu},
	\end{equation} 
where $A_{\mu}$ is the electromagnetic four-potential. The new Dirac equation
	\begin{equation}\label{diraceq}
		(i\gamma^{\mu}\partial_{\mu}-e\gamma^{\mu}A_{\mu}-m)\psi=0
	\end{equation}
is the fermion field equation of an $U(1)$-invariant theory - namely QED. The aim of this section is to solve the following problem:
	\begin{quote}
		Given a solution $\psi(x)$ of the Dirac equation (\ref{diraceq}), can we construct a solution $\psi^{c}(x)$ of the \textit{charge conjugate} equation
	\end{quote}
			\begin{equation}\label{conjeq}
				(i\gamma^{\mu}\partial_{\mu}+e\gamma^{\mu}A_{\mu}-m)\psi^{c}=0?
			\end{equation}
We adjoin equation (\ref{diraceq}).
	\begin{displaymath}
		\psi^{\dagger}(-i\gamma^{\mu\dagger}\partial_{\mu}-e\gamma^{\mu\dagger}A_{\mu}-m)=0.
	\end{displaymath}
Note that $\partial_{\mu}$ is now a left acting operator. We insert $\gamma^{0}\gamma^{0}=\mathbbm{1}_{4}$ after $\psi^{\dagger}$ and multiply the equation by $\gamma^{0}$ from the right.
	\begin{displaymath}
		\bar{\psi}(-i(\underbrace{\gamma^{0}\gamma^{\mu\dagger}\gamma^{0}}_{\gamma^{\mu}})\partial_{\mu}-e(\underbrace{\gamma^{0}\gamma^{\mu\dagger}\gamma^{0}}_{\gamma^{\mu}})A_{\mu}-m)=0.
	\end{displaymath}
Transposing the equation we get
	\begin{equation}\label{MEQ1}
		(-i(\gamma^{\mu})^{T}\partial_{\mu}-eA_{\mu}(\gamma^{\mu})^{T}-m)\bar{\psi}^{T}=0.
	\end{equation}
We now define the \textit{charge conjugation matrix} $C$ via
	\begin{equation}\label{cmatrix}
		C(\gamma^{\mu})^{T}C^{-1}=-\gamma^{\mu}.
	\end{equation}
$C$ is not uniquely defined, but we only need to know that there exists a matrix that fulfils (\ref{cmatrix}).

\begin{theorem}\label{TB0}
Let $\gamma^{\mu}$ and $\gamma^{\mu}\hspace{0mm}'$ be two sets of gamma matrices fulfilling
	\begin{displaymath}
	\{\gamma^{\mu},\gamma^{\nu}\}=\{\gamma^{\mu}\hspace{0mm}',\gamma^{\nu}\hspace{0mm}' \}=2\eta^{\mu\nu}\mathbbm{1}_{4},
	\end{displaymath}
then there exists a non-singular matrix $C$ s.t.
	\begin{displaymath}
	C\gamma^{\mu}\hspace{0mm}'C^{-1}=\gamma^{\mu}.
	\end{displaymath}
\end{theorem}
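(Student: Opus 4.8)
This is the classical Pauli theorem on the essential uniqueness of Dirac gamma matrices. The plan is to prove it in two stages: first establish existence of the intertwining matrix $C$ by an explicit averaging construction, and then (if needed) remark on its uniqueness up to scalar via Schur's lemma. The key structural fact I would exploit is that the $16$ matrices
	\begin{displaymath}
	\Gamma_{A}\in\left\{\mathbbm{1}_{4},\ \gamma^{\mu},\ \gamma^{\mu}\gamma^{\nu}\ (\mu<\nu),\ \gamma^{\mu}\gamma^{\nu}\gamma^{\rho}\ (\mu<\nu<\rho),\ \gamma^{0}\gamma^{1}\gamma^{2}\gamma^{3}\right\}
	\end{displaymath}
built from one set of gamma matrices form a basis of $\mathrm{Mat}(4,\mathbb{C})$, with the analogous statement for the primed set $\{\Gamma_{A}'\}$. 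This follows from the Clifford relation $\{\gamma^{\mu},\gamma^{\nu}\}=2\eta^{\mu\nu}\mathbbm{1}_{4}$: the products $\Gamma_{A}$ are closed under multiplication up to signs, each satisfies $\Gamma_{A}^{2}=\pm\mathbbm{1}_{4}$, and for $A\neq B$ one has $\Gamma_{A}\Gamma_{B}=\pm\Gamma_{C}$ for some $C\neq\mathbbm{1}_{4}$, whence $\mathrm{Tr}(\Gamma_{A}\Gamma_{B}^{-1})=0$ and the $16$ matrices are linearly independent, hence a basis of the $16$-dimensional space $\mathrm{Mat}(4,\mathbb{C})$.

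First I would fix an arbitrary invertible $F\in\mathrm{Mat}(4,\mathbb{C})$ (generically just $F=\mathbbm{1}_{4}$ works, but one may need to try a few) and set
	\begin{displaymath}
	C:=\sum_{A}\Gamma_{A}F(\Gamma_{A}')^{-1},
	\end{displaymath}
the sum running over the $16$ basis elements above. The crucial identity to check is that for every index $\mu$,
	\begin{displaymath}
	\gamma^{\mu}C\gamma^{\mu}\hspace{0mm}'=C\qquad(\text{no sum}),
	\end{displaymath}
equivalently $\gamma^{\mu}C=C(\gamma^{\mu}\hspace{0mm}')^{-1}$; since $(\gamma^{\mu}\hspace{0mm}')^{-1}=\eta^{\mu\mu}\gamma^{\mu}\hspace{0mm}'$ (no sum), this rearranges to $C\gamma^{\mu}\hspace{0mm}'C^{-1}=\gamma^{\mu}$ once $C$ is known to be invertible. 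This identity is proved by the standard reshuffling argument: conjugating the defining sum, $\gamma^{\mu}C(\gamma^{\mu}\hspace{0mm}')^{-1}=\sum_{A}(\gamma^{\mu}\Gamma_{A})F(\gamma^{\mu}\hspace{0mm}'\Gamma_{A}')^{-1}$, and as $\Gamma_{A}$ runs over the $16$ basis products so does $\pm\gamma^{\mu}\Gamma_{A}$, with the sign on the $\gamma$-side exactly matching the sign on the $\gamma'$-side because both sets obey the same Clifford relations; the signs therefore cancel in each term and the sum is reproduced. Next I would establish that $C$ is invertible: apply the same averaging with the roles of the two sets swapped to get $C':=\sum_{A}\Gamma_{A}'G(\Gamma_{A})^{-1}$ satisfying $\gamma^{\mu}\hspace{0mm}'C'\gamma^{\mu}=C'$; then $C'C$ commutes with all $\gamma^{\mu}$, hence by Schur's lemma (Lemma~\ref{LA50}, using that the Clifford algebra acts irreducibly on $\mathbb{C}^{4}$, which again follows from the basis property) $C'C=\lambda\mathbbm{1}_{4}$ for some $\lambda\in\mathbb{C}$. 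One shows $\lambda\neq 0$ by noting that if $C$ were singular the whole family $\{\gamma^{\mu}\}$ would fail to generate $\mathrm{Mat}(4,\mathbb{C})$, contradicting the basis property; more concretely, choosing $F$ and $G$ appropriately (or computing $\mathrm{Tr}$) forces $C'C\neq 0$. Hence $C$ is invertible and $C\gamma^{\mu}\hspace{0mm}'C^{-1}=\gamma^{\mu}$ as claimed.

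The main obstacle I anticipate is the bookkeeping of signs in the reshuffling step — verifying that when $\Gamma_{A}$ is multiplied by $\gamma^{\mu}$ the sign picked up (which depends on how many of the factors of $\Gamma_{A}$ anticommute with $\gamma^{\mu}$) is identical to the sign picked up on the primed side. This is where one genuinely uses that \emph{both} sets satisfy the \emph{same} Clifford relations with the same metric $\eta^{\mu\nu}$; it is purely combinatorial but must be done carefully. A secondary subtlety is justifying invertibility of $C$ without circularity: I would handle this cleanly via the Schur-lemma argument above together with irreducibility of the four-dimensional Clifford module, rather than by a brute-force determinant computation. Finally, although the statement as phrased asks only for existence, it is worth remarking (again by Schur's lemma) that $C$ is unique up to a nonzero scalar, which is the form in which the charge-conjugation matrix $C$ of equation~(\ref{cmatrix}) is pinned down in the applications.
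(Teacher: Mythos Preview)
The paper does not actually prove this theorem: it simply states ``The proof of this theorem can be found in \cite{messiah2}.'' So there is no in-paper argument to compare against; your proposal supplies precisely the classical Pauli--fundamental-theorem proof that Messiah gives, and the overall strategy (the $16$ Clifford products as a basis of $\mathrm{Mat}(4,\mathbb{C})$, the averaged intertwiner $C=\sum_{A}\Gamma_{A}F(\Gamma_{A}')^{-1}$, the reshuffling argument, and Schur's lemma for invertibility) is correct and standard.

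Two small points to tighten. First, the displayed identity you aim for is misstated: from your own computation $\gamma^{\mu}C(\gamma^{\mu}\hspace{0mm}')^{-1}=\sum_{A}(\gamma^{\mu}\Gamma_{A})F(\gamma^{\mu}\hspace{0mm}'\Gamma_{A}')^{-1}=C$, the correct relation is $\gamma^{\mu}C=C\gamma^{\mu}\hspace{0mm}'$, i.e.\ directly $C\gamma^{\mu}\hspace{0mm}'C^{-1}=\gamma^{\mu}$; writing $\gamma^{\mu}C\gamma^{\mu}\hspace{0mm}'=C$ instead introduces an unwanted factor $\eta^{\mu\mu}$ for the spatial indices. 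Second, the invertibility step is cleaner than you present it: from $\gamma^{\mu}C=C\gamma^{\mu}\hspace{0mm}'$ one sees that $\ker C$ is invariant under all $\gamma^{\mu}\hspace{0mm}'$, hence by irreducibility either $C=0$ or $C$ is invertible; it then remains only to observe that $C$ cannot vanish for \emph{every} choice of $F$, since the $\Gamma_{A}$ are linearly independent. This avoids the slightly circular ``choosing $F$ and $G$ appropriately'' clause.
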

\hspace{0mm}\\
The proof of this theorem can be found in \cite{messiah2}. Using this theorem, and the fact that $-(\gamma^{\mu})^{T}=:\gamma^{\mu}\hspace{0mm}'$ fulfils the needed anticommutation relations, we see that the charge conjugation matrix exists.

\begin{prop}\label{PB1}
One possible choice for the charge conjugation matrix is
		\begin{equation}\label{CCmatrix}
			C=i\alpha\gamma^{2}\gamma^{0},
		\end{equation}
where $\alpha\in U(1)$. Using the Weyl basis and setting $\alpha=1$ we find
		\begin{displaymath}
		C=\left(\begin{matrix}
				i\sigma^{2} & \textbf{0} \\
				\textbf{0} & -i\sigma^{2}
		                \end{matrix}\right).
		\end{displaymath}
In this case $C$ is real. Furthermore with this choice of $C$
	\begin{enumerate}
		\item $C^{T}=-C,$
		\item $C^{\dagger}=C^{-1},$
		\item $C=C^{\ast}=-C^{T}=-C^{-1}=-C^{\dagger},$
		\item $C^{-1}\gamma^{5}C=(\gamma^{5})^{T}$.
	\end{enumerate}
\end{prop}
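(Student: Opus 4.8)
The plan is to verify the stated claim about $C=i\alpha\gamma^{2}\gamma^{0}$ directly, working (as the proposition suggests) in the Weyl basis with $\alpha=1$, so that $C=\mathrm{diag}(i\sigma^{2},-i\sigma^{2})$. First I would check that this $C$ really is a charge conjugation matrix, i.e. that it satisfies the defining relation $C(\gamma^{\mu})^{T}C^{-1}=-\gamma^{\mu}$. In the Weyl basis $\gamma^{0}=\begin{pmatrix} \mathbf{0} & \mathbbm{1}_{2}\\ \mathbbm{1}_{2} & \mathbf{0}\end{pmatrix}$ and $\gamma^{j}=\begin{pmatrix} \mathbf{0} & \sigma^{j}\\ -\sigma^{j} & \mathbf{0}\end{pmatrix}$; using $i\sigma^{2}\sigma^{j T}(i\sigma^{2})^{-1}=-\sigma^{j}$ (which is just Proposition \ref{PSU3SU2} read off at the Lie-algebra level, $\varepsilon = i\sigma^2$) and $i\sigma^{2}(i\sigma^{2})^{-1}=\mathbbm{1}_{2}$ for the $\gamma^{0}$-block, the block computation gives $C(\gamma^{\mu})^{T}C^{-1}=-\gamma^{\mu}$ for every $\mu$. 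This is a short $2\times2$-block calculation and poses no difficulty.

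Next I would establish properties 1--3, which are the easy algebraic core. From the explicit form $C=\mathrm{diag}(i\sigma^{2},-i\sigma^{2})$ and the facts $(\sigma^{2})^{T}=-\sigma^{2}$, $(\sigma^{2})^{\ast}=-\sigma^{2}$, $(\sigma^{2})^{2}=\mathbbm{1}_{2}$, $(\sigma^{2})^{\dagger}=\sigma^{2}$, one reads off: $C^{T}=\mathrm{diag}((i\sigma^{2})^{T},(-i\sigma^{2})^{T})=\mathrm{diag}(-i\sigma^{2},i\sigma^{2})=-C$, which is property 1; $C^{\ast}=\mathrm{diag}(-i(\sigma^{2})^{\ast},i(\sigma^{2})^{\ast})=\mathrm{diag}(i\sigma^{2},-i\sigma^{2})=C$, so $C$ is real; $C^{\dagger}=(C^{\ast})^{T}=C^{T}=-C$; and $C^{2}=\mathrm{diag}((i\sigma^{2})^{2},(-i\sigma^{2})^{2})=\mathrm{diag}(-\mathbbm{1}_{2},-\mathbbm{1}_{2})=-\mathbbm{1}_{4}$, hence $C^{-1}=-C$. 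Combining these, $C^{\dagger}=-C=C^{-1}$ gives property 2, and $C=C^{\ast}=-C^{T}=-C^{-1}=-C^{\dagger}$ collects property 3. I would also remark that properties 1--3 are in fact basis-independent consequences of the normalization $\alpha\in U(1)$ together with $(\gamma^{0})^{\dagger}=\gamma^{0}$, $(\gamma^{2})^{\dagger}=-\gamma^{2}$, $(\gamma^{2})^{T}=\gamma^{2}$, $(\gamma^{\mu})^{\ast}$ relations — but the quickest route is the explicit Weyl-basis matrix.

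For property 4, $C^{-1}\gamma^{5}C=(\gamma^{5})^{T}$, I would use $\gamma^{5}=i\gamma^{0}\gamma^{1}\gamma^{2}\gamma^{3}$. One way: from the defining relation $C^{-1}\gamma^{\mu}C=-(\gamma^{\mu})^{T}$ one gets $C^{-1}\gamma^{0}\gamma^{1}\gamma^{2}\gamma^{3}C=(-1)^{4}(\gamma^{0})^{T}(\gamma^{1})^{T}(\gamma^{2})^{T}(\gamma^{3})^{T}=\big(\gamma^{3}\gamma^{2}\gamma^{1}\gamma^{0}\big)^{T}$, and then one reorders $\gamma^{3}\gamma^{2}\gamma^{1}\gamma^{0}$ back to $\gamma^{0}\gamma^{1}\gamma^{2}\gamma^{3}$ using the anticommutator $\{\gamma^{\mu},\gamma^{\nu}\}=2\eta^{\mu\nu}\mathbbm{1}_{4}$ (all four indices distinct, so they anticommute); the number of transpositions needed to reverse a list of four distinct objects is $\binom{4}{2}=6$, giving a sign $(-1)^{6}=+1$, so $\gamma^{3}\gamma^{2}\gamma^{1}\gamma^{0}=\gamma^{0}\gamma^{1}\gamma^{2}\gamma^{3}$ and hence $C^{-1}\gamma^{5}C=(\gamma^{5})^{T}$ after multiplying through by $i$. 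Alternatively, and as a sanity check, in the Weyl basis $\gamma^{5}=\mathrm{diag}(-\mathbbm{1}_{2},\mathbbm{1}_{2})$ is diagonal (hence symmetric) and commutes with the block-diagonal $C$, so both sides equal $\gamma^{5}$ trivially. The only place demanding care — the ``main obstacle'', though a mild one — is bookkeeping the sign in property 4: one must correctly count the parity of the permutation reversing $(0,1,2,3)$ and keep track of the factor $(-1)^{4}$ from four applications of the defining relation; doing the computation twice (abstractly via anticommutators and concretely in the Weyl basis) is the safeguard I would use.
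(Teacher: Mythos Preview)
Your proof is correct and follows essentially the same line as the paper's. The only notable difference is in how the defining relation $C(\gamma^{\mu})^{T}C^{-1}=-\gamma^{\mu}$ is checked: the paper verifies it abstractly for each $\mu$ using the transpose properties $(\gamma^{0})^{T}=\gamma^{0}$, $(\gamma^{1})^{T}=-\gamma^{1}$, $(\gamma^{2})^{T}=\gamma^{2}$, $(\gamma^{3})^{T}=-\gamma^{3}$ together with anticommutation, whereas you compute with the $2\times2$ blocks and the Pauli identity $\varepsilon\sigma^{jT}\varepsilon^{-1}=-\sigma^{j}$. For properties 1--3 the paper simply declares them ``obvious'' from the explicit Weyl-basis form (and notes that 1 and 2 hold more generally), which is exactly what you spell out; for property 4 both you and the paper use the same insertion-of-$CC^{-1}$ argument with the $(-1)^{6}$ sign count, and your additional Weyl-basis sanity check (that $\gamma^{5}$ is diagonal and commutes with the block-diagonal $C$) is a nice redundancy the paper does not include.
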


\begin{proof}
This is true, because
\bigskip
\\
$\gamma^{2}\gamma^{0}\underbrace{(\gamma^{0})^{T}}_{\gamma^{0}}(\gamma^{2}\gamma^{0})^{-1}=-\gamma^{0}\gamma^{2}\gamma^{0}(\gamma^{2}\gamma^{0})^{-1}=-\gamma^{0}$,
\medskip
\\
$\gamma^{2}\gamma^{0}\underbrace{(\gamma^{1})^{T}}_{-\gamma^{1}}(\gamma^{2}\gamma^{0})^{-1}=-\gamma^{2}\gamma^{0}\gamma^{1}(\gamma^{2}\gamma^{0})^{-1}=-\gamma^{1}\gamma^{2}\gamma^{0}(\gamma^{2}\gamma^{0})^{-1}=-\gamma^{1}$,
\medskip
\\
$\gamma^{2}\gamma^{0}\underbrace{(\gamma^{2})^{T}}_{\gamma^{2}}(\gamma^{2}\gamma^{0})^{-1}=-\gamma^{2}\gamma^{2}\gamma^{0}(\gamma^{2}\gamma^{0})^{-1}=-\gamma^{2}$,
\medskip
\\
$\gamma^{2}\gamma^{0}\underbrace{(\gamma^{3})^{T}}_{-\gamma^{3}}(\gamma^{2}\gamma^{0})^{-1}=-\gamma^{2}\gamma^{0}\gamma^{3}(\gamma^{2}\gamma^{0})^{-1}=-\gamma^{3}\gamma^{2}\gamma^{0}(\gamma^{2}\gamma^{0})^{-1}=-\gamma^{3}$.
\medskip
\\
Replacing $\gamma^{2}\gamma^{0}$ by $i\gamma^{2}\gamma^{0}$, we find a real $C$.
\\
Now for the properties of $C$: With our choice of $C$, the properties 1.,2.,3. are obvious. In fact 1. is true in general, and 2. is true for every representation of the $\gamma$-matrices, where $\gamma^{0}$ is Hermitian and $\gamma^{i}$ are anti-Hermitian  \cite{grimus1}. 3. follows from 1. and 2.
\medskip
\\
4. is true in general, because
	\begin{displaymath}
	\begin{split}
		C^{-1}\gamma^{5}C & =
		-iC^{-1}\gamma^{0}\gamma^{1}\gamma^{2}\gamma^{3}C=-iC^{-1}\gamma^{0}CC^{-1}\gamma^{1}CC^{-1}\gamma^{2}CC^{-1}\gamma^{3}C=
		\\&=
		-i(\gamma^{0})^{T}(\gamma^{1})^{T}(\gamma^{2})^{T}(\gamma^{3})^{T}=-i(\gamma^{3}\gamma^{2}\gamma^{1}\gamma^{0})^{T}=
		\\&=
		-i(-1)^{3+2+1}(\gamma^{0}\gamma^{1}\gamma^{2}\gamma^{3})^{T}=(-i\gamma^{0}\gamma^{1}\gamma^{2}\gamma^{3})^{T}=(\gamma^{5})^{T}.
	\end{split}
	\end{displaymath}
\end{proof}
\hspace{0mm}\\
We multiply (\ref{MEQ1}) by $C$ from the left and insert $C^{-1}C$ between the bracket and $\bar{\psi}^{T}$.
	\begin{displaymath}
		(-i\underbrace{C(\gamma^{\mu})^{T}C^{-1}}_{-\gamma^{\mu}}\partial_{\mu}-eA_{\mu}\underbrace{C(\gamma^{\mu})^{T}C^{-1}}_{-\gamma^{\mu}}-m)C\bar{\psi}^{T}=0.
	\end{displaymath}
Thus
	\begin{displaymath}
		(i\gamma^{\mu}\partial_{\mu}+e\gamma^{\mu}A_{\mu}-m)C\bar{\psi}^{T}=0,
	\end{displaymath}
so we have found the solution of the charge conjugate Dirac equation (\ref{conjeq})
	\begin{displaymath}
		\psi^{c}=C\bar{\psi}^{T}=C(\gamma^{0})^{T}\psi^{\ast}.
	\end{displaymath}

\section{Chiral fermion fields}\label{Chiralfields}

\begin{define}\label{DB2}
We define the following operators
	\begin{displaymath}
		P_{L}:=\frac{\mathbbm{1}_{4}-\gamma^{5}}{2} \quad\quad P_{R}:=\frac{\mathbbm{1}_{4}+\gamma^{5}}{2}.
	\end{displaymath}
A 4-spinor $\psi$ is called \textit{left-handed}, if
	\begin{displaymath}
		P_{L}\psi=\psi.
	\end{displaymath}
A 4-spinor $\psi$ is called \textit{right-handed}, if
	\begin{displaymath}
		P_{R}\psi=\psi.
	\end{displaymath}
Thus left- and right-handed spinors are eigenvectors of the operator $\gamma^{5}$ (which is also called \textit{chirality}) to the eigenvalues $+1$ and $-1$, therefore they are called \textit{chiral spinors}.
\medskip
\\
Let $\psi$ be a 4-spinor, then we define
	\begin{displaymath}
		\psi_{L}:=P_{L}\psi \quad\quad \psi_{R}:=P_{R}\psi.
	\end{displaymath}
\end{define}

\begin{prop}\label{PB3} $P_{L}$ and $P_{R}$ are have the following properties:
	\begin{enumerate}
		\item $P_{L}P_{R}=P_{R}P_{L}=0$,
		\item $P_{L}P_{L}=P_{L}$, $P_{R}P_{R}=P_{R}$,
		\item $P_{L}+P_{R}=\mathbbm{1}_{4}$, $P_{L}-P_{R}=-\gamma^{5}$.
	\end{enumerate}
\end{prop}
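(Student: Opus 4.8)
The plan is to reduce all three statements to the single algebraic identity $(\gamma^5)^2 = \mathbbm{1}_4$, after which everything follows by expanding the definitions $P_L = \frac{1}{2}(\mathbbm{1}_4 - \gamma^5)$ and $P_R = \frac{1}{2}(\mathbbm{1}_4 + \gamma^5)$. So the first step is to establish $(\gamma^5)^2 = \mathbbm{1}_4$. Writing $\gamma^5 = i\gamma^0\gamma^1\gamma^2\gamma^3$ and using the Clifford relation $\{\gamma^\mu,\gamma^\nu\} = 2\eta^{\mu\nu}\mathbbm{1}_4$, one commutes the factors of $\gamma^5(\gamma^5)$ past one another, collecting signs from the anticommutators and using $(\gamma^0)^2 = \mathbbm{1}_4$, $(\gamma^i)^2 = -\mathbbm{1}_4$; the product of signs works out to $+1$. (Equivalently, one may just read this off in the Weyl basis, where $\gamma^5$ is diagonal with entries $\pm 1$.) This is the only place where a genuine computation is required, and the sign bookkeeping here is the only thing resembling an obstacle — the rest is one-line algebra.

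Next I would dispatch statement~3, which does not even use $(\gamma^5)^2$: adding the two definitions gives $P_L + P_R = \frac{1}{2}(\mathbbm{1}_4 - \gamma^5) + \frac{1}{2}(\mathbbm{1}_4 + \gamma^5) = \mathbbm{1}_4$, and subtracting them gives $P_L - P_R = \frac{1}{2}(\mathbbm{1}_4 - \gamma^5) - \frac{1}{2}(\mathbbm{1}_4 + \gamma^5) = -\gamma^5$. For statement~2, I would compute $P_L^2 = \frac{1}{4}(\mathbbm{1}_4 - \gamma^5)^2 = \frac{1}{4}\bigl(\mathbbm{1}_4 - 2\gamma^5 + (\gamma^5)^2\bigr) = \frac{1}{4}(2\mathbbm{1}_4 - 2\gamma^5) = P_L$, invoking $(\gamma^5)^2 = \mathbbm{1}_4$, and symmetrically $P_R^2 = P_R$ from $(\mathbbm{1}_4 + \gamma^5)^2 = 2(\mathbbm{1}_4 + \gamma^5)$.

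Finally, for statement~1, $P_L P_R = \frac{1}{4}(\mathbbm{1}_4 - \gamma^5)(\mathbbm{1}_4 + \gamma^5) = \frac{1}{4}\bigl(\mathbbm{1}_4 - (\gamma^5)^2\bigr) = 0$; since $\mathbbm{1}_4$ and $\gamma^5$ commute, interchanging the two factors leaves the computation unchanged, so $P_R P_L = 0$ as well. In summary, the proof is essentially immediate once $(\gamma^5)^2 = \mathbbm{1}_4$ is in hand, and I expect no real difficulty beyond verifying that identity.
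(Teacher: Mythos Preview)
Your proposal is correct and matches the paper's approach exactly: the paper's proof is the single sentence ``This is easy to see in a straight forward calculation using $(\gamma^{5})^{2}=\mathbbm{1}_{4}$,'' and you have simply written out that calculation in full.
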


\begin{proof}
	This is easy to see in a straight forward calculation using $(\gamma^{5})^{2}=\mathbbm{1}_{4}$.
\end{proof}

\begin{prop}\label{PB4}
	Let $\psi$ be a chiral spinor, then $\psi^{c}$ has the opposite chirality.
\end{prop}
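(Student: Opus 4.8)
The plan is to unwind the definitions of $\psi^c$ and of chirality, and simply compute how $\gamma^5$ acts on $\psi^c$ when $\psi$ is chiral. Recall $\psi^c = C\bar\psi^T = C(\gamma^0)^T\psi^\ast$. If $\psi$ is left-handed, i.e.\ $P_L\psi=\psi$, equivalently $\gamma^5\psi=-\psi$, then I want to show $\gamma^5\psi^c=+\psi^c$, i.e.\ $P_R\psi^c=\psi^c$; and symmetrically for a right-handed $\psi$.

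First I would apply $\gamma^5$ to $\psi^c$ and push it through the matrix $C(\gamma^0)^T$ using the key algebraic identity already established in proposition~\ref{PB1}, namely $C^{-1}\gamma^5 C=(\gamma^5)^T$, equivalently $\gamma^5 C = C(\gamma^5)^T$. So $\gamma^5\psi^c = \gamma^5 C(\gamma^0)^T\psi^\ast = C(\gamma^5)^T(\gamma^0)^T\psi^\ast = C(\gamma^0\gamma^5)^T\psi^\ast$. Since $\gamma^5$ anticommutes with $\gamma^0$ (a consequence of $\{\gamma^\mu,\gamma^\nu\}=2\eta^{\mu\nu}\mathbbm{1}_4$, which gives $\gamma^5\gamma^0=-\gamma^0\gamma^5$), one has $(\gamma^0\gamma^5)^T = -(\gamma^5\gamma^0)^T$; reorganizing, $\gamma^5\psi^c = C(\gamma^0)^T(\gamma^5)^T\psi^\ast \cdot(\text{sign})$. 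The cleanest route is: $\gamma^5 C(\gamma^0)^T = C(\gamma^5)^T(\gamma^0)^T = C\big((\gamma^0)(\gamma^5)\big)^T$ wait — I must be careful with transpose order, $(\gamma^5)^T(\gamma^0)^T = (\gamma^0\gamma^5)^T$. Then use $\gamma^0\gamma^5 = -\gamma^5\gamma^0$ so this equals $-(\gamma^5\gamma^0)^T = -(\gamma^0)^T(\gamma^5)^T$. Hence $\gamma^5 C(\gamma^0)^T = -C(\gamma^0)^T(\gamma^5)^T$. Therefore $\gamma^5\psi^c = -C(\gamma^0)^T(\gamma^5)^T\psi^\ast = -C(\gamma^0)^T(\gamma^5\psi)^\ast{}'$ — here I need $(\gamma^5)^T\psi^\ast = (\gamma^{5\ast}\psi)^{?}$; more precisely $(\gamma^5)^T\psi^\ast = \overline{\gamma^5\psi}$ only if $\gamma^5$ is Hermitian, which it is in the standard conventions ($\gamma^{5\dagger}=\gamma^5$, so $(\gamma^5)^T = (\gamma^5)^\ast$ and $(\gamma^5)^T\psi^\ast = (\gamma^5\psi)^\ast$). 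So if $\gamma^5\psi=-\psi$ then $(\gamma^5)^T\psi^\ast = -\psi^\ast$, giving $\gamma^5\psi^c = -C(\gamma^0)^T(-\psi^\ast) = C(\gamma^0)^T\psi^\ast = \psi^c$. Thus $\psi^c$ is right-handed, as claimed; the right-handed case is identical with signs flipped.

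I would write this up as a short chain of equalities: $\gamma^5\psi^c = \gamma^5 C(\gamma^0)^T\psi^\ast = C(\gamma^5)^T(\gamma^0)^T\psi^\ast = C(\gamma^0\gamma^5)^T\psi^\ast = -C(\gamma^0\gamma^5\psi)^\ast \cdot(\ldots)$, then substitute $\gamma^5\psi=\mp\psi$ and collect. I would state explicitly which facts about $\gamma^5$ I am using — Hermiticity $\gamma^{5\dagger}=\gamma^5$ and anticommutation $\{\gamma^5,\gamma^\mu\}=0$ — either citing the paper's conventions or noting they follow from the defining Clifford relation together with $\gamma^5:=-i\gamma^0\gamma^1\gamma^2\gamma^3$ used in proposition~\ref{PB1}.

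The main obstacle is purely bookkeeping: getting the transposes and anticommutation signs right, since $(\gamma^5)^T$ and the transpose of a product reverse the factor order, and it is easy to drop a sign. There is no conceptual difficulty — the identity $C^{-1}\gamma^5 C=(\gamma^5)^T$ from proposition~\ref{PB1} is precisely the input that makes this work, so the proof is two or three lines once the conventions on $\gamma^5$ are fixed. I would double-check the computation in the explicit Weyl basis with $C=\mathrm{diag}(i\sigma^2,-i\sigma^2)$ and $\gamma^5=\mathrm{diag}(-\mathbbm{1}_2,\mathbbm{1}_2)$ as a sanity check before committing to the general argument.
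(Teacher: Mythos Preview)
Your proof is correct and uses exactly the same ingredients as the paper's: the identity $C^{-1}\gamma^5 C=(\gamma^5)^T$ from Proposition~\ref{PB1}, the anticommutation $\{\gamma^5,\gamma^0\}=0$, and the Hermiticity $\gamma^{5\dagger}=\gamma^5$. The only cosmetic difference is that the paper applies the projector $P_L$ to $(\psi_R)^c$ and shows it acts as the identity, whereas you compute $\gamma^5\psi^c$ directly and read off the eigenvalue; your version is marginally cleaner since you never need to invoke $(\gamma^0)^T=\gamma^0$, which the paper's calculation uses at one step.
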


\begin{proof}
We only consider $\psi_{R}$, the calculation for $\psi_{L}$ is completely analogous.
	\begin{displaymath}
	\begin{split}
		P_{L}(\psi_{R})^{c}
		&=
		\frac{1}{2}(\mathbbm{1}_{4}-\gamma^{5})C\overline{\frac{1}{2}(\mathbbm{1}_{4}+\gamma^{5})\psi}^{T}=\frac{1}{2}(\mathbbm{1}_{4}-\gamma^{5})C(\gamma^{0})^{T}\frac{1}{2}(\mathbbm{1}_{4}+\gamma^{5\ast})\psi^{\ast}=
		\\&=
		\frac{1}{2}C\underbrace{C^{-1}(\mathbbm{1}_{4}-\gamma^{5})C}_{\mathbbm{1}_{4}-(\gamma^{5})^{T}=\mathbbm{1}_{4}-\gamma^{5}}\gamma^{0}\frac{1}{2}(\mathbbm{1}_{4}+\gamma^{5})\psi^{\ast}=
		\\&=
		C\frac{1}{2}(\mathbbm{1}_{4}-\gamma^{5})\gamma^{0}\frac{1}{2}(\mathbbm{1}_{4}+\gamma^{5})\psi^{\ast}=C\gamma^{0}\frac{1}{2}(\mathbbm{1}_{4}+\gamma^{5})\frac{1}{2}(\mathbbm{1}_{4}+\gamma^{5})\psi^{\ast}=
		\\&=
		C(\gamma^{0})^{T}P_{R}^{2}\psi^{\ast}=C\bar{\psi}_{R}^{T}=(\psi_{R})^{c}.
	\end{split}
	\end{displaymath}
$\Rightarrow$ $(\psi_{R})^{c}$ is left-handed. Note that we have used $(\gamma^0)^T=\gamma^0$.
\end{proof}

\section{Dirac and Majorana particles}
Massive spin-$\frac{1}{2}$ fermions are described by spinor fields that obey the Dirac-equation (\ref{Diracequation}). It turns out that in the case of neutral massive spin-$\frac{1}{2}$ particles (like neutrinos) there is an alternative to the usually used \textit{Dirac particle}, namely the concept of the \textit{Majorana particle}.

\subsection{Dirac particles}
A Dirac-particle is described by a 4-spinor field $\psi$ with independent chiral components $\psi_{L}$ and $\psi_{R}$:
	\begin{displaymath}
		\psi=\psi_{L}+\psi_{R}.
	\end{displaymath}
The Lagrangian of a free Dirac particle reads
	\begin{displaymath}
		\mathcal{L}_{\mathrm{Dirac}}^{(\mathrm{free})}=\bar{\psi}(i\gamma^{\mu}\partial_{\mu}-m)\psi,
	\end{displaymath}
thus the \textit{mass term}\footnote{A mass term is defined as a Lorentz-invariant, bilinear in the spinor fields.} is
	\begin{displaymath}
		\mathcal{L}_{\mathrm{Dirac}}^{(\mathrm{mass})}=-m\bar{\psi}\psi=-m(\bar{\psi}_{R}\psi_{L}+\bar{\psi}_{L}\psi_{R}).
	\end{displaymath}
Variation of the Lagrangian with respect to $\bar{\psi}$ gives the equation of motion, which is the Dirac equation.

\subsection{Majorana particles}\label{majoranapsubsection}
Majorana spinors are built from a single chiral spinor $\psi_{L}$ by
	\begin{displaymath}
	\psi=\psi_{L}+\psi_{L}^{c},
	\end{displaymath}
where by $\psi_{L}^{c}$ we always mean $(\psi_{L})^c$. Thus Majorana spinors fulfil the \textit{Majorana condition}
	\begin{displaymath}
	\psi^{c}=\psi.
	\end{displaymath}
A possible mass term for Majorana particles is given by
	\begin{displaymath}
	\mathcal{L}_{\mathrm{Majorana}}^{(\mathrm{mass})}=-\frac{1}{2}m(\overline{\psi^{c}_{L}}\psi_{L}+\bar{\psi}_{L}\psi_{L}^{c})=\frac{1}{2}m(\psi_{L}^{T}C^{-1}\psi_{L}-\bar{\psi}_{L}C\bar{\psi}_{L}^{T})=-\frac{1}{2}m\bar{\psi}\psi.
	\end{displaymath}
(see e.g. \cite{grimus1}). Therefore
	\begin{displaymath}
		\mathcal{L}_{\mathrm{Majorana}}^{(\mathrm{free})}=\bar{\psi}_{L}(i\gamma^{\mu}\partial_{\mu})\psi_{L}-\frac{1}{2}m(\overline{\psi^{c}_{L}}\psi_{L}+\bar{\psi}_{L}\psi_{L}^{c}).
	\end{displaymath}

\begin{prop}\label{PB5}
	The Euler-Lagrange equations given by $\mathcal{L}_{\mathrm{Majorana}}^{(\mathrm{free})}$ lead to the Dirac equation for the Majorana spinor $\psi$.
\end{prop}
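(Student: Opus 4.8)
The plan is to show that varying $\mathcal{L}_{\mathrm{Majorana}}^{(\mathrm{free})}$ with respect to the independent field $\bar{\psi}_{L}$ (equivalently $\psi_{L}^{\dagger}$) produces an equation of motion which, once one re-expresses everything in terms of the full Majorana spinor $\psi=\psi_{L}+\psi_{L}^{c}$, is exactly $(i\gamma^{\mu}\partial_{\mu}-m)\psi=0$. The only genuine content is bookkeeping with the chiral projectors $P_{L},P_{R}$ (proposition \ref{PB3}), the charge conjugation matrix $C$ and its properties (proposition \ref{PB1}), and the fact (proposition \ref{PB4}) that $\psi_{L}^{c}$ is right-handed.

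First I would write the free Majorana Lagrangian with $\psi_{L}^{c}=C\bar{\psi}_{L}^{T}$ inserted, so that every term is expressed through $\psi_{L}$ and $\bar{\psi}_{L}$ alone, and then apply the Euler--Lagrange equation $\partial_{\mu}\bigl(\partial \mathcal{L}/\partial(\partial_{\mu}\bar{\psi}_{L})\bigr)-\partial \mathcal{L}/\partial\bar{\psi}_{L}=0$. The kinetic term contributes $i\gamma^{\mu}\partial_{\mu}\psi_{L}$. The mass term $-\tfrac{1}{2}m(\overline{\psi_{L}^{c}}\psi_{L}+\bar{\psi}_{L}\psi_{L}^{c})$ must be handled carefully: the two summands are related by transposition together with the identities $C^{T}=-C$, $C^{\dagger}=C^{-1}$, so they contribute equally to the variation (this is the standard argument that kills the factor $\tfrac{1}{2}$), and one obtains the term $m\psi_{L}^{c}$, i.e. $m\, C\bar{\psi}_{L}^{T}$. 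Hence the equation of motion reads $i\gamma^{\mu}\partial_{\mu}\psi_{L}-m\psi_{L}^{c}=0$.

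Next I would derive the complementary relation. Applying $P_{L}+P_{R}=\mathbbm{1}_{4}$ and noting that $\psi_{L}$ is left-handed while $\psi_{L}^{c}$ is right-handed (proposition \ref{PB4}), the relation $i\gamma^{\mu}\partial_{\mu}\psi_{L}-m\psi_{L}^{c}=0$ automatically splits so that its charge conjugate gives $i\gamma^{\mu}\partial_{\mu}\psi_{L}^{c}-m\psi_{L}=0$; here one uses that $\gamma^{\mu}$ anticommutes with $\gamma^{5}$, so $\gamma^{\mu}$ maps left-handed to right-handed spinors, and that charge conjugation commutes appropriately with $i\gamma^{\mu}\partial_{\mu}$ (the computation is the same one carried out in the proof of theorem \ref{Tmajor1} and in the derivation of the charge-conjugate Dirac equation). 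Adding the two equations and using $\psi=\psi_{L}+\psi_{L}^{c}$ yields $(i\gamma^{\mu}\partial_{\mu}-m)\psi=0$, which is the Dirac equation (\ref{Diracequation}) for the Majorana spinor $\psi$.

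The main obstacle is purely the sign- and transpose-chasing in the mass term: one must verify, using $C^{T}=-C$ and the anticommuting (Grassmann) nature of the spinor components, that $\overline{\psi_{L}^{c}}\psi_{L}$ and $\bar{\psi}_{L}\psi_{L}^{c}$ vary to give the same contribution rather than cancelling, so that the coefficient comes out $m$ and not $0$ or $2m$. Everything else is a routine application of the projector identities in proposition \ref{PB3} and the properties of $C$ listed in proposition \ref{PB1}; I would not spell those manipulations out in full but simply cite the relevant propositions.
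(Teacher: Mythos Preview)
Your proposal is correct and arrives at the same endpoint as the paper, but the route to the second equation differs. The paper obtains the complementary equation $i\gamma^{\mu}\partial_{\mu}\psi_{L}^{c}-m\psi_{L}=0$ by performing a second, independent Euler--Lagrange variation with respect to $\psi_{L}$ (then transposing and using $C(\gamma^{\mu})^{T}C^{-1}=-\gamma^{\mu}$, $C^{T}=-C$), whereas you obtain it by charge-conjugating the first equation. Your shortcut is legitimate---the calculation $C(\gamma^{0})^{T}(\gamma^{\mu})^{*}=-\gamma^{\mu}C(\gamma^{0})^{T}$ shows that the free kinetic operator commutes with charge conjugation---and it is slightly quicker; the paper's route has the advantage of being a direct check that both independent field variations are consistent.

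One point of phrasing to tighten: you say the two summands in the mass term ``contribute equally to the variation''. In fact, after rewriting $-\tfrac{1}{2}m(\overline{\psi_{L}^{c}}\psi_{L}+\bar{\psi}_{L}\psi_{L}^{c})=\tfrac{1}{2}m(\psi_{L}^{T}C^{-1}\psi_{L}-\bar{\psi}_{L}C\bar{\psi}_{L}^{T})$, only the second term depends on $\bar{\psi}_{L}$ at all; the factor of $2$ that kills the $\tfrac{1}{2}$ comes from the product rule \emph{within} $\bar{\psi}_{L}C\bar{\psi}_{L}^{T}$, using Grassmann anticommutation together with $C^{T}=-C$. This is exactly the mechanism the paper spells out component-wise.
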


\begin{proof}
In the case of Dirac particles the independent fields are $\psi_{L}, \psi_{R}, \bar{\psi}_{L}$ and $\bar{\psi}_{R}$. Here we work with Majorana particles, whose independent fields are just $\psi_{L}$ and $\bar{\psi}_{L}$.
\medskip
\\
For the sake of clarity we introduce the abbreviation $\mathcal{L}_{\mathrm{Majorana}}^{(\mathrm{free})}=\mathcal{L}$.
\\
$\overline{\psi_{L}^{c}}$ depends only on $\psi_{L}$, because
	\begin{displaymath}
		\overline{\psi_{L}^{c}}=(C\bar{\psi}_{L}^{T})^{\dagger}\gamma^{0}=\bar{\psi}_{L}^{\ast}C^{\dagger}\gamma^{0}=(\psi_{L}^{\dagger}\gamma^{0})^{\ast}C^{\dagger}\gamma^{0}=\psi_{L}^{T}(\gamma^{0})^{\ast}C^{\dagger}\gamma^{0}\Rightarrow \frac{\partial \overline{\psi_{L}^{c}}}{\partial\bar{\psi}_{L}}=0.
		\end{displaymath}
We first variate with respect to $\bar{\psi}_{L}$:
\medskip
\\
$\frac{\partial \mathcal{L}}{\partial \bar{\psi}_{L}}=i\gamma^{\mu}\partial_{\mu}\psi_{L}-\frac{1}{2}m\frac{\partial}{\partial \bar{\psi}_{L}}(\bar{\psi}_{L}C\bar{\psi}_{L}^{T})$
Using the anticommutation property of fermion fields and the antisymmetry of $C$ we get
\medskip
\\
$\frac{\partial}{\partial \bar{\psi}_{Li}}(\bar{\psi}_{Lk}C_{kl}\bar{\psi}_{Ll})=C_{kl}(\delta_{ik}\bar{\psi}_{Ll}-\delta_{il}\bar{\psi}_{Lk})=2C_{ik}\bar{\psi}_{Lk}$
$\Rightarrow \frac{\partial}{\partial \bar{\psi}_{L}}(\bar{\psi}_{L}C\bar{\psi}_{L}^{T})=2C\bar{\psi}_{L}^{T}$
\medskip
\\
$\frac{\partial \mathcal{L}}{\partial(\partial_{\mu}\bar{\psi}_{L})}=0$
\medskip
\\
Thus the first set of Euler-Lagrange-equations reads
	\begin{equation}\label{majorana1}
		i\gamma^{\mu}\partial_{\mu}\psi_{L}-mC\bar{\psi}_{L}^{T}=0.
	\end{equation}
Variation with respect to $\psi_{L}$ gives (in a similar manner)
	\begin{displaymath}
		\partial_{\mu}\bar{\psi}_{L}i\gamma^{\mu}-m\psi_{L}^{T}C^{-1}=0.
	\end{displaymath}
We transpose this equation and insert $C^{-1}C$ after $(\gamma^{\mu})^{T}$.
	\begin{displaymath}
		i(\gamma^{\mu})^{T}C^{-1}\partial_{\mu}\psi_{L}^{c}-m(C^{-1})^{T}\psi_{L}=0
	\end{displaymath}
Multiplying by $-C$ using $C(\gamma^{\mu})^{T}C^{-1}=-\gamma^{\mu}$ and $C^{T}=-C$ one gets
	\begin{equation}\label{majorana2}
		i\gamma^{\mu}\partial_{\mu}\psi_{L}^{c}-m\psi_{L}=0.
	\end{equation}
Adding equations (\ref{majorana1}) and (\ref{majorana2}) one gets the Dirac equation
	\begin{displaymath}
		(i\gamma^{\mu}\partial_{\mu}-m)(\psi_{L}+\psi_{L}^{c})=0.
	\end{displaymath}
\end{proof}
\hspace{0mm}\\
Let us at last consider the connection between the concept of the Majorana particle and the Majorana equation. As already mentioned the Dirac equation is basis independent and so is the charge conjugation matrix $C$ given in the form of equation (\ref{CCmatrix}). In contrast to the Dirac equation the Majorana equations (\ref{majoranaequation1}) and (\ref{majoranaequation2}) explicitly correspond to the Weyl basis. Let us therefore consider the Majorana condition in the Weyl basis:
\medskip
\\
The Majorana condition was $\psi^{c}=\psi$. In the Weyl basis $C$ is given by
	\begin{displaymath}
	C=i\alpha\gamma^{2}\gamma^{0}=\alpha
\left(\begin{matrix}
 \varepsilon & \textbf{0} \\
 \textbf{0} & -\varepsilon
\end{matrix}\right).
	\end{displaymath}
The charge conjugate of the Dirac spinor $\psi=
\left(\begin{matrix}
 \chi \\ \bar{\varphi}
\end{matrix}\right)
$ in the Weyl basis is
	\begin{displaymath}
	\psi^{c}=C\bar{\psi}^{T}=C(\gamma^{0})^{T}\psi^{\ast}=\alpha\left(\begin{matrix}
	 \varepsilon & \textbf{0} \\ \textbf{0} & -\varepsilon
	\end{matrix}\right)
	\left(\begin{matrix}
	 \textbf{0} & \mathbbm{1}_{2} \\ \mathbbm{1}_{2} & \textbf{0}
	\end{matrix}\right)
	\left(\begin{matrix}
	 \chi^{\ast} \\ \varepsilon\varphi\end{matrix}\right)=-\alpha \left(\begin{matrix}
	 \varphi \\ \bar{\chi}
	\end{matrix}\right).
	\end{displaymath}
Thus the Majorana condition becomes
	\begin{displaymath}
	\left(\begin{matrix}
 \chi \\ \bar{\varphi}
\end{matrix}\right)=-\alpha \left(\begin{matrix}
	 \varphi \\ \bar{\chi}
	\end{matrix}\right).
	\end{displaymath}
Choosing $\alpha=-1$ we find
	\begin{displaymath}
	\psi^{c}=\psi\Leftrightarrow \psi=\left(\begin{matrix}
	 \chi \\ \bar{\chi}
	\end{matrix}\right),
	\end{displaymath}
and inserting this spinor into the Dirac equation (in the Weyl basis) directly leads to the Majorana equations (\ref{majoranaequation1}) and (\ref{majoranaequation2}).

\chapter{Examples for group theoretical calculations using a computer algebra system}\label{appendixD}

\fancyhf{}
\fancyhead[RO,LE]{ \thepage}
\fancyhead[RE]{\small \rmfamily \nouppercase{Appendix D. Examples for group theoretical calculations}}
\fancyhead[LO]{\small \rmfamily \nouppercase{\rightmark}}
\fancyfoot{} 

\section{Example for a calculation of Clebsch-Gordan coefficients}
\begin{small}
In this \textit{ Mathematica} 6 notebook we will calculate the Clebsch-Gordan coefficients for \(3_1\)\(\otimes \)\(3_1\)=\(3_4\)\(\oplus \)\(3_5\)\(\oplus
\)\(3_6\) of the group \(\Sigma \)(36$\phi $). For the sake of clarity we only calculate the coefficients corresponding to \(3_5\). The calculation
of the remaining coefficients works completely analogous.
\medskip
\\
\noindent\(\pmb{\text{Remove}[\text{$\texttt{"}$Global$\grave{ }$*$\texttt{"}$}]}\)
\medskip
\\
\noindent\(\pmb{t=\text{SessionTime}[\hspace{1mm}];}\)
\medskip
\\
\noindent\(\pmb{\omega =\text{Exp}[2 \pi  I / 3];}\)
\medskip
\\
\noindent\(\pmb{\text{rep1}=\texttt{"}3_1\texttt{"};}\\
\pmb{\text{rep2}=\texttt{"}3_1\texttt{"};}\\
\pmb{\text{rep3}=\texttt{"}3_4\texttt{"};}\\
\pmb{\text{rep4}=\texttt{"}3_5\texttt{"};}\\
\pmb{\text{rep5}=\texttt{"}3_6\texttt{"};}\)
\medskip
\\
At first we enter the generators of the needed 3-dimensional representations of the finite group \(\Sigma (36\phi )\).
\medskip
\\
\noindent\(\pmb{\text{A12}=1;}\\
\pmb{\text{B12}=1;}\\
\pmb{\text{C12}=-1;}\\
\pmb{\text{A13}=1;}\\
\pmb{\text{B13}=1;}\\
\pmb{\text{C13}=I;}\)
\medskip
\\
\noindent\(\pmb{\text{A34}=\left(
\begin{array}{ccc}
 1 & 0 & 0 \\
 0 & \omega  & 0 \\
 0 & 0 & \omega ^2
\end{array}
\right);}\\
\pmb{\text{B34}=\left(
\begin{array}{ccc}
 0 & 1 & 0 \\
 0 & 0 & 1 \\
 1 & 0 & 0
\end{array}
\right);}\\
\pmb{\text{C34}=\frac{1}{\omega -\omega ^2}\left(
\begin{array}{ccc}
 1 & 1 & 1 \\
 1 & \omega  & \omega ^2 \\
 1 & \omega ^2 & \omega 
\end{array}
\right);}
\medskip
\\
\pmb{\text{A33}=\text{Conjugate}[\text{A34}];}\\
\pmb{\text{B33}=\text{Conjugate}[\text{B34}];}\\
\pmb{\text{C33}=\text{Conjugate}[\text{C34}];}\\
\pmb{\text{A31}=\text{A33}*\text{A12};}\\
\pmb{\text{B31}=\text{B33}*\text{B12};}\\
\pmb{\text{C31}=\text{C33}*\text{C12};}\\
\pmb{\text{A35}=\text{A34}*\text{A13};}\\
\pmb{\text{B35}=\text{B34}*\text{B13};}\\
\pmb{\text{C35}=\text{C34}*\text{C13};}\)
\medskip
\\
\noindent\(\pmb{\text{AA1}=\text{A31};}\\
\pmb{\text{BB1}=\text{B31};}\\
\pmb{\text{CC1}=\text{C31};}\\
\pmb{\text{AA2}=\text{A31};}\\
\pmb{\text{BB2}=\text{B31};}\\
\pmb{\text{CC2}=\text{C31};}\\
\pmb{\text{AA3}=\text{A35};}\\
\pmb{\text{BB3}=\text{B35};}\\
\pmb{\text{CC3}=\text{C35};}\)
\medskip
\\
We now define the matrices needed to construct the matrix NN.\\
At first we tranform the generators AA to a basis where they are diagonal.
\medskip
\\
\noindent\(\pmb{\text{UUA}=\text{Transpose}[\text{Eigensystem}[\text{Transpose}[\text{Inverse}[\text{AA1}]]][[2]]];}\\
\pmb{A=\text{Simplify}[\text{Inverse}[\text{UUA}].\text{Transpose}[\text{Inverse}[\text{AA1}]].\text{UUA}];}\\
\pmb{\text{UUB}=\text{Transpose}[\text{Eigensystem}[\text{Transpose}[\text{Inverse}[\text{AA2}]]][[2]]];}\\
\pmb{B=\text{Simplify}[\text{Inverse}[\text{UUB}].\text{Transpose}[\text{Inverse}[\text{AA2}]].\text{UUB}];}\\
\pmb{\text{UU$\Phi $}=\text{Transpose}[\text{Eigensystem}[\text{AA3}][[2]]];}\\
\pmb{\Phi =\text{Simplify}[\text{Inverse}[\text{UU$\Phi $}].\text{AA3}.\text{UU$\Phi $}];}\)
\medskip
\\
Construction of NN: We define the columns of NN.
\medskip
\\
\noindent\(\pmb{\text{Col}[\text{i$\_$},\text{j$\_$},\text{k$\_$}]\text{:=}\left(
\begin{array}{c}
 \Phi [[i,1]]A[[j,1]]B[[k,1]] \\
 \Phi [[i,1]]A[[j,1]]B[[k,2]] \\
 \Phi [[i,1]]A[[j,1]]B[[k,3]] \\
 \Phi [[i,1]]A[[j,2]]B[[k,1]] \\
 \Phi [[i,1]]A[[j,2]]B[[k,2]] \\
 \Phi [[i,1]]A[[j,2]]B[[k,3]] \\
 \Phi [[i,1]]A[[j,3]]B[[k,1]] \\
 \Phi [[i,1]]A[[j,3]]B[[k,2]] \\
 \Phi [[i,1]]A[[j,3]]B[[k,3]] \\
 \Phi [[i,2]]A[[j,1]]B[[k,1]] \\
 \Phi [[i,2]]A[[j,1]]B[[k,2]] \\
 \Phi [[i,2]]A[[j,1]]B[[k,3]] \\
 \Phi [[i,2]]A[[j,2]]B[[k,1]] \\
 \Phi [[i,2]]A[[j,2]]B[[k,2]] \\
 \Phi [[i,2]]A[[j,2]]B[[k,3]] \\
 \Phi [[i,2]]A[[j,3]]B[[k,1]] \\
 \Phi [[i,2]]A[[j,3]]B[[k,2]] \\
 \Phi [[i,2]]A[[j,3]]B[[k,3]] \\
 \Phi [[i,3]]A[[j,1]]B[[k,1]] \\
 \Phi [[i,3]]A[[j,1]]B[[k,2]] \\
 \Phi [[i,3]]A[[j,1]]B[[k,3]] \\
 \Phi [[i,3]]A[[j,2]]B[[k,1]] \\
 \Phi [[i,3]]A[[j,2]]B[[k,2]] \\
 \Phi [[i,3]]A[[j,2]]B[[k,3]] \\
 \Phi [[i,3]]A[[j,3]]B[[k,1]] \\
 \Phi [[i,3]]A[[j,3]]B[[k,2]] \\
 \Phi [[i,3]]A[[j,3]]B[[k,3]]
\end{array}
\right);}\)
\medskip
\\
\noindent\(\pmb{\text{NN}=\text{Join}[\text{Col}[1,1,1],\text{Col}[1,1,2],\text{Col}[1,1,3],\text{Col}[1,2,1],}\\
\pmb{\text{Col}[1,2,2],\text{Col}[1,2,3],\text{Col}[1,3,1],\text{Col}[1,3,2],\text{Col}[1,3,3],}\\
\pmb{\text{Col}[2,1,1],\text{Col}[2,1,2],\text{Col}[2,1,3],\text{Col}[2,2,1],\text{Col}[2,2,2],}\\
\pmb{\text{Col}[2,2,3],\text{Col}[2,3,1],\text{Col}[2,3,2],\text{Col}[2,3,3],\text{Col}[3,1,1],}\\
\pmb{\text{Col}[3,1,2],\text{Col}[3,1,3],\text{Col}[3,2,1],\text{Col}[3,2,2],\text{Col}[3,2,3],}\\
\pmb{\text{Col}[3,3,1],\text{Col}[3,3,2],\text{Col}[3,3,3],2];}\)
\medskip
\\
We will now calculate the eigenvectors of NN to the eigenvalue 1.
\medskip
\\
\noindent\(\pmb{\text{MMA}=\text{NN}-\text{IdentityMatrix}[27];}\)
\medskip
\\
The command \(NullSpace\)[MM] gives a set of linearly independent vectors v, such that MMv\(=0\).
\medskip
\\
\noindent\(\pmb{\text{EVA}=\text{NullSpace}[\text{MMA}];}\)
\medskip
\\
The rows of the matrix EVA are linearly independent eigenvectors of NN to the eigenvalue 1 in the new basis, we now have to transform them
back.
\medskip
\\
The number of independent eigenvectors is nA.
\medskip
\\
\noindent\(\pmb{\text{nA}=\text{Dimensions}[\text{EVA}][[1]];}\)
\medskip
\\
Back - transformation of the obtained eigenvectors to the standard basis :
\medskip
\\
\noindent\(\pmb{\text{YY}[\text{n$\_$},\text{k$\_$}]\text{:=}\text{EVA}[[n,k]];}\)
\medskip
\\
\noindent\(\pmb{\text{$\Gamma $Y1}[\text{n$\_$}]\text{:=}\left(
\begin{array}{ccc}
 \text{YY}[n,1] & \text{YY}[n,2] & \text{YY}[n,3] \\
 \text{YY}[n,4] & \text{YY}[n,5] & \text{YY}[n,6] \\
 \text{YY}[n,7] & \text{YY}[n,8] & \text{YY}[n,9]
\end{array}
\right);}\\
\pmb{\text{$\Gamma $Y2}[\text{n$\_$}]\text{:=}\left(
\begin{array}{ccc}
 \text{YY}[n,10] & \text{YY}[n,11] & \text{YY}[n,12] \\
 \text{YY}[n,13] & \text{YY}[n,14] & \text{YY}[n,15] \\
 \text{YY}[n,16] & \text{YY}[n,17] & \text{YY}[n,18]
\end{array}
\right);}\\
\pmb{\text{$\Gamma $Y3}[\text{n$\_$}]\text{:=}\left(
\begin{array}{ccc}
 \text{YY}[n,19] & \text{YY}[n,20] & \text{YY}[n,21] \\
 \text{YY}[n,22] & \text{YY}[n,23] & \text{YY}[n,24] \\
 \text{YY}[n,25] & \text{YY}[n,26] & \text{YY}[n,27]
\end{array}
\right);}\)
\medskip
\\
\noindent\(\pmb{\text{EV}[\text{n$\_$},\text{k$\_$}]\text{:=}\text{Simplify}[\text{Transpose}[\text{Inverse}[\text{UUA}]].\text{$\Gamma $Y1}[n].\text{Inverse}[\text{UUB}]*(\text{Inverse}[\text{UU$\Phi
$}][[1,k]])+}\\
\pmb{\text{Transpose}[\text{Inverse}[\text{UUA}]].\text{$\Gamma $Y2}[n].\text{Inverse}[\text{UUB}]*(\text{Inverse}[\text{UU$\Phi $}][[2,k]])}+
\\
\pmb{\text{Transpose}[\text{Inverse}[\text{UUA}]].\text{$\Gamma
$Y3}[n].\text{Inverse}[\text{UUB}]*(\text{Inverse}[\text{UU$\Phi $}][[3,k]])];}\)
\medskip
\\
\noindent\(\pmb{\text{Ev}[\text{n$\_$}]\text{:=}\text{Flatten}[\text{Transpose}[(\text{EV}[n,1][[1,1]] ,  \text{EV}[n,1][[1,2]] ,  \text{EV}[n,1][[1,3]] ,  \text{EV}[n,1][[2,1]],}\\
 \pmb{\text{EV}[n,1][[2,2]] ,  \text{EV}[n,1][[2,3]] , 
\text{EV}[n,1][[3,1]] ,  \text{EV}[n,1][[3,2]] ,  \text{EV}[n,1][[3,3]] ,}\\
\pmb{\text{EV}[n,2][[1,1]] ,  \text{EV}[n,2][[1,2]] ,  \text{EV}[n,2][[1,3]] ,  \text{EV}[n,2][[2,1]]
,  \text{EV}[n,2][[2,2]] ,}\\
\pmb{\text{EV}[n,2][[2,3]] ,  \text{EV}[n,2][[3,1]] ,  \text{EV}[n,2][[3,2]] ,  \text{EV}[n,2][[3,3]] ,  \text{EV}[n,3][[1,1]] ,}\\
\pmb{\text{EV}[n,3][[1,2]] ,  \text{EV}[n,3][[1,3]] ,  \text{EV}[n,3][[2,1]] ,  \text{EV}[n,3][[2,2]] ,  \text{EV}[n,3][[2,3]] ,}\\
\pmb{\text{EV}[n,3][[3,1]] ,  \text{EV}[n,3][[3,2]]
,  \text{EV}[n,3][[3,3]])]}\)
\medskip
\\
Now we will test if the vectors Ev[n] really are elements of the kernel of MMA.
\medskip
\\
\noindent\(\pmb{A=\text{Transpose}[\text{Inverse}[\text{AA1}]];}\\
\pmb{B=\text{Transpose}[\text{Inverse}[\text{AA2}]];}\\
\pmb{\Phi =\text{AA3};}\)
\medskip
\\
\noindent\(\pmb{\text{NN}=\text{Join}[\text{Col}[1,1,1],\text{Col}[1,1,2],\text{Col}[1,1,3],\text{Col}[1,2,1],}\\
\pmb{\text{Col}[1,2,2],\text{Col}[1,2,3],\text{Col}[1,3,1],\text{Col}[1,3,2],\text{Col}[1,3,3],}\\
\pmb{\text{Col}[2,1,1],\text{Col}[2,1,2],\text{Col}[2,1,3],\text{Col}[2,2,1],\text{Col}[2,2,2],}\\
\pmb{\text{Col}[2,2,3],\text{Col}[2,3,1],\text{Col}[2,3,2],\text{Col}[2,3,3],\text{Col}[3,1,1],}\\
\pmb{\text{Col}[3,1,2],\text{Col}[3,1,3],\text{Col}[3,2,1],\text{Col}[3,2,2],\text{Col}[3,2,3],}\\
\pmb{\text{Col}[3,3,1],\text{Col}[3,3,2],\text{Col}[3,3,3],2];}\)
\medskip
\\
\noindent\(\pmb{\text{MMA}=\text{NN}-\text{IdentityMatrix}[27];}\)
\medskip
\\
\noindent\(\pmb{n=1;}\\
\pmb{\text{While}[n\leq \text{nA},\text{Print}[\text{Simplify}[\text{MMA}.\text{Ev}[n]]];n\text{++}]}\)
\medskip
\\
\noindent\(\{0,0,0,0,0,0,0,0,0,0,0,0,0,0,0,0,0,0,0,0,0,0,0,0,0,0,0\}\)

\noindent\(\{0,0,0,0,0,0,0,0,0,0,0,0,0,0,0,0,0,0,0,0,0,0,0,0,0,0,0\}\)

\noindent\(\{0,0,0,0,0,0,0,0,0,0,0,0,0,0,0,0,0,0,0,0,0,0,0,0,0,0,0\}\)

\noindent\(\{0,0,0,0,0,0,0,0,0,0,0,0,0,0,0,0,0,0,0,0,0,0,0,0,0,0,0\}\)

\noindent\(\{0,0,0,0,0,0,0,0,0,0,0,0,0,0,0,0,0,0,0,0,0,0,0,0,0,0,0\}\)

\noindent\(\{0,0,0,0,0,0,0,0,0,0,0,0,0,0,0,0,0,0,0,0,0,0,0,0,0,0,0\}\)

\noindent\(\{0,0,0,0,0,0,0,0,0,0,0,0,0,0,0,0,0,0,0,0,0,0,0,0,0,0,0\}\)

\noindent\(\{0,0,0,0,0,0,0,0,0,0,0,0,0,0,0,0,0,0,0,0,0,0,0,0,0,0,0\}\)

\noindent\(\{0,0,0,0,0,0,0,0,0,0,0,0,0,0,0,0,0,0,0,0,0,0,0,0,0,0,0\}\)
\medskip
\\
\noindent\(\pmb{n=1;}\\
\pmb{\text{While}[n\leq \text{nA},\text{EvA}[n]=\text{Ev}[n];n\text{++}];}\)
\medskip
\\
Thus we have found the kernel of NN[A] in the standard basis.
\medskip
\\
\noindent\(\pmb{A=\text{Transpose}[\text{Inverse}[\text{BB1}]];}\\
\pmb{B=\text{Transpose}[\text{Inverse}[\text{BB2}]];}\\
\pmb{\Phi =\text{BB3};}\)
\medskip
\\
\noindent\(\pmb{\text{NN}=\text{Join}[\text{Col}[1,1,1],\text{Col}[1,1,2],\text{Col}[1,1,3],\text{Col}[1,2,1],}\\
\pmb{\text{Col}[1,2,2],\text{Col}[1,2,3],\text{Col}[1,3,1],\text{Col}[1,3,2],\text{Col}[1,3,3],}\\
\pmb{\text{Col}[2,1,1],\text{Col}[2,1,2],\text{Col}[2,1,3],\text{Col}[2,2,1],\text{Col}[2,2,2],}\\
\pmb{\text{Col}[2,2,3],\text{Col}[2,3,1],\text{Col}[2,3,2],\text{Col}[2,3,3],\text{Col}[3,1,1],}\\
\pmb{\text{Col}[3,1,2],\text{Col}[3,1,3],\text{Col}[3,2,1],\text{Col}[3,2,2],\text{Col}[3,2,3],}\\
\pmb{\text{Col}[3,3,1],\text{Col}[3,3,2],\text{Col}[3,3,3],2];}\)
\medskip
\\
\noindent\(\pmb{\text{MMB}=\text{NN}-\text{IdentityMatrix}[27];}\)
\medskip
\\
\noindent\(\pmb{A=\text{Transpose}[\text{Inverse}[\text{CC1}]];}\\
\pmb{B=\text{Transpose}[\text{Inverse}[\text{CC2}]];}\\
\pmb{\Phi =\text{CC3};}\)
\medskip
\\
\noindent\(\pmb{\text{NN}=\text{Join}[\text{Col}[1,1,1],\text{Col}[1,1,2],\text{Col}[1,1,3],\text{Col}[1,2,1],}\\
\pmb{\text{Col}[1,2,2],\text{Col}[1,2,3],\text{Col}[1,3,1],\text{Col}[1,3,2],\text{Col}[1,3,3],}\\
\pmb{\text{Col}[2,1,1],\text{Col}[2,1,2],\text{Col}[2,1,3],\text{Col}[2,2,1],\text{Col}[2,2,2],}\\
\pmb{\text{Col}[2,2,3],\text{Col}[2,3,1],\text{Col}[2,3,2],\text{Col}[2,3,3],\text{Col}[3,1,1],}\\
\pmb{\text{Col}[3,1,2],\text{Col}[3,1,3],\text{Col}[3,2,1],\text{Col}[3,2,2],\text{Col}[3,2,3],}\\
\pmb{\text{Col}[3,3,1],\text{Col}[3,3,2],\text{Col}[3,3,3],2];}\)
\medskip
\\
\noindent\(\pmb{\text{MMC}=\text{NN}-\text{IdentityMatrix}[27];}\)
\medskip
\\
\noindent\(\pmb{n=1;}\\
\pmb{\text{While}[n\leq \text{nA},\text{XA}[n]=\text{Simplify}[\text{MMB}.\text{EvA}[n]];n\text{++}];}\\
\pmb{X=\{\text{XA}[1]\};}\\
\pmb{\text{Do}[X=\text{Join}[X,\{\text{XA}[j]\}],\{j,2,\text{nA}\}];}\\
\pmb{X=\text{Transpose}[X];}\\
\pmb{Y=\text{NullSpace}[X];}\\
\pmb{Y=\text{Simplify}[Y];}\)
\medskip
\\
\noindent\(\pmb{Y\text{//}\text{MatrixForm}}\)
\medskip
\\
\noindent\(\left(
\begin{array}{ccccccccc}
 1 & 0 & 0 & 0 & 1 & 0 & 0 & 0 & 1 \\
 0 & 0 & 1 & 1 & 0 & 0 & 0 & 1 & 0 \\
 0 & 1 & 0 & 0 & 0 & 1 & 1 & 0 & 0
\end{array}
\right)\)
\medskip
\\
\noindent\(\pmb{\text{YY1}=\text{Flatten}[\text{Simplify}[\text{Sum}[Y[[1,j]] \text{EvA}[j],\{j,1,\text{nA}\}]]];}\\
\pmb{\text{YY2}=\text{Flatten}[\text{Simplify}[\text{Sum}[Y[[2,j]] \text{EvA}[j],\{j,1,\text{nA}\}]]];}\\
\pmb{\text{YY3}=\text{Flatten}[\text{Simplify}[\text{Sum}[Y[[3,j]] \text{EvA}[j],\{j,1,\text{nA}\}]]];}\)
\medskip
\\
We have now obtained a basis $\{$YY1, YY2, YY3$\}$ of the vectorspace of common eigenvectors of NN[A] and NN[B] to the eigenvalue 1. We will now
intersect with the eigenspace of NN[C] to the eigenvalue 1.
\medskip
\\
\noindent\(\pmb{X=\text{Transpose}[\{\text{Simplify}[\text{MMC}.\text{YY1}]\}];}\\
\pmb{X=\text{Join}[X,\text{Transpose}[\{\text{Simplify}[\text{MMC}.\text{YY2}]\}],2];}\\
\pmb{X=\text{Join}[X,\text{Transpose}[\{\text{Simplify}[\text{MMC}.\text{YY3}]\}],2];}\)
\medskip
\\
\noindent\(\pmb{Y=\text{Flatten}[\text{NullSpace}[X]];}\\
\pmb{\text{MatrixForm}[Y]\text{//}\text{FullSimplify}}\)
\medskip
\\
\noindent\(\left(
\begin{array}{c}
 1-\sqrt{3} \\
 1 \\
 1
\end{array}
\right)\)
\medskip
\\
\noindent\(\pmb{\text{YY}=\text{Simplify}[\text{YY1}*Y[[1]]+\text{YY2}*Y[[2]]+\text{YY3}*Y[[3]]];}\)
\medskip
\\
\noindent\(\pmb{\text{YY}\text{//}\text{FullSimplify}}\)
\medskip
\\
\noindent\(\left\{1-\sqrt{3},0,0,0,0,1,0,1,0,0,0,1,0,1-\sqrt{3},0,1,0,0,0,1,0,1,0,0,0,0,1-\sqrt{3}\right\}\)
\medskip
\\
Thus we have found one common eigenvector YY of NN[A], NN[B] and NN[C] to the eigenvalue 1.
\medskip
\\
The Clebsch - Gordan - Coefficients are :
\medskip
\\
\noindent\(\pmb{\text{$\Gamma $1}=\left(
\begin{array}{ccc}
 \text{YY}[[1]] & \text{YY}[[2]] & \text{YY}[[3]] \\
 \text{YY}[[4]] & \text{YY}[[5]] & \text{YY}[[6]] \\
 \text{YY}[[7]] & \text{YY}[[8]] & \text{YY}[[9]]
\end{array}
\right);}\\
\pmb{\text{$\Gamma $2}=\left(
\begin{array}{ccc}
 \text{YY}[[10]] & \text{YY}[[11]] & \text{YY}[[12]] \\
 \text{YY}[[13]] & \text{YY}[[14]] & \text{YY}[[15]] \\
 \text{YY}[[16]] & \text{YY}[[17]] & \text{YY}[[18]]
\end{array}
\right);}\\
\pmb{\text{$\Gamma $3}=\left(
\begin{array}{ccc}
 \text{YY}[[19]] & \text{YY}[[20]] & \text{YY}[[21]] \\
 \text{YY}[[22]] & \text{YY}[[23]] & \text{YY}[[24]] \\
 \text{YY}[[25]] & \text{YY}[[26]] & \text{YY}[[27]]
\end{array}
\right);}\)
\medskip
\\
Test if the found coefficients fulfil the invariance equations. Zero vectors correspond to right solutions of the invariance equations.
\medskip
\\
\noindent\(\pmb{\text{FullSimplify}[\text{MMA}.\text{YY}]}\\
\pmb{\text{FullSimplify}[\text{MMB}.\text{YY}]}\\
\pmb{\text{FullSimplify}[\text{MMC}.\text{YY}]}\)
\medskip
\\
\noindent\(\{0,0,0,0,0,0,0,0,0,0,0,0,0,0,0,0,0,0,0,0,0,0,0,0,0,0,0\}\)

\noindent\(\{0,0,0,0,0,0,0,0,0,0,0,0,0,0,0,0,0,0,0,0,0,0,0,0,0,0,0\}\)

\noindent\(\{0,0,0,0,0,0,0,0,0,0,0,0,0,0,0,0,0,0,0,0,0,0,0,0,0,0,0\}\)
\medskip
\\
Normalization of the CGC' s :
\medskip
\\
\noindent\(\pmb{\text{$\Gamma $1n}=}\\
\pmb{\text{Simplify}[}\\
\pmb{\text{$\Gamma $1}/}\\
\pmb{\text{Sqrt}[(\text{Abs}[\text{$\Gamma $1}[[1,1]]]{}^{\wedge}2+\text{Abs}[\text{$\Gamma $1}[[1,2]]]{}^{\wedge}2+\text{Abs}[\text{$\Gamma $1}[[1,3]]]{}^{\wedge}2+}\\
\pmb{\text{Abs}[\text{$\Gamma $1}[[2,1]]]{}^{\wedge}2+\text{Abs}[\text{$\Gamma $1}[[2,2]]]{}^{\wedge}2+\text{Abs}[\text{$\Gamma $1}[[2,3]]]{}^{\wedge}2+}\\
\pmb{\text{Abs}[\text{$\Gamma $1}[[3,1]]]{}^{\wedge}2+\text{Abs}[\text{$\Gamma $1}[[3,2]]]{}^{\wedge}2+\text{Abs}[\text{$\Gamma $1}[[3,3]]]{}^{\wedge}2)]];}\)
\medskip
\\
\noindent\(\pmb{\text{$\Gamma $2n}=}\\
\pmb{\text{Simplify}[}\\
\pmb{\text{$\Gamma $2}/}\\
\pmb{\text{Sqrt}[(\text{Abs}[\text{$\Gamma $2}[[1,1]]]{}^{\wedge}2+\text{Abs}[\text{$\Gamma $2}[[1,2]]]{}^{\wedge}2+\text{Abs}[\text{$\Gamma $2}[[1,3]]]{}^{\wedge}2+}\\
\pmb{\text{Abs}[\text{$\Gamma $2}[[2,1]]]{}^{\wedge}2+\text{Abs}[\text{$\Gamma $2}[[2,2]]]{}^{\wedge}2+\text{Abs}[\text{$\Gamma $2}[[2,3]]]{}^{\wedge}2+}\\
\pmb{\text{Abs}[\text{$\Gamma $2}[[3,1]]]{}^{\wedge}2+\text{Abs}[\text{$\Gamma $2}[[3,2]]]{}^{\wedge}2+\text{Abs}[\text{$\Gamma $2}[[3,3]]]{}^{\wedge}2)]];}\)
\medskip
\\
\noindent\(\pmb{\text{$\Gamma $3n}=}\\
\pmb{\text{Simplify}[}\\
\pmb{\text{$\Gamma $3}/}\\
\pmb{\text{Sqrt}[(\text{Abs}[\text{$\Gamma $3}[[1,1]]]{}^{\wedge}2+\text{Abs}[\text{$\Gamma $3}[[1,2]]]{}^{\wedge}2+\text{Abs}[\text{$\Gamma $3}[[1,3]]]{}^{\wedge}2+}\\
\pmb{\text{Abs}[\text{$\Gamma $3}[[2,1]]]{}^{\wedge}2+\text{Abs}[\text{$\Gamma $3}[[2,2]]]{}^{\wedge}2+\text{Abs}[\text{$\Gamma $3}[[2,3]]]{}^{\wedge}2+}\\
\pmb{\text{Abs}[\text{$\Gamma $3}[[3,1]]]{}^{\wedge}2+\text{Abs}[\text{$\Gamma $3}[[3,2]]]{}^{\wedge}2+\text{Abs}[\text{$\Gamma $3}[[3,3]]]{}^{\wedge}2)]];}\)
\medskip
\\
We will now use the command FullSimplify to simplify the Clebsch - Gordan coefficients. To avoid results of the type Root[\textit{equation}], we
exclude this output format explicitly.
\medskip
\\
\noindent\(\pmb{i=1;}\\
\pmb{\text{While}[i\leq 3,}\\
\pmb{j=1;}\\
\pmb{\text{While}[j\leq 3,}\\
\pmb{\text{If}[\text{StringCount}[\text{ToString}[\text{FullSimplify}[\text{Abs}[\text{$\Gamma $1n}[[i,j]]]],\text{StandardForm}],\text{$\texttt{"}$Root$\texttt{"}$}]==0,}\\
\pmb{\text{$\Gamma
$1n}[[i,j]]=\text{FullSimplify}[\text{Abs}[\text{$\Gamma $1n}[[i,j]]]]*}
\pmb{\text{Exp}[I* \text{FullSimplify}[\text{Arg}[\text{$\Gamma $1n}[[i,j]]]]];}
\hspace{2mm}\pmb{j\text{++}]}\\
\pmb{i\text{++}]}\)
\medskip
\\
\noindent\(\pmb{i=1;}\\
\pmb{\text{While}[i\leq 3,}\\
\pmb{j=1;}\\
\pmb{\text{While}[j\leq 3,}\\
\pmb{\text{If}[\text{StringCount}[\text{ToString}[\text{FullSimplify}\text{Abs}[[\text{$\Gamma $2n}[[i,j]]]],\text{StandardForm}],\text{$\texttt{"}$Root$\texttt{"}$}]==0,}\\
\pmb{\text{$\Gamma
$2n}[[i,j]]=\text{FullSimplify}[\text{Abs}[\text{$\Gamma $2n}[[i,j]]]]*}
\pmb{\text{Exp}[I* \text{FullSimplify}[\text{Arg}[\text{$\Gamma $2n}[[i,j]]]]];}
\hspace{2mm}\pmb{j\text{++}]}\\
\pmb{i\text{++}]}\)
\medskip
\\
\noindent\(\pmb{i=1;}\\
\pmb{\text{While}[i\leq 3,}\\
\pmb{j=1;}\\
\pmb{\text{While}[j\leq 3,}\\
\pmb{\text{If}[\text{StringCount}[\text{ToString}[\text{FullSimplify}[\text{Abs}[\text{$\Gamma $3n}[[i,j]]]],\text{StandardForm}],\text{$\texttt{"}$Root$\texttt{"}$}]==0,}\\
\pmb{\text{$\Gamma
$3n}[[i,j]]=\text{FullSimplify}[\text{Abs}[\text{$\Gamma $3n}[[i,j]]]]*}
\pmb{\text{Exp}[I* \text{FullSimplify}[\text{Arg}[\text{$\Gamma $3n}[[i,j]]]]];}
\hspace{2mm}\pmb{j\text{++}]}\\
\pmb{i\text{++}]}\)
\medskip
\\
We will now calculate normalized basis vectors of the invariant subspace corresponding to the representation \(3_5\).
\medskip
\\
\noindent\(\pmb{\text{ee}=\text{Flatten}\left[\left(
\begin{array}{ccccccccc}
 e_{11} & e_{12} & e_{13} & e_{21} & e_{22} & e_{23} & e_{31} & e_{32} & e_{33}
\end{array}
\right)\right];}\)
\medskip
\\
\noindent\(\pmb{\text{out1}=\text{ToString}[\text{Subscript}[\text{$\texttt{"}$u$\texttt{"}$},\text{rep4}]{}^{\wedge}(\text{rep1}<>\texttt{"}\otimes
\texttt{"}<>\text{rep2}),\text{StandardForm}];}\\
\pmb{i=1;}\\
\pmb{q=0;}\\
\pmb{\text{out1}=\text{out1}<>\text{$\texttt{"}$(1)$\texttt{"}$}<>\text{$\texttt{"}$=$\texttt{"}$};}\\
\pmb{\text{While}[i\leq 9,}\\
\pmb{\text{If}[\text{Flatten}[\text{$\Gamma $1n}][[i]]\neq 0,}\\
\pmb{\text{If}[q==1,\text{out1}=\text{out1}<>\text{$\texttt{"}$ $\texttt{"}$}<>\text{ToString}[\text{Style}[\text{$\texttt{"}$+$\texttt{"}$},\text{Bold}],\text{StandardForm}]<>\text{$\texttt{"}$
$\texttt{"}$}]];}\\
\pmb{\text{If}[\text{Flatten}[\text{$\Gamma $1n}][[i]]\neq 0,}\\
\pmb{\text{out1}=\text{out1}<>\text{ToString}[\text{Flatten}[\text{$\Gamma $1n}][[i]],\text{StandardForm}]<>\text{ToString}[\text{Style}[\text{ToString}[\text{ee}[[i]],}
\\
\pmb{\text{StandardForm}],\text{Bold}],\text{StandardForm}];q=1];}\\
\pmb{i\text{++}]}\)
\medskip
\\
\noindent\(\pmb{\text{out2}=\text{ToString}[\text{Subscript}[\text{$\texttt{"}$u$\texttt{"}$},\text{rep4}]{}^{\wedge}(\text{rep1}<>\texttt{"}\otimes
\texttt{"}<>\text{rep2}),\text{StandardForm}];}\\
\pmb{i=1;}\\
\pmb{q=0;}\\
\pmb{\text{out2}=\text{out2}<>\text{$\texttt{"}$(2)$\texttt{"}$}<>\text{$\texttt{"}$=$\texttt{"}$};}\\
\pmb{\text{While}[i\leq 9,}\\
\pmb{\text{If}[\text{Flatten}[\text{$\Gamma $2n}][[i]]\neq 0,}\\
\pmb{\text{If}[q==1,\text{out2}=\text{out2}<>\text{$\texttt{"}$ $\texttt{"}$}<>\text{ToString}[\text{Style}[\text{$\texttt{"}$+$\texttt{"}$},\text{Bold}],\text{StandardForm}]<>\text{$\texttt{"}$
$\texttt{"}$}]];}\\
\pmb{\text{If}[\text{Flatten}[\text{$\Gamma $2n}][[i]]\neq 0,}\\
\pmb{\text{out2}=\text{out2}<>\text{ToString}[\text{Flatten}[\text{$\Gamma $2n}][[i]],\text{StandardForm}]<>\text{ToString}[\text{Style}[\text{ToString}[\text{ee}[[i]],}\\
\pmb{\text{StandardForm}],\text{Bold}],\text{StandardForm}];q=1];}\\
\pmb{i\text{++}]}\)
\medskip
\\
\noindent\(\pmb{\text{out3}=\text{ToString}[\text{Subscript}[\text{$\texttt{"}$u$\texttt{"}$},\text{rep4}]{}^{\wedge}(\text{rep1}<>\texttt{"}\otimes
\texttt{"}<>\text{rep2}),\text{StandardForm}];}\\
\pmb{i=1;}\\
\pmb{q=0;}\\
\pmb{\text{out3}=\text{out3}<>\text{$\texttt{"}$(3)$\texttt{"}$}<>\text{$\texttt{"}$=$\texttt{"}$};}\\
\pmb{\text{While}[i\leq 9,}\\
\pmb{\text{If}[\text{Flatten}[\text{$\Gamma $3n}][[i]]\neq 0,}\\
\pmb{\text{If}[q==1,\text{out3}=\text{out3}<>\text{$\texttt{"}$ $\texttt{"}$}<>\text{ToString}[\text{Style}[\text{$\texttt{"}$+$\texttt{"}$},\text{Bold}],\text{StandardForm}]<>\text{$\texttt{"}$
$\texttt{"}$}]];}\\
\pmb{\text{If}[\text{Flatten}[\text{$\Gamma $3n}][[i]]\neq 0,}\\
\pmb{\text{out3}=\text{out3}<>\text{ToString}[\text{Flatten}[\text{$\Gamma $3n}][[i]],\text{StandardForm}]<>\text{ToString}[\text{Style}[\text{ToString}[\text{ee}[[i]],}\\
\pmb{\text{StandardForm}],\text{Bold}],\text{StandardForm}];q=1];}\\
\pmb{i\text{++}]}\)
\medskip
\\
\noindent\(\pmb{\text{table}=
\begin{array}{|c|}
\hline
 \text{out1} \\
 \text{out2} \\
 \text{out3} \\
\hline
\end{array}
\hspace{0.5mm};}\)
\medskip
\\
\noindent\(\pmb{\text{tab}=\text{TableForm}[\text{table},\text{TableSpacing}\to \{5,5\},}\\
\pmb{\text{TableHeadings}\to }
\pmb{\{\{\text{rep4},\texttt{"}\texttt{"},\texttt{"}\texttt{"}\},}\\
\pmb{\{\text{rep1}<>\texttt{"}\otimes \texttt{"}<>\text{rep2}<>\text{$\texttt{"}$=$\texttt{"}$}<>\text{rep3}<>\texttt{"}\oplus \texttt{"}<>\text{rep4}<>\texttt{"}\oplus
\texttt{"}<>\text{rep5}\}\}]}\)
\medskip
\\
\noindent\(
\renewcommand{\arraystretch}{1.8}
\begin{array}{|l|l|}
\hline
  & 3_1\otimes 3_1=3_4\oplus 3_5\oplus 3_6 \\
\hline
 3_5 & u_{3_5}^{3_1\otimes 3_1}\text{(1)=}-\sqrt{\frac{1}{6} \left(3-\sqrt{3}\right)}e_{11} + \frac{1}{\sqrt{6-2 \sqrt{3}}}e_{23} + \frac{1}{\sqrt{6-2
\sqrt{3}}}e_{32} \\
  & u_{3_5}^{3_1\otimes 3_1}\text{(2)=}\frac{1}{\sqrt{6-2 \sqrt{3}}}e_{13} + \sqrt{\frac{1}{6} \left(3-\sqrt{3}\right)}e_{22} + \frac{1}{\sqrt{6-2
\sqrt{3}}}e_{31} \\
  & u_{3_5}^{3_1\otimes 3_1}\text{(3)=}\frac{1}{\sqrt{6-2 \sqrt{3}}}e_{12} + \frac{1}{\sqrt{6-2 \sqrt{3}}}e_{21} + \sqrt{\frac{1}{6} \left(3-\sqrt{3}\right)}e_{33}\\
\hline
\end{array}
\)
\bigskip
\\
The number SessionTime[\hspace{1mm}] - t is the time the calculation needed in seconds.
\medskip
\\
\noindent\(\pmb{\text{SessionTime}[\hspace{1mm}]-t}\)
\medskip
\\
\noindent\(1.9062500\)
\end{small}

\section{Example for a calculation of tensor products}
In this \textit{Mathematica 6} notebook we will calculate the Kronecker products for $\Sigma(216\phi)$.
\medskip
\\
\begin{small}
\noindent\(\pmb{\text{Remove}[\text{$\texttt{"}$Global$\grave{ }$*$\texttt{"}$}];}\)
\medskip
\\
\noindent\(\pmb{\omega =\text{Exp}[2 \pi  I / 3];}\\
\pmb{a=-\omega ^2;}\\
\pmb{b=2 \omega ^2;}\\
\pmb{c=-3 \omega ^2;}\\
\pmb{d=-\text{Exp}[10 \pi  I / 9];}\\
\pmb{e=-\text{Exp}[4 \pi  I / 9];}\\
\pmb{f=\text{Exp}[4 \pi  I / 9]+\text{Exp}[10 \pi  I / 9];}\\
\pmb{g=6\omega ^2;}\\
\pmb{h=9\omega ^2;}\\
\pmb{i=2 \text{Exp}[4 \pi  I / 9]+\text{Exp}[10 \pi  I / 9];}\\
\pmb{j=-\text{Exp}[4 \pi  I / 9]-2\text{Exp}[10 \pi  I / 9];}\\
\pmb{k=-\text{Exp}[4 \pi  I / 9]+\text{Exp}[10 \pi  I / 9];}\)
\medskip
\\
Order of $\Sigma $ (216$\phi $) :
\medskip
\\
\noindent\(\pmb{\text{gg}=216*3;}\)
\medskip
\\
The matrix CT represents the character table of $\Sigma $ (216 $\phi $). (We cannot display the matrix here, because it is too large.)\\
Rows : Characters for the different classes { }for an irreducible representation.\\
Columns : Characters for a class in different non - equivalent irreducible representations.
\medskip
\\
The number of conjugation classes is
\medskip
\\
\noindent\(\pmb{\text{nc}=24;}\)
\medskip
\\
We also need the number of elements contained in the classes :
\medskip
\\
\begin{tiny}
\noindent\(\pmb{n=\left(
\begin{array}{cccccccccccccccccccccccc}
 1 & 9 & 9 & 24 & 72 & 36 & 36 & 36 & 72 & 36 & 36 & 36 & 9 & 54 & 54 & 54 & 1 & 1 & 12 & 12 & 12 & 12 & 12 & 12
\end{array}
\right);}\)
\end{tiny}
\medskip
\\
Given the character table one can easily compute the number of times bb an irreducible representation is contained in a tensor product of two irreducible representations.
\medskip
\\
\noindent\(\pmb{\text{bb}[\text{x$\_$},\text{y$\_$},\text{z$\_$}]\text{:=}}\\
\pmb{1/\text{gg} \text{Sum}[n[[1]][[k]]*\text{Conjugate}[\text{CT}[[x]][[k]]]*\text{CT}[[y]][[k]]*\text{CT}[[z]][[k]],}\\
\pmb{\{k,1,\text{nc}\} ]}\)
\medskip
\\
The irreducible representations of $\Sigma $ (216 $\phi $) are
\medskip
\\
\begin{tiny}
\noindent\(\pmb{\text{irreps}=}\\
\pmb{\text{Flatten}\left[\left(
\begin{array}{cccccccccccccccccccccccc}
 1_1 & 1_2 & 1_3 & 2_1 & 2_2 & 2_3 & 3_1 & 3_2 & 3_3 & 3_4 & 3_5 & 3_6 & 3_7 & 6_1 & 6_2 & 6_3 & 6_4 & 6_5 & 6_6 & 8_1 & 8_2 & 8_3 & 9_1 & 9_2
\end{array}
\right)\right];}\)
\end{tiny}
\medskip
\\
\noindent\(\pmb{\text{ii}=7;}\\
\pmb{\text{While}[\text{ii}\leq  13,}\\
\pmb{\text{jj}=\text{ii};}\\
\pmb{\text{While}[\text{jj}\leq 13,}\\
\pmb{\text{tt}=\text{Table}[\text{FullSimplify}[\text{bb}[\text{kk},\text{ii},\text{jj}]],\{\text{kk},1,\text{nc}\}];}\\
\pmb{\text{Print}[\text{irreps}[[\text{ii}]],\texttt{"}\otimes \texttt{"},\text{irreps}[[\text{jj}]],\text{$\texttt{"}$=$\texttt{"}$},\text{tt}.\text{irreps}];}\\
\pmb{\text{jj}\text{++}];}\\
\pmb{\text{ii}\text{++}];}\)
\bigskip
\\
\noindent\(3_1\otimes 3_1=1_1+1_2+1_3+2\enspace 3_1\)

\noindent\(3_1\otimes 3_2=9_1\)

\noindent\(3_1\otimes 3_3=9_1\)

\noindent\(3_1\otimes 3_4=9_1\)

\noindent\(3_1\otimes 3_5=9_2\)

\noindent\(3_1\otimes 3_6=9_2\)

\noindent\(3_1\otimes 3_7=9_2\)

\noindent\(3_2\otimes 3_2=3_6+6_5\)

\noindent\(3_2\otimes 3_3=3_7+6_6\)

\noindent\(3_2\otimes 3_4=3_5+6_4\)

\noindent\(3_2\otimes 3_5=1_3+8_3\)

\noindent\(3_2\otimes 3_6=1_1+8_1\)

\noindent\(3_2\otimes 3_7=1_2+8_2\)

\noindent\(3_3\otimes 3_3=3_5+6_4\)

\noindent\(3_3\otimes 3_4=3_6+6_5\)

\noindent\(3_3\otimes 3_5=1_1+8_1\)

\noindent\(3_3\otimes 3_6=1_2+8_2\)

\noindent\(3_3\otimes 3_7=1_3+8_3\)

\noindent\(3_4\otimes 3_4=3_7+6_6\)

\noindent\(3_4\otimes 3_5=1_2+8_2\)

\noindent\(3_4\otimes 3_6=1_3+8_3\)

\noindent\(3_4\otimes 3_7=1_1+8_1\)

\noindent\(3_5\otimes 3_5=3_3+6_1\)

\noindent\(3_5\otimes 3_6=3_4+6_2\)

\noindent\(3_5\otimes 3_7=3_2+6_3\)

\noindent\(3_6\otimes 3_6=3_2+6_3\)

\noindent\(3_6\otimes 3_7=3_3+6_1\)

\noindent\(3_7\otimes 3_7=3_4+6_2\)
\end{small}

\section{Example for a calculation of the elements of a group from its generators}
\begin{small}
In this \textit{Mathematica 6} notebook we will construct the group $\Sigma $ (216$\phi $) from its generators.
\medskip
\\
\noindent\(\pmb{\text{Remove}[\text{$\texttt{"}$Global$\grave{ }$*$\texttt{"}$}]}\)

\noindent\(\pmb{\omega =\text{Exp}[2 \pi  i / 3];}\)

\noindent\(\pmb{\epsilon =\text{Exp}[4 \pi  i / 9];}\)
\medskip
\\
The generators of $\Sigma $ (216$\phi $) are :
\medskip
\\
\noindent\(\pmb{\text{A1}=\left(
\begin{array}{ccc}
 1 & 0 & 0 \\
 0 & \omega  & 0 \\
 0 & 0 & \omega ^2
\end{array}
\right);}\\
\pmb{\text{A2}=\left(
\begin{array}{ccc}
 0 & 1 & 0 \\
 0 & 0 & 1 \\
 1 & 0 & 0
\end{array}
\right);}\\
\pmb{\text{A3}=\frac{1}{\omega -\omega ^2}\left(
\begin{array}{ccc}
 1 & 1 & 1 \\
 1 & \omega  & \omega ^2 \\
 1 & \omega ^2 & \omega 
\end{array}
\right);}\\
\pmb{\text{A4}=\epsilon \left(
\begin{array}{ccc}
 1 & 0 & 0 \\
 0 & 1 & 0 \\
 0 & 0 & \omega 
\end{array}
\right);}\)
\medskip
\\
Using the following program one finds, that A3 and A4 alone generate $\Sigma $ (216 $\phi $). Thus in this example we just use A3 and A4 as generators.
\medskip
\\
At first we will multiply each generator with each generator and look for new group elements. Then we take the generators and the new group elements
and multiply each with each, ... . This is done until there occur no new group elements, then the group is complete.
\medskip
\\
\noindent\(\pmb{L = \{N[\text{A3}],N[\text{A4}]\};}\\
\pmb{\text{L2}= \{\text{A3},\text{A4}\};}\\
\pmb{t = \text{SessionTime}[];}\\
\pmb{\text{While}[0 < 1,}\\
\pmb{\quad i = 1;}\\
\pmb{\quad j = 1;}\\
\pmb{\quad d = \text{Dimensions}[L][[1]];}\\
\pmb{\quad m=0;}\\
\pmb{\quad\quad \text{While}[i \leq  d,}\\
\pmb{\quad\quad\quad \text{While}[j \leq  d,}\\
\pmb{\quad\quad\quad\quad X = L[[i]].L[[j]];}\\
\pmb{\quad\quad\quad\quad k = 1;}\\
\pmb{\quad\quad\quad\quad p = 0;}\\
\pmb{\quad\quad\quad\quad m\text{++};}\\
\pmb{\quad\quad\quad\quad \text{While}[k \leq  \text{Dimensions}[L][[1]],r = 1;q=1;}\\
\pmb{\quad\quad\quad\quad\quad \text{While}[r \leq  3, s = 1;}\\
\pmb{\quad\quad\quad\quad\quad\quad \text{If}[q==0,\text{Break}[]];}\\
\pmb{\quad\quad\quad\quad\quad\quad \text{While}[s \leq  3,}\\
\pmb{\quad\quad\quad\quad\quad\quad\quad \text{If}[\text{Abs}[\text{Chop}[X[[r]][[s]]] - \text{Chop}[L[[k]][[r]][[s]]]] > 0.000001, }\\
\pmb{\quad\quad\quad\quad\quad\quad\quad q=0;\text{Break}[]];}\\
\pmb{\quad\quad\quad\quad\quad\quad s\text{++}];}\\
\pmb{\quad\quad\quad\quad\quad r\text{++}];}\\
\pmb{\quad\quad\quad\quad\quad \text{If}[q==1, p=1;\text{Break}[]];}\\
\pmb{\quad\quad\quad\quad k\text{++}];}\\
\pmb{\quad\quad\quad \text{If}[p ==0, L = \text{Join}[L, \{X\}];\text{L2}=\text{Join}[\text{L2},\{\text{Simplify}[\text{L2}[[i]].\text{L2}[[j]]]\}]];}\\
\pmb{\quad\quad\quad j\text{++}];}\\
\pmb{\quad\quad j = 1;}\\
\pmb{\quad\quad i\text{++}];}\\
\pmb{\quad \text{Print}[\text{$\texttt{"}$Step finished: $\texttt{"}$},\{\text{Dimensions}[L][[1]], \text{SessionTime}[] - t,m\}];}\\
\pmb{\quad \text{   }\text{If}[d - \text{Dimensions}[L][[1]] \text{==} 0,}\\
\pmb{\quad \text{Print}[\text{$\texttt{"}$Finished, total time: $\texttt{"}$},\text{SessionTime}[] - t];}\\
\pmb{\quad \text{Export}[\text{$\texttt{"}$Sigma216.nb$\texttt{"}$},\text{L2}]; \text{Beep}[]; \text{Break}[]]}\\
\pmb{]}\)

\noindent\(\text{Step finished: }\{6,0.0156250,4\}\)

\noindent\(\text{Step finished: }\{29,0.1562500,36\}\)

\noindent\(\text{Step finished: }\{242,4.6250000,841\}\)

\noindent\(\text{Step finished: }\{648,1081.5156250,58564\}\)

\noindent\(\text{Step finished: }\{648,9542.8281250,419904\}\)

\noindent\(\text{Finished, total time: }9542.8281250\)
\bigskip
\\
Notation : $\{$a, b, c$\}$\\
a ... number of elements that have been found yet,\\
b ... current session time in seconds,\\
c ... current number of performed matrix multiplications.
\bigskip
\\
Thus we see, that $\Sigma(216\phi)$ has 648 elements.
\end{small}

\end{appendix}

\newpage

\addcontentsline{toc}{chapter}{Mathematical symbols, Abbreviations}

\fancyhf{}
\fancyhead[RO,LE]{ \thepage}
\fancyhead[RE]{\small \rmfamily \nouppercase{Mathematical symbols, Abbreviations}}
\fancyhead[LO]{\small \rmfamily \nouppercase{Mathematical symbols, Abbreviations}}
\fancyfoot{} 

\begin{flushleft}
\textbf{Mathematical symbols}
\end{flushleft}

\begin{center}
\begin{tabular}{ll}
$A_{(i_{1}...i_{n})}$ & $\quad$ $\frac{1}{n!}\sum_{\pi\in\mathcal{S}_{n}}A_{\pi(i_{i})...\pi(i_{n})}$ (total symmetrization)
\medskip
\\
$A_{[i_{1}...i_{n}]}$ & $\quad$ $\frac{1}{n!}\sum_{\pi\in\mathcal{S}_{n}}sign(\pi)A_{\pi(i_{i})...\pi(i_{n})}$ (total antisymmetrization)
\medskip
\\

$\mathrm{Span}(M)$ & $\quad$ Linear span of the set $M$
\medskip
\\

$[A]$ or $[A_{ij}]$ & $\quad$ Matrix representation of the linear operator $A$
\medskip
\\

$\lbrace e_{j}\rbrace_{j}$ with $j\in M$ & $\quad$ $\lbrace e_{a},e_{b},...\rbrace$ where $M=\lbrace a,b,...\rbrace\subset \mathbb{N}\backslash \lbrace 0\rbrace$
\medskip
\\

$\mathrm{Tr}(A)$ & $\quad$ Trace of the linear operator $A$
\medskip
\\

$A^{T}$ & $\quad$ Transposed of the matrix $A$
\medskip
\\

$V\simeq W$ & $\quad$ The vectorspaces $V$ and $W$ are isomorphic.
\medskip
\\

$a^{\ast}$ & $\quad$ Complex conjugate of $a$
\medskip
\\

$A^{\dagger}$ & $\quad$ Hermitian conjugate of $A$: $A^{\dagger}=A^{T\ast}$ for matrices.

\\
& $\quad$ For operators $A^{\dagger}$ is the adjoint operator.
\medskip
\\

H.c. & $\quad$ Hermitian conjugate of an expression
\medskip
\\

$\mathbbm{1}_{n}$ & $\quad$ The $n\times n$-identity matrix
\medskip
\\

$\mathrm{gen}(G)$ & $\quad$ A set of generators of a group $G$
\medskip
\\

$\mathbb{N}$ & $\quad$ The natural numbers $\mathbb{N}=\{0,1,2,...\}$
\medskip
\\

$\mathbb{Z}$ & $\quad$ The integer numbers $\mathbb{Z}=\{...,-2,-1,0,1,2,...\}$
\medskip
\\

$\mathbb{R}$ & $\quad$ The real numbers
\medskip
\\

$\mathbb{C}$ & $\quad$ The complex numbers
\medskip
\\

$\mathbb{Q}$ & $\quad$ The rational numbers
\medskip
\\

$\oplus$ & $\quad$ Direct sum
\medskip
\\

$\otimes$ & $\quad$ Tensor product
\medskip
\\

$G\rtimes H$ & $\quad$ The semidirect product of the two groups $G$ and $H$
\medskip
\\

$e$ or $E$ & $\quad$ Identity element of a group
\medskip
\\

$G/H$ & $\quad$ Factor group
\medskip
\\

$\circ$ & $\quad$ Composition
\medskip
\\

$a\sim b$ & $\quad$ $a$ is conjugate to $b$.
\medskip
\\

$\mathrm{ker}$ & $\quad$ Kernel of a homomorphism (or a linear map)
\medskip
\\
$\mathrm{im}$ & $\quad$ Image of a map
\end{tabular}
\end{center}

\newpage
\begin{flushleft}
\textbf{Mathematical symbols (continued)}
\end{flushleft}

\begin{center}
\begin{tabular}{ll}
$\mathrm{Mat}(n,\mathbb{F})$ & $\quad$ The set of invertible $n\times n$-matrices over the field $\mathbb{F}$.
\medskip
\\

$SU(n)$ & $\quad$ The set of unitary $n\times n$-matrices of determinant 1.
\medskip
\\

$SO(n)$ & $\quad$ The set of orthogonal $n\times n$-matrices of determinant 1.
\medskip
\\

$SL(n,\mathbb{F})$ & $\quad$ The set of $n\times n$-matrices of determinant 1 over the field $\mathbb{F}$.
\medskip
\\
\end{tabular}
\end{center}

\bigskip
\bigskip

\begin{flushleft}
\textbf{Abbreviations}
\end{flushleft}

\begin{center}
\begin{tabular}{ll}
i.s. & in symbols \\
p. & page\\
s.t. & such that \\ 
w.l.o.g. & without loss of generality \\ 
w.r.t. & with respect to
\end{tabular}
\end{center} 
 
\newpage

\addcontentsline{toc}{chapter}{Conventions}

\fancyhf{}
\fancyhead[RO,LE]{ \thepage}
\fancyhead[RE]{\small \rmfamily \nouppercase{Conventions}}
\fancyhead[LO]{\small \rmfamily \nouppercase{Conventions}}
\fancyfoot{} 

\begin{flushleft}
\large{\textbf{Conventions}}
\end{flushleft}

\begin{flushleft}
\large{\textbf{Indices and summation}}
\end{flushleft}
\medskip
If not stated otherwise we will always use Einstein's summation convention, which means that it is always summed over equal indices.
\medskip
\\
\textbf{Lorentz-indices}: \textit{Greek} letters, $\mu=0,1,2,3$.
\medskip
\\
\textbf{Spatial indices}: \textit{Latin} letters, $i=1,2,3$.
\bigskip
\begin{flushleft}
\large{\textbf{Minkowski-metric}}
\end{flushleft}
We will use the following sign convention for the Minkowski-metric in Cartesian coordinates:
	\begin{displaymath}
		(\eta_{\mu\nu})=\left(
			\begin{matrix}
			 1 & 0 & 0 & 0 \\
			 0 & -1 & 0 & 0 \\
			 0 & 0 & -1 & 0 \\
			 0 & 0 & 0 & -1
			\end{matrix}
			\right).
	\end{displaymath}
\bigskip
\begin{flushleft}
\large{\textbf{su(2)}}
\end{flushleft}
	\begin{displaymath}
	su(2)=\{A\in \mathrm{Mat}(2,\mathbb{C})\vert A^{\dagger}=-A, \mathrm{Tr}A=0\}.
	\end{displaymath}
We use the Pauli matrices
	\begin{displaymath}
		\sigma^{1}=\left(
\begin{matrix}
 0 & 1 \\
 1 & 0
\end{matrix}
\right), \sigma^{2}=\left(
\begin{matrix}
 0 & -i \\
 i & 0
\end{matrix}
\right), \sigma^{3}=\left(
\begin{matrix}
 1 & 0 \\
 0 & -1
\end{matrix}
\right)
	\end{displaymath}
as a basis of $isu(2)=\{iA\vert A\in su(2)\}$. They satisfy the commutation relations
	\begin{displaymath}
		[\frac{\sigma^{i}}{2},\frac{\sigma^{j}}{2}]=i\varepsilon_{ijk}\frac{\sigma^{k}}{2},
	\end{displaymath}
which makes them to generators of $SU(2)$. If we work with $SU(2)_{I}$ we will use the symbol $\tau^{i}$ instead of $\sigma^{i}$.

\newpage
\begin{flushleft}
\large{\textbf{Dirac-$\gamma$-matrices}}
\end{flushleft}
If not stated otherwise we will always use the \textit{chiral representation} of the Dirac-$\gamma$-matrices $\gamma^{\mu}$, which is also called \textit{Weyl basis}.
	\begin{displaymath}
		\gamma^{0}=\left(\begin{matrix}\textbf{0}_{2} & \mathbbm{1}_{2}\\
				 \mathbbm{1}_{2} & \textbf{0}_{2}
				\end{matrix}\right),
				\gamma^{i}=\left(\begin{matrix}
				 \textbf{0}_{2} & \sigma^{i}\\
				 -\sigma^{i} & \textbf{0}_{2}
				\end{matrix}\right).
	\end{displaymath}
	\begin{displaymath}
		\gamma^{5}=i\varepsilon_{\mu\nu\sigma\lambda}\gamma^{\mu}\gamma^{\nu}\gamma^{\sigma}\gamma^{\lambda}=i\gamma^{0}\gamma^{1}\gamma^{2}\gamma^{3}=\left(\begin{matrix}-\mathbbm{1}_{2} & \textbf{0}_{2}\\
				 \textbf{0}_{2} & \mathbbm{1}_{2}
				\end{matrix}\right).
	\end{displaymath}
Here $\mathbbm{1}_{2}$ is the $2\times2$-identity matrix, and $\textbf{0}_{2}$ is the $2\times 2$-null matrix.
\bigskip
\\
\textbf{Properties of the $\gamma$-matrices in the chiral representation}
\smallskip
\\
Of course the $\gamma$-matrices fulfil the required anticommutation relations
	\begin{displaymath}
		\{\gamma^{\mu},\gamma^{\nu}\}=2\eta^{\mu\nu} \mathbbm{1}_{4}.
	\end{displaymath}
Furthermore they have the following properties:
	\begin{itemize}
		\item $\gamma^{0\ast}=(\gamma^{0})^{T}=\gamma^{0\dagger}=\gamma^{0}$,
		\item $\gamma^{i\dagger}=-\gamma^{i}$, $\gamma^{5\dagger}=\gamma^{5}$,
		\item $(\gamma^{0})^{2}=\mathbbm{1}_{4}$, $(\gamma^{i})^{2}=-\mathbbm{1}_{4}$, $(\gamma^{5})^{2}=\mathbbm{1}_{4}$,
		\item $\gamma^{0}\gamma^{\mu\dagger}\gamma^{0}=\gamma^{\mu}$.
	\end{itemize}

\newpage

\fancyhf{}
\fancyhead[RO,LE]{ \thepage}
\fancyhead[RE]{\small \rmfamily \nouppercase{List of tables}}
\fancyhead[LO]{\small \rmfamily \nouppercase{List of tables}}
\fancyfoot{} 

\addcontentsline{toc}{chapter}{List of Tables}
\listoftables

\newpage

\fancyhf{}
\fancyhead[RO,LE]{ \thepage}
\fancyhead[RE]{\small \rmfamily \nouppercase{Bibliography}}
\fancyhead[LO]{\small \rmfamily \nouppercase{Bibliography}}
\fancyfoot{} 

\begin{small}
\newpage
\addcontentsline{toc}{chapter}{Bibliography}

\end{small}

\newpage

\fancyhf{}
\fancyhead[RO,LE]{ \thepage}
\fancyhead[RE]{\small \rmfamily \nouppercase{Zusammenfassung}}
\fancyhead[LO]{\small \rmfamily \nouppercase{Zusammenfassung}}
\fancyfoot{} 

\chapter*{Zusammenfassung}
W\"ahrend die Eichbosonen $\gamma, W^{\pm}$ und $Z$ physikalische Manifestationen des ma\-thematischen Formalismus der Eichtheorien sind, gibt es bis heute keine zufrie\-denstellende Erkl\"arung f\"ur den Fermiongehalt des Standardmodells der Teil\-chenphysik. Die Suche nach einer mathematischen Theorie welche die (zur Zeit bekannten) 3 Generationen von Fermionen erkl\"art, blieb im 20. Jahrhundert erfolglos. Es besteht jedoch Hoffnung, dass am Beginn des 21. Jahrhunderts zumindest ein kleiner Lichtschimmer auf das im Dunkel liegende Problem geworfen werden kann. 
\\
Die fesselndste experimentelle Tatsache ist, dass, w\"ahrend die Massenverh\"altnis\-se der Mitglieder einer der Fermion-Familien (geladene Leptonen, Neutrinos, up-Quarks, down-Quarks) sehr hohe Werte annehmen k\"onnen, wie zum Bei\-spiel 
	\begin{displaymath}
	\frac{m_{\tau}}{m_{e}}\sim 3500,
	\end{displaymath}
die Kopplungen an die Eichbosonen innerhalb einer Fermionfamilie gleich sind. Das ist der Grund weshalb die massiveren Mitglieder der Familien oft als \\
\textquotedblleft schwe\-re Verwandte\textquotedblright\hspace{1mm} der leichtes\-ten Familienmitglieder bezeichnet werden.
\\
Bis zum heutigen Tage gibt es keine zufriedenstellende Erkl\"arung f\"ur die Exis\-tenz der drei Generationen von Fermionen. Der einzige Punkt im Standardmodell, an dem ein Unterschied zwischen den Fermionengenerationen in der Lagrangedichte auftritt, ist in der Yukawa-Kopplung zu finden, \"uber welche die Fermionmassenterme mit Hilfe der spontanen Symmetriebrechung erzeugt werden. An diesem Punkt tritt eine Verbindung zu anderen messbaren Gr\"o\ss en auf - zu den Mischungsmatrizen $U_{CKM}$ und $U_{PMNS}$.
\\
Die Leptonmischungsmatrix $U_{PMNS}$ scheint beinahe die Werte der Harrison-Perkins-Scott Mischungsmatrix
	\begin{displaymath}
	U_{\mathrm{HPS}}=\left(\begin{matrix}
		 \sqrt{\frac{2}{3}} & \frac{1}{\sqrt{3}} & 0 \\
		 -\frac{1}{\sqrt{6}} & \frac{1}{\sqrt{3}} & \frac{1}{\sqrt{2}} \\
		 \frac{1}{\sqrt{6}} & -\frac{1}{\sqrt{3}} & \frac{1}{\sqrt{2}}
	\end{matrix}\right)
	\end{displaymath}
anzunehmen. Die sch\"one und interessante Gestalt dieser Matrix hat einige Physiker dazu veranlasst, die Idee einer diskreten Symmetrie in der Lagrange\-dichte des Leptonsektors zu entwickeln, welche gerade diese Mischungsmatrix verlangt. In dieser Arbeit werden Lagrangedichten eines um drei rechtsh\"andi\-ge Neutrinos erweiterten Standardmodells untersucht und die Konsequenzen einer Invarianz unter diskreten Symmetrien erl\"autert. Der Hauptteil der Arbeit ist der systematischen Untersuchung von endlichen Untergruppen der $SU(3)$ gewidmet. Ich hoffe, dass meine Analyse dieser Gruppen in Zukunft als Grundlage zur Konstruktion und Untersuchung neuer Modelle dienen kann.
\medskip
\\
Die zuk\"unftige Entdeckung einer geeigneten endlichen Gruppe, die sowohl Leptonmassen als auch Mischungswinkel beschreiben kann, wird das Mys\-te\-ri\-um der drei Generationen von Fermionen nicht l\"osen k\"onnen, aber viel\-leicht kann eine geeignete Symmetrie einen Hinweis auf eine allgemeinere Theorie liefern, welche das heutige Standardmodell in einem bestimmten Limes enth\"alt.

\newpage

\fancyhf{}
\fancyhead[RO,LE]{ \thepage}
\fancyhead[RE]{\small \rmfamily \nouppercase{Curriculum vitae}}
\fancyhead[LO]{\small \rmfamily \nouppercase{Curriculum vitae}}
\fancyfoot{} 

\section*{Curriculum vitae}
Ich wurde am 22. M\"arz 1985 in Wiener Neustadt (Nieder\"osterreich) als Sohn von Emmerich und Maria Gabriele Ludl geboren.
\bigskip
\\
\textbf{1991-1995} Besuch der Volksschule Baumkirchnerring-West in Wiener Neustadt.\hspace{10mm}
\\
\textbf{1995-2003} Besuch des Bundesrealgymnasiums Gr\"ohrm\"uhlgasse mit naturwissenschaftlichem Schwerpunkt in Wiener Neustadt. Wahlpflichtf\"acher: Phy\-sik, Chemie.
\bigskip
\\
\textbf{1999-2003} Teilnahme an den Landeswettbewerben des Bundeslandes Nieder\-\"osterreich der \"Osterreichischen Chemie-Olympia\-de.
\bigskip
\\
\textbf{2000-2003} Teilnahme an den Bundeswettbewerben der \"Osterreichischen Che\-mie-Olympia\-de.
\bigskip
\\
\textbf{5.7.-14.7. 2002} Silbermedaille bei der 34. internationalen Che\-mie-Olympia\-de in Groningen (Niederlande).
\bigskip
\\
\textbf{Schuljahr 2002/2003} Verfassen einer Fachbereichsarbeit im Fach Chemie zum Thema \textquotedblleft\textit{Die Chemie des Stickstoffs und die Rolle seiner Verbindungen f\"ur die Chemie der Atmosph\"are}\textquotedblright\hspace{0.5mm} bei Mag. Jana Ederova.
\bigskip
\\
\textbf{5.6.2003} Reifepr\"ufung mit ausgezeichnetem Erfolg bestanden.
\bigskip
\\
\textbf{5.7.-14.7. 2003} Silbermedaille bei der 35. internationalen Chemie-Olympia\-de in Athen (Griechenland).
\bigskip
\\
\textbf{1.9.2003-30.4.2004} Pr\"asenzdienst.
\bigskip
\\
\textbf{Ab Mai 2004} Studium an der Universit\"at Wien.
\bigskip
\\
\textbf{Seit Oktober 2004} Studium der Physik an der Universit\"at Wien.
\bigskip
\\
\textbf{7.10.2005} 1. Diplompr\"ufung mit Auszeichnung bestanden.
\bigskip
\\
\textbf{11.7.2007} 2. Diplompr\"ufung mit Auszeichnung bestanden.
\bigskip
\\
\textbf{August 2008 - Juni 2009} Diplomarbeit bei Ao. Univ.-Prof. Dr. Walter Grimus.

\end{document}